\documentclass{article}

\def\noheaderplainsetup{

\topmargin=0pt \headheight=0pt \headsep=0pt  \oddsidemargin=0pt \evensidemargin=0pt  \textheight=9.1truein \textwidth=6.5truein}   

\noheaderplainsetup

\usepackage{amsfonts}

\begin{document}


\newcommand{\clthree}{\mbox{\bf CL3}}
\newcommand{\clfour}{\mbox{\bf CL4}}
\newcommand{\arfour}{\mbox{\bf PTA}} 
\newcommand{\pa}{\mbox{\bf PA}} 
\newcommand{\bound}{\mathfrak{b}} 


\newcommand{\zero}{\mbox{\small {\bf 0}}}
\newcommand{\adi}{\hspace{2pt}\raisebox{0.02cm}{\mbox{\small $\sqsupset$}}\hspace{2pt}} 
\newcommand{\plus}{\mbox{\hspace{1pt}\raisebox{0.05cm}{\tiny\boldmath $+$}\hspace{1pt}}}
\newcommand{\mult}{\mbox{\hspace{1pt}\raisebox{0.05cm}{\tiny\boldmath $\times$}\hspace{1pt}}}
\newcommand{\mminus}{\mbox{\hspace{1pt}\raisebox{0.05cm}{\tiny\boldmath $-$}\hspace{1pt}}}
\newcommand{\equals}{\mbox{\hspace{1pt}\raisebox{0.05cm}{\tiny\boldmath $=$}\hspace{1pt}}}
\newcommand{\notequals}{\mbox{\hspace{1pt}\raisebox{0.05cm}{\tiny\boldmath $\not=$}\hspace{1pt}}}
\newcommand{\successor}{\mbox{\hspace{1pt}\boldmath $'$}}

\newcommand{\mless}{\mbox{\hspace{1pt}\raisebox{0.05cm}{\tiny\boldmath $<$}\hspace{1pt}}}
\newcommand{\mgreater}{\mbox{\hspace{1pt}\raisebox{0.05cm}{\tiny\boldmath $>$}\hspace{1pt}}}
\newcommand{\mleq}{\mbox{\hspace{1pt}\raisebox{0.05cm}{\tiny\boldmath $\leq$}\hspace{1pt}}}
\newcommand{\mgeq}{\mbox{\hspace{1pt}\raisebox{0.05cm}{\tiny\boldmath $\geq$}\hspace{1pt}}}

\newcommand{\elz}[1]{\mbox{$\parallel\hspace{-3pt} #1 \hspace{-3pt}\parallel$}} 
\newcommand{\elzi}[1]{\mbox{\scriptsize $\parallel\hspace{-3pt} #1 \hspace{-3pt}\parallel$}}
\newcommand{\emptyrun}{\langle\rangle} 
\newcommand{\oo}{\bot}            
\newcommand{\pp}{\top}            
\newcommand{\xx}{\wp}               
\newcommand{\legal}[2]{\mbox{\bf Lr}^{#1}_{#2}} 
\newcommand{\win}[2]{\mbox{\bf Wn}^{#1}_{#2}} 
\newcommand{\seq}[1]{\langle #1 \rangle}           


\newcommand{\pst}{\mbox{\raisebox{-0.01cm}{\scriptsize $\wedge$}\hspace{-4pt}\raisebox{0.16cm}{\tiny $\mid$}\hspace{2pt}}}
\newcommand{\pcost}{\mbox{\raisebox{0.12cm}{\scriptsize $\vee$}\hspace{-4pt}\raisebox{0.02cm}{\tiny $\mid$}\hspace{2pt}}}

\newcommand{\gneg}{\mbox{\small $\neg$}}                  
\newcommand{\mli}{\hspace{2pt}\mbox{\small $\rightarrow$}\hspace{2pt}}                      
\newcommand{\cla}{\mbox{$\forall$}}      
\newcommand{\cle}{\mbox{$\exists$}}        
\newcommand{\mld}{\hspace{2pt}\mbox{\small $\vee$}\hspace{2pt}}     
\newcommand{\mlc}{\hspace{2pt}\mbox{\small $\wedge$}\hspace{2pt}}   
\newcommand{\mlci}{\hspace{2pt}\mbox{\footnotesize $\wedge$}\hspace{2pt}}   
\newcommand{\ade}{\mbox{\large $\sqcup$}}      
\newcommand{\ada}{\mbox{\large $\sqcap$}}      
\newcommand{\add}{\hspace{2pt}\mbox{\small $\sqcup$}\hspace{2pt}}                     
\newcommand{\adc}{\hspace{2pt}\mbox{\small $\sqcap$}\hspace{2pt}} 
\newcommand{\adci}{\hspace{2pt}\mbox{\footnotesize $\sqcap$}\hspace{2pt}}              
\newcommand{\clai}{\forall}     
\newcommand{\clei}{\exists}        
\newcommand{\tlg}{\bot}               
\newcommand{\twg}{\top}               
\newcommand{\fintimpl}{\mbox{\hspace{2pt}$\bullet$\hspace{-0.14cm} \raisebox{-0.058cm}{\Large --}\hspace{-6pt}\raisebox{0.008cm}{\scriptsize $\wr$}\hspace{-1pt}\raisebox{0.008cm}{\scriptsize $\wr$}\hspace{4pt}}}
\newcommand{\col}[1]{\mbox{$#1$:}}


\newtheorem{theoremm}{Theorem}[section]
\newtheorem{factt}[theoremm]{Fact}
\newtheorem{definitionn}[theoremm]{Definition}
\newtheorem{lemmaa}[theoremm]{Lemma}
\newtheorem{propositionn}[theoremm]{Proposition}
\newtheorem{conventionn}[theoremm]{Convention}
\newtheorem{examplee}[theoremm]{Example}
\newtheorem{exercisee}[theoremm]{Exercise}
\newenvironment{definition}{\begin{definitionn} \em}{ \end{definitionn}}
\newenvironment{theorem}{\begin{theoremm}}{\end{theoremm}}
\newenvironment{lemma}{\begin{lemmaa}}{\end{lemmaa}}
\newenvironment{fact}{\begin{factt}}{\end{factt}}
\newenvironment{proposition}{\begin{propositionn} }{\end{propositionn}}
\newenvironment{convention}{\begin{conventionn} \em}{\end{conventionn}}
\newenvironment{example}{\begin{examplee} \em}{\end{examplee}}
\newenvironment{exercise}{\begin{exercisee} \em}{\end{exercisee}}
\newenvironment{proof}{ {\bf Proof.} }{\  \rule{2.5mm}{2.5mm} \vspace{.2in} }
\newenvironment{idea}{ {\bf Idea.} }{\  \rule{1.5mm}{1.5mm} \vspace{.15in} }
\newenvironment{subproof}{ {\em Proof.} }{\  \rule{2mm}{2mm} \vspace{.1in} }

\title{Ptarithmetic}
\author{Giorgi Japaridze\thanks{This material is based upon work supported by the National Science Foundation under Grant No. 0208816}   \\  
\\ {\footnotesize Department of Computing Sciences, Villanova University, 800 Lancaster Avenue, Villanova, PA 19085, USA.}\\
{\footnotesize Email: giorgi.japaridze@villanova.edu
 \ URL: http://www.csc.villanova.edu/$^\sim$japaridz/}}
\date{}
\maketitle

\begin{abstract} The present article introduces  {\em ptarithmetic} (short for ``polynomial time arithmetic'') --- a formal number theory similar to  the well known Peano arithmetic, but based  on the recently born {\em computability logic}   instead of classical logic. The formulas of ptarithmetic represent interactive computational problems rather than just true/false statements, and their ``truth'' is understood as existence of a polynomial time solution. The  system  of ptarithmetic elaborated in this article is shown to be sound and complete. Sound     in the sense that every theorem $T$ of the system represents an interactive  number-theoretic computational problem with a polynomial time solution and, furthermore, such a solution can be effectively extracted from a proof of $T$. And complete in the sense that every interactive number-theoretic problem with a polynomial time solution is represented by some theorem $T$ of the system. 

The paper is self-contained, and can be read without any prior familiarity with computability logic.  
\end{abstract}

\noindent {\em MSC}: primary: 03F50; secondary: 03F30; 03D75; 03D15; 68Q10; 68T27; 68T30

\

\noindent {\em Keywords}: Computability logic; Interactive computation; Computational complexity;  Game semantics; Peano arithmetic; Bounded arithmetic; Constructive logics; Efficiency  logics 

\tableofcontents

\section{Introduction}\label{intr}

{\em Computability logic} (CL), introduced in \cite{Jap03,Japic,Japfin}, is a semantical, mathematical and philosophical platform, and an ambitious program, for redeveloping logic as a formal theory of computability, as opposed to the formal theory of truth which logic has more traditionally been. 
 Under the approach of CL, formulas represent computational problems, 
and their ``truth'' is seen as algorithmic solvability. In turn, computational problems --- understood in their  most general, {\em interactive} sense --- are defined as games played by a machine against its environment, with ``algorithmic solvability'' meaning existence of a machine that wins the game against any possible behavior of the environment. And an open-ended collection of the most basic and natural operations 
on computational problems forms the logical vocabulary of the theory.  With this semantics, CL provides a systematic answer to the fundamental question ``{\em what can be computed?}\hspace{1pt}'', just as classical logic is a systematic tool for telling what is true. Furthermore, as it turns out, in positive cases ``{\em what} can be computed'' always allows itself to be replaced by ``{\em how} can be computed'', which makes CL of potential interest in not only theoretical computer science, but many more applied areas as well, including interactive knowledge base systems, resource oriented systems for planning and action, or declarative programming languages. 

While potential applications have been repeatedly pointed out and outlined in introductory papers on CL, so far all technical efforts had been mainly focused on finding axiomatizations for various fragments of this semantically conceived and inordinately expressive logic. Considerable advances have already been made in this direction (\cite{Japtocl1}-\cite{Cirq}, \cite{Japtcs}-\cite{Japfour}, \cite{Japtowards}, \cite{Ver}), and more results in the same style are probably still to come. It should be however remembered that the main value of CL, or anything else claiming to be a ``Logic'' with a capital ``L'', will  eventually be determined by whether and how it relates to the outside, extra-logical world. In this respect, unlike many  other systems officially qualified as ``logics'', the merits of classical logic  are obvious, most eloquently demonstrated by the fact that applied formal theories, a model example of which is {\em Peano arithmetic} $\pa$,\label{iPA} can be and have been successfully based on it. Unlike pure logics with their meaningless symbols, such theories are direct tools for studying and navigating the real world with its non-man-made, meaningful objects, such as natural numbers in the case of arithmetic. To make this point more clear to a computer scientist, one could compare a pure  logic  with a programming language, and applied theories based on it with application programs written in that language. A programming language created for its own sake, mathematically or esthetically appealing  but otherwise unusable as a general-purpose, comprehensive basis for application  programs, would hardly be of much interest.  

So, in parallel with studying possible axiomatizations and various metaproperties of pure computability logic, it would certainly be worthwhile to devote some efforts to justifying its right on existence through revealing its power and appeal as a basis for applied theories. First and so far the only concrete steps in this direction have been made only very recently in \cite{Japtowards}, where a CL-based system {\bf CLA1} of (Peano) arithmetic was constructed.\footnote{The paper \cite{Xu} (in Chinese) is apparently another exception, focused on applications of CL in AI.} Unlike its classical-logic-based counterpart {\bf PA}, {\bf CLA1} is not merely about what arithmetical facts are {\em true}, but about what arithmetical problems can be actually {\em computed} or effectively {\em solved}. More precisely, every formula of the language of {\bf CLA1} expresses a number-theoretic computational {\em problem} (rather than just a true/false {\em fact}), every theorem expresses a problem that has an algorithmic solution,  and every proof encodes such a solution. Does not this sound  exactly like what the constructivists have been calling for? 

Unlike the mathematical or philosophical constructivism, however, and even unlike the early-day theory of computation, modern computer science has long
understood that, what really matters, is not just {\em computability}, but rather {\em efficient computability}. So, the next natural step on the road of revealing the importance of CL for computer science would be showing that it can be used for  studying efficient computability just as successfully as for studying computability-in-principle. Anyone familiar with the earlier work on CL could have found reasons for optimistic expectations here. Namely, every provable formula of any of the known sound  axiomatizations of CL happens to be a scheme of  not only ``always computable'' problems, but ``always efficiently computable'' problems just as well, whatever efficiency exactly mens in the context of interactive computation that CL operates in. That is, at the level of pure logic, computability and efficient computability yield the same classes of valid principles. The study of logic abounds with   phenomena in this style. One example would be the well known fact about classical logic, according to which validity with respect to all possible models is equivalent to validity with respect to just models with countable domains.       

At the level of reasonably expressive applied theories, however, one should certainly expect significant differences depending on whether the underlying concept of interest is  efficient computability or computability-in-principle. For instance, the earlier-mentioned system {\bf CLA1} proves formulas expressing computable but not always efficiently computable arithmetical problems. The purpose of the present paper is to construct a CL-based system for arithmetic which, unlike {\bf CLA1}, proves only efficiently --- specifically, polynomial time --- computable problems. The new applied formal theory $\arfour$\label{iPTA} (``{\em ptarithmetic}'', short for ``polynomial time arithmetic'') presented in Section \ref{ss11} achieves this purpose. 

Just like {\bf CLA1}, our present system $\arfour$ is not only a cognitive, but also a problem-solving tool: in order to find a solution for a given problem, it would be sufficient to write the problem in the language of the system, and find a proof for it. An algorithmic solution for the problem then would automatically come together with such a proof. However, unlike the solutions extracted from {\bf CLA1}-proofs, which might be intractable, the solutions extracted from $\arfour$-proofs would always be efficient. 

Furthermore, $\arfour$ turns out to be not only sound, but also complete in a certain reasonable sense that we call {\em extensional completeness}.\label{iextcom} According to the latter, every number-theoretic computational problem that has a polynomial time solution is represented by some theorem of  $\arfour$. Taking into account that there are many ways to represent the same problem, extensional completeness is weaker than what can be called {\em intensional completeness},\label{iintcom} according to which any formula representing an (efficiently) computable problem is provable. In these terms, G\"{o}del's celebrated theorem,\label{igincom} here  with ``truth''=``computability'', is about intensional rather than extensional incompleteness. In fact, extensional completeness is not at all interesting in the context of classical-logic-based theories such as $\pa$. In such theories, unlike computability-logic-based theories, it is trivially achieved, as the provable formula $\twg$ represents every true sentence.  

Syntactically, our $\arfour$ is an extension of {\bf PA}, and the semantics of the former is a conservative generalization of the semantics of the latter. Namely, the formulas of $\pa$, which form only a proper subclass of the formulas of $\arfour$, are seen as special, ``moveless'' sorts of problems/games, automatically solved/won when true and failed/lost when false. This makes the classical concept of truth just a special case of computability in our sense --- it is nothing but computability restricted to (the problems represented) by the traditional sorts of formulas. And this means that G\"{o}del's incompleteness theorems automatically extend from $\pa$ to $\arfour$, so that, unlike extensional completeness,  intensional completeness in $\arfour$  or any other sufficiently expressive CL-based applied theory is impossible to achieve in principle. As for {\bf CLA1}, it turns out to be incomplete in both senses. Section \ref{sincom} shows that any sufficiently expressive sound  system would be (not only intensionally but also) extensionally incomplete, as long as the semantics of the system is based on unrestricted (as opposed to, say, efficient) computability.

Among the main moral merits of the present investigation and its contributions to the overall CL project is an illustration of the fact that,  in constructing CL-based applied theories, successfully switching from computability to efficient computability is possible and even more than just possible. As noted, efficient computability, in fact, turns out to be much better behaved than computability-in-principle: the former allows us to achieve completeness in a sense in which 
the latter yields inherent incompleteness.    

An advanced reader will easily understand that the present paper, while focused on the system $\arfour$ of (pt)arithmetic, in fact is not only about arithmetic, but also just as much about CL-based applied theories or knowledge base systems in general, with $\arfour$ only serving as a model example of such systems and a starting point for what may be a separate (sub)line of research within the CL enterprise. 
Generally, the nonlogical axioms or the knowledge base of a CL-based applied  system would be any collection of (formulas expressing) problems whose algorithmic or efficient solutions are known. Sometimes, together with nonlogical axioms, we may also have nonlogical rules of inference, preserving the property of computability or efficient computability. An example of such a rule is the {\em polynomial time induction} (PTI) rule of  $\arfour$.  Then, the soundness of the corresponding underlying axiomatization of CL (in our present case, it is system  \clthree\ studied in \cite{Japtcs}) --- which usually comes in the strong form called {\em uniform-constructive soundness} --- guarantees that every theorem $T$ of the theory also has an effective or efficient solution and that, furthermore, such
a solution can be effectively extracted from a proof of $T$. 
It is this fact that, as mentioned,  makes CL-based systems problem-solving tools.

Having said the above, motivationally (re)introducing and (re)justifying computability logic is not among the goals of the present paper. This job has been done in \cite{Jap03,Japic,Japfin},  and the reader would benefit from getting familiar with any of those pieces of literature first, of which most recommended is the first 10 tutorial-style sections of \cite{Japfin}. While helpful in fully appreciating the import of the present results, however,
from the purely technical point of view, such  familiarity  is not necessary, as this paper provides all relevant definitions. 

\section{An informal overview of the main operations on games}\label{ss2}

As noted, formulas in CL represent computational problems. Such problems are understood as games between two players: $\pp$,\label{ipp} called   {\bf machine},\label{imachine} and $\oo$,\label{ioo} called  {\bf environment}.\label{ienvironment} $\pp$ is a mechanical device with a fully determined, algorithmic behavior.  On the other hand, there are no restrictions on the behavior of $\oo$. A given machine is considered to be  {\em solving} a given problem iff it wins the corresponding game no matter 
how the environment acts. 

Standard atomic sentences, such as ``$0\equals 0$'' or ``Peggy is John's mother'', are understood as special sorts of games, called {\bf elementary}.\label{ielem1} There are no moves in elementary games, and they are automatically won or lost. Specifically, the elementary game represented by a true sentence is won (without making any moves) by the machine, and the elementary game represented by a false sentence is won by the environment.  
 
Logical operators are understood as operations on games/problems. One of the important groups of such operations, called {\bf choice operations},\label{ichoiceop} comprises  $\adc,\add,\ada,\ade$. These are called {\bf choice conjunction},\label{ichoicecon} {\bf choice disjunction},\label{ichoicedis} {\bf choice universal quantifier}\label{ichoiceuq} and {\bf choice existential quantifier},\label{ichoiceeq} respectively. $A_0\adc A_1$ is a game where the first legal move (``choice"), which should be either $0$ or $1$, is by $\oo$. After such a move/choice $i$ is made, the play continues and the winner is determined according to the rules of $A_i$; if a choice is never made, $\oo$ loses. 
 $A_0\add A_1$ is defined in a symmetric way with the roles of $\oo$ and $\pp$ interchanged: here it is $\pp$ who makes an initial choice and who loses if such a choice is not made. With the universe of discourse being $\{0,1,10,11,100,\ldots\}$ (natural numbers identified with their binary representations), the meanings of the quantifiers $\ada$ and $\ade$  can now be explained by 
\[\ada x A(x)= A(0)\adc A(1)\adc A(10)\adc A(11)\adc A(100)\adc \ldots\] and \[\ade x A(x)= A(0)\add A(1)\add A(10)\add A(11)\add A(100)\add \ldots.\] 

So, for example, 
\[\ada x\bigl(\mbox{\em Prime}(x)\add \mbox{\em Composite}(x)\bigr)\]
is a game where the first move is by the environment. Such a move should consist in selecting a particular number $n$ for $x$, intuitively amounting to asking whether $n$ is prime or composite. This move brings the game down to (in the sense that the game continues as) 
\[\mbox{\em Prime}(n)\add \mbox{\em Composite}(n).\]
Now the machine has to move, or else it loses. The move should consist in choosing one of the two disjuncts. Let us say the left disjunct is chosen, which further brings the game down to $\mbox{\em Prime}(n)$. The latter is an elementary game, and here the interaction ends. The machine wins iff it has chosen a true disjunct. The choice of the left disjunct by the machine thus amounts to claiming/answering that $n$ is prime. Overall, as we see,  $\ada x\bigl(\mbox{\em Prime}(x)\add \mbox{\em Composite}(x)\bigr)$ represents the problem of deciding the primality question.\footnote{For simplicity, here we treat ``Composite'' as the complement of ``Prime'', even though, strictly speaking, this is not quite so: the numbers $0$ and $1$ are neither prime nor composite. Writing ``Nonprime'' instead of ``Composite'' would easily correct this minor inaccuracy.} 

Similarly, 
\[\ada x\ada y\ade z(z\equals x\mult  y)\]
is the problem of computing the product of any two numbers. Here the first two moves are by the environment, which selects some particular $m= x$ and $n= y$, thus asking the machine to tell what the product of $m$ and $n$ is. The machine wins if and only if, in response, it selects a (the) number $k$ for $z$ such that $k\equals m\mult  n$. 

The present paper replaces the above-described choice quantifiers $\ada$ and $\ade$ with their {\em bounded} counterparts $\ada^{\bound}$\label{iadab} and $\ade^{\bound}$,\label{iadeb} where $\bound$ is a variable. These are the same as $\ada$ and $\ade$, with the difference that the choice here is limited only to the objects of the universe of discourse whose sizes do not exceed a certain  bound,\label{ibound} which is represented by the variable $\bound$.  So,  $\ada^{\bound} x A(x)$ is essentially the same as $\ada x \bigl(|x|\mleq \bound\mli A(x)\bigr)$  and $\ade^{\bound} x A(x)$ is essentially  the same as  $\ade x \bigl(|x|\mleq \bound\mlc A(x)\bigr)$, where (the meanings of $\mli,\mlc$ will be explained shortly and) $|x|\mleq \bound$ means ``the size of $x$ does not exceed $\bound$''. As we are going to see later, it is exactly the value of  $\bound$ with respect to which the computational complexity of games will be measured.

Another group of game operations dealt with in this paper, two of which have already been used in the previous paragraph, comprises $\gneg,\mlc,\mld,\mli$. Employing the classical symbols for these operations is no accident, as they are conservative generalizations of the corresponding Boolean operations from elementary games to all games. 

{\bf Negation} $\gneg$\label{igneg} is a role-switch operation: it turns $\pp$'s moves and wins into $\oo$'s moves and wins, and vice versa. Since elementary games have no moves, only the winners are switched there, so that, as noted, $\gneg$ acts just as the ordinary classical negation. For instance, as $\pp$ is the winner in $0\plus 1\equals 1$, the winner in $\gneg 0\plus 1\equals 1$ will be $\oo$. That is, $\pp$ wins the negation $\gneg A$ of an elementary game $A$ iff it loses $A$, i.e., if $A$ is false. As for the meaning of negation when applied to nonelementary games, at this point it may be useful to observe that $\gneg$ interacts with choice operations in the kind old 
DeMorgan fashion. For example, it would not be hard to see that \[\gneg \ada x\ada y\ade z(z\equals x\mult  y)\ = \ 
\ade x\ade y\ada z(z\notequals x\mult  y).\]

The operations $\mlc$\label{imlc} and $\mld$\label{imld} are called {\bf parallel conjunction} and {\bf parallel disjunction}, respectively.  Playing $A_0\mlc A_1$ (resp. $A_0\mld A_1$) means playing the two games in parallel where, in order to win, $\pp$ needs to win in both (resp. at least one) of the components $A_i$. It is obvious that, just as in the case of negation, $\mlc$ and $\mld$ act as classical conjunction and disjunction when applied to elementary games. For instance, $0\plus 1\equals 1\mld 0\mult  1\equals 1$ is a game automatically won by the machine. There are no moves in it as there are no moves in either disjunct, and the machine is an automatic winner because it is so in the left disjunct. To appreciate the difference between the two --- choice and parallel --- groups of connectives, compare \[\ada x\bigl(\mbox{\em Prime}(x)\add \gneg \mbox{\em Prime}(x)\bigr)\] and \[\ada x\bigl(\mbox{\em Prime}(x)\mld \gneg \mbox{\em Prime}(x)\bigr).\] The former is a computationally nontrivial problem, existence of an easy (polynomial time) solution for which had remained an open question until a few years ago. As for the latter, it is trivial, as the machine has nothing to do in it: the first (and only) move is by the environment, consisting in choosing a number $n$ for $x$. Whatever $n$ is chosen, the machine wins, as $\mbox{\em Prime}(n)\mld \gneg \mbox{\em Prime}(n)$ is a true sentence and hence an automatically $\pp$-won elementary game.

The operation $\mli$,\label{imli} called {\bf reduction}, is defined by $A\mli B= (\gneg A)\mld B$. Intuitively, this is indeed the problem of {\em reducing} 
$B$ to $A$: solving $A\mli B$ means solving $B$ while having $A$ as an external {\em computational resource}. Resources are symmetric to problems: what is a problem to solve for one player is a resource that the other player can use, and vice versa. Since 
$A$ is negated in  $(\gneg A)\mld B$ and negation means switching the roles, $A$ appears as a resource rather than problem for 
$\pp$ in $A\mli B$. 

Consider $\ada x\ade  y(y\equals x^2)$. Anyone who knows the definition of $x^2$ in terms of $\mult $ (but perhaps does not know the meaning of multiplication, or is unable to compute this  function for whatever reason) would be able to solve the  problem 
\begin{equation}\label{april15}
\ada z\ada u\ade v(v\equals  z\mult  u)\ \mli \ \ada x\ade y(y\equals x^2),
\end{equation}
i.e., the problem 
\[\ade z\ade u\ada v(v\notequals z\mult  u)\ \mld \ \ada x\ade y(y\equals x^2),\]
 as it is about reducing the consequent to the antecedent. 
A solution here goes like this. Wait till the environment specifies a value $n$ for $x$, i.e. asks ``what is the square of $n$?''. Do not try to immediately answer this question, but rather specify the same value $n$ for both $z$ and $u$, thus asking the counterquestion: ``what is $n$ times $n$?''. The environment will have to provide a correct answer $m$  to this counterquestion (i.e., specify $v$ as $m$ where $m= n\mult  n$), or else it loses. Then, specify $y$ as $m$, and rest your case. Note that, in this solution, the machine did not have to compute multiplication, doing which had become the environment's responsibility. The machine only correctly reduced the problem of computing square to the problem of computing product, which made it the winner.

Another group of operations that play an important role in CL comprises  $\cla$\label{icla} and its dual $\cle$\label{icle} (with $\cle xA(x)= \gneg\cla x\gneg A(x)$), called  {\bf blind universal quantifier} and {\bf blind existential quantifier}, respectively.  $\cla xA(x)$ 
can be thought of as a ``version" of $\ada xA(x)$ where the particular value of $x$ that the environment selects is invisible to the machine, so that it has to play blindly in a way that guarantees success no matter what that value is. 

Compare the problems
\[\ada x\bigl(\mbox{\em Even$(x)$}\add \mbox{\em Odd$(x)$}\bigr)\] and  \[\cla x\bigl(\mbox{\em Even$(x)$}\add \mbox{\em Odd$(x)$}\bigr).\] 
Both of them are about telling whether a given number is even or odd; the difference is only in whether that ``given number" is known to the machine or not. The first problem is an easy-to-win, two-move-deep game of a structure that we have already seen.  The second game, on the other hand, is one-move deep with only  the machine to make a move --- select the ``true"  disjunct, which is hardly possible to do as the value of $x$ remains unspecified.

Just like all other operations for which we use classical symbols, the meanings of $\cla$ and $\cle$ are exactly classical  when applied to elementary games. Having this full collection of classical operations makes computability logic a generalization and conservative extension of classical logic. 

Going back to an earlier example, even though  (\ref{april15}) expresses a ``very easily solvable'' problem, that formula is still not logically valid. Note that the successfulness of the reduction strategy of the consequent to the antecedent that we provided for it relies on the nonlogical fact that $x^2\equals x\mult  x$. That strategy would fail in a general case where the meanings of $x^2$ and $x\mult  x$ may not necessarily be the same. On the other hand, the goal of CL as a general-purpose problem-solving tool should be to allow us find purely logical solutions, i.e., solutions that do not require any special, domain-specific knowledge and (thus) would be good no matter what the particular predicate or function symbols of the formulas mean. Any knowledge that might be relevant should be explicitly stated and included either in the antecedent of a given formula or in the set of axioms (``implicit antecedents'' for every potential formula) of a CL-based theory. 
In our present case,  formula (\ref{april15}) easily turns into a logically valid one by adding, to its antecedent,  the definition of square in terms of multiplication:
\begin{equation}\label{april16}
\cla w (w^2\equals w\mult  w) \mlc \ada z\ada u\ade v(v\equals  z\mult  u)\ \mli \ \ada x\ade y(y\equals x^2).
\end{equation}
The strategy that we provided earlier for (\ref{april15}) is just as good for (\ref{april16}), with the difference that it is successful for (\ref{april16}) no matter what $x^2$ and $z\mult  u$ mean, whereas, in the case of (\ref{april15}), it was guaranteed to be successful only under the standard arithmetic interpretations of the square and product functions. Thus, our strategy for (\ref{april16}) is, in fact, a ``purely logical'' solution. Again, among the purposes of computability logic is to serve as a tool for finding such ``purely logical'' solutions, so that it can be applied to any domain of study rather than specific domains such as that of arithmetic, and to arbitrary meanings of nonlogical symbols rather than particular meanings such as that of the multiplication function for the symbol $\mult $. 

The above examples should not suggest that blind quantifiers are meaningful or useful  only when applied to elementary problems. The following is an example of an effectively winnable nonelementary $\cla$-game:

\begin{equation}\label{lkj}\cla y\Bigl(\mbox{\em Even$(y)$}\add \mbox{\em Odd$(y)$}\ \mli\ \ada x\bigl(\mbox{\em Even$(x\plus  y)$}\add
\mbox{\em Odd$(x\plus  y)$}\bigr)\Bigr).\vspace{-3pt}\end{equation}
Solving this problem, which means reducing the consequent to the antecedent without knowing the value of $y$, is easy: 
$\pp$ waits till $\oo$ selects  a value $n$ for $x$, and also tells --- by selecting a disjunct in the antecedent --- whether $y$ is even or odd. Then, 
if $n$ and $y$ are both even or both odd, $\pp$ chooses the first $\add$-disjunct in the consequent, otherwise it chooses the second $\add$-disjunct. Replacing the $\cla y$ prefix by $\ada y$ would significantly weaken the problem, obligating the environment to specify a value for $y$. Our strategy does not really need to know the exact value of $y$, as it only exploits the information about $y$'s being even or odd, provided by the antecedent of the formula.

Many more --- natural, meaningful and useful --- operations beyond the ones discussed in this section have been introduced and studied in computability logic. Here we have only surveyed those that are relevant to our present investigation.

\section{Constant games}\label{cg}

Now we are getting down to formal definitions of the concepts informally explained in the previous section. 

To define games formally, we need some technical terms and conventions. Let us agree that by a {\bf move}\label{imove} we mean any finite string over the standard keyboard alphabet. 
A {\bf labeled move} ({\bf labmove})\label{ilabmove} is a move prefixed with $\pp$ or $\oo$, with such a prefix ({\bf label})\label{ilabel} indicating which player has made the move. 
A {\bf run}\label{irun} is a (finite or infinite) sequence of labmoves, and a {\bf position}\label{iposition} is a finite run. 

\begin{convention}\label{conv1}
We will be exclusively using the letters $\Gamma,\Delta,\Phi$  for runs, and  $\alpha,\beta$ for moves. The letter $\xx$\label{ixx} will always be a variable for players, and \[\overline{\xx}\label{ixxneg}\]  will mean ``$\xx$'s adversary'' (``the other player'').
Runs will be often delimited by ``$\langle$" and ``$\rangle$", with $\emptyrun$ thus denoting the {\bf empty run}.\label{iempty} The meaning of an expression such as $\seq{\Phi,\xx\alpha,\Gamma}$ must be clear: this is the result of appending to the position $\seq{\Phi}$ 
the labmove $\seq{\xx\alpha}$ and then the run $\seq{\Gamma}$.  
\end{convention}

The following is a formal definition of what we call constant games, combined with some less formal conventions regarding the usage of certain terminology.

\begin{definition}\label{game}
 A {\bf constant game}\label{iconstantgame} is a pair $A= (\legal{A}{},\win{A}{})$, where:

1. $\legal{A}{}$\label{ilr} is a set of runs  satisfying the condition that a (finite or infinite) run is in $\legal{A}{}$ iff all of its nonempty finite  initial
segments are in $\legal{A}{}$ (notice that this implies $\emptyrun\in\legal{A}{}$). The elements of $\legal{A}{}$ are
said to be {\bf legal runs}\label{ilegrun} of $A$, and all other runs are said to be {\bf illegal}.\label{iillegrun} We say that $\alpha$ is a {\bf legal move}\label{ilegmove} for $\xx$ in a position $\Phi$ of $A$ iff $\seq{\Phi,\xx\alpha}\in\legal{A}{}$; otherwise 
$\alpha$ is {\bf illegal}.\label{iillegmove} When the last move of the shortest illegal initial segment of $\Gamma$  is $\xx$-labeled, we say that $\Gamma$ is a {\bf $\xx$-illegal}\label{ipillegal} run of $A$. 

2. $\win{A}{}$\label{iwn}  is a function that sends every run $\Gamma$ to one of the players $\pp$ or $\oo$, satisfying the condition that if $\Gamma$ is a $\xx$-illegal run of $A$, then $\win{A}{}\seq{\Gamma}= \overline{\xx}$. When $\win{A}{}\seq{\Gamma}= \xx$, we say that $\Gamma$ is a {\bf $\xx$-won}\label{iwon} (or {\bf won by $\xx$}) run of $A$; otherwise $\Gamma$ is {\bf lost}\label{ilost} by $\xx$. Thus, an illegal run is always lost by the player who has made the first illegal move in it.  
\end{definition}

An important operation not explicitly mentioned in Section \ref{ss2} is what is called {\em prefixation}.\label{iprefixation}
This operation takes two arguments: a constant game $A$ and a position $\Phi$ 
 that must 
be a legal position of $A$ (otherwise the operation is undefined), and returns the game $\seq{\Phi}A$.
Intuitively, $\seq{\Phi}A$ is the game playing which means playing $A$ starting (continuing) from position $\Phi$. 
That is, $\seq{\Phi}A$ is the game to which $A$ {\bf evolves} (will be ``{\bf brought down}") after the moves of $\Phi$ have been made. We have already used this intuition when explaining the meaning of choice operations in Section \ref{ss2}: we said that after $\oo$ makes an initial move $i\in\{0,1\}$,
 the game 
$A_0\adc A_1$ continues as $A_i$. What this meant was nothing but that 
$\seq{\oo i}(A_0\adc A_1)= A_i$.
Similarly, $\seq{\pp i}(A_0\add A_1)= A_i$. Here is a definition of prefixation:

\begin{definition}\label{prfx}
Let $A$ be a constant game and $\Phi$ a legal position of $A$. The game 
$\seq{\Phi}A$\label{ipr} is defined by: 
\begin{itemize}
\item $\legal{\seq{\Phi}A}{}= \{\Gamma\ |\ \seq{\Phi,\Gamma}\in\legal{A}{}\}$;
\item $\win{\seq{\Phi}A}{}\seq{\Gamma}= \win{A}{}\seq{\Phi,\Gamma}$.
\end{itemize}
\end{definition}

\begin{convention}\label{poscon}
A terminological convention important to remember is that we often identify a legal position $\Phi$ of a game $A$ with the game $\seq{\Phi}A$. So, for instance, we may say that the move $1$ by $\oo$ brings the game $B_0\adc B_1$ down to the position $B_1$. Strictly speaking, $B_1$ is not a position but a game, and what {\em is} a position is $\seq{\oo 1}$, which we here identified with the game $B_1=\seq{\oo 1}(B_0\adc B_1)$.
\end{convention}

We say that a constant game $A$ is {\bf finite-depth} iff there is an integer $d$ such that no legal run of $A$ contains more than $d$ labmoves. The smallest of such integers $d$ is called the {\bf depth}\label{idepth} of $A$. An {\bf elementary game}\label{ielgame2} is a game of depth $0$.

In this paper will exclusively deal with finite-depth games. This restriction of focus makes many definitions and proofs simpler. Namely, in order to define a finite-depth-preserving game operation $O(A_1,\ldots,A_n)$ applied to such games, it suffices  to specify the following:

\begin{description}
\item[(i)] Who wins $O(A_1,\ldots,A_n)$ if no moves are made, i.e., the value of $\win{O(A_1,\ldots,A_n)}{}\emptyrun$.
\item[(ii)] What are the {\bf initial legal (lab)moves},\label{iilm} i.e., the elements of  $\{\xx\alpha\ |\ \seq{\xx\alpha}\in\legal{O(A_1,\ldots,A_n)}{}\}$, and to 
what game  is $O(A_1,\ldots,A_n)$ brought down after such an initial legal labmove $\xx\alpha$ is made. Recall that, by saying that a given labmove $\xx\alpha$ brings a given game $A$ down to $B$, we mean that $\seq{\xx\alpha}A= B$.  
\end{description}
Then, the set of legal runs of $O(A_1,\ldots,A_n)$ will be uniquely defined, and so will be the winner in every legal (and hence finite) run of the game. 

Below we define a number of operations for finite-depth games only. Each of these operations can be easily seen to preserve the finite-depth property. Of course, more general definitions of these operations --- not restricted to finite-depth games --- do exist (see, e.g., \cite{Japfin}), but in this paper we are trying to keep things as simple as possible, and reintroduce only as much of computability logic as necessary.

\begin{definition}\label{op} Let $A$, $B$, $A_0,A_1,\ldots$ be finite-depth constant games, and $n$ be a positive integer.\vspace{9pt}

\noindent 1. $\gneg A$\label{igneg2} is defined by: 
\begin{quote}\begin{description}
\item[(i)] $\win{\gneg A}{}\emptyrun = \xx$ iff $\win{A}{}\emptyrun =\overline{\xx}$. 
\item[(ii)] $\seq{\xx\alpha}\in\legal{\gneg A}{}$ iff $\seq{\overline{\xx}\alpha}\in\legal{A}{}$. Such an initial legal labmove $\xx\alpha$ brings the game down to 
$\gneg \seq{\overline{\xx}\alpha}A$.\vspace{5pt}
\end{description}\end{quote}

\noindent 2. $A_0\adc\ldots\adc  A_n$\label{iadc2} is defined by: 
\begin{quote}\begin{description}
\item[(i)] $\win{A_0\adci\ldots\adci  A_n}{}\emptyrun = \pp$. 
\item[(ii)] $\seq{\xx\alpha}\in\legal{A_0\adci\ldots\adci  A_n}{}$ iff $\xx= \oo$ and $\alpha= i  \in\{0,\ldots,n\}$.\footnote{According to our conventions, such a natural number $i$ is identified with its binary representation. The same applies to the other clauses of this definition.}  Such an initial legal labmove $\oo i$ brings the game down to 
$A_i$.\vspace{5pt} 
\end{description}\end{quote}

\noindent 3. $A_0\mlc\ldots\mlc A_n$\label{imlc2} is defined by: 
\begin{quote}\begin{description}
\item[(i)] $\win{A_0\mlci\ldots\mlci  A_n}{}\emptyrun= \pp$ iff, for each $i\in\{0,\ldots,n\}$,  $\win{A_i}{}\emptyrun= \pp$. 
\item[(ii)] $\seq{\xx\alpha}\in\legal{A_0\mlci\ldots\mlci  A_n}{}$ iff $\alpha= i.\beta$, where $i\in\{0,\ldots,n\}$ and $\seq{\xx\beta}\in\legal{A_i}{}$. Such an initial legal labmove $\xx i.\beta$ brings the game down to  
\[ A_0\mlc\ldots\mlc A_{i-1}\mlc \seq{\xx\beta}A_i\mlc A_{i\plus 1}\mlc\ldots\mlc A_n.\vspace{3pt}\] 
\end{description}\end{quote}

\noindent 4. $A_0\add\ldots\add A_n$\label{iadd2} and $A_0\mld\ldots\mld  A_n$
are defined exactly as $A_0\adc\ldots\adc A_n$ and $A_0\mlc\ldots\mlc  A_n$, respectively, only with ``$\pp$" and ``$\oo$" interchanged.\vspace{7pt}


\noindent 5. In addition to the earlier-established meanings, the symbols $\twg$\label{itwg2} and $\tlg$ also denote two special --- simplest --- constant games, defined by  $\win{\twg}{}\emptyrun=\pp$, $\win{\tlg}{}\emptyrun= \oo$ and $\legal{\twg}{}= \legal{\tlg}{}= \{\emptyrun\}$.\vspace{7pt} 

\noindent 6. $A\mli B$\label{imli2} is treated as an abbreviation of $(\gneg A)\mld B$.
\end{definition}

\begin{example}
 The game $(0\equals 0\adc 0\equals 1)\mli(10\equals 11\adc 10\equals 10)$, i.e. \[\gneg (0\equals 0\adc 0\equals 1)\mld(10\equals 11\adc 10\equals 10),\]
 has thirteen legal runs, which are: 
\begin{description}
\item[1] $\seq{}$. It is won by $\pp$, because $\pp$ is the winner in the right $\mld$-disjunct (consequent).
\item[2] $\seq{\pp 0.0}$. (The labmove of) this run brings the game down to $\gneg 0\equals 0\mld(10\equals 11\adc 10\equals 10)$, and $\pp$ is the winner for the same reason as in the previous case.
\item[3] $\seq{\pp 0.1}$. It brings the game down to $\gneg 0\equals 1\mld(10\equals 11\adc 10\equals 10)$, and $\pp$ is the winner because it wins in both $\mld$-disjuncts. 
\item[4] $\seq{\oo 1.0}$. It brings the game down to $\gneg(0\equals 0\adc 0\equals 1)\mld 10\equals 11$.  $\pp$ loses as it loses in both $\mld$-disjuncts. 
\item[5] $\seq{\oo 1.1}$. It brings the game down to $\gneg (0\equals 0\adc 0\equals 1)\mld 10\equals 10$.  $\pp$ wins as it wins in the right $\mld$-disjunct. 
\item[6-7] $\seq{\pp 0.0,\oo 1.0}$ and $\seq{\oo 1.0, \pp 0.0}$. Both bring the game down to the false $\gneg 0\equals 0 \mld 10\equals 11$, and both are lost by  $\pp$. 
\item[8-9] $\seq{\pp 0.1,\oo 1.0}$ and $\seq{\oo 1.0, \pp 0.1}$. Both bring the game down to the true $\gneg 0\equals 1 \mld 10\equals 11$, which makes  $\pp$ the winner.
\item[10-11] $\seq{\pp 0.0,\oo 1.1}$ and $\seq{\oo 1.1, \pp 0.0}$. Both bring the game down to the true $\gneg 0\equals 0 \mld 10\equals 10$, so $\pp$ wins.
\item[12-13] $\seq{\pp 0.1,\oo 1.1}$ and $\seq{\oo 1.1, \pp 0.1}$. Both bring the game down to the true $\gneg 0\equals 1 \mld 10\equals 10$, so $\pp$ wins.
\end{description}
\end{example}

\section{Games as generalized predicates}\label{nncg}

Constant games can be seen as generalized propositions: while propositions in classical logic are just elements 
of $\{\twg,\tlg\}$, constant games are functions from runs to $\{\twg,\tlg\}$.
As we know, however, propositions only offer a very limited expressive power, 
and classical logic needs 
to consider the more general concept of predicates, with propositions being nothing but special --- constant --- cases of predicates. The situation in computability logic is similar. Our concept of a (simply) game generalizes that of a constant game in the same sense as the classical concept of a predicate generalizes that of a proposition.

We fix an infinite set of expressions called {\bf variables}:\label{ivariable} 
\[\{\mathfrak{w}_0,\mathfrak{w}_1,\mathfrak{w}_2,\mathfrak{w}_3,\ldots\}.\] The letters \[x,y,z,s,r,t,u,v,w\] will be used as metavariables for these variables. The Gothic letter \[\bound\label{ipi}\] will be exclusively used as a metaname for the variable $\mathfrak{w}_0$, which
is going to have a special status throughout our entire treatment. 

We also fix another infinite set of expressions called {\bf constants}:\label{iconstant} 
\[\{0,1,10,11,100,101,110,111,1000,\ldots\}.\] These are thus  {\bf binary numerals}\label{ibinnum} --- the strings matching the regular expression $0\cup 1(0\cup 1)^*$.  We will be typically identifying such strings --- by some rather innocent abuse of concepts --- with the natural numbers represented by them in the standard binary notation, and vice versa. The above collection of constants is going to be exactly the {\em universe of discourse} 
--- i.e., the set over which the variables range --- in all cases that we consider. We will be mostly using $a,b,c,d$ as metavariables for constants. 

By a {\bf valuation}\label{ivaluation} we mean 
a function $e$ that sends each variable $x$ to a constant $e(x)$. In these terms, a classical predicate $p$ can be understood as 
a function that sends each valuation $e$ to a proposition, i.e., to a constant predicate.   Similarly, what we call a game sends valuations to constant games: 

\begin{definition}\label{ngame}
A {\bf game} is a function $A$ from valuations to constant games. We write $e[A]$\label{iea} (rather than $A(e)$) to denote the constant game returned by $A$ for valuation $e$. Such a constant game $e[A]$ is said to be an {\bf instance}\label{iinstance} of $A$. 
For readability, we usually write $\legal{A}{e}$\label{ilre} and $\win{A}{e}$ instead of $\legal{e[A]}{}$ and $\win{e[A]}{}$.
\end{definition}

Just as this is the case with propositions versus predicates, constant games in the sense of Definition \ref{game} will
be thought of as special, constant cases of games in the sense of Definition \ref{ngame}. In particular, each constant game $A'$ is the game $A$ such that, for every valuation $e$,
$e[A]= A' $. From now on we will no longer distinguish between such $A$ and $A' $, so that, if $A$ is a constant game,
it is its own instance, with $A= e[A]$ for every $e$.

Where $n$ is a natural number, we say that a game $A$ is {\bf $n$-ary}\label{igarity} iff there is are $n$ variables such that, for any two valuations $e_1$ and $e_2$ that agree on all those variables, we have $e_1[A]= e_2[A]$. Generally, a game that is $n$-ary for some $n$, is said to be {\em finitary}.\label{ifinitary} Our paper is going to exclusively deal with finitary games and, for this reason, we agree that, from now on, when we say ``game'', we usually mean ``finitary game''.  

We say that a game $A$ {\bf depends} on a variable $x$ iff there are two valuations  $e_1,e_2$ that agree on all variables except $x$ such that $e_1[A]\not= e_2[A]$. An $n$-ary game thus depends on at most $n$ variables. And constant games are nothing but $0$-ary games, i.e., games that do not depend on any variables. 

We say that a (not necessarily constant) game $A$ is {\bf elementary}\label{ielem2} iff so are all of its instances $e[A]$. And we say that $A$ is {\bf finite-depth}\label{fdpth} iff there is a (smallest) integer $d$, called the {\bf depth} of $A$, such that the depth of no instance of $A$ exceeds $d$.

Just as constant games are generalized propositions, games can be treated as generalized predicates. Namely, we will see each predicate $p$ of whatever arity as  the same-arity elementary game such that, for every valuation $e$,
$\win{p}{e}\emptyrun=\pp$ iff $p$ is true at $e$.  
And vice versa: every elementary game $p$ will be seen as the same-arity predicate which is true at a given valuation $e$ iff  $\win{p}{e}\emptyrun=\pp$.   
Thus, for us, ``predicate'' and ``elementary game'' are going to be synonyms. Accordingly,  any standard terminological or notational conventions familiar from the literature for predicates also apply to them seen as elementary games. 

Just as the Boolean operations straightforwardly extend from propositions to all predicates, our operations 
$\gneg,\mlc,\mld,\mli,\adc,\add$ extend from constant games to all games. This is done by simply stipulating that $e[\ldots]$ commutes with all of those operations: $\gneg A$ is 
the game such that, for every valuation $e$, $e[\gneg A]=\gneg e[A]$; $A\adc B$ is the game such that,
for every $e$, $e[A\adc B]= e[A]\adc e[B]$; etc. 

The operation of prefixation also extends to nonconstant games:  $\seq{\Phi}A$ should be understood as the unique game such that, for every $e$, $e[\seq{\Phi}A]= \seq{\Phi}e[A]$. However, unlike the cases with all other operations,  $\seq{\Phi}A$,  as a function from valuations to constant games, may be partial even if $A$ is total. Namely, it will be defined only for those valuations $e$ for which we have $\Phi\in\legal{A}{e}$. Let us call not-always-defined ``games'' {\bf partial} (as opposed to the {\bf total} games of Definition \ref{ngame}). In the rare cases when we write $\seq{\Phi}A$ for a non-constant game $A$ (which always happens in just intermediate steps), it should be remembered that possibly we are dealing with a partial rather than a total game.  Otherwise, the default meaning of the word ``game'' is always a total game.

\begin{definition}\label{sov}
Let $A$ be a game, $x_1,\ldots,x_n$ be pairwise distinct variables, and $c_1,\ldots,c_n$ be  constants. 
The result of {\bf substituting $x_1,\ldots,x_n$ by $c_1,\ldots,c_n$ in $A$}, denoted $A(x_1/c_1,\ldots,x_n/c_n)$, is defined by stipulating that, for every valuation $e$, $e[A(x_1/c_1,\ldots,x_n/c_n)]= e'[A]$, where $e'$ is the valuation that sends each $x_i$ to $c_i$ and agrees with $e$ on all other variables. 
\end{definition}

Following the standard readability-improving practice established in the literature for predicates, we will often fix pairwise distinct  variables $x_1,\ldots,x_n$ for a game $A$ and write $A$ as $A(x_1,\ldots,x_n)$. 
Representing $A$ in this form  sets a context in which we can write $A(c_1,\ldots,c_n)$ to mean the same as the more clumsy expression $A(x_1/c_1,\ldots,x_n/c_n)$. 

\begin{definition}\label{bq}
Below $x$ is an arbitrary variable other than $\bound$, and $A(x)$ is an arbitrary finite-depth game.    

1. We define $\ada^0 xA(x)=\ade^0xA(x)=A(0)$ and, for any positive integer $b$, with $1^b$ standing for the binary numeral consisting of $b$ ``$1$''s,  we define the games $\ada^bxA(x)$ and $\ade^bxA(x)$ as follows: 

   \[\ada^b xA(x)\ = \ A(0)\adc A(1)\adc A(10)\adc A(11)\adc A(100)\adc A(101)\adc \ldots\adc A(1^b);\] \[\ade^b xA(x)\ =\ A(0)\add A(1)\add A(10)\add A(11)\add A(100)\add A(101)\add \ldots \add A(1^b).\] 

2. Using the above notation, we define \[\ada^\bound x A(x)\label{ibcuq2}\] as the unique game such that, for any valuation $e$, $e[\ada^\bound x A(x)]= e[\ada^b x A(x)]$, where $b= e(\bound)$. Similarly, \[\ade^\bound x A(x)\label{ibceq2}\] is the unique game such that, for any valuation $e$, $e[\ade^\bound x A(x)]=e[\ade^b x A(x)]$, where $b=e(\bound)$. 
$\ada^\bound$ and $\ade^\bound$ are said to be {\bf bounded choice universal  quantifier}\label{ibcq} and {\bf bounded choice existential quantifier}, respectively. 
\end{definition}

As we see, $\ada^\bound$ and $\ade^\bound$ are like the ordinary choice quantifiers $\ada,\ade$ of computability logic explained in Section \ref{ss2}, with the only difference that the size of a constant chosen for $x$ in $\ada^\bound x$ or $\ade^\bound x$ should not exceed the value of $\bound$. (The case of that value being $0$ is a minor technical exception which can be safely forgotten.)

\begin{convention}\label{con1}
Because throughout the rest of this paper we exclusively deal with the bounded choice quantifiers $\ada^\bound,\ade^\bound$ (and never with the ordinary $\ada,\ade$ discussed  in Section \ref{ss2}), and because the variable $\bound$ is fixed and is the same everywhere, we agree that, {\em from now on}, when we write $\ada$ or $\ade$, we {\em always} mean $\ada^\bound$ or $\ade^\bound$, respectively.

This is not a change of interpretation of $\ada,\ade$ but rather some, rather innocent,  abuse of notation.  
\end{convention}
 
We will say that a game $A$ is  {\bf unistructural}\label{iunistructural} iff, for any two valuations $e_1$ and $e_2$ that agree on $\bound$,   we have $\legal{A}{e_1}= \legal{A}{e_2}$. Of course, all constant or elementary games are unistructural. It can also be easily seen that all our game operations preserve the unistructural property of games. For the purposes of the present paper, considering only unistructural games would be sufficient. 

We define the remaining operations $\cla$ and $\cle$ only for unistructural games:

\begin{definition}\label{op5} Let $x$ be a variable other than $\bound$, and $A(x)$ be a finite-depth unistructural game.\vspace{9pt}

\noindent 1. $\cla x A(x)$ is defined by stipulating that, for every valuation $e$, player $\xx$ and move $\alpha$, we have: 
\begin{quote}\begin{description}
\item[(i)] $\win{\clai x A(x)}{e}\emptyrun= \pp$ iff, for every constant\footnote{It is important to note that, unlike the case with the choice quantifiers, here we are not imposing any restrictions on the size of such a constant.}  $c$, $\win{A(c)}{e}\emptyrun= \pp$. 
\item[(ii)] $\seq{\xx\alpha}\in\legal{\clai x A(x)}{e}$ iff $\seq{\xx\alpha}\in\legal{A(x)}{e}$. Such an initial legal labmove $\xx\alpha$ brings the game $e[\cla x A(x)]$ down to 
$e[\cla x\seq{\xx\alpha}A(x)]$.\vspace{5pt}
\end{description}\end{quote}
\noindent 2. $\cle x A(x)$ is defined in exactly the same way, only with $\pp$ and $\oo$ interchanged.  
\end{definition}

It is worth noting that $\cla x A(x)$ and $\cle x A(x)$ are total even if the game $\seq{\xx\alpha}A(x)$ used in their definition is only partial. 
 
\begin{example}\label{may14}
Let $G$ be the game (\ref{lkj})
discussed earlier in Section \ref{ss2} (only, now $\ada$ seen as $\ada^{\bound}$), and let $e$ be a  valuation with $e(\bound)= 10$. The sequence 
$\seq{\oo 1.11,\ \oo 0.0,\ \pp 1.1}$ 
is a legal run of $e[G]$, the effects of the moves of which are shown below:
\[\begin{array}{ll}
e[G]:  & \cla y\Bigl(\mbox{\em Even}(y)\add \mbox{\em Odd}(y) \mli \ada^{10}  x\bigl(\mbox{\em Even}(x\plus y)\add \mbox{\em Odd}(x\plus y)\bigr)\Bigr)\\
\seq{\oo 1.11}e[G]:  & \cla y\bigl(\mbox{\em Even}(y)\add \mbox{\em Odd}(y) \mli \mbox{\em Even}(11\plus y)\add \mbox{\em Odd}(11\plus y)\bigr)\\
\seq{\oo 1.11, \oo 0.0}e[G]: &  \cla y\bigl(\mbox{\em Even}(y) \mli \mbox{\em Even}(11\plus y)\add \mbox{\em Odd}(11\plus y)\bigr)\\
\seq{\oo 1.11, \oo 0.0,\pp 1.1}e[G]: & \cla y\bigl(\mbox{\em Even}(y) \mli \mbox{\em Odd}(11\plus y)\bigr)
\end{array}\]
The play hits (ends as) the true proposition $\cla y\bigl(\mbox{\em Even}(y) \mli \mbox{\em Odd}(11\plus y)\bigr)$ and hence is won by $\pp$. 

\end{example}

Before closing this section, we want to make the rather straightforward observation that the DeMorgan dualities hold for all of our sorts of conjunctions, disjunctions and quantifiers, and so does the double negation principle. That is,  we always have:
\[\gneg\gneg A= A;\vspace{-3pt}\]
\[\gneg(A\mlc B)= \gneg A\mld\gneg B; \ \ \ \ \gneg(A\mld B)= \gneg A\mlc\gneg B;\vspace{-3pt}\]
\[\gneg(A\adc B)= \gneg A\add\gneg B;  \ \ \ \ \gneg(A\add B)= \gneg A\adc\gneg B;\vspace{-3pt}\]
\[\gneg \cla xA(x)= \cle x\gneg A(x); \ \ \ \ \gneg \cle xA(x)= \cla x\gneg A(x);\vspace{-3pt}\]
\[\gneg \ada xA(x)= \ade x\gneg A(x); \ \ \ \ \gneg \ade xA(x)= \ada x\gneg A(x).\]

\section{Algorithmic strategies through interactive machines}\label{icp}

In traditional game-semantical approaches, including Blass's \cite{Bla72,Bla92} approach which is the closest precursor of ours, player's strategies are understood as {\em functions} --- typically as functions from interaction histories (positions) to moves, or sometimes (\cite{Abr94}) as functions that only look at the latest move of the history. This {\em strategies-as-functions} approach, however, is inapplicable in the context of computability logic, whose relaxed semantics, in striving to get rid of any ``bureaucratic pollutants'' and only deal with the remaining true essence of games,  does not impose any regulations on which player can or should move in a given situation. Here, in many cases, either player may have (legal) moves, and then it is unclear whether the next move should be the one prescribed by $\pp$'s strategy function or the one prescribed by the strategy function of $\oo$. In fact, for a game semantics whose ambition is to provide a comprehensive, natural and direct tool for modeling interaction, the strategies-as-functions approach would be simply less than adequate, even if technically possible. This is so for the simple reason that  the strategies that real computers follow are not functions. If the strategy of your personal computer was a function from the history of interaction with you, then its performance would keep noticeably worsening due to the need to read the continuously lengthening --- and, in fact, practically infinite --- interaction history every time before responding. Fully ignoring that history and looking only at your latest keystroke in the spirit of \cite{Abr94} is also certainly not what your computer does, either.  

In computability logic, ($\pp$'s effective) strategies are defined in terms of interactive machines, where computation is one continuous process interspersed with --- and influenced by --- multiple ``input'' (environment's moves) and ``output'' (machine's moves) events. Of several, seemingly rather different yet equivalent,  machine models of interactive computation studied in CL, here we will employ the most basic, {\bf HPM}\label{ihpm} (``Hard-Play Machine'') model.

An HPM is nothing but a Turing machine with the additional capability of making moves. The adversary can also move at any time, with such moves being the only nondeterministic events from the machine's perspective. Along with the ordinary  work tape, the machine has two additional  tapes called the valuation tape and the  run tape. The valuation tape, serving as a static input, spells some (arbitrary but fixed) valuation applied to the game. And the run tape, serving as a dynamic input, at any time  spells the ``current position'' of the play. The role of these two tapes is to make both the valuation and the run fully visible to the machine.  

In these terms,  an  algorithmic solution ($\pp$'s winning strategy) for a given  game $A$ is understood as an HPM $\cal M$ such that,  no matter how the environment acts during its interaction with $\cal M$ (what moves it makes and when), and no matter what valuation $e$ is spelled on the valuation tape, the run incrementally spelled on the run tape is a $\pp$-won run of $e[A]$.  

As for $\oo$'s strategies, there is no need to define them: all possible behaviors by $\oo$ are accounted for by the different possible nondeterministic updates  of the run tape of an HPM. 

In the above outline, we described HPMs in a relaxed fashion, without being specific about technical details such as, say, how, exactly, moves are made by the machine, how many moves either player can make at once, what happens if both players attempt to move ``simultaneously'', etc. As it turns out, all reasonable design choices yield the same class of winnable games as long as we consider a certain natural subclass of games called {\bf static}.\label{istatic} Such games are obtained by imposing a certain simple formal condition on games (see, e.g., Section 5 of \cite{Japfin}), which we do not reproduce here as nothing in this paper relies on it. We shall only point out that, intuitively, static games are interactive tasks where the relative speeds of the players are irrelevant, as it never hurts a player to postpone making moves. In other words, static games are games that are contests of intellect rather than contests of speed. And one of the theses that computability logic philosophically relies on is that static games present an adequate formal counterpart of our intuitive concept of ``pure'', speed-independent interactive computational problems. Correspondingly, computability logic restricts its attention (more specifically, possible interpretations of the atoms of its formal language) to static games. All elementary games turn out to be trivially static, and the class of static games turns out to be closed under all game operations studied in computability logic. More specifically, all games expressible in the language of the later-defined logic $\clthree$, or theory $\arfour$, are static.   
And, in this paper, we use the 
 term ``{\bf computational problem}", or simply ``{\bf problem}", is a synonym of ``static game''.

\section{The HPM model in greater detail}

As noted, computability of static games is rather robust with respect to the technical details of the underlying model of interaction. And the  loose description of HPMs that we gave in the previous section would be sufficient for most purposes, just as mankind had been rather comfortably studying and using algorithms long before the Church-Turing thesis in its precise form came around. Namely, relying on just the intuitive concept of algorithmic strategies (believed in CL to be adequately captured by the HPM model) would be sufficient if we only needed to show existence of such strategies for various games. As it happens, however, later sections of this paper need  to arithmetize such strategies in order to prove the promised extensional completeness of ptarithmetic. The complexity-theoretic concepts defined in the next section also require certain more specific details about HPMs, and in this section we provide such details. It should be pointed out again that most --- if not all --- of such details are ``negotiable'', as different reasonable arrangements would   yield equivalent models. 

Just like an ordinary Turing machine, an HPM has a finite set of {\bf states},\label{istate} one of which has the special status of being the {\bf start state}. There are no accept, reject, or halt states, but there are specially designated states called {\bf move states}.\label{imovestate} It is assumed that the start state is not among the move states. As noted earlier, this is a three-tape machine, with a read-only {\bf valuation tape},\label{ivaluationtape} read-write {\bf work tape},\label{iworktape}  and read-only {\bf run tape}.\label{iruntape}  Each tape has a beginning but no end, and is divided into infinitely many {\bf cells},\label{icell} arranged in the left-to-right order. At any time, each cell will contain one symbol from a certain fixed finite set of {\bf tape symbols}.\label{itapesymbol} The {\bf blank} symbol, as well as $\pp$ and $\oo$, are among the tape symbols. 
We also assume that these three symbols  are not among the symbols that any (legal or illegal) move can ever contain.  
Each tape has its own {\bf scanning head},\label{ihead} at any given time looking (located) at one of the cells of the tape.  A transition from one {\bf computation step}  (``{\bf clock cycle}'')\label{icc}   to another happens according to the fixed {\bf transition function}\label{itf} of the machine. The latter, depending on the current state, and the symbols seen by the three heads on the corresponding tapes, deterministically prescribes the next state, the tape symbol by which the old symbol should be overwritten in the current cell   (the cell currently scanned by the  head) of the work tape, and, for each head, the direction --- one cell left or one cell right --- in which the head should move. A constraint here is that the blank symbol, $\pp$ or $\oo$ can never be written by the machine on the work tape. An attempt to move left when the head of a  given   tape is looking at the first (leftmost) cell  results in staying put. So does an attempt 
to move right when the head is looking at the blank symbol. 

When the machine starts working, it is in its start state, all three scanning heads are looking at the first cells of the corresponding tapes, the valuation tape spells some valuation $e$ by listing the values of the variables $\mathfrak{w}_0,\mathfrak{w}_1,\mathfrak{w}_2,\ldots$ (in this precise order) separated by commas, and (all cells of) the work and run tapes are blank (i.e., contain the blank symbol). Whenever the machine enters a move state, the string $\alpha$ spelled by (the contents of) its work tape cells, starting from the first cell and ending with the cell immediately left to the work-tape scanning head,   will be automatically appended --- at the beginning of the next clock cycle --- to the contents of the run tape in the $\pp$-prefixed form  $\pp\alpha$. And, on every transition, whether the machine is in a move state or not, any finite sequence $\oo\beta_1,\ldots,\oo\beta_m$ of $\oo$-labeled moves may be nondeterministically appended to the content of the run tape. If the above two events happen on the same clock cycle, then the moves will be appended to the contents of the run tape in the following order: $\pp\alpha\oo\beta_1\ldots\oo\beta_m$ (note the technicality that labmoves are listed on the run tape without blanks or commas between them). 

With each labmove  that emerges on the run tape we associate its {\bf timestamp},\label{itimestamp} which is the number of the clock cycle immediately preceding the cycle on which the move first emerged on the run tape. Intuitively, the timestamp indicates on which cycle the move was {\em made} rather than {\em appeared} on the run tape; a move made during cycle $\#i$ appears on the run tape on cycle $\#i\plus 1$ rather than $\#i$. Also, we agree that the count of clock cycles starts from $0$, meaning that the very first clock cycle is cycle $\#0$ rather than $\#1$. 

A {\bf configuration}\label{iconfiguration} is a full description of (the ``current'') contents of the work and run tapes, the locations of the three scanning heads, and the state of the machine. 
An {\bf $e$-computation branch}\label{icb} is an infinite sequence $C_0,C_1,C_2,\ldots$ of configurations, where $C_0$ is the initial configuration (as explained earlier), and every $C_{i\plus 1}$ is a configuration that could have legally followed (again,  in the sense explained earlier) $C_i$ when the valuation $e$ is spelled  on the valuation tape. For an $e$-computation branch $B$, the {\bf run spelled by $B$}\label{irsb} is the run $\Gamma$ incrementally spelled on the run tape in the corresponding scenario of interaction. We say that such a $\Gamma$ is {\bf a run generated by}\label{irgb} the machine on valuation $e$. 

We say that a given HPM $\cal M$ {\bf wins} ({\bf computes}, {\bf solves}) a given  game $A$ on valuation $e$ --- and write ${\cal M}\models_e A$\label{imodels} --- iff every run $\Gamma$ generated by $\cal M$ on valuation $e$ is a $\pp$-won run of $e[A]$. We say that $A$ is {\bf computable}\label{icomputable} iff there is an HPM $\cal M$ such that, for every valuation $e$, ${\cal M}\models_e A$; such an HPM is said to be an (algorithmic) {\bf solution},\label{isol} or {\bf winning strategy}, for $A$.

\section{Towards interactive complexity}\label{s7}

At present, the theory of interactive computation is far from being well developed, and even less so is the theory of interactive complexity. The studies of interactive computation in the context of complexity, while having going on since long ago, have been relatively scattered, and interaction 
has often been used for better understanding certain traditional, non-interactive complexity issues (examples would be alternating computation \cite{Chandra}, or   interactive proof systems and Arthur-Merlin games \cite{Goldwasser,Babai}) rather than being treated as an object of systematic studies 
in its own rights. 
 As if complexity theory was not ``complex'' enough already, taking it to the interactive level would most certainly generate a by an order of magnitude greater diversity of species from the complexity zoo. 

The present paper is the first modest attempt to bring complexity issues into computability logic and the corresponding part of the under-construction theory of interactive computation. Here we introduce one, perhaps the simplest, way of measuring interactive complexity out of the huge and interesting potential variety of complexity measures meaningful and useful in the interactive context. 
 
Games happen to be so expressive that most, if not all, ways of measuring complexity will be meaningful and interesting only for certain (sub)classes of games and not quite so, or not so at all, for other classes. Our present approach is no exception. The time complexity concept that we are going to introduce is meaningfully applicable only to games that, in positive (winnable) cases, can be brought by $\pp$ to a successful end within a  finite number of moves. In addition, every instance of a game under consideration should be such that the length of any move in any legal run of it never exceeds a certain bound which only depends on the value of our special-status variable $\bound$. As mentioned earlier, it is exactly the value of this variable relative to which the computational complexity of games will be measured.

The  above class of games includes all games obtained by closing elementary games (predicates) under the operations of Sections \ref{cg} and \ref{nncg}, which also happens to be the class of games expressible in the language of the later-defined logic $\clthree$. Indeed, consider any such game $A$. Obviously the number of moves in any legal run --- and hence any $\pp$-won run --- of any instance of $A$ cannot exceed its $(\adc,\add,\ada,\ade)$-depth;  the sizes of   moves associated with $\adc,\add$ are constant; and the sizes of moves associated with $\ada,\ade$, in any given instance of the game,  never exceed a certain constant plus the value of the variable $\bound$. 

Games for which our present complexity concepts are meaningful also include the much wider class of games expressible in the language of logic {\bf CL12} introduced in \cite{Japtowards}, if the quantifiers $\ada,\ade$ of the latter are understood (as they are in this paper) as their bounded counterparts $\ada^{\bound},\ade^{\bound}$. While those games may have arbitrarily long or even infinite legal runs, all runs won by $\pp$ are still finite.

Bringing computability logic to a complexity-sensitive level  also naturally calls for  considering only {\bf bounded valuations}.\label{ibv} By a bounded valuation we mean  a valuation $e$ such that, for any variable $x$, the size of the binary numeral $e(x)$ does not exceed the value   $e(\bound)$ of $\bound$ (note: the {\em value} of $\bound$ rather than the {\em size} of that value). This condition makes it possible to treat free variables in the same way as if they were $\ada$-bounded.  

The starting philosophical-motivational point of our present approach to time complexity is that it should be an   indicator of ``how soon the game(s) can be won'' in the worst case, with ``how soon'' referring to the number of computation steps (clock cycles) a given HPM $\cal M$ takes to reach a final and winning position. There is a little correction to be made in this characterization though. The point is that part of its time $\cal M$ may spend just waiting for its adversary to move, and it would be unfair to bill $\cal M$ for the time for which probably it is not responsible.  Our solution is to subtract from the overall time the moveless intervals preceding the adversary's moves, i.e. the intervals that intuitively correspond to the adversary's ``thinking periods''. These intuitions are accounted for by the following definitions.

Let $\cal M$ be an HPM, $e$ a bounded valuation, $B$ any $e$-computation branch of $\cal M$, and $\Gamma$ the run spelled by $B$. 
For any  labmove $\lambda$ of  $\Gamma$, we define the {\bf thinking period}\label{itp} for $\lambda$ as $m\mminus n$, where $m$ is the timestamp of $\lambda$ and $n$ is the timestamp of the labmove immediately preceding $\lambda$ in $\Gamma$, or is $0$ if there are no such labmoves. 
Next, we define {\bf $\pp$'s time}\label{itm} in $B$ (or in $\Gamma$) as the sum of the thinking periods for all $\pp$-labeled moves of $\Gamma$. {\bf $\oo$'s time} is defined similarly. Note that, for either player $\xx$, $\xx$'s time will be finite iff there are only finitely many moves made by $\xx$; otherwise it will be infinite.

\begin{definition}\label{deftc}
Let $A$ be a game, $h$ a function from natural numbers to natural numbers, and $\cal M$ an HPM. 

1. We say that {\bf $\cal M$ runs in time $h$}, or that $\cal M$ is an {\bf $h$ time machine}, iff, for any bounded valuation $e$ and any $e$-computation branch $B$ of $\cal M$, $\pp$'s time in $B$ is less than $h\bigl(e(\bound)\bigr)$. 

2. We say that {\bf $\cal M$ wins} ({\bf computes}, {\bf solves}) {\bf $A$ in time $h$}, or that {\bf $\cal M$ is an $h$ time solution for $A$},  iff $\cal M$ is an $h$ time machine and, for any bounded valuation $e$, ${\cal M}\models_e A$. 

3. We say that $A$ is {\bf computable} ({\bf winnable}, {\bf solvable}) {\bf in time $h$} iff it has an $h$ time solution. 

4. We say that {\bf $\cal M$ runs in polynomial time}, or that $\cal M$ is a {\bf polynomial time machine},\label{iptm} iff it runs in time $h$ for some polynomial function $h$.   

5. We say that {\bf $\cal M$ wins} ({\bf computes}, {\bf solves}) {\bf $A$ in polynomial time}, or that {\bf $\cal M$ is a polynomial time solution for $A$},\label{ipts}  iff $\cal M$ is an $h$ time solution for $A$ for some polynomial function $h$. Symbolically, this will be written as  
\[{\cal M}\models^{P} A.\label{imodelsp}\]

6. We say that $A$ is {\bf computable} ({\bf winnable}, {\bf solvable}) {\bf in polynomial time},\label{iptccc} or {\bf polynomial time computable}  ({\bf winnable}, {\bf solvable}),\label{iptc} iff it has a polynomial time solution. 

\end{definition}

Many concepts introduced within the framework of computability are generalizations ---  for  the interactive context --- of ordinary and well-studied concepts of the traditional theory of computation. The above-defined  time complexity or polynomial time computability are among such concepts. Let us look at the traditional  notion of polynomial time decidability of a predicate $p(x)$ for instance. With a moment's thought, it can be seen to be equivalent to polynomial time computability (in the sense of Definition \ref{deftc}) of the game $p(x)\add \gneg p(x)$, or --- if you prefer --- the game $\ada x\bigl(p(x)\add\gneg p(x)\bigr)$ (these two games are essentially the same, with the only difference that, in one case, the value of $x$ will have to be read from the valuation tape, while in the other case from the run tape).

\section{The language  of logic $\clthree$ and its semantics}\label{ss6}

Logic $\clthree$ will be axiomatically constructed in Section \ref{ss8}. The present section is merely devoted to its {\em language}. The building blocks of this formal language are:

\begin{itemize} 
\item {\bf Nonlogical predicate letters},\label{ipl} for which we use $p,q$ (possibly indexed) as metavariables. With each predicate letter is associated a nonnegative integer called its {\bf arity}.\label{iar2} We assume that, for any $n$, there are infinitely many $n$-ary predicate letters.   
\item {\bf Function letters},\label{ifl} for which we use $f,g$ as metavariables. Again, each function letter comes with a fixed {\bf arity},\label{iar3} and  we assume that, for any $n$, there are infinitely many $n$-ary function letters.  
\item The binary {\bf logical predicate letter} $\equals $.
\item Infinitely many {\bf variables}.  These are the same as the ones fixed in Section \ref{nncg}.
\item Technical symbols: the left parenthesis, the right parenthesis, and the comma. 
\end{itemize}

{\bf Terms},\label{ipterm} for which we use $\tau,\theta,\omega,\psi,\xi$ (possibly indexed) as metavariables,  are defined as the elements of the smallest set of expressions such that:
\begin{itemize}
\item Variables are terms.
\item If $f$ is an $n$-ary function letter and $\tau_1,\ldots,\tau_n$ are terms, then $f(\tau_1,\ldots,\tau_n)$ is a term. When $f$ is $0$-ary, we write $f$ instead of $f()$. 
\end{itemize}

{\bf $\clthree$-formulas},\label{icl3f} or, in most contexts simply {\bf formulas}, are defined as the elements of the smallest set of expressions such that:
\begin{itemize}
\item If $p$ is an $n$-ary predicate letter and $\tau_1,\ldots,\tau_n$ are terms, then $p(\tau_1,\ldots,\tau_n)$ is an ({\bf atomic}) formula. We write $\tau_1\equals \tau_2$ instead of $\equals (\tau_1,\tau_2)$. Also, when $p$ is $0$-ary, we write $p$ instead of $p()$.  
\item If $E$ is an atomic formula, $\gneg (E)$ is a formula. We can write $\tau_1\notequals \tau_2$ instead of $\gneg (\tau_1\equals \tau_2)$.
\item $\tlg$ and $\twg$ are formulas. 
\item If $E_1,\ldots,E_n$ ($n\mgeq 2$) are formulas, then so are $(E_1)\mlc\ldots\mlc (E_n)$, $(E_1)\mld\ldots\mld (E_n)$, $(E_1)\adc\ldots\adc (E_n)$, $(E_1)\add\ldots\add (E_n)$. 
\item If $E$ is a formula and $x$ is a variable other than $\bound$, then $\cla x (E)$, $\cle x (E)$, $\ada x(E)$, $\ade x(E)$ are formulas. 
\end{itemize}

Note that, terminologically, $\twg$ and $\tlg$ do not count as atoms. For us, atoms are formulas containing no logical operators. The formulas $\twg$ and $\tlg$ do not qualify because they {\em are} ($0$-ary) logical operators themselves.  

 Sometimes we can write $E_1\mlc\ldots\mlc E_n$ for an unspecified $n\mgeq 1$ (rather than $n\mgeq 2$). Such a formula, in the case $n= 1$, should be understood as simply $E_1$. Similarly for $\mld,\adc,\add$.  

Also, where $S$ is a set of formulas, we may write 
\[\mlc S\]
for the $\mlc$-conjunction of the elements of $S$. Again, if $S$ only has one element $F$, then $\mlc S$ is simply $F$. Similarly for $\mld,\adc,\add$. Furthermore, we do not rule out the possibility of $S$ being empty when using this notation. It is our convention that, when $S$ is empty, both $\mlc S$ and $\adc S$ mean $\twg$, and both $\mld S$ and $\add S$ mean $\tlg$.

$\gneg E$, where $E$ is not atomic, will be understood as a standard abbreviation: 
$\gneg\twg=\tlg$, $\gneg\gneg E= E$, $\gneg(A\mlc B)= \gneg A\mld \gneg B$, $\gneg \ada xE= \ade x\gneg E$, etc. And $E\mli F$ will be understood as an abbreviation of $\gneg E\mld F$. Also, if we write \[E_1\mli E_2\mli E_3\mli\ldots\mli E_n,\] this is to be understood as an abbreviation of $E_1\mli(E_2\mli(E_3\mli(\ldots (E_{n-1}\mli E_n)\ldots)))$.

Parentheses will often be omitted --- as we just did --- if there is no danger of ambiguity. When omitting parentheses, we assume that $\gneg$ and the quantifiers have the highest precedence, and $\mli$ has the lowest precedence. So, for instance, $\gneg \ada x E \mli F\mld G$ means $(\gneg(\ada x(E)))\mli((F)\mld (G))$.   

The expressions $\vec{x},\vec{y},\ldots$ will usually stand for tuples of variables. Similarly for $\vec{\tau},\vec{\theta},\ldots$ (for tuples of terms) or $\vec{a},\vec{b},\ldots$ (for tuples of constants). 
  
The definitions of {\em free} and {\em bound} occurrences of variables are standard, with $\ada,\ade$ acting as quantifiers along with $\cla,\cle$. We will try to use $x,y,z$ for bound  variables only, while use $s,r,t,u,v,w$ for free variables only. There may be some occasional violations of this commitment though.

\begin{convention}\label{jan26}
The present conventions apply not only to the language of $\clthree$ but also to the other formal languages that we deal with  later, such as those of $\clfour$ and $\arfour$. 

1. For safety and simplicity, throughout the rest of this paper we assume that no formula that we ever consider --- unless strictly implied otherwise by the context ---  may have both bound and free occurrences of the same variable. This restriction, of course, does not yield any loss of expressive power as variables can always be renamed so as to satisfy this condition.   

2. Sometimes a formula $F$ will be represented as $F(s_1,\ldots,s_n)$, where the $s_i$ are variables. 
When doing so, we do not necessarily mean that each  $s_i$ has a free occurrence in $F$, or that every variable occurring free in $F$ is among $s_1,\ldots,s_n$. However, it {\em will}  always be assumed (usually only implicitly) that the $s_i$ are pairwise distinct, and have no bound occurrences in $F$.  In the context set by the above representation, $F(\tau_1,\ldots,\tau_n)$ will mean the result of replacing, in $F$, each  occurrence of each $s_i$   by term $\tau_i$. When writing $F(\tau_1,\ldots,\tau_n)$, it will always be assumed (again, usually only implicitly) that the terms $\tau_1,\ldots,\tau_n$ contain no variables that have bound occurences in $F$, so that there are no unpleasant collisions of variables when doing replacements.  

3. Similar --- well established in the literature --- notational conventions apply to terms.
\end{convention}

An {\bf interpretation}\label{iint}\footnote{The concept of an interpretation in CL is usually more general than the present one. Interpretations in our present sense are called  {\bf perfect}. But here we omit the word ``perfect'' as we do not consider any nonperfect interpretations, anyway.} is a function $^*$ that sends each $n$-ary  predicate letter $p$ to an $n$-ary predicate (elementary game) $p^*(s_1,\ldots,s_n)$ which does not depend on any variables other than $s_1,\ldots,s_n$; it also sends each $n$-ary function letter $f$ to a function \[f^*:\ \{0,1,10,11,100,\ldots\}^n\rightarrow \{0,1,10,11,100,\ldots\};\]
the additional condition required to be satisfied by $^*$ is that $\equals^*$ is an equivalence relation on $\{0,1,10,\ldots\}$ preserved by $f^*$ for each function symbol $f$, and respected by $p^*$ for each nonlogical predicate symbol $p$.\footnote{That is, $\equals^*$ is a congruence relation. More commonly classical logic simply treats  $\equals$ as the identity predicate. That treatment of $\equals$, however, is known to be equivalent --- in every respect relevant for us --- to our present one. Namely, the latter turns into the former by seeing 
any two $\equals^*$-equivalent constants  as two different {\em names} of the same {\em object} of the universe, as ``Evening Star'' and ``Morning Star'' are.} 

The above uniquely extends to  a  mapping that sends each term $\tau$ to a function $\tau^*$, and each formula $F$ to a game $F^*$, by stipulating that: 
\begin{enumerate}
\item $s^* =  s$ (any variable $s$). 
\item Where $f$ is an $n$-ary function letter and $\tau_1,\ldots,\tau_n$ are terms, $\bigl(f(\tau_1,\ldots,\tau_n)\bigr)^* =  f^*(\tau_{1}^{*},\ldots,\tau_{n}^{*})$. 
\item Where $p$ is an $n$-ary  predicate letter  and $\tau_1,\ldots,\tau_n$ are terms, $\bigl(p(\tau_1,\ldots,\tau_n)\bigr)^* =  p^*(\tau_{1}^{*},\ldots,\tau_{n}^{*})$. 
\item $^{*}$ commutes with all logical operators, seeing them as the corresponding game operations:  
\begin{itemize}
\item $\twg^* =  \twg$; 
\item $\tlg^* =  \tlg$; 
\item $(\gneg F)^{*} =  \gneg F^{*}$; 
\item $(E_1\mlc\ldots\mlc E_n)^{*} =  E^{*}_{1}\mlc \ldots\mlc E^{*}_n$; 
\item $(E_1\mld\ldots\mld E_n)^{*} =  E^{*}_{1}\mld \ldots\mld E^{*}_{n}$; 
\item $(E_1\adc\ldots\adc E_n)^{*} =  E^{*}_{1}\adc \ldots\adc E^{*}_n$; 
\item $(E_1\add\ldots\add E_n)^{*} =  E^{*}_{1}\add \ldots\add E^{*}_{n}$;
\item $(\cla x E)^{*} =  \cla x(E^{*})$; 
\item $(\cle x E)^{*} =  \cle x(E^{*})$; 
\item $(\ada x E)^{*} =  \ada x(E^{*})$; 
\item $(\ade x E)^{*} =  \ade x(E^{*})$.\footnote{Remember Convention \ref{con1}, according to which $\ada$ means $\ada^{\bound}$ and $\ade$ means $\ade^{\bound}$.}
\end{itemize}
\end{enumerate}

When $O$ is a function symbol, a predicate symbol, or a formula, and $O^* =  W$, we say that $^*$ {\bf interprets} $O$ as $W$. We can also refer to such a $W$ as 
``{\bf $O$ under interpretation $^*$}''.

When a given formula is represented as $F(x_1,\ldots,x_n)$, we will typically write $F^*(x_1,\ldots,x_n)$ instead of 
$\bigl(F(x_1,\ldots,x_n)\bigr)^*$. A similar practice will be used for terms as well.

\begin{definition}\label{ddd2}
We say that an HPM $\cal M$ is a {\bf uniform polynomial time solution}\label{iupts}  for a formula $F$ iff, for any interpretation $^*$, $\cal M$ is a polynomial time solution for  $F^*$.
\end{definition}

Intuitively, a uniform polynomial time solution is a ``purely logical'' efficient solution. ``Logical'' in the sense that it does not depend on the meanings of the nonlogical symbols (predicate and function letters) ---  does not depend on a (the) interpretation $^*$, that is. It is exactly these kinds of   solutions that we are interested in when seeing CL as a logical basis for applied theories or knowledge base systems. As a universal-utility tool, CL (or a CL-based compiler) would have no knowledge of the meanings of those nonlogical symbols
(the meanings that will be changing from application to application and from theory to theory), other than what is explicitly  given by the target formula and the axioms or the knowledge base   
of the system. 

\section{Some closure properties of polynomial time computability}\label{ss7}

In this section we establish certain important  closure properties for polynomial time computability of games. For simplicity we restrict them to games expressible in the language of $\clthree$, even though it should be pointed out that these results can be stated and proven in much more general forms  than presented here.

By an (inference) {\bf rule}  we mean a binary relation $\cal R$ between sequences of formulas and formulas,  instances of which are schematically written as  
\begin{equation}\label{erule}
\frac{X_1\hspace{20pt}\ldots\hspace{20pt}X_n}{X},\end{equation}
where $X_1,\ldots,X_n$ are (metavariables for) formulas called the {\bf premises}, and $X$ is (a metavariable for) a formula called the {\bf conclusion}. 
Whenever ${\cal R}(\seq{X_1,\ldots,X_n},X)$ holds, we say that $X$ {\bf follows from} $X_1,\ldots,X_n$ by $\cal R$.

We say that such a rule ${\cal R}$ is {\bf uniform-constructively sound}\label{iucs} iff there is an effective procedure that takes any instance $(\seq{X_1,\ldots,X_n},X)$ of the rule, any HPMs  ${\cal M}_1,\ldots,{\cal M}_n$ and returns an HPM ${\cal M}$ such that, for any interpretation $^*$, whenever  ${\cal M}_1\models^{P} X_{1}^{*},\ldots,{\cal M}_n\models^{P} X_{n}^{*} $, we have ${\cal M}\models^{P}X^{*}$.

Our formulations of rules, as well as our later treatment, rely on the following notational and terminological conventions.

\begin{enumerate}
\item A {\bf positive occurrence} of a subformula is an occurrence that is not in the scope of $\gneg$. Since officially only atoms may come with a $\gneg$,  occurrences of  non-atomic subformulas  will always be positive. 
\item A {\bf surface occurrence}\label{isoc} of a subformula is an occurrence that is 
not in the scope of any choice operators ($\adc,\add,\ada,\ade$). 
\item A formula not containing choice operators --- i.e., a formula of the classical language --- is said to be {\bf elementary}.\label{ief}  
\item The {\bf elementarization} \[\elz{F}\label{ielz}\] of a formula $F$ is the result of replacing
in $F$ all surface occurrences of $\add$- and $\ade$-subformulas by $\tlg$, and all surface occurrences of $\adc$- and $\ada$-subformulas by $\twg$. Note that $\elz{F}$ is (indeed) an elementary formula.
\item We will be using the notation \[F[E_1,\ldots,E_n]\] to mean a formula $F$ together with some (single) fixed  positive surface occurrences of each subformula $E_i$. Here the {\em formulas} $E_i$ are not required to be pairwise distinct, but their {\em occurrences} are. Using this notation sets a context in which $F[H_1,\ldots,H_n]$ will mean the result of replacing in $F[E_1,\ldots,E_n]$ the (fixed) occurrence of each $E_i$ by $H_i$.  Note again that here we are talking about some {\em occurrences} of $E_1,\ldots,E_n$. Only those occurrences get replaced when moving from $F[E_1,\ldots,E_n]$ to $F[H_1,\ldots,H_n]$, even if the formula also had some other occurrences of $E_1,\ldots,E_n$.
\item In any context where the notation of the previous clause is used (specifically, in the formulations of the rules of $\add$-Choose, $\ade$-Choose and Wait below), all formulas are assumed to be in negation normal form, meaning that they contain no $\mli$, and no $\gneg$ applied to non-atomic subformulas.  
\end{enumerate}

Below we prove the uniform-constructive soundness of several rules.   Our proofs will be limited to showing how to construct an  HPM $\cal M$ from an arbitrary instance --- in the form (\ref{erule}) --- of the rule and arbitrary HPMs ${\cal M}_1,\ldots,{\cal M}_n$ (purported solutions for the premises).  In each case it will be immediately clear from our description of $\cal M$  that it can be constructed effectively, that it runs  in polynomial time as long as so do  ${\cal M}_1,\ldots,{\cal M}_n$,  and that its work in no way does depend on an interpretation $^*$ applied to the games involved. Since an interpretation $^*$ is typically irrelevant in such proofs, we will often omit it and write simply $F$ where, strictly speaking, $F^*$ is meant. That is, we identify formulas  with the games into which they turn once an interpretation is applied to them. Likewise, we may omit a valuation $e$ and write $F$ instead of $e[F]$ or $e[F^*]$. 

\subsection{$\add$-Choose}\label{sep10aa} $\add$-Choose is the following rule:
\[\frac{F[H_i]}{F[H_0\add\ldots\add H_n]},\]
where $n\geq 1$ and $i\in\{0,\ldots,n\}$.

Whenever a formula $F$ follows from a formula $E$ by $\add$-Choose, we say that $E$ is a {\bf $\add$-Choose-premise}\label{iaddprem} of $F$.  

\begin{theorem}\label{sep10a}
$\add$-Choose is uniform-constructively sound.  
\end{theorem}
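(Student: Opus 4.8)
The plan is to exhibit, from an arbitrary instance of $\add$-Choose and an arbitrary solution ${\cal M}_1$ for its premise, an HPM ${\cal M}$ that solves the conclusion in polynomial time, uniformly in the interpretation $^*$. The key observation is that $\add$-Choose merely inserts a $\add$-move at the very beginning of the play: the conclusion $F[H_0\add\ldots\add H_n]$ differs from the premise $F[H_i]$ only in that the distinguished surface occurrence has been replaced by a choice disjunction. Since the occurrence is a \emph{surface} occurrence (not in the scope of any choice operator), the designated move that selects the $i$-th disjunct is an initial legal labmove $\pp\gamma$ for some fixed string $\gamma$ determined syntactically by the position of the occurrence within $F$ and by $i$. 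Crucially, by the definition of prefixation together with the fact that $e[\cdot]$ commutes with all the operations, making this move brings $e[F^*[H_0\add\ldots\add H_n]]$ down to $e[F^*[H_i]]$ for \emph{every} valuation $e$ and \emph{every} interpretation $^*$.

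First I would describe ${\cal M}$ explicitly: on any valuation $e$, it makes the single move $\gamma$ at the first clock cycle (i.e., it writes $\gamma$ on its work tape and enters a move state), thereby bringing the conclusion-game down to the premise-game, and thereafter it simulates ${\cal M}_1$ on the resulting play, relaying ${\cal M}_1$'s moves to its own run tape and feeding ${\cal M}_1$ any environment moves that appear. Since $\gamma$ depends only on $(\seq{X_1},X)$ and not on $^*$ or $e$, and since ${\cal M}$ is obtained from ${\cal M}_1$ by a fixed, effective modification, the whole construction is effective and interpretation-independent, as required by the definition of uniform-constructive soundness.

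Next I would verify correctness: after the prefixing move, the remaining run faced by ${\cal M}$ is exactly a run of $e[F^*[H_i]]$, which ${\cal M}_1$ wins by hypothesis (${\cal M}_1\models^P X_1^*$), so ${\cal M}\models_e X^*$ for every bounded valuation $e$. For the time bound, note that ${\cal M}$'s only extra activity beyond simulating ${\cal M}_1$ is producing the single fixed move $\gamma$, whose length is a constant independent of $\bound$; this contributes only a constant to $\pp$'s time, and the simulation overhead is a fixed polynomial (indeed linear) factor. Hence if ${\cal M}_1$ runs in time $h$ for some polynomial $h$, then ${\cal M}$ runs in time $h'$ for some polynomial $h'$, so ${\cal M}\models^P X^*$.

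The main obstacle, and the only point requiring care rather than routine bookkeeping, is the verification that the single string $\gamma$ really does effect the desired reduction \emph{uniformly}, i.e., that $\seq{\pp\gamma}\bigl(e[F^*[H_0\add\ldots\add H_n]]\bigr)=e[F^*[H_i]]$ regardless of $e$ and $^*$. This amounts to tracking, through the surrounding context $F$, how an initial labmove addressed to a positive surface occurrence propagates; because the occurrence is positive, the label stays $\pp$ (no role-switch from negation, as all formulas are in negation normal form), and because it is a surface occurrence, no choice operator intervenes to consume or reroute the move. Establishing this propagation fact for the parallel connectives and blind quantifiers lying between the root and the occurrence is the substantive content, after which the soundness and the complexity bound follow immediately.
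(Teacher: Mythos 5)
Your proposal is correct and follows essentially the same route as the paper's own proof: the machine makes the fixed initial move $\pp\alpha$ that brings $F[H_0\add\ldots\add H_n]$ down to $F[H_i]$, then simulates and mimics ${\cal M}_1$ for the rest of the play, with the observation that the resulting run has the form $\seq{\pp\alpha,\Gamma}$ for a run $\Gamma$ generated by ${\cal M}_1$ and that the simulation overhead is at most polynomial. The only difference is one of emphasis — you dwell more on verifying that the single move effects the reduction uniformly in $e$ and $^*$, which the paper treats as immediate from the definitions of the game operations and prefixation.
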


\begin{idea} This rule most directly encodes an action that $\cal M$ should perform in order to successfully solve the conclusion.  Namely, $\cal M$ should choose $H_i$ and then continue playing as the machine that (presumably) solves the premise. 
\end{idea}

\begin{proof} Let ${\cal M}_1$ be an arbitrary HPM (a purported polynomial time solution for the premise). We let ${\cal M}$ (the will-be polynomial time solution for the conclusion) be the machine that  works as follows. 

At the beginning, without looking at its run tape or valuation tape, ${\cal M}$ makes the move $\alpha$ that brings $ F[H_0\add\ldots\add H_n]$ down to $F[H_i]$. For instance, if $F[H_0\add\ldots\add H_n]$ is $X\mlc(Y\mld (Z\add T))$ and $F[H_i]$ is $ X\mlc(Y\mld Z)$, then $1.1.0$ is such a move. 

What  ${\cal M}$ does after that can be characterized as ``turning itself into ${\cal M}_1$'' and playing the rest of the game as ${\cal M}_1$ would. In more detail, $\cal M$ starts simulating and mimicking ${\cal M}_1$. During this simulation, ${\cal M}$ ``imagines'' that ${\cal M}_1$ has the same valuation $e$ on its valuation tape as ${\cal M}$ itself has, and that the run tape of ${\cal M}_1$ spells the same run as its own run tape does, with the difference that the move $\alpha$ made by ${\cal M}$ at the beginning is ignored (as if it was not there). To achieve the effect of consistency between the real and imaginary valuation and run tapes, what $\cal M$ does is that, every time the simulated ${\cal M}_1$ reads a cell of its valuation or run tape, $\cal M$ reads the content of the corresponding cell of its own valuation or run tape, and feeds that content to the simulation as the content of the cell that ${\cal M}_1$ was reading.   And whenever, during the simulation, ${\cal M}_1$ makes a move, ${\cal M}$ makes the same move in the real play. 

The run generated by $\cal M$ in the real play will look like $\seq{\pp\alpha,\Gamma}$. It is not hard to see that then $\Gamma$ will be a run generated by ${\cal M}_1$.  So, if ${\cal M}_1$ wins $F[H_i]$, implying that $\win{F[H_i]}{e}\seq{\Gamma}=\pp$, then  $\cal M$ wins $F[H_0\add\ldots\add H_n]$, because  $\win{F[H_0\add\ldots\add H_n]}{e}\seq{\pp\alpha,\Gamma}=\win{F[H_i]}{e}\seq{\Gamma}$.

Simulation does impose a certain overhead, which makes $\cal M$ slower than ${\cal M}_1$. But, with some analysis, details of which are left to the reader, it can be seen that the slowdown would be at most polynomial, meaning that, if ${\cal M}_1$ runs in polynomial time, then so does $\cal M$. 
\end{proof}

\subsection{$\ade$-Choose}\label{sep10bb} $\ade$-Choose is the following rule:
\[\frac{F[H(s)]}{F[\ade x H(x)]},\]
where $x$ is any non-$\bound$ variable, $s$ is a variable with no bound occurrences in the premise, and $H(s)$ is the result of replacing by $s$ all free occurrences of $x$ in $H(x)$ (rather than vice versa).

Whenever a formula $F$ follows from a formula $E$ by $\ade$-Choose, we say that $E$ is a {\bf $\ade$-Choose-premise}\label{iadeprem} of $F$. 

\begin{theorem}\label{sep10b}
$\ade$-Choose is uniform-constructively sound.  
\end{theorem}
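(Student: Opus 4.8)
The plan is to prove the uniform-constructive soundness of $\ade$-Choose by essentially mimicking the strategy used for $\add$-Choose (Theorem \ref{sep10a}), since the bounded choice existential quantifier $\ade x$ is, by Definition \ref{bq}, just a ``generalized'' $\add$-disjunction ranging over the constants whose size does not exceed $\bound$. The key structural observation is that the rule $\ade$-Choose asks $\cal M$ to commit to a particular value for the quantified variable, namely the value that the simulated machine ${\cal M}_1$ ``intended'' for the free variable $s$; the only genuine difference from the $\add$ case is that here the choice is a concrete numeral rather than a fixed disjunct index $i$, and that numeral must be read from the valuation, not hardwired into $\cal M$.

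First I would fix an arbitrary instance of the rule, with premise $F[H(s)]$ and conclusion $F[\ade x H(x)]$, and an arbitrary HPM ${\cal M}_1$ purported to solve the premise. I would then describe $\cal M$ as follows: at the very beginning $\cal M$ reads from its valuation tape the value $c = e(s)$ assigned to the variable $s$, and makes the move $\alpha$ that brings $F[\ade x H(x)]$ down to $F[H(c)]$ --- that is, the move selecting the constant $c$ at the surface occurrence of $\ade x H(x)$. (This is legal precisely because $e$ is a \emph{bounded} valuation, so $|c| = |e(s)| \mleq e(\bound)$, meaning $c$ is within the range permitted by the bounded quantifier $\ade^{\bound}$.) After making this move, $\cal M$ turns itself into ${\cal M}_1$ and simulates it exactly as in the proof of Theorem \ref{sep10a}: it feeds ${\cal M}_1$ the same valuation and the same run tape contents, ignoring the extra initial move $\alpha$, and copies every move ${\cal M}_1$ makes into the real play.

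The correctness argument then proceeds as in the $\add$ case. The run generated by $\cal M$ will be of the form $\seq{\pp\alpha, \Gamma}$ where $\Gamma$ is a run generated by ${\cal M}_1$ on valuation $e$. The crucial equation to verify is that, for a bounded valuation $e$ with $e(s) = c$, we have $\win{F[\ade x H(x)]}{e}\seq{\pp\alpha,\Gamma} = \win{F[H(c)]}{e}\seq{\Gamma}$, because $\seq{\pp\alpha}$ brings the conclusion-game down to $F[H(c)]$ at that surface occurrence. One also needs that $e[F[H(c)]] = e[F[H(s)]]$ as instances of constant games --- this holds since $H(c)$ and $H(s)$ differ only in that $s$ is replaced by $c$, and $e(s) = c$, so Definition \ref{sov} gives the same constant game. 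Hence if ${\cal M}_1 \models_e F[H(s)]$ then $\cal M$ wins $F[\ade x H(x)]$ on $e$. The polynomial-time bound transfers exactly as before: the simulation overhead is at most polynomial, and reading $e(s)$ off the valuation tape costs at most time polynomial in $|e(s)| \mleq e(\bound)$, so $\cal M$ remains a polynomial time machine whenever ${\cal M}_1$ is.

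The main obstacle --- and the point deserving the most care --- is the interplay between the \emph{boundedness} of the valuation and the legality of the initial move. In the $\add$-Choose proof the chosen move was a literal constant independent of $e$, so legality was automatic; here the move depends on $e(s)$, and I must invoke the standing assumption that we only consider bounded valuations (from Section \ref{s7}) to guarantee $|e(s)| \mleq e(\bound)$ and hence that selecting $c$ is a legal move for the $\ade^{\bound}$ quantifier. This is exactly the subtlety that makes the bounded setting work: treating free variables ``as if they were $\ada$-bounded'' is what lets $\cal M$ legally name $e(s)$. A secondary point to confirm is that $s$, having no bound occurrences in the premise, causes no variable-collision when forming $H(c)$, which follows from Convention \ref{jan26}.
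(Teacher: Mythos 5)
Your proposal is correct and follows essentially the same route as the paper's own proof: read $c=e(s)$ from the valuation tape, make the move that brings $F[\ade x H(x)]$ down to $F[H(s)]$ (i.e.\ specifies $x$ as $c$), and then simulate ${\cal M}_1$ exactly as in the proof of Theorem \ref{sep10a}. Your added remarks on why boundedness of $e$ makes the initial move legal and on the identity $e[F[H(c)]]=e[F[H(s)]]$ are correct elaborations of points the paper leaves implicit.
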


\begin{idea} Very similar to the previous case. $\cal M$ should specify $x$ as (the value of) $s$, and then continue playing as the machine that solves the premise.  
\end{idea}

\begin{proof} Let ${\cal M}_1$ be an arbitrary HPM (a purported polynomial time solution for the premise). We let ${\cal M}$ (the will-be polynomial time solution for the conclusion) be the machine that, with a valuation $e$ spelled on its valuation tape, works as follows. At the beginning,  ${\cal M}$ makes the move that brings $F[\ade x H(x)]$ down to $F[H(s)]$. For instance, if $F[\ade x H(x)]$ is $X\mlc(Y\mld \ade x Z(x))$ and $F[H(s)]$ is $X\mlc (Y\mld Z(s))$, then $1.1.c$ is such a move, where  $c= e(s)$ (the machine will have to read $c$ from its valuation tape).  After this move, ${\cal M}$ starts simulating and mimicking ${\cal M}_1$ in the same fashion as in the proof of Theorem \ref{sep10a}. And, again, as long as ${\cal M}_1$ wins $F[H(s)]$ in polynomial time, ${\cal M}$ wins $F[\ade x H(x)]$ in polynomial time. 
\end{proof}

\subsection{Wait}\label{iwait} Wait is the following rule:
\[\frac{\elz{F}\hspace{25pt} F_1\hspace{25pt}\ldots\hspace{25pt}F_n}{F}\]
(remember that $\elz{F}$ means the elementarization of $F$), where $n\geq 0$  and the following two conditions are satisfied:
\begin{enumerate}
\item Whenever $F$ has the form $X[Y_0\adc\ldots\adc Y_m]$, each formula $ X[Y_i]$ ($0\leq i\leq m$) is among $F_1,\ldots,F_n$.
\item  Whenever $F$ has the form $ X[\ada xY(x)]$, for some variable $s$ different from $\bound$ and not occurring in $F$, the formula  $ X[Y(s)]$ is among  $F_1,\ldots,F_n$. Here  $Y(s)$ is the result of replacing by $s$ all free occurrences of $x$ in $Y(x)$ (rather than vice versa).
\end{enumerate}

Whenever the above relation holds, we say that $\elz{F}$ is the {\bf special Wait-premise} of $F$, and that $F_1,\ldots,F_n$ are {\bf ordinary Wait-premises}\label{iwaitpremise} of $F$.

The following lemma, on which we are going to rely in this subsection, can be verified by a straightforward induction on the complexity of $F$, which we omit. Remember that $\seq{}$ stands for the empty run.

\begin{lemma}\label{new1}
For any formula $F$, interpretation $^*$ and valuation $e$, $\win{F^*}{e}\emptyrun= \win{\elzi{F}^*}{e}\emptyrun$.
\end{lemma}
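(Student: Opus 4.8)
The plan is to prove Lemma~\ref{new1} by structural induction on the formula $F$. The statement asserts that a formula and its elementarization $\elz{F}$ agree on who wins the empty run under any interpretation and valuation. The key observation driving the induction is that $\elz{F}$ is built by replacing every \emph{surface} occurrence of a $\add$- or $\ade$-subformula by $\tlg$, and every surface occurrence of a $\adc$- or $\ada$-subformula by $\twg$. Since the empty run makes no moves, the winner of $\win{F^*}{e}\emptyrun$ is determined entirely by the parallel/classical structure of $F$ together with what the choice-subformulas contribute as automatic winners or losers when no choice is ever made. Precisely, by Definition~\ref{op}, a $\add$- or $\ade$-game with no moves made is lost by $\pp$ (just like $\tlg$), and a $\adc$- or $\ada$-game with no moves made is won by $\pp$ (just like $\twg$). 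This is exactly why the elementarization substitutes those constants, and it is the heart of the matter.

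First I would set up the induction on the complexity of $F$, treating the base cases and the inductive cases for each top-level connective. For the base cases where $F$ is an atom, $\tlg$, or $\twg$, we have $\elz{F}= F$ (no surface choice-subformulas to replace), so the claim is immediate. The interesting split is in the inductive step. When the principal connective of $F$ is a \emph{choice} operator --- i.e., $F$ is $E_1\add\ldots\add E_n$, $E_1\adc\ldots\adc E_n$, $\ade xE$, or $\ada xE$ --- the \emph{entire} formula is a surface occurrence of a choice-subformula, so $\elz{F}$ is simply $\tlg$ (in the $\add,\ade$ cases) or $\twg$ (in the $\adc,\ada$ cases). Here I would invoke Definition~\ref{op} clauses~2 and~4 together with Definition~\ref{bq} and Definition~\ref{op5}: $\win{(E_1\add\ldots\add E_n)^*}{e}\emptyrun=\oo=\win{\tlg}{e}\emptyrun$, and symmetrically $\win{(E_1\adc\ldots\adc E_n)^*}{e}\emptyrun=\pp=\win{\twg}{e}\emptyrun$; the bounded-quantifier cases reduce to these by unfolding $\ada^\bound,\ade^\bound$ into finite $\adc,\add$-combinations. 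This directly matches the substitution made by the elementarization, so the equality of empty-run winners holds.

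When the principal connective is a \emph{parallel/classical} operator --- $F$ is $\gneg E$, $E_1\mlc\ldots\mlc E_n$, $E_1\mld\ldots\mld E_n$, $\cla xE$, or $\cle xE$ --- the elementarization commutes with that connective, because the surface occurrences of choice-subformulas inside $F$ are exactly the surface occurrences inside the immediate subformulas. Thus $\elz{\gneg E}=\gneg\elz{E}$, $\elz{E_1\mlc\ldots\mlc E_n}=\elz{E_1}\mlc\ldots\mlc\elz{E_n}$, and similarly for $\mld,\cla,\cle$. For each such connective, Definition~\ref{op} (and Definition~\ref{op5} for the blind quantifiers) determines $\win{F^*}{e}\emptyrun$ as a fixed Boolean/quantificational combination of the empty-run winners of the components; for instance $\win{(E_1\mlc\ldots\mlc E_n)^*}{e}\emptyrun=\pp$ iff each $\win{E_i^*}{e}\emptyrun=\pp$. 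Applying the induction hypothesis $\win{E_i^*}{e}\emptyrun=\win{\elzi{E_i}^*}{e}\emptyrun$ componentwise, and noting that the same combination computes $\win{\elzi{F}^*}{e}\emptyrun$, yields the desired equality.

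The step I expect to require the most care is the \emph{blind quantifier} case $\cla xE$ (and dually $\cle xE$), since its empty-run winner is defined via a quantification over all constants $c$: $\win{(\cla xE)^*}{e}\emptyrun=\pp$ iff $\win{(E(c))^*}{e}\emptyrun=\pp$ for every constant $c$. Here I would need that elementarization is compatible with substitution, i.e.\ $\elz{E(c)}$ relates appropriately to $(\elz{E})(c)$, so that the induction hypothesis applied at each instance $E(c)$ can be aggregated. Because substituting a constant for a free variable does not create or destroy surface choice-occurrences, one checks that elementarization commutes with this substitution, and the infinite conjunction/disjunction over $c$ is preserved under the induction hypothesis. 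All other cases are routine once the commutation of $\elz{\cdot}$ with parallel connectives is recorded, which is why the author flags this as a straightforward induction that can be omitted.
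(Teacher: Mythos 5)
Your proof is correct and is precisely the ``straightforward induction on the complexity of $F$'' that the paper itself invokes and omits: the choice cases match the constants $\twg,\tlg$ on the empty run by Definition \ref{op}, and elementarization commutes with the parallel and blind operators so the induction hypothesis applies componentwise. The only cosmetic remark is that since $\gneg$ officially occurs only on atoms, the negation case is really a base case rather than an inductive one.
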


\begin{theorem}\label{sep10c}
Wait is uniform-constructively sound.  
\end{theorem}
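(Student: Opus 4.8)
The plan is to realize the conclusion $F$ by the obvious ``wait-and-dispatch'' strategy, which mirrors the two preceding proofs but with the roles of the two players exchanged at the opening move. Given purported polynomial time solutions $\mathcal{M}_0$ for the special premise $\elz{F}$ and $\mathcal{M}_1,\ldots,\mathcal{M}_n$ for the ordinary premises $F_1,\ldots,F_n$, I would let $\mathcal{M}$ begin by doing nothing: it makes no moves of its own and simply watches its run tape. If the environment also never moves, the generated run is $\emptyrun$, and here $\mathcal{M}$ wins for free, because by Lemma \ref{new1} we have $\win{F^*}{e}\emptyrun=\win{\elz{F}^*}{e}\emptyrun$, and the right-hand side is $\pp$ since $\mathcal{M}_0\models^{P}\elz{F}^*$ forces the elementary game $\elz{F}^*$ to be true at every bounded $e$. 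Note that $\mathcal{M}_0$ itself is never simulated; only the truth of $\elz{F}$ is used.

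The heart of the argument is the case where the environment does make a move $\oo\beta$. First I would record the structural fact that the only legal initial $\oo$-moves in $F^*$ are choices inside surface occurrences of $\adc$- and $\ada$-subformulas: the parallel connectives, $\cla$, $\cle$ and (atom-only) $\gneg$ do not block surfaceness, while the surface $\add$- and $\ade$-subformulas are $\pp$'s to move in, so an $\oo$-attempt there is illegal. If $\oo\beta$ is illegal, the run is $\oo$-illegal and hence $\pp$-won, so $\mathcal{M}$ may keep idling and still win. If $\oo\beta$ is legal, then by the defining clauses for the choice operators (Definitions \ref{op} and \ref{bq}) it brings $F$ down either to some $X[Y_i]$ (when $\beta$ selects the $i$-th component of a surface subformula $X[Y_0\adc\ldots\adc Y_m]$) or to $X[Y(c)]$ (when $\beta$ selects a constant $c$ for the variable of a surface subformula $X[\ada xY(x)]$). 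By conditions 1 and 2 of the Wait rule, the game in the first case is literally one of the ordinary premises, and in the second case it is the instance of the ordinary premise $X[Y(s)]$ obtained under the valuation $e'$ that sends the fresh $s$ to $c$ and agrees with $e$ elsewhere. Here lies the one genuinely new point, and the step I expect to require the most care: I must check that the winning guarantee of the corresponding premise machine actually transfers, i.e.\ that $e'$ is again a \emph{bounded} valuation. This holds precisely because the quantifier is the bounded $\ada^{\bound}$: the chosen $c$ satisfies $|c|\mleq e(\bound)=e'(\bound)$, so $e'$ remains bounded, and $\mathcal{M}_j$'s polynomial-time-on-bounded-valuations promise is available on $e'$. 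Were the quantifier unbounded, $\oo$ could pick an enormous $c$ and this transfer would fail.

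Having identified the premise $F_j$ reached, $\mathcal{M}$ then ``turns itself into $\mathcal{M}_j$'' exactly as in the proofs of Theorems \ref{sep10a} and \ref{sep10b}: it simulates $\mathcal{M}_j$, feeding it the actual run tape with the leading labmove $\oo\beta$ deleted and, in the $\ada$-case, feeding the value $c$ (read off the run tape) in place of the value of $s$ whenever the simulated $\mathcal{M}_j$ consults $s$ on its valuation tape, thereby effectively running $\mathcal{M}_j$ on $e'$. Every subsequent move of either player is a move inside $F_j$, and since $\mathcal{M}_j$ is a winning strategy for $F_j$ it copes with all of them; thus the run $\seq{\oo\beta,\Gamma}$ generated by $\mathcal{M}$ is $\pp$-won in $F^*$ because $\Gamma$ is an $\mathcal{M}_j$-generated, hence $\pp$-won, run of $F_j^*$.

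Finally I would discharge the two remaining obligations. Effectiveness and uniformity are immediate: the construction of $\mathcal{M}$ reads off from the rule instance a finite table matching each possible opening $\oo\beta$ to its premise index $j$, and nowhere inspects the interpretation $^*$. For the time bound, the initial idle period (however long) is charged to $\oo$ as the thinking period of $\oo\beta$ and so does not enter $\pp$'s time; after dispatch, $\mathcal{M}$'s $\pp$-time is the $\pp$-time of $\mathcal{M}_j$ inflated by the polynomial simulation overhead already analyzed for Theorem \ref{sep10a}. Taking $h$ to dominate all of the finitely many premise polynomials together with this overhead (and to be at least $1$, covering the empty-run case) shows that $\mathcal{M}$ runs in polynomial time whenever all the $\mathcal{M}_i$ do, which completes the plan.
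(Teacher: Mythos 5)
Your proposal is correct and follows essentially the same route as the paper's own proof: wait for the environment's first move, win the empty run via Lemma \ref{new1} and the truth of $\elz{F}$, and otherwise dispatch to the premise machine ${\cal M}_j$ simulated on the adjusted valuation (the paper's $g$, your $e'$), with the waiting period charged to $\oo$. Your extra remarks on why $e'$ stays bounded and why only the truth of $\elz{F}$ (not a simulation of ${\cal M}_0$) is used are accurate elaborations of steps the paper treats more briefly.
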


\begin{idea} $\cal M$ should wait (hence the name ``Wait'' for the rule) until the adversary makes a move. If this never happens,  in view of the presence of the premise $\elz{F}$, a win for $\cal M$ is guaranteed by Lemma \ref{new1}. Otherwise, any (legal) move by the adversary essentially brings the conclusion down to one of  the premises 
$F_1,\ldots,F_n$; then $\cal M$ continues playing as the machine that wins that premise.
\end{idea}

\begin{proof} Assume ${\cal M}_0,{\cal M}_1,\ldots,{\cal M}_n$ are polynomial time solutions for $\elz{F},F_1,\ldots,F_n$, respectively. We  
 let ${\cal M}$, the will-be solution for $F$, whose construction does not depend on the just-made assumption, be a machine that, with a bounded valuation $e$ spelled on its valuation tape,  works as follows. 

 At the beginning,  ${\cal M}$ keeps waiting until the environment makes a move. If such a move is never made, then the run that is generated is empty.  Since  $\elz{F}$ is elementary and ${\cal M}_0$ wins it,   it is classically true (a false elementary game would be automatically lost by any machine).   But then, in view of Lemma \ref{new1}, $\cal M$  wins (the empty run of) $F$. And, note, $\pp$'s time in this case is $0$. 

 Suppose now the environment makes a move. Note that the time during which the machine was waiting does not contribute anything to $\pp$'s time. We may assume that the  move made by the environment is legal, or else the machine immediately wins. With a little thought, one can see that any legal move $\alpha$ by the environment brings the game $e[F]$ down to $g[F_i]$ for a certain bounded valuation $g$ --- with $g(\bound)= e(\bound)$ --- and one of the premises $F_i$ of the rule. For example, if $F$ is $(X\adc Y)\mld  \ada x Z(x)$, then a legal move $\alpha$ by the environment should be either $0.0$ or $0.1$ or $1.c$ for some constant $c$ (of size $\mleq e(\bound)$). In the case $\alpha=0.0$, the above-mentioned premise $F_i$ will be $X \mld  \ada x Z(x)$, and $g$ will be the same as $e$.  In the case $\alpha=0.1$,   $F_i$ will be $Y \mld  \ada x Z(x)$, and $g$, again, will be the same as $e$. Finally, in the case $\alpha=1.c$,   $F_i$ will be $(X\adc Y) \mld Z(s)$  for a variable $s$ different from $\bound$ and not occurring in $F$, and   $g$ will be the valuation that sends $s$ to $c$ and agrees with $e$ on all other variables, so that $g[(X\adc Y) \mld Z(s)]$ is $e[(X\adc Y) \mld Z(c)]$, with the latter being the game to which $e[F]$ is brought down by the labmove $\oo 1.c$. 

After the above event, ${\cal M}$ starts simulating and mimicking the machine ${\cal M}_i$  in the same fashion as in the proofs of Theorems \ref{sep10a} and \ref{sep10b}, with the only difference that,  if $g\not=e$, the imaginary valuation tape of the simulated machine now spells $g$ rather than $e$. 

As in the earlier proofs, it can be seen that $\cal M$, constructed as above, is a polynomial time solution for $F$. 
\end{proof}

\subsection{Modus Ponens (MP)}\label{imp} Modus Ponens is the following rule:
\[\frac{F_0\hspace{25pt}\ldots\hspace{25pt}F_n\hspace{25pt}F_0\mlc\ldots\mlc F_n \mli F}{F},\]
where $n\geq 0$.

\begin{theorem}\label{april22}
Modus Ponens is uniform-constructively sound.  
\end{theorem}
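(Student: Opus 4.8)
The plan is to read Modus Ponens as a ``cut'' and to solve the conclusion $F$ by running the $n+2$ given machines in parallel, routing moves between them so that the resource machines $\mathcal{M}_0,\ldots,\mathcal{M}_n$ supply exactly the answers that the last-premise machine needs from the antecedent. Write $G$ for the last premise $F_0\mlc\ldots\mlc F_n\mli F$, which by definition is $(\gneg F_0\mld\ldots\mld\gneg F_n)\mld F$, and let $\mathcal{N}$ be the given (purported polynomial time) solution for $G$. I would let $\mathcal{M}$ be the machine that, on a bounded valuation $e$, simultaneously simulates $\mathcal{N}$ playing $G$ and each $\mathcal{M}_i$ playing $F_i$ --- all on the same $e$, since this rule involves no change of variables --- maintaining the simulated tapes from its own tapes exactly in the manner of the proofs of Theorems \ref{sep10a}--\ref{sep10c}.

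Next I would specify the routing of moves, which is the heart of the construction. A move by the real environment in the play of $F$ is fed to the simulated $\mathcal{N}$ as an $\oo$-move in the consequent of $G$. Conversely, whenever the simulated $\mathcal{N}$ moves in the consequent $F$ of $G$, $\mathcal{M}$ makes that very move in the real play. When the simulated $\mathcal{N}$ moves in an antecedent disjunct $\gneg F_i$ --- which is $\mathcal{N}$ acting in the role of $\oo$ against $F_i$ --- that move is fed to $\mathcal{M}_i$ as an $\oo$-move in $F_i$; symmetrically, whenever $\mathcal{M}_i$ replies with a $\pp$-move in $F_i$, it is fed back to $\mathcal{N}$ as an $\oo$-move in the disjunct $\gneg F_i$. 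Thus $\mathcal{M}$'s own real moves are precisely $\mathcal{N}$'s consequent moves, while all antecedent traffic stays internal.

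For correctness, consider a completed play: let $\Psi$ be the run of $G$ generated by the simulated $\mathcal{N}$, decomposing into antecedent runs $\Psi_0,\ldots,\Psi_n$ of the $\gneg F_i$ and a consequent run $\Theta$ of $F$, and let $\Phi_i$ be the run of $F_i$ generated by $\mathcal{M}_i$. By the routing, $\Theta$ is exactly the run $\mathcal{M}$ generates in the real play of $F$, and each $\Psi_i$ is the role-swapped copy of $\Phi_i$, so that by the definition of $\gneg$ we have $\win{\gneg F_i}{e}\seq{\Psi_i}=\pp$ iff $\win{F_i}{e}\seq{\Phi_i}=\oo$. Since $\mathcal{M}_i$ wins $F_i$, we have $\win{F_i}{e}\seq{\Phi_i}=\pp$, hence $\win{\gneg F_i}{e}\seq{\Psi_i}=\oo$ for every $i$; that is, $\pp$ wins none of the antecedent disjuncts of $G$. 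On the other hand $\mathcal{N}$ wins $G$, so $\win{G}{e}\seq{\Psi}=\pp$, and because $G$ is a $\mld$-disjunction, by the definition of $\mld$ this forces $\pp$ to win one of the disjuncts $\gneg F_0,\ldots,\gneg F_n,F$. As no antecedent disjunct is $\pp$-won, the consequent must be: $\win{F}{e}\seq{\Theta}=\pp$. Therefore $\mathcal{M}$ wins $F$ on $e$, and since $e$ and $^*$ were arbitrary this gives $\mathcal{M}\models^{P}F^{*}$ once the time bound is in place.

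The remaining and genuinely delicate task --- the main obstacle --- is the polynomial-time bound, where the \emph{parallel} (rather than sequential) composition makes the analysis subtler than in the earlier rules. Two facts keep it under control. First, all formulas are finite-depth, so the total number of moves (external and internal alike) is bounded by a constant depending only on $G$, whence the $\mathcal{N}$-versus-$\mathcal{M}_i$ feedback cannot loop, and all moves have size polynomial in $e(\bound)$. Second, $\mathcal{M}$ should schedule its simulations so that it actively computes only when advancing $\mathcal{N}$ toward its next move or running some $\mathcal{M}_i$ to answer a pending challenge, while any genuine idling for a real environment move is spent merely waiting. Under such a schedule, every stretch of real computation preceding one of $\mathcal{M}$'s own consequent moves is covered by a $\pp$-thinking-period of $\mathcal{N}$ together with finitely many $\pp$-thinking-periods of the $\mathcal{M}_i$, whereas unbounded waiting always precedes an $\oo$-move and is therefore not charged to $\pp$'s time. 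Consequently $\pp$'s time for $\mathcal{M}$ is bounded, up to polynomial simulation overhead, by the sum of $\pp$'s times of $\mathcal{N}$ and the $\mathcal{M}_i$, so $\mathcal{M}$ runs in polynomial time whenever they do. Since $\mathcal{M}$ is plainly built effectively from the instance and the given machines and makes no use of $^*$, this yields uniform-constructive soundness.
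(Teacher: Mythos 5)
Your proposal is correct and follows essentially the same route as the paper's proof: parallel simulation of $\mathcal{N}$ and the $\mathcal{M}_i$ with copycat routing between each $F_i$ and the corresponding antecedent conjunct, and between the real play of $F$ and the consequent, followed by the same won-disjunct argument and the same finite-depth/polynomial-overhead time analysis. No gaps to report.
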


\begin{idea} Together with  the real play of $F$, $\cal M$ plays an imaginary game for each of the premises, in which it mimics the machines that win those premises. In addition, it applies copycat between each premise $F_i$ and the corresponding conjunct of the antecedent of the rightmost premise, as well as between (the real) $F$ and the consequent of that  premise.  
\end{idea}

\begin{proof} 
Assume ${\cal M}_0,\ldots,{\cal M}_n$ and ${\cal N}$ are HPMs that win $F_0,\ldots,F_n$ and  $F_0\mlc\ldots\mlc F_n\mli F$ in polynomial time, respectively (as in the previous proofs, our construction of ${\cal M}$ does not depend on this assumption; only the to-be-made conclusion  ${\cal M}\models^{P} F$  does). For simplicity, below we reason under the assumption that $n\geq 1$. Extending our reasoning so as to also include the case $n= 0$ does not present a problem. 

 We let ${\cal M}$ be the following HPM. Its work on a valuation $e$ consists in simulating, in parallel, the machines ${\cal M}_0,\ldots,{\cal M}_{n},{\cal N}$ with  the same $e$ on their valuation tapes, and also continuously polling (in parallel with simulation) its own valuation tape to see if the environment has made a new move. These simulations proceed in the same fashion as in the proofs of the earlier theorems, with the only difference that now $\cal M$ actually maintains records of the contents of the imaginary run tapes of the simulated machines (in the proof of Theorem \ref{sep10a}, $\cal M$ was simply using its own run tape in the role of such a ``record'').

As before, we may assume that the environment does not make illegal moves, for then $\cal M$ immediately wins. We can also safely assume that the simulated machines do not make illegal moves, or else our assumptions about their winning the corresponding games would be wrong.\footnote{Since we  need to construct $\cal M$ no matter whether those assumptions are true or not, we can let $\cal M$ simply stop making any moves as soon as it detects some illegal behavior.}  If so, in the process of the above simulation-polling  routine, now and then, one of the following four types of events will be happening (or rather detected):

{\em Event 1}. ${\cal M}_i$ ($0\leq i\leq n$) makes a move $\alpha$. Then  ${\cal M}$ appends the labmove $\oo 0.i.\alpha$ at the end of the position spelled on the imaginary run tape of ${\cal N}$ in its simulation.\footnote{Here and later, of course, an implicit stipulation is that the position spelled on the imaginary run tape of the machine (${\cal M}_i$ in the present case) that made the move is also correspondingly updated (in the present case, by appending $\pp\alpha$ to it).} 

{\em Event 2}. ${\cal N}$ makes a move $0.i.\alpha$ ($0\leq i\leq n$). Then  ${\cal M}$ appends the labmove $\oo\alpha$ at the end of the imaginary run tape of ${\cal M}_i$ in its simulation.

{\em Event 3}. ${\cal N}$ makes a move $1.\alpha$. Then ${\cal M}$ makes the move $\alpha$ in the real play. 

{\em Event 4}. The environment makes a move $\alpha$ in the real play. Then ${\cal M}$ appends the labmove $\oo 1.\alpha$ at the end of the imaginary run tape of ${\cal N}$ in its simulation. 

What is going on here is that $\cal M$ applies copycat between $n+2$ pairs of (sub)games, real or imaginary. Namely, it mimics, in (the real play of) $F$,  $\cal N$'s moves made
in the consequent of  (the imaginary play of)  $F_0\mlc\ldots\mlc F_n \mli F$, and vice versa: uses (the real) environment's moves made (in the real play of) $F$ as (an imaginary) environment's moves in the consequent of $F_0\mlc\ldots\mlc F_n \mli F$.  Also, for each $i\in\{0,\ldots,n\}$, $\cal M$ uses the moves made by ${\cal M}_i$ in $F_i$ as environment's 
moves in the $F_i$ component of $F_0\mlc\ldots\mlc F_n \mli F$, and vice versa: uses the moves made by $\cal N$ in that component as environment's moves in $F_i$.  
Therefore, the final positions\footnote{Remember Convention \ref{poscon}.}  hit by the above imaginary and real plays will be
\[\mbox{$F'_0,\ \ldots,\ F'_n,\ F'_1\mlc\ldots\mlc F'_n\mli F'$ and $F'$}\]
for some $F'_0,\ldots,F'_n,F'$. Our assumption that the machines ${\cal M}_0,\ldots,{\cal M}_n$ and ${\cal N}$ win the games $F_0, \ldots, F_n$ and $F_1\mlc\ldots\mlc F_n\mli F$ implies that each $G\in\{F'_0,\ \ldots,\ F'_n,\ F'_1\mlc\ldots\mlc F'_n\mli F'\}$ is $\pp$-won, in the sense that $\win{G}{e}\seq{}=\pp$. It is then obvious that so should be  $F'$. Thus, the (real) play of $F$ brings it down to the $\pp$-won  $F'$, meaning that $\cal M$ wins $F$. 
 
With some thought, one can also see that ${\cal M}$ runs in polynomial time. The only reason why $\cal M$ may spend ``too much'' time thinking before making a move could be that it waited ``too long'' to see what move was made by one (or several) of the simulated machines. But this would not happen because, by our assumption, those machines run in polynomial time, so, whenever they make a move,  it never takes them ``too long'' to do so.     
\end{proof}

\section{Logic $\clthree$}\label{ss8}

Before we get to our version of formal arithmetic, it would not hurt to identify the (pure) logic on which it is based --- based in the same sense as the traditional Peano arithmetic is based on classical logic. This logic is $\clthree$. With minor technical differences not worth our attention and not warranting a new name for the logic, our present version of $\clthree$ is the same as the same-name logic introduced and studied in \cite{Japtcs}.\footnote{In fact, an essentially the same logic, under the name {\bf L}, was already known as early as in   \cite{Jap02}, where it emerged in the related yet quite different context of the {\em Logic of Tasks}.}   
 
The {\bf language} of $\clthree$ has already been described in Section \ref{ss6}. 

The {\bf axioms} of this system are all classically valid elementary formulas. 
Here by classical validity, in view of G\"{o}del's completeness theorem,  we mean provability in classical first-order calculus. Specifically, in  
classical first-order calculus with  function letters and $\equals $, where $\equals $ is treated as the logical {\em identity} predicate (so that, say, $x\equals x$, $x\equals y\mli (E(x)\mli E(y))$, etc. are valid/provable).
 
As for the {\bf rules of inference} of $\clthree$, they are the $\add$-Choose, $\ade$-Choose and Wait rules of Section \ref{ss7}. As will be easily seen from the forthcoming soundness and completeness theorem for $\clthree$ (in conjunction with Theorem \ref{april22}), $\clthree$ is closed under Modus Ponens.  So, there is no need for officially including it among the rules of inference, doing which would  destroy the otherwise analytic property of the system.
  
A {\bf $\clthree$-proof} of a formula $F$ is a sequence $E_1,\ldots,E_n$ of formulas, with $E_n = F$, such that each $E_i$ is either an axiom or follows from some earlier formulas of the sequence by one of the rules of $\clthree$.
When a $\clthree$-proof of $F$ exists, we say that $F$ is {\bf provable} in $\clthree$, and write $\clthree\vdash F$. Otherwise we write $\clthree\not\vdash F$.  Similarly for any other formal systems as well.

\begin{example}\label{j28a}
The formula $\cla xp(x)\mli\ada xp(x)$ is provable in $\clthree$. It follows by Wait from the axioms $\cla xp(x)\mli\twg$ (special Wait-premise) and $\cla xp(x)\mli p(s)$ (ordinary Wait-premise). 

On the other hand, the formula $\ada xp(x)\mli\cla xp(x)$, i.e. $\ade x\gneg p(x)\mld \cla xp(x)$, in not provable. Indeed, this formula  has no $\add$-Choose-premises because it does not contain $\add$. Its elementarization (special Wait-premise) is $\tlg\mld \cla xp(x)$ which is not an axiom nor the conclusion of any rules. Hence $\ade x\gneg p(x)\mld \cla xp(x)$ cannot be derived by Wait, either. This leaves us with $\ade$-Choose. But if $\ade x\gneg p(x)\mld \cla xp(x)$ is derived  by $\ade$-Choose, then the premise should be $\gneg p(s)\mld \cla xp(x)$ for some variable $s$. The latter, however, is neither an axiom nor the conclusion of any of the three rules of $\clthree$. 
\end{example}

\begin{example}\label{j28b}
The formula $\ada x\ade y\bigl(p(x)\mli p(y)\bigr)$, whose elementarization is $\twg$, is provable in $\clthree$ as follows:\vspace{7pt}

\noindent 1. $\begin{array}{l}
\twg
\end{array}$  \ \ Axiom\vspace{3pt}

\noindent 2. $\begin{array}{l}
p(s)\mli p(s)
\end{array}$  \ \ Axiom\vspace{3pt}

\noindent 3. $\begin{array}{l}
\ade y\bigl(p(s)\mli p(y)\bigr)
\end{array}$  \ \ $\ade$-Choose: 2\vspace{3pt}

\noindent 4. $\begin{array}{l}
\ada x\ade y\bigl(p(x)\mli p(y)\bigr)
\end{array}$  \ \ Wait: 1,3\vspace{7pt}

On the other hand, the formula $\ade y\ada x \bigl(p(x)\mli p(y)\bigr)$ can be seen to be unprovable, even though its classical counterpart $\cle y\cla x \bigl(p(x)\mli p(y)\bigr)$ is an axiom and hence provable. 
\end{example}

\begin{example}\label{j28c}
While the formula $\cle x \bigl(x\equals f(s)\bigr)$  is classically valid and hence provable in $\clthree$, its constructive counterpart 
$\ade x \bigl(x\equals f(s)\bigr)$ can be easily seen to  be unprovable. This is no surprise. In view of the expected soundness of $\clthree$,  provability  of $\ade x \bigl(x\equals f(s)\bigr)$ would imply that every function $f$ is computable (worse yet: efficiently computable), which, of course, is not the case.     
\end{example}

\begin{exercise}\label{feb1a}
To see the resource-consciousness of $\clthree$, show that it does not prove $p\adc q\mli (p\adc q)\mlc (p\adc q)$, even though this formula has the form $F\mli F\mlc F$ of a classical tautology.
\end{exercise}

\begin{theorem}\label{main}
$\clthree\vdash X$ iff $X$ has a uniform polynomial time solution (any formula $X$).
Furthermore:

{\bf Uniform-constructive soundness:}\label{iucsl} There is an effective procedure that takes any $\clthree$-proof of any formula $X$ and 
constructs a uniform polynomial time solution for $X$.  

{\bf Completeness:} If $\clthree\not\vdash X$, then, for any  HPM $\cal M$, there is an interpretation $^*$ such that $\cal M$ does not win $X^*$ (let alone winning in polynomial time).  
\end{theorem}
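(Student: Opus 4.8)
The plan is to prove the ``only if'' and ``if'' directions separately, reusing the soundness results already in hand for the former and a game-theoretic construction for the latter. For the ``only if'' direction, together with the displayed Uniform-constructive soundness claim, I would induct on the length of a $\clthree$-proof of $X$, extracting a uniform polynomial time solution step by step. The base case is an axiom, i.e.\ a classically valid elementary formula $F$: under any interpretation $^*$ the structure determined by $^*$ (with $\equals^*$ a congruence, equivalent for our purposes to genuine identity, as remarked after the definition of interpretation) is a classical model, so $F^*$ is a true elementary game, won in constant time by the do-nothing HPM --- and this machine is manifestly uniform in $^*$. The inductive step is handed to us directly by Theorems \ref{sep10a}, \ref{sep10b} and \ref{sep10c}, whose uniform-constructive soundness supplies effective procedures converting solutions for the premises of $\add$-Choose, $\ade$-Choose and Wait into a solution for the conclusion. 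Composing these procedures along the proof yields the required effective extraction of a uniform polynomial time solution for $X$.

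For completeness I would argue contrapositively: assuming $\clthree\not\vdash X$, I must defeat an arbitrary HPM $\cal M$ (which also secures the ``if'' half of the equivalence, since it shows that no machine uniformly solves an unprovable $X$). The crucial preliminary is to recast provability as a win in a finite proof-search game ${\cal G}(X)$ played on formulas, with $\pp$ as Prover and $\oo$ as Refuter. At an elementary $F$, Prover wins iff $F$ is classically valid. At a non-elementary $F$, Prover may resolve any surface $\add$- or $\ade$-occurrence (descending to the corresponding $\add$- or $\ade$-Choose-premise), or else ``wait'', which succeeds only if $\elz{F}$ is classically valid, in which case Refuter may resolve any surface $\adc$- or $\ada$-occurrence (descending to an ordinary Wait-premise). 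Reading off the three rules shows that $\clthree\vdash X$ iff Prover wins ${\cal G}(X)$; since formula complexity strictly decreases down every play, ${\cal G}(X)$ is finite and hence determined, so $\clthree\not\vdash X$ hands Refuter a winning strategy $\rho$.

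Next I would fix a bounded valuation $e$ with $e(\bound)$ large and the free variables set to distinct constants, and let the environment play $e[X^*]$ against $\cal M$ by following $\rho$: it reads $\cal M$'s $\add/\ade$-moves as Prover-choices in ${\cal G}(X)$ (each, by the choice of $\rho$, landing in a node Refuter still wins) and, whenever $\cal M$ is idle, it advances the $\adc/\ada$-move prescribed by $\rho$, instantiating any fresh $\ada$-witness by a fresh constant of size $\mleq e(\bound)$. The decisive point is that the moves of $\cal M$ depend only on $e$ and the run, never on the interpretation, so this entire interaction --- and in particular the run $\Gamma$ it generates --- is fixed before any $^*$ is chosen. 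Finiteness of ${\cal G}(X)$ and the fact that $\rho$ is winning guarantee that, after finitely many moves, the play settles at a position $\hat{F}$ at which $\cal M$ makes no further moves and whose elementarization $\elz{\hat{F}}$ is not classically valid.

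Finally, since $\elz{\hat{F}}$ is a classically invalid elementary formula, classical completeness supplies a model over a countable domain falsifying it; I would realize this model as a perfect interpretation $^*$ on the universe $\{0,1,10,\ldots\}$, taking $\equals^*$ to be genuine identity, interpreting the function letters so that the finitely many terms occurring in $\hat{F}$ take the values the model demands, and interpreting the predicate letters to agree with the falsifying assignment on the relevant argument tuples (and arbitrarily elsewhere). Then $\win{(\elz{\hat{F}})^*}{e}\emptyrun=\oo$; since, after $\Gamma$, the game $X^*$ has been brought down to $\hat{F}^*$, we get $\win{X^*}{e}\seq{\Gamma}=\win{\hat{F}^*}{e}\emptyrun$, and the latter equals $\win{(\elz{\hat{F}})^*}{e}\emptyrun=\oo$ by Lemma \ref{new1}; thus $\Gamma$ is an $\oo$-won run generated by $\cal M$ on $e$, so $\cal M$ does not win $X^*$. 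I expect the main obstacle to lie precisely at the seam between the synchronous game ${\cal G}(X)$ and the asynchronous real play: one must check that the environment can always force descent to a node with false elementarization --- the delicate case being when $\cal M$ simply waits, where Lemma \ref{new1} is exactly what rescues the argument --- and that the fresh $\ada$-witnesses can be chosen without collisions, so that the classical counter-model is faithfully realizable by a single interpretation.
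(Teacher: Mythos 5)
Your proposal is correct and follows essentially the same route as the paper: soundness by induction on proof length via the uniform-constructive soundness of the three rules (Theorems \ref{sep10a}, \ref{sep10b}, \ref{sep10c}), and completeness by having the environment descend through unprovable Wait- and Choose-premises until the play stabilizes at a formula whose elementarization is classically invalid, at which point Lemma \ref{new1} together with a falsifying interpretation over a valuation assigning distinct constants to distinct free variables yields a loss for $\cal M$. Your packaging of the counterstrategy as a Refuter winning strategy in a finite, determined proof-search game is only a presentational variant of the paper's direct recursive description of the same environment behavior.
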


\begin{idea} The soundness of $\clthree$ was, in fact, already established in the preceding section. For completeness, assume $\clthree\not\vdash X$ and consider any HPM $\cal M$. If $\elz{X}$ is an axiom,  a smart environment can always make a move that brings $X$ down to an unprovable ordinary Wait-premise of 
$X$,  or else $X$ would be derivable by Wait; such a Wait-premise is less complex than $X$ and, by the induction hypothesis, $\cal M$ loses. If $\elz{X}$ is not an axiom, then it is false under a certain interpretation, and therefore $\cal M$ will have to make a move to avoid an automatic loss.  But any (legal) move by $\cal M$ brings $X$ down to an unprovable Choose-premise of it (or else $X$ would be derivable  by a Choose rule) and, by the induction hypothesis, $\cal M$ again loses.   
\end{idea}

\begin{proof} Modulo the results of Section \ref{ss7}, the soundness (``only if'') part of this theorem, in the strong ``uniform-constructive'' form, is straightforward. Formally this fact can be proven by induction on the lengths of   $\clthree$-proofs. All axioms of $\clthree$ are obviously ``solved'' by a machine that does nothing at all. Next, as an induction hypothesis, assume $X_1,\ldots,X_n$ are $\clthree$-provable formulas, ${\cal M}_1,\ldots,{\cal M}_n$ are uniform polynomial time solutions for them, and $X$ follows from those formulas by one of the rules of $\clthree$. Then, as immediately implied by the results of  Section \ref{ss7}, we can effectively construct a uniform polynomial time solution ${\cal M}$ for $X$. 

The rest of this proof will be devoted to the completeness (``if'') part of the theorem.   

 Consider an arbitrary formula $X$ with $\clthree\not\vdash X$, and an arbitrary HPM $\cal M$.  Here we describe a scenario of the environment's behavior in interaction with $\cal M$ --- call this ``behavior'' the {\em counterstrategy} --- that makes $\cal M$ lose $F^*$ on $e$ for a certain appropriately selected interpretation $^*$ and a certain appropriately selected bounded valuation $e$    even if the time of $\cal M$ is not limited at all.   

For a formula $Y$ and valuation $g$, we say that $g$ is {\bf $Y$-distinctive}\label{idistinctive} iff $g$ assigns different values to different free variables of $Y$. 
We select $e$ to be an $X$-distinctive bounded valuation. Here we let $e(\bound)$ be ``sufficiently large'' to allow certain flexibilities needed below. 
 
How our counterstrategy acts depends on the current game (formula, ``position'')    $Y$ to which the original game   $X$ has been brought down by the present time in the play. Initially,  $Y$ is $X$.  

As an induction hypothesis, we assume that $\clthree\not\vdash Y$ and  $e$ is $Y$-distinctive.
We separately consider the following two cases.

{\em Case 1:} $\elz{Y}$ is classically valid. Then there should be a $\clthree$-unprovable formula $Z$ --- an ordinary Wait-premise of $Y$ --- satisfying the  conditions of one of the following two subcases, for otherwise $Y$ would be derivable 
by Wait. Our counterstrategy selects one such $Z$ (say, lexicographically the smallest one), and acts according to the corresponding prescription as given below. 

{\em Subcase 1.1:} $Y$ has the form $F[G_0\adc\ldots\adc G_m]$, and $Z$ is $ F[G_i]$ ($i\in\{0,\ldots,m\}$). In this case, the counterstrategy makes the move that brings $Y$ down to $Z$, and calls itself on $Z$  in the role of the ``current'' formula $Y$. 
 
{\em Subcase 1.2:} $Y$ has the form $ F[\ada xG(x)]$, and $Z$ is $F[G(s)]$, where $s$ is a variable different from $\bound$ and not occurring in $Y$. We may assume here that $e(s)$ is different from any $e(r)$ where $r$ is any other ($\not=s$) free variable of $Y$. Thus, $e$ remains a $Z$-distinctive valuation. In this case, our counterstrategy makes the move  that brings $Y$ down to $Z$ (such a move is the one that specifies the value of $x$ as $e(s)$ in the indicated occurrence of $\ada xG(x)$), and calls itself on $Z$ in the role of $Y$.  

{\em Case 2:} $\elz{Y}$ is not classically valid.   Then our counterstrategy inactively waits until $\cal M$ makes a move. 

{\em Subcase 2.1.} If such a move is never made, then the run that is generated is empty. Since $e$ is a $Y$-distinctive valuation, of course, it is also $\elz{Y}$-distinctive.  It is a common knowledge from classical logic that, whenever a formula $F$ is invalid (as is $\elz{Y}$ in our present case) and $e$ is an $F$-distinctive valuation, $e[F]$ is false in some model. So, $e[\elz{Y}]$ is false in/under some model/interpretation $^*$. This, in view of Lemma \ref{new1}, implies that $\win{Y^*}{e}\seq{}= \oo$ and hence $\cal M$ is the loser in the overall play of $X^*$ on $e$.

{\em Subcase 2.2.} Now suppose $\cal M$ makes a move. We may assume that such a move is legal, or else $\cal M$ immediately loses. With a little thought, one can see that any legal move $\alpha$ by $\cal M$ will bring the game down to $Z$ for a certain  formula $Z$ such that $Y$ follows from $Z$ by $\add$-Choose or $\ade$-Choose, and $e$ remains --- or, at least, can be safely assumed to remain --- $Z$-distinctive. But then, since $\clthree\not\vdash Y$,  we also have $\clthree\not\vdash Z$. In this case, our counterstrategy calls itself on $Z$ in the role of $Y$. 
 
It is clear that, sooner or later, the interaction will end according to the scenario of Subcase 2.1, in which case, as we observed, $\cal M$ will be the loser in the overall play of $X^*$ on $e$ for a certain interpretation $^*$. 
\end{proof}
   
\section{ {\bf CL4}, the metalogic  of  $\clthree$}\label{ss66}
In this section we present an auxiliary deductive system $\clfour$. Syntactically, it is a conservative extension of $\clthree$. Semantically, we treat $\clfour$ as a {\em metalogic} of $\clthree$, in the sense that the formulas of $\clfour$ are seen as schemata of $\clthree$-formulas, and the theorems of $\clfour$ as schemata of theorems of $\clthree$. System $\clfour$ --- in an unsubstantially different form --- was initially introduced and studied in \cite{Japtcs2} where, unlike our present treatment, it was seen as a logic (rather than metalogic) in its own rights, soundly and completely axiomatizing a more expressive fragment of computability logic than $\clthree$ does. Simplicity is the only reason why here we prefer to see $\clfour$  as just a metalogic. 

The language of $\clfour$ is obtained from that of $\clthree$ by adding to it nonlogical {\bf general letters}\label{igl2}, on top of the predicate letters of the language of $\clthree$ that in this new context, following the terminological tradition of computability logic,  we rename into {\bf elementary letters}.\label{iel2} We continue using the lowercase $p,q$ (possibly indexed) as metavariables for elementary letters, and will be using the uppercase $P,Q$ (possibly indexed) as metavariables for general letters.    Just as this is the case with the elementary letters, we have infinitely many $n$-ary general letters for each arity (natural number) $n$. In our present approach, the nonlogical elementary letters of the language of $\clfour$ will be seen as metavariables for elementary formulas of the language of  $\clthree$, the general letters of the language of $\clfour$ will be seen as metavariables for any, not-necessarily-elementary, formulas of the language of $\clthree$, and the function   letters of the language of $ \clfour$ will be seen as metavariables for terms of the language of $\clthree$. 

{\em Formulas} of the language of $\clfour$, to which we refer as {\bf $\clfour$-formulas},\label{icl4f} are built from atoms, terms, variables and operators in exactly the same way as $\clthree$-formulas are, with the only difference that now, along with the old {\bf elementary atoms}\label{ielat} --- atoms of the form $p(\tau_1,\ldots,\tau_n)$ where $p$ is an $n$-ary elementary letter and the $\tau_i$ are terms --- we also have {\bf general atoms},\label{igenat} which are of the form $P(\tau_1,\ldots,\tau_n)$, where $P$ is an $n$-ary general letter and the $\tau_i$ are terms. An  {\bf elementary literal}\label{iell} is $\twg$, $\tlg$, or an elementary atom with or without negation $\gneg$. And a {\bf general literal}\label{igenl} is a general atom with or without negation.  As before, we always assume that negation can only occur in literals; $\gneg$ applied to a non-atomic formula, as well as $\mli$, are treated as abbreviations.  The concepts of a surface occurrence, positive occurrence etc. straightforwardly extend from the language of $\clthree$ to the language of $\clfour$. 

We say that a $\clfour$-formula is {\bf elementary} iff it does not contain  general atoms and choice operators. Thus, ``elementary $\clfour$-formula'', ``elementary $\clthree$-formula'' and ``formula of classical logic'' mean the same. Note that we see the predicate letters of classical logic as elementary rather than general letters. 

The {\bf elementarization}\label{ielz2} $\elz{F}$ of a $\clfour$-formula $F$ is the result of replacing in it all surface occurrences of $\adc$- and $\ada$-subformulas by $\twg$, all   surface occurrences of $\add$- and $\ade$-subformulas by $\tlg$, and all positive surface occurrences of general literals by $\tlg$. 

$\clfour$ has exactly the same axioms as $\clthree$ does (all classically valid elementary formulas), and has four rules of inference. The first three rules are  nothing but the rules of $\add$-Choose, $\ade$-Choose and Wait of $\clthree$, only now applied to any $\clfour$-formulas rather than just $\clthree$-formulas. The additional, fourth rule, which we call {\bf Match},\label{imatch} is the following:
\[\frac{F[p(\vec{\tau}),\gneg p(\vec{\theta})]}{F[P(\vec{\tau}),\gneg P(\vec{\theta})]},\]
where $P$ is any $n$-ary general letter, $p$ is any $n$-ary nonlogical elementary letter not occurring in the conclusion, and $\vec{\tau}$,$\vec{\theta}$ are any $n$-tuples of terms; also, according to our earlier notational conventions,   $F[P(\vec{\tau}),\gneg P(\vec{\theta})]$ is a formula with two fixed positive occurrences of the literals $P(\vec{\tau})$ and $\gneg P(\vec{\theta})$, and $F[p(\vec{\tau}),\gneg p(\vec{\theta})]$ is the result of replacing in $F[P(\vec{\tau}),\gneg P(\vec{\theta})]$ the above two occurrences by  $p(\vec{\tau})$ and $\gneg p(\vec{\theta})$, respectively.\vspace{7pt}

It may help some readers to know that $\clfour$ is an  extension of additive-multiplicative affine logic (classical linear logic with weakening), with the letters of the latter understood as our general letters.  This fact is an immediate consequence of the earlier-known soundness of affine logic 
(proven in \cite{Japfin}) and completeness of  $\clfour$ (proven in \cite{Japtcs2})  with respect to the semantics of computability logic. As seen from the following example, the extension, however, is not conservative.

\begin{example}\label{j28d}
Below is a $\clfour$-proof of the formula $(P\mlc P)\mld (P\mlc P)\mli (P\mld P)\mlc (P\mld P)$. The latter was used by Blass \cite{Bla92} as an example of a game-semantically valid principle not provable in affine logic.\vspace{7pt}

\noindent 1. $\begin{array}{l}
(p_1\mlc p_2)\mld (p_3\mlc p_4)\mli (p_1\mld p_3)\mlc (p_2\mld p_4)
\end{array}$  \ \ Axiom\vspace{3pt}

\noindent 2. $\begin{array}{l}
(p_1\mlc p_2)\mld (p_3\mlc P)\mli (p_1\mld p_3)\mlc (p_2\mld P)
\end{array}$  \ \ Match: 1\vspace{3pt}

\noindent 3. $\begin{array}{l}
(p_1\mlc p_2)\mld (P\mlc P)\mli (p_1\mld P)\mlc (p_2\mld P)
\end{array}$  \ \ Match: 2\vspace{3pt}

\noindent 4. $\begin{array}{l}
(p_1\mlc P)\mld (P\mlc P)\mli (p_1\mld P)\mlc (P\mld P)
\end{array}$  \ \ Match: 3\vspace{3pt}

\noindent 5. $\begin{array}{l}
(P\mlc P)\mld (P\mlc P)\mli (P\mld P)\mlc (P\mld P)
\end{array}$  \ \ Match: 4\vspace{7pt}
\end{example}

\begin{example}\label{j28f}
In Example \ref{j28b} we saw a $\clthree$-proof of  $\ada x\ade y\bigl(p(x)\mli p(y)\bigr)$. The same proof, of course, is also a $\clfour$-proof. Below is a $\clfour$-proof of the stronger version of this formula where we have an uppercase rather than lowercase $P$:\vspace{7pt}

\noindent 1. $\begin{array}{l}
\twg
\end{array}$  Axiom \vspace{3pt}

\noindent 2. $\begin{array}{l}
p(s)\mli p(s)
\end{array}$  \ \ Axiom\vspace{3pt}

\noindent 3. $\begin{array}{l}
P(s)\mli P(s)
\end{array}$  \ \ Match: 2\vspace{3pt}

\noindent 4. $\begin{array}{l}
\ade y\bigl(P(s)\mli P(y)\bigr)
\end{array}$  \ \ $\ade$-Choose: 3\vspace{3pt}

\noindent 5. $\begin{array}{l}
\ada x\ade y\bigl(P(x)\mli P(y)\bigr)
\end{array}$  \ \ Wait: 1,4\vspace{7pt}
\end{example}

\begin{example}\label{j28g}
While $\clfour$ proves the elementary formula $p\mli p\mlc p$, it does not prove its general counterpart $P\mli P\mlc P$. Indeed,  $\elz{P\mli P\mlc P}=\twg\mli \tlg\mlc\tlg$ and hence, obviously, $P\mli P\mlc P$ cannot be derived by Wait. This formula cannot be derived by Choose rules either, because it contains no choice operators. Finally, if it is derived by Match, the premise should be $p\mli P\mlc p$ or $p\mli p\mlc P$. In either case, such a premise  cannot be proven, as it contains no choice operators and its elementarization is $p\mli\tlg\mlc p$ or $p\mli p\mlc \tlg$. 
\end{example}

Let $F$ be a $\clfour$-formula. A {\bf substitution}\label{isubstitution} for $F$  is a function $^\heartsuit$ that sends:
\begin{itemize}
\item each nonlogical $n$-ary elementary letter  $p$ of $F$ to an elementary $\clthree$-formula $p^\heartsuit(x_1,\ldots,x_n)$ --- with here and below $x_1,\ldots,x_n$ being a context-setting fixed $n$-tuple of pairwise distinct variables --- which does not contain any free variables that have bound occurrences in $F$;
\item each $n$-ary general letter $P$ of $F$ to an (elementary or nonelementary) $\clthree$- formula $P^\heartsuit(x_1,\ldots,x_n)$ which does not contain any free variables that have bound occurrences in $F$;
\item each $n$-ary  function symbol $f$ of $F$ to a term $f^\heartsuit(x_1,\ldots,x_n)$ which does not contain any  variables that have bound occurrences in $F$. 
\end{itemize}

The above uniquely extends to  a  mapping that sends each term $\tau$ of $F$ to a term $\tau^\heartsuit$, and each subformula $H$ of $F$ to a $\clthree$-formula $H^\heartsuit$ by stipulating that: 
\begin{enumerate}
\item $x^\heartsuit= x$ (any variable $x$). 
\item Where $f$ is an $n$-ary function symbol and $\tau_1,\ldots,\tau_n$ are terms, $\bigl(f(\tau_1,\ldots,\tau_n)\bigr)^\heartsuit= f^\heartsuit(\tau_{1}^{\heartsuit},\ldots,\tau_{n}^{\heartsuit})$. 
\item $(\tau_1\equals \tau_2)^\heartsuit$ is $\tau_{1}^{\heartsuit}\equals \tau_{2}^{\heartsuit}$. 
\item Where $\mathfrak{L}$ is an $n$-ary nonlogical elementary or general letter  and $\tau_1,\ldots,\tau_n$ are terms, $\bigl(\mathfrak{L}(\tau_1,\ldots,\tau_n)\bigr)^\heartsuit = \mathfrak{L}^\heartsuit(\tau_{1}^{\heartsuit},\ldots,\tau_{n}^{\heartsuit})$. 
\item $^{\heartsuit}$ commutes with all logical operators:  
\begin{itemize}
\item $\twg^\heartsuit = \twg$; 
\item $\tlg^\heartsuit = \tlg$; 
\item $(\gneg E)^{\heartsuit}= \gneg E^{\heartsuit}$; 
\item $(E_1\mlc\ldots\mlc E_n)^{\heartsuit} = E^{\heartsuit}_{1}\mlc \ldots\mlc E^{\heartsuit}_n$; 
\item $(E_1\mld\ldots\mld E_n)^{\heartsuit}= E^{\heartsuit}_{1}\mld \ldots\mld E^{\heartsuit}_{n}$; 
\item $(E_1\adc\ldots\adc E_n)^{\heartsuit} = E^{\heartsuit}_{1}\adc \ldots\adc E^{\heartsuit}_n$; 
\item $(E_1\add\ldots\add E_n)^{\heartsuit} = E^{\heartsuit}_{1}\add \ldots\add E^{\heartsuit}_{n}$;
\item $(\cla x E)^{\heartsuit} = \cla x(E^{\heartsuit})$; 
\item $(\cle x E)^{\heartsuit} = \cle x(E^{\heartsuit})$; 
\item $(\ada x E)^{\heartsuit} = \ada x(E^{\heartsuit})$; 
\item $(\ade x E)^{\heartsuit} = \ade x(E^{\heartsuit})$.
\end{itemize}
\end{enumerate}

We say that a $\clthree$-formula $E$ is an {\bf instance}\label{iinstance2} of a $\clfour$-formula $F$, or that $E$ {\bf matches} $F$, iff $E=F^\heartsuit$ for some substitution $^\heartsuit$ for $F$. 

\begin{theorem}\label{ccs}
A $\clfour$-formula is provable in $\clfour$ iff all of its instances are provable in $\clthree$.
\end{theorem}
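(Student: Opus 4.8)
The plan is to prove the two implications separately, moving freely between \clthree-provability and uniform polynomial time solvability by means of Theorem \ref{main}, which identifies the two. Throughout I write $F^\heartsuit$ for the instance of a \clfour-formula $F$ under a substitution $^\heartsuit$, and I will repeatedly exploit the fact that nonlogical letters get replaced into \emph{positive} surface contexts, so that the (standard) monotonicity property of positive occurrences --- replacing a positively occurring subgame by a more winnable one never hurts $\pp$ --- is available as a unifying tool; where needed I would re-derive it by a routine copycat.

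For the ``only if'' direction I would induct on the length of a \clfour-proof, showing that each rule preserves the property ``all instances are \clthree-provable''. Axioms are immediate: substituting elementary letters by elementary formulas and function letters by terms preserves classical validity, so every instance of an axiom is a \clthree-axiom. For $\add$-Choose and $\ade$-Choose the argument is direct, since $^\heartsuit$ commutes with all operators: an instance of the conclusion is a $\add$- (resp.\ $\ade$-) Choose-conclusion of the corresponding instance of the premise, and is therefore \clthree-provable whenever the latter is. The Wait case needs the first genuine care: the \clthree-elementarization $\elz{F^\heartsuit}$ of an instance need not coincide with the instance $(\elz{F})^\heartsuit$ of the \clfour-elementarization, because $^\heartsuit$ can expose choice operators hidden inside a general literal that \clfour-elementarization had collapsed to $\tlg$. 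I would resolve this by noting that general literals occur positively and are replaced by the least winnable game $\tlg$ in $\elz{F}$; hence uniform solvability of $(\elz{F})^\heartsuit$ together with solvability of the ordinary premises' instances yields, via the Wait-strategy of Theorem \ref{sep10c} combined with monotonicity, a uniform polynomial time solution for $F^\heartsuit$.

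The real obstacle is the Match rule. Here all instances of $F[p(\vec\tau),\gneg p(\vec\theta)]$ are assumed \clthree-provable, with $p$ a fresh elementary letter, and I must produce \clthree-proofs of the instances $F^\heartsuit[R(\vec\sigma),\gneg R(\vec\rho)]$, where the general letter $P$ has been sent to an arbitrary (possibly non-elementary) formula $R$ and $\vec\sigma,\vec\rho$ are the images of $\vec\tau,\vec\theta$. The idea is a copycat: the solution for the general instance simulates the uniform solution supplied for the premise with the two slots filled by a fresh elementary letter $\hat{p}$, plays the rest of $F^\heartsuit$ as that solution dictates, and synchronizes the two occurrences $R(\vec\sigma)$ and $\gneg R(\vec\rho)$ by mirroring the adversary's moves from one into the other. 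I expect to split on whether $\vec\sigma$ and $\vec\rho$ evaluate equally. If they do, the two occurrences are the same game played with opposite roles, the copycat guarantees that exactly one of them is won, and this reproduces one of the complementary truth-value patterns under which the premise's solution already wins. If they differ, the premise must be solvable for all four truth-value combinations of the two elementary slots, in particular with both slots false; since the slots occur positively, monotonicity then lets $\pp$ ignore the two $R$-subgames altogether and still win. Verifying that this copycat runs in polynomial time and that the case analysis is exhaustive is the delicate heart of the whole theorem.

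For the ``if'' direction I would argue the contrapositive and fix once and for all the canonical substitution $^{\heartsuit_0}$ sending each $n$-ary general letter $P$ to $p_P(x_1,\ldots,x_n)\adc q_P(x_1,\ldots,x_n)$ for fresh distinct elementary letters $p_P,q_P$, while fixing elementary and function letters. The point of the $\adc$-gadget is that it is ``genuinely a game'' and cannot be collapsed to an elementary formula, as witnessed by Example \ref{j28g} and Exercise \ref{feb1a}: $P\mli P\mlc P$ stays unprovable precisely because $p_P\adc q_P\mli(p_P\adc q_P)\mlc(p_P\adc q_P)$ is \clthree-unprovable. I would then establish the sharp statement $\clfour\vdash F$ iff $\clthree\vdash F^{\heartsuit_0}$ by lifting a \clthree-proof of $F^{\heartsuit_0}$ to a \clfour-proof of $F$: surface Choose- and Wait-steps on $F^{\heartsuit_0}$ correspond to the same steps on $F$, whereas the classical axioms that resolve a gadget against its negation are exactly what the Match rule absorbs when read back on $F$. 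Combined with the already-established soundness direction this gives the equivalence and hence the theorem. The main obstacle, in both directions, is thus the Match rule: getting the copycat/monotonicity argument right for soundness, and, dually, checking that the $\adc$-gadget faithfully reflects \clfour-unprovability as \clthree-unprovability in the lifting.
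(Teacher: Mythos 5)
Your ``only if'' (soundness) argument takes a genuinely different route from the paper's. The paper never leaves syntax: it introduces hybrid letters $P_q$, an auxiliary calculus $\clfour^\circ$, and the notion of a \emph{quasiinstance} (a TROW-premise of $F^\heartsuit$ in which only the subformulas originating from positive general literals have changed), and then simulates the $\clfour$-proof inside $\clthree$ by a double induction, realizing your copycat as its ``deductive counterpart'': every Wait step inside one matched slot is answered, bottom-up, by a Choose step in the other so that the two slots remain each other's negations. You instead detour through Theorem \ref{main}, building HPM strategies and using the completeness half of that theorem to return to $\clthree$-provability. Both routes are viable; your semantic analysis of Match (copycat, plus the observation that when the two term tuples denote distinct values the premise's fixed final position must be won with both slots false, so positivity finishes the job) is essentially the strategic content that the quasiinstance machinery encodes deductively. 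The price of your route is organizing the interleaving of copycat for matched pairs with ``ignoring'' of unmatched general slots across a whole proof tree, and verifying its polynomial-time bookkeeping; the paper's price is the quasiinstance bureaucracy. Note also that the paper handles your Wait-case mismatch between $\elz{F^\heartsuit}$ and $(\elz{F})^\heartsuit$ exactly through the secondary induction on quasiinstances, so your worry there is well placed.

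The ``if'' (completeness) direction is where your proposal breaks --- and note that the paper does not prove this half at all: it defers to Section 5 of \cite{Japtcs2}, where the counter-instance $\lceil F\rceil$ is constructed \emph{depending on} $F$. Your fixed gadget $P\mapsto p_P\adc q_P$ does not faithfully reflect $\clfour$-unprovability. Take $F=(P\mld P)\mli P$, i.e.\ $(\gneg P\mlc\gneg P)\mld P$. This is not provable in $\clfour$: its elementarization is $(\tlg\mlc\tlg)\mld\tlg$, which is not an axiom, no Choose rule applies, and after either possible application of Match one is left with $(\gneg p\mlc\gneg P)\mld p$, whose elementarization $(\gneg p\mlc\tlg)\mld p$ is not valid and to which no further rule applies. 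Yet $F^{\heartsuit_0}=\bigl((\gneg p\add\gneg q)\mlc(\gneg p\add\gneg q)\bigr)\mld(p\adc q)$ \emph{is} provable in $\clthree$: its elementarization $(\tlg\mlc\tlg)\mld\twg$ is classically valid, and the two ordinary Wait-premises $\bigl((\gneg p\add\gneg q)\mlc(\gneg p\add\gneg q)\bigr)\mld p$ and $\bigl((\gneg p\add\gneg q)\mlc(\gneg p\add\gneg q)\bigr)\mld q$ each follow by two applications of $\add$-Choose from the axioms $(\gneg p\mlc\gneg p)\mld p$ and $(\gneg q\mlc\gneg q)\mld q$, so $F^{\heartsuit_0}$ follows by Wait.

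Strategically, the failure is that a two-element gadget lets $\pp$ wait for $\oo$'s single choice in the consequent copy and then answer \emph{both} antecedent copies with the matching negated conjunct --- a hedge that is unavailable against a genuine general game, where the two antecedent copies can evolve independently. So the witnessing instance must grow with $F$ (which is exactly why $\lceil F\rceil$ in \cite{Japtcs2} is tailored to $F$), and your sharp claim that $\clfour\vdash F$ iff $\clthree\vdash F^{\heartsuit_0}$ is false as stated.
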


\begin{idea} The completeness part of this theorem is unnecessary for the purposes of the present paper, and its proof is omitted. For the soundness part, consider a  $\clfour$-provable formula $F$ and an arbitrary instance  $F^\heartsuit$ of it. We need to construct a $\clthree$-proof of $F^\heartsuit$. The idea here is to let such  
a proof  simulate the $\clfour$-proof of $F$. Speaking very roughly, simulating steps associated with $\add$-Choose, $\ade$-Choose and Wait is possible because these rules of $\clfour$ are also present in $\clthree$. As for the Match rule, it can be simulated by a certain ``deductive counterpart'' of the earlier seen copycat strategy. Namely, in the bottom-up view of the $\clthree$-proof under construction, every application of Wait that modifies a subformula originating from a matched (in the $\clfour$-proof) literal, should be followed by a symmetric application of $\add$-Choose or $\ade$-Choose in the subformula originating from the other matched literal  --- an application that evens out the two subformulas  so that one remains the negation of the other.   
\end{idea}

\begin{proof} Our proof will be focused on the soundness (``only if'') part of the theorem, as nothing in this paper relies on the completeness (``if'') part. 
We only want to point out that, essentially, the latter has been proven in Section 5 of \cite{Japtcs2}. Specifically, the proof of Lemma 5.1 of \cite{Japtcs2} proceeds by showing that, if $\clfour\not\vdash F$, then there is a $\clthree$-formula $\lceil F\rceil $ which is an instance of $F$ such that $\clthree\not\vdash\lceil F\rceil$. However, as noted earlier,  the logics under the names ``$\clthree$'' and ``$\clfour$''    are not exactly the same in \cite{Japtcs,Japtcs2} as they are here. Namely, \cite{Japtcs,Japtcs2} allowed constants in formulas while now we do not allow them. On the other hand, now we have  $\equals$ and function symbols in the language whereas the approach of \cite{Japtcs,Japtcs2} did not consider them, nor did it have the special-status variable $\bound$. Also, as we remember, in our present treatment $\ada,\ade$ mean $\ada^\bound,\ade^\bound$, whereas in \cite{Japtcs,Japtcs2} they meant properly $\ada,\ade$. Such technical differences, however, are minor, and have no impact on the relevant proofs. So, the above-mentioned proof from \cite{Japtcs2}, with just a few rather straightforward adjustments, goes through as a proof of the completeness part of the present theorem as well.

For the soundness part, we   extend  the language of $\clfour$ by adding to it a new sort of nonlogical letters   called {\bf hybrid}.\label{ihl} Each $n$-ary hybrid letter is  a pair $P_q$, where $P$ --- called its {\bf general component}\label{igc} --- is an $n$-ary general letter, and $q$ --- called its {\bf elementary component}\label{iec} --- is a nonlogical $n$-ary elementary letter. 
And vice versa: for every pair $(P,q)$ of letters of the above sort, we have an $n$-ary hybrid letter $P_q$. 
Formulas of this extended language,  to which we will be referring as {\bf hyperformulas},\label{ihf} are built in the same way as $\clfour$-formulas, with the difference that now atoms can be of any of the three --- elementary, general or hybrid --- sorts.  {\bf Surface occurrence}, (elementary, general, hybrid) {\bf literal} and similar concepts  straightforwardly extend  from $\clthree$- and $\clfour$-formulas to hyperformulas. Furthermore, concepts such as surface occurrence, positive occurrence, etc. extend from subformulas to parts of subformulas, such as letters occurring in them, in the obvious way. 

We say that a hyperformula $E$ is a {\bf $\clfour^\circ$-formula}\label{icl4cf}  iff, for every hybrid letter $P_q$ occurring in $E$, the following  conditions are satisfied:
\begin{enumerate}
\item $E$ has exactly two occurrences of $P_q$, where one occurrence is positive and the other occurrence is negative, and both occurrences are surface occurrences. We say that the corresponding two  literals --- where one looks like $P_q(\vec{\tau})$ and the other like $\gneg P_q(\vec{\theta})$ --- are {\bf matching}.\label{imatlit}
\item The elementary letter $q$ does not occur in $E$, nor is it the elementary component of any hybrid letter occurring in $E$ other than $P_q$. 
\end{enumerate}

 Of course, every $\clfour$-formula is also a $\clfour^\circ$-formula --- one with no hybrid letters.

The  {\bf elementarization}\label{ielz3}  \(\elz{E}\) of a $\clfour^\circ$-formula $E$ is the result of replacing, in $E$, each surface occurrence of the form $G_1\adc\ldots\adc G_n$ or $\ada x G$ by $\twg$, each surface occurrence of the form $G_1\add\ldots\add G_n$
or $\ade xG$  by  $\tlg$, every positive  surface occurrence of each general literal by $\tlg$, and every surface occurrence 
of each hybrid letter by the elementary component of that letter.  

We are going to employ a ``version'' of $\clfour$ called $\clfour^\circ$.\label{icl4c} Unlike $\clfour$ whose language consists only of $\clfour$-formulas, the language of $\clfour^\circ$ allows any $\clfour^\circ$-formulas. The axioms and rules of $\clfour^\circ$ are the same as those of  $\clfour$ ---  only, now applied to  any $\clfour^\circ$-formulas rather than just $\clfour$-formulas --- with the difference that the rule of Match is replaced by the following rule that we call {\bf Match$^\circ$}:\label{imatchc} 
\[\frac{F[P_q(\vec{\tau}),\gneg P_q(\vec{\theta})]}{F[P(\vec{\tau}),\gneg P(\vec{\theta})]},\]  
where $P$ is any $n$-ary general letter, $q$ is any $n$-ary elementary letter not occurring in the conclusion (neither independently nor as the elementary component of some hybrid letter), and $\vec{\tau}$,$\vec{\theta}$ are any $n$-tuples of terms.\vspace{7pt} 

{\bf Claim 1}.
{\em For any $\clfour$-formula $E$, \ if $\clfour\vdash E$, then $\clfour^\circ\vdash E$}.\vspace{7pt} 
 
\begin{subproof} The idea that underlies our proof of this claim is very simple: every application of Match naturally turns into an application of Match$^\circ$. 

Indeed,
consider any $\clfour$-proof of $E$. It can be  seen as a tree all of the leaves of which are labeled with axioms and every non-leaf node of which is   labeled with a formula that follows by one of the rules of $\clfour$ from (the labels of) its children, with $E$ being the label of the root. By abuse of terminology, here we identify the nodes of this tree with their labels, even though, of course, it may be the case that different nodes have the same label. For each node $G$ of the tree that is derived from its child $H$ by Match --- in particular, where $H$ is the result of replacing in $G$ a positive and a negative surface occurrences of an $n$-ary general letter $P$ by an $n$-ary nonlogical elementary letter $q$ --- do the following: replace $q$ by the hybrid letter $P_q$ in $H$ as well as in all of its descendants in the tree. It is not hard to see that this way we will get a 
$\clfour^\circ$-proof of $E$.  \end{subproof}

The concept of a {\bf substitution}\label{ish} $^\heartsuit$ for a $\clfour^\circ$-formula $E$, and the corresponding $\clthree$-formula $E^\heartsuit$, are defined in the same ways as for $\clfour$-formulas, treating each hybrid letter $P_q$ as a separate (not related to $P$ or any other $P_p$ with $p\not=q$) general letter. 

We say that a $\clthree$-formula $E$ is a {\bf TROW-premise}\label{itrow}  of a $\clthree$-formula $F$ (``TROW''=``Transitive Reflexive Ordinary Wait'') iff $E$ is $F$, or an ordinary Wait-premise of $F$, or an ordinary Wait-premise of an ordinary Wait-premise of $F$, or \ldots.

Let $E$ be a $\clfour^\circ$-formula with exactly $n$ positive surface occurrences of general literals, with those occurences being (not necessarily pairwise distinct literals) $G_1,\ldots,G_n$.  And let $^\heartsuit$ be a substitution for $E$. Then $E^\heartsuit$ can obviously be written as $H[G_{1}^{\heartsuit},\ldots,G_{n}^{\heartsuit}]$, where  $G_{1}^{\heartsuit},\ldots,G_{n}^{\heartsuit}$ are surface occurrences   originating from the occurrences of $G_{1},\ldots,G_{n}$ in $E$. Under these conditions, by a  {\bf $^\heartsuit$-quasiinstance}\label{iquasi} of $E$  we will mean any TROW-premise of $H[G_{1}^{\heartsuit},\ldots,G_{n}^{\heartsuit}]$ that can be written as $H[J_{1},\ldots,J_{n}]$. To summarize  in more intuitive terms, a $^\heartsuit$-quasiinstance of $E$ is a TROW-premise of $E^\heartsuit$ where all (if any) changes have taken place exclusively in subformulas ($G_{1}^{\heartsuit},\ldots,G_{n}^{\heartsuit}$) that originate from positive occurrences of general literals ($G_{1},\ldots,G_{n}$) in $E$. Of course, $E^\heartsuit$ is one of the $^\heartsuit$-quasiinstances of $E$.

By a (simply)  {\bf quasiinstance} of a  $\clfour^\circ$-formula $E$ we mean a $^\heartsuit$-quasiinstance of $E$ for some substitution $^\heartsuit$ for $E$. Note that every instance is a quasiinstance but not necessarily vice versa. \vspace{7pt} 

{\bf Claim 2}.
{\em For any $\clfour^\circ$-formula $E$,    if $\clfour^\circ\vdash E$, then every quasiinstance of $E$ is provable in $\clthree$.}\vspace{7pt} 
 
\begin{subproof} Consider any   $\clfour^\circ$-provable formula $E$.  We want to show that $\clthree$ proves any quasiinstance of $E$.  This will be done by induction on the length of the $\clfour^\circ$-proof of $E$; within the inductive step of this induction, we will use a second induction --- induction on the complexity (the number of logical connectives) of the quasiinstance of $E$ under consideration. Call the first induction {\em primary} and the second induction {\em secondary}. These adjectives will also be applied to the corresponding inductive hypotheses.

For the basis of the primary induction, assume  $E$ is an axiom of $\clfour^\circ$ (and hence of $\clthree$ as well), i.e. $E$ is a valid formula of classical logic. Consider any substitution $^\heartsuit$ for $E$. The formula $E^\heartsuit$ is an axiom of ($\clfour^\circ$ and) $\clthree$, because  classical validity is closed under applying substitutions. And, since $E$ is elementary,  $E^\heartsuit$ is the only $^\heartsuit$-quasiinstance of it. So, we are done. 

Below comes the inductive step of the primary induction, divided into three cases.\vspace{5pt}

{\em Case 1.} Assume $E$ is obtained from a premise $G$ by $\add$-Choose or $\ade$-Choose.  Consider any substitution $^\heartsuit$ for $E$. Obviously, $E^\heartsuit$ follows from $G^\heartsuit$ by the same rule;\footnote{To ensure that Convention \ref{jan26} is respected, here we can safely assume that, if $E$ is obtained by $\ade$-Choose and this rule (in the bottom-up view) introduced a fresh variable $s$, then $s$ has no (bound) occurrences in $G^\heartsuit$, or otherwise rename $s$ into some neutral variable.}  and, by the primary induction hypothesis, $\clthree\vdash G^\heartsuit$, so we have $\clthree \vdash E^\heartsuit$. Furthermore, what we just observed   extends to any other (other than $E^\heartsuit$) $^\heartsuit$-quasiinstance $H$ of $E$ as well: with some thought, one can see that such an $H$ follows from a certain (the corresponding) $^\heartsuit$-quasiinstance of $G$  by the same rule $\add$-Choose or $\ade$-Choose as $E$ follows from $G$.\vspace{5pt} 

{\em Case 2.} Assume $E$ is obtained from premises $\elz{E},G_1,\ldots,G_n$ by Wait. 
Consider any substitution $^\heartsuit$ for $E$ and any $^\heartsuit$-quasiinstance $H$ of $E$. We want to show that $H$ can be derived in $\clthree$ by Wait.

 The provability of the elementary formula $\elz{E}$ obviously means that it is an axiom, i.e., a valid formula of classical logic. Let $J_1,\ldots,J_k$ be all positive surface occurrences of general literals in $E$, and let $E'$ be the formula obtained from $E$ by replacing those occurrences by $q_1,\ldots,q_k$, where the $q_i$ are pairwise distinct $0$-ary elementary letters not occurring in $E$.  Observe that then $\elz{E'}$ differs from $\elz{E}$ in that, where the former has $k$ positive occurrences of $\tlg$ (originating  from  $J_1,\ldots,J_k$ when elementarizing $E$), the latter has the $k$ atoms  $q_1,\ldots,q_k$. It is known from classical logic that replacing positive occurrences of $\tlg$ by whatever formulas does not destroy validity. Hence, as $\elz{E}$ is valid, so is $\elz{E'}$.  Now, with some analysis, details of which are left to the reader, one can see that the formula $\elz{H}$ is a substitutional instance --- in both our present sense as well as in the classical sense --- of $\elz{E'}$. So, as an instance of a classically valid formula, $\elz{H}$ is classically valid, i.e. is an axiom of $\clthree$, and we thus have 
\begin{equation}\label{dec5}
\clthree\vdash \elz{H}.
\end{equation}
We now want to show that:
\begin{equation}\label{dec51}
\mbox{\em Whenever $H=H[K_1\adc\ldots\adc K_m]$ and $1\leq i\leq m$, we have $\clthree\vdash H[K_i]$.}
\end{equation}
Indeed, assuming the conditions of (\ref{dec51}), one of the following should be the case:
\begin{enumerate}
\item The occurrence of $K_1\adc\ldots\adc K_m$ in $H$ originates from a (surface) occurrence of a subformula $L_1\adc\ldots\adc L_m$ in $E$ (so that $K_1=L_{1}^\heartsuit$, $\ldots$, $K_m=L_{m}^\heartsuit$). Then, obviously, $H[K_i]$ is a  $^{\heartsuit}$-quasiinstance of one of the ordinary Wait-premises $G_j$ ($1\leq j\leq n$) of $E$. But then, by the primary induction hypothesis, we have $\clthree\vdash H[K_i]$.
\item The occurrence of $K_1\adc\ldots\adc K_m$ in $H$ originates from a (positive surface) occurrence of some general literal $L$ in $E$ (so that $K_1\adc\ldots\adc K_m$ has a surface occurrence in a TROW-premise of $L^\heartsuit$).  Note that then $H[K_i]$, just like $H$, is a $^\heartsuit$-quasiinstance of $E$. By the secondary induction hypothesis, the formula $H[K_i]$, as a quasiinstance of $E$ less complex than $H$ itself, is provable in $\clthree$.
\item The occurrence of $K_1\adc\ldots\adc K_m$ in $H$ originates from a (positive surface) occurrence of some hybrid literal $L$ in $E$  (so that $K_1\adc\ldots\adc K_m$ has a surface occurrence in $L^\heartsuit$). Then $H[K_i]$ contains a surface occurrence of the subformula $\gneg K_1\add\ldots\add \gneg K_m$, originating from the  occurrence of the matching hybrid literal $L'$ in $E$. Let $H'$ be the result of replacing that $\gneg K_1\add\ldots\add \gneg K_m$ by $\gneg K_i$ in $H[K_i]$. Obviously $H'$, just like $H$,  is a quasiinstance of $E$, but it is less complex than $H$. Hence, by the secondary induction hypothesis, $\clthree \vdash H'$. But $H[K_i]$ follows from $H'$ by $\add$-Choose. So,  $\clthree\vdash H[K_i]$.
\end{enumerate}
In all  cases we thus get $\clthree\vdash H[K_i]$, as desired.

In a very similar way, we can further show that 
\begin{equation}\label{dec52}
\mbox{\em Whenever $H=H[\ada x K(x)]$, we have $\clthree\vdash H[K(s)]$ for some variable $s$ not occurring in $H$.}
\end{equation}

Now, from (\ref{dec5}), (\ref{dec51}) and (\ref{dec52}), by Wait, we find the desired $\clthree \vdash H$.\vspace{5pt} 

{\em Case 3.} Suppose $P$ is a $k$-ary general letter, $q$ is a $k$-ary nonlogical elementary letter, $\tau_1,\ldots,\tau_k$, $\theta_1,\ldots,\theta_k$ are terms, \[E\ =\ E[P(\tau_1,\ldots,\tau_k),\ \gneg P(\theta_1,\ldots,\theta_k)]\] and it is obtained from the premise 
\begin{equation}\label{j30b}
 E[P_q(\tau_1,\ldots,\tau_k),\ \gneg P_q(\theta_1,\ldots,\theta_k)]
\end{equation}
by Match$^\circ$.
 Consider any substitution $^\heartsuit$ for $E$, and any $^\heartsuit$-quasiinstance  of $E$.  Obviously such a quasiinstance  can be written in the form
\begin{equation}\label{j30c} 
H[K_1(\tau_{1}^{\heartsuit},\ldots,\tau_{k}^{\heartsuit}),\ \gneg K_2(\theta_{1}^{\heartsuit},\ldots,\theta_{k}^{\heartsuit})],
\end{equation}
 where $H$ inherits the logical structure of $E$ (but probably adds some extra complexity to it), $K_1(\tau_{1}^{\heartsuit},\ldots,\tau_{k}^{\heartsuit})$ is a TROW-premise of $P^\heartsuit(\tau_{1}^{\heartsuit},\ldots,\tau_{k}^{\heartsuit})$ and $\gneg K_2(\theta_{1}^{\heartsuit},\ldots,\theta_{k}^{\heartsuit})$ is a TROW-premise of $\gneg P^\heartsuit(\theta_{1}^{\heartsuit},\ldots,\theta_{k}^{\heartsuit})$. With a little thought, one can see that there is a series of $\add$-Chooses and $\ade$-Chooses that we can apply --- in the bottom-up sense --- to (\ref{j30c}) to ``even out'' the $K_1(\tau_{1}^{\heartsuit},\ldots,\tau_{k}^{\heartsuit})$ and $\gneg K_2(\theta_{1}^{\heartsuit},\ldots,\theta_{k}^{\heartsuit})$ subformulas and bring (\ref{j30c})  to 
\begin{equation}\label{j30a}
H[K(\tau_{1}^{\heartsuit},\ldots,\tau_{k}^{\heartsuit}),\ \gneg K(\theta_{1}^{\heartsuit},\ldots,\theta_{k}^{\heartsuit})] 
\end{equation}
for a certain formula $K(x_1,\ldots,x_n)$. 
Let $^\diamondsuit$ be the substitution for $E$ which sends $P_q$ to $K(x_1,\ldots,x_n)$ and agrees with $^\heartsuit$ on everything else. With a little thought, we can see that (\ref{j30a}) is a $^\diamondsuit$-quasiinstance of (\ref{j30b}).   Hence, by the primary induction hypothesis,   $\clthree\vdash (\ref{j30a})$. Now, as we already know,   (\ref{j30c}) is obtained from (\ref{j30a}) using a series of $\add$-Chooses and $\ade$-Chooses. Hence (\ref{j30c}) --- which, as we remember, was an arbitrary quasiinstance of $E$ --- is provable in $\clthree$.\vspace{5pt}

The above Cases 1,2,3 complete the inductive step of our primary induction, and we conclude that, whenever $E$ is a $\clfour^\circ$-provable formula, every quasiinstance of it is provable in $\clthree$. 
\end{subproof}

To complete our proof of (the soundness part of) Theorem \ref{ccs}, assume $\clfour\vdash F$. Then, by \mbox{Claim 1}, $\clfour^\circ\vdash F$. Consider any substitution $^\heartsuit$ for $F$. $F^\heartsuit$ is a (quasi)instance of $F$  and hence, by Claim 2, $\clthree\vdash F^\heartsuit$. Since both $F$ and $^\heartsuit$ are arbitrary, we conclude that every instance of every $\clfour$-provable formula is provable in $\clthree$. 
\end{proof}

\section{The basic system of ptarithmetic  introduced}\label{ss11}

There can be various interesting systems of arithmetic based on computability logic (``{\bf clarithmetics}''),\label{iclarithmetic} depending on what language we consider, what fragment of CL is taken as a logical basis, and what extra-logical rules and axioms are employed. \cite{Japtowards} introduced three   systems of clarithmetic, named {\bf CLA1}, {\bf CLA2} and {\bf CLA3}, all based on the fragment {\bf CL12} (also introduced in \cite{Japtowards}) of computability logic. The basic one of them is {\bf CLA1}, with the other two systems being straightforward modifications of it through slightly extending ({\bf CLA2}) or limiting ({\bf CLA3}) the underlying nonlogical language. Unlike our present treatment, the underlying semantical concept for the systems of \cite{Japtowards} was computability-in-principle rather than efficient computability.  

The new system of clarithmetic introduced in this section, meant to axiomatize efficient computability of number-theoretic computational problems, is named $\arfour$.\label{ij31} The term ``{\bf ptarithmetic}''\label{iptarithmetic2} is meant to be a generic name for systems in this style, even though we often use it to refer to our present particular system $\arfour$ of ptarithmetic.  

The language of $\arfour$, whose formulas  we refer to as {\bf $\arfour$-formulas},\label{iarfff} is obtained from the language of $\clthree$ by removing all nonlogical predicate letters (thus only leaving the logical predicate letter $\equals $),  and also removing all but four function letters, which are:
 
\begin{itemize}
\item $zero$, $0$-ary. We will write $\zero$ for $zero$.
\item $successor$, unary. We will write $\tau\successor $ for $successor(\tau)$.
\item $sum$, binary. We will write $\tau_1\plus \tau_2$ for $sum(\tau_1,\tau_2)$.
\item $product$, binary. We will write $\tau_1\mult  \tau_2$ for $product(\tau_1,\tau_2)$.
\end{itemize}

From now on, when we just say ``formula'', we mean ``$\arfour$-formula'', unless otherwise specified or suggested by the context.

Formulas that have no free occurrences of variables are said to be {\bf sentences}.\label{isentence}

The concept of an interpretation explained earlier can now be restricted to interpretations that are only defined on $\zero$, $\successor $, $\plus $,  $\mult $ and $\equals$, as the present language has no other nonlogical function or predicate letters. Of such interpretations, the {\bf standard interpretation}\label{isi} $^\dagger$ is the one that interprets $\zero$ as (the $0$-ary function whose value is) $0$, interprets $\successor $ as the standard successor ($x\plus 1$) function, interprets $\plus $ as the sum function,   interprets $\mult $ as the product function, and interprets $\equals$ as the identity relation. Where $F$ is a $\arfour$-formula, the {\bf standard interpretation of} 
$F$ is the game $F^\dagger$, which we typically write simply as $F$ unless doing so may cause ambiguity. 
  
The {\bf axioms} of $\arfour$ are grouped into logical and nonlogical. 

The {\bf logical axioms}\label{ilax} of $\arfour$ are all elementary $\arfour$-formulas provable in classical first-order logic. That is, all axioms of $\clthree$ that are $\arfour$-formulas. 

As for the {\bf nonlogical axioms},\label{inlax} they are divided into what we call ``Peano'' and ``extra-Peano'' axioms. 

The {\bf Peano axioms}\label{ipeanax} of $\arfour$ are  all sentences matching the following seven schemes,\footnote{Only Axiom 7 is a scheme in the proper sense. Axioms 1-6 are ``schemes'' only in the sense that $x$ and $y$ are metavariables for variables rather than particular variables. These axioms can be painlessly turned into particular formulas by fixing some particular variables in the roles of $x$ and $y$. But why bother.} with $x,y,y_1,\ldots,y_n$ being any pairwise distinct variables other than $\bound$:

\begin{description}
\item[Axiom 1:] $\cla x(\zero\notequals x\successor )$
\item[Axiom 2:] $\cla x\cla y(x\successor \equals y\successor \mli x\equals y)$
\item[Axiom 3:] $\cla x(x\plus \zero\equals x)$
\item[Axiom 4:] $\cla x\cla y\bigl( x\plus y\successor \equals (x\plus y)\successor \bigr)$
\item[Axiom 5:] $\cla x(x\mult  \zero\equals \zero)$
\item[Axiom 6:] $\cla x\cla y\bigl(x\mult  y\successor \equals (x\mult  y)\plus x\bigr)$
\item[Axiom 7:] $\cla y_1\ldots\cla y_n\Bigl(F(\zero)\mlc \cla x\bigl(F(x)\mli F(x\successor )\bigr)\mli \cla xF(x)\Bigr)$, where $F(x)$ is any elementary formula and $y_1,\ldots,y_n$ are all of the variables occurring free in it and different from $\bound,x$. 
\end{description}

Before we present the extra-Peano axioms of $\arfour$, we need to agree on some notational matters. The language of $\arfour$ extends that of {\bf Peano Arithmetic}\label{ipa3} $\pa$ (see, for example, \cite{Hajek}) through adding to it $\adc,\add,\ade,\ada$. And the language of $\pa$ is known to be  very expressive, despite its nonlogical vocabulary officially being limited to only $\zero,\successor ,\plus ,\mult $. Specifically, it allows us to express, in a certain reasonable and standard way, all  recursive functions and relations, and beyond. Relying on the common knowledge of the power of the language of \pa,  we will be using standard expressions such as $x\mleq y$, $y\mgreater x$, etc. in formulas as abbreviations of the corresponding proper expressions of the language. Namely, in our metalanguage, $|x|$\label{iii} will refer to the length of (the binary numeral for the number represented by) $x$.\footnote{Warning: here we do not follow the standard convention, according to which $|0|$ is considered to be $0$ rather than $1$.} So, when we write, say, ``$|x|\mleq \bound$'', it is to be understood as an abbreviation of a standard formula of {\bf PA} saying that the size of $x$ does not exceed $\bound$. 

Where $\tau$ is a term, we will be using $\tau 0$ and $\tau 1$ as abbreviations for the terms $\zero\successor\successor\mult \tau$ and $(\zero\successor\successor\mult \tau)\successor$, respectively. The choice of this notation is related to the fact that, given any natural number $a$, the binary representation of $\zero\successor\successor \mult a$ (i.e., of $2a$) is nothing but the binary representation of $a$ with a ``$0$'' added on its right. Similarly, the binary representation of $(\zero\successor\successor\mult a)\successor$ is nothing but the binary representation of $a$ with a ``$1$'' added to it. Of course, here an exception is the case $a\equals 0$. It can be made an ordinary case by assuming that adding any number of $0$s at the beginning of a binary numeral $b$ results in a legitimate numeral representing the same number as $b$. 

The number $a0$ (i.e. $2a$) will be said to be the {\bf binary $0$-successor}\label{ibzs} of $a$, and $a1$ (i.e. $2a+1$) said to be the {\bf binary $1$-successor}\label{ibos} of $a$; in turn, we can refer to $a$ as the {\bf binary predecessor}\label{ibp} of $a0$ and $a1$. As for $a\successor $, we can refer to it as the {\bf unary successor}\label{ius} of $a$, and refer to $a$ as the {\bf unary predecessor}\label{iup} of $a\successor $. Every number has a binary predecessor, and every number except $0$ has a unary predecessor. 
Note that  the binary predecessor of a number is the result of deleting the last digit in its binary representation. Two exceptions are the numbers $0$ and $1$, both having $0$ as their binary predecessor.

Below and elsewhere, by a {\bf $\bound$-term}\label{ibterm} we mean a term of the official language of $\arfour$ containing no variables other than $\bound$. That is, a term exclusively built from $\bound, \zero, \successor , \plus ,\mult $.   

Now, the {\bf extra-Peano axioms}\label{iextra} of $\arfour$ are all formulas matching the following six schemes, where $s$ is any variable and $x$ is any variable other than $\bound,s$:  

\begin{description}
\item[Axiom 8:] $\ade x(x\equals \zero)$
\item[Axiom 9:] $s\equals \zero\add s\notequals \zero$
\item[Axiom 10:] $|s\successor |\mleq \bound\mli \ade x(x\equals s\successor )$
\item[Axiom 11:] $|s0|\mleq \bound\mli \ade x(x\equals s0)$
\item[Axiom 12:] $\ade x(s\equals x0\add s\equals x1)$
\item[Axiom 13:] $|s|\mleq \bound$ 
\end{description}

The {\bf rules of inference} are also divided into two groups: logical and nonlogical. 

The {\bf logical rules}\label{ilogr} of $\arfour$ are the  rules $\add$-Choose, $\ade$-Choose, Wait and Modus Ponens of Section \ref{ss7}.

And  there is a single {\bf nonlogical rule}\label{inlr} of inference, that we call {\bf Polynomial Time Induction} ({\bf PTI}),\label{ipti} in which $\tau$ is any $\bound$-term, $s$ is any non-$\bound$ variable, and $E(s),F(s)$ are any formulas:

\begin{center}
\begin{picture}(258,75)
\put(120,55){\bf PTI}
\put(0,35){$E(\zero)\mlc F(\zero)\hspace{45pt}  E(s)\mlc F(s)\mli E(s\successor)\adc \bigl(F(s\successor )\mlc E(s)\bigr)$}
\put(0,25){\line(1,0){258}}
\put(88,10){$s\mleq \tau\mli E(s)\mlc F(s)$}
\end{picture}
\end{center}

Here the left premise is called the {\bf basis} of induction, and the right premise called the {\bf inductive step}.

A formula $F$ is considered {\bf provable} in $\arfour$ iff there is a sequence of formulas, called a {\bf $\arfour$-proof} of $F$, where each formula is either a (logical or nonlogical) axiom, or  follows from some previous formulas by one of the (logical or nonlogical) rules of inference, and where the last formula is $F$. We write $\arfour\vdash F$ to say that $F$ is provable (has a proof) in $\arfour$, and $\arfour\not\vdash F$ to say the opposite.

In view of the following fact, an alternative way to present $\arfour$ would be to delete Axioms 1-7 together with all logical axioms and, instead, declare all theorems of {\bf PA} to be axioms of $\arfour$ along with Axioms 8-13:

\begin{fact}\label{nov7}
Every (elementary $\arfour$-) formula provable in $\pa$ 
is also provable in $\arfour$.
\end{fact}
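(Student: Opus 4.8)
The plan is to exploit the fact that $\arfour$ literally contains both classical first-order logic and the Peano axioms, so a $\pa$-proof of an elementary formula can be collapsed into a single application of Modus Ponens. Suppose $F$ is an elementary $\arfour$-formula with $\pa\vdash F$. Since a proof is finite, the $\pa$-derivation of $F$ invokes only finitely many instances $A_1,\ldots,A_k$ of the Peano axiom schemes (in particular, only finitely many instances of the induction scheme), and $A_1,\ldots,A_k\vdash F$ in the classical first-order calculus (with $\equals$ read as identity). First I would check that each $A_i$ is, as an arithmetical formula, among the \textbf{Peano axioms} of $\arfour$: Axioms 1--6 are exactly the successor, addition and multiplication axioms, and Axiom 7 is precisely the induction scheme for elementary formulas --- and every formula on which $\pa$ inducts is an elementary $\arfour$-formula. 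Thus each $A_i$ is a (nonlogical) axiom of $\arfour$.

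The second step passes from the classical derivation to a single valid implication. By the deduction theorem for classical first-order logic, from $A_1,\ldots,A_k\vdash F$ one obtains that
\[ A_1\mlc\ldots\mlc A_k\mli F \]
is provable in the classical first-order calculus. All of $A_1,\ldots,A_k,F$ are elementary, and on elementary formulas the connectives $\mlc,\mli$ and the blind quantifier $\cla$ coincide with the corresponding classical operators; hence this implication is an elementary, classically valid $\arfour$-formula, i.e.\ a \textbf{logical axiom} of $\arfour$. The third and final step is to list $A_1,\ldots,A_k$ (each a Peano axiom) followed by $A_1\mlc\ldots\mlc A_k\mli F$ (a logical axiom) and then apply the Modus Ponens rule of Section \ref{ss7}, which is among the logical rules of $\arfour$, to conclude $F$. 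This exhibits a $\arfour$-proof of $F$, and since neither PTI nor any choice rule is needed, the argument never leaves the elementary fragment.

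The one point demanding care --- and the main obstacle --- concerns the special variable $\bound$. Peano Arithmetic treats $\mathfrak{w}_0=\bound$ like any other variable and may generalize on it, whereas the $\arfour$ induction scheme (Axiom 7) deliberately leaves $\bound$ free rather than universally closing it, so the deduction theorem could in principle be blocked by a generalization on a variable that is free in some $A_i$. I would resolve this by freezing $\bound$: replace every free occurrence of $\bound$ by a fresh constant $c$ throughout, so that the $\pa$-proof becomes a proof of the corresponding $F(c)$ whose axiom instances are exactly the results of substituting $c$ for $\bound$ in Axioms 1--7, and are now \emph{sentences}. The deduction theorem then applies without restriction to the sentence-hypotheses $A_1(c),\ldots,A_k(c)$, and substituting $\bound$ back for $c$ turns the resulting valid implication into the desired elementary $\arfour$-formula $A_1\mlc\ldots\mlc A_k\mli F$. (When $F$ itself contains no free occurrence of $\bound$, one may instead simply rename variables so that the whole $\pa$-proof avoids $\bound$ altogether.) Modulo this bookkeeping, the result is immediate from the inclusion of the Peano axioms and of classical logic inside $\arfour$.
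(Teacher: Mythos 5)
Your proposal is correct and follows essentially the same route as the paper's own proof: finitely many Peano-axiom instances, the classical deduction theorem to obtain the elementary valid implication $H_1\mlc\ldots\mlc H_n\mli F$ as a logical axiom of $\arfour$, and a single application of Modus Ponens. The additional bookkeeping about freezing $\bound$ is a reasonable precaution the paper does not bother with, but it does not change the argument.
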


\begin{proof} Suppose (the classical-logic-based) $\pa$ proves  $F$. By the deduction theorem for classical logic this means  that, for some nonlogical axioms $H_1,\ldots,H_n$ of $\pa$, the formula $H_1\mlc\ldots\mlc H_n\mli F$ is provable in classical first order logic. Hence $H_1\mlc\ldots\mlc H_n\mli F$ is a logical axiom of $\arfour$ and is thus provable in $\arfour$. But the nonlogical axioms of $\pa$ are nothing but the Peano axioms of $\arfour$. So, $\arfour$ proves each of the formulas $H_1,\ldots,H_n$.  Now, in view of the presence of the rule of Modus Ponens in $\arfour$, we find that $\arfour \vdash F$.
\end{proof}

The above fact, on which we will be implicitly relying in the sequel, allows us to construct ``lazy'' $\arfour$-proofs where some steps can be justified by simply indicating their provability in {\bf PA}. That is, we will  treat theorems of {\bf PA} as if they were axioms of $\arfour$. As {\bf PA} is well known and studied, we safely assume that the reader has a good feel of what it can prove, so we do not usually further justify {\bf PA}-provability claims that we make. A reader less familiar with {\bf PA}, can take it as a rule of thumb that, despite G\"{o}del's incompleteness theorems, 
{\bf PA} proves every true number-theoretic fact that a contemporary high school  student can establish, or that mankind was or could be aware of before 1931.
 
\begin{definition}\label{dd1} \ 

1. By an {\bf arithmetical problem} in this paper we mean a game $A$ such that, for some formula $F$ of the language of $\arfour$, \ $A= F^\dagger$ (remember that $^\dagger$ is the standard interpretation). Such a formula $F$ is said a {\bf representation}\label{irepresentation} of $A$.

2. We say that an arithmetical problem $A$ is {\bf provable} in $\arfour$ iff it has a $\arfour$-provable representation.  
\end{definition}

In these terms, the central result of the present paper sounds as follows:

\begin{theorem}\label{tt1}
An arithmetical problem has a polynomial time solution iff it is provable in $\arfour$. 

Furthermore, there is an effective procedure that takes an arbitrary $\arfour$-proof of an arbitrary formula $X$ and constructs a   
polynomial time solution for $X$ (for $X^\dagger$, that is). 
\end{theorem}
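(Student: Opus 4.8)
The plan is to prove the two directions separately. The ``if'' direction (soundness, together with the effective solution-extracting procedure) I would establish by induction on the length of a $\arfour$-proof, while the ``only if'' direction (completeness) I would establish by arithmetizing an arbitrary polynomial time strategy. Throughout, I identify a formula $X$ with the game $X^\dagger$ into which the standard interpretation turns it, and I work only with bounded valuations as in Definition \ref{deftc}.

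For soundness, the induction hypothesis is that every formula occurring earlier in the proof has an effectively extractable polynomial time solution. The four logical rules $\add$-Choose, $\ade$-Choose, Wait and Modus Ponens preserve this property and supply the construction, since they are uniform-constructively sound by Theorems \ref{sep10a}, \ref{sep10b}, \ref{sep10c} and \ref{april22}. It then remains to treat the axioms and the rule PTI. The logical axioms and the Peano Axioms 1--7 are true elementary sentences under $^\dagger$, hence won --- in $0$ time --- by the machine that makes no moves. The extra-Peano Axioms 8--13 do contain choice operators, but each is handled by a shallow strategy whose only genuine computation is binary successor/predecessor arithmetic or a single zero/equality test, all feasible in time polynomial in the value of $\bound$; for Axiom 13 one uses exactly the standing restriction to bounded valuations, under which $|s|\mleq\bound$ is true.

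The heart of the soundness half is the uniform-constructive soundness of \textbf{PTI}, which I would isolate as a separate lemma. Given polynomial time solutions for the basis $E(\zero)\mlc F(\zero)$ and the inductive step $E(s)\mlc F(s)\mli E(s\successor)\adc\bigl(F(s\successor)\mlc E(s)\bigr)$, the machine for the conclusion $s\mleq\tau\mli E(s)\mlc F(s)$ first reads the value of $s$ and then climbs from level $0$ to level $s$ by repeatedly invoking the inductive-step solution. Because $\tau$ is a $\bound$-term, its value --- and hence, under $s\mleq\tau$, the value of $s$ --- is bounded by a fixed polynomial in the value of $\bound$, so the number of iterations is polynomial in the value of $\bound$; since each iteration runs in polynomial time as well, the total $\pp$-time stays polynomial. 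The delicate point is to carry out the climb so that the maintained position does not blow up: this is precisely what the $\adc$-branch and the duplicated $E(s)$-component of the inductive step are for, letting the strategy regenerate the reusable $E$-resource at each level instead of accumulating a parallel conjunction of all earlier copies.

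For completeness, suppose an arithmetical problem $A=G^\dagger$ has a polynomial time solution $\cal M$. Since completeness here is only \emph{extensional}, I am free to choose a convenient representation, and the plan is to arithmetize $\cal M$ and build a \emph{new} formula $F$ with $F^\dagger=A$ that is provable in $\arfour$. Exploiting Fact \ref{nov7}, which imports all of $\pa$ into $\arfour$, I would encode configurations, runs and the transition function of $\cal M$ by arithmetical formulas, and build $F$ so that it mirrors the choice structure of $A$ while, at each of $\pp$'s choice points, the move is the one dictated by $\cal M$ and each leaf condition is expressed as a statement about $\cal M$'s bounded computation rather than as the bare arithmetic predicate. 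Because $\cal M$ runs in polynomial time, the length of its computation on any bounded valuation is bounded by the value of a $\bound$-term $\tau$, so the winning of $F$ can be proved in $\arfour$ by applying PTI to induct on the clock cycle up to $\tau$. I expect this direction to be the main obstacle, because two things must hold \emph{simultaneously}: the arithmetized leaf and guard conditions must be equivalent, under $^\dagger$, to the genuine winning conditions of $A$ (so that $F^\dagger=A$), and they must be $\arfour$-provable (so that $F$ is a theorem). The latter is possible only because the relevant facts concern bounded computations, which $\pa$ --- and hence $\arfour$ --- can verify despite Gödelian incompleteness; reconciling this with the former, and checking that the inductive step demanded by PTI is itself provable, is where the real work lies. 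By contrast, once the PTI lemma is in place, the soundness half is essentially routine bookkeeping on top of Section \ref{ss7}.
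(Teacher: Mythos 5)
Your soundness half is essentially the paper's argument (Section \ref{sectsound}): induction on proof length, the four logical rules discharged by the uniform-constructive soundness results of Section \ref{ss7}, explicit machines for Axioms 8--13, and for PTI a chain of $e(s)$ invocations of the inductive step in which the $E$-resource is recycled while a linearly (not exponentially) growing stock of $F$-copies is consumed; since $s\mleq\tau$ for a $\bound$-term $\tau$, the number of levels and hence the total time stays polynomial. The gap is in the completeness half, exactly at the point you flag as ``where the real work lies'', and the way out you propose does not work. You want the leaf conditions of the new formula to be ``statements about $\cal M$'s bounded computation'', on the grounds that such statements are $\pa$-verifiable. But what ties $\cal M$'s (bounded, verifiable) trace to the original predicates is the \emph{truth of the elementarization of the final position reached}, and that is the truth of an arbitrary arithmetical formula --- not a bounded fact, and not in general $\pa$-provable even when true. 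Concretely: if you replace a politeral $L$ by a verifiable condition such as ``$\cal M$'s play ends in a position containing this occurrence of $L$'', extensional equivalence with $L$ fails (in a final position $L\mld L'$ the machine may legitimately stop with $L'$ true and $L$ false); if you keep $L$ itself, provability fails. No per-leaf bounded condition can be simultaneously extensionally equal to $L$ and provable at the moments the proof needs it.

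The device the paper uses, and the one essential idea your sketch is missing, is this: keep the given representation $X$, introduce a single closed sentence $\mathbb{L}$ asserting ``$\cal X$ does not win $X$ in time $\xi$'', and let $\overline{X}$ be $X$ with every politeral $L$ replaced by $L\mld\mathbb{L}$. Since $\cal X$ does win, $\mathbb{L}$ is false, so $\overline{X}^\dagger=X^\dagger$ (Lemma \ref{august12a}) and extensional faithfulness is automatic. Provability is then obtained not because the leaves are verifiable but because the \emph{failure} of a leaf provably entails $\mathbb{L}$: $\pa$ proves that if the final position is $E$ and $\elz{E}$ is false then $\cal X$ loses, hence $\mathbb{L}$ holds, hence every disjunction $L\mld\mathbb{L}$ holds, hence $\elz{\overline{E}}$ holds (Lemmas \ref{august20b} and \ref{august20a}). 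This is what makes the special Wait-premises provable at every stage of the metainduction of Lemma \ref{oct26}, with the PTI-based traceability of the computation (the formulas $\mathbb{E}$, $\mathbb{F}$ and Lemma \ref{oct17e}) supplying, as you anticipated, the $\add$- and $\ade$-choices. Note that $\mathbb{L}$ is false but not $\pa$-refutable --- Theorem \ref{feb15} shows that adding $\gneg\mathbb{L}$ as an axiom would make $X$ itself provable --- so this trick genuinely cannot be replaced by an appeal to ``bounded, hence provable'' facts.
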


\begin{proof} The soundness (``if'') part of this theorem will be proven in Section \ref{sectsound}, and the completeness (``only if'') part in Section 
\ref{sectcompl}.
\end{proof}

\section{On the extra-Peano axioms of $\arfour$}

While the well known Peano axioms hardly require any explanations as their traditional meanings are fully preserved in our treatment, the extra-Peano axioms  of $\arfour$ may be worth briefly commenting on. Below we do so with the soundness  of $\arfour$ (the ``if'' part of Theorem \ref{tt1}) in mind, according to which every $\arfour$-provable formula expresses an efficiently (i.e. polynomial time) computable number-theoretic problem.

\subsection{Axiom 8} \[\ade x(x\equals \zero)\]
This axiom expresses our ability to efficiently name the number (constant) $0$. Nothing --- even such a ``trivial''  thing --- can be taken for granted when it comes to formal systems!

\subsection{Axiom 9} 
\[s\equals \zero\add s\notequals \zero\]
This axiom expresses our ability to efficiently tell whether any given number is $0$ or not. Yet another ``trivial'' thing that still has to be explicitly stated in the formal system. 

\subsection{Axiom 10} 
\[|s\successor |\mleq \bound\mli \ade x(x\equals s\successor )\]
This axiom establishes the efficient computability of the unary successor function (as long as the size of the value of the function does not exceed the bound $\bound$). Note that its classical counterpart $|s\successor |\mleq \bound\mli \cle x(x\equals s\successor )$ is simply a valid formula of classical first-order logic (because so is its consequent) and, as such, carries no information. Axiom 10, on the other hand, is not at all a logically valid formula, and does carry certain nontrivial information about the standard meaning of the successor function. A nonstandard meaning (interpretation) of $s\successor $ could be an intractable or even incomputable function. 

\subsection{Axiom 11}  
\[|s0|\mleq \bound\mli \ade x(x\equals s0)\]
Likewise, Axiom 11 establishes the efficient computability of the binary $0$-successor function. There is no need to state a similar axiom for the binary $1$-successor function, as can be seen from the following lemma:

\begin{lemma}\label{nov8}
$\arfour\vdash |s1|\mleq \bound\mli \ade x(x\equals s1)$.
\end{lemma}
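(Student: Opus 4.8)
The plan is to derive the lemma from Axioms 10 and 11 by exploiting the definitional identity between the binary $1$-successor and the unary successor of the binary $0$-successor. Unwinding the abbreviations $s0\equals\zero\successor\successor\mult s$ and $s1\equals(\zero\successor\successor\mult s)\successor$, we get $s1\equals(s0)\successor$, a trivially $\pa$-provable fact. Hence naming $s1$ reduces to first naming $s0$ (via Axiom 11, binary $0$-successor) and then applying the unary successor to the named value (via Axiom 10). The size bounds propagate correctly: since $|s0|\mleq|s1|$ and $s1\equals(s0)\successor$ are both $\pa$-provable, the hypothesis $|s1|\mleq\bound$ guarantees both $|s0|\mleq\bound$ and $|(s0)\successor|\mleq\bound$, which is exactly what is needed to fire the two axioms.

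First I would cast Axiom 10 into a form that can be fed an arbitrary value of our own choosing. I claim that $\arfour\vdash\ada w\bigl(|w\successor|\mleq\bound\mli\ade y(y\equals w\successor)\bigr)$ (recall Convention \ref{con1}, whereby $\ada$ abbreviates $\ada^{\bound}$). This follows by the Wait rule applied to the outer $\ada$: the special (elementarization) premise is $\twg$, a logical axiom, and the single ordinary Wait-premise is $|s\successor|\mleq\bound\mli\ade y(y\equals s\successor)$ for a fresh variable $s$, which is precisely Axiom 10. Call the resulting formula $G_{10}$.

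Next I would assemble the lemma by Modus Ponens from four provable premises: (i) Axiom 11, in the form $|s0|\mleq\bound\mli\ade z(z\equals s0)$; (ii) the formula $G_{10}$; (iii) a conjunction $P$ of the $\pa$-provable (hence, by Fact \ref{nov7}, $\arfour$-provable) elementary glue facts $s1\equals(s0)\successor$ and $|s1|\mleq\bound\mli|s0|\mleq\bound$; and (iv) the relay reduction
\[\bigl(|s0|\mleq\bound\mli\ade z(z\equals s0)\bigr)\mlc G_{10}\mlc P\ \mli\ \bigl(|s1|\mleq\bound\mli\ade x(x\equals s1)\bigr).\]
The strategy witnessing premise (iv) is the copycat/relay already used in the soundness proof of Modus Ponens (Theorem \ref{april22}): assuming $|s1|\mleq\bound$, use $P$ to obtain $|s0|\mleq\bound$ and fire Axiom 11, receiving a constant $z$ with $z\equals s0$; then select $w:=z$ in $G_{10}$, whose guard $|z\successor|\mleq\bound$ holds because $z\successor\equals(s0)\successor\equals s1$ and $|s1|\mleq\bound$; this delivers a constant $y$ with $y\equals z\successor\equals s1$; finally choose $x:=y$ in the consequent. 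Applying Modus Ponens to (i)--(iv) yields the lemma.

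The only point requiring care --- and thus the main obstacle, though a mild one --- is the bookkeeping inside premise (iv): one must check that the received equality $z\equals s0$ together with the $\pa$-facts in $P$ genuinely licenses both the intermediate size bound $|z\successor|\mleq\bound$ and the final equality $x\equals s1$, and that the two choice resources are matched consistently by copycat. This is entirely routine $\pa$-reasoning combined with the standard matching of the $\ade$/$\ada$ resources; no new computational work beyond the doubling supplied by Axiom 11 and the increment supplied by Axiom 10 is performed, so the polynomial time bound is immediate.
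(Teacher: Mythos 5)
Your proposal is correct and follows essentially the same route as the paper: both generalize Axiom 10 to its $\ada$-closure via Wait, then obtain the lemma by Modus Ponens from Axiom 11, that closure, and a relay implication (built from the $\pa$-facts $s1\equals(s0)\successor$ and the size-bound propagation) whose bottom-up derivation by Wait and the Choose rules is exactly the copycat strategy you describe. The only cosmetic difference is that you isolate the elementary glue facts as a separate conjunct $P$, whereas the paper folds them into the $\pa$-justified steps of the relay's derivation.
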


\begin{proof} Informally, a proof of $|s1|\mleq \bound\mli \ade x(x\equals s1)$ would be based on the fact (known from \pa) that the binary $1$-successor of $s$ is nothing but the unary successor of the binary $0$-successor of $s$; the binary $0$-successor $r$ of $s$ can be found  using   Axiom 11; and the unary successor $u$ of that $r$ can be further found using Axiom 10.  

Here is a (``lazy'' in the earlier-mentioned sense) $\arfour$-proof formalizing the above argument:\vspace{9pt}

\noindent 1. $\begin{array}{l}
\twg\mlc \bigl(|s0|\mleq \bound\mli \tlg\bigr)
\mli\bigl(|s1|\mleq \bound \mli \tlg\bigr) 
\end{array}$  \ \ \pa\vspace{3pt}

\noindent 2. $\begin{array}{l}
\twg
\end{array}$  \ \ Logical axiom\vspace{3pt}

\noindent 3. $\begin{array}{l}
|s\successor |\mleq \bound\mli \ade x(x\equals s\successor ) 
\end{array}$  \ \ Axiom 10 \vspace{3pt}

\noindent 4. $\begin{array}{l}
\ada y\bigl(|y\successor |\mleq \bound\mli \ade x(x\equals y\successor )\bigr) 
\end{array}$  \ \ Wait: 2,3 \vspace{3pt}

\noindent 5. $\begin{array}{l}
 |s0|\mleq \bound\mli \ade x(x\equals s0) 
\end{array}$  \ \ Axiom 11\vspace{3pt}

\noindent 6. $\begin{array}{l}
 \bigl(|t\successor |\mleq \bound\mli \tlg\bigr)\mlc \bigl(|s0|\mleq \bound\mli  (t\equals s0)\bigr)
\mli\bigl(|s1|\mleq \bound \mli \tlg\bigr) 
\end{array}$  \ \ \pa \vspace{3pt}

\noindent 7. $\begin{array}{l}
 \bigl(|t\successor |\mleq \bound\mli  r\equals t\successor \bigr)\mlc \bigl(|s0|\mleq \bound\mli  (t\equals s0)\bigr)
\mli\bigl(|s1|\mleq \bound \mli  r\equals s1 \bigr) 
\end{array}$  \ \ \pa\vspace{3pt}

\noindent 8. $\begin{array}{l}
 \bigl(|t\successor |\mleq \bound\mli  r\equals t\successor \bigr)\mlc \bigl(|s0|\mleq \bound\mli  (t\equals s0)\bigr)
\mli\bigl(|s1|\mleq \bound \mli \ade x(x\equals s1)\bigr) 
\end{array}$  \ \  $\ade$-Choose: 7\vspace{3pt}

\noindent 9. $\begin{array}{l}
 \bigl(|t\successor |\mleq \bound\mli \ade x(x\equals t\successor )\bigr)\mlc \bigl(|s0|\mleq \bound\mli  (t\equals s0)\bigr)
\mli\bigl(|s1|\mleq \bound \mli \ade x(x\equals s1)\bigr) 
\end{array}$  \ \  Wait: 6,8\vspace{3pt}

\noindent 10. $\begin{array}{l}

\ada y\bigl(|y\successor |\mleq \bound\mli \ade x(x\equals y\successor )\bigr)\mlc \bigl(|s0|\mleq \bound\mli  (t\equals s0)\bigr)
\mli\bigl(|s1|\mleq \bound \mli \ade x(x\equals s1)\bigr) 
\end{array}$  \ \ $\ade$-Choose: 9 \vspace{3pt}

\noindent 11. $\begin{array}{l}

\ada y\bigl(|y\successor |\mleq \bound\mli \ade x(x\equals y\successor )\bigr)\mlc \bigl(|s0|\mleq \bound\mli \ade x(x\equals s0)\bigr)
\mli\bigl(|s1|\mleq \bound \mli \ade x(x\equals s1)\bigr) 
\end{array}$  \ \ Wait:  1,10 \vspace{3pt}

\noindent 12.  $\begin{array}{l}
|s1|\mleq \bound \mli \ade x(x\equals s1)
\end{array}$ \ \ MP: 4,5,11 
\end{proof}

This was our first experience with generating a formal $\arfour$-proof. We will  do quite some more exercising with  $\arfour$-proofs later in order to start seeing that behind every informal argument in the style of the one given at the beginning of the proof of Lemma \ref{nov8} is a ``real'', formal proof. 

\subsection{Axiom 12} 
\begin{equation}\label{f4}
\ade x(s\equals x0\add s\equals x1)
\end{equation}
Let us compare the above with three other, ``similar'' formulas:  
\begin{equation}\label{f3}
\cle x(s\equals x0\add s\equals x1)
\end{equation}
\begin{equation}\label{f2}
\ade x(s\equals x0\mld s\equals x1)
\end{equation}
\begin{equation}\label{f1}
\cle x(s\equals x0\mld s\equals x1)
\end{equation}
All four formulas ``say the same'' about the arbitrary number represented by $s$, but in different ways. (\ref{f1}) is the weakest, least informative, of the four.  It says that  $s$ has a binary predecessor $x$, and that $s$ is even (i.e., is the binary $0$-successor of its binary predecessor) or odd (i.e., is the binary $1$-successor of its binary predecessor). This is an almost trivial piece of information. (\ref{f2}) and (\ref{f3}) carry stronger information. According to (\ref{f2}), $s$ not just merely {\em has} a binary predecessor $x$, but  such a predecessor can be actually and efficiently {\em found}. (\ref{f3}) strengthens (\ref{f1}) in another way. It says that $s$ can be efficiently determined to be even or odd. As for (\ref{f4}), which is Axiom 12 proper, it is the strongest. It carries two pieces of good news at once: we can efficiently find the binary predecessor $x$ of $s$ and, simultaneously, tell whether $s$ is even or odd.

\subsection{Axiom 13} \[|s|\mleq \bound\] 
Remember that our semantics considers only bounded valuations, meaning that the size of the number represented by a (free) variable $s$ will never exceed the bound represented by the variable $\bound$. Axiom 13 simply states this fact. Note that this is the only elementary formula among the extra-Peano axioms. 

In view of the above-said, whenever we say ``an arbitrary $s$'' in an informal argument, unless otherwise suggested by the context, it is always to be understood as an arbitrary $s$ whose size does not exceed the bound $\bound$.

Due to Axiom 13, $\arfour$ proves that the bound is nonzero: 

\begin{lemma}\label{zer}
$\arfour\vdash \bound\notequals\zero$.
\end{lemma}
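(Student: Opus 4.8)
The plan is to exploit Axiom 13, $|s|\mleq \bound$, which is the only axiom that directly constrains the magnitude of $\bound$. The key arithmetical observation is that, under the size convention adopted in this paper (where $|0|$ counts as $1$ rather than $0$), every constant has size at least $1$, so $|s|\mgeq 1$ holds for every $s$. Combining this with Axiom 13 forces $\bound\mgeq 1$, which is exactly $\bound\notequals\zero$. The whole argument is therefore a one-step Modus Ponens once the relevant elementary implication is in hand.

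First I would record Axiom 13, namely $|s|\mleq \bound$, as a line of the proof. Next I would observe that the elementary formula $|s|\mleq \bound\mli \bound\notequals\zero$ is a true number-theoretic fact: since $|s|\mgeq 1$ always holds, $|s|\mleq \bound$ entails $\bound\mgeq 1$ and hence $\bound\notequals\zero$. Being an elementary formula provable in $\pa$, this implication is available in $\arfour$ by Fact \ref{nov7}, so I would enter it as a ``lazy'', $\pa$-justified step in the sense licensed just before this lemma. Finally, an application of Modus Ponens --- with the single premise $|s|\mleq \bound$ and the implication just obtained --- delivers $\bound\notequals\zero$.

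Concretely, the proof would run in three lines: line $1$ is $|s|\mleq \bound$ by Axiom 13; line $2$ is $|s|\mleq \bound\mli \bound\notequals\zero$ by $\pa$; and line $3$ is $\bound\notequals\zero$ by Modus Ponens applied to lines $1$ and $2$ (here the MP instance has the single antecedent-conjunct $|s|\mleq \bound$, i.e.\ the case $n=0$ of the rule).

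There is essentially no obstacle here: everything nontrivial is pushed into the $\pa$-provability of the elementary implication, and that provability is immediate. The only point requiring a moment's care is the convention that $|0|\equals \zero\successor$ (size $1$) rather than $0$; without it, taking $s\equals \zero$ would give size $0$ and the argument would fail to force $\bound\mgeq 1$. But that convention is precisely what the paper stipulates, so the reasoning goes through unimpeded.
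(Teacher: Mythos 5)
Your proof is correct and is essentially identical to the paper's own: both obtain $|s|\mleq\bound\mli\bound\notequals\zero$ as a $\pa$-provable elementary fact (resting on the convention that no binary numeral has size $0$) and then apply Modus Ponens with Axiom 13. Nothing further is needed.
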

\begin{proof} No binary numeral is of length $0$ and, of course, $\pa$ knows this. Hence  $\pa\vdash |s|\mleq\bound\mli\bound\notequals\zero$. From here and Axiom 13, by Modus Ponens,    $\arfour\vdash \bound\notequals\zero$.
\end{proof}

The formula of the following lemma is similar to Axiom 8, only it is about $\zero\successor$ instead of $\zero$.

\begin{lemma}\label{zersuc}
$\arfour\vdash \ade x(x\equals\zero\successor)$.
\end{lemma}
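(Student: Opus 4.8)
The plan is to realize, as a formal $\arfour$-proof, the obvious informal strategy: use Axiom 8 to name the constant $0$, and then feed that name into the unary successor resource of Axiom 10 to obtain a name for $\zero\successor$. The only subtlety is that Axiom 10 delivers a successor only under the size guard $|s\successor|\mleq\bound$; but when the argument is $0$ we have $|\zero\successor|\equals|1|\equals 1$, so the guard reduces to $1\mleq\bound$, which is exactly what Lemma \ref{zer} ($\bound\notequals\zero$) provides (through $\pa$).

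Concretely, I would follow the pattern of the proof of Lemma \ref{nov8}. First, exactly as in step 4 there, I would generalize Axiom 10 by Wait to the quantified form $\ada y\bigl(|y\successor|\mleq\bound\mli\ade x(x\equals y\successor)\bigr)$. Next I would assemble the reduction
\[
\bound\notequals\zero\ \mlc\ \ade x(x\equals\zero)\ \mlc\ \ada y\bigl(|y\successor|\mleq\bound\mli\ade x(x\equals y\successor)\bigr)\ \mli\ \ade x(x\equals\zero\successor),
\]
building it bottom-up out of elementary $\pa$-provable formulas by $\ade$-Choose and Wait, in the same manner that steps 6--11 of Lemma \ref{nov8} assembled their reduction. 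The key elementary ingredient, supplied by $\pa$, is that $\bound\notequals\zero\mlc(t\equals\zero)\mlc(|t\successor|\mleq\bound\mli r\equals t\successor)\mli(r\equals\zero\successor)$: once $t$ names $0$, its unary successor fits under the bound and equals $\zero\successor$. Finally, a single application of Modus Ponens to Lemma \ref{zer}, Axiom 8, the generalized Axiom 10, and the reduction yields $\ade x(x\equals\zero\successor)$.

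Semantically, the reduction encodes a two-channel copycat strategy for $\pp$. It waits until the environment, playing the antecedent resource $\ade x(x\equals\zero)$, commits to a value $t$ (which it must take to be $0$, since otherwise $\pp$ already wins that disjunct); $\pp$ then instantiates the successor resource at $y\equals t$, discharges its guard $|t\successor|\mleq\bound$ (true because $t\equals 0$ and $\bound\notequals\zero$), receives from that resource a value $r$ with $r\equals t\successor\equals\zero\successor$, and copies $r$ into the consequent via $\ade$-Choose.

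I expect the main obstacle to be purely bookkeeping: correctly threading the single value $0$ through both resource channels so that the two copycat links line up, and discharging the size guard $|\zero\successor|\mleq\bound$ at the right point. The latter is where Lemma \ref{zer} is indispensable --- without knowing $\bound\notequals\zero$ one could not guarantee that even the successor of $0$ is nameable under the bound. None of the steps require new ideas beyond those already used in Lemma \ref{nov8}; the work lies entirely in laying out the analogous ``lazy'' proof carefully.
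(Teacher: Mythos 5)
Your proposal is correct and follows the paper's own proof essentially step for step: the same reduction formula $\bound\notequals\zero\mlc\ade x(x\equals\zero)\mlc\ada y\bigl(|y\successor|\mleq\bound\mli\ade x(x\equals y\successor)\bigr)\mli\ade x(x\equals\zero\successor)$ assembled bottom-up by $\ade$-Choose and Wait from the same key $\pa$-provable elementary formula, the same Wait-generalization of Axiom 10, the same use of Lemma \ref{zer} to discharge the size guard, and the same concluding Modus Ponens. No discrepancies to report.
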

\begin{proof} 

\noindent 1. $\begin{array}{l}
  \bound\notequals\zero  
\end{array}$  \ \   Lemma \ref{zer} \vspace{3pt}

\noindent 2. $\begin{array}{l}
\ade x(x\equals \zero) 
\end{array}$  \ \   Axiom 8 \vspace{3pt}

\noindent 3. $\begin{array}{l}
\twg 
\end{array}$  \ \  Logical axiom \vspace{3pt}

\noindent 4. $\begin{array}{l}
|s\successor|\mleq\bound\mli\ade x(x\equals s\successor)
\end{array}$  \ \   Axiom 10 \vspace{3pt}

\noindent 5. $\begin{array}{l}
\ada y\bigl(|y\successor|\mleq\bound\mli\ade x(x\equals y\successor)\bigr)
\end{array}$  \ \  Wait: 3,4 \vspace{3pt}

\noindent 6. $\begin{array}{l}
  \bound\notequals\zero \mlc\tlg \mlc\twg \mli  \tlg
\end{array}$  \ \  Logical axiom \vspace{3pt}

\noindent 7. $\begin{array}{l}
  \bound\notequals\zero \mlc w\equals \zero \mlc  (|w\successor|\mleq\bound\mli\tlg ) \mli   \tlg
\end{array}$  \ \  $\pa$ \vspace{3pt}

\noindent 8. $\begin{array}{l}
  \bound\notequals\zero \mlc w\equals \zero \mlc  (|w\successor|\mleq\bound\mli v\equals w\successor  ) \mli   v\equals\zero\successor 
\end{array}$  \ \  $\pa$ \vspace{3pt}

\noindent 9. $\begin{array}{l}
  \bound\notequals\zero \mlc w\equals \zero \mlc  (|w\successor|\mleq\bound\mli v\equals w\successor  ) \mli   \ade x(x\equals\zero\successor)
\end{array}$  \ \  $\ade$-Choose: 8 \vspace{3pt}

\noindent 10. $\begin{array}{l}
  \bound\notequals\zero \mlc w\equals \zero \mlc \bigl(|w\successor|\mleq\bound\mli\ade x(x\equals w\successor)\bigr) \mli   \ade x(x\equals\zero\successor)
\end{array}$  \ \  Wait: 7,9 \vspace{3pt}

\noindent 11. $\begin{array}{l}
  \bound\notequals\zero \mlc w\equals \zero \mlc\ada y\bigl(|y\successor|\mleq\bound\mli\ade x(x\equals y\successor)\bigr) \mli   \ade x(x\equals\zero\successor)
\end{array}$  \ \ $\ade$-Choose: 10 \vspace{3pt}

\noindent 12. $\begin{array}{l}
  \bound\notequals\zero \mlc \ade x(x\equals \zero)\mlc\ada y\bigl(|y\successor|\mleq\bound\mli\ade x(x\equals y\successor)\bigr) \mli   \ade x(x\equals\zero\successor)
\end{array}$  \ \  Wait: 6,11 \vspace{3pt}

\noindent 13. $\begin{array}{l}
\ade x(x\equals\zero\successor)
\end{array}$  \ \  MP: 1,2,5,12 \vspace{3pt}
\end{proof}

\section{On the Polynomial Time Induction  rule}\label{ss14}

\begin{center}
\begin{picture}(258,55)
\put(0,35){$E(\zero)\mlc F(\zero)\hspace{45pt}  E(s)\mlc F(s)\mli E(s\successor)\adc \bigl(F(s\successor )\mlc E(s)\bigr)$}
\put(0,25){\line(1,0){258}}
\put(88,10){$s\mleq \tau\mli E(s)\mlc F(s)$}
\end{picture}
\end{center}

Induction is the cornerstone of every system of arithmetic. The many versions of formal arithmetic studied in the literature  (see \cite{Hajek}) mainly differ in varying --- typically weakening --- the unrestricted induction of the basic $\pa$, which is nothing but our Axiom 7.  
In $\arfour$, induction comes in two forms: Axiom 7, and the above-displayed PTI rule. Axiom 7, along with the other axioms of \pa, is taken to preserve the full power of \pa.
But it is limited to elementary formulas and offers no inductive mechanism applicable to computational problems in general. The role of  PTI  is  to provide such a missing mechanism.  

A naive attempt to widen the induction of $\pa$ would be to remove, from Axiom 7, the condition requiring that $F(x)$ be an elementary formula. This would be a terribly wrong idea though.   The resulting scheme would not even be a scheme of computable problems, let alone efficiently computable problems. Weakening the resulting scheme by additionally  replacing  
the blind quantifiers with choice quantifiers, resulting in (a scheme equivalent to) 
\begin{equation}\label{nov14a}
F(\zero)\mlc \ada x\bigl(F(x)\mli F(x\successor )\bigr)\mli \ada xF(x),
\end{equation}
would not fix the problem, either. The intuitive reason why (\ref{nov14a}) is unsound with respect to the semantics of computability logic, even if the underlying concept of interest is computability-in-principle without any regard for efficiency, is the following. In order to solve  $F(s)$ for an arbitrary $s$ (i.e., solve the problem $\ada xF(x)$), one would need to ``modus-ponens'' $F(x)\mli F(x\successor )$ with $F(\zero)$ to compute $F(1)$, then further ``modus-ponens'' $F(x)\mli F(x\successor )$ with $F(1)$ to compute $F(2)$, etc. up to $F(s)$. This would thus require $s$ ``copies'' of the resource $F(x)\mli F(x\successor )$. But the trouble is that only one copy of this resource   is available in the antecedent of (\ref{nov14a})!

The problem that we just pointed out can be neutralized by taking the following {\bf rule} instead of the  {\em formula} (scheme) (\ref{nov14a}): 
\[\frac{F(\zero)\hspace{45pt}  \ada x\bigl(F(x)\mli F(x')\bigr)}{\ada xF(x)}.\]
Taking into account that both semantically and syntactically $\ada xY(x)$ (in isolation) is equivalent to just $Y(s)$, we prefer to rewrite the above in the following form:
\begin{equation}\label{nov14c}
\frac{F(\zero)\hspace{35pt}   F(s)\mli F(s')}{F(s)}.
\end{equation}
Unlike the situation with (\ref{nov14a}), the resource $F(s)\mli F(s')$ comes in an unlimited supply in (\ref{nov14c}). As a rule, (\ref{nov14c}) assumes that the premise $F(s)\mli F(s')$ has already been proven. If proven, we know how to solve it. And if we know how to solve it, we can solve it as many times as needed. In contrast, in the case of (\ref{nov14a}) we do not really know how to solve the corresponding problem of the antecedent, but rather we rely on the environment to demonstrate such a solution; and the environment is obligated to do so only once.

(\ref{nov14c}) can indeed be shown to be a computability-preserving rule.  As we remember, however, we are concerned with efficient computability rather than computability-in-principle. And, in this respect, (\ref{nov14c}) is not sound. Roughly, the reason is the following: the way of computing $F(s)$ offered by (\ref{nov14c}) would require performing at least as many MP-style steps as the numeric value of $s$ (rather than the dramatically smaller {\em size} of $s$). This would yield a computational complexity exponential in the size of $s$. (\ref{nov14c}) can be made sound by limiting $s$ to ``sufficiently small'' numbers as done below, where  $\tau$ is an arbitrary  $\bound$-term: 
\begin{equation}\label{nov14b}
\frac{F(\zero)\hspace{35pt}   F(s)\mli F(s')}{s\mleq \tau\mli F(s)}.
\end{equation}  
Here the value of $\tau$, being a $(\bound,\zero,\successor,\plus,\mult)$-combination, is guaranteed to be polynomial in (the value of) $\bound$. Hence, we are no longer getting an exponential complexity of computation. This, by the way, explains the presence of   ``$s\mleq\tau$'' in the conclusion of PTI. Unlike (\ref{nov14a}) and (\ref{nov14c}), (\ref{nov14b}) is indeed sound with respect to our present semantics of efficient computability. 

A problem with (\ref{nov14b}), however, is that it is not strong enough --- namely, not as strong as PTI, and with (\ref{nov14b}) instead of PTI, we cannot achieve the earlier promised extensional completeness of $\arfour$. What makes PTI stronger than (\ref{nov14b}) is that its right premise is weaker. Specifically, while the right premise of (\ref{nov14b}) requires the ability to compute $F(s\successor)$ only using $F(s)$ as a computational resource, the right premise of PTI allows using the additional resource $E(s)$ in such a computation. 

Note that, in a classical context, identifying the two sorts of conjunction, there would be no difference between (\ref{nov14b}) and PTI. First of all, the (sub)conjunct $E(s)$ in the consequent of the right premise of PTI would be meaningless and hence could be deleted, as it is already present in the antecedent. Second, the conjunction of $E(s)$ and $F(s)$ could be thought of as one single formula of induction, and thus PTI would become simply (\ref{nov14b}). 

Our context is not classical though, and the difference between PTI and (\ref{nov14b}) is huge. First of all, we cannot think of ``the conjunction'' of $E(s)$ and $F(s)$ as a single formula of induction, for that ``conjunction'' is $\adc$ in the consequent of the right premise while $\mlc$ elsewhere. For simplicity, consider the case $E(s)=F(s)$. 
Also, let us ignore the technicality imposed by the presence of  ``$E(s)$'' in the consequent of the right premise of PTI. 
Then that premise would look like $F(s)\mlc F(s)\mli F(s\successor)\adc F(s\successor)$ which, taking into account that $X\adc X$ is equivalent to $X$, would be essentially the same as simply  $F(s)\mlc F(s)\mli F(s\successor)$. This is a much weaker premise than the premise $F(s)\mli F(s\successor)$ of (\ref{nov14b}). It signifies that  computing a single copy of $F(s\successor)$ requires computing two copies of $F(s)$. By back-propagating this effect, it would eventually mean that computing $F(s)$ requires computing an exponential number of copies of $F(\zero)$, even when $s$ is ``small enough'' such as $s\mleq \tau$. 

The above sort of an explosion is avoided in PTI due to the presence of $E(s)$ in the consequent of the right premise --- the ``technical detail'' that we have ignored so far. The reemergence of $E(s)$ in the consequent of that premise makes this resource ``recyclable''. Even though computing  $F(s\successor)$  still requires computing {\em both} $E(s)$ {\em and} $F(s)$, a new copy of $E(s)$ comes ``for free'' as a side-product of this computation, and hence can be directly passed to another, parallel computation of  $F(s\successor)$. Such and all other parallel computations would thus require a new copy of $F(s)$ but not a new copy of $E(s)$, as they get the required resource $E(s)$ from the neighboring computation. So, a group of $n$ parallel computations of $ F(s\successor)$ would require $n$ copies of $F(s)$ and only one copy of $E(s)$. This essentially cuts the demand on resources at each step    (for each $s$) by half, and the eventual number of copies of $E(\zero)\mlc F(\zero)$ to be computed will be of the order of $s$ rather than $2^s$. How this effect is exactly achieved will be clear after reading the following section.

\section{The soundness of $\arfour$}\label{sectsound}
This section is devoted to proving the soundness part of Theorem \ref{tt1}. It means showing that any $\arfour$-provable formula $X$ (identified with its standard interpretation $X^\dagger$) has a polynomial time solution, and that, furthermore, such a solution for $X$ can be effectively extracted from any $\arfour$-proof of $X$. 

We prove the above by  induction on the lengths of $\arfour$-proofs.  

Consider any $\arfour$-provable formula $X$. 

For the basis of induction, assume $X$ is an axiom of $\arfour$. Let us say that an elementary $\arfour$-formula $G$ is {\bf true}\label{itrue} iff, for any bounded valuation $e$, $e[G]$ is  true in the standard arithmetical sense, i.e., $\win{G^\dagger}{e}\emptyrun=\pp$. 

If $X$ is a logical axiom or a Peano axiom, then it is a true elementary formula and therefore is ``computed'' by a machine that makes no moves at all. The same holds for the case when $X$ is Axiom 13, remembering that, for any bounded valuation $e$, the size of $e(s)$ (whatever variable $s$) never exceeds $e(\bound)$.

If $X$ is $\ade x(x\equals\zero)$ (Axiom 8), then it is computed by a machine that makes the move $0$ and never makes any moves after that. 

If $X$ is $s\equals\zero\add s\notequals\zero$ (Axiom 9), then it is computed by a machine that reads the value $e(s)$ of $s$ from the valuation tape and, depending on whether that value is $0$ or not, makes the move $0$ or $1$, respectively.

If $X$ is $|s\successor|\mleq\bound\mli\ade x(x\equals s\successor)$ (Axiom 10),  it is computed by a machine that reads the value $e(s)$ of $s$ from the valuation tape, then finds (the binary numeral) $c$ with $c\equals e(s)\plus 1$, compares its size with $e(\bound)$ (the latter also read from the valuation tape) and, if $|c|\mleq e(\bound)$, makes $1.c$ as its only move in the game.  

Similarly, if $X$ is $|s0|\mleq\bound\mli\ade x(x\equals s0)$ (Axiom 11),  it is computed by a machine that reads the value $e(s)$ of $s$ from the valuation tape, then finds (the binary numeral) $c$ with $c=e(s)0$, compares its size with $e(\bound)$  and, if $|c|\mleq e(\bound)$, makes $1.c$ as its only move in the game.   

Finally, if $X$ is $\ade x(s\equals x0\add s\equals x1)$ (Axiom 12), it is computed by a machine that reads the value $e(s)$ of $s$ from the valuation tape, then finds the binary predecessor $c$ of $e(s)$, and makes the two moves $c$ and $0$ or $c$ and $1$, depending whether the last digit of $e(s)$ is $0$ or $1$, respectively. 

Needless to point out that, in all of the above cases, the machines that solve the axioms run in polynomial time. And, of course, such machines can be constructed effectively.

For the inductive step,  suppose $X$ is obtained from premises $X_1,\ldots,X_k$ by one of the four logical rules. By the induction hypothesis, we know how to (effectively) construct a polynomial time solution for each $X_i$. Then, by the results of \mbox{Section \ref{ss7}} on the uniform-constructive soundness of the four logical rules, we also know how to construct a polynomial time solution for $X$.

Finally, suppose $X$ is $s\mleq\tau\mli E(s)\mlc F(s)$, where $\tau$ is a $\bound$-term, and $X$ is obtained by PTI as follows: 
\[\frac{E(\zero )\mlc F(\zero )\hspace{20pt}   E(s)\mlc F(s)\mli E(s\successor)\adc \bigl(F(s\successor )\mlc E(s)\bigr) }{s\mleq\tau\mli E(s)\mlc F(s)}.\]

By the induction hypothesis, the following two problems have polynomial time solutions --- and, furthermore, we know how to construct such solutions:

\begin{equation}\label{sept200}
E(\zero )\mlc F(\zero );
\end{equation}
\begin{equation}\label{sept20}
 E(s)\mlc F(s)\mli E(s\successor)\adc \bigl(F(s \successor )\mlc E(s)\bigr). 
\end{equation}
Then the same holds for the following four problems:
\begin{equation}\label{sept200q}
E(\zero );
\end{equation}
\begin{equation}\label{sept200w}
F(\zero );
\end{equation}
\begin{equation}\label{sept20e}
 E(s)\mlc F(s)\mli  E(s \successor ); 
\end{equation}
\begin{equation}\label{sept20r}
 E(s)\mlc F(s)\mli  E(s)\mlc  F(s\successor ) . 
\end{equation}

For (\ref{sept200q}) and (\ref{sept200w}), this is so because $\clfour\vdash P_1\mlc P_2\mli P_i$ ($i=1,2$), whence $\clthree$ proves both $E(\zero)\mlc F(\zero)$ $\mli E(\zero)$ and $E(\zero)\mlc F(\zero)\mli F(\zero)$,  whence --- by the uniform-constructive soundness of $\clthree$ --- we know how to construct polynomial time solutions for these two problems, whence --- by the polynomial time solvability of (\ref{sept200}) and the closure of this property (in the strong sense of Theorem \ref{april22}) under Modus Ponens --- we also know how to construct polynomial time solutions for $E(\zero)$ and $F(\zero)$. With (\ref{sept20}) instead of (\ref{sept200}),  the arguments  for (\ref{sept20e}) and (\ref{sept20r}) are similar, the first one relying  on the fact that $\clfour$ proves $(P_1\mli P_2\adc Q)\mli (P_1\mli P_2)$, and the second one relying on the fact that  $\clfour$ proves $\bigl(P_1\mli P_2\adc (Q_1\mlc Q_2)\bigr)\mli (P_1\mli Q_2\mlc Q_1)$.

Throughout the rest of this proof, assume some arbitrary bounded valuation $e$ to be fixed. Correspondingly, when we write $\bound$ or $\tau$, they are to be understood as $e(\bound)$ or $e(\tau)$. As always, saying ``polynomial'' means ``polynomial in $\bound$''.

For a formula $G$ and a positive integer $n$, we will be using the abbreviation \[\pst^n G\label{ipst}\] for the $\mlc$-conjunction  $G\mlc\ldots\mlc G$ of $n$ copies of $G$. If here $n=1$, $\pst^n G$ simply means $G$.\vspace{7pt} 

{\bf Claim 1}.{\em  
For any integer  $k\in\{1,\ldots,\tau\}$,    the following problem has a polynomial time solution which, in turn, can be constructed in polynomial time:}
\begin{equation}\label{sep25}E(s)\mlc \pst^{k\plus 1}F(s)\mli E(s\successor )\mlc \pst^{k}F(s\successor ).\end{equation}

\begin{subproof}  In this proof and later,  we use the term ``{\bf synchronizing}''\label{imatching} to mean applying copycat between two (sub)games of the form $A$ and $\gneg A$. This means copying one player's moves made in $A$ as the other player's moves in $\gneg A$, and vice versa. The effect achieved this way is that the games to which $A$ and $\gneg A$  eventually evolve (the final positions hit by them, that is) will be of the form $A'$ and $\gneg A'$, that is, one will remain the negation of the other, so that one will be won by a given player iff the other is lost by the same player. We already saw an application of this idea/technique in the proof of Theorem \ref{april22}. Partly for this reason and partly because now we are dealing with a more complicated case, our present proof 
 will be given in less detail than the proof of Theorem \ref{april22} was. 

Here is a solution/strategy for (\ref{sep25}). While playing the real play of (\ref{sep25}) on  valuation $e$, also play, in parallel, one imaginary copy of (\ref{sept20e}) and $k$ imaginary copies of (\ref{sept20r}) on the same valuation $e$, using the strategies for (\ref{sept20e}) and (\ref{sept20r}) whose existence we already know. In  this mixture of the real and imaginary plays, do the following:
\begin{itemize}
\item Synchronize the $F(s)$ of the antecedent of each $i$th copy of (\ref{sept20r}) with the $i$th conjunct of the  $\pst^{k\plus 1}F(s)$ part of the antecedent of (\ref{sep25}). 
\item Synchronize the $E(s)$ of the antecedent of the first copy of (\ref{sept20r}) with the $E(s)$ of the antecedent of (\ref{sep25}). 
\item Synchronize the $E(s)$ of the antecedent of each  copy $\#(i\plus 1)$ of (\ref{sept20r}) with the $E(s)$ of the consequent of copy $\#i$ of (\ref{sept20r}).  
\item Synchronize the $E(s)$ of the antecedent of (the single copy of)  (\ref{sept20e}) with the $E(s)$ of the consequent of copy $\#k$ of (\ref{sept20r}).
\item Synchronize the $F(s)$ of the antecedent of   (\ref{sept20e}) with the last conjunct of the   $\pst^{k\plus 1}F(s)$ part of the antecedent of (\ref{sep25}).
\item Synchronize the $E(s\successor )$ of the consequent  of (\ref{sept20e}) with the $E(s\successor )$ of the consequent of (\ref{sep25}). 
\item Synchronize the $F(s\successor )$ of  the consequent  of each copy $\#i$ of (\ref{sept20r}) with the $i$th conjunct of the $\pst^{k}F(s\successor )$ part of the consequent of (\ref{sep25}).
\end{itemize}

Below is an illustration of such synchronization arrangements --- indicated by arcs --- for the case $k=3$:

\begin{center}
\begin{picture}(336,120)

\put(0,110){$(\ref{sep25})$:}
\put(50,110){$E(s)\mlc F(s)\mlc F(s)\mlc F(s)\mlc F(s)\mli E(s\successor )\mlc F(s\successor )\mlc F(s\successor)\mlc F(s\successor)$}
\put(0,85){$(\ref{sept20r})_1$:}
\put(50,85){$ E(s)\mlc F(s)\mli E(s)\mlc   F(s\successor )$}
\put(0,60){$(\ref{sept20r})_2$:}
\put(50,60){$ E(s)\mlc F(s)\mli E(s)\mlc   F(s\successor )$}
\put(0,35){$(\ref{sept20r})_3$:}
\put(50,35){$ E(s)\mlc F(s)\mli E(s)\mlc   F(s\successor )$}
\put(0,10){$(\ref{sept20e})$:}
\put(50,10){$ E(s)\mlc F(s)\mli  E(s\successor )$}
\put(60,105){\line(0,-1){10}}
\put(90,105){\line(0,-1){10}}
\put(120,105){\line(-4,-5){28}}
\put(151,105){\line(-1,-1){60}}
\put(180,105){\line(-1,-1){84}}
\put(124,30){\line(-6,-1){60}}
\put(124,55){\line(-6,-1){60}}
\put(124,80){\line(-6,-1){60}}
\put(142,12){\line(1,0){71}}
\put(213,12){\line(0,1){94}}
\put(172,37){\line(1,0){144}}
\put(316,37){\line(0,1){68}}
\put(172,62){\line(1,0){110}}
\put(282,62){\line(0,1){44}}
\put(172,87){\line(1,0){74}}
\put(246,87){\line(0,1){19}}

\end{picture}
\end{center}

Of course, the strategy that we have just described  can be constructed effectively and, in fact, in polynomial time, from the strategies for (\ref{sept20e}) and (\ref{sept20r}). Furthermore, since the latter run in polynomial time, obviously so does our present one. It is left to the reader to verify that our strategy indeed wins (\ref{sep25}).
\end{subproof}

Now, the sought polynomial time solution for 
\begin{equation}\label{oct7}
 s\mleq \tau\mli E(s)\mlc F(s)  
\end{equation}
on valuation $e$  will go like this. Read the value $d=e(s)$ of $s$ from the valuation tape. Also read the value  of $\bound$    and, using it, compute the value $c$ of $\tau$. Since $\tau$ is a $(\zero,\successor,\plus,\times)$-combination of $\bound$, computing $c$ only takes a polynomial amount of steps.  If $d\mgreater c$, do nothing --- you are the winner (again, comparing $d$ with $c$, of course, takes only a polynomial amount of steps). Otherwise,  using the strategy from Claim 1, for each $a\in\{0,\ldots,d-1\}$, play (a single copy of) the imaginary game $G_a$ on valuation $e$, defined by 
\[ G_a \ = \  E(a)\mlc \pst^{d-a\plus 1}F(a)\mli E(a\successor )\mlc \pst^{d-a}F(a\successor ).\]
Namely, the effect of playing $G_a$ on valuation $e$ is achieved by playing  $E(s)\mlc \pst^{d-a\plus 1}F(s)\mli E(s\successor )\mlc \pst^{d-a}F(s\successor )$ on the valuation $e'$ which sends $s$ to $a$ and agrees with $e$ on all other variables. 
In addition, using the strategy for (\ref{sept200q}), play a single imaginary copy of $E(\zero )$ on $e$, and, using the strategy for (\ref{sept200w}), play $d\plus 1$  imaginary copies of $F(\zero)$ on $e$. In this mixture of imaginary plays and the real play of (\ref{oct7}), do the following:
\begin{itemize}
\item Synchronize the above $E(\zero)$ and $F(\zero)$s with the corresponding conjuncts   of the antecedent of $G_0$. 
\item Synchronize the antecedent of each $G_{i\plus 1}$ with the consequent of $G_i$. 
\item Synchronize the consequent of $G_{d-1}$ with the consequent of (\ref{oct7}). 
\end{itemize}
Below is an illustration of these synchronization arrangements for the case $d=11$ (decimal $3$): 

\begin{center}
\begin{picture}(333,209)

\put(50,189){$11\mleq \tau$}
\put(74,189){$\mli$}
\put(90,189){$ \underbrace{E(11)\mlc F(11)}  $} 
\put(50,148){$ \underbrace{E(10)\mlc  F(10)\mlc F(10)} $}
\put(152,148){$\mli$}
\put(170,148){$  \overbrace{E(11 )\mlc  F(11 )} $}
\put(50,101){$ \underbrace{E(1 )\mlc  F(1 )\mlc F(1 )\mlc F(1 )} $}
\put(169,101){$\mli$}
\put(186,101){$\overbrace{ E(10 )  \mlc F(10)\mlc F(10 ) }$}
\put(50,50){$E(0 )\mlc  F(0 )\mlc F(0 )\mlc F(0 ) \mlc F(0 )$}
\put(195,50){$\mli$}
\put(209,50){$ \overbrace{E(1  )  \mlc F(1 )\mlc F(1  )\mlc F(1  )} $}
\put(50,10){$ E(\zero)$}
\put(80,10){$ F(\zero)$}
\put(110,10){$ F(\zero)$}
\put(140,10){$ F(\zero)$}
\put(170,10){$ F(\zero)$}
\put(0,189){$(\ref{oct7})$:}
\put(0,148){$G_{10}$:}
\put(0,101){$G_{1 }$:}
\put(0,50){$G_{0 }$:}
\put(60,22){\line(0,1){21}}
\put(90,22){\line(0,1){21}} 
\put(120,22){\line(0,1){21}}
\put(150,22){\line(0,1){21}}
\put(180,22){\line(0,1){21}}
\put(265,65){\line(-6,1){158}}
\put(234,116){\line(-6,1){135}}
\put(201,163){\line(-5,1){80}}
\end{picture}
\end{center}

Again, with some thought, one can see that our strategy --- which, of course, can be constructed effectively --- runs in polynomial time, and it indeed wins (\ref{oct7}), as desired. 

\section{Some admissible logical rules of $\arfour$} 
When we say that a given rule is {\bf admissible} in $\arfour$, we mean that, whenever all premises of any given instance of the rule are provable in $\arfour$, so is the conclusion.

 This section is devoted to observing the admissibility of a number of rules. From our admissibility proofs it can be seen that these rules are admissible not only in $\arfour$ but also in any $\clthree$-based applied theory in general. This is the reason why these rules can be called ``logical''.  Such rules  can and will be used as shortcuts in $\arfour$-proofs. Many of such rules can be further strengthened, but in this paper --- for the sake of simplicity and at the expense of (here) unnecessary generality --- we present them only in forms  that (and as much as they) will be actually used in our further treatment.  

In the formulations of some of the rules we use the expression
\[ E^{\vee} [F].\]
It means the same as the earlier-used $E[F]$, i.e., a formula $E$ with a fixed positive surface occurrence of a subformula $F$; only, in $E^{\vee} [F]$, the additional (to being a positive surface occurrence) condition on the occurrence of $F$ is that this occurrence is not in the scope of any operator other than 
$\mld$.    

\subsection{$\clfour$-Instantiation} 
\[\frac{}{\ \ F\ \ },\]
where $F$ is any $\arfour$-formula which is an instance of some $\clfour$-provable formula $E$.  

Unlike all other rules given in the present section, this one, as we see, takes no premises. It is a ``rule'' that simply allows us to jump to a formula $F$ as long as it is an instance of a $\clfour$-provable formula. 

\begin{fact}\label{Instclosure}
$\clfour$-Instantiation is admissible in $\arfour$.
\end{fact}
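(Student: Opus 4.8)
The plan is to reduce this to Theorem \ref{ccs} together with the observation that $\clthree$ is analytic in a sense that keeps proofs confined to the $\arfour$ language. Since $\clfour$-Instantiation takes no premises, its admissibility amounts to the single claim that every $\arfour$-formula $F$ which is an instance of some $\clfour$-provable formula $E$ is itself provable in $\arfour$.

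First I would invoke the soundness half of Theorem \ref{ccs}: as $\clfour\vdash E$, every instance of $E$ is provable in $\clthree$, so in particular the given instance $F=E^\heartsuit$ satisfies $\clthree\vdash F$. I would then fix any $\clthree$-proof of $F$ and read it as a finite derivation tree whose root is labeled $F$, whose internal nodes are derived from their children by one of the three rules of $\clthree$, and whose leaves are $\clthree$-axioms (classically valid elementary formulas).

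Next I would establish the key structural fact that each rule of $\clthree$ ($\add$-Choose, $\ade$-Choose, Wait) produces its premises from its conclusion by purely local surgery that can never introduce a predicate or function letter absent from the conclusion. Indeed, $\add$-Choose replaces a surface subformula $H_0\add\ldots\add H_n$ by a single disjunct $H_i$; $\ade$-Choose replaces $\ade x H(x)$ by $H(s)$ for a variable $s$; and Wait forms the elementarization $\elz{F}$ (replacing surface choice subformulas by $\twg$ or $\tlg$) together with the ordinary premises obtained by replacing a surface $\adc$- or $\ada$-subformula by a component or an $s$-instance. In each case the premise is built from exactly the same nonlogical vocabulary as the conclusion, variables being shared across all three languages. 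Hence, whenever the conclusion is an $\arfour$-formula, so is every premise.

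Applying this observation by induction downward through the derivation tree from its $\arfour$-formula root $F$, I conclude that every label occurring in the tree is an $\arfour$-formula. Its leaves are then classically valid elementary $\arfour$-formulas, i.e.\ logical axioms of $\arfour$, and its internal steps use only $\add$-Choose, $\ade$-Choose and Wait, which are among the logical rules of $\arfour$. Flattening the tree into a sequence therefore yields a genuine $\arfour$-proof of $F$, so $\arfour\vdash F$, as required. The main obstacle is precisely the rule-by-rule verification that passing to a premise never escapes the $\arfour$ language; this is routine but must be checked with care for Wait, where both the elementarization and the fresh-variable instantiation of $\ada$-subformulas have to be seen to remain within the four-function-letter, $\equals$-only vocabulary.
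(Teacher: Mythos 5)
Your proposal is correct and follows essentially the same route as the paper: apply the soundness half of Theorem \ref{ccs} to get $\clthree\vdash F$, then observe that $\clthree$ is analytic so the whole $\clthree$-proof stays within the $\arfour$ language and is therefore already a $\arfour$-proof. The only difference is that you spell out the rule-by-rule verification of analyticity that the paper merely asserts.
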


\begin{proof} Assume a $\arfour$-formula $F$ is an instance of some $\clfour$-provable formula. Then, by Theorem \ref{ccs}, $\clthree\vdash F$.  $\clthree$ is an analytic system,   in the sense that it never introduces into premises any function or predicate letters  that are not present in the conclusion. So, all formulas  involved in the $\clthree$-proof of $F$ will be $\arfour$-formulas. This includes the axioms used in the proof. But such axioms are also axioms of $\arfour$. And $\arfour$  has all inference rules that $\clthree$ does. Hence, the above $\clthree$-proof of $F$ will be a $\arfour$-proof of $F$ as well. 
\end{proof}

\subsection{Transitivity (TR)}\label{itr}
\[\frac{E_1\mli F\hspace{30pt}F\mli E_2}{E_1\mli E_2}\]

\begin{fact}\label{Transclosure}
Transitivity is admissible in $\arfour$.
\end{fact}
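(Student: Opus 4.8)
The plan is to reduce Transitivity to two tools already at our disposal: $\clfour$-Instantiation (Fact~\ref{Instclosure}) and Modus Ponens. The idea is to isolate a single $\clfour$-provable ``cut formula'' whose instance, combined with the two premises of the rule via one MP step, delivers the conclusion. This is the same template that I expect to reuse for the other admissible rules of this section.

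First I would exhibit the $\clfour$-formula
\[(P_1\mli P_2)\mlc(P_2\mli P_3)\mli(P_1\mli P_3),\]
where $P_1,P_2,P_3$ are pairwise distinct general letters, and show that it is provable in $\clfour$. The cleanest way is to build the proof top-down from an elementary axiom by the Match rule of $\clfour$. Start with the elementary formula
\[(p_1\mli p_2)\mlc(p_2\mli p_3)\mli(p_1\mli p_3),\]
which is just the classical transitivity of implication, hence a classical tautology and therefore a logical axiom of $\clfour$. Writing this formula in negation normal form as $(p_1\mlc\gneg p_2)\mld(p_2\mlc\gneg p_3)\mld\gneg p_1\mld p_3$, one checks that $p_1$ has exactly one positive and one negative surface occurrence; Match then lifts $p_1$ to $P_1$ (the introduced letter $p_1$ not occurring in the resulting conclusion, as required). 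Repeating this for $p_2$ and then $p_3$ — at each stage the matched letter again has precisely one positive and one negative surface occurrence and is absent from that step's conclusion — yields the desired general formula. (Alternatively, since $\clfour$ extends affine logic, this transitivity principle is provable already there; but the three-step Match argument is self-contained.)

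Next I would observe that the concrete formula
\[(E_1\mli F)\mlc(F\mli E_2)\mli(E_1\mli E_2)\]
is an instance of the general formula above, obtained by the substitution sending $P_1,P_2,P_3$ to $E_1,F,E_2$, respectively (renaming bound variables beforehand if needed, so that Convention~\ref{jan26} and the variable side-conditions on substitutions are respected). By $\clfour$-Instantiation (Fact~\ref{Instclosure}) it follows that
\[\arfour\vdash (E_1\mli F)\mlc(F\mli E_2)\mli(E_1\mli E_2).\]
Finally, assuming the two premises $E_1\mli F$ and $F\mli E_2$ of the rule are $\arfour$-provable, I would apply Modus Ponens with $n=1$, taking $E_1\mli F$ and $F\mli E_2$ as the left premises and the formula just obtained as the rightmost premise $F_0\mlc F_1\mli(E_1\mli E_2)$; its conclusion is exactly $E_1\mli E_2$. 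This gives $\arfour\vdash E_1\mli E_2$, completing the argument.

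The only step requiring genuine care is verifying that the general transitivity formula is $\clfour$-provable, and within that, the polarity bookkeeping: confirming that at each of the three Match applications the matched general letter occurs exactly once positively and once negatively at the surface, and that the freshly introduced elementary letter does not appear in the conclusion of its Match step. Everything else is a routine invocation of Fact~\ref{Instclosure} and of the Modus Ponens rule.
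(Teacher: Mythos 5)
Your proposal is correct and follows essentially the same route as the paper: derive $(P_1\mli Q)\mlc(Q\mli P_2)\mli(P_1\mli P_2)$ in $\clfour$ by applying Match three times to the classical tautology, pass to the needed instance by $\clfour$-Instantiation, and finish with one application of Modus Ponens to the two premises. The extra polarity bookkeeping you spell out for the Match steps is fine but not something the paper dwells on.
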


\begin{proof} Assume 
\begin{equation}\label{nov6a}
\mbox{\em $\arfour\vdash E_1\mli F$ \ and \  $\arfour\vdash F\mli E_2$}.
\end{equation}
$\clfour$ proves $(P_1\mli Q)\mlc (Q\mli P_2)\mli (P_1\mli P_2)$ (it is derived from the classical tautology $(p_1\mli q)\mlc (q\mli p_2)\mli$ $(p_1\mli p_2)$ by Match applied three times). Hence, by $\clfour$-Instantiation, 
\begin{equation}\label{nov6b}
 \arfour\vdash (E_1\mli F)\mlc (F\mli E_2)\mli (E_1\mli E_2).
\end{equation}
Now, from (\ref{nov6a}) and (\ref{nov6b}), by Modus Ponens, we get the desired $\arfour\vdash E_1\mli E_2$.
\end{proof}

\subsection{$\ada$-Elimination}\label{iadael}
\[\frac{\ \ada x F(x)\ }{F(s)},\]
where $x$ is any variable, $F(x)$ is any formula,  $s$ is any variable not bound in the premise, and $F(s)$ is the result of replacing all free occurrences of $x$ by $s$ in $F(x)$.

\begin{fact}\label{adaelclosure}
$\ada$-Elimination is admissible in $\arfour$.
\end{fact}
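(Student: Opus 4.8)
The plan is to follow exactly the template used above for Transitivity: reduce the rule to an ordinary Modus Ponens step whose implicational premise is supplied by $\clfour$-Instantiation. Concretely, assume $\arfour\vdash\ada xF(x)$. It then suffices to establish $\arfour\vdash\ada xF(x)\mli F(s)$, for then Modus Ponens immediately yields the desired $\arfour\vdash F(s)$.

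To obtain the implication $\ada xF(x)\mli F(s)$ I would invoke $\clfour$-Instantiation, so the real work is to exhibit a single $\clfour$-provable schema of which it is an instance. The natural candidate is $\ada xP(x)\mli P(s)$, where $P$ is a unary general letter and $s$ is a free variable distinct from the bound $x$. Indeed, $\ada xF(x)\mli F(s)$ is precisely the instance of this schema under the substitution sending $P$ to the one-argument formula $F$ (so that $P(x)$ becomes $F(x)$ and $P(s)$ becomes $F(s)$); since $s$ is not bound in the premise, this substitution respects the variable conventions of Convention~\ref{jan26}, and no capture occurs.

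It remains to give the $\clfour$-proof of $\ada xP(x)\mli P(s)$, i.e.\ of $\ade x\gneg P(x)\mld P(s)$ in negation normal form. I would start from the classically valid axiom $p(s)\mli p(s)$; apply Match to replace the two occurrences of the elementary letter $p$ by the general letter $P$, obtaining $P(s)\mli P(s)$, i.e.\ $\gneg P(s)\mld P(s)$; and finally apply $\ade$-Choose to the \emph{negative} occurrence $\gneg P(s)$ (legitimate because $s$ has no bound occurrences there), abstracting it to $\ade x\gneg P(x)$ while the positive occurrence $P(s)$ is left untouched. This produces $\ade x\gneg P(x)\mld P(s)$, as required. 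Combining, $\clfour\vdash\ada xP(x)\mli P(s)$, whence $\arfour\vdash\ada xF(x)\mli F(s)$ by $\clfour$-Instantiation (Fact~\ref{Instclosure}), and $\arfour\vdash F(s)$ by Modus Ponens.

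The only delicate point --- and the step I expect to need the most care --- is the bookkeeping in the $\ade$-Choose application: one must abstract exactly the single negative occurrence of the argument $s$ while keeping the positive occurrence of $s$ intact, and then check that the induced substitution turning $\ada xP(x)\mli P(s)$ into $\ada xF(x)\mli F(s)$ introduces no clash between $s$, the bound variable $x$, and any variables already occurring in $F$. All of this is routine given the conventions already in force, so no genuinely new difficulty arises beyond what the Transitivity argument already illustrated.
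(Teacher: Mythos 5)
Your proof is correct and is essentially the paper's own argument: derive $P(s)\mli P(s)$ from the axiom $p(s)\mli p(s)$ by Match, obtain $\ada xP(x)\mli P(s)$ by $\ade$-Choose applied to the negative literal, then pass to $\ada xF(x)\mli F(s)$ by $\clfour$-Instantiation and finish with Modus Ponens. The extra bookkeeping you flag about which occurrence of $s$ gets abstracted is handled exactly as you describe and poses no difficulty.
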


\begin{proof} Assume $\arfour\vdash \ada x F(x)$. $p(s)\mli p(s)$ is classically valid and hence, by Match, $\clfour\vdash P(s)\mli P(s)$. From here, by $\ade$-Choose, $\clfour\vdash \ada xP(x)\mli P(s)$. Then, by $\clfour$-Instantiation,  $\arfour\vdash  \ada x F(x)\mli F(s)$. Now, by Modus Ponens, $\arfour\vdash   F(s)$. 
\end{proof}

\subsection{$\add$-Elimination}\label{iaddel}
\[\frac{F_1\add\ldots\add F_n\hspace{20pt}F_1\mli E\hspace{20pt}\ldots\hspace{20pt}F_n\mli E}{E}\]

\begin{fact}\label{addelclosure}
$\add$-Elimination is admissible in $\arfour$.
\end{fact}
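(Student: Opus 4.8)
The plan is to follow exactly the template of the preceding admissibility facts (Transitivity, $\ada$-Elimination): locate a single $\clfour$-provable ``case-analysis'' scheme, import it into $\arfour$ via $\clfour$-Instantiation, and finish with one application of Modus Ponens. Concretely, assume $\arfour\vdash F_1\add\ldots\add F_n$ and $\arfour\vdash F_i\mli E$ for every $i\in\{1,\ldots,n\}$. I would first establish
\[\clfour\vdash (P_1\add\ldots\add P_n)\mlc (P_1\mli Q)\mlc\ldots\mlc (P_n\mli Q)\mli Q,\]
where $P_1,\ldots,P_n,Q$ are distinct $0$-ary general letters. Granting this, $\clfour$-Instantiation (Fact \ref{Instclosure}), applied under the substitution sending each $P_i$ to $F_i$ and $Q$ to $E$ --- a legal substitution since the displayed formula is quantifier-free and so carries no bound variables to collide with --- yields $\arfour\vdash (F_1\add\ldots\add F_n)\mlc (F_1\mli E)\mlc\ldots\mlc (F_n\mli E)\mli E$. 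Modus Ponens with the $n\plus 1$ assumed premises then delivers $\arfour\vdash E$, as desired.

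The bulk of the work is the $\clfour$-derivation of the displayed scheme, and this is where I expect the only real subtlety. In negation normal form the antecedent's $\add$ becomes the surface $\adc$-subformula $\gneg P_1\adc\ldots\adc\gneg P_n$, so I would derive the formula by Wait. Its elementarization is $\twg\mld\ldots$ (the $\adc$-block collapses to $\twg$), hence a classically valid, and therefore axiomatic, elementary formula --- this supplies the special Wait-premise. It then remains to supply, for each $i$, the ordinary Wait-premise obtained by selecting the $i$-th $\adc$-component, namely $\gneg P_i\mld (P_1\mlc\gneg Q)\mld\ldots\mld (P_n\mlc\gneg Q)\mld Q$.

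Each such premise I would prove by two applications of Match followed by Wait: Match the occurrence $\gneg P_i$ against the positive $P_i$ inside the $i$-th disjunct $(P_i\mlc\gneg Q)$, replacing both by a fresh elementary letter $p_i$, and Match the final $Q$ against the $\gneg Q$ of that same disjunct, replacing both by a fresh elementary letter $q$. The result has no surface choice operators, so Wait reduces it to checking validity of its elementarization; since every remaining general literal elementarizes to $\tlg$, that elementarization is $\gneg p_i\mld (p_i\mlc\gneg q)\mld q$ padded with harmless $\tlg$-disjuncts, which is a classical tautology. The key observation making this go through --- and the point where the resource-sensitivity of $\clfour$ could have bitten, since $Q$ occurs negatively $n$ times --- is that the ``unused'' disjuncts $(P_j\mlc\gneg Q)$ for $j\neq i$ are inert: each elementarizes to $\tlg$ and so neither helps nor hinders the surrounding $\mld$. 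This mirrors the semantic picture that, once the environment commits to the $i$-th choice in the antecedent's $P_1\add\ldots\add P_n$, the machine need only exploit the $i$-th implication by copycat against the consequent $Q$, discarding the other copies.
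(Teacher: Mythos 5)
Your proposal is correct and follows essentially the same route as the paper: both reduce the rule to the $\clfour$-provability of $(P_1\add\ldots\add P_n)\mlc(P_1\mli Q)\mlc\ldots\mlc(P_n\mli Q)\mli Q$, followed by $\clfour$-Instantiation and Modus Ponens, with Match used exactly as you describe to link $P_i$ with the $i$-th implication and $Q$ with the consequent. The only difference is the order of rule applications inside the $\clfour$-derivation (the paper applies Match after a first Wait and introduces the $\add$ by a final Wait, whereas you introduce the $\add$ first and then handle each ordinary Wait-premise by Match and Wait), which is immaterial.
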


\begin{proof} Assume $\arfour$ proves all premises. For each $i\in\{1,\ldots,n\}$, the formula
\[p_i\mlc (\tlg\mli \twg)\mlc\ldots\mlc (\tlg\mli \twg)\mlc (p_i\mli q)\mlc (\tlg\mli \twg) \mlc\ldots \mlc (\tlg\mli\twg )\mli q\]
is a classical tautology and hence an axiom of $\clfour$. By Wait from the above, we   have
\[\clfour\vdash p_i\mlc (P_1\mli Q)\mlc\ldots \mlc (P_{i-1}\mli Q) \mlc (p_i\mli q)\mlc  (P_{i+1}\mli Q) \mlc \ldots \mlc (P_n\mli Q)\mli q.\]
Now, by Match applied twice, we get
\[\clfour\vdash P_i\mlc (P_1\mli Q)\mlc\ldots\mlc (P_n\mli Q)\mli Q.\]
We also have 
\[\clfour\vdash \tlg \mlc (\tlg\mli \twg)\mlc\ldots\mlc (\tlg\mli \twg)\mli \tlg\]
because the above formula is a classical tautology. From the last two facts, by Wait, we find
\[\clfour \vdash (P_1\add\ldots\add P_n)\mlc (P_1\mli Q)\mlc\ldots\mlc (P_n\mli Q)\mli Q\]
and hence, by $\clfour$-Instantiation,  
\[\arfour\vdash (F_1\add\ldots\add F_n)\mlc (F_1\mli E)\mlc\ldots\mlc (F_n\mli E)\mli E.\]
As all of the conjuncts of the antecedent of the above formula are $\arfour$-provable by our original assumption, Modus Ponens yields $\arfour\vdash E$.
\end{proof}

As an aside, one could show that the present rule with $\mld$ instead of $\add$, while admissible in classical logic, is not admissible in $\arfour$ or $\clthree$-based applied theories in general.

\subsection{Weakening}\label{iwk}
\[\frac{E^{\vee}[G_1\mld\ldots\mld G_m\mld  H_1\mld\ldots\mld H_n]}{E^{\vee} [G_1\mld\ldots\mld G_m\mld F\mld H_1\mld\ldots\mld H_n]},\]
where $m,n\geq 0$ and $m+n\not=0$.

\begin{fact}\label{Weakeningclosure}
Weakening is admissible in $\arfour$.
\end{fact}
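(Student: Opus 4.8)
The plan is to follow the same three-move template used for Transitivity, $\ada$-Elimination and $\add$-Elimination earlier in this section: reduce the admissibility claim to the $\arfour$-provability of a single implication, obtain that implication by $\clfour$-Instantiation from a $\clfour$-provable schema, and finish with Modus Ponens. Write $A$ for the premise $E^{\vee}[G_1\mld\ldots\mld G_m\mld H_1\mld\ldots\mld H_n]$ and $B$ for the conclusion $E^{\vee}[G_1\mld\ldots\mld G_m\mld F\mld H_1\mld\ldots\mld H_n]$. Since by hypothesis $\arfour\vdash A$, it suffices to show $\arfour\vdash A\mli B$, for then Modus Ponens yields $\arfour\vdash B$.

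To get $\arfour\vdash A\mli B$, I would first abstract $A\mli B$ into a $\clfour$-formula. The restriction built into the notation $E^{\vee}[\cdots]$ --- that the displayed occurrence sits in the scope of no operator other than $\mld$ --- means that, reading $A$ as a $\mld$-tree from its root, the displayed disjuncts $G_1,\ldots,G_m,H_1,\ldots,H_n$ together with the remaining context disjuncts, say $L_1,\ldots,L_k$, are exactly the leaves of that tree, and each such leaf is a positive surface occurrence (descending through $\mld$ preserves both positivity and surfaceness). Replace every leaf by a fresh general atom, applied if necessary to the tuple of variables occurring free in it, using the same letters for the leaves shared by $A$ and $B$ and a new letter $\mathcal{F}$ for the inserted disjunct $F$. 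This produces $\clfour$-formulas $\widehat{A}$ and $\widehat{B}$ with $A=\widehat{A}^{\heartsuit}$ and $B=\widehat{B}^{\heartsuit}$ under the obvious substitution $^{\heartsuit}$; hence $A\mli B$ is an instance of $\widehat{A}\mli\widehat{B}$, and by $\clfour$-Instantiation it is enough to prove $\clfour\vdash\widehat{A}\mli\widehat{B}$.

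This last provability is immediate by Wait. In negation normal form $\widehat{A}\mli\widehat{B}$ is $(\gneg\mathcal{L}_1\mlc\ldots\mlc\gneg\mathcal{H}_n)\mld(\mathcal{L}_1\mld\ldots\mld\mathcal{F}\mld\ldots\mld\mathcal{H}_n)$, which contains neither $\adc$ nor $\ada$ at the surface; thus Wait requires no ordinary premises, and it remains only to check that the special premise $\elz{\widehat{A}\mli\widehat{B}}$ is an axiom, i.e.\ classically valid. Elementarization turns the negative surface general literals $\gneg\mathcal{L}_i,\gneg\mathcal{G}_i,\gneg\mathcal{H}_j$ into $\twg$ and the positive ones into $\tlg$, so the left disjunct becomes $\twg\mlc\ldots\mlc\twg$; a disjunction one of whose disjuncts is a conjunction of $\twg$'s is classically true, so $\elz{\widehat{A}\mli\widehat{B}}$ is valid and hence an axiom of $\clfour$. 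Therefore $\widehat{A}\mli\widehat{B}$ follows by Wait, giving $\clfour\vdash\widehat{A}\mli\widehat{B}$, then $\arfour\vdash A\mli B$ by $\clfour$-Instantiation, and finally $\arfour\vdash B$ by Modus Ponens.

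I expect no genuine obstacle; the only thing to watch is the bookkeeping of the abstraction. The essential point --- and the reason the rule is stated for the $\mld$-only context $E^{\vee}[\cdot]$ rather than for arbitrary positive surface occurrences --- is precisely that a pure $\mld$-context lets the whole argument collapse to a single application of Wait: all leaves become general atoms, so no $\adc$- or $\ada$-subformulas survive at the surface and no ordinary Wait-premises (and hence no auxiliary induction on the structure of the formula) are needed. Verifying that $A\mli B$ is literally a substitution instance of $\widehat{A}\mli\widehat{B}$ --- respecting the conventions on free versus bound variables from Convention \ref{jan26} --- is routine and can be left to the reader, as in the neighboring admissibility proofs.
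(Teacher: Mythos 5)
Your overall route is exactly the paper's: reduce admissibility to $\arfour\vdash \mbox{\em Premise}\mli\mbox{\em Conclusion}$, obtain that implication by $\clfour$-Instantiation, and finish with Modus Ponens. The gap is in your justification of the $\clfour$-provability of the abstracted implication $\widehat{A}\mli\widehat{B}$. You claim its elementarization is classically valid because ``elementarization turns the negative surface general literals $\gneg\mathcal{L}_i$ into $\twg$.'' That is not what the definition says: the elementarization of a $\clfour$-formula replaces \emph{every} positive surface occurrence of a general literal by $\tlg$, and a general literal is a general atom \emph{with or without} negation, so each $\gneg\mathcal{L}_i$ is itself such a literal, occurring positively (it sits under no further $\gneg$). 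Hence $\elz{\widehat{A}\mli\widehat{B}}$ is $(\tlg\mlc\ldots\mlc\tlg)\mld(\tlg\mld\ldots\mld\tlg)$, which is false, and your single application of Wait is unavailable. Your reading directly contradicts Example \ref{j28g}: under it, $\elz{P\mli P\mlc P}$ would come out true and $P\mli P\mlc P$ would follow by Wait, whereas the paper computes $\elz{P\mli P\mlc P}=\twg\mli\tlg\mlc\tlg$ and uses precisely this to show that formula \emph{unprovable}.

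The repair is the standard copycat device: since each shared leaf letter occurs exactly once negatively (inside $\gneg\widehat{A}$) and once positively (inside $\widehat{B}$), apply Match to each such pair, bottom-up, replacing them by fresh elementary letters; only the letter $\mathcal{F}$ abstracting the inserted disjunct $F$ survives as a general letter. The resulting formula has elementarization $\gneg(\ell_1\mld\ldots\mld\ell_r)\mld(\ell_1\mld\ldots\mld\tlg\mld\ldots\mld\ell_r)$, which \emph{is} classically valid, so it follows by Wait with no ordinary premises (there are still no surface choice operators), and the chain of Matches then yields $\widehat{A}\mli\widehat{B}$. Note also that you cannot sidestep Match by abstracting the leaves into elementary letters instead: the leaves $G_i,H_j,F$ and the remaining context disjuncts are arbitrary, possibly non-elementary, formulas, and a substitution may send an elementary letter only to an elementary formula.
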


\begin{proof} Assume $\arfour$ proves the premise. It is not hard to see that $\mbox{\em Premise}\mli\mbox{\em Conclusion}$  can be obtained by $\clfour$-Instantiation, so it is also provable in $\arfour$. Hence, by Modus Ponens, $\arfour$ proves the conclusion.
\end{proof}

 \subsection{$\adc$-Introduction}\label{iadcintro}
\[\frac{E^{\vee}[F_1]\hspace{30pt}\ldots\hspace{30pt}E^{\vee}[F_n]}{E^{\vee} [F_1\adc\ldots\adc F_n]}\]

\begin{fact}\label{adcintroclosure}
$\adc$-Introduction is admissible in $\arfour$.
\end{fact}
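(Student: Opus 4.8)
The plan is to follow exactly the same template that the preceding admissibility results (Facts \ref{Transclosure}, \ref{adaelclosure}, \ref{addelclosure}) have established: reduce the rule to a single $\clfour$-provable implication, pull that implication into $\arfour$ by $\clfour$-Instantiation, and then discharge the premises by Modus Ponens. Concretely, assume $\arfour$ proves all $n$ premises $E^{\vee}[F_1],\ldots,E^{\vee}[F_n]$. The goal is to produce a $\clfour$-provable formula whose instance is
\[(E^{\vee}[F_1])\mlc\ldots\mlc(E^{\vee}[F_n])\mli E^{\vee}[F_1\adc\ldots\adc F_n],\]
after which Modus Ponens finishes the job.

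First I would isolate the logical content by abstracting. Since the occurrence of each $F_i$ in $E^{\vee}[F_i]$ is positive, surface, and lies in the scope of no operator other than $\mld$, the formula $E^{\vee}[F_i]$ can be written (up to the implicit $\mld$-context) as $D\mld F_i$, where $D=G_1\mld\ldots\mld G_m\mld H_1\mld\ldots\mld H_n$ collects the surrounding $\mld$-disjuncts and the remaining structure of $E$ is recorded by the $E^{\vee}[\cdot]$ notation. Thinking of $D$ as a single general letter $Q$ and each $F_i$ as a general letter $P_i$, the task reduces to showing
\[\clfour\vdash (Q\mld P_1)\mlc\ldots\mlc(Q\mld P_n)\mli\bigl(Q\mld(P_1\adc\ldots\adc P_n)\bigr),\]
and then lifting this through whatever deeper (non-$\mld$) context $E^{\vee}[\cdot]$ supplies. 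To prove the displayed $\clfour$-formula I would argue by Wait on the consequent: its elementarization is $\twg$ (the $\adc$-subformula elementarizes to $\twg$, and $\twg$ absorbs the disjunction), so the special Wait-premise is a classical tautology and hence an axiom; and for each $i$ the corresponding ordinary Wait-premise is $(Q\mld P_1)\mlc\ldots\mlc(Q\mld P_n)\mli(Q\mld P_i)$, which is derivable since, using Match to replace the relevant general letters by fresh elementary letters, its elementary skeleton $(q\mld p_1)\mlc\ldots\mlc(q\mld p_n)\mli(q\mld p_i)$ is a classical tautology. Thus the consequent $Q\mld(P_1\adc\ldots\adc P_n)$ is reached by Wait from an axiom together with these $n$ ordinary premises.

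The one genuinely delicate point — and the step I expect to be the main obstacle — is correctly handling the general $E^{\vee}[\cdot]$ context rather than the bare $\mld$-disjunction: I must verify that the $\clfour$-derivation respects the additional structure of $E$ sitting outside the $\mld$-block, so that the Wait step introducing $\adc$ fires at the intended surface occurrence and the $F_i$ literals line up as matching positive/negative pairs under Match. This is purely bookkeeping about occurrences and elementarization (the condition ``not in the scope of any operator other than $\mld$'' is exactly what guarantees the occurrence is a surface occurrence that Wait can act on), but it is where an incautious argument could slip. Once the $\clfour$-provability of the abstracted implication is secured, $\clfour$-Instantiation (Fact \ref{Instclosure}) transfers
\[(E^{\vee}[F_1])\mlc\ldots\mlc(E^{\vee}[F_n])\mli E^{\vee}[F_1\adc\ldots\adc F_n]\]
into $\arfour$, and Modus Ponens against the assumed proofs of the premises yields $\arfour\vdash E^{\vee}[F_1\adc\ldots\adc F_n]$, as required.
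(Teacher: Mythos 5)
Your argument is correct, but it takes a genuinely different route from the paper's. The paper never $\mlc$-conjoins the premises: it first reduces each $E^{\vee}[F_i]$ to $F_i\mld G$ (with $G$ the $\mld$-disjunction of the remaining top-level disjuncts), then applies the Wait rule of $\arfour$ itself to the $n$ premises together with $\twg$ to obtain $(F_1\mld G)\adc\ldots\adc(F_n\mld G)$, and only then invokes the $\clfour$-provable schema $(P_1\mld Q)\adc\ldots\adc(P_n\mld Q)\mli(P_1\adc\ldots\adc P_n)\mld Q$ via $\clfour$-Instantiation and a single Modus Ponens. You instead compress everything into the one schema $(Q\mld P_1)\mlc\ldots\mlc(Q\mld P_n)\mli Q\mld(P_1\adc\ldots\adc P_n)$ and discharge it with the $(n+1)$-premise form of Modus Ponens. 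That schema is indeed $\clfour$-provable by the Wait-then-Match derivation you sketch, and duplicating the context $Q$ across a parallel conjunction is harmless because every copy sits in the antecedent; your version is thus a slightly more uniform instance of the template used for Transitivity and $\add$-Elimination, at the cost of an $n$-fold repetition of the context that the paper's intermediate $\adc$-combination avoids. Two details to tighten: the elementarization of your schema is not literally $\twg$ but a disjunction having $\twg$ as a disjunct (still an axiom); and Match converts only a matched positive/negative pair, so the projection $(Q\mld P_1)\mlc\ldots\mlc(Q\mld P_n)\mli Q\mld P_i$ is not obtained by turning \emph{all} general letters into elementary ones --- you Match the two relevant pairs and let the remaining unpaired general literals be sent to $\tlg$ by the elementarization in the concluding Wait step. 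Finally, your worry about a ``deeper non-$\mld$ context'' is moot: by the definition of $E^{\vee}[\,\cdot\,]$ the indicated occurrence lies in the scope of $\mld$ only, so the only bookkeeping is re-bracketing the disjunction, which the paper likewise dispatches by $\clfour$-Instantiation.
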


\begin{proof} Assume $\arfour$ proves each of the $n$ premises.   Let $G$ be the $\mld$-disjunction of all subformulas of $E^{\vee}[F_1\adc\ldots\adc F_n]$,   other than the indicated occurrence of $F_1\adc\ldots\adc F_n$, that do not occur in the scope of any operators other than $\mld$ and whose main operator (if nonatomic) is not $\mld$. 
We want to first verify the rather expected fact that $\arfour\vdash F_i \mld G$ for each $i$ (expected, because, modulo the  associativity of $\mld$, the formulas $E^{\vee}[F_i]$ and $ F_i\mld G$ are the same). Indeed, $E^{\vee}[F_i]\mli F_i\mld G$ can be easily seen to be obtainable by $\clfour$-Instantiation. Then, $ F_i\mld G$ follows by Modus Ponens. In a similar manner one can show that whenever $\arfour\vdash (F_1\adc\ldots\adc F_n)\mld G$, we also have $\arfour \vdash E^{\vee} [F_1\adc\ldots\adc F_n]$.
So, in order to complete our proof of Fact \ref{adcintroclosure},  it would suffice to show that 
\begin{equation}\label{nov6f}
\arfour\vdash (F_1\adc\ldots\adc F_n)\mld G.
\end{equation} 

From $\arfour\vdash F_1\mld G$, \ldots, $\arfour\vdash F_1\mld G$ and the obvious fact that $\arfour\vdash\twg$, by Wait, we get 
\begin{equation}\label{nov6g}
\arfour\vdash  (F_1\mld G)\adc\ldots\adc (F_n\mld G).
\end{equation}

Next,  $p \mld q \mli   p \mld q $ is an axiom of $\clfour$. From it, by Match applied twice, we get  
$ \clfour\vdash  P_i \mld Q \mli   P_i \mld Q $ (any $i\in\{1,\ldots,n\}$). Now, by $\add$-Choose, we get  \[ \clfour\vdash   ( P_1 \mld Q  ) \adc\ldots\adc  ( P_n \mld Q  )  \mli   P_i\mld Q .\] From here and from (the obvious) $\clfour\vdash \twg\mli\twg\mld\tlg$, by Wait, we get \[ \clfour\vdash ( P_1 \mld Q  ) \adc\ldots\adc  ( P_n \mld Q  ) \mli  (P_1\adc\ldots\adc P_n)\mld Q .\] The above, by $\clfour$-Instantiation, yields 
\begin{equation}\label{nov6h}
\arfour\vdash ( F_1 \mld G  ) \adc\ldots\adc  ( F_n \mld G  ) \mli  (F_1\adc\ldots\adc F_n)\mld G .
\end{equation}  
Now, the desired (\ref{nov6f}) follows  from (\ref{nov6g}) and (\ref{nov6h}) by Modus Ponens.  
\end{proof}

It is worth pointing out  that the present rule with $\mlc$ instead of $\adc$, while admissible in classical logic, is not admissible in $\arfour$ or $\clthree$-based applied theories in general.

\subsection{$\ada$-Introduction}\label{iadaintro}
\[\frac{E^{\vee}[F(s)]}{E^{\vee} [\ada xF(x)]},\]
where $x$ is any (non-$\bound$) variable, $F(x)$ is any formula, $s$ is any non-$\bound$ variable not occurring in the conclusion, and $F(s)$ is the result of replacing all free occurrences of $x$ by $s$ in $F(x)$.

\begin{fact}\label{adaintroclosure}
$\ada$-Introduction is admissible in $\arfour$.
\end{fact}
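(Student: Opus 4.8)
The plan is to follow the proof of Fact~\ref{adcintroclosure} ($\adc$-Introduction), with the $\adc$-distribution step replaced by an analogous $\ada$-distribution step. Assume $\arfour\vdash E^{\vee}[F(s)]$, and let $G$ be the $\mld$-disjunction of all subformulas of $E^{\vee}[\ada xF(x)]$ --- other than the indicated occurrence of $\ada xF(x)$ --- that occur in the scope of no operator other than $\mld$ and whose main operator is not $\mld$, exactly as in that earlier proof. Modulo associativity and commutativity of $\mld$, the formula $E^{\vee}[F(s)]$ is $F(s)\mld G$ and the formula $E^{\vee}[\ada xF(x)]$ is $(\ada xF(x))\mld G$; hence both $E^{\vee}[F(s)]\mli F(s)\mld G$ and $(\ada xF(x))\mld G\mli E^{\vee}[\ada xF(x)]$ are instances of $\clfour$-provable formulas and so are available by $\clfour$-Instantiation. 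The first of these, with the assumption and Modus Ponens, yields $\arfour\vdash F(s)\mld G$; the second reduces the entire task to proving $\arfour\vdash (\ada xF(x))\mld G$.

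To obtain $(\ada xF(x))\mld G$ I would first derive $\arfour\vdash \ada x\bigl(F(x)\mld G\bigr)$ by the Wait rule. Its special Wait-premise is $\elz{\ada x(F(x)\mld G)}$, which is $\twg$ (the top operator being the choice quantifier $\ada$) and thus a logical axiom; and the only surface occurrence of an $\ada$- or $\adc$-subformula in $\ada x(F(x)\mld G)$ is the whole formula itself, since everything else lies in the scope of the leading $\ada x$, so the sole ordinary Wait-premise required is $F(s)\mld G$, already in hand. Here $s$ is a legitimate eigenvariable because by hypothesis it does not occur in $E^{\vee}[\ada xF(x)]$, hence neither in $G$ nor in $\ada xF(x)$. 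It then remains to pass from $\ada x(F(x)\mld G)$ to $(\ada xF(x))\mld G$, which I would do via the distribution principle $\ada x\bigl(P(x)\mld Q\bigr)\mli (\ada xP(x))\mld Q$: instantiating it (with $P(x):=F(x)$, $Q:=G$, where the substitution is legitimate since $x$ does not occur free in $G$) by $\clfour$-Instantiation and combining with $\ada x(F(x)\mld G)$ by Modus Ponens produces the desired $(\ada xF(x))\mld G$.

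The one genuinely new ingredient --- and the step I expect to demand the most care --- is the verification that $\clfour\vdash \ada x\bigl(P(x)\mld Q\bigr)\mli (\ada xP(x))\mld Q$, where $P$ is a unary and $Q$ a $0$-ary general letter. I would build this proof bottom-up: start from the classically valid axiom $(p(s)\mld q)\mli(p(s)\mld q)$; apply Match twice to lift the elementary letters $p,q$ to $P,Q$, obtaining $(P(s)\mld Q)\mli(P(s)\mld Q)$; apply $\ade$-Choose to the negated antecedent (acting on its $\ade$-form $\ade x(\gneg P(x)\mlc\gneg Q)$) to reach $\ada x(P(x)\mld Q)\mli P(s)\mld Q$; and finally apply Wait, whose special premise is the valid elementarization $\tlg\mld\twg\mld\tlg$ and whose single ordinary premise is precisely the formula just produced, the surface $\ada xP(x)$ in the consequent being the only $\ada$/$\adc$ surface subformula. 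The delicate points throughout are the eigenvariable-freshness bookkeeping in the $\ade$-Choose and Wait steps and the confirmation that no spurious ordinary Wait-premises are demanded; both are controlled by the single observation that any subformula lying under a choice operator fails to be a surface occurrence. Once this distribution principle is secured, the admissibility of $\ada$-Introduction follows by the assembly described above.
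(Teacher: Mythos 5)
Your proposal is correct and follows essentially the same route as the paper: reduce to proving $(\ada xF(x))\mld G$ via the $\mld$-disjunction $G$, obtain $\ada$-quantified $F\mld G$ by Wait from $F(s)\mld G$ and $\twg$, and then apply a $\clfour$-provable distribution principle $\ada x(P(x)\mld Q)\mli(\ada xP(x))\mld Q$ (proved exactly as you describe, by Match, $\ade$-Choose and Wait) together with $\clfour$-Instantiation and Modus Ponens. The only cosmetic difference is that the paper names the intermediate bound variable by a fresh letter $y$ rather than reusing $x$, which changes nothing of substance.
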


\begin{proof} Assume $\arfour\vdash E^{\vee}[F(s)]$. Let $G$ be the $\mld$-disjunction of all subformulas of $E^{\vee} [\ada xF(x)]$, other than the indicated occurrence of $\ada xF(x)$, that do not occur in the scope of any operators other than $\mld$ and whose main operator (if nonatomic) is not $\mld$.
As in the previous subsection, we can easily find that  $\arfour\vdash F(s)\mld G$, and that  whenever  $\arfour\vdash \ada xF(x)\mld G$, we also have $\arfour \vdash E^{\vee} [\ada xF(x)]$. So, in order to complete our proof of Fact \ref{adaintroclosure},  it would suffice to show that

\begin{equation}\label{nov6c}
\arfour\vdash \ada xF(x)\mld G.
\end{equation} 

From $\arfour\vdash F(s)\mld G$ and the obvious fact that $\arfour\vdash\twg$, by Wait, we get 
\begin{equation}\label{nov6d}
\arfour\vdash \ada y\bigl(F(y)\mld G\bigr),
\end{equation} 
where $y$ is a ``fresh'' variable --- a variable not occurring in $F(s)\mld G$. 

Next,  $  p(t)\mld q \mli   p(t)\mld q $ is  an axiom of $\clfour$. From it, by Match applied twice, we find that $\clfour$ proves   
$P(t)\mld Q \mli   P(t)\mld Q $. Now, by $\ade$-Choose, we get  $ \clfour\vdash \ada y\bigl( P(y)\mld Q\bigr) \mli   P(t)\mld Q $. From here and from (the obvious) $\clfour\vdash \twg\mli\twg\mld\tlg$, by Wait, we get $ \clfour\vdash \ada y\bigl( P(y)\mld Q\bigr) \mli   \ada xP(x)\mld Q $. This, by $\clfour$-Instantiation, yields
\begin{equation}\label{nov6e}
\arfour\vdash \ada y\bigl(F(y)\mld G\bigr)\mli \ada xF(x)\mld G.
\end{equation}  
Now, the desired (\ref{nov6c}) follows  from (\ref{nov6d}) and (\ref{nov6e}) by Modus Ponens.  
\end{proof}

We are again pointing out that the present rule with $\cla$ instead of $\ada$, while admissible in classical logic, is not admissible in $\arfour$ or $\clthree$-based applied theories in general.

\section{Formal versus informal arguments in $\arfour$}\label{sss16}
 We have already seen a couple of nontrivial formal $\arfour$-proofs, and will see more later. However, continuing forever in this style will be hardly possible. Little by little, we will need to start trusting and relying on informal arguments in the style of the argument found at the beginning of the proof of Lemma \ref{nov8}, or the arguments that we employed when discussing the PTI rule in Section \ref{ss14}. Just as in {\bf PA}, formal proofs in $\arfour$ tend to be long, and generating them in every case can be an arduous job. The practice of dealing with informal proofs or descriptions instead of detailed formal ones is  familiar not only from the metatheory of $\pa$ or similar systems. The same practice is adopted, say, when dealing with Turing machines, where full  transition diagrams are typically replaced by high-level informal descriptions, relying on the reader's understanding that, if necessary, every such description can be turned into a real Turing machine.

In the nearest few sections we will continue generating formal proofs, often accompanied with underlying informal arguments  to get used to such arguments and  see that they are always translatable into formal ones. As we advance, however, our reliance on informal arguments and the degree of our ``laziness'' will gradually increase, and in later sections we may stop producing formal proofs altogether. 

The informal language and methods of reasoning induced by computability logic and clarithmetic or ptarithmetic in particular, are in the   painful initial   process of forming and, at this point, can be characterized as ``experimental''. They cannot be concisely or fully {\em explained}, but rather they should be {\em learned} through experience and practicing, not unlike the way one learns a foreign language. A reader who initially does not find some of our informal $\arfour$-arguments very clear or helpful, should not feel disappointed. Both the readers and the author should simply keep trying their best. Greater fluency and better understanding will come gradually and inevitably. 

At this point we only want to make one general remark on the informal $\arfour$-arguments that will be employed. Those arguments will often proceed in terms of game-playing and problem-solving instead of theorem-proving, or will be some kind of a mixture of these two. That is, a way to show how to prove a formula $F$ will often be to show how to win/solve the game/problem $F$. The legitimacy of this approach is related to the fact that the logic $\clthree$ underlying $\arfour$ is a logic of problem-solving and, as such, is complete (Theorem \ref{main}). That is, whenever a problem $F$ can be solved in a way that relies merely on the logical structure of $F$ --- and perhaps also those of some axioms of $\arfour$ --- then we have a guarantee that $F$ can as well be proven. Basic problem-solving steps are very directly simulated (translated through) the rules of $\clthree$ or some derivative rules in the style of the rules of the previous section, with those rules seen bottom-up (in the ``from conclusion to premises'' direction). For instance, a step such as ``choose the $i$th disjunct in the subformula/subgame $F_1\add\ldots\add F_n$''  translates as a  bottom-up application of $\add$-Choose which replaces  $F_1\add\ldots\add F_n$ by $F_i$; a step such as ``specify $x$ as $s$ in $\ade xF(x)$'' translates as a  bottom-up application of  $\ade$-Choose; a step such as ``wait till the environment specifies a value $s$ for $x$ in $\ada xF(x)$'' translates as a  bottom-up application of $\ada$-Introduction; etc. Correspondingly, an informally described winning/solution strategy for $F$   can usually be seen as a relaxed, bottom-up description of a formal proof of $F$. 

\section{Some admissible induction rules of $\arfour$}

The present section introduces a few new admissible rules of induction. These rules are weaker than PTI, but are still useful in that, in many cases, they may offer greater convenience than PTI does.  
\subsection{WPTI}\label{inpti} Here we reproduce rule (\ref{nov14b}) discussed in Section \ref{ss14}, and baptize it as ``{\bf WPTI}'' (``W'' for ``Weak''):
\[\frac{F(\zero)\hspace{25pt}   F(s)\mli F(s\successor )}{s\mleq \tau\mli F(s)},\]
where $s$ is any non-$\bound$ variable, $F(s)$ is any formula, and $\tau$ is any $\bound$-term.

\begin{theorem}\label{nov9a}
WPTI is admissible in $\arfour$.
\end{theorem}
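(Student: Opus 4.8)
The plan is to obtain WPTI as the degenerate, single-component instance of the already-available PTI rule, exactly along the lines foreshadowed in Section \ref{ss14}: the only difference between the two rules is the extra recyclable resource $E(s)$ in the inductive step of PTI, so suppressing that resource by taking $E$ to be the trivial game $\twg$ should collapse PTI into WPTI. Concretely, assume the two premises of WPTI are $\arfour$-provable, i.e. $\arfour\vdash F(\zero)$ and $\arfour\vdash F(s)\mli F(s\successor)$, and instantiate PTI with $E(s):=\twg$ (a constant formula not depending on $s$). Its conclusion is then $s\mleq\tau\mli\twg\mlc F(s)$, from which the WPTI conclusion $s\mleq\tau\mli F(s)$ is one easy step away.

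First I would manufacture the two PTI premises from the WPTI premises. For the basis, since $\clfour\vdash P\mli\twg\mlc P$, Fact \ref{Instclosure} ($\clfour$-Instantiation) gives $\arfour\vdash F(\zero)\mli\twg\mlc F(\zero)$, and Modus Ponens with the first WPTI premise yields the basis $\arfour\vdash\twg\mlc F(\zero)$. For the inductive step I must produce $\arfour\vdash\twg\mlc F(s)\mli\twg\adc\bigl(F(s\successor)\mlc\twg\bigr)$. I would build the $\adc$ in the consequent using the admissible rule $\adc$-Introduction (Fact \ref{adcintroclosure}), whose occurrence condition is met since the $\adc$-subformula sits as a top-level $\mld$-disjunct of the implication (hence a positive surface occurrence in the scope of no operator other than $\mld$). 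Its two premises are $\arfour\vdash\twg\mlc F(s)\mli\twg$ --- immediate by $\clfour$-Instantiation, as $\twg\mlc P\mli\twg$ is $\clfour$-provable --- and $\arfour\vdash\twg\mlc F(s)\mli F(s\successor)\mlc\twg$, obtained from the second WPTI premise via $\clfour$-Instantiation of the $\clfour$-provable implication $(P\mli Q)\mli\bigl(\twg\mlc P\mli Q\mlc\twg\bigr)$ followed by Modus Ponens.

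With both PTI premises in hand, one application of PTI gives $\arfour\vdash s\mleq\tau\mli\twg\mlc F(s)$. Finally, since $\twg\mlc P\mli P$ is $\clfour$-provable, $\clfour$-Instantiation yields $\arfour\vdash\twg\mlc F(s)\mli F(s)$, and Transitivity (Fact \ref{Transclosure}) composes this with $s\mleq\tau\mli\twg\mlc F(s)$ to deliver the desired $\arfour\vdash s\mleq\tau\mli F(s)$. The only genuinely non-routine point --- the step I expect to be the main obstacle --- is verifying the small logical lemmas invoked above, in particular that the inductive-step shape really is obtainable (equivalently, that $\adc$-Introduction applies and that its second premise is a $\clfour$-instance). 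These all reduce to checking $\clfour$-provability of a handful of $\twg$-padded affine-style implications, each justified exactly as the analogous facts are in the preceding section (a classical tautology lifted by Match, or a direct copycat strategy whose provability is guaranteed by Theorem \ref{ccs} together with the completeness of $\clthree$); so the bookkeeping around the $\twg$'s is the only thing demanding real care.
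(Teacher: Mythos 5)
Your proposal is correct and follows essentially the same route as the paper: PTI instantiated with $E(s):=\twg$, the basis obtained from $F(\zero)$ via the $\clfour$-provable $P\mli\twg\mlc P$, and the conclusion extracted via $\twg\mlc P\mli P$ and Transitivity. The only (immaterial) difference is that the paper derives the inductive-step premise $\twg\mlc F(s)\mli\twg\adc\bigl(F(s\successor)\mlc\twg\bigr)$ in one shot by $\clfour$-Instantiation of $(P\mli Q)\mli\bigl(\twg\mlc P\mli\twg\adc(Q\mlc\twg)\bigr)$ followed by Modus Ponens, whereas you assemble it from two simpler instances via $\adc$-Introduction.
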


\begin{idea} WPTI is essentially nothing but PTI with $\twg$ in the role of  $E(s)$.
\end{idea}

\begin{proof}    
Assume $s$, $F(s)$, $\tau$ are as stipulated in the  rule, and $\arfour$ proves both $F(\zero)$ and $F(s)\mli F(s\successor )$.  
The following formula matches the $\clfour$-provable  $(  P\mli Q)\mli \bigl( \twg\mlc P\mli \twg\adc(Q\mlc\twg)\bigr)$  and hence, by $\clfour$-Instantiation,  is  provable in $\arfour$:  
\begin{equation}\label{nov9ab}
\bigl( F(s)\mli F(s\successor )\bigr)\mli \Bigl( \twg\mlc F(s)\mli \twg\adc\bigl(F(s\successor)\mlc\twg \bigr)\Bigr).\end{equation}
By Modus Ponens from $F(s)\mli F(s\successor )$ and (\ref{nov9ab}), we find that $\arfour$ proves 
\begin{equation}\label{nov9ac}
 \twg\mlc F(s)\mli \twg\adc\bigl(F(s\successor )\mlc\twg\bigr).\end{equation}

Similarly, $F(\zero)\mli \twg\mlc F(\zero)$ is obviously provable in $\arfour$ by $\clfour$-Instantiation. Modus-ponensing this with our assumption $\arfour\vdash F(\zero)$ yields $\arfour\vdash \twg\mlc F(\zero)$. From here and (\ref{nov9ac}), by PTI with $\twg$ in the role of $E(s)$, we find that $\arfour$ proves $s\mleq\tau\mli \twg\mlc F(s)$. But $\arfour$ also proves  $ \twg\mlc F(s)\mli F(s)$ because this is an instance of the {\bf CL4}-provable $\twg\mlc P\mli P$. Hence, by Transitivity, $\arfour\vdash s\mleq\tau\mli F(s)$, as desired.  
\end{proof}

\subsection{BSI}\label{ibsi} What we  call {\bf BSI} ({\bf B}inary-{\bf S}uccessor-based {\bf I}nduction) is the following rule, where $s$ is any non-$\bound$ variable and $F(s)$ is any formula:
\[\frac{F(\zero)\hspace{43pt} F(s)\ \mli \ F(s0)\adc F(s1)}{F(s)}.\]

\begin{theorem}\label{nov9b}
BSI is admissible in $\arfour$.
\end{theorem}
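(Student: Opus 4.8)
The plan is to reduce BSI to the already-established WPTI (Theorem \ref{nov9a}), replacing induction on a \emph{number} by induction on the \emph{number of binary digits} that have been processed. The key observation is that the target $s$ has only $|s|$ binary digits and, by Axiom 13, $|s|\mleq\bound$; hence $s$ is reachable from $\zero$ by appending binary digits in $|s|$ steps, a count polynomial in $\bound$. Moreover, each use of the inductive-step premise $F(s)\mli F(s0)\adc F(s1)$ consumes one copy of $F$ of the current prefix and returns (after one $\add$-choice) one copy of $F$ of the next prefix: the chain is \emph{linear}, with no resource doubling, which is exactly why WPTI rather than full PTI suffices.

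Concretely, let $r$ be a fresh counter variable and let $\mathit{pref}(x,s,r)$ be the elementary ($\pa$-definable) predicate ``$x$ is the binary numeral formed by the leftmost $r$ digits of $s$'', with the conventions that $\mathit{pref}(x,s,r)$ holds iff $x\equals\zero$ when $r$ is $0$, and iff $x\equals s$ whenever $r\mgeq|s|$. Consider
\[ G(r)\ =\ \ade x\bigl(\mathit{pref}(x,s,r)\mlc F(x)\bigr). \]
I would apply WPTI to $G$ with the $\bound$-term $\tau$ taken to be $\bound$ itself. The \emph{basis} $G(\zero)$ follows from the premise $F(\zero)$: by $\ade$-Choose (choosing the term $\zero$) it suffices to prove $\mathit{pref}(\zero,s,\zero)\mlc F(\zero)$, whose first conjunct is $\pa$-provable and whose second is the assumed $F(\zero)$; the two are adjoined by Modus Ponens against the $\clfour$-provable $P_1\mlc P_2\mli P_1\mlc P_2$. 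The \emph{step} $G(r)\mli G(r\successor)$ is obtained from the second BSI premise $F(s)\mli F(s0)\adc F(s1)$ together with Axiom 12: given the length-$r$ prefix $x$ with $F(x)$ in hand, one reads off the $(r\plus1)$-th digit $b$ of $s$ (an elementary computation, with the case split $b\equals\zero\add b\equals s\successor$-style disjunction supplied by Axiom 12 or Axiom 9), uses the step premise to pass from $F(x)$ to $F(x0)\adc F(x1)$, selects the $b$-th $\add$-disjunct, and thereby obtains $F$ of the length-$(r\plus1)$ prefix; the case $r\mgeq|s|$ is handled by plain copycat, since the prefix is then unchanged. Formally this branching is packaged through the admissible $\add$-Elimination (Fact \ref{addelclosure}) and the Choose rules. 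WPTI then yields $\arfour\vdash r\mleq\bound\mli G(r)$.

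It remains to extract the conclusion $F(s)$. The idea is to instantiate the counter $r$ at $\bound$: since $\bound$ is itself a $\bound$-term, $\bound\notequals\zero$ (Lemma \ref{zer}), and $\bound\mleq\bound$ is $\pa$-provable, instantiating the displayed theorem at $r:=\bound$ and applying Modus Ponens gives $\arfour\vdash G(\bound)$, i.e. $\arfour\vdash\ade x\bigl(\mathit{pref}(x,s,\bound)\mlc F(x)\bigr)$. By Axiom 13 we have $|s|\mleq\bound$, so $\pa$ proves $\mathit{pref}(x,s,\bound)\mli x\equals s$; hence the value chosen for $x$ is forced to equal $s$, reducing matters to turning $\ade x\bigl(x\equals s\mlc F(x)\bigr)$ into $F(s)$.

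The hard part will be precisely this final extraction, for two reasons. First, passing from the free-variable theorem $r\mleq\bound\mli G(r)$ to the instance at $r:=\bound$ is a substitution of the special variable $\bound$ into a free position and must be justified against the conventions governing $\bound$; it is legitimate because, for any bounded valuation $e$, the valuation sending $r$ to $e(\bound)$ is again bounded, as $|e(\bound)|\mleq e(\bound)$. Second, and more essentially, concluding $F(s)$ from $\ade x\bigl(x\equals s\mlc F(x)\bigr)$ requires substituting a provably-equal term into the \emph{non-elementary} $F$, i.e. an equality-congruence principle $\arfour\vdash x\equals s\mli\bigl(F(x)\mli F(s)\bigr)$ for arbitrary $F$. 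This is not immediate from the elementary equality axioms, but I expect it to be derivable by induction on the structure of $F$ (using Match and the logical rules, with $\equals$ a congruence respected by all letters at the atomic base cases). Establishing this congruence lemma — or recognizing it as a routine consequence of $\clfour$ — is the main obstacle; the remainder is bookkeeping over WPTI, $\ade$-Choose, $\add$-Elimination and Modus Ponens.
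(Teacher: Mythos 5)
Your overall strategy --- reduce BSI to WPTI by inducting on a counter bounded by $\bound$ --- matches the paper's, but your choice of induction formula creates a gap that the paper's choice is specifically designed to avoid. You take $G(r)=\ade x\bigl(\mathit{pref}(x,s,r)\mlc F(x)\bigr)$, building the particular target $s$ prefix by prefix from the left. The inductive step then requires the machine to \emph{name} the next prefix, i.e.\ to constructively decide whether the length-$(r\plus 1)$ prefix of $s$ is $x0$ or $x1$ --- equivalently, to read off bit $\#r$ of $s$. You dismiss this as ``an elementary computation supplied by Axiom 12 or Axiom 9'', but Axiom 12 only exposes the \emph{last} bit of a number together with its binary predecessor, and Axiom 9 only decides zerohood; extracting bit $\#r$ of $s$ would take on the order of $|s|\mminus r$ chained applications of Axiom 12, which cannot be packed into the single fixed proof of the inductive step --- it is itself an induction. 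Indeed, the constructive bit-extraction statement is exactly Lemma \ref{combit} of the paper, whose proof \emph{uses} BSI (as does Lemma \ref{nov18a}, the decidability of equality, which you would also need in order to recognize the case $r\mgeq|s|$). So as written your argument is circular, or at best incomplete pending a separate WPTI-only development of bit extraction that you have not supplied.

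The paper sidesteps this entirely by making the induction formula universally rather than existentially quantified: it takes $G(t)=\ada x\bigl(|x|\mleq t\mli F(x)\bigr)$. In the inductive step the \emph{environment} supplies the number $v$ with $|v|\mleq t\successor$; a single application of Axiom 12 to $v$ yields its binary predecessor $r$ together with the information whether $v\equals r0$ or $v\equals r1$; the antecedent instantiated at $r$ yields $F(r)$; and the BSI step premise plus one $\adc$-choice yields $F(v)$. No bit of any number ever has to be located by index. Your two remaining worries are, by contrast, non-issues: the congruence $F(x)\mlc x\equals s\mli F(s)$ is an instance of the $\clfour$-provable $P(f)\mlc f\equals g\mli P(g)$ (a single application of Match reduces it to a classical validity), which the paper invokes freely via $\clfour$-Instantiation, and the instantiation of the counter at $\bound$ is handled by $\ada$-Introduction followed by $\ada$-Elimination together with $\pa\vdash\bound\mleq\bound$. (Note also that $\ade$-Choose selects a \emph{variable}, not the term $\zero$, so your basis must route through Axiom 8, as the paper does elsewhere.) I would recommend replacing your induction formula with the paper's universal one.
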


\begin{idea} We manage to  reduce BSI to WPTI with $\ada x\bigl(|x|\mleq s\mli F(x)\bigr)$ in the role of $F(s)$ of the latter.
\end{idea}

\begin{proof} Assume $s$, $F(s)$ are as stipulated in the  rule, 
\begin{equation}\label{nov9bb}
\arfour\vdash F(\zero)\end{equation}
and 
\begin{equation}\label{nov9ba}
 \arfour\vdash  F(s)\mli F(s0)\adc F(s1) .\end{equation}

Let us observe right now that, by $\ada$-Introduction,  (\ref{nov9ba}) immediately implies 
\begin{equation}\label{nov9baa}
 \arfour\vdash  \ada x\bigl(F(x)\mli F(x0)\adc F(x1)\bigr) .\end{equation}

The goal is to verify that $\arfour\vdash F(s)$. 

An outline of our strategy for achieving this goal is that we take the formula $\ada x\bigl(|x|\mleq t\mli F(x)\bigr)$ --- let us denote it by $G(t)$ --- and  show that both $G(\zero)$ and $ G(t)\mli G(t\successor)$ are provable. This, by (the already shown to be admissible) WPTI, allows us to immediately conclude that $t\mleq\bound\mli G(t)$ is also provable, which, in turn, implies that so is $\ada y \bigl(y\mleq\bound\mli G(y)\bigr)$, and hence $\bound\mleq\bound \mli G(\bound)$, and hence $G(\bound)$. \ $G(\bound)$ asserts that, for any ($\ada$) given $x$ whose length does not exceed $\bound$, we can solve $F(x)$. But the length of no $x$ that we consider exceeds $\bound$, so that $G(\bound)$, in fact, simply says that (we can solve) $F(x)$. Formalizing this argument in $\arfour$ and taking $s$ for $x$ yields the desired conclusion $\arfour\vdash F(s)$.  

In following the above outline, we first claim that $\arfour\vdash G(\zero)$, i.e., 
\begin{equation}\label{nov10a}
 \arfour\vdash \ada x\bigl(|x|\mleq \zero\mli F(x)\bigr).\end{equation}
An informal argument here is that, since no constant is of length $0$,  $|x|\mleq \zero$ is false, and hence the problem $|x|\mleq \zero\mli F(x)$ is automatically ``solved'' (i.e., won without any moves by $\pp$) no matter what $F(x)$ is. Formally, $\pa$ and hence $\arfour$ proves the true fact $\gneg |v|\mleq \zero$. $\arfour$ also proves $\gneg |v|\mleq \zero \mli \bigl(|v|\mleq \zero\mli F(v)\bigr)$, as this is an instance of the {\bf CL4}-provable $\gneg p\mli(p\mli Q)$. Then, by Modus Ponens, $\arfour \vdash |v|\mleq \zero\mli F(v)$, whence, by $\ada$-Introduction,  $\arfour \vdash \ada x\bigl(|x|\mleq \zero\mli F(x)\bigr)$, as desired.   

Our next goal is to show that $ \arfour\vdash  G(t)\mli G(t\successor )$, i.e., 
\begin{equation}\label{nov10b}
 \arfour\vdash   \ada x\bigl(|x|\mleq t\mli F(x)\bigr)\mli \ada x\bigl(|x|\mleq t\successor\mli F(x)\bigr).\end{equation}
This can be done by showing the $\arfour$-provability of   
\begin{equation}\label{nov10c}
\ada x\bigl(|x|\mleq t\mli F(x)\bigr)\mli  |v|\mleq t\successor \mli F(v) ,\end{equation}
from which (\ref{nov10b}) follows by $\ada$-Introduction. 

Let us  first try to justify (\ref{nov10c}) informally. Consider any $t$, $v$ with  $|v|\mleq t\successor $, and also assume that (a single copy of) the resource $\ada x\bigl(|x|\mleq t\mli F(x)\bigr)$ is at our disposal. The goal is to establish $F(v)$.  
$F(0)$ is immediate by (\ref{nov9bb}). In turn, by (\ref{nov9ba}), $F(0)$ easily implies $F(1)$. Thus, we are done for the case  $v\mleq 1$.   Suppose now $v\mgreater 1$.  Then (unlike the case  $v\mleq 1$), remembering that $|v|\mleq t\successor $,  $v$ must have a binary predecessor $r$ with $|r|\mleq t$. By Axiom 12, we can actually find such an $r$ and, furthermore,  tell whether $v\equals r0$ or  $v\equals r1$. Specifying $x$ as $r$ in the   antecedent of (\ref{nov10c}), we can bring it down to the resource $|r|\mleq t\mli F(r)$ and --- as we already know that $|r|\mleq t$ --- essentially to the resource $F(r)$.  By (\ref{nov9baa}), the resource $\ada x\bigl(F(x)\mli F(x0)\adc F(x1)\bigr)$ and hence $ F(r)\mli F(r0)\adc F(r1)$ is also available. This is a resource that consumes $F(r)$ and generates $F(r0)\adc F(r1)$. Feeding to its consumption needs\footnote{Do you see or feel a possible application of MP, or TR, or Match behind this informal phrase?} our earlier-obtained $F(r)$, we thus get the resource $F(r0)\adc F(r1)$.  As noted earlier, we know precisely whether $v\equals r0$ or $v\equals r1$. So, by choosing the corresponding $\adc$-conjunct, we can further turn $F(r0)\adc F(r1)$ into the sought $F(v)$.  

Strictly verifying  (\ref{nov10c}) is quite some task, and we break in into several subtasks/subgoals. 

Our first subgoal is to show that $\arfour$ proves the following:

\begin{equation}\label{nov11a}
v\equals \zero\add v\equals \zero\successor \add v\mgreater \zero\successor ,\end{equation}
implying our ability to (efficiently) tell whether $v$ is $0$, $1$, or greater than $1$. For simplicity considerations, in our earlier informal justification of (\ref{nov10c}), we, in a sense, cheated by taking this ability for granted --- or, rather, by not really mentioning the need for it at all. Some additional evidence of such ``cheating'' can be discovered after reading the later parts of the present proof as well. 

Informally, an argument for (\ref{nov11a}) goes like this. Due to Axiom 12, we can find the binary predecessor $r$ of $v$. Moreover, due to the same axiom, we can tell whether $v\equals r0$ or $v\equals r1$. Using Axiom 9, we can further tell whether $r\equals 0$ or $r\notequals 0$. So, we will know precisely which of the four combinations $v\equals r0\mlc r\equals 0$, $v\equals r1\mlc r\equals 0$, $v\equals r0\mlc r\notequals 0$, $v\equals r1\mlc r\notequals 0$ is the case. From \pa, we also know that in the first case we have $v\equals 0$, in the second case we have $v\equals 1$, and in the third and the fourth cases we have $v\mgreater 1$. So, one of $v\equals 0$, $v\equals 1$,  $v\mgreater 1$ will be true and, moreover, we will be able to actually tell which one is true. 
   
Below is a full formalization of this argument:\vspace{7pt}

\noindent 1. $\begin{array}{l}
s\equals \zero\add s\notequals \zero
\end{array}$  \ \ Axiom 9  \vspace{3pt}

\noindent 2. $\begin{array}{l}
\ada x (x\equals \zero\add x\notequals \zero)
\end{array}$  \ \ $\ada$-Introduction: 1  \vspace{3pt}

\noindent 3. $\begin{array}{l}
\ade x(v\equals x0\add v\equals x1)
\end{array}$  \ \ Axiom 12  \vspace{3pt}

\noindent 4. $\begin{array}{l}
r\equals \zero\mlc v\equals r0 \ \ \mli\ \  v\equals \zero
\end{array}$  \ \   $\pa$   \vspace{3pt}

\noindent 5. $\begin{array}{l}
r\equals \zero\mlc v\equals r0 \ \ \mli\ \  v\equals \zero\add v\equals \zero\successor \add v\mgreater \zero\successor 
\end{array}$  \ \   $\add$-Choose: 4   \vspace{3pt}

\noindent 6. $\begin{array}{l}
r\notequals \zero\mlc v\equals r0 \ \ \mli\ \  v\mgreater \zero\successor 
\end{array}$  \ \  $\pa$   \vspace{3pt}

\noindent 7. $\begin{array}{l}
r\notequals \zero\mlc v\equals r0 \ \ \mli\ \  v\equals \zero\add v\equals \zero\successor \add v\mgreater \zero\successor 
\end{array}$  \ \   $\add$-Choose: 6   \vspace{3pt}

\noindent 8. $\begin{array}{l}
(r\equals \zero\add r\notequals \zero)\mlc v\equals r0 \ \ \mli\ \  v\equals \zero\add v\equals \zero\successor \add v\mgreater \zero\successor 
\end{array}$  \ \  $\adc$-Introduction: 5,7  \vspace{3pt}

\noindent 9. $\begin{array}{l}
\ada x (x\equals \zero\add x\notequals \zero)\mlc v\equals r0 \ \ \mli\ \  v\equals \zero\add v\equals \zero\successor \add v\mgreater \zero\successor 
\end{array}$  \ \ $\ade$-Choose: 8   \vspace{3pt}

\noindent 10. $\begin{array}{l}
r\equals \zero\mlc v\equals r1 \ \ \mli\ \  v\equals \zero\successor 
\end{array}$  \ \ $\pa$   \vspace{3pt}

\noindent 11. $\begin{array}{l}
r\equals \zero\mlc v\equals r1 \ \ \mli\ \  v\equals \zero\add v\equals \zero\successor \add v\mgreater \zero\successor 
\end{array}$  \ \ $\add$-Choose: 10   \vspace{3pt}

\noindent 12. $\begin{array}{l}
r\notequals \zero\mlc v\equals r1 \ \ \mli\ \   v\mgreater \zero\successor 
\end{array}$  \ \   $\pa$   \vspace{3pt}

\noindent 13. $\begin{array}{l}
r\notequals \zero\mlc v\equals r1 \ \ \mli\ \  v\equals \zero\add v\equals \zero\successor \add v\mgreater \zero\successor 
\end{array}$  \ \ $\add$-Choose: 12   \vspace{3pt}

\noindent 14. $\begin{array}{l}
(r\equals \zero\add r\notequals \zero)\mlc v\equals r1 \ \ \mli\ \  v\equals \zero\add v\equals \zero\successor \add v\mgreater \zero\successor 
\end{array}$  \ \ $\adc$-Introduction: 11,13  \vspace{3pt}

\noindent 15. $\begin{array}{l}
\ada x (x\equals \zero\add x\notequals \zero)\mlc v\equals r1 \ \ \mli\ \  v\equals \zero\add v\equals \zero\successor \add v\mgreater \zero\successor 
\end{array}$  \ \ $\ade$-Choose: 14   \vspace{3pt}

\noindent 16. $\begin{array}{l}
\ada x (x\equals \zero\add x\notequals \zero)\mlc (v\equals r0\add v\equals r1) \ \ \mli\ \  v\equals \zero\add v\equals \zero\successor \add v\mgreater \zero\successor 
\end{array}$  \ \  $\adc$-Introduction: 9,15  \vspace{3pt}

\noindent 17. $\begin{array}{l}
\ada x (x\equals \zero\add x\notequals \zero)\mlc \ade x(v\equals x0\add v\equals x1) \ \ \mli\ \  v\equals \zero\add v\equals \zero\successor \add v\mgreater \zero\successor 
\end{array}$  \ \ $\ada$-Introduction: 16  \vspace{3pt}

\noindent 18. $\begin{array}{l}
v\equals \zero\add v\equals \zero\successor \add v\mgreater \zero\successor 
\end{array}$  \ \ MP: 2,3,17  \vspace{9pt}

The theoremhood of (\ref{nov11a}) thus has been verified. 

Our next subgoal is to show that each disjunct of (\ref{nov11a}) implies (\ref{nov10c}), that is, that each of the following formulas is provable in $\arfour$: 
\begin{equation}\label{nov11b}
v\equals \zero\mli  \ada x\bigl(|x|\mleq t\mli F(x)\bigr)\mli  |v|\mleq t\successor \mli F(v)  \end{equation}
\begin{equation}\label{nov11c}
v\equals \zero\successor \mli   \ada x\bigl(|x|\mleq t\mli F(x)\bigr)\mli  |v|\mleq t\successor \mli F(v)  \end{equation}
\begin{equation}\label{nov11d}
v\mgreater \zero\successor \mli   \ada x\bigl(|x|\mleq t\mli F(x)\bigr)\mli  |v|\mleq t\successor \mli F(v)\  \end{equation}

To see the provability of (\ref{nov11b}), observe that {\bf CL4} proves the formula 
\begin{equation}\label{sch1}
P(f)\mli g\equals f\mli P(g) .\end{equation} The formula 
\(F(\zero)\mli v\equals \zero\mli F(v) \)
is an instance of (\ref{sch1}) and therefore is provable in   $\arfour$. By (\ref{nov9bb}), $F(\zero)$ is also provable. Hence, by Modus 
Ponens, $\arfour\vdash  v\equals \zero\mli F(v) $. From here, by Weakening applied twice, we find the desired $\arfour\vdash (\ref{nov11b})$.  

The $\arfour$-provability of (\ref{nov11c}) is established as follows:\vspace{7pt}

\noindent 1. $\begin{array}{l}
\ade x(x\equals\zero) 
\end{array}$  \ \  Axiom 8 \vspace{3pt}

\noindent 2. $\begin{array}{l}
F(\zero)\mli s\equals\zero\mli F(s)
\end{array}$  \ \ $\clfour$-Instantiation, matches (\ref{sch1})  \vspace{3pt}

\noindent 3. $\begin{array}{l}
s\equals\zero\mli F(s)
\end{array}$  \ \ MP:  (\ref{nov9bb}),2  \vspace{3pt}

\noindent 4. $\begin{array}{l}
s\equals\zero\mli  F(s0)\adc F(s1)
\end{array}$  \ \ TR:  3, (\ref{nov9ba})  \vspace{3pt}

\noindent 5. $\begin{array}{l}
F(s0)\adc F(s1)\mli F(s1)
\end{array}$  \ \ $\clfour$-Instantiation, matches $P\adc Q\mli Q$\vspace{3pt}

\noindent 6. $\begin{array}{l}
s\equals\zero\mli  F(s1)
\end{array}$  \ \ TR:  4,5  \vspace{3pt}

\noindent 7. $\begin{array}{l}
\bigl(s\equals\zero\mli  F(s1)\bigr)\mli \bigl(s\equals\zero\mli  F(\zero 1)\bigr) 
\end{array}$  \ \ $\clfour$-Instantiation,  matches $ (s\equals f\mli P (g(s) ) )\mli  (s\equals f\mli P  (g(f) ) )$\vspace{3pt}

\noindent 8. $\begin{array}{l}
s\equals\zero\mli  F(\zero 1)  
\end{array}$  \ \ MP: 6,7 \vspace{3pt}

\noindent 9. $\begin{array}{l}
\ade x(x\equals\zero)\mli  F(\zero 1) 
\end{array}$  \ \ \ \ $\ada$-Introduction:  8 \vspace{3pt}

\noindent 10. $\begin{array}{l}
F(\zero 1) 
\end{array}$  \ \  MP: 1,9\vspace{3pt}

\noindent 11. $\begin{array}{l}
\zero 1 \equals \zero\successor
\end{array}$  \ \  $\pa$ \vspace{3pt}

\noindent 12. $\begin{array}{l}
F(\zero 1)\mlc \zero 1 \equals \zero\successor\mli F(\zero\successor)
\end{array}$  \ \  $\clfour$-Instantiation, matches $P(f)\mlc f\equals g\mli P(g)$ \vspace{3pt}

\noindent 13. $\begin{array}{l}
F(\zero\successor)
\end{array}$  \ \  MP: 10,11,12 \vspace{3pt}

\noindent 14. $\begin{array}{l}
F(\zero\successor)\mli v\equals \zero\successor\mli F(v)
\end{array}$  \ \  $\clfour$-Instantiation, matches (\ref{sch1}) \vspace{3pt}

\noindent 15. $\begin{array}{l}
 v\equals \zero\successor\mli F(v)
\end{array}$  \ \  MP: 13,14 \vspace{3pt}

\noindent 16. $\begin{array}{l}
v\equals \zero\successor \mli   \ada x\bigl(|x|\mleq t\mli F(x)\bigr)\mli  |v|\mleq t\successor \mli F(v)
\end{array}$  \ \ Weakening (twice): 15 \vspace{7pt}

Finally, to construct a proof of  (\ref{nov11d}), observe  that the following formula is valid in classical logic:
\[
\begin{array}{l}
\bigl(p_1(f) \mlc p_2(f)    \mli   p_3\bigr)\ \mli \    
 \bigl( p_4\mli p_5(f)\bigr)\mlc v\equals f\ \mli \ 
 p_2(v) \ \mli  \ \bigl(p_3\mli p_4\bigr)\ \mli \ p_1(v) \mli \ p_5(v). 
\end{array}\] 
Hence, by Match applied twice, $\clfour$ proves 
\begin{equation}\label{nov11g}
\begin{array}{l}
\bigl(p_1(f) \mlc p_2(f)    \mli  p_3\bigr)\ \mli \    
 \bigl( P\mli Q(f)\bigr)\mlc v\equals f\ \mli \ 
 p_2(v) \ \mli  \ \bigl(p_3\mli P\bigr)\ \mli \ p_1(v) \mli \ Q(v). 
\end{array}\end{equation}  
The following two formulas are instances of (\ref{nov11g}), and are therefore provable in $\arfour$:
\begin{equation}\label{nov11e}
\bigl(|r0|\mleq t\successor \mlc r0 \mgreater \zero \successor \mli |r|\mleq t\bigr)  \mli    
 \bigl(  F(r)\mli F(r0)\bigr)\mlc v\equals  r0  \mli  
 v\mgreater \zero \successor   \mli   \bigl(|r|\mleq t\mli F(r)\bigr)  \mli   |v|\mleq t\successor  \mli   F(v). 
\end{equation} 
\begin{equation}\label{nov11f}
\bigl(|r1|\mleq t\successor \mlc r1 \mgreater \zero \successor \mli |r|\mleq t\bigr) \mli   
 \bigl(  F(r)\mli F(r1)\bigr)\mlc v\equals  r1 \mli    
 v\mgreater \zero \successor  \mli  \bigl(|r|\mleq t \mli  F(r)\bigr) \mli  |v|\mleq t\successor  \mli  F(v). 
\end{equation} 

Now, the following sequence is a $\arfour$-proof of (\ref{nov11d}):\vspace{7pt}

\noindent 1. $\begin{array}{l}
\ade x(v\equals x0\add v\equals x1)
\end{array}$  \ \ Axiom 12    \vspace{3pt}

\noindent 2. $\begin{array}{l}
|r0|\mleq t\successor \mlc r0 \mgreater \zero \successor \mli |r|\mleq t
\end{array}$  \ \  $\pa$ \vspace{3pt}

\noindent 3. $\begin{array}{l}
\bigl(  F(r)\mli F(r0)\bigr)\mlc v\equals r0 \mli 
 v\mgreater \zero \successor \mli  \bigl(|r|\mleq t\mli F(r)\bigr)\mli  |v|\mleq t\successor \mli F(v) 
\end{array}$  \ \ MP: (\ref{nov11e}),2   \vspace{3pt}

\noindent 4. $\begin{array}{l}
\bigl(F(r)\mli F(r0)\adc F(r1)\bigr)\mlc v\equals r0 \mli
 v\mgreater \zero \successor \mli  \bigl(|r|\mleq t\mli F(r)\bigr)\mli  |v|\mleq t\successor \mli F(v) 
\end{array}$  \ \  $\add$-Choose: 3  \vspace{3pt}

\noindent 5. $\begin{array}{l}
|r1|\mleq t\successor \mlc r1 \mgreater \zero \successor \mli |r|\mleq t
\end{array}$  \ \  $\pa$  \vspace{3pt}

\noindent 6. $\begin{array}{l}
\bigl(  F(r)\mli F(r1)\bigr)\mlc v\equals r1 \mli 
 v\mgreater \zero \successor \mli \bigl(|r|\mleq t\mli F(r)\bigr)\mli  |v|\mleq t\successor \mli F(v) 
\end{array}$  \ \    MP: (\ref{nov11f}),5   \vspace{3pt}

\noindent 7. $\begin{array}{l}
\bigl(  F(r)\mli F(r0)\adc F(r1)\bigr)\mlc v\equals r1 \mli 
 v\mgreater \zero \successor \mli  \bigl(|r|\mleq t\mli F(r)\bigr)\mli  |v|\mleq t\successor \mli F(v) 
\end{array}$  \ \  $\add$-Choose: 8  \vspace{3pt}

\noindent 8. $\begin{array}{l}
\bigl(  F(r)\mli F(r0)\adc F(r1)\bigr)\mlc (v\equals r0\add v\equals r1) \mli
 v\mgreater \zero \successor \mli \bigl(|r|\mleq t\mli F(r)\bigr)\mli  |v|\mleq t\successor \mli F(v) 
\end{array}$  \ \  $\adc$-Intro: 4,7  \vspace{3pt}

\noindent 9. $\begin{array}{l}
\ada x\bigl(  F(x)\mli F(x0)\adc F(x1)\bigr)\mlc (v\equals r0\add v\equals r1) \mli
 v\mgreater \zero \successor \mli  \ada x\bigl(|x|\mleq t\mli F(x)\bigr)\mli  |v|\mleq t\successor \mli F(v) 
\end{array}$    \mbox{$\ade$-Chooses: 8}\vspace{3pt}

\noindent 10. $\begin{array}{l}
\ada x\bigl( F(x)\mli F(x0)\adc F(x1)\bigr)\mlc \ade x(v\equals x0\add v\equals x1) \mli
 v\mgreater \zero \successor \mli  \ada x\bigl(|x|\mleq t\mli F(x)\bigr)\mli  |v|\mleq t\successor \mli F(v) 
\end{array}$    \mbox{$\ada$-Intro: 9}\vspace{3pt}

\noindent 11. $\begin{array}{l}
v\mgreater \zero \successor \mli  \ada x\bigl(|x|\mleq t\mli F(x)\bigr)\mli  |v|\mleq t\successor \mli F(v) 
\end{array}$  \ \  MP: (\ref{nov9baa}),1,10  \vspace{9pt}

The provability of each of the three formulas (\ref{nov11b}), (\ref{nov11c}) and (\ref{nov11b}) has now been verified. From these three facts and the provability of (\ref{nov11a}), by $\add$-Elimination, we find that $\arfour$ proves $(\ref{nov10c})$. This, in turn, as noted earlier, implies  $(\ref{nov10b})$. Now, from  $(\ref{nov10a})$ and $(\ref{nov10b})$, by WPTI, we find that  
\[\arfour\vdash t\mleq \bound\mli \ada x\bigl(|x|\mleq t\mli F(x)\bigr).\]
The above, by $\ada$-Introduction, yields
\(\arfour\vdash \ada y\Bigl(y\mleq \bound\mli \ada x\bigl(|x|\mleq y\mli F(x)\bigr)\Bigr),\) from which, by $\ada$-Elimination,   
\(\arfour\vdash \bound\mleq \bound\mli \ada x\bigl(|x|\mleq \bound\mli F(x)\bigr).\)
But  $\pa\vdash \bound\mleq\bound$. So, by Modus Ponens,  
\(\arfour \vdash \ada x\bigl(|x|\mleq \bound\mli F(x)\bigr),\) from which, by $\ada$-Elimination, 
 $\arfour\vdash |s|\mleq \bound\mli F(s)$. This, together with Axiom 13, by Modus ponens, yields the desired conclusion  $\arfour\vdash  F(s)$.
\end{proof}

\subsection{An illustration of BSI in work} In this section we prove one $\arfour$-provability fact which, with the soundness of $\arfour$ in mind,   formally establishes the efficient decidability of the equality predicate. The proof of this fact presents a good exercise on using BSI, and  may help the reader appreciate the convenience offered by this rule, which is often a more direct and intuitive tool for efficiency-preserving inductive reasoning than PTI is.  

\begin{lemma}\label{nov18a}
$\arfour\vdash \ada x \ada y(y\equals x\add y\notequals x)$. 
\end{lemma}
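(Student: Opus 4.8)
The plan is to reduce the statement, via a single application of $\ada$-Introduction (Fact \ref{adaintroclosure}) that abstracts $s$ into the bound variable $x$, to proving $\ada y(y\equals s\add y\notequals s)$, and to prove the latter by BSI (Theorem \ref{nov9b}) with
\[F(s)\ :=\ \ada y(y\equals s\add y\notequals s).\]
So it suffices to derive the basis $F(\zero)$ and the inductive step $F(s)\mli F(s0)\adc F(s1)$. The basis $\ada y(y\equals\zero\add y\notequals\zero)$ is immediate from Axiom 9, $s\equals\zero\add s\notequals\zero$, by $\ada$-Introduction. Intuitively, this instance of BSI just reflects the digit-by-digit equality test: knowing how to decide equality against $s$ lets us decide equality against $s0$ and against $s1$.

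For the inductive step, by $\adc$-Introduction (Fact \ref{adcintroclosure}) it is enough to prove the two reductions
\[(\mathrm{L})\ \ \ada y(y\equals s\add y\notequals s)\mli\ada y(y\equals s0\add y\notequals s0),\]
\[(\mathrm{R})\ \ \ada y(y\equals s\add y\notequals s)\mli\ada y(y\equals s1\add y\notequals s1),\]
whose common antecedent $F(s)$ furnishes the $E^\vee[\,\cdot\,]$ context while the two consequents combine into $F(s0)\adc F(s1)$. Case $(\mathrm{R})$ is the mirror image of $(\mathrm{L})$ with the binary $0$- and $1$-successors swapped, so I would only carry out $(\mathrm{L})$. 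By $\ada$-Introduction applied inside the consequent, $(\mathrm{L})$ reduces to proving, for a fresh $v$,
\[(\mathrm{L}')\ \ \ada y(y\equals s\add y\notequals s)\mli\bigl(v\equals s0\add v\notequals s0\bigr).\]

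To prove $(\mathrm{L}')$ I would follow the template of the derivations of (\ref{nov11a}) and (\ref{nov11d}). First, Axiom 12 instantiated at $v$, namely $\ade x(v\equals x0\add v\equals x1)$, lets us find the binary predecessor $r$ of $v$ together with the parity of $v$. In the odd case $v\equals r1$, $\pa$ proves $v\notequals s0$, so $\add$-Choose picks the right disjunct. In the even case $v\equals r0$, $\pa$ proves both $r\equals s\mlc v\equals r0\mli v\equals s0$ and $r\notequals s\mlc v\equals r0\mli v\notequals s0$; applying $\add$-Choose to each and then $\adc$-Introduction produces a derivation with antecedent $(r\equals s\add r\notequals s)\mlc v\equals r0$, after which $\ade$-Choose specializes the resource $\ada y(y\equals s\add y\notequals s)$ down to $r\equals s\add r\notequals s$. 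Gluing the even and odd branches by $\adc$-Introduction (a Weakening first inserts the so-far-unused resource into the odd branch) gives a derivation whose antecedent is $\ada y(y\equals s\add y\notequals s)\mlc(v\equals r0\add v\equals r1)$; one $\ada$-Introduction on $r$ then turns $v\equals r0\add v\equals r1$ into $\ade x(v\equals x0\add v\equals x1)$, i.e.\ Axiom 12. Finally I would discharge Axiom 12 while keeping the induction hypothesis live: from $\clfour\vdash(P_1\mlc P_2\mli Q)\mlc P_2\mli(P_1\mli Q)$ (three applications of Match to a classical tautology), $\clfour$-Instantiation and two Modus Ponens steps --- one with the derived implication and one with Axiom 12 --- yield $(\mathrm{L}')$.

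The main obstacle is bookkeeping rather than ideas: every manipulation of the antecedent $F(s)$ and of Axiom 12 must be read in negation normal form, where an antecedent $\add$ becomes a positive $\adc$, so that the case analyses are glued by $\adc$-Introduction exactly as in the proof of (\ref{nov11a}); one must also keep the predecessor variable $r$ distinct from the universally chosen value $v$ and from the induction variable $s$. The one genuinely non-routine point is that, unlike in the proof of (\ref{nov11a}) where both conjuncts of the antecedent were independently provable and could be removed by Modus Ponens, here the resource $\ada y(y\equals s\add y\notequals s)$ is only the induction hypothesis and must survive to the conclusion of the step; this is precisely what the $\clfour$-provable cut pattern $(P_1\mlc P_2\mli Q)\mlc P_2\mli(P_1\mli Q)$ accomplishes, discharging only Axiom 12. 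Once $(\mathrm{L})$ and $(\mathrm{R})$ are in hand, $\adc$-Introduction gives the inductive step, BSI gives $F(s)$, and a final $\ada$-Introduction gives the lemma.
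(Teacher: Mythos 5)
Your proposal is correct and follows essentially the same route as the paper: BSI with $F(s)=\ada y(y\equals s\add y\notequals s)$, the basis from Axiom 9 by $\ada$-Introduction, and an inductive step that uses Axiom 12 to extract the binary predecessor $r$ of the universally chosen $v$ together with its parity, feeds $r$ into the antecedental resource via $\ade$-Choose in the even case, and dismisses the odd case by a $\pa$-provable parity fact plus Weakening. The only cosmetic difference is at the very end of the step: where you invoke a $\clfour$-Instantiation cut pattern to discharge Axiom 12 while keeping the induction hypothesis live, the paper simply orders the implications so that Axiom 12 is the outermost antecedent and discharges it by a direct Modus Ponens.
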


\begin{idea} Using BSI, prove  $\ada y(y\equals s\add y\notequals s)$, from which the target formula follows  by $\ada$-Introduction. 
\end{idea}

\begin{proof} Let us first give an informal justification for $\ada x \ada y(y\equals x\add y\notequals x)$. We proceed by BSI-induction on $s$, where the formula $F(s)$ of induction is $\ada y(y\equals s\add y\notequals s)$. By Axiom 9, for an arbitrary $y$, we can tell whether $y\equals \zero$ or $y\notequals \zero$. This takes care of the basis (left premise) 
\begin{equation}\label{lpre}
\ada y(y\equals \zero\add y\notequals \zero)\end{equation}
 of induction. For the inductive step (right premise)
\begin{equation}\label{rpre}
\ada y(y\equals s\add y\notequals s)\mli \ada y(y\equals s0\add y\notequals s0)\adc \ada y(y\equals s1\add y\notequals s1) ,\end{equation}
 assume the resource $\ada y(y\equals s\add y\notequals s)$ is at our disposal. We need to show that we can solve 
\[\ada y(y\equals s0\add y\notequals s0)\adc\ada y(y\equals s1\add y\notequals s1),\] i.e., either {\em one} of the problems $\ada y(y\equals s0\add y\notequals s0)$ and $\ada y(y\equals s1\add y\notequals s1) $. Let us for now look at the first problem. Consider an arbitrary $y$. Axiom 12 allows us to find the binary predecessor $r$ of $y$ and also tell whether $y\equals r0$ or $y\equals r1$. If $y\equals r1$, then we already know that $y\notequals s0$ (because $s0$ is even while $r1$ is odd). And if $y\equals r0$, then $y\equals s0$ --- i.e. $r0\equals s0$ --- iff $r\equals s$. But whether $r\equals s$ we can figure out using (the available single copy of) the resource  $\ada y(y\equals s\add y\notequals s)$. To summarize, in any case we can tell whether $y\equals s0$ or $y\notequals s0$, meaning that we can solve $\ada y(y\equals s0\add y\notequals s0)$. The case of $\ada y(y\equals s1\add y\notequals s1)$ is handled in a similar way.
Then, by BSI, (\ref{lpre}) and (\ref{rpre}) imply $\ada y(s\equals y\add s\notequals y)$, which, in turn (by $\ada$-Introduction), implies $\ada x\ada y(x\equals y\add x\notequals y)$.

The above informal argument can be formalized as follows:\vspace{9pt}

\noindent 1. $\begin{array}{l}
s\equals\zero\add s\notequals \zero
\end{array}$  \ \ Axiom 9   \vspace{3pt}

\noindent 2. $\begin{array}{l}
\ada y(y\equals\zero\add y\notequals \zero)
\end{array}$  \ \ $\ada$-Introduction: 1  \vspace{3pt}

\noindent 3. $\begin{array}{l}
\ade x(t\equals x0\add t\equals x1)
\end{array}$  \ \ Axiom 12   \vspace{3pt}

\noindent 4. $\begin{array}{l}
t\equals r0  \mli  r\equals s  \mli  t\equals s0 
\end{array}$  \ \ Logical axiom  \vspace{3pt}

\noindent 5. $\begin{array}{l}
t\equals r0  \mli  r\equals s  \mli  t\equals s0\add t\notequals s0  
\end{array}$  \ \  $\add$-Choose: 4  \vspace{3pt}

\noindent 6. $\begin{array}{l}
t\equals r0  \mli   r\notequals s \mli   t\notequals s0  
\end{array}$  \ \ \pa   \vspace{3pt}

\noindent 7. $\begin{array}{l}
t\equals r0  \mli   r\notequals s \mli  t\equals s0\add t\notequals s0  
\end{array}$  \ \   $\add$-Choose: 6  \vspace{3pt}

\noindent 8. $\begin{array}{l}
t\equals r0  \mli  r\equals s\add r\notequals s \mli  t\equals s0\add t\notequals s0  
\end{array}$  \ \  $\adc$-Introduction: 5,7  \vspace{3pt}

\noindent 9. $\begin{array}{l}
t\equals r0  \mli \ada y(y\equals s\add y\notequals s)\mli  t\equals s0\add t\notequals s0  
\end{array}$  \ \  $\ade$-Choose: 8  \vspace{3pt}

\noindent 10. $\begin{array}{l}
t\equals r1 \mli   t\notequals s0  
\end{array}$  \ \  $\pa$  \vspace{3pt}

\noindent 11. $\begin{array}{l}
t\equals r1 \mli \ada y(y\equals s\add y\notequals s)\mli   t\notequals s0  
\end{array}$  \ \  Wakening: 10  \vspace{3pt}

\noindent 12. $\begin{array}{l}
t\equals r1 \mli \ada y(y\equals s\add y\notequals s)\mli  t\equals s0\add t\notequals s0  
\end{array}$  \ \  $\add$-Choose: 11 \vspace{3pt}

\noindent 13. $\begin{array}{l}
t\equals r0\add t\equals r1 \mli \ada y(y\equals s\add y\notequals s)\mli  t\equals s0\add t\notequals s0  
\end{array}$  \ \  $\adc$-Introduction: 9,12  \vspace{3pt}

\noindent 14. $\begin{array}{l}
 \ade x(t\equals x0\add t\equals x1)\mli \ada y(y\equals s\add y\notequals s)\mli  t\equals s0\add t\notequals s0  
\end{array}$  \ \  $\ada$-Introduction: 13  \vspace{3pt}

\noindent 15. $\begin{array}{l}
\ada y(y\equals s\add y\notequals s)\mli  t\equals s0\add t\notequals s0  
\end{array}$  \ \  MP: 3,14  \vspace{3pt}

\noindent 16. $\begin{array}{l}
\ada y(y\equals s\add y\notequals s)\mli \ada y(y\equals s0\add y\notequals s0) 
\end{array}$  \ \ $\ada$-Introduction: 15   \vspace{3pt}

\noindent 17. $\begin{array}{l}
t\equals r1  \mli  r\equals s  \mli  t\equals s1 
\end{array}$  \ \ Logical axiom  \vspace{3pt}

\noindent 18. $\begin{array}{l}
t\equals r1  \mli  r\equals s  \mli  t\equals s1\add t\notequals s1  
\end{array}$  \ \  $\add$-Choose: 17  \vspace{3pt}

\noindent 19. $\begin{array}{l}
t\equals r1  \mli   r\notequals s \mli   t\notequals s1  
\end{array}$  \ \ $\pa$   \vspace{3pt}

\noindent 20. $\begin{array}{l}
t\equals r1  \mli   r\notequals s \mli  t\equals s1\add t\notequals s1  
\end{array}$  \ \   $\add$-Choose: 19  \vspace{3pt}

\noindent 21. $\begin{array}{l}
t\equals r1  \mli  r\equals s\add r\notequals s \mli  t\equals s1\add t\notequals s1  
\end{array}$  \ \  $\adc$-Introduction: 18,20  \vspace{3pt}

\noindent 22. $\begin{array}{l}
t\equals r1  \mli \ada y(y\equals s\add y\notequals s)\mli  t\equals s1\add t\notequals s1  
\end{array}$  \ \  $\ade$-Choose: 21  \vspace{3pt}

\noindent 23. $\begin{array}{l}
t\equals r0 \mli    t\notequals s1  
\end{array}$  \ \  $\pa$  \vspace{3pt}

\noindent 24. $\begin{array}{l}
t\equals r0 \mli \ada y(y\equals s\add y\notequals s)\mli   t\notequals s1  
\end{array}$  \ \  Weakening: 23  \vspace{3pt}

\noindent 25. $\begin{array}{l}
t\equals r0 \mli \ada y(y\equals s\add y\notequals s)\mli  t\equals s1\add t\notequals s1  
\end{array}$  \ \  $\add$-Choose: 24  \vspace{3pt}

\noindent 26. $\begin{array}{l}
t\equals r0\add t\equals r1 \mli \ada y(y\equals s\add y\notequals s)\mli  t\equals s1\add t\notequals s1  
\end{array}$  \ \  $\adc$-Introduction: 25,22  \vspace{3pt}

\noindent 27. $\begin{array}{l}
 \ade x(t\equals x0\add t\equals x1)\mli \ada y(y\equals s\add y\notequals s)\mli  t\equals s1\add t\notequals s1  
\end{array}$  \ \  $\ada$-Introduction: 26  \vspace{3pt}

\noindent 28. $\begin{array}{l}
\ada y(y\equals s\add y\notequals s)\mli  t\equals s1\add t\notequals s1  
\end{array}$  \ \  MP: 3,27  \vspace{3pt}

\noindent 29. $\begin{array}{l}
\ada y(y\equals s\add y\notequals s)\mli \ada y(y\equals s1\add y\notequals s1) 
\end{array}$  \ \ $\ada$-Introduction: 28   \vspace{3pt}

\noindent 30. $\begin{array}{l}
\ada y(y\equals s\add y\notequals s)\mli \ada y(y\equals s0\add y\notequals s0)\adc \ada y(y\equals s1\add y\notequals s1)
\end{array}$  \ \ $\adc$-Introduction: 16,29   \vspace{3pt}
 
\noindent 31. $\begin{array}{l}
\ada y(y\equals s\add y\notequals s) 
\end{array}$  \ \ BSI: 2,30   \vspace{3pt}

\noindent 32. $\begin{array}{l}
\ada x\ada y(y\equals x\add y\notequals x) 
\end{array}$  \ \ $\ada$-Introduction: 31  \vspace{7pt}
\end{proof}
  
\subsection{ PTI+, WPTI+ and BSI+}

The conclusion of PTI limits $s$ to ``very small'' values --- those that do not exceed (the value of) some $\bound$-term $\tau$. On the other hand, the right premise of the rule does not impose the corresponding restriction $s\mless\tau$ on $s$, and appears to be  stronger than necessary. Imposing the additional condition $|s\successor |\mleq \bound$ on $s$ in that premise also seems reasonable, because the size of $s$ in the conclusion cannot exceed $\bound$ anyway, and hence there is no need to prove the induction hypothesis for  the cases with $|s\successor|\mgreater \bound$.\footnote{The condition $s\mless \tau$ would not automatically imply $|s\successor|\mleq \bound$: in pathological cases where $\bound$ is ``very small'', it may happen that the first condition holds but the second condition is still violated.}  So, one might ask why we did not state PTI in the following, seemingly stronger, form --- call it ``{\bf PTI+}'':\label{iptiplus}
\[\frac{E(\zero)\mlc F(\zero)\hspace{45pt} s\mless\tau\mlc |s\successor|\mleq\bound \mlc E(s)\mlc F(s)\mli E(s\successor)\adc \bigl(  F(s\successor )\mlc E(s)\bigr)}{s\mleq \tau\mli E(s)\mlc F(s)}\]
(with the same additional conditions  as in PTI.)

The answer is very simple: PTI+, while being esthetically (or from the point of view of simplicity) inferior to PTI, does not really offer any greater deductive power, as implied by the forthcoming Theorem \ref{nov99}.

The following two rules --- call them {\bf WPTI+}\label{irp} (left) and {\bf BSI+} (right) --- are  pseudostrengthenings of WPTI and BSI in the same sense as PTI+ is a pseudostrengthening of PTI:

  \begin{center}
\begin{picture}(398,30)
\put(0,17){$F(\zero)\hspace{25pt}   s\mless\tau\mlc |s\successor|\mleq\bound \mlc F(s)\mli F(s\successor )$}
\put(0,11){\line(1,0){170}}
\put(58,0){$s\mleq \tau\mli F(s)$}

\put(220,17){$F(\zero)\hspace{25pt} |s0|\mleq\bound\mlc F(s) \mli F(s0)\adc F(s1)$}
\put(220,11){\line(1,0){178}}
\put(298,0){$ F(s)$}
\end{picture}
\end{center}
where $s$ is any variable different from $\bound$, $F(s)$ is any formula, and $\tau$ is any $\bound$-term.

\begin{theorem}\label{na}
WPTI+ is admissible in $\arfour$.
\end{theorem}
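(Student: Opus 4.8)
The plan is to reduce WPTI+ to the already-established admissibility of WPTI (Theorem \ref{nov9a}). The two rules share the same basis $F(\zero)$ and the same conclusion $s\mleq\tau\mli F(s)$; they differ only in their inductive steps. WPTI+ has the ostensibly weaker step $s\mless\tau\mlc |s\successor|\mleq\bound\mlc F(s)\mli F(s\successor)$, whereas WPTI requires the stronger $F(s)\mli F(s\successor)$. So, assuming $\arfour$ proves the WPTI+ premises, my goal is to manufacture, inside $\arfour$, the unrestricted step $F(s)\mli F(s\successor)$ that WPTI needs --- or, failing that, to show that the restricted step still drives WPTI to the same conclusion. The first, cleaner route seems unlikely in general, since the value of $F(s\successor)$ is genuinely not promised when $s\mgeq\tau$ or $|s\successor|\mgreater\bound$. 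Hence I expect the real work to be a careful bookkeeping argument rather than a one-line instantiation.

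The key observation that makes this work is that the conclusion only ever asserts $F(s)$ under the hypothesis $s\mleq\tau$, and along any inductive chain $\zero, 0\successor, \ldots, s$ reaching such an $s$, every intermediate value $r$ satisfies $r\mless\tau$ and (by Axiom 13, which guarantees $|s|\mleq\bound$ for the value actually under consideration, together with the fact that $r\mless\tau\mleq s$) also $|r\successor|\mleq\bound$. So the side conditions $s\mless\tau$ and $|s\successor|\mleq\bound$ attached to the WPTI+ step are exactly the conditions that hold at every step we actually need to traverse. Concretely, I would introduce an auxiliary induction formula, something like $F'(s)\;=\;\bigl(s\mleq\tau \mlc |s|\mleq\bound\bigr)\mli F(s)$, or more simply carry the guard inside and apply WPTI to a guarded version of $F$. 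The plan is: (i) by $\clfour$-Instantiation and Modus Ponens, convert the WPTI+ inductive-step premise into a proof of $F'(s)\mli F'(s\successor)$ --- here the implication from $F'(s)$ to $F'(s\successor)$ is available because, whenever the guard of $F'(s\successor)$ fires (i.e. $s\successor\mleq\tau$ and $|s\successor|\mleq\bound$), one automatically has $s\mless\tau$ and $|s\successor|\mleq\bound$, so the restricted WPTI+ step applies; and whenever that guard fails, $F'(s\successor)$ is automatically satisfied, its antecedent being false. (ii) Check the basis $F'(\zero)$, which follows from the WPTI+ basis $F(\zero)$ by Weakening. (iii) Apply the already-admissible WPTI to $F'$ to obtain $s\mleq\tau\mli F'(s)$, and finally unwind the guard via $\clfour$-Instantiation and Modus Ponens (using $|s|\mleq\bound$ from Axiom 13) to recover $s\mleq\tau\mli F(s)$.

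The routine but slightly delicate part is verifying, as a matter of pure arithmetic liftable into \pa, the purely elementary implications that glue the pieces together --- for instance that $s\successor\mleq\tau$ implies $s\mless\tau$, and that $s\successor\mleq\tau$ together with Axiom 13 (applied to the value of interest) yields $|s\successor|\mleq\bound$. These are $\pa$-facts and can be inserted as ``lazy'' proof lines in the manner already sanctioned by Fact \ref{nov7}. The logical plumbing --- moving the guard $s\mleq\tau\mlc|s|\mleq\bound$ in and out past the $\mli$, and distributing it correctly through the step formula --- is handled by $\clfour$-Instantiation against appropriate classical tautologies, exactly as in the proofs of WPTI and BSI above.

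The main obstacle I anticipate is getting the guarded induction formula $F'$ exactly right so that the inductive step $F'(s)\mli F'(s\successor)$ is genuinely derivable from the \emph{restricted} WPTI+ step without secretly needing the unrestricted one. The subtlety is the boundary case: when $s\mless\tau$ but $s\successor\equals\tau$, or when $|s\successor|\mleq\bound$ sits right at the threshold, the side conditions of WPTI+ must be shown to hold precisely when $F'(s\successor)$'s guard is active. I expect this to require a small case split (inside the elementary/\pa-provable layer) on whether the guard of $F'(s\successor)$ is satisfied, with the ``guard-fails'' case discharged trivially because a false antecedent makes $F'(s\successor)$ hold outright. Once that case analysis is packaged as an instance of a classical tautology fed through $\clfour$-Instantiation, the remainder is mechanical.
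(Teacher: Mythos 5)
Your proposal is correct and follows essentially the same route as the paper: the paper's proof of Theorem \ref{na} likewise applies WPTI to the guarded formula $|s|\mleq\bound\mli s\mleq\tau\mli F(s)$ (your $F'$ up to currying), obtains the guarded inductive step from the restricted WPTI+ premise via a single $\clfour$-provable scheme combined with the $\pa$-fact $s\successor\mleq\tau\mlc|s\successor|\mleq\bound\mli s\mless\tau\mlc s\mleq\tau\mlc|s|\mleq\bound$, and then discharges the guard using Axiom 13 and a $\clfour$-instance of $(p\mli(q\mli p\mli Q))\mli(q\mli p\mli Q)$. The boundary-case split you anticipate is indeed absorbed into the classical tautology underlying that scheme, exactly as you suggest at the end.
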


\begin{idea} WPTI+ is essentially a special case of WPTI with  $|s|\mleq \bound\mli s\mleq\tau \mli F(s)$ in the role of   $F(s)$. \end{idea}

\begin{proof}  
Assume $s$, $F(s)$, $\tau$ are as stipulated in the rule,  
\begin{equation}\label{nbb}
\arfour\vdash  F(\zero)\end{equation}
and 
\begin{equation}\label{nba}
 \arfour\vdash  s\mless\tau\mlc |s\successor |\mleq \bound \mlc  F(s)\mli  F(s\successor   ) .\end{equation}

Our goal is to verify that $\arfour\vdash s\mleq \tau\mli  F(s)$. 

From (\ref{nbb}), by Weakening applied twice, we get   
\begin{equation}\label{n19a}
\arfour\vdash  |\zero  |\mleq \bound\mli \zero\mleq\tau   \mli  F(\zero) .\end{equation}

Next,   observe that \[\clfour\vdash\bigl(q_1\mlc q_2\mli p_1\mlc p_2\mlc p_3\bigr)\mlc\bigl(p_1\mlc q_2\mlc  P\mli  Q\bigr) \mli  \bigl(p_3\mli p_2\mli P\bigr)\mli   \bigl(q_2\mli q_1\mli Q\bigr).\] Hence, by $\clfour$-Instantiation, we have 
\begin{equation}\label{n19b}
\begin{array}{l}
\arfour \vdash \bigl(s\successor\mleq \tau\mlc |s\successor |\mleq\bound \mli s\mless\tau\mlc s\mleq\tau\mlc |s|\mleq \bound\bigr)\mlc \bigl(s\mless\tau\mlc |s\successor |\mleq\bound \mlc  F(s)\mli  F(s\successor)\bigr)  \mli \\ \bigl(|s|\mleq\bound\mli s \mleq\tau\mli F(s )\bigr)\mli   \bigl(|s\successor |\mleq\bound\mli s\successor\mleq\tau\mli F(s\successor)\bigr).
\end{array} \end{equation}
We also have $\pa\vdash s\successor\mleq \tau \mlc |s\successor |\mleq\bound \mli  s\mless\tau\mlc s\mleq\tau\mlc |s|\mleq \bound$. This, together with (\ref{nba}) and (\ref{n19b}), by Modus Ponens, yields 
\begin{equation}\label{n19r}
\arfour \vdash   \bigl(|s|\mleq\bound\mli s \mleq\tau\mli F(s )\bigr)\mli   \bigl(|s\successor |\mleq\bound\mli s\successor\mleq\tau\mli F(s\successor)\bigr) .\end{equation}

From (\ref{n19a}) and (\ref{n19r}), by WPTI, we get 
\begin{equation}\label{n19t}
\arfour \vdash   s \mleq\tau\mli \bigl(|s|\mleq \bound\mli s \mleq\tau\mli F(s )\bigr)   .\end{equation}
But $\clfour\vdash \bigl(p\mli(q\mli p\mli Q)\bigr)\mli(q\mli p\mli Q)$ and hence, by $\clfour$-Instantiation,  
\[\arfour\vdash \Bigl(s \mleq\tau\mli \bigl(|s|\mleq\bound\mli s \mleq\tau\mli F(s )\bigr)\Bigr)\mli 
 \bigl(|s|\mleq\bound\mli  s \mleq\tau\mli F(s )\bigr) .\]
 Modus-ponensing the above with (\ref{n19t}) yields $\arfour\vdash |s|\mleq \bound\mli s \mleq\tau\mli F(s )$. Now, remembering Axiom 13, by Modus Ponens, we get  the desired $\arfour\vdash s \mleq\tau\mli F(s )$.
 \end{proof}

Note that the above proof established something stronger than what Theorem \ref{na} states. Namely, our proof of the admissibility of WPTI+ relied on WPTI without appealing to PTI. This means that WPTI+ would remain admissible even if $\arfour$ had WPTI instead of PTI. It is exactly this fact that justifies the qualification ``pseudostrengthening of WPTI'' that we gave to WPTI+. The same applies to the other two pseudostrengthening rules PTI+ and BSI+ discussed in this subsection.

\begin{theorem}\label{nb}
BSI+ is admissible in $\arfour$.
\end{theorem}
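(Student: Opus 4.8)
The plan is to mirror, almost verbatim, the strategy used for WPTI+ in Theorem \ref{na}, replacing each appeal to WPTI by an appeal to (the already-admissible) BSI. The guiding observation is that BSI+ is essentially a special case of BSI with $|s|\mleq\bound\mli F(s)$ in the role of the formula of induction. So I would begin by assuming that $s$, $F(s)$ are as stipulated, and that $\arfour$ proves both premises of BSI+, namely $F(\zero)$ and the step $|s0|\mleq\bound\mlc F(s)\mli F(s0)\adc F(s1)$.

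First I would dispatch the basis: from $\arfour\vdash F(\zero)$, by Weakening, obtain $\arfour\vdash |\zero|\mleq\bound\mli F(\zero)$, which serves as the basis for the BSI we intend to run on the guarded formula $G(s):=\bigl(|s|\mleq\bound\mli F(s)\bigr)$. The main work is the inductive step: I must show $\arfour\vdash G(s)\mli G(s0)\adc G(s1)$, i.e.
\[
\arfour\vdash \bigl(|s|\mleq\bound\mli F(s)\bigr)\mli \bigl(|s0|\mleq\bound\mli F(s0)\bigr)\adc \bigl(|s1|\mleq\bound\mli F(s1)\bigr).
\]
The key arithmetical facts, all provable in $\pa$, are $|s0|\mleq\bound\mli |s|\mleq\bound$ and $|s1|\mleq\bound\mli |s|\mleq\bound$ (deleting the last binary digit cannot increase length): these let me discharge the guard $|s|\mleq\bound$ on the antecedent from either guard $|s0|\mleq\bound$ or $|s1|\mleq\bound$ on a conjunct of the consequent. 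I would package this as a suitable $\clfour$-provable scheme and invoke $\clfour$-Instantiation to get, for the appropriate propositional shape,
\[
\bigl(|s0|\mleq\bound\mli |s|\mleq\bound\bigr)\mlc\bigl(|s1|\mleq\bound\mli |s|\mleq\bound\bigr)\mlc\bigl(|s0|\mleq\bound\mlc P\mli Q_0\adc Q_1\bigr)\mli\bigl((|s|\mleq\bound\mli P)\mli (|s0|\mleq\bound\mli Q_0)\adc(|s1|\mleq\bound\mli Q_1)\bigr),
\]
then Modus-Ponens in the given step premise together with the two $\pa$-facts to extract the desired $G(s)\mli G(s0)\adc G(s1)$. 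The only delicate point here is getting the $\adc$ in the consequent to survive the instantiation correctly, which is why the matching principle must already carry the $\adc$-structure; $\clfour$ does prove such principles since $\adc$-Introduction is available in it (cf.\ the schemes used in Theorems \ref{nov9a}, \ref{na}).

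With both premises of BSI established for the guarded formula $G$, applying BSI yields $\arfour\vdash |s|\mleq\bound\mli F(s)$. Finally, Axiom 13 gives $\arfour\vdash |s|\mleq\bound$, and one application of Modus Ponens produces the desired conclusion $\arfour\vdash F(s)$. I expect the main obstacle to be purely bookkeeping: verifying that the $\pa$-fact ``the length of a binary predecessor does not exceed the length of the number'' is exactly the arithmetical ingredient needed, and confirming that the matching $\clfour$-scheme with the embedded $\adc$ is genuinely $\clfour$-provable (which it is, being obtainable by $\adc$-Introduction from its $\adc$-free components). As with Theorem \ref{na}, I would remark that this proof appeals only to BSI and not to PTI, which justifies calling BSI+ a \emph{pseudostrengthening} of BSI rather than a genuine strengthening.
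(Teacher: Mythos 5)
Your overall architecture is the right one and matches the paper's: run BSI on the guarded formula $|s|\mleq\bound\mli F(s)$, get the basis by Weakening from $F(\zero)$, and finish with Axiom 13 and Modus Ponens. However, the specific $\clfour$-scheme you propose for the inductive step is \emph{not} $\clfour$-provable, so the key step fails as written. Abstracting your scheme gives
\[(p_0\mli q)\mlc(p_1\mli q)\mlc(p_0\mlc P\mli Q_0\adc Q_1)\mli\bigl((q\mli P)\mli(p_0\mli Q_0)\adc(p_1\mli Q_1)\bigr).\]
Consider the branch where the environment chooses the second $\adc$-conjunct of the consequent, i.e.\ asks for $Q_1$ under the assumption $p_1$. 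To extract $Q_1$ you must fire the resource $p_0\mlc P\mli Q_0\adc Q_1$, whose antecedent requires $p_0$ (that is, $|s0|\mleq\bound$); but your two elementary facts only give $p_0\mli q$ and $p_1\mli q$, and from $p_1$ alone you cannot derive $p_0$. Semantically: set $p_0=\tlg$, $p_1=q=P=\twg$ and let $Q_1$ be unwinnable; then every antecedent resource is won trivially while $(p_1\mli Q_1)$ is lost, so the scheme is not uniformly valid and hence, by the soundness of $\clfour$, not provable in it.

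The missing arithmetical ingredient is precisely that $|s1|\mleq\bound$ also implies $|s0|\mleq\bound$ (since $|s0|$ and $|s1|$ have equal length). The paper packages this into the single $\pa$-provable premise $|s0|\mleq\bound\mld|s1|\mleq\bound\mli|s0|\mleq\bound\mlc|s|\mleq\bound$ (schematically $p_0\mld p_1\mli p_0\mlc q$), so that in \emph{either} branch of the consequent one obtains both the guard $|s0|\mleq\bound$ needed by the step premise and the guard $|s|\mleq\bound$ needed by the induction hypothesis; with that replacement the scheme $(p_0\mld p_1\mli p_0\mlc q)\mlc(p_0\mlc Q\mli P_0\adc P_1)\mli(q\mli Q)\mli(p_0\mli P_0)\adc(p_1\mli P_1)$ is indeed $\clfour$-provable and the rest of your argument goes through. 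Your closing caveat (``is the binary-predecessor length fact exactly the ingredient needed?'') is therefore answered in the negative: it is necessary but not sufficient.
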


\begin{idea} 
BSI+ reduces to BSI with $|s|\mleq\bound\mli F(s)$ in the role of $F(s)$. 
\end{idea}

\begin{proof} 
Assume $s$, $F(s)$ are as stipulated in the rule, 
\begin{equation}\label{mbb}
\arfour\vdash  F(\zero)\end{equation}
and 
\begin{equation}\label{mba}
 \arfour\vdash |s0|\mleq\bound\mlc F(s) \mli F(s0)\adc F(s1) .\end{equation}

Our goal is to verify that $\arfour\vdash  F(s ) $. 

From (\ref{mbb}), by Weakening,  we have  
\begin{equation}\label{m19a}
\arfour\vdash  |\zero|\mleq\bound\mli  F(\zero) .\end{equation}

Next, in a routine (analytic) syntactic exercise, one can show that
\[\begin{array}{l}
\clfour \vdash \bigl(p_0\mld p_1\mli p_0\mlc q \bigr)  \mlc \bigl(p_0\mlc  Q\mli P_0\adc P_1\bigr)  \mli
  \bigl(q\mli Q\bigr)\mli   \bigl(p_0\mli P_0\bigr)\adc\bigl(p_1\mli P_1\bigr).  \end{array} \]
 Hence, by $\clfour$-Instantiation,  
\begin{equation}\label{m19b}
\begin{array}{l}
\arfour \vdash \bigl(|s0|\mleq\bound\mld |s1|\mleq\bound\mli |s0|\mleq\bound\mlc |s|\mleq\bound \bigr)  \mlc \bigl(|s0|\mleq\bound\mlc  F(s)\mli  F(s0)\adc F(s1)\bigr)  \mli\\
  \bigl(|s| \mleq\bound\mli F(s )\bigr)\mli   \bigl(|s0|\mleq\bound\mli F(s0)\bigr)\adc\bigl(|s1|\mleq\bound\mli F(s1)\bigr).  \end{array} \end{equation}
But   $\pa\vdash   |s0|\mleq\bound\mld |s1|\mleq\bound\mli |s0|\mleq\bound\mlc |s|\mleq\bound $.  This, together with (\ref{mba}) and (\ref{m19b}), by Modus Ponens, yields \[\arfour\vdash \bigl(|s| \mleq\bound\mli F(s )\bigr)\mli   \bigl(|s0|\mleq\bound\mli F(s0)\bigr)\adc\bigl(|s1|\mleq\bound\mli F(s1)\bigr).\]
The above and (\ref{m19a}), by BSI, yield $\arfour \vdash |s| \mleq\bound\mli F(s )$. Finally, modus-ponensing the latter with Axiom 13, we get the desired
$\arfour \vdash  F(s )$. 
 \end{proof}

\begin{theorem}\label{nov99}
PTI+ is admissible in $\arfour$.
\end{theorem}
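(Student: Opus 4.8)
The plan is to follow the template of Theorems~\ref{na} and~\ref{nb} and reduce PTI+ to ordinary PTI. Assume the two premises of the given PTI+ instance, i.e.\ $\arfour\vdash E(\zero)\mlc F(\zero)$ and $\arfour\vdash s\mless\tau\mlc|s\successor|\mleq\bound\mlc E(s)\mlc F(s)\mli E(s\successor)\adc\bigl(F(s\successor)\mlc E(s)\bigr)$. I would then apply PTI with the guarded formulas $E'(s)=\bigl(|s|\mleq\bound\mli s\mleq\tau\mli E(s)\bigr)$ and $F'(s)=\bigl(|s|\mleq\bound\mli s\mleq\tau\mli F(s)\bigr)$ in the roles of $E(s)$ and $F(s)$. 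The basis $E'(\zero)\mlc F'(\zero)$ follows from $E(\zero)\mlc F(\zero)$ by $\clfour$-Instantiation of $P\mlc Q\mli (c\mli d\mli P)\mlc(c\mli d\mli Q)$ followed by Modus Ponens, since prefixing the conditionals $|\zero|\mleq\bound\mli$ and $\zero\mleq\tau\mli$ only weakens.

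For the inductive step I must show $\arfour\vdash E'(s)\mlc F'(s)\mli E'(s\successor)\adc\bigl(F'(s\successor)\mlc E'(s)\bigr)$. First I would record the $\pa$-fact $|s\successor|\mleq\bound\mlc s\successor\mleq\tau\mli s\mless\tau\mlc|s\successor|\mleq\bound\mlc|s|\mleq\bound\mlc s\mleq\tau$, which holds because size is monotone in value and $s\mless\tau$ is equivalent to $s\successor\mleq\tau$. Then I would invoke $\clfour$-Instantiation of a purely logical scheme combining this fact with the PTI+ step premise, and modus-ponens the resulting instance with the PTI+ step premise and the $\pa$-fact to obtain the guarded step. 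Applying PTI to the basis and this step yields $\arfour\vdash s\mleq\tau\mli E'(s)\mlc F'(s)$. Finally I would extract the conclusion via $\clfour$-Instantiation of $\bigl(d\mli (c\mli d\mli P)\mlc(c\mli d\mli Q)\bigr)\mlc c\mli\bigl(d\mli P\mlc Q\bigr)$, modus-ponensing with $s\mleq\tau\mli E'(s)\mlc F'(s)$ and with Axiom~13 ($|s|\mleq\bound$), which gives the desired $\arfour\vdash s\mleq\tau\mli E(s)\mlc F(s)$.

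The main obstacle is verifying that the logical scheme used for the inductive step is genuinely $\clfour$-provable. Writing $a,b,c,d,d'$ for the elementary guards $s\mless\tau$, $|s\successor|\mleq\bound$, $|s|\mleq\bound$, $s\mleq\tau$, $s\successor\mleq\tau$ and $P,Q,P_1,Q_1$ for $E(s),F(s),E(s\successor),F(s\successor)$, the scheme is $\bigl(a\mlc b\mlc P\mlc Q\mli P_1\adc(Q_1\mlc P)\bigr)\mlc\bigl(b\mlc d'\mli a\mlc b\mlc c\mlc d\bigr)\mli\bigl((c\mli d\mli P)\mlc(c\mli d\mli Q)\mli (b\mli d'\mli P_1)\adc\bigl((b\mli d'\mli Q_1)\mlc(c\mli d\mli P)\bigr)\bigr)$. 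Its derivation is a routine analytic exercise in $\clfour$ whose only delicate point is exactly the ``recyclable resource'' phenomenon of Section~\ref{ss14}: when the environment selects the right $\adc$-disjunct of the consequent, one must feed $b,d'$ into the second resource to unlock $c,d$, use these to unlock $P,Q$ out of $E'(s),F'(s)$, feed them into the first resource choosing its (negatively occurring, hence machine-controlled) $\adc$ on the right, and then route the copy of $P=E(s)$ that re-emerges in that resource's right output — via Weakening — back into the trailing $E'(s)$ conjunct, so that no resource is duplicated.

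As in the proofs of Theorems~\ref{na} and~\ref{nb}, this reduction appeals only to PTI and never to anything stronger, which is exactly what justifies calling PTI+ a pseudostrengthening of PTI.
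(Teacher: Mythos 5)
Your overall reduction is the one the paper uses: apply PTI to the doubly guarded formulas $|s|\mleq\bound\mli s\mleq\tau\mli E(s)$ and $|s|\mleq\bound\mli s\mleq\tau\mli F(s)$, with the basis and the final extraction via Axiom 13 handled exactly as you describe. The gap is in the inductive step. The single scheme you rely on,
$\bigl(a\mlc b\mlc P\mlc Q\mli P_1\adc(Q_1\mlc P)\bigr)\mlc\bigl(b\mlc d'\mli a\mlc b\mlc c\mlc d\bigr)\mli\Bigl((c\mli d\mli P)\mlc(c\mli d\mli Q)\mli(b\mli d'\mli P_1)\adc\bigl((b\mli d'\mli Q_1)\mlc(c\mli d\mli P)\bigr)\Bigr)$,
is not $\clfour$-provable. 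The problem arises when the environment takes the right $\adc$-component of the consequent in a situation where $c\mlc d$ (i.e.\ $|s|\mleq\bound\mlc s\mleq\tau$) holds but $b\mlc d'$ (i.e.\ $|s\successor|\mleq\bound\mlc s\successor\mleq\tau$) fails --- and the latter is \emph{not} implied by the former. Then the trailing conjunct $c\mli d\mli P$ genuinely demands $P$, while your routing (``feed $b,d'$ into the second resource to unlock $c,d$, \ldots, route the re-emerging $P$ back'') is broken at its first link: the second resource yields nothing without $b\mlc d'$, so the first resource's output copy of $P$ is never unlocked. Syntactically, your routing fixes a particular set of Matches, and the resulting elementary formula is falsified by any assignment with $c,d$ true and $b$ false (with suitable values for the Match-introduced letters); the only alternative routing, which copycats the antecedent's $c\mli d\mli P$ directly into the consequent's $c\mli d\mli P$, fails symmetrically when $b\mlc d'$ holds, because then $Q_1$ must be unlocked and the first resource needs its own copy of $P$. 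No single matching covers both cases, so no $\clfour$-derivation of the scheme exists; calling it ``a routine analytic exercise'' conceals a false claim.

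The paper repairs exactly this by making the case distinction on $s\successor\mleq\tau\mlc|s\successor|\mleq\bound$ \emph{constructive}: it first proves $\arfour\vdash\gneg(s\successor\mleq\tau\mlc|s\successor|\mleq\bound)\add(s\successor\mleq\tau\mlc|s\successor|\mleq\bound)$ (formula (\ref{jan13f})) --- itself a 24-step derivation resting on Lemmas \ref{com} and \ref{minus} --- and then combines, by $\add$-Elimination, two separate $\clfour$-instances: (\ref{nov19b}) for the negative case, where the guarded targets are vacuous and the recycled guarded $E$ is obtained by copycatting the \emph{whole} guarded formula as a single general atom, and (\ref{jan13b}) for the positive case, where your unlock-and-recycle routing is sound. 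This decidability step and the two-case structure are what your proof is missing, and without them the guarded inductive step cannot be established.
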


\begin{idea} PTI+ reduces to PTI with $|s|\mleq\bound\mli s\mleq\tau\mli E(s)$ and $|s|\mleq\bound\mli s\mleq\tau\mli F(s)$ in the roles of $E(s)$ and $F(s)$, respectively. 

The present theorem will never be relied upon later, so, a reader satisfied with this explanation can safely omit the technical proof given below. 
\end{idea} 

\begin{proof}  
Assume $s$, $E(s)$, $F(s)$, $\tau$ are as stipulated in the rule,  
\begin{equation}\label{nov99bb}
\arfour\vdash E(\zero)\mlc F(\zero)\end{equation}
and 
\begin{equation}\label{nov99ba}
 \arfour\vdash  s\mless\tau \mlc |s\successor |\mleq\bound \mlc E(s)\mlc F(s)\mli E(s\successor)\adc \bigl(F(s\successor )\mlc E(s)\bigr) .\end{equation}

Our goal is to show that $\arfour\vdash s\mleq \tau\mli E(s)\mlc F(s)$. 

$\clfour$ proves $P\mlc Q\mli\bigl(p\mli q\mli P )\mlc (p\mli q\mli Q)$ and hence, by $\clfour$-Instantiation, 
\[\arfour \vdash E(\zero)\mlc F(\zero)\mli\bigl(|\zero|\mleq\bound\mli \zero \mleq\tau\mli E(\zero)\bigr)\mlc\bigl(|\zero|\mleq\bound\mli\zero \mleq\tau\mli F(\zero)\bigr).\]
Modus-ponensing the above with (\ref{nov99bb}) yields  
\begin{equation}\label{nov19a}
\arfour\vdash \bigl(|\zero|\mleq\bound\mli\zero\mleq\tau\mli E(\zero)\bigr)\mlc\bigl(|\zero |\mleq\bound\mli\zero\mleq\tau\mli F(\zero)\bigr).\end{equation}

Next, in a routine syntactic exercise we observe that  \[\clfour\vdash \gneg (p\mlc q)\mli Q\mlc P_1\mli (q\mli p\mli P_2)\adc\bigl((q\mli p\mli P_3)\mlc Q\bigr).\] 
Hence, by $\clfour$-Instantiation,
\begin{equation}\label{nov19b}
\begin{array}{l}
\arfour \vdash \gneg(   s\successor\mleq\tau\mlc |s\successor |\mleq \bound)\mli  \bigl(|s |\mleq \bound\mli s \mleq\tau\mli E(s )\bigr) \mlc \bigl(|s |\mleq \bound\mli s \mleq\tau\mli F(s )\bigr)\mli\\
\bigl(|s\successor |\mleq \bound\mli s\successor \mleq\tau\mli E(s\successor )\bigr)\adc \Bigl(\bigl(|s\successor |\mleq \bound\mli s\successor\mleq\tau\mli F(s\successor)\bigr)\mlc\bigl(|s|\mleq \bound\mli s\mleq\tau\mli E(s)\bigr)\Bigr).\end{array}
\end{equation}

In another syntactic exercise we find that 
\begin{equation}\label{jan13b}
\begin{array}{l}
 \clfour\vdash   (p_2\mlc q_2 \mli p_1\mlc q_1\mlc p_0 ) \mlc \bigl(p_0\mlc q_2\mlc P_1\mlc Q_1\mli P_2\adc (Q_2\adc P_1)\bigr)\mli p_2\mlc q_2\mli \\ 
(q_1\mli p_1\mli P_1)\mlc (q_1\mli p_1\mli Q_1)\mli (q_2\mli p_2\mli P_2)\adc\bigl((q_2\mli p_2\mli Q_2)\adc(q_1\mli p_1\mli P_1)\bigr).
\end{array}\end{equation}  
Since this ``exercise'' is longer than the previous one, below we provide a full proof of (\ref{jan13b}):\vspace{7pt}

\noindent 1. $\begin{array}{l}
(p_2\mlc q_2 \mli p_1\mlc q_1\mlc p_0 ) \mlc  (p_0\mlc q_2\mlc p_3\mlc q_3\mli\twg )\mli  p_2\mlc q_2\mli\\   
(q_1\mli p_1\mli p_3)\mlc (q_1\mli p_1\mli q_3)\mli \twg 
\end{array}$ \ \  Tautology\vspace{3pt}

\noindent 2. $\begin{array}{l}
(p_2\mlc q_2 \mli p_1\mlc q_1\mlc p_0 ) \mlc  (p_0\mlc q_2\mlc p_3\mlc q_3\mli p_4  )\mli p_2\mlc q_2\mli \\ 
(q_1\mli p_1\mli p_3)\mlc (q_1\mli p_1\mli q_3)\mli (q_2\mli p_2\mli p_4) 
\end{array}$   \ \ Tautology \vspace{3pt}

\noindent 3. $\begin{array}{l}
(p_2\mlc q_2 \mli p_1\mlc q_1\mlc p_0 ) \mlc  (p_0\mlc q_2\mlc p_3\mlc q_3\mli P_2  )\mli p_2\mlc q_2\mli \\ 
(q_1\mli p_1\mli p_3)\mlc (q_1\mli p_1\mli q_3)\mli (q_2\mli p_2\mli P_2) 
\end{array}$ \ \  Match: 2  \vspace{3pt}

\noindent 4. $\begin{array}{l}
(p_2\mlc q_2 \mli p_1\mlc q_1\mlc p_0 ) \mlc \bigl(p_0\mlc q_2\mlc p_3\mlc q_3\mli P_2\adc (Q_2\mlc P_1)\bigr)\mli p_2\mlc q_2\mli \\ 
(q_1\mli p_1\mli p_3)\mlc (q_1\mli p_1\mli q_3)\mli (q_2\mli p_2\mli P_2) 
\end{array}$  \ \ $\add$-Choose: 3  \vspace{3pt}

\noindent 5. $\begin{array}{l}
(p_2\mlc q_2 \mli p_1\mlc q_1\mlc p_0 ) \mlc  (p_0\mlc q_2\mlc p_3\mlc q_3\mli  q_4\mlc p_4  )\mli p_2\mlc q_2\mli \\ 
(q_1\mli p_1\mli p_3)\mlc (q_1\mli p_1\mli q_3)\mli  (q_2\mli p_2\mli q_4)\mlc(q_1\mli p_1\mli p_4)  
\end{array}$  \ \ Tautology  \vspace{3pt}

\noindent 6. $\begin{array}{l}
(p_2\mlc q_2 \mli p_1\mlc q_1\mlc p_0 ) \mlc  (p_0\mlc q_2\mlc p_3\mlc q_3\mli  Q_2\mlc P_1  )\mli p_2\mlc q_2\mli \\ 
(q_1\mli p_1\mli p_3)\mlc (q_1\mli p_1\mli q_3)\mli  (q_2\mli p_2\mli Q_2)\mlc(q_1\mli p_1\mli P_1)  
\end{array}$ \ \  Match (twice): 5  \vspace{3pt}

\noindent 7. $\begin{array}{l}
(p_2\mlc q_2 \mli p_1\mlc q_1\mlc p_0 ) \mlc \bigl(p_0\mlc q_2\mlc p_3\mlc q_3\mli P_2\adc (Q_2\mlc P_1)\bigr)\mli p_2\mlc q_2\mli \\ 
(q_1\mli p_1\mli p_3)\mlc (q_1\mli p_1\mli q_3)\mli  (q_2\mli p_2\mli Q_2)\mlc(q_1\mli p_1\mli P_1)  
\end{array}$  $\add$-Choose: 6  \vspace{3pt}

\noindent 8. $\begin{array}{l}
(p_2\mlc q_2 \mli p_1\mlc q_1\mlc p_0 ) \mlc \bigl(p_0\mlc q_2\mlc p_3\mlc q_3\mli P_2\adc (Q_2\mlc P_1)\bigr)\mli p_2\mlc q_2\mli \\ 
(q_1\mli p_1\mli p_3)\mlc (q_1\mli p_1\mli q_3)\mli (q_2\mli p_2\mli P_2)\adc\bigl((q_2\mli p_2\mli Q_2)\mlc(q_1\mli p_1\mli P_1)\bigr) 
\end{array}$ \ \ Wait: 1,4,7   \vspace{3pt}

\noindent 9. $\begin{array}{l}
(p_2\mlc q_2 \mli p_1\mlc q_1\mlc p_0 ) \mlc \bigl(p_0\mlc q_2\mlc P_1\mlc Q_1\mli P_2\adc (Q_2\mlc P_1)\bigr)\mli p_2\mlc q_2\mli \\ 
(q_1\mli p_1\mli P_1)\mlc (q_1\mli p_1\mli Q_1)\mli (q_2\mli p_2\mli P_2)\adc\bigl((q_2\mli p_2\mli Q_2)\mlc(q_1\mli p_1\mli P_1)\bigr) 
\end{array}$    \mbox{Match (twice): 8}\vspace{7pt}

The  formula below matches the formula of (\ref{jan13b})  and therefore, by $\clfour$-Instantiation,  
\begin{equation}\label{jan13a}
\begin{array}{l}
\arfour \vdash   (s\successor\mleq\tau \mlc |s\successor |\mleq\bound  \mli s\mleq \tau \mlc |s|\mleq\bound \mlc s\mless\tau ) \mlc\\
 \Bigl( s\mless\tau \mlc |s\successor |\mleq\bound\mlc E(s) \mlc F(s)\mli E(s\successor) \adc \bigl(F(s\successor)\mlc E(s)\bigr)\Bigr)\mli\\ 
 s\successor\mleq\tau \mlc |s\successor |\mleq\bound \mli   
\bigl(|s|\mleq\bound \mli  s\mleq \tau \mli E(s)\bigr)\mlc \bigl(|s|\mleq\bound  \mli  s\mleq \tau \mli F(s)\bigr)\mli \\
\bigl(|s\successor|\mleq\bound \mli s\successor\mleq \tau \mli E(s\successor) \bigr)\adc\Bigl(\bigl(|s\successor |\mleq\bound \mli s\successor\mleq\tau \mli F(s\successor)\bigr)\mlc\bigl(|s|\mleq\bound \mli s\mleq\tau \mli E(s)\bigr)\Bigr).\end{array}
\end{equation}

Obviously we have $\pa \vdash   s\successor\mleq\tau \mlc |s\successor |\mleq\bound  \mli s\mleq \tau \mlc |s|\mleq\bound \mlc s\mless\tau $. This fact, together with (\ref{nov99ba}) and (\ref{jan13a}), by Modus Ponens, implies 
\begin{equation}\label{jan13c}
\begin{array}{l}
\arfour \vdash   
s\successor\mleq\tau \mlc |s\successor |\mleq\bound \mli   
\bigl(|s|\mleq\bound \mli  s\mleq \tau \mli E(s)\bigr)\mlc \bigl(|s|\mleq\bound  \mli  s\mleq \tau \mli F(s)\bigr)\mli \\
\bigl(|s\successor|\mleq\bound \mli s\successor\mleq \tau \mli E(s\successor) \bigr)\adc\Bigl(\bigl(|s\successor |\mleq\bound \mli s\successor\mleq\tau \mli F(s\successor)\bigr)\mlc\bigl(|s|\mleq\bound \mli s\mleq\tau \mli E(s)\bigr)\Bigr).\end{array}
\end{equation}

According to the forthcoming Lemmas  \ref{com} and  \ref{minus}, whose proofs (as any other proofs in this paper) do not rely  on PTI+,  we have:
\begin{equation}\label{jan13d}
\mbox{\em For any  term $\theta$, $\arfour\vdash \gneg |\theta|\mleq\bound\add \ade z(z\equals\theta)$;}
\end{equation}
\begin{equation}\label{jan13e}
 \arfour\vdash \ada x\ada y\ade z(x\equals y\plus z\add y\equals x\plus z).
\end{equation}

Below is a proof of the fact that 
\begin{equation}\label{jan13f}
\arfour\vdash \gneg (s\successor\mleq\tau \mlc |s\successor |\mleq\bound)\add (s\successor\mleq\tau \mlc |s\successor |\mleq\bound) :\vspace{5pt}
\end{equation}

\noindent 1. $\begin{array}{l}
\gneg |s\successor |\mleq \bound\add\ade z(z\equals s\successor)  
\end{array}$  \ \ (\ref{jan13d}) with $\theta=s\successor$     \vspace{3pt}

\noindent 2. $\begin{array}{l}
\gneg |s\successor |\mleq \bound\mli \gneg (s\successor\mleq\tau \mlc |s\successor |\mleq\bound) 
\end{array}$  \ \  Logical axiom    \vspace{3pt}

\noindent 3. $\begin{array}{l}
\gneg |s\successor |\mleq \bound\mli \gneg (s\successor\mleq\tau \mlc |s\successor |\mleq\bound)\add (s\successor\mleq\tau \mlc |s\successor |\mleq\bound) 
\end{array}$  \ \ $\add$-Choose: 2     \vspace{3pt}

\noindent 4. $\begin{array}{l}
\gneg |\tau|\mleq \bound\add\ade z(z\equals \tau) 
\end{array}$  \ \ (\ref{jan13d}) with $\theta=\tau$   \vspace{3pt}

\noindent 5. $\begin{array}{l}
|r|\mleq \bound 
\end{array}$  \ \   Axiom 13  \vspace{3pt}

\noindent 6. $\begin{array}{l}
|r|\mleq \bound\mli \gneg |\tau|\mleq \bound\mli r\equals s\successor   \mli s\successor\mleq\tau \mlc |s\successor |\mleq\bound  
\end{array}$  \ \   $\pa$   \vspace{3pt}

\noindent 7. $\begin{array}{l}
\gneg |\tau|\mleq \bound\mli r\equals s\successor   \mli s\successor\mleq\tau \mlc |s\successor |\mleq\bound  
\end{array}$  \ \  MP: 5,6    \vspace{3pt}

\noindent 8. $\begin{array}{l}
\gneg |\tau|\mleq \bound\mli r\equals s\successor   \mli \gneg (s\successor\mleq\tau \mlc |s\successor |\mleq\bound)\add (s\successor\mleq\tau \mlc |s\successor |\mleq\bound) 
\end{array}$  \ \  $\add$-Choose: 7   \vspace{3pt}

\noindent 9. $\begin{array}{l}
\ade z(t\equals r\plus z\add r\equals t\plus z) 
\end{array}$  \ \   $\ada$-Elimination (twice): (\ref{jan13e})  \vspace{3pt}

\noindent 10. $\begin{array}{l}
|r|\mleq \bound\mlc t\equals r\plus v \mli t\equals \tau \mli r\equals s\successor   \mli   s\successor\mleq\tau \mlc |s\successor |\mleq\bound  
\end{array}$  \ \  $\pa$    \vspace{3pt}

\noindent 11. $\begin{array}{l}
|r|\mleq \bound\mlc t\equals r\plus v \mli t\equals \tau \mli r\equals s\successor   \mli \gneg (s\successor\mleq\tau \mlc |s\successor |\mleq\bound)\add (s\successor\mleq\tau \mlc |s\successor |\mleq\bound) 
\end{array}$  \ \  $\add$-Choose: 10   \vspace{3pt}

\noindent 12. $\begin{array}{l}
v\equals\zero\add v\notequals\zero 
\end{array}$  \ \   Axiom 8  \vspace{3pt}

\noindent 13. $\begin{array}{l}
v\equals\zero \mli
|r|\mleq \bound\mlc r\equals t\plus v \mli t\equals \tau \mli r\equals s\successor   \mli   s\successor\mleq\tau \mlc |s\successor |\mleq\bound  
\end{array}$  \ \  $\pa$    \vspace{3pt}

\noindent 14. $\begin{array}{l}
v\equals\zero \mli
|r|\mleq \bound\mlc r\equals t\plus v \mli t\equals \tau \mli r\equals s\successor   \mli \gneg (s\successor\mleq\tau \mlc |s\successor |\mleq\bound)\add (s\successor\mleq\tau \mlc |s\successor |\mleq\bound) 
\end{array}$  \ \   $\add$-Choose: 13   \vspace{3pt}

\noindent 15. $\begin{array}{l}
v\notequals\zero \mli
|r|\mleq \bound\mlc r\equals t\plus v \mli t\equals \tau \mli r\equals s\successor   \mli \gneg (s\successor\mleq\tau \mlc |s\successor |\mleq\bound) 
\end{array}$  \ \  $\pa$    \vspace{3pt}

\noindent 16. $\begin{array}{l}
v\notequals\zero \mli
|r|\mleq \bound\mlc r\equals t\plus v \mli t\equals \tau \mli r\equals s\successor   \mli \gneg (s\successor\mleq\tau \mlc |s\successor |\mleq\bound)\add (s\successor\mleq\tau \mlc |s\successor |\mleq\bound) 
\end{array}$  \ \     $\add$-Choose: 15   \vspace{3pt}

\noindent 17. $\begin{array}{l}
|r|\mleq \bound\mlc r\equals t\plus v \mli t\equals \tau \mli r\equals s\successor   \mli \gneg (s\successor\mleq\tau \mlc |s\successor |\mleq\bound)\add (s\successor\mleq\tau \mlc |s\successor |\mleq\bound) 
\end{array}$  \ \  $\add$-Elimination: 12,14,16   \vspace{3pt}

\noindent 18. $\begin{array}{l}
|r|\mleq \bound\mlc(t\equals r\plus v\add r\equals t\plus v)\mli t\equals \tau \mli r\equals s\successor   \mli \gneg (s\successor\mleq\tau \mlc |s\successor |\mleq\bound)\add (s\successor\mleq\tau \mlc |s\successor |\mleq\bound) 
\end{array}$  \ \ $\adc$-Introduction:  11,17\vspace{3pt}

\noindent 19. $\begin{array}{l}
|r|\mleq \bound\mlc \ade z(t\equals r\plus z\add r\equals t\plus z)\mli t\equals \tau \mli r\equals s\successor   \mli \gneg (s\successor\mleq\tau \mlc |s\successor |\mleq\bound)\add (s\successor\mleq\tau \mlc |s\successor |\mleq\bound) 
\end{array}$   $\ada$-Introduction:  18\vspace{3pt}

\noindent 20. $\begin{array}{l}
t\equals \tau \mli r\equals s\successor   \mli \gneg (s\successor\mleq\tau \mlc |s\successor |\mleq\bound)\add (s\successor\mleq\tau \mlc |s\successor |\mleq\bound) 
\end{array}$  \ \   MP: 5,9,19   \vspace{3pt}

\noindent 21. $\begin{array}{l}
\ade z(z\equals \tau)\mli r\equals s\successor   \mli \gneg (s\successor\mleq\tau \mlc |s\successor |\mleq\bound)\add (s\successor\mleq\tau \mlc |s\successor |\mleq\bound) 
\end{array}$  \ \   $\ada$-Introduction: 20   \vspace{3pt}

\noindent 22. $\begin{array}{l}
r\equals s\successor   \mli \gneg (s\successor\mleq\tau \mlc |s\successor |\mleq\bound)\add (s\successor\mleq\tau \mlc |s\successor |\mleq\bound) 
\end{array}$  \ \  $\add$-Elimination: 4,8,21    \vspace{3pt}

\noindent 23. $\begin{array}{l}
\ade z(z\equals s\successor)  \mli \gneg (s\successor\mleq\tau \mlc |s\successor |\mleq\bound)\add (s\successor\mleq\tau \mlc |s\successor |\mleq\bound) 
\end{array}$  \ \   $\ada$-Introduction: 22   \vspace{3pt}

\noindent 24. $\begin{array}{l}
\gneg (s\successor\mleq\tau \mlc |s\successor |\mleq\bound)\add (s\successor\mleq\tau \mlc |s\successor |\mleq\bound) 
\end{array}$  \ \  $\add$-Elimination: 1,3,23   \vspace{7pt}

Now, from (\ref{jan13f}), (\ref{nov19b}) and (\ref{jan13c}), by $\add$-Elimination, we get 
\begin{equation}\label{jan13h}
\begin{array}{l}
\arfour \vdash    
\bigl(|s|\mleq\bound \mli  s\mleq \tau \mli E(s)\bigr)\mlc \bigl(|s|\mleq\bound  \mli  s\mleq \tau \mli F(s)\bigr)\mli \\
\bigl(|s\successor|\mleq\bound \mli s\successor\mleq \tau \mli E(s\successor) \bigr)\adc\Bigl(\bigl(|s\successor |\mleq\bound \mli s\successor\mleq\tau \mli F(s\successor)\bigr)\mlc\bigl(|s|\mleq\bound \mli s\mleq\tau \mli E(s)\bigr)\Bigr).\end{array}
\end{equation}
From (\ref{nov19a}) and (\ref{jan13h}), by PTI, we get  
\begin{equation}\label{jan13i}
\begin{array}{l}
\arfour \vdash    
s\mleq\tau\mli \bigl(|s|\mleq\bound \mli  s\mleq \tau \mli E(s)\bigr)\mlc \bigl(|s|\mleq\bound  \mli  s\mleq \tau \mli F(s)\bigr).\end{array}
\end{equation}
But $\clfour$ obviously proves $\bigl(p\mli(q\mli p\mli P)\mlc (q\mli p\mli Q)\bigr)\mli (q\mli p\mli P\mlc Q)  $ and hence, by $\clfour$-Instantiation,
\[\begin{array}{l}
\arfour \vdash    
\Bigl(s\mleq\tau\mli \bigl(|s|\mleq\bound \mli  s\mleq \tau \mli E(s)\bigr)\mlc \bigl(|s|\mleq\bound  \mli  s\mleq \tau \mli F(s)\bigr)\Bigr)\mli 
\bigl( |s|\mleq\bound \mli  s\mleq \tau \mli E(s) \mlc  F(s) \bigr).\end{array}\]
Modus-ponensing the above with (\ref{jan13i}) yields $ \arfour\vdash |s|\mleq\bound \mli  s\mleq \tau \mli E(s) \mlc  F(s)$, further modus-ponensing which with Axiom 13 yields the desired $ \arfour\vdash  s\mleq \tau \mli E(s) \mlc  F(s)$.
\end{proof}

\subsection{BPI}  For any formula $E(s)$, we let $E(\lfloor s/2\rfloor)$\label{ifloor} stand for the formula $\cla z\bigl( s\equals z0\mld s\equals z1 \mli E(z)\bigr)$, asserting that $E$ holds for the binary predecessor of $s$.

One last rule of induction that we are going to look at is what we call {\bf BPI} ({\bf B}inary-{\bf P}redecessor-based {\bf I}nduction):\label{ibpi}
\[\frac{F(\zero )\hspace{20pt} F(\lfloor s/2\rfloor)\mli F(s)}{F(s)},\]
where $s$ is any non-$\bound$ variable and   $F(s)$ is any formula.\footnote{Those familiar with bounded arithmetics will notice a resemblance between BPI and the version of induction axiom known as PIND (\cite{Buss,Hajek}). An important difference, however, is that PIND assumes $s$ to be (actually or potentially) 
$\cla$-bound, while in our case $s$, as a free variable, can be seen as $\adc$-bound but by no means as $\cla$-bound.}  

 This rule could be characterized as an ``alternative formulation of BSI+'', and is apparently equivalent to the latter in the sense that replacing PTI with BSI+ in ptarithmetic yields the same class of provable formulas as replacing PTI with BPI.  One direction of this equivalence is immediately implied by our proof of the following theorem.

\begin{theorem}\label{nov88}
BPI is admissible in $\arfour$.
\end{theorem}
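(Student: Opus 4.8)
The plan is to reduce BPI to the already-established admissibility of BSI+ (Theorem \ref{nb}), taking $F$ itself as the induction formula. Assume $\arfour\vdash F(\zero)$ and $\arfour\vdash F(\lfloor s/2\rfloor)\mli F(s)$; I must supply the two premises of BSI+, namely the basis $F(\zero)$ (which is given) and the inductive step $|s0|\mleq\bound\mlc F(s)\mli F(s0)\adc F(s1)$. By $\adc$-Introduction it suffices to derive the two implications $|s0|\mleq\bound\mlc F(s)\mli F(s0)$ and $|s0|\mleq\bound\mlc F(s)\mli F(s1)$ separately; the second is handled exactly like the first, with Lemma \ref{nov8} playing the role of Axiom 11.

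First I would confront the obstacle that one cannot simply substitute the term $s0$ for the free variable $s$ in the premise $F(\lfloor s/2\rfloor)\mli F(s)$: because of Axiom 13, raw term-for-variable substitution does not preserve $\arfour$-provability (e.g.\ $|s0|\mleq\bound$ is not generally true). Instead I would rename $s$ to a fresh variable $u$ — renaming one free variable to another is sound, since it carries axioms to axioms — obtaining $\arfour\vdash F(\lfloor u/2\rfloor)\mli F(u)$, and then invoke Axiom 11, $|s0|\mleq\bound\mli\ade x(x\equals s0)$, to introduce an actual value $u$ satisfying $u\equals s0$. This is precisely why the target must be BSI+ rather than BSI: only the guard $|s0|\mleq\bound$ available in the inductive step of BSI+ makes Axiom 11 applicable.

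The heart of the argument is the chain, carried out under the hypothesis $u\equals s0$, \[u\equals s0\mlc F(s)\ \mli\ F(\lfloor u/2\rfloor)\ \mli\ F(u)\ \mli\ F(s0),\] where the middle link is the renamed premise and the last link is the instance $F(u)\mlc u\equals s0\mli F(s0)$ of the $\clfour$-provable scheme $P(f)\mlc f\equals g\mli P(g)$. The first link, $u\equals s0\mlc F(s)\mli F(\lfloor u/2\rfloor)$ with $F(\lfloor u/2\rfloor)=\cla z(u\equals z0\mld u\equals z1\mli F(z))$, I would obtain by $\clfour$-instantiating the copycat scheme $\cla z(p(z)\mli s\equals z)\mlc P(s)\mli\cla z(p(z)\mli P(z))$ — which is $\clfour$-provable because replacing the general letter $P$ by a fresh elementary letter yields a classical validity, so it follows by Match from an axiom — taking $p(z):=(u\equals z0\mld u\equals z1)$ and $P:=F$, and then discharging its first conjunct $\cla z((u\equals z0\mld u\equals z1)\mli s\equals z)$, which is an elementary consequence of $u\equals s0$ (uniqueness of binary predecessors) and hence $\pa$-provable by Fact \ref{nov7}. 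Composing the three links by Transitivity and routine $\clfour$-Instantiation/Modus Ponens cuts gives $\arfour\vdash u\equals s0\mlc F(s)\mli F(s0)$.

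To finish the $s0$-branch I would eliminate the fresh $u$: rewriting the last formula as $(u\equals s0)\mli(F(s)\mli F(s0))$ and applying $\ada$-Introduction to the positive surface literal $u\notequals s0$ (legitimate, as $u$ occurs nowhere else in the conclusion) yields a formula that, by the De Morgan duality $\ada x(x\notequals s0)=\gneg\ade x(x\equals s0)$, is $\ade x(x\equals s0)\mlc F(s)\mli F(s0)$; composing this with Axiom 11 gives $|s0|\mleq\bound\mlc F(s)\mli F(s0)$. The $s1$-branch is symmetric, the $\pa$-fact now being that $u\equals s1$ forces $z\equals s$, and Lemma \ref{nov8} supplying $\ade x(x\equals s1)$ under $|s1|\mleq\bound$ (together with $\pa\vdash|s0|\mleq\bound\mli|s1|\mleq\bound$). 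Finally $\adc$-Introduction combines the two branches into the BSI+ inductive step, and BSI+ with basis $F(\zero)$ delivers $\arfour\vdash F(s)$. The main obstacle, as flagged, is the unavailability of term substitution for free variables; the entire construction is organized around circumventing it by naming $s0$ and $s1$ through Axiom 11 and Lemma \ref{nov8} under the guard that BSI+ (but not BSI) provides.
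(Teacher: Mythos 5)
Your proposal is correct and follows essentially the same route as the paper: BPI is reduced to BSI+ with $F$ itself as the induction formula, using Axiom 11 and Lemma \ref{nov8} to name $s0$ and $s1$ under the guard $|s0|\mleq\bound$, the $\pa$-provable uniqueness of binary predecessors, and $\clfour$-provable congruence schemes. The only differences are presentational — the paper packages the key step into a single $\clfour$-provable scheme (derived via $\ade$-Choose and Wait) applied to the $\ada$-closure of the premise, whereas you decompose it into a Transitivity chain with a renamed fresh variable eliminated by $\ada$-Introduction.
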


\begin{idea} As noted, BPI is essentially the same as BSI+.
\end{idea}

\begin{proof}   
Assume $s$, $F(s)$ are as stipulated in the rule,   
\begin{equation}\label{n20a}
\arfour\vdash  F(\zero)\end{equation}
and 
\begin{equation}\label{n20b}
 \arfour\vdash F(\lfloor s/2\rfloor)\mli F(s) .\end{equation}

Our goal is to verify that $\arfour\vdash  F(s)$.

We observe that  
\[\clfour \vdash \bigl(p\mli  t\equals f(s)\bigr) \mlc \Bigl( \cla z\bigl(t\equals f(z)\mld q\mli P(z)\bigr)\mli Q(t)\Bigr)\mli p\mlc P(s)\mli Q\bigl((f(s)\bigr) \]
(bottom-up, apply Match twice and you will hit a classically valid formula). By $\ade$-Choose, this yields 
\[\clfour\vdash \bigl(p\mli  t\equals f(s)\bigr) \mlc \ada x\Bigl( \cla z\bigl(x\equals f(z)\mld q\mli P(z)\bigr)\mli Q(x)\Bigr)\mli p\mlc P(s)\mli Q\bigl((f(s)\bigr). \]
The above, together with the obvious fact $\clfour\vdash \bigl(p\mli  \tlg\bigr) \mlc\twg \mli p\mlc\twg\mli \tlg$, by Wait, yields 
\[\clfour\vdash \Bigl(p\mli  \ade x\bigl(x\equals f(s)\bigr)\Bigr) \mlc \ada x\Bigl( \cla z\bigl(x\equals f(z)\mld q\mli P(z)\bigr)\mli Q(x)\Bigr)\mli p\mlc P(s)\mli Q\bigl((f(s)\bigr). \]
Hence, by $\clfour$-Instantiation, we have
\[
\arfour\vdash \bigl(|s0|\mleq \bound\mli  \ade x(x\equals s0)\bigr) \mlc \ada x\Bigl( \cla z\bigl(x\equals z0\mld x\equals z1\mli F(z)\bigr)\mli F(x)\Bigr)\mli |s0|\mleq\bound\mlc F(s)\mli F(s0),\]
which we abbreviate as 
\begin{equation}\label{kata1}
\arfour\vdash \bigl(|s0|\mleq \bound\mli  \ade x(x\equals s0)\bigr) \mlc \ada x\bigl( F(\lfloor x/2\rfloor)\mli F(x)\bigr)\mli |s0|\mleq\bound\mlc F(s)\mli F(s0). 
\end{equation}
In a similar way we find that 
\begin{equation}\label{kata2}
\arfour\vdash \bigl(|s0|\mleq \bound\mli  \ade x(x\equals s1)\bigr) \mlc \ada x\bigl( F(\lfloor x/2\rfloor)\mli F(x)\bigr)\mli |s0|\mleq\bound\mlc F(s)\mli F(s1). 
\end{equation}
Now, we construct a sought $\arfour$-proof of $F(s)$ as follows:\vspace{9pt}

\noindent 1. $\begin{array}{l}
\ada x\bigl(F(\lfloor x/2\rfloor)\mli F(x)\bigr)
\end{array}$  \ \  $\ada$-Introduction: (\ref{n20b})   \vspace{3pt}

\noindent 2. $\begin{array}{l}
|s0|\mleq \bound\mli \ade x(x\equals s0)
\end{array}$  \ \  Axiom 11   \vspace{3pt}

\noindent 3. $\begin{array}{l}
 |s0|\mleq\bound\mlc F(s)\mli F(s0) 
\end{array}$  \ \ MP: 2,1,(\ref{kata1})     \vspace{3pt}

\noindent 4. $\begin{array}{l}
|s0|\mleq \bound\mli   |s1|\mleq \bound 
\end{array}$  \ \  $\pa$   \vspace{3pt}

\noindent 5. $\begin{array}{l}
|s1|\mleq \bound\mli \ade x(x\equals s1)
\end{array}$  \ \  Lemma \ref{nov8}  \vspace{3pt}

\noindent 6. $\begin{array}{l}
|s0|\mleq \bound\mli \ade x(x\equals s1)
\end{array}$  \ \  TR: 4,5  \vspace{3pt}

\noindent 7. $\begin{array}{l}
 |s0|\mleq\bound\mlc F(s)\mli F(s1) 
\end{array}$  \ \ MP: 6,1,(\ref{kata2})     \vspace{3pt}

\noindent 8. $\begin{array}{l}
 |s0|\mleq\bound\mlc F(s)\mli F(s0)\adc F(s1) 
\end{array}$  \ \ $\adc$-Introduction: 3,7     \vspace{3pt}

\noindent 9. $\begin{array}{l}
F(s) 
\end{array}$  \ \ BSI+: (\ref{n20a}),8   
\end{proof}

\section{Efficient computability through \arfour-provability}\label{s17}

In this section we establish several $\arfour$-provability facts. In view of the soundness of $\arfour$, each such fact tells us about the efficient solvability  of the associated number-theoretic  computational problem.

\subsection{The efficient computability of logarithm} 
The term ``logarithm'' in the title of this subsection refers to the size of the binary numeral for a given number, which happens to be an integer approximation of the (real) base-2 logarithm of that number. 

\begin{lemma}\label{comlen}
 $\arfour\vdash \ada x\ade y(y\equals |x|)$.
\end{lemma}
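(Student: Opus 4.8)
The plan is to prove the equivalent statement $\arfour\vdash\ade y(y\equals|s|)$ for a free variable $s$, from which the target formula follows by one application of $\ada$-Introduction. Computing $|s|$ amounts to scanning $s$ bit by bit while incrementing a counter, so the natural inductive tool is a binary-successor induction with $F(s)$ taken to be $\ade y(y\equals|s|)$. Plain BSI will, however, not do: its inductive step $F(s)\mli F(s0)\adc F(s1)$ is not even a solvable problem in the pathological case where $\bound$ and $s$ are both $\zero\successor$ (i.e.\ $1$), since then $|s0|\equals|s1|$ already exceeds $\bound$ and the required value cannot be named within the bound. I will therefore use BSI+ (Theorem \ref{nb}), whose inductive step carries the extra hypothesis $|s0|\mleq\bound$ that excludes exactly this case.

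For the basis I must establish $\ade y(y\equals|\zero|)$. Since $\pa$ proves $|\zero|\equals\zero\successor$ (the numeral for $0$ has exactly one digit) and Lemma \ref{zersuc} gives $\arfour\vdash\ade x(x\equals\zero\successor)$, a short derivation does the job: from the $\pa$-fact $w\equals\zero\successor\mli w\equals|\zero|$ an application of $\ade$-Choose yields $w\equals\zero\successor\mli\ade y(y\equals|\zero|)$, then $\ada$-Introduction abstracts $w$, and Modus Ponens against Lemma \ref{zersuc} produces the basis.

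The inductive step is $|s0|\mleq\bound\mlc\ade y(y\equals|s|)\mli\ade y(y\equals|s0|)\adc\ade y(y\equals|s1|)$. I would first prove, for a fresh free variable $t$, the formula $|s0|\mleq\bound\mlc t\equals|s|\mli\ade y(y\equals|s0|)\adc\ade y(y\equals|s1|)$ and then recover the antecedent resource $\ade y(y\equals|s|)$ by $\ada$-Introduction on $t$. The inner formula is handled by a case analysis on Axiom 9 ($s\equals\zero\add s\notequals\zero$) through $\add$-Elimination, in the style of Lemma \ref{nov18a}. In the case $s\equals\zero$ we have $|s0|\equals|s1|\equals|s|\equals t$ (a $\pa$-fact, since the binary successors of $0$ are $0$ and $1$, both of length one), so choosing $y:=t$ by $\ade$-Choose solves each conjunct. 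In the case $s\notequals\zero$, given $t\equals|s|$ we have $|s0|\equals|s1|\equals t\successor$ ($\pa$); applying Axiom 10 at $t$ we obtain some $w\equals t\successor$, hence $w\equals|s0|$ and $w\equals|s1|$, after which $\ade$-Choose $y:=w$ solves each conjunct. The two conjuncts are then merged by $\adc$-Introduction.

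The main obstacle is the bound bookkeeping required to invoke Axiom 10: applying it at $t$ first demands a proof of $|t\successor|\mleq\bound$. This is precisely where the strengthened hypothesis $|s0|\mleq\bound$ of BSI+ earns its keep. From $|s0|\mleq\bound$, $t\equals|s|$ and $s\notequals\zero$, $\pa$ derives $t\successor\mleq\bound$ as values, and then the elementary fact that the size of a positive number never exceeds its value gives $|t\successor|\mleq\bound$; Modus Ponens against Axiom 10 then delivers $\ade x(x\equals t\successor)$, completing the $s\notequals\zero$ case. Finally, BSI+ applied to the basis and the inductive step yields $\ade y(y\equals|s|)$, and $\ada$-Introduction gives the desired $\ada x\ade y(y\equals|x|)$.
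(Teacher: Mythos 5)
Your proof is correct, but it takes a genuinely different route from the paper's. The paper proves $\ade y(y\equals |s|)$ by BPI (binary-\emph{predecessor} induction), whose inductive step $\ade y(y\equals|\lfloor s/2\rfloor|)\mli\ade y(y\equals|s|)$ is discharged via the trichotomy $s\equals\zero\add s\equals\zero\successor\add s\mgreater\zero\successor$ (formula (\ref{nov11a})) together with Axiom 10, the required bound $|r\successor|\mleq\bound$ being extracted from Axiom 13 and the $\pa$-fact $r\successor\equals|s|\mleq\bound$. You instead run the induction forward with BSI+, splitting only on Axiom 9's dichotomy $s\equals\zero\add s\notequals\zero$ and harvesting the bound $|t\successor|\mleq\bound$ from the extra hypothesis $|s0|\mleq\bound$ that BSI+ supplies. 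Both arguments lean on Lemma \ref{zersuc} for the basis and on Axiom 10 plus the elementary fact that a positive number's size does not exceed its value; the decompositions are of comparable length. Your observation that plain BSI would not suffice is correct and worth making explicit: for $\bound\equals\zero\successor$ and $s\equals\zero\successor$ the consequent $\ade y(y\equals|s0|)$ demands naming the value $2$ under the bound $1$, so the unrelativized inductive step is not even a computable problem, and the $|s0|\mleq\bound$ guard of BSI+ (or, in the paper's version, the predecessor orientation of BPI, which only ever names sizes of numbers already known to fit the bound) is genuinely needed.
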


\begin{proof} 
An outline of our proof is that $\ada x\ade y(y\equals |x|)$ follows by $\ada$-Introduction from $\ade y(y\equals |s|)$, and the latter will  be proven by BPI. Let us first try to justify the two premises of BPI informally. 

From $\pa$, we know that the size of $\zero$ is $\zero\successor$, and the value of $\zero\successor$ can be found using Lemma \ref{zersuc}. 
This allows us to resolve $\ade y(y\equals |\zero|)$, which is the basis of our BPI-induction.

The inductive step looks like \[ \ade y(y\equals |\lfloor s/2\rfloor|)\mli \ade y(y\equals |s|).\] Resolving it means telling the size of $s$ (in the consequent) while knowing (from the antecedental resource) the size $r$ of the binary predecessor $\lfloor s/2\rfloor$ of $s$. As was established earlier in the proof of Theorem \ref{nov9b}, we can tell whether $s$ equals $\zero$, $\zero\successor$, or neither. If $s\equals\zero$ or $s\equals\zero\successor$, then its size is the value of $\zero\successor$ which, as pointed out in the previous paragraph, we know how to compute. Otherwise, the size of $s$ is $r\successor$, which we can compute using Axiom 10. In all cases we thus can tell the size of $s$, and thus we can resolve the consequent of the above-displayed inductive step. 

Below is a formal counterpart of the above argument.\vspace{9pt}

\noindent 1. $\begin{array}{l}
\ade x(x\equals\zero\successor) 
\end{array}$  \ \ Lemma \ref{zersuc} \vspace{3pt}

\noindent 2. $\begin{array}{l}
v\equals\zero\successor \mli v\equals |\zero| 
\end{array}$  \ \ $\pa$ \vspace{3pt}

\noindent 3. $\begin{array}{l}
v\equals\zero\successor \mli \ade y(y\equals |\zero|)
\end{array}$  \ \ $\ade$-Choose: 2 \vspace{3pt}

\noindent 4. $\begin{array}{l}
\ade x(x\equals\zero\successor)\mli \ade y(y\equals |\zero|)
\end{array}$  \ \ $\ada$-Introduction: 3 \vspace{3pt}

\noindent 5. $\begin{array}{l}
\ade y(y\equals |\zero|)
\end{array}$  \ \ MP: 1,4 \vspace{3pt}

\noindent 6. $\begin{array}{l}
s\equals \zero\add s\equals \zero\successor \add s\mgreater \zero\successor
\end{array}$  \ \ (\ref{nov11a}),  established in the proof of Theorem \ref{nov9b}  \vspace{3pt}

\noindent 7. $\begin{array}{l}
v\equals\zero\successor \mli s\equals \zero\mli v\equals |s| 
\end{array}$  \ \  $\pa$  \vspace{3pt}

\noindent 8. $\begin{array}{l}
v\equals\zero\successor \mli s\equals \zero\mli \ade y(y\equals |s|)
\end{array}$  \ \ $\ade$-Choose: 7   \vspace{3pt}

\noindent 9. $\begin{array}{l}
\ade x(x\equals\zero\successor)\mli s\equals \zero\mli \ade y(y\equals |s|)
\end{array}$  \ \  $\ada$-Introduction: 8  \vspace{3pt}

\noindent 10. $\begin{array}{l}
s\equals \zero\mli \ade y(y\equals |s|)
\end{array}$  \ \  MP: 1,9  \vspace{3pt}

\noindent 11. $\begin{array}{l}
s\equals \zero\mli \ade y(y\equals |\lfloor s/2\rfloor|)\mli \ade y(y\equals |s|)
\end{array}$  \ \ Weakening: 10  \vspace{3pt}

\noindent 12. $\begin{array}{l}
v\equals\zero\successor \mli s\equals \zero\successor\mli  v\equals |s| 
\end{array}$  \ \ $\pa$  \vspace{3pt}

\noindent 13. $\begin{array}{l}
v\equals\zero\successor \mli s\equals \zero\successor\mli \ade y(y\equals |s|)
\end{array}$  \ \ $\ade$-Choose: 12  \vspace{3pt}

\noindent 14. $\begin{array}{l}
\ade x(x\equals\zero\successor)\mli s\equals \zero\successor\mli   \ade y(y\equals |s|)
\end{array}$  \ \  $\ada$-Introduction: 13   \vspace{3pt}

\noindent 15. $\begin{array}{l}
s\equals \zero\successor\mli   \ade y(y\equals |s|)
\end{array}$  \ \ MP: 1,14  \vspace{3pt}

\noindent 16. $\begin{array}{l}
s\equals \zero\successor\mli \ade y(y\equals |\lfloor s/2\rfloor|)\mli \ade y(y\equals |s|)
\end{array}$  \ \ Weakening: 15  \vspace{3pt}

\noindent 17. $\begin{array}{l}
|s|\mleq\bound 
\end{array}$  \ \ Axiom 13  \vspace{3pt}

\noindent 18. $\begin{array}{l}
|s\successor |\mleq\bound\mli\ade x(x\equals s\successor) 
\end{array}$  \ \ Axiom 10  \vspace{3pt}

\noindent 19. $\begin{array}{l}
 \ada y\bigl(|y\successor |\mleq\bound\mli\ade x(x\equals y\successor)\bigr)
\end{array}$  \ \ $\ada$-Introduction: 18  \vspace{3pt}

\noindent 20. $\begin{array}{l}
|s|\mleq\bound\mlc  \bigl(|r\successor |\mleq\bound\mli\tlg\bigr)\mli s\mgreater \zero\successor\mli r\equals |\lfloor s/2\rfloor| \mli \tlg
\end{array}$  \ \ $\pa$  \vspace{3pt}

\noindent 21. $\begin{array}{l}
|s|\mleq\bound\mlc   (|r\successor |\mleq\bound\mli w\equals r\successor  )\mli s\mgreater \zero\successor\mli r\equals |\lfloor s/2\rfloor| \mli w\equals |s| 
\end{array}$  \ \ $\pa$   \vspace{3pt}

\noindent 22. $\begin{array}{l}
|s|\mleq\bound\mlc   (|r\successor |\mleq\bound\mli w\equals r\successor  )\mli s\mgreater \zero\successor\mli r\equals |\lfloor s/2\rfloor| \mli \ade y(y\equals |s|)
\end{array}$  \ \ $\ade$-Choose: 21 \vspace{3pt}

\noindent 23. $\begin{array}{l}
|s|\mleq\bound\mlc  \bigl(|r\successor |\mleq\bound\mli\ade x(x\equals r\successor)\bigr)\mli s\mgreater \zero\successor\mli r\equals |\lfloor s/2\rfloor| \mli \ade y(y\equals |s|)
\end{array}$  \ \ Wait: 20,22  \vspace{3pt}

\noindent 24. $\begin{array}{l}
|s|\mleq\bound\mlc \ada y\bigl(|y\successor |\mleq\bound\mli\ade x(x\equals y\successor)\bigr)\mli s\mgreater \zero\successor\mli r\equals |\lfloor s/2\rfloor| \mli \ade y(y\equals |s|)
\end{array}$  \ \ $\ade$-Choose: 23  \vspace{3pt}

\noindent 25. $\begin{array}{l}
|s|\mleq\bound\mlc \ada y\bigl(|y\successor |\mleq\bound\mli\ade x(x\equals y\successor)\bigr)\mli s\mgreater \zero\successor\mli \ade y(y\equals |\lfloor s/2\rfloor|)\mli \ade y(y\equals |s|)
\end{array}$  \ \ $\ada$-Introduction: 24  \vspace{3pt}

\noindent 26. $\begin{array}{l}
s\mgreater \zero\successor\mli \ade y(y\equals |\lfloor s/2\rfloor|)\mli \ade y(y\equals |s|)
\end{array}$  \ \ MP: 17,19,25  \vspace{3pt}

\noindent 27. $\begin{array}{l}
\ade y(y\equals |\lfloor s/2\rfloor|)\mli \ade y(y\equals |s|)
\end{array}$  \ \ $\add$-Elimination:  6,11,16,26 \vspace{3pt} 

\noindent 28. $\begin{array}{l}
\ade y(y\equals |s|)
\end{array}$  \ \ BPI: 5,27 \vspace{3pt} 
\end{proof}

\begin{lemma}\label{comeqbound}
 $\arfour\vdash \ada x (|x|\equals \bound\add |x|\mless \bound)$.
\end{lemma}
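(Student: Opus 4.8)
The idea is that, by Axiom 13, the size of any value we ever deal with never exceeds $\bound$, so that $|x|\equals\bound$ and $|x|\mless\bound$ together exhaust all the possibilities; consequently, to decide between them it suffices to decide whether $|x|$ \emph{equals} $\bound$. The plan is therefore to combine three already-available resources: the ability to compute $|x|$ (Lemma \ref{comlen}), the decidability of equality (Lemma \ref{nov18a}), and the bound $|x|\mleq\bound$ (Axiom 13). Concretely, I would first reduce the goal, via $\ada$-Introduction, to proving $|s|\equals\bound\add|s|\mless\bound$ for a free variable $s$, and establish the latter by modus-ponensing the implication $\ade y(y\equals|s|)\mli|s|\equals\bound\add|s|\mless\bound$ against the instance $\ade y(y\equals|s|)$ of Lemma \ref{comlen}, obtained by $\ada$-Elimination.

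The core work is the derivation of that implication, which I would get from a derivation of $t\equals|s|\mli|s|\equals\bound\add|s|\mless\bound$ for a fresh free variable $t$. From Lemma \ref{nov18a}, two applications of $\ada$-Elimination (first specializing the outer quantifier to $\bound$, which is legitimate since $\bound$ is never a bound variable, then the inner one to $t$) yield $t\equals\bound\add t\notequals\bound$. I would then split on this disjunction by $\add$-Elimination, proving the two branch implications separately. In the branch $t\equals\bound$, the formula $t\equals\bound\mli t\equals|s|\mli|s|\equals\bound$ is a $\pa$-theorem (substitution of equals, remembering that $|s|$ here abbreviates a $\pa$-formula rather than a genuine term), and one $\add$-Choose of the left disjunct gives $t\equals\bound\mli t\equals|s|\mli|s|\equals\bound\add|s|\mless\bound$. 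In the branch $t\notequals\bound$, I would start from the $\pa$-theorem $|s|\mleq\bound\mlc t\notequals\bound\mlc t\equals|s|\mli|s|\mless\bound$ (which genuinely needs the bound, since $|s|\notequals\bound$ alone does not give $|s|\mless\bound$), apply $\add$-Choose of the right disjunct, and then discharge the antecedent $|s|\mleq\bound$ by Modus Ponens against Axiom 13; a routine currying of the remaining antecedents by $\clfour$-Instantiation puts the result into the form $t\notequals\bound\mli t\equals|s|\mli|s|\equals\bound\add|s|\mless\bound$. Feeding the two branch implications together with $t\equals\bound\add t\notequals\bound$ into $\add$-Elimination produces $t\equals|s|\mli|s|\equals\bound\add|s|\mless\bound$.

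To finish, I would apply $\ada$-Introduction to this last formula. Its negation-normal form is $t\notequals|s|\mld(|s|\equals\bound\add|s|\mless\bound)$, in which the literal $t\notequals|s|$ is a positive surface occurrence lying only under $\mld$; abstracting the free $t$ therefore yields $\ada y(y\notequals|s|)\mld(|s|\equals\bound\add|s|\mless\bound)$, which is exactly $\ade y(y\equals|s|)\mli|s|\equals\bound\add|s|\mless\bound$. Modus Ponens with the instance $\ade y(y\equals|s|)$ of Lemma \ref{comlen} then gives $|s|\equals\bound\add|s|\mless\bound$, and a final $\ada$-Introduction delivers the desired $\ada x(|x|\equals\bound\add|x|\mless\bound)$. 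I expect the only genuinely delicate points to be this last use of $\ada$-Introduction — one must check that abstracting the free variable $t$ really recreates the bounded resource $\ade y(y\equals|s|)$ in the antecedent — together with the careful threading of Axiom 13 into the $t\notequals\bound$ branch, without which the passage from $|s|\notequals\bound$ to $|s|\mless\bound$ is unavailable.
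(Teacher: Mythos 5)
Your proposal is correct and follows essentially the same route as the paper's proof: both use Lemma \ref{comlen} to obtain a name $t$ for $|s|$, Lemma \ref{nov18a} to decide $t\equals\bound$ versus $t\notequals\bound$, and Axiom 13 (via a $\pa$ step) to convert $t\notequals\bound$ into $|s|\mless\bound$, finishing with $\ada$-Introduction. The only differences are organizational — you extract the instance of Lemma \ref{nov18a} by $\ada$-Elimination and split by $\add$-Elimination, where the paper keeps the lemmas as antecedental resources and consumes them via $\ade$-Choose and $\adc$-Introduction — and these are interchangeable uses of the admissible rules.
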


\begin{proof} An informal argument for $\ada x (|x|\equals \bound\add |x|\mless \bound)$ is the following. Given an arbitrary $x$, we can find a $t$ with $t\equals |x|$ using Lemma \ref{comlen}. Lemma \ref{nov18a} allows us to tell whether $t\equals\bound$ or $t\notequals\bound$. In the second case, in view of Axiom 13, we have $t\mless\bound$. Thus, we can tell whether $t\equals \bound$ or $t\mless\bound$, i.e., whether $|x|\equals \bound$ or $|x|\mless \bound$. This means that we can resolve   $|x|\equals \bound\add |x|\mless \bound$. Formally, we have:\vspace{9pt}

\noindent 1. $\begin{array}{l}
\ada x\ade y(y=|x|) 
\end{array}$  \ \ Lemma \ref{comlen}     \vspace{3pt}

\noindent 2. $\begin{array}{l}
\ada x\ada y(y\equals x\add y\notequals x) 
\end{array}$  \ \ Lemma \ref{nov18a}    \vspace{3pt}

\noindent 3. $\begin{array}{l}
t\equals |s| \mlc   t\equals \bound  \mli |s|\equals \bound 
\end{array}$  \ \  Logical axiom    \vspace{3pt}

\noindent 4. $\begin{array}{l}
t\equals |s| \mlc   t\equals \bound  \mli |s|\equals \bound\add |s|\mless \bound 
\end{array}$  \ \  $\add$-Choose: 3    \vspace{3pt}

\noindent 5. $\begin{array}{l}
|s|\mleq\bound 
\end{array}$  \ \ Axiom 13     \vspace{3pt}

\noindent 6. $\begin{array}{l}
|s|\mleq\bound\mli t\equals |s| \mlc   t\notequals \bound \mli   |s|\mless \bound 
\end{array}$  \ \ \pa     \vspace{3pt}

\noindent 7. $\begin{array}{l}
t\equals |s| \mlc   t\notequals \bound \mli   |s|\mless \bound 
\end{array}$  \ \ MP: 5,6     \vspace{3pt}

\noindent 8. $\begin{array}{l}
t\equals |s| \mlc   t\notequals \bound \mli |s|\equals \bound\add |s|\mless \bound 
\end{array}$  \ \  $\add$-Choose: 7    \vspace{3pt}

\noindent 9. $\begin{array}{l}
t\equals |s| \mlc  (t\equals \bound \add t\notequals \bound)\mli |s|\equals \bound\add |s|\mless \bound 
\end{array}$  \ \ $\adc$-Introduction: 4,8     \vspace{3pt}

\noindent 10. $\begin{array}{l}
t\equals |s| \mlc \ada x\ada y(y\equals x\add y\notequals x)\mli |s|\equals \bound\add |s|\mless \bound 
\end{array}$  \ \ $\ade$-Choose (twice): 9     \vspace{3pt}

\noindent 11. $\begin{array}{l}
 \ade y(y\equals |s|)\mlc \ada x\ada y(y\equals x\add y\notequals x)\mli |s|\equals \bound\add |s|\mless \bound 
\end{array}$  \ \  $\ada$-Introduction: 10    \vspace{3pt}

\noindent 12. $\begin{array}{l}
\ada x\ade y(y\equals |x|)\mlc \ada x\ada y(y\equals x\add y\notequals x)\mli |s|\equals \bound\add |s|\mless \bound 
\end{array}$  \ \ $\ade$-Choose: 11     \vspace{3pt}

\noindent 13. $\begin{array}{l}
 |s|\equals \bound\add |s|\mless \bound 
\end{array}$  \ \ MP: 1,2,12     \vspace{3pt}

\noindent 14. $\begin{array}{l}
\ada x (|x|\equals \bound\add |x|\mless \bound)
\end{array}$  \ \ $\ada$-Introduction: 13    \vspace{3pt}
\end{proof}

\subsection{The efficient computability of unary successor}

In our subsequent treatment we will  be using  the abbreviation 
\[E\adi F\]
for the expression $\gneg E\add F$. The operator $\adi$ thus can be called 
{\bf choice implication}.\label{icimpl} 

When omitting parentheses, $\adi$ will have the same precedence level as $\add$, so that, say, $E\adi F\mli G$ should be understood as $(E\adi F)\mli G$ rather than $E\adi (F\mli G)$.  

The following lemma strengthens (the $\ada$-closure of) Axiom 10 by replacing $\mli$ with $\adi$.

\begin{lemma}\label{comsuc}
 $\arfour\vdash \ada x \bigl(|x\successor|\mleq \bound\adi \ade y (y\equals x\successor)\bigr)$.
\end{lemma}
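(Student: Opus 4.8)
The goal is to prove $\arfour\vdash \ada x \bigl(|x\successor|\mleq \bound\adi \ade y (y\equals x\successor)\bigr)$, which strengthens Axiom 10 by turning the $\mli$ into the choice implication $\adi$, i.e. $\gneg(|x\successor|\mleq\bound)\add\ade y(y\equals x\successor)$. The essential difference is that, under $\adi$, the machine must itself decide whether $|x\successor|\mleq\bound$ holds, whereas under the ordinary $\mli$ of Axiom 10 this predicate sits in the antecedent and the environment (classical truth) takes care of it. So the plan is to first secure the ability to \emph{decide} the predicate $|s\successor|\mleq\bound$, and then combine that decision with Axiom 10.

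The plan is as follows. First I would use Lemma \ref{comeqbound}, $\arfour\vdash\ada x(|x|\equals\bound\add|x|\mless\bound)$, instantiated (via $\ada$-Elimination) at $s\successor$ to get $|s\successor|\equals\bound\add|s\successor|\mless\bound$; this gives me an efficient way to tell which of the two size-relations holds for $s\successor$. From $|s\successor|\equals\bound$ and from $|s\successor|\mless\bound$, $\pa$ tells me $|s\successor|\mleq\bound$ in both cases, and from $|s\successor|\mless\bound$ I also get $|s\successor|\mleq\bound$ directly. The point of splitting on this disjunction is that in \emph{every} disjunct I will be able to constructively conclude $|s\successor|\mleq\bound$ --- so these two branches let me feed a \emph{known-true} instance of $|s\successor|\mleq\bound$ into Axiom 10. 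Concretely, for each disjunct I would take the $\ada$-closure of Axiom 10, instantiate it at $s$ (by $\ada$-Elimination) to obtain $|s\successor|\mleq\bound\mli\ade y(y\equals s\successor)$, then Modus-Ponens it with the (now established) $|s\successor|\mleq\bound$ to get $\ade y(y\equals s\successor)$, and finish by $\add$-Choose into the right disjunct of the target $\gneg(|s\successor|\mleq\bound)\add\ade y(y\equals s\successor)$.

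Wait --- the disjunction from Lemma \ref{comeqbound} only has two cases, both of which \emph{establish} $|s\successor|\mleq\bound$, so where does the left disjunct $\gneg(|s\successor|\mleq\bound)$ of the target ever get chosen? Here is the subtlety: since $s$ ranges over bounded valuations, Axiom 13 guarantees $|s|\mleq\bound$, but it does \emph{not} guarantee $|s\successor|\mleq\bound$ (appending the successor can push the size to exactly $\bound$ or, in the overflow situation, beyond it). So the genuinely two-sided decision I need is not ``$|s\successor|\equals\bound$ versus $|s\successor|\mless\bound$'' but rather ``$|s\successor|\mleq\bound$ versus $\gneg(|s\successor|\mleq\bound)$''. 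The cleanest route is therefore to derive this latter decision as a provable $\add$-disjunction first --- call it $|s\successor|\mleq\bound\add\gneg(|s\successor|\mleq\bound)$ --- using Lemma \ref{comlen} (to name $t\equals|s\successor|$, after first obtaining $\ade z(z\equals s\successor)$ is \emph{not} available yet, so instead compute $|s\successor|$ from $s$ directly via the size function, which $\pa$ handles) together with Lemma \ref{nov18a} (to compare $t$ with $\bound$, giving $t\equals\bound\add t\notequals\bound$) and $\pa$-reasoning to convert $t\notequals\bound$ into either $t\mless\bound$ (hence $\mleq$) or $t\mgreater\bound$ (hence $\gneg(\mleq)$). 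This decision procedure is exactly parallel to the proof of Lemma \ref{comeqbound}.

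Once the decision $|s\successor|\mleq\bound\add\gneg(|s\successor|\mleq\bound)$ is in hand, the endgame is routine $\add$-Elimination: in the left branch I have $|s\successor|\mleq\bound$, feed it to Axiom 10 (instantiated at $s$) by Modus Ponens to get $\ade y(y\equals s\successor)$, then $\add$-Choose into the right disjunct of the target; in the right branch I have $\gneg(|s\successor|\mleq\bound)$, which by $\add$-Choose goes straight into the left disjunct of the target. Both branches thus yield $\gneg(|s\successor|\mleq\bound)\add\ade y(y\equals s\successor)$, i.e. $|s\successor|\mleq\bound\adi\ade y(y\equals s\successor)$, and $\add$-Elimination closes the split. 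A final $\ada$-Introduction on $s$ produces the quantified target $\ada x\bigl(|x\successor|\mleq\bound\adi\ade y(y\equals x\successor)\bigr)$. The main obstacle I anticipate is the first step --- assembling the genuine two-sided decision $|s\successor|\mleq\bound\add\gneg(|s\successor|\mleq\bound)$ --- because $\arfour$ cannot take such a decision for granted (Axiom 13 only bounds $|s|$, not $|s\successor|$); everything after that is a mechanical reprise of the $\add$-Elimination/Modus-Ponens pattern already used repeatedly in the proofs of Lemmas \ref{nov8}, \ref{zersuc}, and \ref{comeqbound}.
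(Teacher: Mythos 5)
Your endgame is right and coincides with the paper's: once the choice disjunction $|s\successor|\mleq \bound\add\gneg\, |s\successor|\mleq \bound$ is available, $\add$-Elimination, Modus Ponens with Axiom 10, $\add$-Choose and a final $\ada$-Introduction deliver the target exactly as you describe, and you have correctly located the entire content of the lemma in the decision step. The gap is that neither of your proposed routes to that decision goes through. Instantiating Lemma \ref{comeqbound} (or Lemma \ref{comlen}) at $s\successor$ is not a legal move: $\ada$-Elimination substitutes only a \emph{variable} for the $\ada$-bound variable, and, semantically, the bounded $\ada$ in those lemmas ranges only over values of size $\mleq\bound$ --- whether $s\successor$ is such a value is precisely the question being decided. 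Your fallback, ``compute $|s\successor|$ from $s$ directly via the size function, which $\pa$ handles,'' conflates $\pa$-provability with constructive solvability: $\pa$ yields only the classical $|s\successor|\mleq\bound\mld\gneg\,|s\successor|\mleq\bound$, and upgrading that $\mld$ to $\add$ is exactly the non-trivial constructive content at stake. Lemma \ref{nov18a} helps only after you have \emph{named} a $t$ with $t\equals |s\successor|$, and producing such a $t$ already presupposes a constructive case analysis on $s$ that you have not supplied.

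The paper manufactures the decision from data about $x$ itself rather than from a computation of $|x\successor|$. It applies Lemma \ref{comeqbound} to $x$ (legitimate, since $x$ is the $\ada$-bound variable and hence of size $\mleq\bound$), obtaining $|x|\equals\bound\add|x|\mless\bound$. In the case $|x|\mless\bound$, $\pa$ gives $|x\successor|\mleq\bound$ and Axiom 10 finishes as in your plan. In the residual case $|x|\equals\bound$ it invokes Axiom 12 to tell whether $x$ is even or odd: for even $x$ the successor has the same size, so again the right $\adi$-component is chosen via Axiom 10, while for odd $x$ the left component $\gneg\,|x\successor|\mleq\bound$ is chosen. (This odd subcase is where all the delicacy of the lemma concentrates --- overflow actually occurs only for $x$ consisting entirely of $1$s --- so your instinct about where the obstacle lies is sound; but the lesson is that some constructive case analysis on $x$, via Lemma \ref{comeqbound} together with Axiom 12 or an induction, is indispensable, and an appeal to ``$\pa$ handles it'' cannot substitute for it.)
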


\begin{proof} An informal argument for $\ada x \bigl(|x\successor|\mleq \bound\adi \ade y (y\equals x\successor)\bigr)$ goes like this. Given an arbitrary $x$, using Lemma \ref{comeqbound}, we can   figure out whether $|x|\equals\bound$ or $|x|\mless\bound$. 

If $|x|\mless\bound$, then (by $\pa$)    $|x\successor|\mleq\bound$. Then, using Axiom 10, we can find  a $t$ with $t\equals x\successor$. In this case, $|x\successor|\mleq \bound\adi \ade y (y\equals x\successor)$ will be resolved by choosing its right component $\ade y (y\equals x\successor)$   and then specifying $y$ as $t$ in it.

Suppose now  $|x|\equals\bound$. Then, by $\pa$, $|x\successor|\mleq\bound$ if and only if $x$ is even. And Axiom 12 allows us to tell whether $x$ is even or odd. If $x$ is even, we resolve $|x\successor|\mleq \bound\adi \ade y (y\equals x\successor)$ as in the previous case. And if $x$ is odd, then $|x\successor|\mleq \bound\adi \ade y (y\equals x\successor)$, i.e. $\gneg |x\successor|\mleq \bound\add \ade y (y\equals x\successor)$, is resolved by choosing its left component $\gneg |x\successor|\mleq \bound$. 

The following is a formalization of the above argument.\vspace{9pt} 

 \noindent 1. $\begin{array}{l}
\ada x (|x|\equals \bound\add |x|\mless \bound)
\end{array}$  \ \ Lemma \ref{comeqbound}     \vspace{3pt}

 \noindent 2. $\begin{array}{l}
|s|\equals \bound\add |s|\mless \bound 
\end{array}$  \ \ $\ada$-Elimination: 1     \vspace{3pt}

 \noindent 3. $\begin{array}{l}
\ade x(s\equals x0\add s\equals x1)
\end{array}$  \ \ Axiom 12     \vspace{3pt}

 \noindent 4. $\begin{array}{l}
 |s\successor |\mleq \bound\mli \ade x(x\equals s\successor) 
\end{array}$  \ \ Axiom 10     \vspace{3pt}

\noindent 5. $\begin{array}{l}
\bigl(|s\successor |\mleq \bound\mli \tlg \bigr)\mli s\equals r0 \mli |s |\equals \bound\mli \tlg  
\end{array}$  \ \  \pa    \vspace{3pt}

\noindent 6. $\begin{array}{l}
\bigl(|s\successor |\mleq \bound\mli  t\equals s\successor  \bigr)\mli s\equals r0 \mli |s |\equals \bound\mli t\equals s\successor   
\end{array}$  \ \  \pa    \vspace{3pt}

\noindent 7. $\begin{array}{l}
\bigl(|s\successor |\mleq \bound\mli  t\equals s\successor  \bigr)\mli s\equals r0 \mli |s |\equals \bound\mli \ade y (y\equals s\successor)  
\end{array}$  \ \  $\ade$-Choose: 6    \vspace{3pt}

\noindent 8. $\begin{array}{l}
\bigl(|s\successor |\mleq \bound\mli \ade x(x\equals s\successor) \bigr)\mli s\equals r0 \mli |s |\equals \bound\mli \ade y (y\equals s\successor)  
\end{array}$  \ \  Wait: 5,7    \vspace{3pt}

\noindent 9. $\begin{array}{l}
s\equals r0 \mli |s |\equals \bound\mli \ade y (y\equals s\successor)  
\end{array}$  \ \  MP: 4,8    \vspace{3pt}

\noindent 10. $\begin{array}{l}
s\equals r0 \mli |s |\equals \bound\mli |s\successor|\mleq \bound\adi\ade y (y\equals s\successor)  
\end{array}$  \ \  $\add$-Choose: 9    \vspace{3pt}

 \noindent 11. $\begin{array}{l}
 s\equals r1 \mli |s |\equals \bound\mli \gneg |s\successor|\mleq \bound
\end{array}$  \ \  \pa    \vspace{3pt}

 \noindent 12. $\begin{array}{l}
 s\equals r1 \mli |s |\equals \bound\mli |s\successor|\mleq \bound\adi\ade y (y\equals s\successor)  
\end{array}$  \ \    $\add$-Choose: 11    \vspace{3pt}

 \noindent 13. $\begin{array}{l}
s\equals r0\add s\equals r1 \mli |s |\equals \bound\mli |s\successor|\mleq \bound\adi\ade y (y\equals s\successor)  
\end{array}$  \ \   $\adc$-Introduction: 10,12   \vspace{3pt}

 \noindent 14. $\begin{array}{l}
\ade x(s\equals x0\add s\equals x1)\mli |s |\equals \bound\mli |s\successor|\mleq \bound\adi\ade y (y\equals s\successor)  
\end{array}$  \ \    $\ada$-Introduction: 13   \vspace{3pt}

 \noindent 15. $\begin{array}{l}
|s |\equals \bound\mli |s\successor|\mleq \bound\adi\ade y (y\equals s\successor)  
\end{array}$  \ \  MP: 3,14    \vspace{3pt}

 \noindent 16. $\begin{array}{l}
\bigl(|s\successor |\mleq \bound\mli\tlg\bigr)\mli |s |\mless \bound\mli \tlg  
\end{array}$  \ \    \pa  \vspace{3pt}

 \noindent 17. $\begin{array}{l}
\bigl(|s\successor |\mleq \bound\mli t\equals s\successor \bigr)\mli |s |\mless \bound\mli t\equals s\successor   
\end{array}$  \ \  \pa   \vspace{3pt}

 \noindent 18. $\begin{array}{l}
\bigl(|s\successor |\mleq \bound\mli t\equals s\successor \bigr)\mli |s |\mless \bound\mli \ade y (y\equals s\successor)  
\end{array}$  \ \  $\ade$-Choose: 17   \vspace{3pt}

 \noindent 19. $\begin{array}{l}
\bigl(|s\successor |\mleq \bound\mli \ade x(x\equals s\successor)\bigr)\mli |s |\mless \bound\mli \ade y (y\equals s\successor)  
\end{array}$  \ \  Wait: 16,18   \vspace{3pt}

 \noindent 20. $\begin{array}{l}
|s |\mless \bound\mli \ade y (y\equals s\successor)  
\end{array}$  \ \   MP: 4,19  \vspace{3pt}

 \noindent 21. $\begin{array}{l}
|s |\mless \bound\mli |s\successor|\mleq \bound\adi\ade y (y\equals s\successor)  
\end{array}$  \ \  $\add$-Choose: 20  \vspace{3pt}

\noindent 22. $\begin{array}{l}
|s\successor|\mleq \bound\adi \ade y (y\equals s\successor)  
\end{array}$  \ \  $\add$-Elimination: 2,15,21  \vspace{3pt}

\noindent 23. $\begin{array}{l}
\ada x \bigl(|x\successor|\mleq \bound\adi \ade y (y\equals x\successor)\bigr)
\end{array}$  \ \  $\ada$-Introduction: 22  \vspace{3pt}
\end{proof}

\subsection{The efficient computability of binary $0$-successor} 

The following lemma  strengthens Axiom 11 in the same way as Lemma \ref{comsuc} strengthens Axiom 10. 
 
\begin{lemma}\label{combzs}
 $\arfour\vdash \ada x (|x0|\mleq \bound\adi \ade y (y\equals x0)\bigr)$.
\end{lemma}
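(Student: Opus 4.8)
The plan is to follow the proof of Lemma~\ref{comsuc} almost verbatim, with the binary $0$-successor in place of the unary successor, the one genuine difference being an extra sub-case forced by the anomalous behaviour of $x\equals\zero$. As there, the target will follow by $\ada$-Introduction from $|s0|\mleq\bound\adi\ade y(y\equals s0)$ for an arbitrary $s$; write $T$ for this formula. The first move is to apply Lemma~\ref{comeqbound} to obtain $|s|\equals\bound\add|s|\mless\bound$, which lets us decide efficiently whether $s$ already has maximal size.

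In the case $|s|\mless\bound$, \pa proves $|s|\mless\bound\mli|s0|\mleq\bound$ (since $|s0|\mleq|s|\successor$ and $|s|\mless\bound$ gives $|s|\successor\mleq\bound$); chaining this with Axiom~11 by Transitivity and then selecting the right component of $T$ by $\add$-Choose gives $|s|\mless\bound\mli T$. In the boundary case $|s|\equals\bound$ I would split once more, now on Axiom~9, $s\equals\zero\add s\notequals\zero$. If $s\notequals\zero$ then $|s0|\equals|s|\successor\equals\bound\successor$, so \pa proves $s\notequals\zero\mli|s|\equals\bound\mli\gneg|s0|\mleq\bound$, and $\add$-Choose on the left component of $T$ yields $s\notequals\zero\mli|s|\equals\bound\mli T$. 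If $s\equals\zero$ then $s0\equals\zero$, and from Axiom~8 together with the \pa-fact $w\equals\zero\mli s\equals\zero\mli w\equals s0$ (followed by $\ade$-Choose, $\ada$-Introduction and Modus Ponens) we obtain $s\equals\zero\mli\ade y(y\equals s0)$; $\add$-Choose on the right component of $T$ then gives $s\equals\zero\mli T$.

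Assembling the pieces is purely logical. Weakening pads $s\equals\zero\mli T$ to $s\equals\zero\mli|s|\equals\bound\mli T$, and an $\add$-Elimination of the two subcases against Axiom~9 produces $|s|\equals\bound\mli T$. A second $\add$-Elimination, this time against the disjunction from Lemma~\ref{comeqbound} and using also the $|s|\mless\bound$ case, yields $T$; $\ada$-Introduction then delivers the lemma. I expect the main obstacle to be precisely the $s\equals\zero$ subcase: generically a number of size $\bound$ has $|s0|\mgreater\bound$, but the binary $0$-successor of $\zero$ is $\zero$ itself, whose size $\zero\successor$ satisfies $\zero\successor\mleq\bound$ by Lemma~\ref{zer}, so $|s0|\mleq\bound$ still holds and we must supply the witness rather than refute the bound. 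Skipping this edge case would commit us to the false claim $|s|\equals\bound\mli\gneg|s0|\mleq\bound$; the extra Axiom~9 split is exactly what keeps the argument correct.
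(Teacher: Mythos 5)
Your proposal is correct and follows essentially the same route as the paper's proof: a Lemma~\ref{comeqbound} case split, Axiom~11 (plus the \pa-fact $|s|\mless\bound\mli|s0|\mleq\bound$) for the $|s|\mless\bound$ case, and the crucial Axiom~9 subsplit when $|s|\equals\bound$, with the $s\equals\zero$ edge case handled by producing a witness for $s0\equals\zero$ and the $s\notequals\zero$ case by choosing the left $\adi$-component. The only cosmetic differences are that the paper extracts the consequent of Axiom~11 via a Wait step rather than Transitivity, and in the $s\equals\zero$ case it simply uses $s$ itself as the witness (via the \pa-fact $s\equals\zero\mli s\equals s0$) instead of invoking Axiom~8.
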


\begin{proof} Informally, the argument underlying our formal proof of $\ada x (|x0|\mleq \bound\adi \ade y (y\equals x0)\bigr)$ is the following.  Consider an arbitrary $x$. Using Lemma \ref{comeqbound}, we can tell whether $|x|\equals\bound$ or $|x|\mless\bound$. If $|x|\mless\bound$, then $|x0|\mleq\bound$ and, using Axiom 11, we can find a $t$ with $t\equals x0$. We then resolve $|x0|\mleq \bound\adi \ade y (y\equals x0)$ by choosing its right $\adi$-component and specifying $y$ as $t$ in it. Suppose now $|x|\equals\bound$. Using Axiom 9, we can tell whether $x$ is $0$ or not. If $x$ is $0$, then $|x0|\mleq \bound\adi \ade y (y\equals x0)$ is resolved by choosing its right $\adi$-component and specifying $y$ as $x$ in it. Otherwise, if $x\notequals\zero$, then the size of $x0$ exceeds  $\bound$. So, 
$|x0|\mleq \bound\adi \ade y (y\equals x0)$ is resolved by choosing its left $\adi$-component $\gneg |x0|\mleq \bound $. Formally, we have:\vspace{9pt}

\noindent 1. $\begin{array}{l}
\ada x(|x |\equals \bound\add |x|\mless \bound) 
\end{array}$  \ \ Lemma \ref{comeqbound}    \vspace{3pt}

\noindent 2. $\begin{array}{l}
|s |\equals \bound\add |s|\mless \bound 
\end{array}$  \ \ $\ada$-Elimination: 1    \vspace{3pt}

\noindent 3. $\begin{array}{l}
 s  \equals\zero\add s\notequals \zero 
\end{array}$  \ \ Axiom 9    \vspace{3pt}

\noindent 4. $\begin{array}{l}
 s  \equals\zero\mli |s |\equals \bound\mli   s\equals s0  
\end{array}$  \ \ $\pa$    \vspace{3pt}

\noindent 5. $\begin{array}{l}
 s  \equals\zero\mli |s |\equals \bound\mli   \ade y (y\equals s0) 
\end{array}$  \ \  $\ade$-Choose: 4   \vspace{3pt}

\noindent 6. $\begin{array}{l}
 s  \equals\zero\mli |s |\equals \bound\mli |s0|\mleq \bound\adi \ade y (y\equals s0) 
\end{array}$  \ \     $\add$-Choose: 5   \vspace{3pt}

\noindent 7. $\begin{array}{l}
 s  \notequals\zero\mli |s |\equals \bound\mli \gneg |s0|\mleq \bound 
\end{array}$  \ \   \pa  \vspace{3pt}

\noindent 8. $\begin{array}{l}
 s  \notequals\zero\mli |s |\equals \bound\mli |s0|\mleq \bound\adi \ade y (y\equals s0) 
\end{array}$  \ \     $\add$-Choose: 7   \vspace{3pt}

\noindent 9. $\begin{array}{l}
|s |\equals \bound\mli |s0|\mleq \bound\adi \ade y (y\equals s0) 
\end{array}$  \ \ $\add$-Elimination: 3,6,8    \vspace{3pt}

\noindent 10. $\begin{array}{l}
|s 0|\mleq \bound\mli \ade x(x\equals s0) 
\end{array}$  \ \ Axiom 11    \vspace{3pt}

\noindent 11. $\begin{array}{l}
\bigl(|s0 |\mleq \bound\mli \tlg\bigr)\mli |s|\mless \bound \mli \tlg
\end{array}$  \ \  $\pa$  \vspace{3pt}

\noindent 12. $\begin{array}{l}
\bigl(|s 0|\mleq \bound\mli t\equals s0 \bigr)\mli |s|\mless \bound \mli  t\equals s0  
\end{array}$  \ \  $\pa$   \vspace{3pt}

\noindent 13. $\begin{array}{l}
\bigl(|s 0|\mleq \bound\mli t\equals s0 \bigr)\mli |s|\mless \bound \mli  \ade y (y\equals s0) 
\end{array}$  \ \  $\ade$-Choose: 12   \vspace{3pt}

\noindent 14. $\begin{array}{l}
\bigl(|s0 |\mleq \bound\mli \ade x(x\equals s0)\bigr)\mli |s|\mless \bound \mli  \ade y (y\equals s0) 
\end{array}$  \ \  Wait: 11,13   \vspace{3pt}

\noindent 15. $\begin{array}{l}
|s|\mless \bound \mli   \ade y (y\equals s0) 
\end{array}$  \ \ MP: 10,14   \vspace{3pt}

\noindent 16. $\begin{array}{l}
|s|\mless \bound \mli  |s0|\mleq \bound\adi \ade y (y\equals s0) 
\end{array}$  \ \ $\add$-Choose: 15   \vspace{3pt}

\noindent 17. $\begin{array}{l}
|s0|\mleq \bound\adi \ade y (y\equals s0) 
\end{array}$  \ \  $\add$-Elimination: 2,9,16   \vspace{3pt}

\noindent 18. $\begin{array}{l}
\ada x (|x0|\mleq \bound\adi \ade y (y\equals x0)\bigr)
\end{array}$  \ \  $\ada$-Introduction: 17   \vspace{3pt}
\end{proof}

\subsection{The efficient computability of binary $1$-successor}

The following lemma is the same to Lemma \ref{nov8} as  Lemma \ref{combzs} is to Axiom 11.  

\begin{lemma}\label{combos}
 $\arfour\vdash \ada x (|x1|\mleq \bound\adi \ade y (y\equals x1)\bigr)$.
\end{lemma}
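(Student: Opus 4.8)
The plan is to prove $\arfour\vdash \ada x (|x1|\mleq \bound\adi \ade y (y\equals x1)\bigr)$ by mimicking, essentially verbatim, the proof of Lemma \ref{combzs}, replacing the binary $0$-successor by the binary $1$-successor throughout. The one substantive change is that the resource giving us the ability to compute the binary $1$-successor is no longer an axiom but Lemma \ref{nov8} (which established $\arfour\vdash |s1|\mleq \bound\mli \ade x(x\equals s1)$), so wherever the proof of Lemma \ref{combzs} invoked Axiom 11, the present proof will invoke Lemma \ref{nov8} instead.

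The informal argument runs as follows. Given an arbitrary $x$, use Lemma \ref{comeqbound} to decide whether $|x|\equals\bound$ or $|x|\mless\bound$. In the case $|x|\mless\bound$, we have (by $\pa$) that $|x1|\mleq\bound$, so Lemma \ref{nov8} lets us actually find a $t$ with $t\equals x1$; we then resolve $|x1|\mleq \bound\adi \ade y (y\equals x1)$ by choosing its right $\adi$-component and specifying $y$ as $t$. In the case $|x|\equals\bound$, the situation is slightly simpler than for the binary $0$-successor: since $x1$ always has strictly greater size than $x$ (appending a digit always lengthens the numeral, with no exception analogous to $\zero$), $\pa$ gives $\gneg|x1|\mleq\bound$ outright, so we resolve $|x1|\mleq \bound\adi \ade y (y\equals x1)$ by choosing its left $\adi$-component $\gneg|x1|\mleq\bound$. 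In particular, unlike the proof of Lemma \ref{combzs}, there is no need to invoke Axiom 9 to split on whether $x$ is $\zero$, because the $|x|\equals\bound$ branch is uniformly handled by a single $\add$-Choose of the left component.

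Concretely, I would assemble the formal derivation in three blocks. First, handle the $|s|\equals \bound$ branch: from a $\pa$-theorem $|s|\equals\bound\mli\gneg|s1|\mleq\bound$, a single $\add$-Choose gives $|s|\equals\bound\mli |s1|\mleq \bound\adi\ade y(y\equals s1)$. Second, handle the $|s|\mless\bound$ branch exactly as in steps 10--16 of Lemma \ref{combzs}: start from Lemma \ref{nov8}, apply Wait against the appropriate $\pa$-premises (the analogues of lines 11--12 there), then MP, then $\ade$-Choose and $\add$-Choose, yielding $|s|\mless\bound\mli |s1|\mleq \bound\adi\ade y(y\equals s1)$. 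Third, combine the two branches via $\add$-Elimination against Lemma \ref{comeqbound} (instantiated by $\ada$-Elimination to $|s|\equals\bound\add|s|\mless\bound$), obtaining $|s1|\mleq \bound\adi \ade y (y\equals s1)$, and finish with $\ada$-Introduction.

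I do not anticipate a genuine obstacle here; the lemma is a routine variation on Lemma \ref{combzs}. The only point requiring a moment's care is making sure the correct $\pa$-facts are cited---specifically that $|x1|\mleq\bound$ \emph{always} fails when $|x|\equals\bound$ (so that the $|x|\equals\bound$ case collapses to a single left-$\adi$ choice with no further case split), which is the feature that makes this proof marginally shorter than that of Lemma \ref{combzs}. Beyond that, the derivation is a mechanical transcription of the earlier one with Axiom 11 replaced by Lemma \ref{nov8} and $\zero$-successors replaced by $1$-successors.
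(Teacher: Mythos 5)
There is a genuine gap, and it sits exactly at the point you single out as the one ``requiring a moment's care''. Your claim that $|x1|\mleq \bound$ \emph{always} fails when $|x|\equals \bound$ --- equivalently, that $\pa\vdash |s|\equals\bound\mli\gneg |s1|\mleq\bound$ --- is false. The exception at $\zero$ applies to the binary $1$-successor just as it does to the $0$-successor: $\zero 1$ is $\zero\successor$, i.e.\ the numeral $1$, and under the paper's convention $|\zero|=|1|=1$, so appending a ``$1$'' to $\zero$ does \emph{not} lengthen the numeral. Concretely, when the value of $\bound$ is $1$ and $x$ is $\zero$, we have $|x|\equals\bound$ and yet $|x1|\mleq\bound$ is true; your strategy of answering that branch with the left $\adi$-component $\gneg|x1|\mleq\bound$ then selects a false disjunct and loses, and the first block of your formal derivation rests on a non-theorem of $\pa$. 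The repair is to reinstate the Axiom 9 case split on $x\equals\zero$ inside the $|x|\equals\bound$ branch and, in the subcase $x\equals\zero$, to choose the \emph{right} component and supply the value of $\zero\successor$ (obtainable via Lemma \ref{zersuc}) as the witness for $y$ --- note that, unlike in Lemma \ref{combzs}, one cannot simply reuse $x$ itself here, since $\zero 1\notequals\zero$.

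For comparison, the paper sidesteps this case analysis entirely by a different route: it invokes Lemma \ref{combzs} to either learn $\gneg|x0|\mleq\bound$ --- whence $\gneg|x1|\mleq\bound$, since $|x0|$ and $|x1|$ always coincide --- or else obtain a $t$ with $t\equals x0$; in the latter case $|t\successor|\mleq\bound$ holds because $t$ is even (so its unary successor has the same size), and Axiom 10 then yields $r\equals t\successor\equals x1$. The $\zero$ edge case is thereby absorbed into Lemma \ref{combzs}, which has already dealt with it. Your overall architecture (Lemma \ref{comeqbound} plus Lemma \ref{nov8}, mirroring the proof of Lemma \ref{combzs}) is workable once the missing case split is restored, but as written the proposal is incorrect.
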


\begin{proof} The argument underlying our formal proof of $\ada x (|x1|\mleq \bound\adi \ade y (y\equals x1)\bigr)$ is the following.  Consider an arbitrary $x$. Using Lemma \ref{combzs}, we can tell whether the size of $x0$ exceeds $\bound$, or else find a $t$ with $t\equals x0$. In the first case we resolve $|x1|\mleq \bound\adi \ade y (y\equals x1)$ by choosing its left component $\gneg |x1|\mleq \bound$, because (we know from $\pa$ that) $|x0|=|x1|$. In the second case, using Axiom 10, we find an $r$ with $r\equals t\successor$. This axiom is applicable here because $|t\successor |\mleq\bound $; $|t\successor |\mleq\bound $, in turn, is true because $|t|\mleq \bound$ (by Axiom 13) and $t$ is even, so the unary successor of $t$ is of the same size as  $t$ itself.    Note that (as $\pa$ can help us to figure out), in the present case, $r\equals x1$. So, we can resolve 
$|x1|\mleq \bound\adi \ade y (y\equals x1)$ by choosing its right component and then specifying $y$ as $r$ in it. Formally, we have:\vspace{9pt}

\noindent 1. $\begin{array}{l}
\ada x (\gneg |x0|\mleq \bound\add \ade y (y\equals x0)\bigr)
\end{array}$  \ \  Lemma \ref{combzs}  \vspace{3pt}

\noindent 2. $\begin{array}{l}
\gneg |s0|\mleq \bound\add \ade y (y\equals s0) 
\end{array}$  \ \  $\ada$-Elimination: 1  \vspace{3pt}

\noindent 3. $\begin{array}{l}
\gneg |s0|\mleq \bound \mli \gneg |s1|\mleq \bound 
\end{array}$  \ \ \pa  \vspace{3pt}

\noindent 4. $\begin{array}{l}
\gneg |s0|\mleq \bound \mli |s1|\mleq \bound\adi \ade y (y\equals s1)
\end{array}$  \ \  $\add$-Choose: 3  \vspace{3pt}

\noindent 5. $\begin{array}{l}
|t|\mleq\bound
\end{array}$  \ \  Axiom 13 \vspace{3pt}

\noindent 6. $\begin{array}{l}
|t\successor |\mleq\bound\mli\ade x(x\equals t\successor )
\end{array}$  \ \  Axiom 10 \vspace{3pt}

\noindent 7. $\begin{array}{l}
|t|\mleq\bound\mlc \bigl(|t\successor |\mleq\bound\mli\tlg\bigr)\mli t\equals s0 \mli   \tlg 
\end{array}$  \ \  $\pa$ \vspace{3pt}

\noindent 8. $\begin{array}{l}
|t|\mleq\bound\mlc \bigl(|t\successor |\mleq\bound\mli  r\equals t\successor  \bigr)\mli t\equals s0 \mli   r\equals s1 
\end{array}$  \ \ $\pa$  \vspace{3pt}

\noindent 9. $\begin{array}{l}
|t|\mleq\bound\mlc \bigl(|t\successor |\mleq\bound\mli  r\equals t\successor  \bigr)\mli t\equals s0 \mli   \ade y (y\equals s1) 
\end{array}$  \ \  $\ade$-Choose: 8  \vspace{3pt}

\noindent 10. $\begin{array}{l}
|t|\mleq\bound\mlc \bigl(|t\successor |\mleq\bound\mli\ade x(x\equals t\successor )\bigr)\mli t\equals s0 \mli   \ade y (y\equals s1) 
\end{array}$  \ \  Wait: 7,9  \vspace{3pt}

\noindent 11. $\begin{array}{l}
t\equals s0 \mli   \ade y (y\equals s1) 
\end{array}$  \ \ MP: 5,6,10   \vspace{3pt}

\noindent 12. $\begin{array}{l}
 \ade y (y\equals s0)\mli   \ade y (y\equals s1) 
\end{array}$  \ \ $\ada$-Introduction: 11   \vspace{3pt}

\noindent 13. $\begin{array}{l}
 \ade y (y\equals s0)\mli |s1|\mleq \bound\adi \ade y (y\equals s1) 
\end{array}$  \ \  $\add$-Choose: 12  \vspace{3pt}

\noindent 14. $\begin{array}{l}
|s1|\mleq \bound\adi \ade y (y\equals s1) 
\end{array}$  \ \ $\add$-Elimination: 2,4,13   \vspace{3pt}

\noindent 15. $\begin{array}{l}
\ada x (|x1|\mleq \bound\adi \ade y (y\equals x1)\bigr)
\end{array}$  \ \  $\ada$-Introduction: 14  \vspace{3pt}
\end{proof}

\subsection{The efficient computability of addition}

\begin{lemma}\label{comad}
 $\arfour\vdash \ada x\ada y\bigl(|x\plus y|\mleq \bound\adi \ade z (z\equals x\plus y)\bigr)$.
\end{lemma}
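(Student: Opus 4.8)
The plan is to prove the formula by binary-predecessor induction (BPI, Theorem \ref{nov88}) on the second argument, while keeping the first argument choice-universally quantified \emph{inside} the induction, so that the induction hypothesis is available at an arbitrary value (this is essential, since the recursion will need the sum of the binary predecessors rather than of the original arguments). Concretely, I would take the induction formula to be
\[ F(s)\ =\ \ada x\bigl(|x\plus s|\mleq\bound\adi\ade z(z\equals x\plus s)\bigr), \]
prove $\arfour\vdash F(s)$ by BPI, and then recover the target: one application of $\ada$-Elimination turns $F(s)$ into $|r\plus s|\mleq\bound\adi\ade z(z\equals r\plus s)$ for a fresh $r$, after which two applications of $\ada$-Introduction (first binding $s$ as $y$, then $r$ as $x$) yield $\ada x\ada y\bigl(|x\plus y|\mleq\bound\adi\ade z(z\equals x\plus y)\bigr)$.

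The basis $F(\zero)$ is immediate: since $\pa\vdash x\plus\zero\equals x$ and $|x|\mleq\bound$ by Axiom 13, I would choose the right $\adi$-disjunct and specify $z$ as $x$ (via $\ade$-Choose), then close by $\ada$-Introduction.

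The inductive step is the heart of the argument. Here the resource is $F(\lfloor s/2\rfloor)$, i.e.\ $\cla z(s\equals z0\mld s\equals z1\mli F(z))$, so for the binary predecessor $q$ of $s$ the full strength of $F$ at $q$ is available. Given an arbitrary $x$, I would use Axiom 12 on $s$ to obtain $q\equals\lfloor s/2\rfloor$ together with the bit telling whether $s\equals q0$ or $s\equals q1$, and Axiom 12 on $x$ to obtain $p\equals\lfloor x/2\rfloor$ and whether $x\equals p0$ or $x\equals p1$. The key arithmetic identity, provable in $\pa$, is $x\plus s\equals(p\plus q)0\plus(a\plus b)$, where $a,b\in\{0,1\}$ are the peeled bits. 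Specializing the resource $F(q)=\ada x(|x\plus q|\mleq\bound\adi\ade w(w\equals x\plus q))$ to $p$ by $\ada$-Elimination gives $|p\plus q|\mleq\bound\adi\ade w(w\equals p\plus q)$, on which I would split by $\add$-Elimination. If $p\plus q$ overflows then, since $\pa$ proves $x\plus s\mgeq p\plus q$, so does $x\plus s$, and I choose the left ($\gneg|x\plus s|\mleq\bound$) disjunct. Otherwise I have $w\equals p\plus q$ and reassemble $x\plus s$ from $w$ and the two bits using the already-established successor lemmas: for $a\plus b\equals 0$ the result is $w0$ (Lemma \ref{combzs}); for $a\plus b\equals 1$ it is $w1$ (Lemma \ref{combos}); and for the carry case $a\plus b\equals 2$ it is $(w\successor)0$, obtained by one unary successor (Lemma \ref{comsuc}) followed by a binary $0$-successor (Lemma \ref{combzs}). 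Each such lemma is itself a $\adi$, so in every branch I again $\add$-Eliminate, propagating any overflow into the final $\gneg$-choice.

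I expect the main obstacle to be precisely this bookkeeping of overflow together with the carry case: one must peel the last bit of \emph{both} arguments (not only of $s$), route the induction hypothesis through $\lfloor x/2\rfloor$ rather than $x$ (the reason $x$ must remain $\ada$-bound in $F$), treat $a\plus b\equals 2$ with the extra unary successor, and, at each nested $\adi$, verify by a $\pa$-level monotonicity fact about sizes (e.g.\ $\neg|w0|\mleq\bound\mli\neg|x\plus s|\mleq\bound$) that an overflow of an intermediate value both forces and is forced by an overflow of $x\plus s$. None of these steps is conceptually deep, but the resulting formal derivation will be long, assembled from Axioms 9, 12, 13, the successor Lemmas \ref{comsuc}, \ref{combzs}, \ref{combos}, and repeated use of $\add$-Choose, $\adc$-Introduction, $\add$-Elimination, $\ade$-Choose, $\ada$-Introduction, Weakening and Modus Ponens, in the same style as the proofs of the preceding lemmas.
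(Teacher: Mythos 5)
Your proposal is correct and follows essentially the same route as the paper: the paper likewise proves $\ada y\bigl(|s\plus y|\mleq\bound\adi\ade z(z\equals s\plus y)\bigr)$ by induction on the binary structure of one summand with the other kept $\ada$-bound inside the induction formula (using BSI rather than your BPI, which the paper itself presents as essentially interchangeable), peels the last bit of the $\ada$-bound argument with Axiom 12, and reassembles the sum from the sum of the binary predecessors via Lemmas \ref{comsuc}, \ref{combzs} and \ref{combos}, routing overflow through the nested $\adi$s exactly as you describe. The only cosmetic differences are that with BSI the last bit of the inducted variable comes for free from the choice of $\adc$-conjunct (so Axiom 12 is applied only to the other argument), and the carry case is computed as $\bigl((s\plus r)1\bigr)\successor$ rather than your $\bigl((p\plus q)\successor\bigr)0$.
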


\begin{proof} The main idea behind our proof of $\ada x\ada y\bigl(|x\plus y|\mleq \bound\adi \ade z (z\equals x\plus y)\bigr)$, which proceeds by BSI induction, is the fact --- known from $\pa$ --- that the sum of two numbers can be ``easily'' found from the sum of the binary predecessors of those numbers. Specifically, observe that we have: 
\begin{description}
\item[(i)] $s0\plus r0\equals(s\plus r)0$, because $2s\plus 2r\equals 2(s\plus r)$;
\item[(ii)] $s0\plus r1\equals(s\plus r)1$, because $2s\plus (2r\plus 1)\equals 2(s\plus r)\plus 1$;
\item[(iii)] $s1\plus r0\equals(s\plus r)1$, because $(2s\plus 1)\plus 2r\equals 2(s\plus r)\plus 1$;
\item[(iv)] $s1\plus r1\equals\bigl((s\plus r)1\bigr)\successor$, because $(2s\plus 1)\plus (2r\plus 1)\equals \bigl(2(s\plus r)\plus 1\bigr)\plus 1$.
\end{description}

The formula of induction is $\ada y\bigl(|s\plus y|\mleq \bound\adi \ade z (z\equals s\plus y)\bigr)$ (from which the target formula immediately follows by $\ada$-Introduction). 

The basis $\ada y\bigl(|\zero\plus y|\mleq \bound\adi \ade z (z\equals \zero \plus y)\bigr)$ of induction can be established/resolved rather easily, by choosing the right component of the $\adi$ combination and selecting the value of $z$ to be the same as the value of $y$. 

In resolving the inductive step 
\[ \ada y\bigl(|s\plus y|\mleq \bound\adi \ade z (z\equals s\plus y)\bigr) \mli \ada y\bigl(|s0\plus y|\mleq \bound\adi \ade z (z\equals s0\plus y)\bigr)  \adc \ada y\bigl(|s1\plus y|\mleq \bound\adi \ade z (z\equals s1\plus y)\bigr) ,\]
we wait for the environment to select a $\adc$-conjunct in the consequent (bottom-up $\adc$-Introduction) and then select a value $t$ for $y$ in it (bottom-up $\ada$-Introduction). Let us say the left conjunct is selected, meaning that the inductive step will be brought down to (i.e. the premise we are talking about will be) 
\[ \ada y\bigl(|s\plus y|\mleq \bound\adi \ade z (z\equals s\plus y)\bigr) \mli  \bigl(|s0\plus t|\mleq \bound\adi \ade z (z\equals s0\plus t)\bigr) .\]
Using Axiom 12, we can find the binary predecessor $r$ of $t$, and also figure out whether $t$ is $r0$ or $r1$. Let us say $t\equals r0$. Then we specify $y$ as $r$ in the antecedent of the above formula, after which the problem we need to resolve is, in fact, 
\[  \bigl(|s\plus r|\mleq \bound\adi \ade z (z\equals s\plus r)\bigr) \mli \bigl(|s0\plus r0|\mleq \bound\adi \ade z (z\equals s0\plus r0)\bigr) .\]
Here we can wait till the environment selects one of the $\adi$-components in the antecedent. If the left component is selected, we can resolve the problem by selecting the left $\adi$-component in the consequent, because, if $|s\plus r|$ exceeds $\bound$, then ``even more so'' does  $|s0\plus r0|$. Otherwise, if the right component is selected, then we further wait till the environment also selects a value $u$ for $z$ there, after which the problem will be brought down to 
\[  u\equals s\plus r  \mli \bigl(|s0\plus r0|\mleq \bound\adi \ade z (z\equals s0\plus r0)\bigr) .\]
But from the earlier observation (i) we know that $s0\plus r0=(s\plus r)0$. So, the above problem is, in fact, nothing but 
\[  u\equals s\plus r  \mli \bigl(|u0|\mleq \bound\adi \ade z (z\equals u0)\bigr),\]
which --- whose consequent, that is --- we can resolve using Lemma \ref{combzs}.

The remaining three possibilities of the above scenario are similar, but will rely on observation (ii), (iii) or (iv) instead of (i), and Lemma     \ref{combos} instead of \ref{combzs}. The case corresponding to (iv), in addition, will also use Lemma \ref{comsuc}.

Below is a formal counterpart of the above argument in full detail: \vspace{9pt}

\noindent 1. $\begin{array}{l}
s\equals \zero \plus s
\end{array}$  \ \ $\pa$   \vspace{3pt}

\noindent 2. $\begin{array}{l}
\ade z (z\equals \zero \plus s)
\end{array}$  \ \ $\ade$-Choose: 1   \vspace{3pt}

\noindent 3. $\begin{array}{l}
|\zero \plus s|\mleq \bound\adi \ade z (z\equals \zero \plus s)
\end{array}$  \ \  $\add$-Choose: 2  \vspace{3pt}

\noindent 4. $\begin{array}{l}
\ada y\bigl(|\zero \plus y|\mleq \bound\adi \ade z (z\equals \zero \plus y)\bigr)
\end{array}$  \ \  $\ada$-Introduction: 3  \vspace{3pt}

\noindent 5. $\begin{array}{l}
 t\equals r0   \mli   \gneg |s\plus r|\mleq \bound  \mli  \gneg |s0\plus t|\mleq \bound 
\end{array}$  \ \ \pa   \vspace{3pt}

\noindent 6. $\begin{array}{l}
 t\equals r1   \mli   \gneg |s\plus r|\mleq \bound  \mli  \gneg |s0\plus t|\mleq \bound 
\end{array}$  \ \ \pa   \vspace{3pt}

\noindent 7. $\begin{array}{l}
 t\equals r0\add t\equals r1  \mli   \gneg |s\plus r|\mleq \bound  \mli  \gneg |s0\plus t|\mleq \bound 
\end{array}$  \ \  $\adc$-Introduction: 5,6  \vspace{3pt}

\noindent 8. $\begin{array}{l}
 t\equals r0\add t\equals r1  \mli   \gneg |s\plus r|\mleq \bound  \mli  |s0\plus t|\mleq \bound\adi \ade z (z\equals s0\plus t)  
\end{array}$  \ \  $\add$-Choose: 7  \vspace{3pt}

\noindent 9. $\begin{array}{l}
\ada x (|x0|\mleq \bound\adi \ade y (y\equals x0)\bigr)  
\end{array}$  \ \  Lemma \ref{combzs} \vspace{3pt}

\noindent 10. $\begin{array}{l}
|u0|\mleq \bound\adi \ade y (y\equals u0)  
\end{array}$  \ \  $\ada$-Elimination: 9 \vspace{3pt}

\noindent 11. $\begin{array}{l}
 \gneg |u0|\mleq \bound \mli t\equals r0   \mli    u\equals s\plus r   \mli  \gneg |s0\plus t|\mleq \bound 
\end{array}$  \ \  $\pa$  \vspace{3pt} (observation (i))

\noindent 12. $\begin{array}{l}
 \gneg |u0|\mleq \bound \mli t\equals r0   \mli    u\equals s\plus r   \mli  |s0\plus t|\mleq \bound\adi \ade z (z\equals s0\plus t)  
\end{array}$  \ \  $\add$-Choose: 11  \vspace{3pt}

\noindent 13. $\begin{array}{l}
w\equals u0  \mli t\equals r0   \mli    u\equals s\plus r   \mli  w\equals s0\plus t   
\end{array}$  \ \  $\pa$ (observation (i))  \vspace{3pt}

\noindent 14. $\begin{array}{l}
w\equals u0  \mli t\equals r0   \mli    u\equals s\plus r   \mli  \ade z (z\equals s0\plus t)  
\end{array}$  \ \ $\ade$-Choose: 13   \vspace{3pt}

\noindent 15. $\begin{array}{l}
\ade y (y\equals u0) \mli t\equals r0   \mli    u\equals s\plus r   \mli  \ade z (z\equals s0\plus t)  
\end{array}$  \ \ $\ada$-Introduction: 14   \vspace{3pt}

\noindent 16. $\begin{array}{l}
\ade y (y\equals u0) \mli t\equals r0   \mli    u\equals s\plus r   \mli  |s0\plus t|\mleq \bound\adi \ade z (z\equals s0\plus t)  
\end{array}$  \ \  $\add$-Choose: 15  \vspace{3pt}

\noindent 17. $\begin{array}{l}
 |u0|\mleq \bound\adi \ade y (y\equals u0) \mli t\equals r0   \mli    u\equals s\plus r   \mli  |s0\plus t|\mleq \bound\adi \ade z (z\equals s0\plus t)  
\end{array}$  \ \  $\adc$-Introduction: 12,16 \vspace{3pt}

\noindent 18. $\begin{array}{l}
 t\equals r0   \mli    u\equals s\plus r   \mli  |s0\plus t|\mleq \bound\adi \ade z (z\equals s0\plus t)  
\end{array}$  MP: 10,17\ \    \vspace{3pt}

\noindent 19. $\begin{array}{l}
\ada x (|x1|\mleq \bound\adi \ade y (y\equals x1)\bigr)  
\end{array}$  \ \  Lemma \ref{combos} \vspace{3pt}

\noindent 20. $\begin{array}{l}
|u1|\mleq \bound\adi \ade y (y\equals u1)  
\end{array}$  \ \  $\ada$-Elimination: 19 \vspace{3pt}

\noindent 21. $\begin{array}{l}
 \gneg |u1|\mleq \bound \mli t\equals r1   \mli    u\equals s\plus r   \mli  \gneg |s0\plus t|\mleq \bound 
\end{array}$  \ \  \pa  \vspace{3pt} (observation (ii))

\noindent 22. $\begin{array}{l}
 \gneg |u1|\mleq \bound \mli t\equals r1   \mli    u\equals s\plus r   \mli  |s0\plus t|\mleq \bound\adi \ade z (z\equals s0\plus t)  
\end{array}$  \ \  $\add$-Choose: 21  \vspace{3pt}

\noindent 23. $\begin{array}{l}
w\equals u1  \mli t\equals r1   \mli    u\equals s\plus r   \mli  w\equals s0\plus t   
\end{array}$  \ \  $\pa$ (observation (ii))  \vspace{3pt}

\noindent 24. $\begin{array}{l}
w\equals u1  \mli t\equals r1   \mli    u\equals s\plus r   \mli  \ade z (z\equals s0\plus t)  
\end{array}$  \ \ $\ade$-Choose: 23   \vspace{3pt}

\noindent 25. $\begin{array}{l}
\ade y (y\equals u1) \mli t\equals r1   \mli    u\equals s\plus r   \mli  \ade z (z\equals s0\plus t)  
\end{array}$  \ \ $\ada$-Introduction: 24   \vspace{3pt}

\noindent 26. $\begin{array}{l}
\ade y (y\equals u1) \mli t\equals r1   \mli    u\equals s\plus r   \mli  |s0\plus t|\mleq \bound\adi \ade z (z\equals s0\plus t)  
\end{array}$  \ \  $\add$-Choose: 25  \vspace{3pt}

\noindent 27. $\begin{array}{l}
 |u1|\mleq \bound\adi \ade y (y\equals u1) \mli t\equals r1   \mli    u\equals s\plus r   \mli  |s0\plus t|\mleq \bound\adi \ade z (z\equals s0\plus t)  
\end{array}$  \ \  $\adc$-Introduction: 22,26 \vspace{3pt}

\noindent 28. $\begin{array}{l}
 t\equals r1   \mli    u\equals s\plus r   \mli  |s0\plus t|\mleq \bound\adi \ade z (z\equals s0\plus t)  
\end{array}$  MP: 20,27\ \    \vspace{3pt}

\noindent 29. $\begin{array}{l}
 t\equals r0\add t\equals r1  \mli    u\equals s\plus r   \mli  |s0\plus t|\mleq \bound\adi \ade z (z\equals s0\plus t)  
\end{array}$  \ \ $\adc$-Introduction: 18,28   \vspace{3pt}

\noindent 30. $\begin{array}{l}
 t\equals r0\add t\equals r1  \mli    \ade z (z\equals s\plus r)  \mli  |s0\plus t|\mleq \bound\adi \ade z (z\equals s0\plus t)  
\end{array}$  \ \ $\ada$-Introduction: 29   \vspace{3pt}   \vspace{3pt}

\noindent 31. $\begin{array}{l}
 t\equals r0\add t\equals r1  \mli   |s\plus r|\mleq \bound\adi \ade z (z\equals s\plus r)  \mli  |s0\plus t|\mleq \bound\adi \ade z (z\equals s0\plus t)  
\end{array}$  \ \ $\adc$-Introduction: 8,30  \vspace{3pt}

\noindent 32. $\begin{array}{l}
 t\equals r0\add t\equals r1  \mli \ada y\bigl(|s\plus y|\mleq \bound\adi \ade z (z\equals s\plus y)\bigr) \mli  |s0\plus t|\mleq \bound\adi \ade z (z\equals s0\plus t)  
\end{array}$  \ \ $\ade$-Choose: 31   \vspace{3pt}

\noindent 33. $\begin{array}{l}
\ade x(t\equals x0\add t\equals x1) \mli \ada y\bigl(|s\plus y|\mleq \bound\adi \ade z (z\equals s\plus y)\bigr) \mli  |s0\plus t|\mleq \bound\adi \ade z (z\equals s0\plus t)  
\end{array}$  \ \  $\ada$-Introduction: 32  \vspace{3pt}

\noindent 34. $\begin{array}{l}
\ade x(t\equals x0\add t\equals x1)  
\end{array}$  \ \  Axiom 12  \vspace{3pt}

\noindent 35. $\begin{array}{l}
\ada y\bigl(|s\plus y|\mleq \bound\adi \ade z (z\equals s\plus y)\bigr) \mli  |s0\plus t|\mleq \bound\adi \ade z (z\equals s0\plus t)  
\end{array}$  \ \  MP: 34,33  \vspace{3pt}

\noindent 36. $\begin{array}{l}
\ada y\bigl(|s\plus y|\mleq \bound\adi \ade z (z\equals s\plus y)\bigr) \mli \ada y\bigl(|s0\plus y|\mleq \bound\adi \ade z (z\equals s0\plus y)\bigr)  
\end{array}$  \ \  $\ada$-Introduction: 35  \vspace{3pt}

\noindent 37. $\begin{array}{l}
 t\equals r0   \mli   \gneg |s\plus r|\mleq \bound  \mli  \gneg |s1\plus t|\mleq \bound 
\end{array}$  \ \ \pa   \vspace{3pt}

\noindent 38. $\begin{array}{l}
 t\equals r1   \mli   \gneg |s\plus r|\mleq \bound  \mli  \gneg |s1\plus t|\mleq \bound 
\end{array}$  \ \ \pa   \vspace{3pt}

\noindent 39. $\begin{array}{l}
 t\equals r0\add t\equals r1  \mli   \gneg |s\plus r|\mleq \bound  \mli  \gneg |s1\plus t|\mleq \bound 
\end{array}$  \ \  $\adc$-Introduction: 37,38  \vspace{3pt}

\noindent 40. $\begin{array}{l}
 t\equals r0\add t\equals r1  \mli   \gneg |s\plus r|\mleq \bound  \mli  |s1\plus t|\mleq \bound\adi \ade z (z\equals s1\plus t)  
\end{array}$  \ \  $\add$-Choose: 39  \vspace{3pt}

 \noindent 41. $\begin{array}{l}
 \gneg |u1|\mleq \bound \mli t\equals r0   \mli    u\equals s\plus r   \mli  \gneg |s1\plus t|\mleq \bound 
\end{array}$  \ \  $\pa$  \vspace{3pt} (observation (iii))

\noindent 42. $\begin{array}{l}
 \gneg |u1|\mleq \bound \mli t\equals r0   \mli    u\equals s\plus r   \mli  |s1\plus t|\mleq \bound\adi \ade z (z\equals s1\plus t)  
\end{array}$  \ \  $\add$-Choose: 41  \vspace{3pt}

\noindent 43. $\begin{array}{l}
w\equals u1  \mli t\equals r0   \mli    u\equals s\plus r   \mli  w\equals s1\plus t   
\end{array}$  \ \  $\pa$ (observation (iii))  \vspace{3pt}

\noindent 44. $\begin{array}{l}
w\equals u1  \mli t\equals r0   \mli    u\equals s\plus r   \mli  \ade z (z\equals s1\plus t)  
\end{array}$  \ \ $\ade$-Choose: 43   \vspace{3pt}

\noindent 45. $\begin{array}{l}
\ade y (y\equals u1) \mli t\equals r0   \mli    u\equals s\plus r   \mli  \ade z (z\equals s1\plus t)  
\end{array}$  \ \ $\ada$-Introduction: 44   \vspace{3pt}

\noindent 46. $\begin{array}{l}
\ade y (y\equals u1) \mli t\equals r0   \mli    u\equals s\plus r   \mli  |s1\plus t|\mleq \bound\adi \ade z (z\equals s1\plus t)  
\end{array}$  \ \  $\add$-Choose: 45  \vspace{3pt}

\noindent 47. $\begin{array}{l}
 |u1|\mleq \bound\adi \ade y (y\equals u1) \mli t\equals r0   \mli    u\equals s\plus r   \mli  |s1\plus t|\mleq \bound\adi \ade z (z\equals s1\plus t)  
\end{array}$  \ \  $\adc$-Introduction: 42,46 \vspace{3pt}

\noindent 48. $\begin{array}{l}
 t\equals r0   \mli    u\equals s\plus r   \mli  |s1\plus t|\mleq \bound\adi \ade z (z\equals s1\plus t)  
\end{array}$  MP: 20,47\ \    \vspace{3pt}

 \noindent 49. $\begin{array}{l}
 \gneg |u1|\mleq \bound \mli t\equals r1   \mli    u\equals s\plus r   \mli  \gneg |s1\plus t|\mleq \bound 
\end{array}$  \ \  $\pa$  \vspace{3pt} (observation (iv))

\noindent 50. $\begin{array}{l}
 \gneg |u1|\mleq \bound \mli t\equals r1   \mli    u\equals s\plus r   \mli  |s1\plus t|\mleq \bound\adi \ade z (z\equals s1\plus t)  
\end{array}$  \ \  $\add$-Choose: 49  \vspace{3pt}

\noindent 51. $\begin{array}{l}
\ada x\bigl(|x\successor|\mleq\bound\adi\ade y(y\equals x\successor)\bigr) 
\end{array}$  \ \   Lemma \ref{comsuc}  \vspace{3pt}

\noindent 52. $\begin{array}{l}
|w\successor|\mleq\bound\adi\ade y(y\equals w\successor)  
\end{array}$  \ \   $\ada$-Elimination: 51  \vspace{3pt}

\noindent 53. $\begin{array}{l}
\gneg |w\successor|\mleq\bound \mli w\equals u1  \mli t\equals r1  \mli    u\equals s\plus r   \mli  \gneg |s1\plus t|\mleq \bound   
\end{array}$  \ \  \pa    \vspace{3pt} (observation (iv))

\noindent 54. $\begin{array}{l}
\gneg |w\successor|\mleq\bound \mli w\equals u1  \mli t\equals r1  \mli    u\equals s\plus r   \mli  |s1\plus t|\mleq \bound\adi \ade z (z\equals s1\plus t)  
\end{array}$  \ \  $\add$-Choose: 53   \vspace{3pt}

\noindent 55. $\begin{array}{l}
v\equals w\successor \mli w\equals u1  \mli t\equals r1  \mli    u\equals s\plus r   \mli   v\equals s1\plus t   
\end{array}$  \ \  \pa   \vspace{3pt} (observation (iv))

\noindent 56. $\begin{array}{l}
v\equals w\successor \mli w\equals u1  \mli t\equals r1  \mli    u\equals s\plus r   \mli  \ade z (z\equals s1\plus t)  
\end{array}$  \ \ $\ade$-Choose: 55    \vspace{3pt}

\noindent 57. $\begin{array}{l}
 \ade y(y\equals w\successor)\mli w\equals u1  \mli t\equals r1  \mli    u\equals s\plus r   \mli  \ade z (z\equals s1\plus t)  
\end{array}$  \ \ $\ada$-Introduction: 56    \vspace{3pt}

\noindent 58. $\begin{array}{l}
 \ade y(y\equals w\successor)\mli w\equals u1  \mli t\equals r1  \mli    u\equals s\plus r   \mli  |s1\plus t|\mleq \bound\adi \ade z (z\equals s1\plus t)  
\end{array}$  \ \   $\add$-Choose: 57  \vspace{3pt}

\noindent 59. $\begin{array}{l}
|w\successor|\mleq\bound\adi\ade y(y\equals w\successor)\mli w\equals u1  \mli t\equals r1  \mli    u\equals s\plus r   \mli  |s1\plus t|\mleq \bound\adi \ade z (z\equals s1\plus t)  
\end{array}$  \ \  $\adc$-Introduction: 54,58   \vspace{3pt}

\noindent 60. $\begin{array}{l}
w\equals u1  \mli t\equals r1  \mli    u\equals s\plus r   \mli  |s1\plus t|\mleq \bound\adi \ade z (z\equals s1\plus t)  
\end{array}$  \ \  MP: 52,59  \vspace{3pt}

\noindent 61. $\begin{array}{l}
 \ade y (y\equals u1) \mli t\equals r1  \mli    u\equals s\plus r   \mli  |s1\plus t|\mleq \bound\adi \ade z (z\equals s1\plus t)  
\end{array}$  \ \  $\ada$-Introduction: 60   \vspace{3pt}

\noindent 62. $\begin{array}{l}
|u1|\mleq \bound\adi \ade y (y\equals u1) \mli t\equals r1  \mli    u\equals s\plus r   \mli  |s1\plus t|\mleq \bound\adi \ade z (z\equals s1\plus t)  
\end{array}$  \ \ $\adc$-Introduction: 50,61   \vspace{3pt}

\noindent 63. $\begin{array}{l}
t\equals r1  \mli    u\equals s\plus r   \mli  |s1\plus t|\mleq \bound\adi \ade z (z\equals s1\plus t)  
\end{array}$  \ \ MP: 20,62   \vspace{3pt}

\noindent 64. $\begin{array}{l}
 t\equals r0\add t\equals r1  \mli    u\equals s\plus r   \mli  |s1\plus t|\mleq \bound\adi \ade z (z\equals s1\plus t)  
\end{array}$  \ \ $\adc$-Introduction: 48,63   \vspace{3pt}

\noindent 65. $\begin{array}{l}
 t\equals r0\add t\equals r1  \mli    \ade z (z\equals s\plus r)  \mli  |s1\plus t|\mleq \bound\adi \ade z (z\equals s1\plus t)  
\end{array}$  \ \ $\ada$-Introduction: 64   \vspace{3pt}   \vspace{3pt}

\noindent 66. $\begin{array}{l}
t\equals r0\add t\equals r1  \mli   |s\plus r|\mleq \bound\adi \ade z (z\equals s\plus r)  \mli   |s1\plus t|\mleq \bound\adi \ade z (z\equals s1\plus t) 
\end{array}$  \ \  $\adc$-Introduction: 40,65  \vspace{3pt}

\noindent 67. $\begin{array}{l}
t\equals r0\add t\equals r1  \mli \ada y\bigl(|s\plus y|\mleq \bound\adi \ade z (z\equals s\plus y)\bigr) \mli   |s1\plus t|\mleq \bound\adi \ade z (z\equals s1\plus t) 
\end{array}$  \ \  $\ada$-Choose: 66  \vspace{3pt}

\noindent 68. $\begin{array}{l}
\ade x(t\equals x0\add t\equals x1) \mli \ada y\bigl(|s\plus y|\mleq \bound\adi \ade z (z\equals s\plus y)\bigr) \mli   |s1\plus t|\mleq \bound\adi \ade z (z\equals s1\plus t) 
\end{array}$  \ \  $\ada$-Introduction: 67  \vspace{3pt}

\noindent 69. $\begin{array}{l}
\ada y\bigl(|s\plus y|\mleq \bound\adi \ade z (z\equals s\plus y)\bigr) \mli   |s1\plus t|\mleq \bound\adi \ade z (z\equals s1\plus t) 
\end{array}$  \ \ MP: 34,68   \vspace{3pt}

\noindent 70. $\begin{array}{l}
\ada y\bigl(|s\plus y|\mleq \bound\adi \ade z (z\equals s\plus y)\bigr) \mli   \ada y\bigl(|s1\plus y|\mleq \bound\adi \ade z (z\equals s1\plus y)\bigr)
\end{array}$  \ \ $\ada$-Introduction: 69   \vspace{3pt}

\noindent 71. $\begin{array}{l}
\ada y\bigl(|s\plus y|\mleq \bound\adi \ade z (z\equals s\plus y)\bigr) \mli \ada y\bigl(|s0\plus y|\mleq \bound\adi \ade z (z\equals s0\plus y)\bigr)  \adc \ada y\bigl(|s1\plus y|\mleq \bound\adi \ade z (z\equals s1\plus y)\bigr)
\end{array}$  

 \hspace{340pt}$\adc$-Introduction: 36,70    \vspace{3pt}

\noindent 72. $\begin{array}{l}
\ada y\bigl(|s\plus y|\mleq \bound\adi \ade z (z\equals s\plus y)\bigr)  
\end{array}$  \ \ BSI: 4,71     \vspace{3pt}

\noindent 73. $\begin{array}{l}
\ada x\ada y\bigl(|x\plus y|\mleq \bound\adi \ade z (z\equals x\plus y)\bigr)  
\end{array}$  \ \ $\ada$-Introduction: 72     \vspace{3pt}
\end{proof}

\subsection{The efficient computability of multiplication}

The following lemma is fully analogous to the lemma of the previous subsection, with the difference that this one is about multiplication instead of addition. Morally, the proof of this lemma is also very similar to the proof of its counterpart. But, as multiplication is somewhat more complex than addition, technically a formal proof here would be considerably longer than the 73-step proof of Lemma \ref{comad}, and producing it  would be no fun. For this reason, we limit ourselves to only an informal proof. As noted earlier,  sooner or later it would be necessary to abandon the luxury of generating formal proofs, anyway.  

\begin{lemma}\label{commul}
 $\arfour\vdash \ada x\ada y\bigl(|x\mult  y|\mleq \bound\adi \ade z (z\equals x\mult  y)\bigr)$.
\end{lemma}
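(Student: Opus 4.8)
The plan is to imitate the proof of Lemma \ref{comad} almost verbatim, replacing addition by multiplication and carrying out a \textbf{BSI} induction on the first factor. Concretely, I would take the formula of induction to be $G(s)=\ada y\bigl(|s\mult y|\mleq\bound\adi\ade z(z\equals s\mult y)\bigr)$, prove $G(s)$ by BSI, and then obtain the target $\ada x\ada y\bigl(|x\mult y|\mleq\bound\adi\ade z(z\equals x\mult y)\bigr)$ from $G(s)$ by a single application of $\ada$-Introduction. The arithmetic engine of the argument consists of the two $\pa$-provable identities $s0\mult t\equals(s\mult t)0$ (since $2s\mult t\equals 2(s\mult t)$) and $s1\mult t\equals(s\mult t)0\plus t$ (since $(2s\plus 1)\mult t\equals 2(s\mult t)\plus t$). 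Note that, unlike the four-way split (i)--(iv) needed for addition, here the last binary digit of the result does \emph{not} depend on the binary structure of the second factor $t$; this makes the present induction cleaner, and in particular lets me instantiate the inductive resource at the \emph{full} value $t$ rather than at a binary predecessor of it, so that no analogue of the Axiom 12-based case analysis on $t$ is required.

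For the basis $G(\zero)$, since $\pa$ proves $\zero\mult y\equals\zero$ and Axiom 8 lets us name $\zero$, I resolve $|\zero\mult y|\mleq\bound\adi\ade z(z\equals\zero\mult y)$ by choosing the right $\adi$-component and specifying $z$ as $\zero$; abstracting $y$ by $\ada$-Introduction yields $G(\zero)$. For the inductive step $G(s)\mli G(s0)\adc G(s1)$, I (bottom-up) wait via $\adc$-Introduction for the environment to pick one of the two $\adc$-conjuncts of the consequent and then, via $\ada$-Introduction, for it to pick a value $t$ for $y$. I then feed $t$ to the resource $G(s)$ (an $\ade$-Choose applied to the antecedental $\ada$), bringing it down to $|s\mult t|\mleq\bound\adi\ade z(z\equals s\mult t)$, and perform $\add$-Elimination on this choice implication. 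If the left disjunct $\gneg|s\mult t|\mleq\bound$ is supplied, then $\pa$ certifies $\gneg|s0\mult t|\mleq\bound$ and $\gneg|s1\mult t|\mleq\bound$ (doubling, and doubling-plus-$t$, cannot shrink a numeral below the bound), so I choose the left $\adi$-component of the consequent; otherwise I obtain a concrete $u$ with $u\equals s\mult t$.

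Given such a $u$, the $G(s0)$-conjunct is handled by Lemma \ref{combzs}: instantiating it at $u$ gives $|u0|\mleq\bound\adi\ade y(y\equals u0)$, and since $u0\equals s0\mult t$ (by the first identity), an $\add$-Elimination on this again either propagates $\gneg|s0\mult t|\mleq\bound$ to the left component or produces the value of $s0\mult t$ to be named in the right component. The $G(s1)$-conjunct is the composite case: I first use Lemma \ref{combzs} to get $u0$, and then use Lemma \ref{comad} at $u0$ and $t$ to get $u0\plus t\equals s1\mult t$, chaining the two choice implications by a nested $\add$-Elimination. Crucially, no unary-successor step (Lemma \ref{comsuc}) is needed here, in contrast to observation (iv) of the addition proof. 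The two conjuncts are finally merged by $\adc$-Introduction, and BSI together with the basis discharges $G(s)$.

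The main obstacle is purely bookkeeping: every invocation of Lemma \ref{combzs} or Lemma \ref{comad} is itself a choice implication, so the $s1$ branch threads several $\adi$-resources in series, and at each stage I must do an $\add$-Elimination on the question of whether the intermediate quantity overflows $\bound$. Correctness then hinges on the monotonicity facts --- verified once and for all inside $\pa$ --- that whenever any intermediate value ($s\mult t$, or $u0$, or $u0\plus t$) exceeds the bound in size, so does the final result $s0\mult t$ or $s1\mult t$; this guarantees that the left $\gneg|\cdots|\mleq\bound$ component is always a legitimate choice. None of this is conceptually hard, but the resulting formal derivation would be appreciably longer than the 73-step proof of Lemma \ref{comad}, which is why an informal argument is all that is warranted.
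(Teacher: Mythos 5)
Your proof is correct, but the inductive step takes a genuinely different (and leaner) route than the paper's. The paper deliberately mirrors its proof of Lemma \ref{comad}: after the environment fixes $t$, it invokes Axiom 12 to extract the binary predecessor $r$ of $t$ together with $t$'s parity, instantiates the inductive resource at $r$ rather than at $t$, and then works through a four-way case split governed by the identities $s0\mult r0\equals(s\mult r)00$, $s0\mult r1\equals(s\mult r)00\plus s0$, $s1\mult r0\equals(s\mult r)00\plus r0$ and $s1\mult r1\equals(s\mult r)00\plus(s\plus r)1$, which forces it to apply Lemma \ref{combzs} twice (for the double shift $(s\mult r)00$) and, in the last case, to call on Lemma \ref{combos} as well. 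You instead exploit the observation that the recursion $s0\mult t\equals(s\mult t)0$, $s1\mult t\equals(s\mult t)0\plus t$ needs only the binary structure of the \emph{first} factor, so you may feed the full $t$ to the antecedental $\ada y$ (legitimate, since $t$ was itself chosen under a $\ada^{\bound}$ in the consequent and the $\ade$-Choose is on a fresh variable), dispensing with Axiom 12, with the parity analysis of $t$, and with Lemmas \ref{combos} and \ref{comsuc} entirely; only Lemmas \ref{combzs} and \ref{comad} remain, and your overflow-monotonicity facts ($|s\mult t|\mgreater\bound$ or $|(s\mult t)0|\mgreater\bound$ or $|(s\mult t)0\plus t|\mgreater\bound$ each force $|s0\mult t|\mgreater\bound$ resp.\ $|s1\mult t|\mgreater\bound$) are indeed $\pa$-provable and license the left $\adi$-component at every stage. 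What the paper's version buys is only uniformity with the addition proof; yours buys a strictly shorter chain of $\adi$-resources and fewer auxiliary lemmas, at no cost in correctness.
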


\begin{proof} By BSI induction on $s$, we want to prove \(\ada y\bigl(|s\mult  y|\mleq \bound\adi \ade z (z\equals s\mult  y)\bigr),\) from which the target formula follows by $\ada$-Introduction. 

The {\em basis}
\begin{equation}\label{n24a} 
\ada y\bigl(|\zero\mult  y|\mleq \bound\adi \ade z (z\equals \zero\mult  y)\bigr)
\end{equation}
of induction is simple: for whatever $y$, since $\zero\equals \zero \mult  y$,  the problem $|\zero\mult  y|\mleq \bound\adi \ade z (z\equals \zero\mult  y)$   is resolved by choosing the right $\adi$-component and specifying  $z$ as the value of $\zero$.  Our ability to produce such a value is guaranteed by Axiom 8.

The {\em inductive step} is 
\begin{equation}\label{n24b} 
\ada y\bigl(|s\mult  y|\mleq \bound\adi \ade z (z\equals s\mult  y)\bigr)\mli \ada y\bigl(|s0\mult  y|\mleq \bound\adi \ade z (z\equals s0\mult  y)\bigr)\adc\ada y\bigl(|s1\mult  y|\mleq \bound\adi \ade z (z\equals s1\mult  y)\bigr).
\end{equation} 

In justifying it, we rely on the following facts --- call them ``{\em observations}'' for subsequent references --- provable in {\bf PA}:
\begin{description}
\item[(i)] $s0\mult  r0\equals (s\mult  r)00$, \ \ because $2s\mult 2r\equals 4(s\mult r)$;
\item[(ii)] $s0\mult  r1\equals (s\mult  r)00\plus s0$,  \ \ because $2s\mult (2r\plus 1)\equals 4(s\mult r)\plus 2s$;
\item[(iii)] $s1\mult  r0\equals (s\mult  r)00\plus r0$, \ \ because $(2s\plus 1)\mult 2r\equals 4(s\mult r)\plus 2r$;
\item[(iv)] $s1\mult  r1\equals (s\mult  r)00\plus (s\plus r)1$, \ \ because $(2s\plus 1)\mult (2r\plus 1)\equals 4(s\mult r)\plus \bigl(2(s\plus r)\plus 1\bigr)$.
\end{description} 

In resolving (\ref{n24b}), at  the beginning we wait till the environment selects one of the two $\adc$-conjuncts in the consequent, and also a value $t$ for $y$ there. What we see as a ``beginning'' here is, in fact, the end of the proof of (\ref{n24b}) for, as pointed out in Section \ref{sss16}, such proofs correspond to  winning strategies only when they are read bottom-up. And, as we know,  the steps corresponding to selecting a $\adc$-conjunct and selecting $t$ for $y$ are (bottom-up) $\adc$-Introduction and $\ada$-Introduction. Then, using Axiom 12, we   find the binary predecessor $r$ of $t$. Furthermore, the same axiom will simultaneously allow us to tell whether $t\equals r0$ or $t\equals r1$. We immediately specify (bottom-up $\ade$-Choose) $y$ as $r$ in the antecedent of (\ref{n24b}). We thus have the following four possibilities to consider now, depending on whether the left or the right $\adc$-conjunct was selected in the consequent of (\ref{n24b}), and whether $t\equals r0$ or $t\equals r1$. In each case we will have a different problem to resolve.\vspace{5pt}

{\em Case 1}: The problem to resolve (essentially) is   
\begin{equation}\label{c1} 
\ |s\mult  r|\mleq \bound\adi \ade z (z\equals s\mult  r) \mli  |s0\mult  r0|\mleq \bound\adi \ade z (z\equals s0\mult  r0)  .
\end{equation} 
Pretending for a while --- for simplicity --- that no values that we are going to deal with have sizes exceeding $\bound$, here is our strategy.  Using the resource provided by the antecedent of (\ref{c1}), we find the  product $w$ of $s$ and $r$. Then, using the resource provided by Lemma \ref{combzs} (which, unlike the resource provided by the antecedent of (\ref{c1}), comes in an unlimited supply) twice, we find the value $v$ of $w00$, i.e. of $(s\times r)00$. In view of observation (i), that very $v$ will be (equal to) $s0\mult r0$, so (\ref{c1}) can be resolved by choosing the right $\adi$-component in its consequent and specifying $z$ as $v$. 

The above, however, was a simplified scenario. In a complete scenario without ``cheating'', what may happen is that, while using the antecedent of 
(\ref{c1}) in computing $s\times r$, or while --- after that --- using Lemma \ref{combzs} in (first) computing $(s\times r)0$ and (then) $(s\times r)00$, we   discover that the size of the to-be-computed value exceeds $\bound$ and hence the corresponding resource (the antecedent of (\ref{c1}), or Lemma \ref{combzs}) does not really allow us to compute that value. Such a corresponding resource, however, does allow us to tell that the size of the sought value has exceeded $\bound$. And, in that case, (\ref{c1}) is resolved by choosing the left component $\gneg |s0\mult  r0|\mleq \bound$ of its consequent.\vspace{5pt}  

{\em Case 2}: The problem to resolve is   
\begin{equation}\label{c2} 
\ |s\mult  r|\mleq \bound\adi \ade z (z\equals s\mult  r) \mli  |s0\mult  r1|\mleq \bound\adi \ade z (z\equals s0\mult  r1)  .
\end{equation} 
Here and in the remaining cases, as was done in the first paragraph of Case 1, we will continue pretending that  no values that we  deal with have sizes exceeding $\bound$. Violations of this simplifying assumption  will be handled in the way explained in the second paragraph of Case 1.

Here, we fist compute (the value of) $(s\times r)00$ exactly as we did in Case 1. Exploiting Lemma \ref{combzs} one more time, we also compute $s0$. Using these values, we then employ Lemma \ref{comad} to compute $(s\times r)00\plus s0$, and use the computed value to specify $z$ in the consequent of (\ref{c2}) (after first choosing the right $\adi$-component there, of course). Observation (ii) guarantees success.\vspace{5pt}

{\em Case 3}: The problem to resolve is   
\begin{equation}\label{c3} 
\ |s\mult  r|\mleq \bound\adi \ade z (z\equals s\mult  r) \mli  |s1\mult  r0|\mleq \bound\adi \ade z (z\equals s1\mult  r0)  .
\end{equation} 

This case is very similar to the previous one, with the only difference that Lemma \ref{combzs} will be used to compute $r0$ rather than $s0$, and the success of the strategy will be guaranteed by observation (iii) rather than (ii).\vspace{5pt} 

{\em Case 4}: The problem to resolve is   
\begin{equation}\label{c4} 
\ |s\mult  r|\mleq \bound\adi \ade z (z\equals s\mult  r) \mli  |s1\mult  r1|\mleq \bound\adi \ade z (z\equals s1\mult  r1)  .
\end{equation} 
First, we  compute  $(s\times r)00$ exactly as  in Case 1. Using Lemma \ref{comad}, we also compute $s\plus r$ and then, using Lemma \ref{combos}, compute $(s\plus r)1$. With the values of $(s\times r)00$ and $(s\plus r)1$ now known, Lemma \ref{comad} allows us to compute the value of $(s\times r)00\plus (s\plus r)1$. Finally, using the resulting value to specify $z$ in the consequent of (\ref{c4}), we achieve success. It is guaranteed by observation (iv).  
\end{proof}

\subsection{The efficient computability of all explicitly polynomial functions} By ``explicitly polynomial functions'' in the title of this subsection we mean functions represented by terms of the language of $\arfour$. Such functions are ``explicitly polynomial'' because they, along with variables, are only allowed to use $\zero$, $\successor$, $\plus$ and $\times$.

\begin{lemma}\label{com}
For any\footnote{In view of Convention \ref{jan26}, it is implicitly assumed here that $\tau$ does not contain $z$, for otherwise the formula would have both bound and free occurrences of $z$. Similarly, since $z$ is quantified, it cannot be $\bound$.} term $\tau$, 
 $\arfour\vdash  |\tau|\mleq \bound\adi \ade z (z\equals \tau) $.
\end{lemma}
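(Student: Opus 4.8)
The plan is to proceed by induction on the structure of the term $\tau$, exploiting that the already-established Lemmas \ref{comsuc}, \ref{comad} and \ref{commul} provide exactly the efficient, overflow-detecting computability of the three function symbols $\successor,\plus,\mult$ from which a term can be built. The $\adi$ structure of the goal is what makes this compositional: each subterm computation either returns a value or certifies that the value's size exceeds $\bound$, and these two outcomes propagate cleanly through the function symbols.

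For the base cases, $\tau$ is either a variable or $\zero$. If $\tau$ is an ordinary (non-$\bound$) variable $s$, then $|s|\mleq\bound$ holds by Axiom 13, so I choose the right $\adi$-disjunct and specify $z$ as $s$ (via the logical axiom $s\equals s$ and $\ade$-Choose). If $\tau$ is $\zero$, then $\pa$ (together with Lemma \ref{zer}, giving $\bound\notequals\zero$) proves $|\zero|\mleq\bound$, and Axiom 8 supplies the witness $z\equals\zero$. The one genuinely problematic base case is $\tau=\bound$: here $|\bound|\mleq\bound$ is true, so I am forced to choose the right disjunct and must actually produce the value of $\bound$, i.e. prove $\arfour\vdash\ade z(z\equals\bound)$. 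To do this I would first establish, by an auxiliary WPTI+ induction (Theorem \ref{na}) on a fresh variable $y$, the fact $\arfour\vdash y\mleq\bound\mli y\notequals\zero\mli\ade x(|x|\equals y)$, asserting that a number of any prescribed nonzero size up to $\bound$ can be found. The basis produces $\zero$ (of size $\zero\successor$) via Axiom 8, and the inductive step takes a number $x$ of size $y$ and forms its binary $0$-successor $x0$ (of size $y\successor$) using Lemma \ref{combzs}; crucially, the WPTI+ side condition $y\mless\bound$ yields $|x0|\mleq\bound$, so Lemma \ref{combzs} genuinely applies. Instantiating this at $y\equals\bound$ (using $\bound\mleq\bound$ from $\pa$ and $\bound\notequals\zero$ from Lemma \ref{zer}) gives a number $x$ with $|x|\equals\bound$, and feeding that $x$ into Lemma \ref{comlen} yields $\ade z(z\equals|x|)$, i.e. $\ade z(z\equals\bound)$. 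This computation runs in time linear in the value of $\bound$, hence polynomial in the sense of Definition \ref{deftc}.

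For the inductive step, each composite term is handled by consulting the induction hypotheses for its immediate subterms and then the matching lemma. For $\tau=\theta\successor$ (respectively $\theta_1\plus\theta_2$) I consult the hypothesis for $\theta$ (respectively for $\theta_1$ and $\theta_2$); if any subterm reports size exceeding $\bound$, then $\pa$ proves the whole term's size also exceeds $\bound$ by monotonicity of $\successor$ and $\plus$, so I choose the left $\adi$-disjunct; otherwise I have the subterm values in hand and apply Lemma \ref{comsuc} (respectively Lemma \ref{comad}) to them, again splitting on that lemma's overflow/value outcome. These case splits are assembled by $\add$-Elimination, $\adc$-Introduction, $\ade$-Choose and Modus Ponens, exactly in the style of the proof of Lemma \ref{comad}.

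The main obstacle in the inductive step is multiplication, $\tau=\theta_1\mult\theta_2$: here monotonicity fails, since a zero factor makes the product small even when the other factor overflows, so an overflowing subterm no longer licenses choosing the left disjunct. Accordingly, when a factor, say $\theta_1$, is reported to overflow ($|\theta_1|\mgreater\bound$, whence $\theta_1\notequals\zero$), I additionally use the hypothesis for $\theta_2$ together with Axiom 9 to decide whether $\theta_2\equals\zero$: if so the product is $\zero$ (output via Axiom 8 in the right disjunct), and if not $\pa$ proves $|\theta_1\mult\theta_2|\mgeq|\theta_1|\mgreater\bound$ (left disjunct). When both factor values are available I simply invoke Lemma \ref{commul} and split on its outcome. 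Collecting the base and inductive cases then completes the induction, with the $\bound$-leaf construction being the one place where genuinely new work (beyond mechanically composing earlier lemmas) is required.
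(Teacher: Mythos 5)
Your overall structure coincides with the paper's: induction on the complexity of $\tau$, base cases for variables and $\zero$, and inductive cases that consult the hypotheses for the immediate subterms and then invoke Lemmas \ref{comsuc}, \ref{comad} and \ref{commul}, with monotonicity of $\successor$ and $\plus$ handling overflow propagation and with the zero-factor case analysis via Axiom 9 repairing the failure of monotonicity for $\mult$. All of that is exactly Cases 1--5 of the paper's proof.

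The one place where you diverge is the treatment of $\tau=\bound$, and there your detour is both unnecessary and, as sketched, broken. It is unnecessary because the paper's Case 1 already covers \emph{every} variable, including $\bound$: the $\ade$-Choose rule only requires the \emph{quantified} variable $x$ to be different from $\bound$, while the witness variable $s$ may be any variable with no bound occurrences in the premise, so $\ade z(z\equals\bound)$ follows from the logical axiom $\bound\equals\bound$ in one $\ade$-Choose step. Semantically this is sound because bounded valuations satisfy $|e(x)|\mleq e(\bound)$ for every variable $x$, $\bound$ included, so the machine may simply read $e(\bound)$ off its valuation tape and use it as the choice for $z$; no computation of a number of size $\bound$ is needed. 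It is broken because your auxiliary size-building induction starts from $\zero$ and increments the size by forming the binary $0$-successor, but the binary $0$-successor of $\zero$ is $\zero$ itself (the paper explicitly notes that $0$ and $1$ are exceptions to the ``append a digit'' reading), so the size never leaves $1$ at the first step; you would have to seed the induction with $\zero\successor$ (Lemma \ref{zersuc}) instead. Since the entire sub-argument can be deleted in favor of the paper's one-line treatment, the proof as a whole is easily repaired, but as written the $\bound$ leaf does not go through.
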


\begin{proof} We prove this lemma by (meta)induction on the complexity of $\tau$. The following Cases 1 and 2 comprise the basis of this induction, and   Cases 3-5 the inductive step.\vspace{10pt}

{\em Case 1}: $\tau$ is   a variable $t$.  In this case the formula $|\tau|\mleq \bound\adi \ade z (z\equals \tau)$, i.e. $|t|\mleq \bound\adi \ade z (z\equals t)$, immediately follows from the logical axiom $t\equals t$ by $\ade$-Choose and then $\add$-Choose.\vspace{10pt}

{\em Case 2}: $\tau$ is   $\zero$.  Then  $|\zero|\mleq \bound\adi \ade z (z\equals \zero)$  follows in a single step from Axiom 8 by   $\add$-Choose.\vspace{10pt}

{\em Case 3}: $\tau$ is $ \theta\successor $ for some term $\theta$. By the induction hypothesis, $\arfour$ proves  
\begin{equation}\label{d1}
 |\theta|\mleq \bound\adi \ade z (z\equals \theta). 
\end{equation}
Our goal is to establish the $\arfour$-provability of $ |{\theta}\successor |\mleq \bound\adi \ade z (z\equals \theta\successor ) $, which is done as follows:\vspace{9pt}

\noindent 1. $\begin{array}{l} 
 \ada x\bigl(|x\successor|\mleq\bound\adi\ade y(y\equals x\successor)\bigr) 
\end{array}$  \ \  Lemma \ref{comsuc}  \vspace{3pt}

\noindent 2. $\begin{array}{l} 
 \gneg |\theta|\mleq \bound   \mli  \gneg |\theta\successor |\mleq \bound  
\end{array}$  \ \ \pa  \vspace{3pt}

\noindent 3. $\begin{array}{l} 
 \gneg |\theta|\mleq \bound \mlc \ada x\bigl(|x\successor|\mleq\bound\adi\ade y(y\equals x\successor)\bigr)\mli  \gneg |\theta\successor |\mleq \bound\ 
\end{array}$  \ \  Weakening: 2 \vspace{3pt}

\noindent 4. $\begin{array}{l} 
 \gneg |\theta|\mleq \bound \mlc \ada x\bigl(|x\successor|\mleq\bound\adi\ade y(y\equals x\successor)\bigr)\mli  |\theta\successor |\mleq \bound\adi \ade z (z\equals \theta\successor )
\end{array}$  \ \ $\add$-Choose: 3   \vspace{3pt}

\noindent 5. $\begin{array}{l} 
s\equals \theta \mlc  \gneg |s\successor|\mleq\bound  \mli  \gneg |\theta\successor |\mleq \bound 
\end{array}$  \ \ Logical axiom   \vspace{3pt}

\noindent 6. $\begin{array}{l} 
s\equals \theta \mlc  \gneg |s\successor|\mleq\bound  \mli  |\theta\successor |\mleq \bound\adi \ade z (z\equals \theta\successor )
\end{array}$  \ \  $\add$-Choose: 5  \vspace{3pt}

\noindent 7. $\begin{array}{l} 
s\equals \theta \mlc  t\equals s\successor  \mli    t\equals \theta\successor  
\end{array}$  \ \  Logical axiom  \vspace{3pt}

\noindent 8. $\begin{array}{l} 
s\equals \theta \mlc  t\equals s\successor  \mli    \ade z (z\equals \theta\successor )
\end{array}$  \ \  $\ade$-Choose: 7  \vspace{3pt}

\noindent 9. $\begin{array}{l} 
s\equals \theta \mlc   \ade y(y\equals s\successor) \mli    \ade z (z\equals \theta\successor )
\end{array}$  \ \  $\ada$-Introduction: 8  \vspace{3pt}

\noindent 10. $\begin{array}{l} 
s\equals \theta \mlc   \ade y(y\equals s\successor) \mli  |\theta\successor |\mleq \bound\adi \ade z (z\equals \theta\successor )
\end{array}$  \ \ $\add$-Choose: 9  \vspace{3pt}   

\noindent 11. $\begin{array}{l} 
s\equals \theta \mlc  \bigl(|s\successor|\mleq\bound\adi\ade y(y\equals s\successor)\bigr) \mli  |\theta\successor |\mleq \bound\adi \ade z (z\equals \theta\successor )
\end{array}$  \ \ $\adc$-Introduction: 6,10   \vspace{3pt}

\noindent 12. $\begin{array}{l} 
s\equals \theta \mlc \ada x\bigl(|x\successor|\mleq\bound\adi\ade y(y\equals x\successor)\bigr)\mli  |\theta\successor |\mleq \bound\adi \ade z (z\equals \theta\successor )
\end{array}$  \ \  $\ade$-Choose: 11  \vspace{3pt}

\noindent 13. $\begin{array}{l} 
\ade z (z\equals \theta)\mlc \ada x\bigl(|x\successor|\mleq\bound\adi\ade y(y\equals x\successor)\bigr)\mli  |\theta\successor |\mleq \bound\adi \ade z (z\equals \theta\successor )
\end{array}$  \ \  $\ada$-Introduction: 12  \vspace{3pt}

\noindent 14. $\begin{array}{l} 
 \bigl(|\theta|\mleq \bound\adi \ade z (z\equals \theta)\bigr)\mlc \ada x\bigl(|x\successor|\mleq\bound\adi\ade y(y\equals x\successor)\bigr)\mli  |\theta\successor |\mleq \bound\adi \ade z (z\equals \theta\successor )
\end{array}$  \ \  $\adc$-Introduction: 4,13  \vspace{3pt}

\noindent 15. $\begin{array}{l} 
 |{\theta}\successor |\mleq \bound\adi \ade z (z\equals \theta\successor )
\end{array}$  \ \ MP: (\ref{d1}),1,14   \vspace{10pt}
 
{\em Case 4}: $\tau$ is $ \theta_{1}\plus\theta_2 $ for some terms $\theta_1$ and $\theta_2$. By the induction hypothesis, $\arfour$ proves  both of the following formulas:
\begin{equation}\label{d4a}
 |\theta_1 |\mleq \bound\adi \ade z (z\equals \theta_1 ); 
\end{equation}
\begin{equation}\label{d4b}
 |\theta_2 |\mleq \bound\adi \ade z (z\equals \theta_2 ). 
\end{equation}
Our goal is to establish the $\arfour$-provability of 
\( |\theta_1 \plus\theta_2|\mleq \bound\adi \ade z (z\equals \theta_1\plus\theta_2 ),\)
which is done as follows:\vspace{9pt}

\noindent 1. $\begin{array}{l} 
\ada x\ada y\bigl(|x\plus y|\mleq\bound\adi\ade z(z\equals x\plus y)\bigr)
\end{array}$  \ \ Lemma \ref{comad}   \vspace{3pt}

\noindent 2. $\begin{array}{l} 
\gneg  |\theta_1|\mleq \bound  \mli  
 \gneg |\theta_1\plus\theta_2|\mleq \bound 
\end{array}$  \ \ \pa   \vspace{3pt}

\noindent 3. $\begin{array}{l} 
\gneg  |\theta_1|\mleq \bound \mli  
|\theta_1\plus\theta_2|\mleq \bound\adi \ade z (z\equals \theta_1\plus\theta_2 )
\end{array}$  \ \  $\add$-Choose:  2  \vspace{3pt}

\noindent 4. $\begin{array}{l} 
\gneg  |\theta_1|\mleq \bound \mlc \bigl( |\theta_2|\mleq \bound\adi \ade z (z\equals \theta_2)\bigr)\mlc \ada x\ada y\bigl(|x\plus y|\mleq\bound\adi\ade z(z\equals x\plus y)\bigr)\\\mli   
 |\theta_1\plus\theta_2|\mleq \bound\adi \ade z (z\equals \theta_1\plus\theta_2 )
\end{array}$  
\ \ \ Weakenings: 3\vspace{3pt}

\noindent 5. $\begin{array}{l} 
 \gneg |\theta_2|\mleq \bound  \mli  
 \gneg |\theta_1\plus\theta_2|\mleq \bound 
\end{array}$  \ \ \pa \vspace{3pt}

\noindent 6. $\begin{array}{l} 
 \gneg |\theta_2|\mleq \bound \mli  
 |\theta_1\plus\theta_2|\mleq \bound\adi \ade z (z\equals \theta_1\plus\theta_2 )
\end{array}$  \ \ $\add$-Choose: 5    \vspace{3pt}

\noindent 7. $\begin{array}{l} 
 \ade z (z\equals \theta_1)\mlc  \gneg |\theta_2|\mleq \bound \mlc \ada x\ada y\bigl(|x\plus y|\mleq\bound\adi\ade z(z\equals x\plus y)\bigr)\mli  
 |\theta_1\plus\theta_2|\mleq \bound\adi \ade z (z\equals \theta_1\plus\theta_2 )
\end{array}$  \  \mbox{Weakenings: 6}\vspace{3pt} 

\noindent 8. $\begin{array}{l} 
t_1\equals \theta_1\mlc  t_2\equals \theta_2\mlc  \gneg |t_1\plus t_2|\mleq\bound \mli  
\gneg  |\theta_1\plus\theta_2|\mleq \bound 
\end{array}$  \ \ Logical axiom \vspace{3pt}

\noindent 9. $\begin{array}{l} 
t_1\equals \theta_1\mlc  t_2\equals \theta_2\mlc  \gneg |t_1\plus t_2|\mleq\bound \mli  
 |\theta_1\plus\theta_2|\mleq \bound\adi \ade z (z\equals \theta_1\plus\theta_2 )
\end{array}$  \ \ $\add$-Choose: 8 \vspace{3pt}

\noindent 10. $\begin{array}{l} 
t_1\equals \theta_1\mlc  t_2\equals \theta_2\mlc   t\equals t_1\plus t_2 \mli  
t\equals \theta_1\plus\theta_2 
\end{array}$  \ \ Logical axiom \vspace{3pt}

\noindent 11. $\begin{array}{l} 
t_1\equals \theta_1\mlc  t_2\equals \theta_2\mlc   t\equals t_1\plus t_2 \mli  
 \ade z (z\equals \theta_1\plus\theta_2 )
\end{array}$  \ \ $\ade$-Choose: 10 \vspace{3pt}

\noindent 12. $\begin{array}{l} 
t_1\equals \theta_1\mlc  t_2\equals \theta_2\mlc   \ade z(z\equals t_1\plus t_2)\mli  
 \ade z (z\equals \theta_1\plus\theta_2 )
\end{array}$  \ \ $\ada$-Introduction: 11 \vspace{3pt}

\noindent 13. $\begin{array}{l} 
t_1\equals \theta_1\mlc  t_2\equals \theta_2\mlc   \ade z(z\equals t_1\plus t_2)\mli  
 |\theta_1\plus\theta_2|\mleq \bound\adi \ade z (z\equals \theta_1\plus\theta_2 )
\end{array}$  \ \ $\add$-Choose: 12 \vspace{3pt}

\noindent 14. $\begin{array}{l} 
t_1\equals \theta_1\mlc  t_2\equals \theta_2\mlc  \bigl(|t_1\plus t_2|\mleq\bound\adi\ade z(z\equals t_1\plus t_2)\bigr)\mli  
 |\theta_1\plus\theta_2|\mleq \bound\adi \ade z (z\equals \theta_1\plus\theta_2 )
\end{array}$  \ \ $\adc$-Introduction: 9,13 \vspace{3pt}

\noindent 15. $\begin{array}{l} 
t_1\equals \theta_1\mlc  t_2\equals \theta_2\mlc \ada x\ada y\bigl(|x\plus y|\mleq\bound\adi\ade z(z\equals x\plus y)\bigr)\mli  
 |\theta_1\plus\theta_2|\mleq \bound\adi \ade z (z\equals \theta_1\plus\theta_2 )
\end{array}$  \ \ $\ade$-Chooses: 14 \vspace{3pt}

\noindent 16. $\begin{array}{l} 
 \ade z (z\equals \theta_1)\mlc   \ade z (z\equals \theta_2)\mlc \ada x\ada y\bigl(|x\plus y|\mleq\bound\adi\ade z(z\equals x\plus y)\bigr)\\ \mli  
 |\theta_1\plus\theta_2|\mleq \bound\adi \ade z (z\equals \theta_1\plus\theta_2 )
\end{array}$  
\ \ \ $\ada$-Introductions: 15\vspace{3pt}

\noindent 17. $\begin{array}{l} 
 \ade z (z\equals \theta_1)\mlc  \bigl(|\theta_2|\mleq \bound\adi \ade z (z\equals \theta_2)\bigr)\mlc \ada x\ada y\bigl(|x\plus y|\mleq\bound\adi\ade z(z\equals x\plus y)\bigr)\\ \mli   
 |\theta_1\plus\theta_2|\mleq \bound\adi \ade z (z\equals \theta_1\plus\theta_2 )
\end{array}$  
\ \ \ $\adc$-Introduction: 7,16 \vspace{3pt}

\noindent 18. $\begin{array}{l} 
\bigl( |\theta_1|\mleq \bound\adi \ade z (z\equals \theta_1)\bigr)\mlc \bigl( |\theta_2|\mleq \bound\adi \ade z (z\equals \theta_2)\bigr)\mlc\\ \ada x\ada y\bigl(|x\plus y|\mleq\bound\adi\ade z(z\equals x\plus y)\bigr)\mli  
 |\theta_1\plus\theta_2|\mleq \bound\adi \ade z (z\equals \theta_1\plus\theta_2 )
\end{array}$  
\ \ \ $\adc$-Introduction: 4,17\vspace{3pt}

\noindent 19. $\begin{array}{l} 
|\theta_1 \plus\theta_2|\mleq \bound\adi \ade z (z\equals \theta_1\plus\theta_2 ) 
\end{array}$  \ \  MP: (\ref{d4a}),(\ref{d4b}),1,18  \vspace{9pt}

{\em Case 5}: $\tau$ is $ \theta_{1}\mult\theta_2 $ for some terms $\theta_1$ and $\theta_2$. Here we only outline a  proof/solution  for the target 
\( |\theta_1 \mult\theta_2|\mleq \bound\adi \ade z (z\equals \theta_1\mult\theta_2 )\). 
Using the induction hypothesis (\ref{d4a}) and Axiom 9,\footnote{Strictly speaking, Axiom 13 or Lemma \ref{zer} will also be needed here to be sure that, if the size of $\theta_1$ exceeds $\bound$, then $\theta_1\notequals \zero$.} we figure our whether $\theta_1\equals \zero$ or not. If $\theta_1\equals\zero$, then $\theta_1\mult\theta_2$ is also $\zero$, and we solve the target  by choosing its right component  $\ade z (z\equals \theta_1\mult\theta_2 )$ and then naming the value of $\zero$ (which is found using Axiom 8) for $z$. Suppose now $\theta_1\notequals\zero$. Then we do for $\theta_2$ the same as what we  did for $\theta_1$, and figure out whether  $\theta_2\equals\zero$ or $\theta_2\notequals\zero$. If $\theta_2\equals\zero$, we solve the target as we did in the case $\theta_1\equals\zero$. Suppose now $\theta_2$, just like $\theta_1$, does not equal to $\zero$. Note that then the proof given in  Case 4 goes through for our present case virtually without any changes, only with ``$\mult$'' instead of ``$\plus$'' and ``Lemma \ref{commul}'' instead of ``Lemma \ref{comad}''. Indeed, the only steps of that proof that would be generally incorrect for $\mult$ instead of $\plus$ are steps 2 and 5. Namely, the formula of step 2 is false when $\theta_2\equals \zero$, and the formula of step 5 is false when $\theta_1\equals \zero$. But, in the case that we are considering, these possibilities have been handled separately and by now are already ruled out.   
 \end{proof}

  \subsection{The efficient computability of subtraction}
The formula of the following lemma, as a computational problem, is about finding the difference $z$  between any two numbers $x$ and $y$ and then telling whether this difference is $x\mminus y$ or $y\mminus x$.  

\begin{lemma}\label{minus}
 $\arfour\vdash \ada x\ada y\ade z(x\equals y\plus z\add y\equals x\plus z)$.  
\end{lemma}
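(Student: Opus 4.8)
The plan is to prove the inner difference-finding problem $\ada y\ade z(s\equals y\plus z\add y\equals s\plus z)$ for a free variable $s$ by binary-successor induction, and then read off the target by one final $\ada$-Introduction that binds $s$ as $x$. Since the witness $z$ is the difference of two numbers each of size $\mleq\bound$, we always have $|z|\mleq\bound$, so a legal $\ade z$-move exists; however, in the inductive step the numbers $s0,s1$ can momentarily have size $\bound\plus 1$, so the correct rule is BSI+ (Theorem \ref{nb}) rather than plain BSI, taking $F(s)\equals\ada y\ade z(s\equals y\plus z\add y\equals s\plus z)$ as the formula of induction. The basis $F(\zero)$ is immediate: given any $y$, choose the right $\add$-disjunct $y\equals\zero\plus z$ and, by $\ade$-Choose, set $z$ to (the value of) $y$, which is correct because $\pa\vdash y\equals\zero\plus y$.

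For the inductive step I would establish $|s0|\mleq\bound\mlc F(s)\mli F(s0)\adc F(s1)$. After the environment picks one of the two $\adc$-conjuncts of the consequent and a value $t$ for its $\ada y$, I apply Axiom 12 to find the binary predecessor $r$ of $t$ and to learn whether $t\equals r0$ or $t\equals r1$; the last digit of the left-hand side ($s0$ or $s1$) is fixed by which conjunct was chosen. The signed difference of the binary predecessors is supplied by the single available copy of the resource $F(s)$, instantiated at an appropriate point, giving a $d$ with $s\equals r\plus d$ or $r\equals s\plus d$. Using the $\pa$-identities $s0\equals s\plus s$ and $s1\equals(s\plus s)\successor$ (exactly in the spirit of observations (i)--(iv) in the proof of Lemma \ref{comad}), the new difference $s0\mminus t$ or $s1\mminus t$ is expressed through $d$: when the two sides have equal parity it is $2d\equals d0$, computable by Lemma \ref{combzs}; when their parities differ it is of the form $2d\plus 1\equals d1$, computable by Lemma \ref{combos}. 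In each branch I then choose the correct $\add$-disjunct of the consequent according to the reported sign and name the computed value for $z$, with $\pa$ discharging all numeric verifications.

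The genuine obstacle is the parity-mismatch branch in which the sign forces the magnitude into the shape $2d\mminus 1$ (for instance, proving $F(s0)$ when $t\equals r1$ and $s\mgreater r$): this number is odd, namely $(d\mminus 1)1$, and so cannot be produced from $d$ alone by Lemmas \ref{combzs}/\ref{combos} without a predecessor, while only one copy of the resource is on hand. The clean way around this is to decide the sign of $s$ versus $r$ \emph{before} touching the subtraction-resource --- by a preliminary comparison $\ada y(s\mleq y\add s\mgreater y)$, itself provable by the same style of BSI induction using the equality Lemma \ref{nov18a} (such a comparison carries no magnitude and hence no borrow blow-up), together with Axiom 9 and Lemma \ref{zersuc} to separate the boundary value $d\equals\zero$ (where $z$ is the constant $\zero\successor$). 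Knowing the sign lets me instantiate the resource at the \emph{parity-matched} shift point: at $y\equals r$ when $s\mleq r$ (yielding $r\equals s\plus d$ and $z\equals d1$), and at $y\equals r\successor$ when $s\mgreater r$, where $r\successor$ is obtained from Lemma \ref{comsuc} within the size bound, yielding $s\equals r\successor\plus e$ and hence $2d\mminus 1\equals 2e\plus 1\equals e1$, again directly producible by Lemma \ref{combos} with no predecessor required. With basis and step in hand, BSI+ gives $F(s)$, and $\ada$-Introduction then yields $\arfour\vdash\ada x\ada y\ade z(x\equals y\plus z\add y\equals x\plus z)$.
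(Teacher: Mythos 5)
Your proposal is correct and follows the paper's proof almost exactly: the same reduction to $\ada y\ade z(s\equals y\plus z\add y\equals s\plus z)$ by $\ada$-Introduction, the same BSI+ induction with the same basis, and the same inductive step driven by Axiom 12 and the doubling identities, with the signed difference of the binary predecessors drawn from the single antecedental copy of $F(s)$. The one point where you diverge is the subcase you rightly flag as the obstacle --- the parity-mismatch branch where the new difference has the form $2d\mminus 1$. The paper resolves it by invoking the efficient computability of the unary predecessor, $\ada x\bigl(x\notequals\zero\adi\ade y(x\equals y\successor)\bigr)$, which it leaves as an exercise; you instead decide the order of $s$ and $r$ first (via an auxiliary comparison $\ada y(s\mleq y\add s\mgreater y)$, provable by the same kind of BSI induction from Lemma \ref{nov18a}) and then instantiate the resource at $r\successor$ rather than $r$, so that the difference comes out as $e1$ and Lemma \ref{combos} suffices with no predecessor. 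Both resolutions work; yours trades the paper's predecessor exercise for a comparison exercise, and has the mild advantage that every auxiliary fact it needs is of a kind already established in Section \ref{s17}. One small point to keep in mind: after shifting the instantiation point to $r\successor$, the environment may still answer with either $\add$-disjunct of the resource when $s\equals r\successor$, so your strategy must read off the correct value of $z$ in both answers --- a routine case split that $\pa$ indeed discharges.
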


\begin{proof} As we did in the case of Lemma \ref{commul}, showing a proof idea or sketch instead of  a detailed formal proof would be sufficient here. By BSI+ induction on $s$, we want to prove \(\ada y\ade z(s\equals y\plus z\add y\equals s\plus z),\) from which the target formula follows by $\ada$-Introduction. 

The {\em basis}
\begin{equation}\label{dn24a} 
\ada y\ade z(\zero\equals y\plus z\add y\equals \zero\plus z)
\end{equation}
of induction is proven as follows:\vspace{9pt}

\noindent 1. $\begin{array}{l} 
t\equals \zero\plus t
\end{array}$ \ \  $\pa$\vspace{3pt}

\noindent 2. $\begin{array}{l} 
\zero\equals t\plus t\add t\equals \zero\plus t
\end{array}$ \ \  $\add$-Choose: 1 \vspace{3pt}

\noindent 3. $\begin{array}{l} 
\ade z(\zero\equals t\plus z\add t\equals \zero\plus z)
\end{array}$ \ \  $\ade$-Choose: 2 \vspace{3pt}

\noindent 4. $\begin{array}{l} 
\ada y\ade z(\zero\equals y\plus z\add y\equals \zero\plus z)
\end{array}$ \ \  $\ada$-Introduction: 3 \vspace{9pt}

The {\em inductive step} is 
\begin{equation}\label{dn24b} 
|s0|\mleq\bound\mlc \ada y\ade z(s\equals y\plus z\add y\equals s\plus z)\mli \ada y\ade z(s0\equals y\plus z\add y\equals s0\plus z)\adc \ada y\ade z(s1\equals y\plus z\add y\equals s1\plus z).
\end{equation} 
To prove (\ref{dn24b}), it would be sufficient to prove the following two formulas, from which (\ref{dn24b}) follows by $\adc$-Introduction:
\begin{equation}\label{dn24c} 
|s0|\mleq\bound\mlc \ada y\ade z(s\equals y\plus z\add y\equals s\plus z)\mli \ada y\ade z(s0\equals y\plus z\add y\equals s0\plus z) ;
\end{equation}
\begin{equation}\label{dn24d} 
|s0|\mleq\bound\mlc \ada y\ade z(s\equals y\plus z\add y\equals s\plus z)\mli   \ada y\ade z(s1\equals y\plus z\add y\equals s1\plus z).
\end{equation}
Let us focus on (\ref{dn24c}) only, as the case with (\ref{dn24d}) is similar. (\ref{dn24c}) follows from the following formula by $\ada$-Introduction:
\begin{equation}\label{dn24e} 
|s0|\mleq\bound\mlc \ada y\ade z(s\equals y\plus z\add y\equals s\plus z)\mli  \ade z(s0\equals t\plus z\add t\equals s0\plus z).
\end{equation}
A strategy for the above, which can eventually be translated into a bottom-up $\arfour$-proof, is the following. Using Axiom 12, we find the binary predecessor $r$ of $t$, and also determine whether $t\equals r0$ or $t\equals r1$. 

Consider the case of $t\equals r0$.  Solving (\ref{dn24e}) in this case essentially means solving 
\begin{equation}\label{dn24eeee} 
|s0|\mleq\bound\mlc \ada y\ade z(s\equals y\plus z\add y\equals s\plus z)\mli  \ade z(s0\equals r0\plus z\add r0\equals s0\plus z).
\end{equation}
 We can solve the above by using the second conjunct of the antecedent (specifying $y$ as $r$ in it) to find a $w$ such that $s\equals r\plus w$ or $r\equals s\plus w$, with ``or'' here being a choice one, meaning that we will actually know which of the two alternatives is the case. Let us say the case is $s\equals r\plus w$ (with the other case being similar). From $\pa$ we know that, if $s\equals r\plus w$, then $s0\equals r0\plus w0$. So, in order to solve the consequent of (\ref{dn24eeee}), it would be sufficient to specify $z$ as the value $u$ of $w0$, and then choose the left $\add$-disjunct   $s0\equals r0\plus u$ of the resulting formula. Such a $u$ can be computed using Axiom 11: $|w0|\mleq\bound\mli\ade x(x\equals w0)$, whose antecedent is true because, according to the first conjunct of the antecedent of (\ref{dn24eeee}), the size of $s0$ --- and hence of $w0$ --- does not exceed $\bound$.

The remaining case of $t\equals r1$ is similar, but it 
additionally requires proving $\ada x\bigl(x\notequals\zero\adi\ade y (x\equals y\successor)\bigr)$ (the efficient computability of unary predecessor), doing which is left as an exercise for the reader.
\end{proof}

 \subsection{The efficient computability of ``$x$'s $y$th  bit''}
For a natural numbers $n$ and $i$ --- as always identified with the corresponding binary numerals --- we will write $(n)_i\equals 0$ for a formula saying that $|n|\mgreater i $ and bit $\# i$ of $n$ is $0$. Similarly for $(n)_i\equals 1$. In either case the count of the bits of $n$ starts from $0$ rather than $1$, and proceeds from left to right rather than (as more common in the literature) from right to left. So, for instance, if $n=100$, then $1$ is its bit $\#0$, and the $0$s are its bits $\#1$ and $\#2$.

\begin{lemma}\label{combit}
 $\arfour\vdash \ada x\ada y \bigl(|x|\mgreater y\adi (x)_y\equals 0\add (x)_y\equals 1\bigr) $.  
\end{lemma}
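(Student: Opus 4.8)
The plan is to prove, by \textbf{BSI} (Theorem \ref{nov9b}) induction on $s$, the single-variable formula
\[F(s)\ =\ \ada y\bigl(|s|\mgreater y\adi (s)_y\equals 0\add (s)_y\equals 1\bigr),\]
from which the target $\ada x\ada y\bigl(|x|\mgreater y\adi (x)_y\equals 0\add (x)_y\equals 1\bigr)$ follows in one step by $\ada$-Introduction. The intuition driving the induction is that, since bit positions are counted from the left, passing from $s$ to $s0$ or $s1$ creates a brand-new lowest-significance bit at the highest index $|s|$ while leaving every lower-indexed bit unchanged. Concretely, for $s\notequals\zero$ the following are provable in $\pa$: $(s0)_t\equals(s)_t$ and $(s1)_t\equals(s)_t$ whenever $t\mless|s|$; $(s0)_{|s|}\equals 0$ and $(s1)_{|s|}\equals 1$ together with $|s0|\mgreater t$ and $|s1|\mgreater t$ at $t\equals|s|$; and $\gneg|s0|\mgreater t$ (resp. $\gneg|s1|\mgreater t$) whenever $t\mgreater|s|$.

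For the \emph{basis} $F(\zero)$, after a (bottom-up) $\ada$-Introduction fixing a generic value $t$ for $y$, I would invoke Axiom 9 to decide $t\equals\zero$ versus $t\notequals\zero$ and then apply $\add$-Elimination. Since $|\zero|\equals\zero\successor$, $\pa$ gives that $|\zero|\mgreater t$ holds exactly when $t\equals\zero$, in which case $(\zero)_t\equals 0$; so in the first case we choose the right $\adi$-component and then its left disjunct, and in the second case (where $t\notequals\zero$, hence $|\zero|\mleq t$) we choose the left $\adi$-component $\gneg|\zero|\mgreater t$.

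The \emph{inductive step} requires deriving $F(s)\mli F(s0)\adc F(s1)$. I would first apply Axiom 9 to $s$ and split, via $\add$-Elimination, on $s\equals\zero$ versus $s\notequals\zero$. In the degenerate case $s\equals\zero$ the conclusion is, up to the $\pa$-provable identities $s0\equals\zero$ and $s1\equals\zero\successor$, just $F(\zero)\adc F(\zero\successor)$, both conjuncts being small concrete instances resolved exactly as in the basis (using $|\zero|\equals|\zero\successor|\equals\zero\successor$, $(\zero)_0\equals 0$, $(\zero\successor)_0\equals 1$); the replacement of $s0,s1$ by $\zero,\zero\successor$ is discharged by $\clfour$-Instantiation of the schema $P(f)\mlc f\equals g\mli P(g)$ and Modus Ponens. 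In the principal case $s\notequals\zero$ I would prove the two conjuncts $F(s0)$ and $F(s1)$ separately and combine them by $\adc$-Introduction. For $F(s0)$, after $\ada$-Introduction fixing $t$ for $y$, I feed the single available copy of the antecedent resource $F(s)$ the value $t$ (bottom-up $\ada$-Elimination, realized by Transitivity with the $\clfour$-provable $\ada y R(y)\mli R(t)$), obtaining $|s|\mgreater t\adi (s)_t\equals 0\add (s)_t\equals 1$, and then case-split on its outcome by $\adc$-Introduction: if the resource returns an actual bit of $s$ (so $|s|\mgreater t$), then $\pa$ yields $|s0|\mgreater t$ and $(s0)_t\equals(s)_t$, letting me emit the same bit via $\add$-Choose; if it returns $\gneg|s|\mgreater t$ (so $|s|\mleq t$), I compute $w\equals|s|$ by Lemma \ref{comlen} and decide $t\equals w$ versus $t\notequals w$ by Lemma \ref{nov18a}, resolving the new-bit position $t\equals|s|$ by emitting $0$ and the out-of-range position $t\mgreater|s|$ by choosing $\gneg|s0|\mgreater t$. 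The derivation of $F(s1)$ is identical except that the new bit has value $1$. Since the whole step queries $F(s)$ at exactly one position $t$, the single-copy discipline imposed by BSI is respected; BSI then delivers $F(s)$, and $\ada$-Introduction delivers the target.

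The main obstacle is the bookkeeping of the principal subcase of the inductive step: threading the three-way classification of $t$ (below $|s|$, equal to $|s|$, above $|s|$) correctly through the outcome of the consumed resource and the comparison lemmas, and discharging the numerous $\pa$ side-conditions about bit positions and sizes listed in the first paragraph; together with the separate small-case analysis forced by the exceptional behaviour $s0\equals\zero$ when $s\equals\zero$. None of these steps is conceptually deep, but, as with Lemma \ref{comad}, the fully formalized proof would be long, so I would present the argument informally in the style sanctioned by Section \ref{sss16}, relying on the understanding that each problem-solving move translates into a bottom-up application of the rules named above.
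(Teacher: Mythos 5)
Your proposal follows essentially the same route as the paper's proof: BSI on $s$ with the induction formula $\ada y\bigl(|s|\mgreater y\adi (s)_y\equals 0\add (s)_y\equals 1\bigr)$, the basis resolved via Axiom 9, and the inductive step resolved by consuming the single copy of the antecedent resource at the queried position and using Lemmas \ref{comlen} and \ref{nov18a} to separate the new-bit position $y\equals |s|$ from the out-of-range positions. The only divergence is your explicit case split on $s\equals\zero$ in the inductive step, which the paper's informal strategy omits; this is a harmless (indeed prudent) refinement of the same argument rather than a different approach.
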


\begin{proof} We limit ourselves to providing an informal argument within $\arfour$. The target formula follows by $\ada$-Introduction from
$\ada y \bigl(|s|\mgreater y\adi (s)_y\equals 0\add (s)_y\equals 1\bigr) $, and the latter we prove by BSI.  

The basis of induction is 
\begin{equation}\label{f13a}
\ada y \bigl(|\zero|\mgreater y\adi (\zero)_y\equals 0\add (\zero)_y\equals 1\bigr).
\end{equation}
Solving it is easy. Given any $y$, using Axiom 9, figure out whether $y\equals \zero$ or $y\notequals\zero$. If $y\equals \zero$, then resolve (\ref{f13a}) by choosing $(\zero)_y\equals 0$ in it. Otherwise, choose $\gneg |\zero|\mgreater y$.

The inductive step is 
\begin{equation}\label{f13b}
\begin{array}{l}
\ada y \bigl(|s|\mgreater y\adi (s)_y\equals 0\add (s)_y\equals 1\bigr)\mli \\
\ada y \bigl(|s0|\mgreater y\adi (s0)_y\equals 0\add (s0)_y\equals 1\bigr)\adc \ada y \bigl(|s1|\mgreater y\adi (s1)_y\equals 0\add (s1)_y\equals 1\bigr) .
\end{array}\end{equation}
Solving it is not hard, either. It means solving the following two problems,  from which (\ref{f13b}) follows by first applying $\ade$-Choose, then $\ada$-Introduction and then $\adc$-Introduction:
\begin{equation}\label{f13c}
\begin{array}{l}
|s|\mgreater r \adi (s)_r\equals 0\add (s)_r\equals 1 \mli 
|s0|\mgreater r\adi (s0)_r\equals 0\add (s0)_r\equals 1 ;
\end{array}\end{equation}
\begin{equation}\label{f13d}
\begin{array}{l}
 |s|\mgreater r \adi (s)_r\equals 0\add (s)_r\equals 1 \mli 
|s1|\mgreater r\adi (s1)_r\equals 0\add (s1)_r\equals 1 .
\end{array}\end{equation}

To solve (\ref{f13c}), wait till the environment selects one of the three $\add$-disjuncts in the antecedent. If $(s)_r\equals 0$ is selected, then select $(s0)_r\equals 0$ in the consequent and you are done. Similarly, if $(s)_r\equals 1$ is selected, then select $(s0)_r\equals 1$ in the consequent. Suppose now
$\gneg |s|\mgreater r $ is selected.  In this case, using Lemma \ref{comlen}, find the value of $|s|$ and then, using Lemma \ref{nov18a}, figure out whether 
$|s|\equals r$ or $|s|\notequals r$. If  $|s|\equals r$, then select  $(s0)_r\equals 0$ in the consequent of (\ref{f13c}); otherwise, if $|s|\notequals r$, select  
$\gneg |s0|\mgreater r$ there. 

The problem  (\ref{f13d}) is solved in a similar way, with the difference that, where in the previous case we selected $(s0)_r\equals 0$, now $ (s1)_r\equals 1 $ should be selected.
\end{proof}

\section{Two more induction rules}
This section establishes the closure of $\arfour$ under  two additional variations of the PTI and WPTI rules. These variations are not optimal as they could be made  stronger,\footnote{For instance,  the rule of Lemma \ref{noct17e} can be easily strengthened by weakening the consequent of its right premise to the more PTI-style 
$E(t\successor)\adc (F(t\successor)\mlc E(t) ) $, and/or strengthening the antecedent of that premise by adding the conjuncts $R$ and  $w\mleq t\mless\tau$ (on the additional condition that $t$ does not occur in $R$).} nor are they natural enough to deserve special names. But these two rules, exactly in their present forms, will be relied upon later in Section \ref{s19}. Thus, the present section is a purely technical one, and a less technically-minded reader may want to omit the proofs of its results.  
 
\begin{lemma}\label{noct17e}
The following rule is admissible in $\arfour$:
\[\frac{R\mli E( w )\mlc F( w )\hspace{20pt}|t\successor|\mleq\bound\mlc E(t)\mlc F(t)\mli E(t)\mlc\bigl(E(t\successor)\adc F(t\successor)\bigr)}{R\mlc w \mleq t\mleq \tau\mli E(t)\mlc F(t)},\]
where $R$ is any elementary formula, $w$ is any variable, $t$ is any variable other than $\bound$,   $\tau$ is any $\bound$-term, $E(t),F(t)$ are any formulas,  $E( w )$ (resp. $E(t\successor)$) is the result of replacing in $E(t)$ all free occurrences of $t$ by $ w $ (resp. $t\successor$), and similarly for $F( w ),F(t\successor)$. 
\end{lemma}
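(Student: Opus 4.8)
The plan is to reduce this rule to the already-established PTI by an appropriate choice of the induction formulas, in the same spirit as the reductions carried out in the proofs of Theorems \ref{nov9a}, \ref{na}, \ref{nb} and \ref{nov99}. Specifically, I would take the PTI induction formulas to be
\[E'(t)\ =\ R\mlc w\mleq t\mli E(t)\qquad\mbox{and}\qquad F'(t)\ =\ R\mlc w\mleq t\mli F(t),\]
so that the desired conclusion $R\mlc w\mleq t\mleq\tau\mli E(t)\mlc F(t)$ follows from PTI's conclusion $t\mleq\tau\mli E'(t)\mlc F'(t)$ by a routine $\clfour$-Instantiation step (repackaging the nested implications using a $\clfour$-provable propositional scheme) together with Modus Ponens. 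The real work is to verify that the two premises of PTI for $E',F'$ are $\arfour$-provable given the two premises of our rule.

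First I would handle the basis of PTI, namely $E'(\zero)\mlc F'(\zero)$, i.e. $\bigl(R\mlc w\mleq\zero\mli E(\zero)\bigr)\mlc\bigl(R\mlc w\mleq\zero\mli F(\zero)\bigr)$. This is delicate because the given basis premise $R\mli E(w)\mlc F(w)$ is stated in terms of the variable $w$, not $\zero$. The key arithmetical fact, provable in $\pa$, is $w\mleq\zero\mli w\equals\zero$, which lets me identify $E(w)$ with $E(\zero)$ and $F(w)$ with $F(\zero)$ under the hypothesis $w\mleq\zero$; substituting equals into the formulas $E,F$ is handled by the $\clfour$-provable substitution scheme (\ref{sch1}). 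Combining this with the given premise by $\clfour$-Instantiation and Modus Ponens yields the basis. Next, for the inductive step of PTI, I need
\[E'(t)\mlc F'(t)\mli E'(t\successor)\adc\bigl(F'(t\successor)\mlc E'(t)\bigr),\]
which I would derive from the given inductive-step premise $|t\successor|\mleq\bound\mlc E(t)\mlc F(t)\mli E(t)\mlc\bigl(E(t\successor)\adc F(t\successor)\bigr)$. Here the guards $R\mlc w\mleq t$ versus $R\mlc w\mleq t\successor$ must be reconciled: $\pa$ proves $w\mleq t\mli w\mleq t\successor$, so the antecedental resources can supply the consequential guards. The side condition $|t\successor|\mleq\bound$ is available because, under PTI, the relevant $t$ satisfy $t\mleq\tau$, and one discharges $|t\successor|\mleq\bound$ by the same $\add$-Elimination-on-$\gneg|t\successor|\mleq\bound\add(\ldots)$ technique used in the proof of Theorem \ref{nov99} (relying on Lemma \ref{com} and Lemma \ref{minus}), so that the cases where the size bound fails are won trivially.

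The main obstacle, I expect, is the careful bookkeeping of the $\adc$-versus-$\mlc$ distinction in the consequent of the inductive step, combined with the mismatch between the given premise's consequent $E(t)\mlc\bigl(E(t\successor)\adc F(t\successor)\bigr)$ and the shape $E'(t\successor)\adc\bigl(F'(t\successor)\mlc E'(t)\bigr)$ demanded by PTI. The recyclable copy of $E(t)$ furnished on the left of the given consequent is exactly what is needed to reconstitute the $E'(t)$ conjunct that PTI requires on the right, so the matching should go through, but exhibiting the precise $\clfour$-provable propositional scheme that effects this rearrangement — pushing the guards $R\mlc w\mleq t$ inside the $\adc$ and recombining — will require a nontrivial purely syntactic $\clfour$ derivation analogous to the long propositional lemmas (\ref{jan13b}), (\ref{nov19b}) in the proof of Theorem \ref{nov99}. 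Once that scheme is isolated and verified by Match and Wait, the remainder is assembled by $\clfour$-Instantiation, Modus Ponens and an application of PTI, exactly as in the earlier admissibility proofs.
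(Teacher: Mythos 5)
Your reduction to PTI is the right general strategy (and is in fact how the paper proceeds), but your choice of induction formulas $E'(t)=R\mlc w\mleq t\mli E(t)$ and $F'(t)=R\mlc w\mleq t\mli F(t)$ leaves the inductive step of PTI unprovable, for two reasons. First, the overflow case: PTI's inductive-step premise carries no guard $t\mleq\tau$ (only the \emph{conclusion} of PTI does), so you must prove $E'(t)\mlc F'(t)\mli E'(t\successor)\adc\bigl(F'(t\successor)\mlc E'(t)\bigr)$ for every $t$, including those with $|t\successor|\mgreater\bound$. In that case the given premise of the rule is unusable (its conjunct $|t\successor|\mleq\bound$ fails), yet $E'(t\successor)=R\mlc w\mleq t\successor\mli E(t\successor)$ is \emph{not} vacuous --- by Axiom 13 we have $|w|\mleq\bound$, hence $w\mless t\successor$ whenever $|t\successor|\mgreater\bound$, so its guard is satisfied --- and there is no way to produce $E(t\successor)$. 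The induction formulas need a size guard $|t|\mleq\bound$ (becoming $|t\successor|\mleq\bound$ in the successor instance) precisely so that this case becomes vacuous; your appeal to ``the relevant $t$ satisfy $t\mleq\tau$'' is not available inside the PTI premise. Second, your guard reconciliation runs in the wrong direction: to unlock the antecedental resources $E(t)\mlc F(t)$ you must \emph{derive} $w\mleq t$ from the consequent's assumption $w\mleq t\successor$, and $w\mleq t\successor\mli w\mleq t$ fails when $w\equals t\successor$. That boundary case must be detected constructively (via Lemmas \ref{nov18a} and \ref{com}, as the paper does in the companion Lemma \ref{noct17ee}) and discharged from the basis premise $R\mli E(w)\mlc F(w)$ rather than from the inductive one; the implication $w\mleq t\mli w\mleq t\successor$ that you cite does not address this.

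The paper's own proof sidesteps both problems by inducting on a fresh offset variable $s$ rather than on $t$ itself, taking $\tilde{E}(s)=R\mlc|w\plus s|\mleq\bound\mli E(w\plus s)$ and $\tilde{F}(s)=R\mlc|w\plus s|\mleq\bound\mli F(w\plus s)$: the basis $s=\zero$ is then literally the given basis premise (since $w\plus\zero\equals w$), the guards reconcile in the correct direction because $|w\plus s\successor|\mleq\bound$ implies $|w\plus s|\mleq\bound$, so no boundary case arises, and the overflow case is vacuous. The price is a more elaborate final step: recovering $R\mlc w\mleq t\mleq\tau\mli E(t)\mlc F(t)$ from $s\mleq\tau\mli\tilde{E}(s)\mlc\tilde{F}(s)$ requires Lemma \ref{minus} to constructively write $t$ as $w\plus s$ or detect $w\mgreater t$ (in which case the conclusion is vacuous). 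Your sketch could be repaired by adding the missing size guard and the equality test, but as written the inductive step fails.
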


\begin{idea} We manage reduce this rule to PTI  by taking $R\mlc | w \plus s|\mleq\bound \mli E( w \plus s)$ and $R\mlc | w \plus s|\mleq\bound \mli F( w \plus s)$ in the roles of the formulas $E(s)$ and $F(s)$ of the latter.
\end{idea}

\begin{proof} Assume all conditions of the rule, and assume its premises   are provable, i.e., 
\begin{equation}\label{j14a}
\arfour\vdash R\mli E( w )\mlc F( w );
\end{equation}
\begin{equation}\label{j14b}
\arfour\vdash |t\successor|\mleq\bound\mlc E(t)\mlc F(t)\mli E(t)\mlc\bigl(E(t\successor)\adc F(t\successor)\bigr).
\end{equation}
Our goal is to show that $\arfour\vdash R\mlc w \mleq t\mleq \tau\mli E(t)\mlc F(t)$.

Let us agree on the following abbreviations:
\[\tilde{E}(s)\ = \ R\mlc | w \plus s|\mleq\bound \mli E( w \plus s); \hspace{25pt} \tilde{F}(s)\ = \ R\mlc | w \plus s|\mleq\bound \mli F( w \plus s).\]

As easily seen, we have
\[\clfour \vdash f\equals w\mlc \bigl(p\mli P(w)\mlc Q(w)\bigr)\mli \bigl(p\mlc q\mli P(f)\bigr)\mlc \bigl(p\mlc q\mli Q(f)\bigr) \]
and hence, by $\clfour$-Instantiation,  
\[\arfour \vdash  w \plus\zero\equals  w \mlc \bigl(R\mli E( w )\mlc F( w )\bigr)\mli \bigl(R\mlc | w \plus \zero|\mleq\bound\mli E( w \plus\zero)\bigr)\mlc \bigl(R\mlc | w \plus \zero|\mleq\bound\mli F( w \plus\zero)\bigr). \]
The above, together with (\ref{j14a}) and the obvious fact $\pa\vdash  w \plus\zero\equals  w $, by Modus Ponens, yields 
\[\arfour\vdash \bigl(R\mlc | w \plus \zero|\mleq\bound \mli E( w \plus\zero)\bigr)\mlc\bigl(R\mlc | w \plus \zero|\mleq\bound \mli F( w \plus\zero)\bigr),\]
i.e., using our abbreviations, 
\begin{equation}\label{j14c}
\arfour\vdash \tilde{E}(\zero)\mlc \tilde{F}(\zero). 
\end{equation}

With a little effort, the following can be seen to be a valid formula of classical logic: 
\[ \begin{array}{l} 
\bigl(|t\successor|\mleq\bound\mlc p_1(t)\mlc q_1(t)\mli p_2(t)\mlc q_2(t\successor) \bigr)\mli \\
t\equals  w \plus s  \mli 
 | w \plus s\successor|\mleq\bound \mlc | w \plus s|\mleq\bound \mlc  ( w \plus s)\successor \equals w\plus s\successor   \mli \\
 \bigl(R\mlc | w \plus s|\mleq\bound \mli  p_1( w \plus s)\bigr)\mlc \bigl(R\mlc | w \plus s|\mleq\bound \mli q_1( w \plus s)\bigr) \mli\\
 \bigl(R\mlc | w \plus s|\mleq\bound \mli p_2( w \plus s)\bigr)\mlc \bigl(R\mlc | w \plus s\successor |\mleq\bound \mli q_2( w \plus s\successor)\bigr). 
\end{array} \]
Applying Match four times to the above formula, we find that $\clfour$ proves
\begin{equation}\label{j14f}
\begin{array}{l} 
 \bigl(|t\successor|\mleq\bound\mlc P_1(t)\mlc Q_1(t)\mli P_2(t)\mlc Q_2(t\successor) \bigr)\mli \\
t\equals  w \plus s  \mli 
 | w \plus s\successor|\mleq\bound \mlc | w \plus s|\mleq\bound \mlc ( w \plus s)\successor \equals w\plus s\successor   \mli \\
 \bigl(R\mlc | w \plus s|\mleq\bound \mli  P_1( w \plus s)\bigr)\mlc \bigl(R\mlc | w \plus s|\mleq\bound \mli Q_1( w \plus s)\bigr) \mli\\
 \bigl(R\mlc | w \plus s|\mleq\bound \mli P_2( w \plus s)\bigr)\mlc \bigl(R\mlc | w \plus s\successor |\mleq\bound \mli Q_2( w \plus s\successor)\bigr). 
\end{array}
\end{equation}

Now we claim that 
\begin{equation}\label{j14d}
\begin{array}{l}
\arfour\vdash s\mleq\tau\mli \tilde{E}(s)\mlc\tilde{F}(s) .
\end{array}
\end{equation}
Below comes a justification of this claim:\vspace{7pt}

\noindent 1. $\begin{array}{l} 
\gneg | w \plus s\successor |\mleq\bound\add \ade z(z\equals  w \plus s\successor )
\end{array}$ \ \  Lemma \ref{com} \vspace{3pt}

\noindent 2. $\begin{array}{l} 
 \gneg | w \plus s\successor  | \mleq\bound \mli \tilde{E}(s)\mlc\tilde{F}(s) \mli  \tilde{E}(s)\mlc\Bigl(\bigl(R\mlc | w \plus s\successor|\mleq\bound \mli E( w \plus s\successor)\bigr)\adc \bigl(R\mlc | w \plus s\successor|\mleq\bound \mli F( w \plus s\successor)\bigr)\Bigr)
\end{array}$ \ \  

\hspace{95pt} $\clfour$-Instantiation, instance of  $ \gneg p \mli P \mlc Q_1 \mli P \mlc\bigl( (q\mlc p \mli Q_2 )\adc  (q\mlc p \mli Q_3 )\bigr)$\vspace{3pt}

\noindent 3. $\begin{array}{l} 
 \gneg | w \plus s\successor  | \mleq\bound \mli \tilde{E}(s)\mlc\tilde{F}(s) \mli  \tilde{E}(s)\mlc \bigl(\tilde{E}(s\successor) \adc  \tilde{F}(s\successor)\bigr) 
\end{array}$ \ \  abbreviating 2 \vspace{3pt}

\noindent 4. $\begin{array}{l} 
|v|\mleq \bound   
\end{array}$ \ \ Axiom 13 \vspace{3pt}

\noindent 5. $\begin{array}{l} 
|v|\mleq \bound\mli    v\equals  w \plus s\successor  \mli | w \plus s\successor|\mleq\bound \mlc | w \plus s|\mleq\bound \mlc ( w \plus s)\successor \equals w\plus s\successor 
\end{array}$ \ \ $\pa$ \vspace{3pt}

\noindent 6. $\begin{array}{l} 
 v\equals  w \plus s\successor  \mli | w \plus s\successor|\mleq\bound \mlc | w \plus s|\mleq\bound \mlc ( w \plus s)\successor \equals w\plus s\successor   
\end{array}$ \ \ MP: 4,5 \vspace{3pt}

\noindent 7. $\begin{array}{l} 
\gneg | w \plus s  |\mleq\bound\add \ade z(z\equals  w \plus s  )
\end{array}$ \ \  Lemma \ref{com} \vspace{3pt}

\noindent 8. $\begin{array}{l} 
\gneg | w \plus s  |\mleq\bound \mli  | w \plus s\successor|\mleq\bound \mlc | w \plus s|\mleq\bound\mlc ( w \plus s)\successor \equals w\plus s\successor    \mli   \tilde{E}(s)\mlc\tilde{F}(s) \mli \tilde{E}(s)\mlc\bigl(\tilde{E}(s\successor)\adc \tilde{F}(s\successor)\bigr)
\end{array}$ \ \  

\hspace{210pt}$\clfour$-Instantiation, instance of $\gneg p\mli q_1\mlc p\mlc q_2\mli Q$  \vspace{3pt}

\noindent 9. $\begin{array}{l} 
\bigl(|t\successor|\mleq\bound\mlc E(t)\mlc F(t)\mli E(t)\mlc E(t\successor) \bigr)\mli \\
t\equals  w \plus s  \mli 
 | w \plus s\successor|\mleq\bound \mlc | w \plus s|\mleq\bound \mlc ( w \plus s)\successor \equals w\plus s\successor    \mli \\
 \bigl(R\mlc | w \plus s|\mleq\bound \mli  E( w \plus s)\bigr)\mlc \bigl(R\mlc | w \plus s|\mleq\bound \mli F( w \plus s)\bigr) \mli\\
 \bigl(R\mlc | w \plus s|\mleq\bound \mli E( w \plus s)\bigr)\mlc \bigl(R\mlc | w \plus s\successor |\mleq\bound \mli E( w \plus s\successor)\bigr)
\end{array}$ \ \    $\clfour$-Instantiation, instance of (\ref{j14f})  \vspace{3pt}

\noindent 10. $\begin{array}{l} 
\bigl(|t\successor|\mleq\bound\mlc E(t)\mlc F(t)\mli E(t)\mlc E(t\successor) \bigr)\mli\\
t\equals  w \plus s  \mli 
 | w \plus s\successor|\mleq\bound \mlc | w \plus s|\mleq\bound \mlc ( w \plus s)\successor \equals w\plus s\successor    \mli   \tilde{E}(s)\mlc\tilde{F}(s) \mli \tilde{E}(s)\mlc  \tilde{E}(s\successor)
\end{array}$ \ \   abbreviating 9  \vspace{3pt}

\noindent 11. $\begin{array}{l} 
\Bigl(|t\successor|\mleq\bound\mlc E(t)\mlc F(t)\mli E(t)\mlc\bigl(E(t\successor)\adc F(t\successor)\bigr)\Bigr)\mli\\
t\equals  w \plus s  \mli 
 | w \plus s\successor|\mleq\bound \mlc | w \plus s|\mleq\bound \mlc ( w \plus s)\successor \equals w\plus s\successor   \mli   \tilde{E}(s)\mlc\tilde{F}(s) \mli \tilde{E}(s)\mlc  \tilde{E}(s\successor)
\end{array}$ \ \  $\add$-Choose: 10   \vspace{3pt}

\noindent 12. $\begin{array}{l} 
\bigl(|t\successor|\mleq\bound\mlc E(t)\mlc F(t)\mli E(t)\mlc F(t\successor) \bigr)\mli \\
t\equals  w \plus s  \mli 
 | w \plus s\successor|\mleq\bound \mlc | w \plus s|\mleq\bound \mlc ( w \plus s)\successor \equals w\plus s\successor    \mli \\
 \bigl(R\mlc | w \plus s|\mleq\bound \mli  E( w \plus s)\bigr)\mlc \bigl(R\mlc | w \plus s|\mleq\bound \mli F( w \plus s)\bigr) \mli\\
 \bigl(R\mlc | w \plus s|\mleq\bound \mli E( w \plus s)\bigr)\mlc \bigl(R\mlc | w \plus s\successor |\mleq\bound \mli F( w \plus s\successor)\bigr)
\end{array}$ \ \   $\clfour$-Instantiation, instance of (\ref{j14f})  \vspace{3pt}

\noindent 13. $\begin{array}{l} 
\bigl(|t\successor|\mleq\bound\mlc E(t)\mlc F(t)\mli E(t)\mlc F(t\successor) \bigr)\mli\\
t\equals  w \plus s  \mli 
 | w \plus s\successor|\mleq\bound \mlc | w \plus s|\mleq\bound \mlc ( w \plus s)\successor \equals w\plus s\successor    \mli   \tilde{E}(s)\mlc\tilde{F}(s) \mli \tilde{E}(s)\mlc  \tilde{F}(s\successor)
\end{array}$ \ \   abbreviating 12  \vspace{3pt}

\noindent 14. $\begin{array}{l} 
\Bigl(|t\successor|\mleq\bound\mlc E(t)\mlc F(t)\mli E(t)\mlc\bigl(E(t\successor)\adc F(t\successor)\bigr)\Bigr)\mli\\
t\equals  w \plus s  \mli 
 | w \plus s\successor|\mleq\bound \mlc | w \plus s|\mleq\bound \mlc ( w \plus s)\successor \equals w\plus s\successor   \mli   \tilde{E}(s)\mlc\tilde{F}(s) \mli \tilde{E}(s)\mlc  \tilde{F}(s\successor)
\end{array}$ \ \  $\add$-Choose: 10   \vspace{3pt}

\noindent 15. $\begin{array}{l} 
\Bigl(|t\successor|\mleq\bound\mlc E(t)\mlc F(t)\mli E(t)\mlc\bigl(E(t\successor)\adc F(t\successor)\bigr)\Bigr)\mli\\
t\equals  w \plus s  \mli 
 | w \plus s\successor|\mleq\bound \mlc | w \plus s|\mleq\bound \mlc ( w \plus s)\successor \equals w\plus s\successor   \mli\\   \tilde{E}(s)\mlc\tilde{F}(s) \mli \tilde{E}(s)\mlc\bigl(\tilde{E}(s\successor)\adc \tilde{F}(s\successor)\bigr)
\end{array}$ \ \  $\adc$-Introduction: 11,14   \vspace{3pt}

\noindent 16. $\begin{array}{l} 
t\equals  w \plus s  \mli 
 | w \plus s\successor|\mleq\bound \mlc | w \plus s|\mleq\bound \mlc ( w \plus s)\successor \equals w\plus s\successor    \mli    \tilde{E}(s)\mlc\tilde{F}(s) \mli \tilde{E}(s)\mlc\bigl(\tilde{E}(s\successor)\adc \tilde{F}(s\successor)\bigr)
\end{array}$ \ \ MP: (\ref{j14b}),15\vspace{3pt}

\noindent 17. $\begin{array}{l} 
\ade z(z\equals  w \plus s  )\mli \\
 | w \plus s\successor|\mleq\bound \mlc | w \plus s\mleq\bound|\mlc ( w \plus s)\successor \equals w\plus s\successor    \mli   \tilde{E}(s)\mlc\tilde{F}(s) \mli  \tilde{E}(s)\mlc\bigl(\tilde{E}(s\successor)\adc \tilde{F}(s\successor)\bigr)
\end{array}$  \mbox{$\ada$-Introduction: 16}\vspace{3pt}

\noindent 18. $\begin{array}{l} 
 | w \plus s\successor|\mleq\bound \mlc | w \plus s|\mleq\bound \mlc ( w \plus s)\successor \equals w\plus s\successor    \mli   \tilde{E}(s)\mlc\tilde{F}(s) \mli \tilde{E}(s)\mlc\bigl(\tilde{E}(s\successor)\adc \tilde{F}(s\successor)\bigr)
\end{array}$ \ \ $\add$-Elimination: 7,8,17\vspace{3pt}

\noindent 19. $\begin{array}{l} 
v\equals  w \plus s\successor  \mli   \tilde{E}(s)\mlc\tilde{F}(s) \mli \tilde{E}(s)\mlc\bigl(\tilde{E}(s\successor)\adc \tilde{F}(s\successor)\bigr)
\end{array}$ \ \ TR: 6,18 \vspace{3pt}

\noindent 20. $\begin{array}{l} 
  \ade z(z\equals  w \plus s\successor )\mli   \tilde{E}(s)\mlc\tilde{F}(s) \mli \tilde{E}(s)\mlc\bigl(\tilde{E}(s\successor)\adc \tilde{F}(s\successor)\bigr)
\end{array}$ \ \ $\ada$-Introduction: 19 \vspace{3pt}

\noindent 21. $\begin{array}{l} 
\tilde{E}(s)\mlc\tilde{F}(s) \mli \tilde{E}(s)\mlc\bigl(\tilde{E}(s\successor)\adc \tilde{F}(s\successor)\bigr)
\end{array}$ \ \  $\add$-Elimination: 1,3,20 \vspace{3pt}

\noindent 22. $\begin{array}{l} 
\tilde{E}(s)\mlc\bigl(\tilde{E}(s\successor)\adc \tilde{F}(s\successor)\bigr)\mli \tilde{E}(s\successor)\adc\bigl(\tilde{F}(s\successor)\mlc \tilde{E}(s)\bigr)
\end{array}$ \ \  $\clfour$-Instantiation  \vspace{3pt}

\noindent 23. $\begin{array}{l} 
\tilde{E}(s)\mlc\tilde{F}(s) \mli \tilde{E}(s\successor)\adc\bigl(\tilde{F}(s\successor)\mlc \tilde{E}(s)\bigr)
\end{array}$ \ \  TR: 21,22 \vspace{3pt}

\noindent 24. $\begin{array}{l} 
 s\mleq\tau\mli \tilde{E}(s)\mlc\tilde{F}(s)
\end{array}$ \ \  PTI: (\ref{j14c}), 23 \vspace{7pt}

The following is a disabbreviation of (\ref{j14d}):
\[
\begin{array}{l}
\arfour\vdash s\mleq\tau\mli \bigl(R\mlc | w \plus s|\mleq\bound \mli E( w \plus s)\bigr)\mlc\bigl(R\mlc | w \plus s|\mleq\bound \mli F( w \plus s)\bigr) .
\end{array}
\]
It is easy to see that, by $\clfour$-Instantiation, we also have 
\[
\begin{array}{l}
\arfour\vdash \Bigl(s\mleq\tau\mli \bigl(R\mlc | w \plus s|\mleq\bound \mli E( w \plus s)\bigr)\mlc\bigl(R\mlc | w \plus s|\mleq\bound \mli F( w \plus s)\bigr)\Bigr)\mli\\
s\mleq\tau\mlc R\mlc   | w \plus s|\mleq\bound \mli E( w \plus s)\mlc  F( w \plus s) .
\end{array}
\]
Hence, by Modus Ponens,
\begin{equation}\label{j14g}
\begin{array}{l}
\arfour\vdash s\mleq\tau\mlc R\mlc   | w \plus s|\mleq\bound \mli E( w \plus s)\mlc  F( w \plus s) .
\end{array}
\end{equation}

Now, the following sequence is an $\arfour$-proof of the target formula $R\mlc w \mleq t\mleq \tau\mli E(t)\mlc F(t)$, which completes our proof of the present lemma:\vspace{7pt}

\noindent 1. $\begin{array}{l} 
\ada x\ada y\ade z(x\equals y\plus z\add y\equals x\plus z)
\end{array}$ \ \  Lemma \ref{minus} \vspace{3pt}

\noindent 2. $\begin{array}{l} 
\ade z(w\equals t\plus z\add t\equals w\plus z)
\end{array}$ \ \  $\ada$-Elimination (twice): 1 \vspace{3pt}

\noindent 3. $\begin{array}{l} 
 s\equals\zero\add s\notequals\zero
\end{array}$ \ \ Axiom 8  \vspace{3pt}

\noindent 4. $\begin{array}{l} 
 s\equals\zero\mli (w\equals t\plus s   \mli t\equals w )   
\end{array}$ \ \ $\pa$  \vspace{3pt}

\noindent 5. $\begin{array}{l} 
\bigl(R \mli E( w )\mlc F( w )\bigr) \mli(w\equals t\plus s   \mli t\equals w )\mli \bigl(w\equals t\plus s \mli   R \mli E(t)\mlc F(t)\bigr)   
\end{array}$ \ \  $\clfour$-Instantiation\vspace{3pt}

\noindent 6. $\begin{array}{l} 
(w\equals t\plus s    \mli t\equals w )\mli \bigl(w\equals t\plus s \mli   R \mli E(t)\mlc F(t)\bigr)   
\end{array}$ \ \  MP: (\ref{j14a}),5\vspace{3pt}

\noindent 7. $\begin{array}{l} 
 s\equals\zero\mli  w\equals t\plus s \mli  R \mli E(t)\mlc F(t)  
\end{array}$ \ \  TR: 4,6\vspace{3pt}

\noindent 8. $\begin{array}{l} 
 s\equals\zero\mli  w\equals t\plus s  \mli R\mlc w \mleq t\mleq\tau \mli E(t)\mlc F(t)   
\end{array}$ \ \  Weakening: 7  \vspace{3pt}

\noindent 9. $\begin{array}{l} 
 s\notequals\zero\mli w\equals t\plus s   \mli \gneg w \mleq t\mleq \tau
\end{array}$ \ \ $\pa$  \vspace{3pt}

\noindent 10. $\begin{array}{l} 
 s\notequals\zero\mli w\equals t\plus s   \mli   R\mlc w \mleq t\mleq \tau\mli E(t)\mlc F(t)
\end{array}$ \ \ Weakenings: 9  \vspace{3pt}

\noindent 11. $\begin{array}{l} 
w\equals t\plus s   \mli   R\mlc w \mleq t\mleq \tau\mli E(t)\mlc F(t)
\end{array}$ \ \ $\add$-Elimination: 3,8,10  \vspace{3pt}

\noindent 12. $\begin{array}{l} 
|t|\mleq\bound
\end{array}$ \ \ Axiom 13  \vspace{3pt}

\noindent 13. $\begin{array}{l} 
t\equals w\plus s     \mlc w \mleq t\mleq \tau\mli s\mleq\tau
\end{array}$ \ \ $\pa$  \vspace{3pt}

\noindent 14. $\begin{array}{l} 
|t|\mleq\bound\mlc (t\equals w\plus s \mlc   w \mleq t\mleq \tau\mli s\mleq\tau)\mlc 
\bigl(s\mleq\tau\mlc R\mlc   | w \plus s|\mleq\bound \mli E( w \plus s)\mlc  F( w \plus s)\bigr) \\
 \mli t\equals w\plus s   \mli   R\mlc w \mleq t\mleq \tau\mli E(t)\mlc F(t)
\end{array}$ \ \ $\clfour$-Instantiation\vspace{3pt}

\noindent 15. $\begin{array}{l} 
t\equals w\plus s  \mli   R\mlc w \mleq t\mleq \tau\mli E(t)\mlc F(t)
\end{array}$ \ \ MP: 12,13,(\ref{j14g}),14  \vspace{3pt}

\noindent 16. $\begin{array}{l} 
w\equals t\plus s\add t\equals w\plus s     \mli   R\mlc w \mleq t\mleq \tau\mli E(t)\mlc F(t)
\end{array}$ \ \ $\adc$-Introduction: 11,15  \vspace{3pt}

\noindent 17. $\begin{array}{l} 
\ade z(w\equals t\plus z\add t\equals w\plus z)\mli    R\mlc w \mleq t\mleq \tau\mli E(t)\mlc F(t)
\end{array}$ \ \ $\ada$-Introduction: 16  \vspace{3pt}

\noindent 18. $\begin{array}{l} 
 R\mlc w \mleq t\mleq \tau\mli E(t)\mlc F(t)
\end{array}$ \ \ MP: 2,17  \vspace{7pt} \end{proof}

\begin{lemma}\label{noct17ee}
The following rule is admissible in $\arfour$:
\[\frac{R\mli  F( w )\hspace{20pt}R\mlc  w \mleq t\mless \tau \mlc  F(t)\mli  F(t\successor) }{R\mlc w \mleq t\mleq \tau\mli   F(t)},\]
where $R$ is any elementary formula, $w$ is any variable, $t$ is any variable not occurring in $R$ and different from $\bound$, $F(t)$ is any formula,   $\tau$ is any $\bound$-term, and $F( w )$ (resp. $F(t\successor)$)  is the result of replacing in $F(t)$ all free occurrences of $t$ by $ w $ (resp. $t\successor$). 
\end{lemma}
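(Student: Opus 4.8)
The plan is to reduce this rule to \textbf{WPTI} (Theorem \ref{nov9a}) via the same ``shift'' device used in the proof of Lemma \ref{noct17e}, the only new feature being that, since the right premise now carries the guard $R\mlc w\mleq t\mless\tau$ rather than $|t\successor|\mleq\bound$, the induction formula must internalize the upper bound $\mleq\tau$. Concretely, assuming the two premises $\arfour\vdash R\mli F(w)$ and $\arfour\vdash R\mlc w\mleq t\mless\tau\mlc F(t)\mli F(t\successor)$, I would fix a fresh variable $s$ and set
\[\tilde F(s)\ =\ R\mlc w\plus s\mleq\tau\mli F(w\plus s),\]
so that $w\plus s$ plays the role of the current argument $t$ and, as $s$ runs upward from $\zero$, this argument runs upward from $w$.

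First I would discharge the basis $\arfour\vdash\tilde F(\zero)$: since $\pa\vdash w\plus\zero\equals w$, the formula $\tilde F(\zero)$ is $R\mlc w\mleq\tau\mli F(w)$ up to a provable equality (handled by $\clfour$-Instantiation), and this follows from $R\mli F(w)$ by Weakening, inserting the extra $\mld$-disjunct $\gneg(w\mleq\tau)$. Next I would prove the WPTI inductive step $\arfour\vdash\tilde F(s)\mli\tilde F(s\successor)$. Treating $\tilde F(s)$ as a resource and assuming the elementary antecedent $R\mlc w\plus s\successor\mleq\tau$ of $\tilde F(s\successor)$, I would use $\pa$ to extract from it the facts $R$, $w\plus s\mleq\tau$ (whence $|w\plus s|\mleq\bound$) and $w\mleq w\plus s\mless\tau$; feeding $R\mlc w\plus s\mleq\tau$ into the resource yields $F(w\plus s)$. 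Then, exactly as in the proof of Lemma \ref{noct17e}, I would instantiate the second premise at the \emph{term} $w\plus s$: using Lemma \ref{com} with the derived $|w\plus s|\mleq\bound$ to obtain $\ade z(z\equals w\plus s)$, naming such a $z$, and applying $R\mlc w\mleq z\mless\tau\mlc F(z)\mli F(z\successor)$ together with $\pa\vdash(w\plus s)\successor\equals w\plus s\successor$ to produce $F(w\plus s\successor)$. Because the upper-bound guard sits inside the antecedent of $\tilde F(s\successor)$ itself, we may assume $w\plus s\successor\mleq\tau$ — hence $w\plus s\mless\tau$ — precisely in the situation where $F(w\plus s\successor)$ must actually be produced; so the step holds for \emph{every} $s$, which is what WPTI requires. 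Currying the elementary hypothesis back out via $\clfour$-Instantiation gives the step in the required form.

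An application of WPTI with the given $\bound$-term $\tau$ then yields $\arfour\vdash s\mleq\tau\mli\tilde F(s)$, i.e. $\arfour\vdash s\mleq\tau\mli(R\mlc w\plus s\mleq\tau\mli F(w\plus s))$. To recover the target $R\mlc w\mleq t\mleq\tau\mli F(t)$, I would, under the assumptions $R$ and $w\mleq t\mleq\tau$, use Lemma \ref{minus} instantiated at $t,w$ to obtain $\ade z(t\equals w\plus z\add w\equals t\plus z)$; since $w\mleq t$, an $\add$-Elimination on the two disjuncts produces an $s$ with $t\equals w\plus s$ (in the right disjunct $\pa$ forces $s\equals\zero$ and $w\equals t$). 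From $t\equals w\plus s$ and $t\mleq\tau$ we obtain $s\mleq\tau$ (as $\pa\vdash s\mleq t$) and $w\plus s\mleq\tau$, while Axiom 13 gives $|t|\mleq\bound$; invoking the WPTI conclusion at this $s$ and then the inner implication $\tilde F(s)$ delivers $F(w\plus s)$, which equals $F(t)$ since $\pa\vdash w\plus s\equals t$.

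I expect the main obstacle to be the inductive step: getting the bookkeeping of the term-instantiation right, i.e. passing from the free-variable premise $R\mlc w\mleq t\mless\tau\mlc F(t)\mli F(t\successor)$ to its use at $t\equals w\plus s$ through the $\ade z(z\equals w\plus s)$ mechanism exactly as in the long formal derivation inside Lemma \ref{noct17e}. Conceptually, the crux is recognizing that internalizing the $\mleq\tau$ guard into $\tilde F$ is precisely what makes the WPTI step provable for unbounded $s$ — this is the one point where the present rule genuinely differs from Lemma \ref{noct17e}, whose guard $|t\successor|\mleq\bound$ was already self-supplying and therefore needed no such internalization.
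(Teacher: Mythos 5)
Your overall strategy --- bypassing Lemma \ref{noct17e} and reducing the rule directly to WPTI via the shift $\tilde F(s)=R\mlc w\plus s\mleq\tau\mli F(w\plus s)$ --- is a legitimate alternative to the paper's route, which instead plugs $\twg$ and the whole formula $R\mlc w\mleq t\mleq\tau\mli F(t)$ into the already-established rule of Lemma \ref{noct17e}, so that all of the $w\plus s$ bookkeeping stays packaged inside that earlier lemma. But there is a genuine gap in your inductive step, at the parenthetical ``$w\plus s\mleq\tau$ (whence $|w\plus s|\mleq\bound$)''. That implication is not a theorem of $\pa$: $\tau$ is a $\bound$-term whose \emph{value} is polynomial in the value of $\bound$, so for pathologically small $\bound$ the size $|\tau|$ can exceed $\bound$ (e.g.\ $\bound=1$ and $\tau=\bound\plus\bound$, whose value $2$ has size $2\mgreater\bound$), and then a number $\mleq\tau$ need not have size $\mleq\bound$. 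This is exactly the pitfall the paper flags in the footnote to the PTI+ discussion (``the condition $s\mless\tau$ would not automatically imply $|s\successor|\mleq\bound$\ldots''). Since you need $|w\plus s|\mleq\bound$ in order to extract $\ade z(z\equals w\plus s)$ from the $\add$-disjunction supplied by Lemma \ref{com} --- and without a nameable $z$ you cannot instantiate the second premise at $t\equals w\plus s$ --- your step $\tilde F(s)\mli\tilde F(s\successor)$ is not provable as written.

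The repair is local: take $\tilde F(s)=R\mlc w\plus s\mleq\tau\mlc|w\plus s|\mleq\bound\mli F(w\plus s)$. The extra guard makes the step go through (the antecedent of $\tilde F(s\successor)$ now yields $|w\plus s\successor|\mleq\bound$, hence $|w\plus s|\mleq\bound$, and Lemma \ref{com} then delivers the needed $z$, while the case $\gneg|w\plus s\successor|\mleq\bound$ becomes vacuous), and it is harmlessly discharged at the end, because there $w\plus s\equals t$ for the \emph{variable} $t$ and Axiom 13 gives $|t|\mleq\bound$ for free. This is precisely the role the guard $|w\plus s|\mleq\bound$ plays in the formulas $\tilde E,\tilde F$ of the paper's proof of Lemma \ref{noct17e}. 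The remainder of your argument --- the basis, and the recovery of the conclusion via Lemma \ref{minus}, Axiom 13 and the two-case analysis on $t\equals w\plus s$ versus $w\equals t\plus s$ --- is sound and parallels the corresponding steps in the paper.
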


\begin{idea} This rule can be reduced to the rule of Lemma \ref{noct17e}  by taking $\twg$  and $R\mlc w \mleq t\mleq \tau\mli   F(t)$ in the roles of $E(t)$  and $F(t)$ of the latter, respectively.  
\end{idea}

\begin{proof} Assume all conditions of the rule, and assume its premises  are provable, i.e., 
\begin{equation}\label{j14aa}
\arfour\vdash R\mli  F( w );
\end{equation}
\begin{equation}\label{j14ba}
\arfour\vdash R\mlc  w \mleq t\mless \tau \mlc  F(t)\mli  F(t\successor).
\end{equation}
Our goal is to show that $\arfour\vdash R\mlc w \mleq t\mleq \tau\mli   F(t)$.

Let us agree on the following abbreviation:
\[ \tilde{F}(t)\ = \ R\mlc w \mleq t\mleq \tau\mli   F(t).\]

From (\ref{j14aa}), by Weakening, we have  
\begin{equation}\label{j15a}
\arfour\vdash R\mli \tilde{F}( w ).
\end{equation}

We now claim that 
\begin{equation}\label{j15b}
\arfour\vdash |t\successor|\mleq\bound\mlc \twg \mlc \tilde{F}(t)\mli \twg\mlc\bigl(\twg\adc \tilde{F}(t\successor)\bigr).
\end{equation}
This claim is justified a follows:\vspace{7pt}

\noindent 1. $\begin{array}{l} 
\gneg |t\successor|\mleq\bound\add \ade z(z\equals t\successor)
\end{array}$ \ \ Lemma \ref{com}   \vspace{3pt}

\noindent 2. $\begin{array}{l} 
|w|\mleq\bound
\end{array}$ \ \ Axiom 13   \vspace{3pt}

\noindent 3. $\begin{array}{l} 
|w|\mleq\bound\mli \gneg |t\successor|\mleq\bound \mli t\successor\notequals w
\end{array}$ \ \ Logical axiom  \vspace{3pt}

\noindent 4. $\begin{array}{l} 
\gneg |t\successor|\mleq\bound \mli t\successor\notequals w
\end{array}$ \ \ MP: 2,3  \vspace{3pt}

\noindent 5. $\begin{array}{l} 
\gneg |t\successor|\mleq\bound \mli t\successor\equals w\add t\successor\notequals w
\end{array}$ \ \ $\add$-Choose: 4  \vspace{3pt}

\noindent 6. $\begin{array}{l} 
\ada x\ada y(y\equals x\add y\notequals x) 
\end{array}$ \ \  Lemma \ref{nov18a}  \vspace{3pt}

\noindent 7. $\begin{array}{l} 
r\equals w\add r\notequals w 
\end{array}$ \ \ $\ada$-Elimination (twice): 6   \vspace{3pt}

\noindent 8. $\begin{array}{l} 
r\equals w\mli r\equals t\successor  \mli  t\successor\equals w
\end{array}$ \ \ Logical axiom   \vspace{3pt}

\noindent 9. $\begin{array}{l} 
r\equals w\mli r\equals t\successor  \mli t\successor\equals w\add t\successor\notequals w
\end{array}$ \ \  $\add$-Choose: 8  \vspace{3pt}

\noindent 10. $\begin{array}{l} 
r\notequals w\mli r\equals t\successor  \mli  t\successor\notequals w
\end{array}$ \ \  Logical axiom  \vspace{3pt}

\noindent 11. $\begin{array}{l} 
 r\notequals w\mli r\equals t\successor  \mli t\successor\equals w\add t\successor\notequals w
\end{array}$ \ \  $\add$-Choose: 10  \vspace{3pt}

\noindent 12. $\begin{array}{l} 
 r\equals t\successor\mli t\successor\equals w\add t\successor\notequals w
\end{array}$ \ \  $\add$-Elimination: 7,9,11  \vspace{3pt}

\noindent 13. $\begin{array}{l} 
\ade z(z\equals t\successor) \mli t\successor\equals w\add t\successor\notequals w
\end{array}$ \ \  $\ada$-Introduction: 12  \vspace{3pt}

\noindent 14. $\begin{array}{l} 
 t\successor\equals w\add t\successor\notequals w
\end{array}$ \ \  $\add$-Elimination: 1,5,13  \vspace{3pt}

\noindent 15. $\begin{array}{l} 
(R\mli F(w)\bigr)\mli t\successor\equals w\mli \bigl(R\mli F(t\successor)\bigr)
\end{array}$ \ \  $\clfour$-Instantiation \vspace{3pt}

\noindent 16. $\begin{array}{l} 
t\successor\equals w\mli \bigl(R\mli F(t\successor)\bigr)
\end{array}$ \ \ MP: (\ref{j14aa}),15  \vspace{3pt}

\noindent 17. $\begin{array}{l} 
t\successor\equals w\mli\bigl(R\mlc w \mleq t\mleq \tau\mli F(t)\bigr)\mli\bigl(R\mlc w \mleq t\successor\mleq \tau\mli  F(t\successor)\bigr)
\end{array}$ \ \ Weakenings: 16  \vspace{3pt}

\noindent 18. $\begin{array}{l} 
t\successor\notequals w\mli (w \mleq t\successor\mleq \tau\mli   w \mleq t\mleq \tau \mlc w \mleq t\mless \tau) 
\end{array}$ \ \  $\pa$ \vspace{3pt}

\noindent 19. $\begin{array}{l} 
\bigl(R\mlc  w \mleq t\mless \tau \mlc  F(t)\mli  F(t\successor)\bigr)\mli  \\
 ( w \mleq t\successor\mleq \tau\mli   w \mleq t\mleq \tau \mlc w \mleq t\mless \tau ) \mli 
\bigl(R\mlc w \mleq t\mleq \tau\mli F(t)\bigr)\mli\bigl(R\mlc w \mleq t\successor\mleq \tau\mli  F(t\successor)\bigr)
\end{array}$ \ \   

\hspace{34pt}$\clfour$-Instantiation, instance of $ (q\mlc p_3 \mlc  P\mli Q ) \mli  ( p_1\mli  p_2 \mlc p_3 )\mli  
 (q\mlc p_2\mli P )\mli (q\mlc p_1\mli  Q )$ \vspace{3pt}

\noindent 20. $\begin{array}{l} 
 ( w \mleq t\successor\mleq \tau\mli   w \mleq t\mleq \tau \mlc w \mleq t\mless \tau ) \mli 
\bigl(R\mlc w \mleq t\mleq \tau\mli F(t)\bigr)\mli\bigl(R\mlc w \mleq t\successor\mleq \tau\mli  F(t\successor)\bigr)
\end{array}$ \ \  MP: (\ref{j14ba}),19\vspace{3pt}

\noindent 21. $\begin{array}{l} 
t\successor\notequals w\mli\bigl(R\mlc w \mleq t\mleq \tau\mli F(t)\bigr)\mli\bigl(R\mlc w \mleq t\successor\mleq \tau\mli  F(t\successor)\bigr)
\end{array}$ \ \ TR: 18,20   \vspace{3pt}

\noindent 22. $\begin{array}{l} 
\bigl(R\mlc w \mleq t\mleq \tau\mli F(t)\bigr)\mli\bigl(R\mlc w \mleq t\successor\mleq \tau\mli  F(t\successor)\bigr)
\end{array}$ \ \ $\add$-Elimination: 14,17,21  \vspace{3pt}

\noindent 23. $\begin{array}{l} 
\tilde{F}(t)\mli \tilde{F}(t\successor)
\end{array}$ \ \ abbreviating 22 \vspace{3pt}

\noindent 24. $\begin{array}{l} 
 |t\successor|\mleq\bound\mlc \twg \mlc \tilde{F}(t)\mli  \tilde{F}(t\successor)
\end{array}$ \ \ Weakenings: 23 \vspace{3pt}

\noindent 25. $\begin{array}{l} 
 \tilde{F}(t\successor)\mli \twg\mlc\bigl(\twg\adc \tilde{F}(t\successor)\bigr)
\end{array}$ \ \ $\clfour$-Instantiation \vspace{3pt}

\noindent 26. $\begin{array}{l} 
 |t\successor|\mleq\bound\mlc \twg \mlc \tilde{F}(t)\mli \twg\mlc\bigl(\twg\adc \tilde{F}(t\successor)\bigr)
\end{array}$ \ \ TR: 24,25 \vspace{7pt}

From (\ref{j15a}) and (\ref{j15b}), by the rule of Lemma \ref{noct17e}, we get $\arfour\vdash R\mlc w\mleq t\mleq\tau\mli \twg\mlc \tilde{F}(t)$. Of course (by $\clfour$-Instantiation) $\arfour\vdash \twg\mlc\tilde{F}(t)\mli\tilde{F}(t)$, so, by Transitivity,  $\arfour\vdash R\mlc w\mleq t\mleq\tau\mli \  \tilde{F}(t)$. Disabbreviating the latter, we thus have
\[\arfour\vdash R\mlc w\mleq t\mleq\tau\mli R\mlc w\mleq t\mleq\tau\mli  F(t).\]
We also have 
\[\arfour\vdash (R\mlc w\mleq t\mleq\tau\mli R\mlc w\mleq t\mleq\tau\mli  F(t)\bigr)\mli \bigl(R\mlc w\mleq t\mleq\tau\mli  F(t)\bigr) \]
(the above formula is an instance of the obviously $\clfour$-provable $(p\mli p\mli Q)\mli(p\mli Q)$). So, by Modus Ponens, we find that $\arfour$ proves the desired $R\mlc w\mleq t\mleq\tau\mli  F(t)$.
\end{proof}

\section{The extensional completeness of $\arfour$}\label{sectcompl}\label{s19}
This section is devoted to proving the completeness part of Theorem \ref{tt1}. It means showing that, for any arithmetical problem $A$ that has a polynomial time solution,  there is a theorem of $\arfour$ which, under the standard interpretation, equals (``expresses'') $A$. 

So, let us pick an arbitrary polynomial-time-solvable arithmetical problem $A$. By definition, $A$ is an arithmetical problem because, for some formula $X$ of the language of $\arfour$, $A=X^\dagger$. For the rest of this section, we fix such a formula \[X,\label{ix}\]  and fix \[{\cal X}\label{ixxx}\] as an HPM that solves $A$ (and hence $X^\dagger$) in polynomial time. Specifically, we assume that $\cal X$ runs in time 
\[\xi(\bound),\label{ixi}\]
where $\xi(\bound)$, which we also fix for the rest of this section and which sometimes can be written simply as $\xi$, is a $\bound$-term (a term containing no variables other than $\bound$).  

$X$ may not necessarily be provable in $\arfour$, and our goal is to construct another formula $\overline{X}$ for which, just like for $X$, we have $A=\overline{X}^\dagger$ and which, perhaps unlike $X$, is  provable in $\arfour$.

Remember our convention about identifying formulas  of ptarithmetic with (the games that are) their standard interpretations. So, in the sequel, just as we have done so far, we shall typically write $E,F,\ldots$ to mean either $E,F,\ldots$ or $E^\dagger,F^\dagger,\ldots$.  
Similar conventions apply to terms as well. In fact, we have just used this convention when saying that $\cal X$ runs in time $\xi$. What was really meant was that it runs in time $\xi^\dagger$.  

\subsection{Preliminary insights}\label{gggg}
 Our proof is long and, in the process of going through it, it is easy to get lost in the forest and stop seeing it for the trees. Therefore, it might be worthwhile to try to get some preliminary insights into the basic idea behind this proof before venturing into its details.

Let us consider the simplest nontrivial special case  where $X$ is \[Y(x)\add Z(x) \] for some elementary formulas $Y(x)$ and $Z(x)$
(perhaps $Z(x)$ is $\gneg Y(x)$, in which case $X$ expresses   an ordinary decision problem --- the problem of deciding  the predicate $Y(x)$). 

The assertion ``$\cal X$ does not win $X$ in time $\xi$'' can be formalized in the language of $\pa$ through as a certain formula $\mathbb{L}$. Then we let the earlier mentioned  $\overline{X}$ be the formula 
\[\bigl(Y(x)\mld \mathbb{L}\bigr)\add\bigl(Z(x)\mld\mathbb{L}\bigr).\]
Since $\cal X$ {\em does} win game $X$ in time $\xi$, $\mathbb{L}$ is false. Hence $Y(x)\mld \mathbb{L}$ is equivalent to $Y(x)$, and  $Z(x)\mld \mathbb{L}$ is equivalent to $Z(x)$. This means that $\overline{X}$ and $X$, as games, are the same, that is,  $\overline{X}^\dagger=X^\dagger$. It now remains to understand why $\arfour\vdash \overline{X}$.  

A central lemma here is one establishing that the work of $\cal X$ is ``{\em provably traceable}''.\label{iprovtr1} Roughly, this means the provability of the fact that, for any time moment $t\mleq \xi(\bound)$, we can tell (``can tell'' formally indicated with $\add$ or $\ade$ applied to the possible alternatives) the state in which $\cal X$ will be, the locations of its three scanning heads, and the content of any of the cells of any of the three tapes. Letting $\cal X$ work for $\xi(\bound)$ steps,  one of the following four eventual scenarios should take place, and the provable traceability of the work of $\cal X$ can be shown to imply that $\arfour$ proves the $\add$-disjunction of formulas describing those scenarios:

\begin{description}
\item[Scenario 1:] $\cal X$ makes the move $0$ (and no other moves).
\item[Scenario 2:] $\cal X$ makes the move $1$ (and no other moves).
\item[Scenario 3:] $\cal X$ does not make any moves.  
\item[Scenario 4:] $\cal X$ makes an illegal move (perhaps after   first making a legal move $0$ or $1$). 
\end{description}

In the case of Scenario 1, the play over $\overline{X}$ hits $Y(x)\mld \mathbb{L}$. And $\arfour$ --- in fact, $\pa$ --- proves that, in this case, $Y(x)\mld \mathbb{L}$ is true. The truth of $Y(x)\mld \mathbb{L}$ is indeed very easily established: if it was false, then $Y(x)$ should be false, but then the play of $\cal X$ over $X$ (which, as a game, is the same as $\overline{X}$) hits the false $Y(x)$ and hence is lost, but then $\mathbb{L}$ is true, but then $Y(x)\mld \mathbb{L}$ is true. Thus, 
\(\arfour\vdash (\mbox{\em Scenario 1})\mli Y(x)\mld \mathbb{L}$, from which, by $\add$-Choose, $\arfour\vdash (\mbox{\em Scenario 1})\mli \overline{X}$.  

The case of Scenario 2 is symmetric.

In the case of Scenario  3, ($\arfour$ proves that) $\cal X$ loses, i.e.   $\mathbb{L}$ is true, and hence, say, $Y(x)\mld \mathbb{L}$ (or $Z(x)\mld \mathbb{L}$ if you like) is true. That is, $\arfour\vdash   (\mbox{\em Scenario 3})\mli Y(x)\mld \mathbb{L}$, from which, by $\add$-Choose, $\arfour\vdash (\mbox{\em Scenario 3})\mli\overline{X}$.  

The case of Scenario 4 is similar. 

Thus, for each $i\in\{1,2,3,4\}$, $\arfour\vdash (\mbox{\em Scenario i})\mli \overline{X}$.  
And, as we also have  
\[\arfour\vdash (\mbox{\em Scenario 1})\add (\mbox{\em Scenario 2})\add (\mbox{\em Scenario 3})\add (\mbox{\em Scenario 4}),\]
by $\add$-Elimination, we find the desired $\arfour\vdash \overline{X}$.

The remaining question to clarify  is how the provable traceability of the work of $\cal X$ is achieved. This is where PTI comes into play. In the roles of the two formulas $E$ and $F$ of that rule we employ certain nonelementary formulas $\mathbb{E}$ and $\mathbb{F}$. With $t$ being the ``current time'', $\mathbb{E}(t)$ is a formula which, as a resource, allows us to tell ($\add$ or $\ade$) the current state of $\cal X$, and ($\mlc$) the locations of its three heads, and ($\mlc$) the contents of the three cells under the three heads. And $\mathbb{F}(t)$ allows us, for any ($\ada$) cell of any ($\mlc$) tape, to tell ($\add$) its current content. 

In order to resolve $\mathbb{F}(t\successor)$ --- that is, to tell the content of any ($\ada$) given cell $\#c$ at time $t\plus 1$ --- all we need  to know is the state of $\cal X$, the content of cell $\#c$, the locations of the scanning heads (perhaps only one of them), and the contents of the three cells scanned  by the three heads at time $t$. The content of cell $\#c$ at time $t$ can be obtained from (a single copy of) the resource $\mathbb{F}(t)$, and the rest of the above information from (a single copy of) the resource $\mathbb{E}(t)$. $\arfour$ is aware of this, and proves $\mathbb{E}(t)\mlc\mathbb{F}(t)\mli \mathbb{F}(t\successor)$.

Similarly, it turns out that, in order to resolve $\mathbb{E}(t\successor)$, a single copy of $\mathbb{E}(t)$ and a single copy of $\mathbb{F}(t)$ are sufficient, and $\arfour$, being aware of this, proves $\mathbb{E}(t)\mlc\mathbb{F}(t)\mli \mathbb{E}(t\successor)$.

The above two provabilities, by $\adc$-Introduction, imply  $\arfour\vdash \mathbb{E}(t)\mlc\mathbb{F}(t)\mli \mathbb{E}(t\successor)\adc \mathbb{F}(t\successor) $. This is almost the inductive step of PTI. What is missing is a $\mlc$-conjunct $\mathbb{E}(t)$ in the consequent. Not to worry. Unlike $\mathbb{F}(t)$, $\mathbb{E}(t)$ is a recyclable resource due to the fact that it does not contain $\adc$ or $\ada$ (albeit it contains  $\add,\ade$). Namely, once we learn --- from the antecedental resource $\mathbb{E}(t)$ --- about the state of $\cal X$, the locations of the three scanning heads and the cell contents at those locations  at time $t$, we can use/recycle that information and ``return/resolve back''  $\mathbb{E}(t)$ in the consequent. A syntactic equivalent --- or rather consequence --- of what we just said is that the provability of
$ \mathbb{E}(t)\mlc\mathbb{F}(t)\mli \mathbb{E}(t\successor)\adc \mathbb{F}(t\successor) $ implies the  provability of $ \mathbb{E}(t)\mlc\mathbb{F}(t)\mli  \bigl(\mathbb{E}(t\successor)\adc \mathbb{F}(t\successor)\bigr)\mlc\mathbb{E}(t) $, and hence also the provability of the weaker $ \mathbb{E}(t)\mlc\mathbb{F}(t)\mli  \mathbb{E}(t\successor)\adc\bigl(\mathbb{F}(t\successor)\mlc \mathbb{E}(t)\bigr) $. 

Thus,  $\arfour\vdash \mathbb{E}(t)\mlc\mathbb{F}(t)\mli  \mathbb{E}(t\successor)\adc\bigl(\mathbb{F}(t\successor)\mlc \mathbb{E}(t)\bigr) $. We also have $\arfour\vdash \mathbb{E}(\zero)\mlc \mathbb{F}(\zero)$, as this formula is essentially just a description of the initial configuration of the machine. Then, by PTI, $\arfour\vdash t\mleq \xi(\bound)\mli \mathbb{E}(t)\mlc\mathbb{F}(t)$. This is exactly what we meant by the provable traceability of the work of $\cal X$. 

The above was about the pathologically simple case of $X=Y(x)\add Z(x)$, and the general case will be much more complex, of course. Among other things, provable traceability would have to account for the possibility of the environment making moves now and then. And showing the provability of $\overline{X}$ would require a certain metainduction on its complexity, which we did not need in the present case. But the   idea that we have just tried to explain would still remain valid and central, only requiring certain --- nontrivial but doable ---  adjustments and refinements.

\subsection{The overline notation}

Throughout the rest of this section, we assume that the formula $X$ has no free occurrences of variables other than $\bound$. There is no loss of generality in making such an assumption, because, if $X$ does not satisfy this condition, it can be replaced by the  both semantically and deductively equivalent $\ada$-closure of it over all free variables different from $\bound$. 

We shall sometimes find it helpful to write $X$ as 
\[X(\bound).\]
When, after that, writing $X(b)$ (where $b$ is a constant), one should keep in mind that it means the result of substituting $\bound$ by $b$ in $X(\bound)$ not only where we explicitly see $\bound$, but also in choice quantifiers $\ada$ and $\ade$, which, as we remember, are lazy ways to write $\ada^{\bound}$ and $\ade^{\bound}$. So, for instance, if $X(\bound)$ is $\ade xE(x)$ and $c\not=b$, then $X(c)$ is not the same as  $X(b)$ even if $\bound$ does not occur in $E(x)$, because the former is  $\ade^c xE(x)$ and the latter is $\ade^b x E(x)$. The same  applies to any  formula written in the form $F(\bound,\ldots)$, of course. 


Let us say that a formula is {\bf safe}\label{isafe} iff no two occurrences of quantifiers in it bind the same variable. For simplicity and also without loss of generality, we further assume that the formula $X$  is safe  (otherwise make it safe by renaming variables). 

Since $X$ has no free variables other than $\bound$, for simplicity we can limit our considerations to  valuations that send every non-$\bound$ variable to $0$. We call such valuations {\bf standard}\label{istandard} and use a special notation for them. Namely, for an   integer $b$, we write 
\[e_b\label{ieb}\]
for the valuation such that $e_b(\bound)=b$ and, for any other variable $v$, $e_b(v)=0$.


By a {\bf politeral}\label{ipoliteral} of a formula we mean a positive occurrence of a literal in it. 
While a politeral is not merely a literal but a literal $L$  {\em together} with a fixed occurrence, we shall often refer to it just by the name $L$ of the literal, assuming that it is clear from the context which (positive) occurrence of $L$ is meant.

We assume that the reader is sufficiently familiar with G\"{o}del's technique of encoding and arithmetizing. Using that technique, 
we can construct a sentence   \[\mathbb{L}\label{illl}\] of the language of $\pa$     which asserts --- more precisely, implies --- ``$\cal X$ does not win $X$ in time $\xi$''.  

Namely, let $E_1(\bound,\vec{x}),\ldots,E_n(\bound,\vec{x})$ be  all   subformulas of $X$, where all free variables of each $E_i(\bound,\vec{x}) $ are among $\bound,\vec{x}$ (but not necessarily vice versa).
Then the above sentence $\mathbb{L}$   is a natural formalization of the following statement: 
\begin{quote} {\em ``There is a (finite)  run $\Gamma$ generated by $\cal X$ on some standard bounded valuation $e_b$   such that: 
\begin{enumerate}
\item $\pp$'s time in $\Gamma$ is not smaller than $\xi(b)$, or 
\item $\Gamma$ is a $\pp$-illegal run of $X(b)$, or
\item  $\Gamma$ is a legal run of $X(b)$ and there is a tuple $\vec{c}$ of constants ($\vec{c}$ of the same length as $\vec{x}$) such that:
\begin{itemize}
\item  $\seq{\Gamma}X(b)=E_1(b,\vec{c})$, and we have $\gneg  \elz{E_1(b,\vec{c})} $ (i.e., $ \elz{E_1(b,\vec{c})} $ is false),  
\item  or $\ldots$, or 
\item $\seq{\Gamma}X(b)=E_n(b,\vec{c})$, and we have $\gneg  \elz{E_n(b,\vec{c})} $ (i.e., $ \elz{E_n(b,\vec{c})} $ is false).'' 
\end{itemize}
\end{enumerate} } 
\end{quote}

As we remember, our goal is to construct a formula $\overline{X}$ which expresses the same problem as $X$ does and which is provable in $\arfour$. For any   formula $E$ --- including $X$ --- we let 
\[\overline{E}\label{ipver}\]
be the result of replacing in $E$ every politeral $L$ by $L\mld\mathbb{L}$.

\begin{lemma}\label{august12}
Any  literal  $L $ is equivalent (in the standard model of arithmetic) to  $L\mld\mathbb{L}$. 
\end{lemma}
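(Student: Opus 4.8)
The statement is essentially trivial once one recalls what $\mathbb{L}$ asserts, so the plan is simply to observe that $\mathbb{L}$ is false in the standard model and then appeal to the classical tautology $p \mld \tlg \equiv p$. The key fact is that $\cal X$ was \emph{chosen} as an HPM that solves $X^\dagger$ in time $\xi$ (this is exactly how $\cal X$ and $\xi$ were fixed at the start of the section). Now $\mathbb{L}$ is a sentence of the language of $\pa$ that asserts --- or rather implies --- ``$\cal X$ does not win $X$ in time $\xi$''. Since $\cal X$ \emph{does} win $X$ in time $\xi$, the statement $\mathbb{L}$ formalizes is false, and hence $\mathbb{L}$ is false in the standard model of arithmetic.

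First I would spell out, step by step, why each of the three disjuncts in the statement formalized by $\mathbb{L}$ fails. For clause 1: because $\cal X$ runs in time $\xi$, $\pp$'s time in any run $\Gamma$ generated by $\cal X$ on any bounded valuation $e_b$ is strictly less than $\xi(b)$ (by Definition \ref{deftc}(1)), so clause 1 can never hold. For clause 2: since $\cal X$ wins (and in particular never itself makes an illegal move when solving $X^\dagger$), no run generated by $\cal X$ is a $\pp$-illegal run of $X(b)$, so clause 2 fails. For clause 3: if $\Gamma$ is a legal run generated by $\cal X$ and $\seq{\Gamma}X(b) = E_i(b,\vec c)$, then, because $\cal X$ wins, the final position reached is $\pp$-won, which (by Lemma \ref{new1}, relating the winner of the empty run to the elementarization) forces $\elz{E_i(b,\vec c)}$ to be true, contradicting the requirement $\gneg \elz{E_i(b,\vec c)}$. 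Thus none of the three disjuncts can be witnessed, and the existential sentence $\mathbb{L}$ is false.

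Once $\mathbb{L}$ is established to be false in the standard model, the conclusion is immediate. A literal $L$ and the disjunction $L \mld \mathbb{L}$ are, as elementary formulas, classically equivalent whenever $\mathbb{L}$ is false: if $L$ is true then both sides are true, and if $L$ is false then $L \mld \mathbb{L}$ reduces to $\mathbb{L}$, which is also false. More precisely, for every standard valuation $e_b$, we have $\win{(L\mld\mathbb{L})}{e_b}\emptyrun = \pp$ iff $\win{L}{e_b}\emptyrun = \pp$, since $\win{\mathbb{L}}{e_b}\emptyrun = \oo$. This is exactly the asserted equivalence of $L$ and $L \mld \mathbb{L}$.

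The only genuinely delicate point --- and the one I would present with the most care --- is the reduction of clause 3 of $\mathbb{L}$'s meaning to the falsity of the relevant elementarizations. The subtlety is that ``$\cal X$ wins $X^\dagger$ on $e_b$'' is a statement about the full run $\Gamma$ being $\pp$-won, whereas the disjuncts of $\mathbb{L}$ talk about the truth value of the elementarization $\elz{E_i(b,\vec c)}$ of the formula $E_i(b,\vec c)$ that $X(b)$ has been brought down to. Bridging these requires Lemma \ref{new1}, which equates $\win{E_i^\dagger}{e_b}\emptyrun$ with $\win{\elzi{E_i}^\dagger}{e_b}\emptyrun$, together with the fact that prefixation by a legal run $\Gamma$ with $\seq{\Gamma}X(b)=E_i(b,\vec c)$ identifies the winner of $\Gamma$ in $X(b)$ with the winner of the empty run in $E_i(b,\vec c)$. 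I do not expect any real obstacle beyond carefully marshalling these already-available facts; the lemma is foundational and its proof is short.
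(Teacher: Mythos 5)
Your proposal is correct and follows essentially the same route as the paper: the paper's proof likewise notes that $L$ trivially implies $L\mld\mathbb{L}$ and that the second disjunct of $L\mld\mathbb{L}$ cannot be true because $\cal X$ does win $X$ in time $\xi$, contrary to what $\mathbb{L}$ asserts. Your clause-by-clause verification that $\mathbb{L}$ is false is additional detail the paper leaves implicit (it relies directly on $\mathbb{L}$ implying ``$\cal X$ does not win $X$ in time $\xi$''), but it is consistent with and subsumed by the paper's argument.
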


\begin{proof} That $L$ implies $L\mld\mathbb{L}$ is immediate, as the former is a disjunct of the latter. For the opposite direction, suppose     $L\mld\mathbb{L}$ is true at a given valuation $e$. Its second disjunct cannot be true, because $\cal X$ {\em does} win $X$ in time $\xi$, contrary to what $\mathbb{L}$ asserts.  So, the first disjunct, i.e. $L$, is true. 
\end{proof}

\begin{lemma}\label{august12a}
For any  formula $E$, including $X$, we have  $E^\dagger=\overline{E}^\dagger$. 
\end{lemma}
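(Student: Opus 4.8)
The plan is to prove $E^\dagger = \overline{E}^\dagger$ by a straightforward structural induction on the formula $E$, using Lemma \ref{august12} as the base case engine. Recall that $\overline{E}$ is obtained from $E$ by replacing every politeral (positive occurrence of a literal) $L$ by $L\mld\mathbb{L}$. The key semantic fact to exploit is that the operation $^\dagger$ commutes with all the logical operators (by the definition of interpretation in Section \ref{ss6}), so two formulas that are built up identically from pointwise-equal pieces will denote the same game.

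First I would set up the base case. The atomic formulas of $X$ are literals $L$ (elementary atoms possibly under $\gneg$, or $\twg,\tlg$). For a literal $L$, the formula $\overline{L}$ is $L\mld\mathbb{L}$. Since $L$ is elementary, $L^\dagger$ is an elementary game (a predicate), and Lemma \ref{august12} tells us that $L$ and $L\mld\mathbb{L}$ are equivalent in the standard model --- i.e., for every (standard/bounded) valuation $e$, $\win{L^\dagger}{e}\emptyrun = \win{(L\mld\mathbb{L})^\dagger}{e}\emptyrun$. Because both are elementary games (of depth $0$, with only $\emptyrun$ as legal run), agreeing on the winner at $\emptyrun$ for every $e$ means $L^\dagger = (L\mld\mathbb{L})^\dagger = \overline{L}^\dagger$ as games. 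This handles all politerals.

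Next comes the inductive step. Suppose $E$ has the form $O(E_1,\ldots,E_k)$ where $O$ is one of $\mlc,\mld,\adc,\add$ (or a unary quantifier $\cla,\cle,\ada,\ade$ applied to $E_1$), and assume as induction hypothesis that $E_i^\dagger = \overline{E_i}^\dagger$ for each $i$. The replacement of politerals in $E$ acts componentwise: every politeral of $E$ lies in exactly one subformula $E_i$ and is a politeral there too (the operators $\mlc,\mld,\adc,\add,\cla,\cle,\ada,\ade$ introduce no negation and hence preserve positivity), so $\overline{E} = O(\overline{E_1},\ldots,\overline{E_k})$. Then, using that $^\dagger$ commutes with $O$, we compute $\overline{E}^\dagger = O(\overline{E_1}^\dagger,\ldots,\overline{E_k}^\dagger) = O(E_1^\dagger,\ldots,E_k^\dagger) = E^\dagger$, where the middle equality is the induction hypothesis. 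The one point requiring a moment's care is the treatment of $\gneg$: since in negation normal form $\gneg$ occurs only in front of atoms, a negated atom $\gneg p(\vec\tau)$ is itself a (poly)literal and is covered entirely by the base case, so no separate inductive clause for $\gneg$ on compound formulas is needed.

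I do not expect any genuine obstacle here; the lemma is essentially bookkeeping once Lemma \ref{august12} is in hand. The only thing to be careful about --- and the closest thing to a ``hard part'' --- is making the positivity/occurrence tracking precise: one must verify that the politerals of $E$ are exactly the disjoint union of the politerals of the immediate subformulas $E_i$, so that $\overline{(\cdot)}$ really does distribute over the top-level connective and yield $\overline{E} = O(\overline{E_1},\ldots,\overline{E_k})$. This is immediate from the fact that none of the allowed top-level operators sits in the scope of a $\gneg$ and none flips the sign of its arguments, but it is the step on which the whole commutation argument silently rests, so I would state it explicitly before invoking the commutativity of $^\dagger$ with the operators.
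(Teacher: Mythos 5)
Your proof is correct and is exactly the paper's argument: the paper's own proof consists of the single sentence "Immediately from Lemma \ref{august12} by induction on the complexity of $E$," which is precisely the structural induction you carry out, with Lemma \ref{august12} handling the literals and the commutation of $^\dagger$ with the operators handling the inductive step. Your explicit attention to the fact that politerals distribute over the top-level connective is a reasonable elaboration of what the paper leaves implicit.
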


\begin{proof} Immediately from Lemma \ref{august12} by induction on the complexity of $E$.
\end{proof}

In view of the above lemma, what now remains to do for the completion of our completeness proof is to show that $\arfour\vdash\overline{X}$. The rest of the present section is entirely devoted to this task.

\subsection{This and that} 
\begin{lemma}\label{jan4d}
For any    formula $E$, $\arfour\vdash \mathbb{L} \mli \overline{E}$. 
\end{lemma}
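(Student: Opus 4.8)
The goal is to show $\arfour \vdash \mathbb{L} \mli \overline{E}$ for every formula $E$, where $\overline{E}$ is obtained by replacing each politeral $L$ with $L \mld \mathbb{L}$. The natural approach is (meta)induction on the complexity of $E$, since $\overline{\cdot}$ commutes with all the logical operators in a structural way.

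\textbf{Base case.} When $E$ is a literal $L$, we have $\overline{E} = L \mld \mathbb{L}$, and $\mathbb{L} \mli L \mld \mathbb{L}$ is (an instance of) a classically valid elementary formula --- indeed it matches the $\clfour$-provable $P \mli Q \mld P$ --- so it is available by $\clfour$-Instantiation. When $E$ is $\twg$ or $\tlg$, the claim $\mathbb{L} \mli \overline{E}$ is either trivially provable (since $\overline{\twg} = \twg$) or follows because $\mathbb{L}$ itself is false in the standard model; but more carefully, for $E = \tlg$ we need $\mathbb{L} \mli \tlg$, i.e.\ $\gneg \mathbb{L}$, which is a true $\pa$-sentence (since $\cal X$ \emph{does} win $X$ in time $\xi$) and hence provable in $\arfour$ by Fact~\ref{nov7}. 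This handles the atomic cases.

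\textbf{Inductive step.} For each principal connective I would exhibit a $\clfour$-provable scheme that lifts the premise provabilities to the conclusion, then combine by Modus Ponens. For $E = E_1 \mlc \ldots \mlc E_n$ (so $\overline{E} = \overline{E_1} \mlc \ldots \mlc \overline{E_n}$), the induction hypothesis gives $\arfour \vdash \mathbb{L} \mli \overline{E_i}$ for each $i$; since $\clfour$ proves $(P \mli Q_1) \mlc \ldots \mlc (P \mli Q_n) \mli (P \mli Q_1 \mlc \ldots \mlc Q_n)$, $\clfour$-Instantiation and Modus Ponens yield $\mathbb{L} \mli \overline{E}$. The disjunction and $\mli$ cases are analogous. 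For the choice operators the key observation is that $\mathbb{L}$ is \emph{false}: for $E = E_1 \add \ldots \add E_n$ I would use the induction hypothesis on (say) $E_1$ together with $\add$-Choose, giving $\arfour \vdash \mathbb{L} \mli \overline{E_1} \add \ldots \add \overline{E_n} = \overline{E}$; for $E = E_1 \adc \ldots \adc E_n$ I apply $\adc$-Introduction to the $n$ premises $\mathbb{L} \mli \overline{E_i}$. The quantifier cases $\ada, \ade, \cla, \cle$ follow the same template, using $\ade$-Choose (for $\ade$, and for $\ada$ via $\ada$-Introduction after instantiating) and the fact that $\overline{\cdot}$ commutes with quantifiers (no politeral of $E$ lies outside the quantifier's body, so the substitution of $\mathbb{L}$ is unaffected by the binding).

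\textbf{Main obstacle.} The delicate point is the interaction between $\overline{\cdot}$ and the structure of politerals under the choice operators, specifically ensuring that $\overline{E}$ really does decompose as $\overline{E_1} \; \mathsf{op} \; \overline{E_2}$ for each operator $\mathsf{op}$ --- this requires checking that politerals of $E$ are exactly the (disjoint) unions of politerals of the immediate subformulas, which holds because $\mathbb{L}$ is a closed $\pa$-sentence containing no choice operators and hence introduces no new politerals to worry about in the recursion. A subtler concern is the treatment of $\gneg$: since the paper's convention allows $\gneg$ only on atoms, negation never produces a genuine inductive case beyond the literal base case, so I need not separately handle $\gneg E$ for compound $E$. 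Provided one is careful that the $\clfour$-provable lifting schemes are genuinely available (each is a pure propositional or quantificational tautology in the general letters, hence provable by Match plus the propositional axioms), the induction goes through cleanly, and the whole lemma reduces to bookkeeping over the connectives.
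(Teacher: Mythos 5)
Your induction by principal connective breaks down at the $\cla$ case, and this is a genuine gap rather than bookkeeping. To pass from $\arfour\vdash \mathbb{L}\mli\overline{H(v)}$ (for a fresh $v$) to $\arfour\vdash \mathbb{L}\mli\cla x\,\overline{H(x)}$ you would need a $\cla$-introduction rule, and the paper explicitly notes (right after Fact \ref{adaintroclosure}) that $\ada$-Introduction ``with $\cla$ instead of $\ada$'' is \emph{not} admissible in $\arfour$; nor is there a $\clfour$-provable scheme in the style of $Q(v)\mli\cla xQ(x)$ that could substitute for it, since that principle is simply invalid. Concretely, for $E=\cla x(x\equals\zero\add x\notequals\zero)$ your recursion strips the $\cla$ first and then has no way to put it back around the non-elementary formula $\overline{x\equals\zero\add x\notequals\zero}$. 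The paper's proof is organized precisely to avoid this: it never recurses through $\cla,\cle,\mlc,\mld$ at all. Instead it peels off \emph{surface} occurrences of $\add$- and $\ade$-subformulas (which may sit underneath $\cla,\cle,\mlc,\mld$) using $\add$-Choose and $\ade$-Choose, and once none remain it applies Wait: the special premise $\elz{\mathbb{L}\mli\overline{E}}$ is an elementary formula established by a single classical argument inside $\pa$ (where universal generalization \emph{is} available), and the ordinary premises, which concern only surface $\adc$- and $\ada$-subformulas, are supplied by the induction hypothesis. Your cases for $\mlc$ (via the $\clfour$-provable $(p\mli Q_1)\mlc(p\mli Q_2)\mli(p\mli Q_1\mlc Q_2)$, with $\mathbb{L}$ correctly cast as an \emph{elementary} letter so that contraction on it is permitted), for $\add$/$\ade$ via Choose, and for $\adc$/$\ada$ via the Introduction rules are all fine; it is only $\cla$ that kills the architecture.

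A secondary error: in the $E=\tlg$ base case you assert that $\gneg\mathbb{L}$, being true, is provable in $\pa$ and hence in $\arfour$ by Fact \ref{nov7}. Truth does not imply $\pa$-provability, and $\gneg\mathbb{L}$ is exactly the kind of sentence one should expect to be unprovable --- indeed Theorem \ref{feb15} shows $\arfour\vdash\gneg\mathbb{L}\mli X$, so if $\gneg\mathbb{L}$ were provable then every polynomial-time-solvable $X$ would be intensionally provable, contradicting the paper's discussion of intensional incompleteness. You cannot discharge that case this way.
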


\begin{idea} $\overline{E}$ is a logical combination of ``quasipoliterals'' of the form $L\mld\mathbb{L}$. Under the assumption (of the truth of) $\mathbb{L}$, each such quasipoliteral becomes true and, correspondingly, $\overline{E}$ essentially becomes a logical combination of $\twg$s. Any such combination is very easy to solve/prove. \end{idea}

\begin{proof} We prove this lemma by induction on the complexity of $E$.

If $E$ has the form $E[H_1\add\ldots\add H_n]$, then, by the induction hypothesis, $\arfour\vdash \mathbb{L} \mli  \overline{E[H_1]}$. From here, by $\add$-Choose, we get the desired $\arfour\vdash \mathbb{L} \mli  \overline{E[H_1\add\ldots\add H_n]}$.

Quite similarly, if $E$ has the form $E[\ade x H(x)]$, then, by the induction hypothesis, $\arfour\vdash \mathbb{L} \mli  \overline{E[H(v)]}$  (for whatever variable $v$ you like). From here, by $\ade$-Choose, we get  $\arfour\vdash \mathbb{L}\mli  \overline{E[\ade x H(x)]}$.

Now assume $E$ has no surface occurrences of $\add$- and $\ade$-subformulas. The formula $\elz{\overline{E}}$ is a $(\mlc,\mld,\cla,\cle)$-combination of $\twg$s (originating from $\adc$- and $\ada$-subformulas when elementarizing $\overline{E}$) and formulas $L \mld\mathbb{L}$ (originating from $L$ when transferring from $E$ to $\overline{E}$) where $L$ is a politeral of $E$. $\twg$ is true.  If $\mathbb{L}$ is true, then each    $L\mld\mathbb{L}$ is also true no matter what the values of the variables of $L$ are (if $L$ contains any variables at all). Therefore, clearly, $\elz{\overline{E}}$, as a $(\mlc,\mld,\cla,\cle)$-combination of (always) true formulas, is  true.  Formalizing this argument in $\pa$ and hence in $\arfour$ yields $\arfour\vdash \mathbb{L} \mli  \elz{\overline{E}}$, which, taking into account that $\mathbb{L}$ is an elementary formula and hence $\mathbb{L}=\elz{\mathbb{L}}$, is the same as to say that   
\begin{equation}\label{jan4a}
\arfour\vdash \elz{\mathbb{L} \mli  \overline{E}}.
\end{equation}

Suppose $E$ has the form $E[H_1\adc\ldots\adc H_n]$. Then, by the induction hypothesis, $\arfour$  proves $\mathbb{L}\mli \overline{E[H_i]}$ for each $i\in\{1,\ldots,n\}$. Similarly, suppose 
$E$ has the form $E[\ada xH(x)]$. Let $v$ be a variable different from $\bound$ and not occurring in $E[\ada xH(x)]$. Then, again by the induction hypothesis, 
 $\arfour$  proves $\mathbb{L}\mli \overline{E[H(v)]}$. These observations, together with (\ref{jan4a}), by Wait, yield the desired $\arfour\vdash \mathbb{L}\mli \overline{E}$.
\end{proof}

We shall say that a run generated by the machine $\cal X$ is {\bf prompt}\label{iprompt} iff  $\oo$'s time in it is $0$. In   a prompt run, 
the environment always reacts to a move by $\cal X$  instantaneously (on the same clock cycle as on which $\cal X$ moved), or does not react at all.
An exception is clock cycle $\#0$, on which the environment can move even if $\cal X$ did not move. Such runs are convenient to deal with, because in them $\pp$'s time equals the timestamp of the last move. And this, in turn, means that no moves by either player are made at any time greater or equal to $\xi(b)$, where $b$ is the value assigned to $\bound$ by the valuation spelled on the valuation tape of the machine. 

By our assumption, $\cal X$ wins $X$ (in time $\xi$), meaning   that every run $\Gamma$ generated by $\cal X$ on a bounded valuation $e$ is a $\pp$-won run of $e[X] $, including the cases when $\Gamma$ is prompt and $e$ is standard. This allows us to focus on  prompt runs and standard valuations only.  Specifically, we are going to show that $\overline{X}$ is provable because $\cal X$ wins (in time $\xi$) every prompt run of $X$ on every standard bounded valuation.  

Further, for our present purposes, environment's possible strategies can be understood as (limited to) fixed/predetermined behaviors  seen as finite sequences of   moves with non-decreasing timestamps. Let us call such sequences {\bf counterbehaviors}.\label{icnb} The meaning of a counterbehavior \[\seq{(\alpha_1,t_1),(\alpha_2,t_2),\ldots,(\alpha_n,t_n)}\] is that the environment makes move $\alpha_1$ at time $t_1$, move $\alpha_2$ at time $t_2$, \ldots,  move $\alpha_n$ at time $t_n$. If two consecutive moves have the same timestamp, the moves are assumed to be made (appear in the run) in the same order as they are listed in the counterbehavior.

Given a standard valuation $e$  and a counterbehavior $C=\seq{(\alpha_1,t_1),\ldots,(\alpha_n,t_n)}$, by the {\bf $(C,e)$-branch}\label{icebranch} we mean the $e$-computation branch of $\cal X$ where the environment acts according to $C$ --- that is, makes move $\alpha_1$ at time $t_1$, \ldots, move $\alpha_n$ at time $t_n$. And the {\bf $(C,e)$-run}\label{icerun} is the run spelled by this branch.

For natural numbers $b$ and $d$, we  say that a counterbehavior $C$ is {\bf $(b,d)$-adequate}\label{iadequate} iff  the following three conditions are satisfied:
\begin{enumerate}
\item the $(C,e_b)$-run is  not a $\oo$-illegal run of $X(b)$;
\item the $(C,e_b)$-run is  prompt;
\item the timestamp of the last move of $C$ (if $C$ is nonempty) is less than $d$.
\end{enumerate}  
Thus, ``$C$ is $(b,d)$-adequate'' means that, using this counterbehavior against $\cal X$ with  $e_b$ on the valuation tape of the latter, the environment 
 has played legally (condition 1), acted fast/promptly (condition 2), and made all (if any) moves before  time $d$ (condition 3). 

Just as any finite objects,  counterbehaviors can be encoded through natural numbers. The {\bf code} ({\bf G\"{o}del number}) of an object $O$ will be denoted by \[\ulcorner O\urcorner .\label{ign}\] Under any encoding, 
 the size of the code of a counterbehavior of interest will generally exceed the value of $\bound$. But this is not going to be a problem as we will quantify counterbehaviors using blind rather than choice quantifiers.

For convenience, we assume that every natural number is the code of some counterbehavior. This allows us to terminologically identify counterbehaviors 
with their codes, and say phrases like ``$a$ is a $(b,d)$-adequate counterbehavior'' --- as done below --- which should be understood as ``Where $C$ is the counterbehavior with $a=\ulcorner C\urcorner$,
$C$ is $(b,d)$-adequate''. Similarly, ``the $(a,e)$-branch'' (or ``the $(a,e)$-run'') will mean ``the $(C,e)$-branch (or $(C,e)$-run) where $C$ is the counterbehavior with $a=\ulcorner C\urcorner$''.

Let $E=E(\bound,\vec{s})$ be a   formula all of whose  free variables are among  $\bound,\vec{s}$ (but not necessarily vice versa). We 
 will write  
\[\mathbb{W}^E(z,t_1,t_2,\bound,\vec{s})\]
to denote an elementary formula whose free variables are exactly (the pairwise distinct) $z,t_1,t_2,\bound,\vec{s}$, and which is a natural arithmetization of the predicate which,  for any constants $a,d_1,d_2,b,\vec{c}$, holds --- that is, $\mathbb{W}^E(a,d_1,d_2,b,\vec{c})$ is true --- iff  the following conditions are satisfied:
\begin{itemize}
\item $0\mless d_1\mleq d_2\mleq \xi(b)$;
\item $a$ is a $(b,d_1)$-adequate counterbehavior;  
\item where $\Phi$ is the initial segment of the $(a,e_b)$-run obtained from the latter by deleting all moves except those whose timestamps are  less than  $d_1$, $\Phi$ is a legal position of  $E(b,\vec{c})$;  
\item for the above  $\Phi$,  we have $\seq{\Phi}X(b)=E(b,\vec{c})$, and $\pa$ proves this fact;
\item either $d_1=1$ or, in the $(a,e_b)$-branch, $\cal X$ has made some move at time $d_1-1$ (so that the effect of that move first took place at time $d_1$);
\item for any $k$ with $d_1\mleq k\mless d_2$, no move is made at time $k$ (i.e. no move has the timestamp $k$) in the $(a,e_b)$-run. 
\end{itemize}

Thus, in the context of the $(a,e_b)$-branch,    $\mathbb{W}^E(a,d_1,d_2,b,\vec{c})$ says that, exactly by time $d_1$,\footnote{Only considering nonzero times in this context.} the play has hit the position  $E(b,\vec{c})$, and that this position has remained stable (there were  no moves to change it) throughout the interval $[d_1,d_2]$. It does not rule out that a move was made at time $d_2$ but, as we remember, the effect of such a move will take place by time $d_2+1$ rather than $d_2$.  

It may be worthwhile to comment on the meaning of the above for the special case where $t_2$ is $\xi(\bound)$. Keeping in mind that $\cal X$ runs in time $\xi(\bound)$, the formula 
\[\mathbb{W}^E(z,t,\xi(\bound),\bound,\vec{s}),\label{idoubleu}\]
for any given values $a,d,b,\vec{c}$ for $z,t,\bound,\vec{s}$, asserts --- or rather implies --- that, in the scenario of the $(a,e_b)$-branch, at time $d$, the play (position to which $X$ has evolved) hits $E(b,\vec{c})$ and remains stable ever after, so that  $E(b,\vec{c})$ is the final, ultimate position of the play. 

We say that a formula $E$ or the corresponding game is {\bf critical}\label{icritical} iff one of the following conditions is satisfied: 
\begin{itemize}
\item $E$ is a $\add$- or $\ade$-formula;
\item $E$ is $\cla y G$ or $\cle y G$, and $G$ is critical;
\item $E$ is a $\mld$-disjunction, with all disjuncts critical;
\item $E$ is a $\mlc$-conjunction, with at least one conjunct critical.
\end{itemize}

The importance of the above concept is related to the fact that ($\pa$ knows that)   a given legal run of $\overline{X}$ is lost by $\cal X$ if and only if the eventual formula/position hit by that run is  critical. 

\begin{lemma}\label{august20b}
Assume $E=E(\bound,\vec{s})$ is a non-critical  formula all of whose  free variables are among  $\bound,\vec{s}$. Further assume   $\theta, \omega,\vec{\psi}$ are any terms ($\vec{\psi}$ of the same length as $\vec{s}$),   
and $z$ is a variable not occurring in these terms or in $E$.  
Then
\[\arfour\vdash \cle z\mathbb{W}^E(z,\theta,\xi(\omega),\omega,\vec{\psi})  \mli \elz{\overline{E(\omega,\vec{\psi})}}.\] 
\end{lemma}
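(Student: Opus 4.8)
\textbf{Proof proposal for Lemma \ref{august20b}.}

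The plan is to proceed by induction on the complexity of the non-critical formula $E$, mirroring the case structure of the definition of ``critical''. The key semantic fact driving everything is this: $\mathbb{W}^E(a,d,\xi(b),b,\vec{c})$ asserts (in $\pa$, hence provably) that the play over $X(b)$ hits $E(b,\vec c)$ at time $d$ and remains there permanently. Since $\cal X$ wins $X$ in time $\xi$, this final position must be $\pp$-won, and the \emph{non-criticality} of $E$ is precisely the syntactic condition guaranteeing that $\elz{E(b,\vec c)}$ is true at such a terminal position. So morally the lemma says: ``if $\cal X$'s play provably terminates in the non-critical position $E(\omega,\vec\psi)$, then the elementarization of $\overline{E(\omega,\vec\psi)}$ holds.'' First I would handle the atomic/literal base case: when $E$ is a literal $L$, non-criticality forces (via the above semantic fact and Lemma \ref{august12}) that $L\mld\mathbb{L}$ is true whenever $\cle z\mathbb{W}^E$ holds, and this whole implication is an elementary arithmetic statement provable in $\pa$, hence in $\arfour$ by Fact \ref{nov7}.

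Next I would treat the propositional connective cases, which all reduce to elementary ($\pa$-provable) reasoning once the inductive hypotheses are in hand. For $E$ a $\mld$-disjunction of non-critical $E_i$: since $E$ is non-critical, \emph{some} disjunct $E_i$ is non-critical, and I would show $\pa$ proves that $\cle z\mathbb{W}^E(z,\theta,\xi(\omega),\omega,\vec\psi)$ implies $\cle z\mathbb{W}^{E_i}(z,\theta,\xi(\omega),\omega,\vec\psi)$ (the play stabilizing at the disjunction means it stabilizes at each disjunct-position, as no further moves occur), then invoke the induction hypothesis on $E_i$ and weaken the resulting $\elz{\overline{E_i(\omega,\vec\psi)}}$ up to the disjunction $\elz{\overline{E(\omega,\vec\psi)}}$. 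For $E$ a $\mlc$-conjunction, non-criticality requires \emph{every} conjunct to be non-critical; I would derive from $\cle z\mathbb{W}^E$ the corresponding $\cle z\mathbb{W}^{E_i}$ for each conjunct, apply the induction hypothesis to each, and combine. The crucial point making all these steps legitimate is that $\elz{\cdot}$ distributes over $\mlc,\mld,\cla,\cle$ and that the implications between the various $\mathbb{W}$-formulas are purely arithmetical facts about the (arithmetized) run of $\cal X$, so each is discharged by a single $\pa$/$\arfour$-provability appeal. The blind quantifier cases ($E=\cla yG$ or $\cle yG$ with $G$ non-critical) are handled the same way: $\elz{\overline{\cla yG}}=\cla y\elz{\overline G}$, and since the position is stable independently of the (invisible) value of $y$, I can apply the induction hypothesis uniformly and reattach the $\cla$ or $\cle$ prefix at the elementary level.

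The main obstacle I anticipate is \emph{not} the inductive bookkeeping but the precise arithmetical verification underlying the base and connective steps: namely that the formula $\mathbb{W}^E$ really does let $\pa$ prove the relevant ``stabilization'' and ``sub-position'' facts about $\cal X$'s run. Concretely, I need $\pa$ to prove implications of the form ``if the play hits the non-critical $\mld$-disjunction $E$ at time $\theta$ and stays put through $\xi(\omega)$, then it hits a specific non-critical disjunct $E_i$ and stays put,'' and ``a stabilized non-critical terminal position has a true elementarization.'' These rest on the provable traceability of $\cal X$'s computation (the PTI-driven machinery sketched in Section \ref{gggg}) and on the fact, already noted in the text, that $\pa$ knows a legal run of $\overline X$ is $\cal X$-lost iff its eventual position is critical. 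I would package each such fact as a standalone $\pa$-provability claim and cite $\pa$'s ability to reason about arithmetized computation, exactly as the ``lazy'' proof style of $\arfour$ permits. The delicate part is ensuring the free-variable conditions (that $z$ is fresh, that $\vec\psi,\theta,\omega$ avoid bound variables, and that substitution of $\omega$ for $\bound$ correctly propagates into the bounded quantifiers $\ada^\bound,\ade^\bound$ inside $E$) are respected throughout, so that the syntactic manipulations of $\elz{\overline{E(\omega,\vec\psi)}}$ are well-defined; Convention \ref{jan26} and the safety assumption on $X$ should make this manageable.
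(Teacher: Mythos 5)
There is a genuine gap in your inductive step, and it stems from a misreading of what $\mathbb{W}^E$ asserts. By its definition, $\mathbb{W}^E(a,d_1,d_2,b,\vec{c})$ requires that for the relevant position $\Phi$ of the $(a,e_b)$-run we have $\seq{\Phi}X(b)=E(b,\vec{c})$ — i.e., that the \emph{whole game} $X(b)$ has evolved to exactly the position $E(b,\vec{c})$. If $E$ is, say, $E_1\mld E_2$, the disjunct $E_1$ is not itself a position that $X(b)$ evolves to under any run: parallel disjuncts of a reached position are not separately reachable positions of $X$. So your claimed $\pa$-provable implication from $\cle z\mathbb{W}^{E}(z,\theta,\xi(\omega),\omega,\vec\psi)$ to $\cle z\mathbb{W}^{E_i}(z,\theta,\xi(\omega),\omega,\vec\psi)$ is simply false (the latter is typically unsatisfiable), and with it the entire compositional induction that threads the $\mathbb{W}$-antecedent through subformulas collapses. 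The same objection applies to your $\mlc$ and blind-quantifier cases. The appeal to "provable traceability" and the PTI machinery is also a red herring here: this lemma is a purely elementary statement, discharged entirely inside $\pa$ via Fact \ref{nov7}, with no game-playing steps at all.

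The paper's proof avoids the decomposition of $\mathbb{W}^E$ altogether by a reductio carried out in $\pa$: assume $\cle z\mathbb{W}^E(z,\theta,\xi(\omega),\omega,\vec\psi)$ is true but $\elz{\overline{E(\omega,\vec\psi)}}$ is false. Falsity of $\elz{\overline{E(\omega,\vec\psi)}}$ forces falsity of $\elz{E(\omega,\vec\psi)}$ (the former only strengthens each politeral $L$ to $L\mld\mathbb{L}$), so the ultimate position of the play has a false elementarization, so by Lemma \ref{new1} the machine loses and $\mathbb{L}$ is true. But then a trivial induction on the structure of \emph{non-critical} formulas — using only the truth of $\mathbb{L}$, with no reference to $\mathbb{W}$ for subformulas: literals become the true $L\mld\mathbb{L}$, surface $\adc/\ada$ elementarize to $\twg$, surface $\add/\ade$ are excluded by non-criticality, and $\mlc/\mld/\cla/\cle$ propagate truth — shows $\elz{\overline{E(\omega,\vec\psi)}}$ is true, a contradiction. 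If you want to salvage your writeup, replace your compositional induction with this single $\pa$-internal reductio; the only induction needed is the truth-theoretic one under the hypothesis that $\mathbb{L}$ holds, which is essentially the same induction already used for Lemma \ref{jan4d}.
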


\begin{idea} The antecedent of the above formula implies that some run of $X$ generated by $\cal X$ yields the non-critical eventual position $E(\omega,\vec{\psi})$. If $\elz{E(\omega,\vec{\psi})}$ is true, then so is $\elz{\overline{E(\omega,\vec{\psi})}}$. Otherwise, if $\elz{E(\omega,\vec{\psi})}$ is false, $\cal X$ has lost, so $\mathbb{L}$ is true. But the truth of the formula $\mathbb{L}$, which is disjuncted with every politeral of $\overline{E(\omega,\vec{\psi})}$, easily implies the truth of $\elz{\overline{E(\omega,\vec{\psi})}}$. This argument is formalizable in $\pa$. 
\end{idea}

\begin{proof} Assume the conditions of the lemma. Argue in $\pa$. Consider arbitrary values of $\theta$, $\omega$,  $\vec{\psi}$, which we continue writing as $\theta$, $\omega$,  $\vec{\psi}$. Suppose, for a contradiction, that the ultimate position --- that is, the position reached by the time $\xi(\omega)$ --- of some play of $\cal X$ over $X$  is $E(\omega,\vec{\psi})$ (i.e., $\cle z\mathbb{W}^E(z,\theta,\xi(\omega),\omega,\vec{\psi})$ is true) but
 $\elz{\overline{E(\omega,\vec{\psi})}}$ is false. The falsity of $\elz{\overline{E(\omega,\vec{\psi}})}$ implies the falsity of $\elz{E(\omega,\vec{\psi})}$. This is so because the only difference between the two formulas is that, wherever the latter has some politeral $L$, the former has a disjunction containing $L$ as a disjunct.  

But ending with an  ultimate position whose elementarization is false means that $\cal X$ does not win $X$ in time $\xi$ (remember Lemma \ref{new1}). In other words, 
\begin{equation}\label{jan4b}
\mbox{\em $\mathbb{L}$ is true.}
\end{equation}

Consider any non-critical   formula $G$.  By induction on the complexity of $G$, we are going to show that $\elz{\overline {G}}$ is true for any values of its free variables. Indeed:

If $G$ is a literal, then  $\elz{\overline {G}}$ is $G\mld \mathbb{L}$ which, by (\ref{jan4b}), is true. 

If $G$ is $H_1\adc\ldots \adc H_n$ or $\ada xH(x)$, then $\elz{\overline {G}}$ is $\twg$ and is thus true.

$G$ cannot be $H_1\add\ldots \add H_n$ or $\ade xH(x)$, because then it would be critical. 

If $G$ is $\cla yH(y)$ or $\cle y H(y)$, then $\elz{\overline {G}}$ is $\cla y\elz{\overline{H(y)}}$ or $\cle y\elz{\overline{H(y)}}$. In either case $\elz{\overline{G}}$ is true because, by the induction hypothesis, $\elz{\overline{H(y)}}$ is true for every value of its free variables, including variable $y$.

 If $G$ is $H_1\mlc\ldots \mlc H_n$, then the formulas $H_1,\ldots,H_n$ are non-critical. Hence, by the induction hypothesis, $\elz{\overline{H_1}},\ldots,\elz{\overline{H_n}}$ are  true. Hence so is  $\elz{\overline{H_1}}\mlc\ldots\mlc\elz{\overline{H_n}}$ which, in turn, is nothing but $\elz{\overline{G}}$. 

Finally, if  $G$ is $H_1\mld\ldots \mld H_n$, then one of the formulas $H_i$ is non-critical. Hence, by the induction hypothesis, $\elz{\overline{H_i}}$ is  true. Hence so is  $\elz{\overline{H_1}}\mld\ldots\mld\elz{\overline{H_n}}$ which, in turn, is nothing but $\elz{\overline{G}}$.

Thus, for any non-critical formula $G$, $\elz{\overline {G}}$ is true. This includes the case $G= E(\omega,\vec{\psi}) $  which, however,  contradicts our earlier observation that $\elz{E(\omega,\vec{\psi})}$ is false.  
\end{proof}

\begin{lemma}\label{august20a}
Assume $E=E(\bound,\vec{s})$ is a  critical   formula all of whose  free variables are among  $\bound,\vec{s}$. Further assume   $\theta,\omega, \vec{\psi}$ are any terms ($\vec{\psi}$ of the same length as $\vec{s}$),  and $z$ is a variable not occurring in these terms or in $E$. Then
\[\arfour\vdash \cle z\mathbb{W}^E(z,\theta,\xi(\omega),\omega,\vec{\psi})  \mli \overline{E(\omega,\vec{\psi})}.\] 
\end{lemma}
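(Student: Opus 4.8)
\textbf{Proof proposal for Lemma \ref{august20a}.}

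The plan is to mirror the structure of the proof of the non-critical case (Lemma \ref{august20b}), but now exploit the fact that being critical is exactly the syntactic condition governing how the run of $\cal X$ can lose, and use it to drive an induction on the complexity of $E$ in which the inductive hypotheses of both Lemma \ref{august20b} and the present lemma are available. The key observation, as flagged in the remarks preceding the lemma, is that ($\pa$ knows that) a legal run hitting a critical eventual position is lost by $\cal X$; hence whenever $\cle z\mathbb{W}^E(z,\theta,\xi(\omega),\omega,\vec\psi)$ holds for a critical $E$, the machine $\cal X$ has failed to win $X$ in time $\xi$, and therefore $\mathbb{L}$ is true. But once $\mathbb{L}$ is established, Lemma \ref{jan4d} gives $\arfour\vdash \mathbb{L}\mli\overline{E(\omega,\vec\psi)}$ outright. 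So the crude route is: formalize in $\pa$ the implication ``$\cle z\mathbb{W}^E(\ldots)\mli\mathbb{L}$'' for critical $E$, then chain it with Lemma \ref{jan4d} by Transitivity to get the target $\cle z\mathbb{W}^E(\ldots)\mli\overline{E(\omega,\vec\psi)}$.

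First I would make precise why criticality forces a loss. Recall the four clauses defining criticality: a surface $\add$/$\ade$-formula, a $\cla$/$\cle$ over a critical body, a $\mld$ all of whose disjuncts are critical, or a $\mlc$ with at least one critical conjunct. By induction on the complexity of the critical $E$, I would argue inside $\pa$ that $\elz{E(\omega,\vec\psi)}$ is \emph{false} for the relevant values: a surface $\add$/$\ade$ elementarizes to $\tlg$; a $\mld$ of critical (hence false-elementarizing) disjuncts is false; a $\mlc$ with a critical conjunct is false because that conjunct elementarizes to false; and $\cla y/\cle y$ over a critical body stays false. (This is the exact dual of the computation of $\elz{\overline G}$ carried out in Lemma \ref{august20b}, and uses Lemma \ref{new1} to connect the elementarization with the automatic win/loss.) Then, since the antecedent $\cle z\mathbb{W}^E(z,\theta,\xi(\omega),\omega,\vec\psi)$ asserts (implies) that the ultimate position reached by time $\xi(\omega)$ in some run of $\cal X$ over $X$ is $E(\omega,\vec\psi)$, and that position has false elementarization, we conclude via Lemma \ref{new1} that $\cal X$ does not win $X$ in time $\xi$ --- which is precisely what $\mathbb{L}$ asserts. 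All of this reasoning about codes, runs, and timestamps is arithmetical and hence formalizable in $\pa$, so we obtain $\arfour\vdash \cle z\mathbb{W}^E(z,\theta,\xi(\omega),\omega,\vec\psi)\mli\mathbb{L}$.

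The closing step is then purely logical: from $\arfour\vdash \cle z\mathbb{W}^E(\ldots)\mli\mathbb{L}$ and the already-proven $\arfour\vdash\mathbb{L}\mli\overline{E(\omega,\vec\psi)}$ (Lemma \ref{jan4d}), Transitivity (Fact \ref{Transclosure}) yields the desired $\arfour\vdash \cle z\mathbb{W}^E(z,\theta,\xi(\omega),\omega,\vec\psi)\mli\overline{E(\omega,\vec\psi)}$. Note that, unlike the non-critical case, here no choice move has to be made by $\pp$ in the consequent: the whole burden is shifted onto establishing $\mathbb{L}$, after which $\overline{E(\omega,\vec\psi)}$ becomes provable ``for free'' because $\mathbb{L}$ is disjuncted into every politeral.

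The main obstacle I expect is the faithful arithmetization of the criticality-implies-loss fact, i.e.\ writing down the $\pa$-formalizable argument that connects the combinatorial definition of ``critical'' with the game-semantical statement ``the eventual position is $\oo$-won'' and hence ``$\cal X$ loses''. The subtlety is that the antecedent $\mathbb{W}^E$ only guarantees stability of the position $E(\omega,\vec\psi)$ on the interval $[\theta,\xi(\omega)]$ and that this is the position hit at time $\theta$; one must argue that no move after time $\xi(\omega)$ can occur (since $\cal X$ runs in time $\xi$ and the run is prompt, a point already used in defining $\mathbb{W}^E(z,t,\xi(\bound),\bound,\vec s)$), so that $E(\omega,\vec\psi)$ is genuinely the final position whose (false) elementarization decides the game by Lemma \ref{new1}. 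Getting these quantifier bounds and timestamp bookkeeping right inside $\pa$ is the delicate part; the rest is a routine induction on formula complexity dualizing Lemma \ref{august20b} plus one application of Transitivity.
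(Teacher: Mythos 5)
Your proposal is correct and follows essentially the same route as the paper: show by induction on complexity that the elementarization of a critical formula is false, conclude inside $\pa$ that the antecedent forces $\mathbb{L}$ (exactly as in the derivation of the corresponding fact in Lemma \ref{august20b}), and then chain with Lemma \ref{jan4d} by Transitivity. The mutual-induction framing in your opening paragraph is unnecessary — the argument you actually give needs no inductive hypothesis for the present lemma itself — but this does not affect correctness.
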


\begin{proof} Assume the conditions of the lemma. By induction on complexity, one can easily see that the elementarization of any critical formula is false. Thus, $\elz{E(\omega,\vec{\psi})}$ is false. Arguing further as we did in the proof of Lemma \ref{august20b} when deriving (\ref{jan4b}), we find that, if  $\cle z \mathbb{W}^E(z,\theta,\xi(\omega),\omega,\vec{\psi})$ is true,  then so is $\mathbb{L}$. And this argument can be formalized in $\pa$, so that we have 
\[\arfour\vdash \cle z \mathbb{W}^E(z,\theta,\xi(\omega),\omega,\vec{\psi})\mli \mathbb{L}.\]
The above, together with Lemma \ref{jan4d}, by Transitivity, implies 
$\arfour\vdash \cle z\mathbb{W}^E(z,\theta,\xi(\omega),\omega,\vec{\psi})  \mli \overline{E(\omega,\vec{\psi})}$.               
\end{proof}

\subsection{Taking care of the case of small bounds} 
$|\xi(\bound)|$ is logarithmic in $\bound$ and hence,  generally, it will be    much  smaller   than $\bound$. However, there are exceptions.   For instance, when $\bound=1$ and $\xi(\bound)=\bound\plus\bound$, the size of $\xi(\bound)$ is $2$, exceeding $\bound$. Such exceptions will   only occur in a finite number of cases,   where $\bound$ is ``very  small''. These pathological cases --- the cases with $\gneg |\xi(\bound)|\mleq \bound$ --- require a separate handling, which we present in this subsection. The main result here is Lemma \ref{pathology6}, according to which  $\arfour$ proves $\gneg |\xi(\bound)|\mleq \bound\mli \overline{X}$, i.e. proves the target $\overline{X}$ on the assumption that we are dealing with a pathologically small $\bound$. The remaining, ``normal'' case of $|\xi(\bound)|\mleq \bound$ will be taken care of later in Subsection \ref{jjjj}.

For a natural number $n$, by the {\bf formal numeral for}\label{ifornum} $n$, denoted $\hat{n}$, we will mean some standard variable-free term representing $n$. For clarity, let us say that the formal numeral for zero is $\zero 0$, the formal numeral for one is $\zero 1$, the formal numeral for two is $\zero 10$, the formal numeral for three is $\zero 11$, the formal numeral for four is $\zero 100$, etc. 

The above-mentioned provability of $\gneg |\xi(\bound)|\mleq \bound\mli \overline{X}$ will be established through showing (Lemma \ref{pathology5}) that, for each particular positive integer $b$, including all of the finitely many $b$'s with $\gneg |\xi(\hat{b})|\mleq \hat{b}$, $\arfour$ proves $\bound\equals\hat{b}\mli\overline{X}$. But we need a little preparation first. 

\begin{lemma}\label{pathology3}
Let $r$ be any variable, $b$ any positive integer,   and $N$  the set of all natural numbers $a$ with $|a|\mleq b$. Then 
\[\arfour\vdash \bound\equals \hat{b}\mli \add\{r\equals \hat{a}\ |\ a\in N\}.\]
\end{lemma}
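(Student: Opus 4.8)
The plan is to prove this for each fixed positive integer $b$, so that $N$ is a fixed \emph{finite} set and $\add\{r\equals\hat{a}\ |\ a\in N\}$ is a genuine finite choice disjunction. The strategy being formalized is transparent: under $\bound\equals\hat{b}$ we have $|r|\mleq\bound$ by Axiom~13, so the value of $r$ is one of the finitely many numbers of size $\mleq b$, i.e.\ some $a\in N$; testing $r$ against each $\hat{a}$ in turn, we eventually hit the correct disjunct and choose it. The whole argument rests on the decidability of equality, Lemma~\ref{nov18a}.

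First I would establish the auxiliary fact that, for every single $a\in N$,
\[\arfour\vdash \bound\equals\hat{b}\mli\bigl(r\equals\hat{a}\add r\notequals\hat{a}\bigr).\]
The difficulty here is purely syntactic: Lemma~\ref{nov18a} together with $\ada$-Elimination (applied twice) only yields $r\equals s\add r\notequals s$ with a \emph{free variable} $s$, and no rule lets one substitute the constant $\hat{a}$ for $s$. I would circumvent this in the style of the proof of Lemma~\ref{nov18a}. From the two logical axioms $s\equals\hat{a}\mlc r\equals s\mli r\equals\hat{a}$ and $s\equals\hat{a}\mlc r\notequals s\mli r\notequals\hat{a}$, two applications of $\add$-Choose followed by $\adc$-Introduction give
\[\arfour\vdash s\equals\hat{a}\mlc\bigl(r\equals s\add r\notequals s\bigr)\mli\bigl(r\equals\hat{a}\add r\notequals\hat{a}\bigr).\]
Currying this by $\clfour$-Instantiation and then discharging $r\equals s\add r\notequals s$ by Modus Ponens leaves $s\equals\hat{a}\mli(r\equals\hat{a}\add r\notequals\hat{a})$. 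Since $s$ now occurs only in the politeral $s\notequals\hat{a}$ at the top $\mld$-level, $\ada$-Introduction on $s$ converts this into $\ade x(x\equals\hat{a})\mli(r\equals\hat{a}\add r\notequals\hat{a})$. Finally, because $|a|\mleq b$ is a concrete numerical fact, $\pa$ proves $\bound\equals\hat{b}\mli|\hat{a}|\mleq\bound$; combining this with Lemma~\ref{com} instantiated at the term $\hat{a}$ (which supplies $|\hat{a}|\mleq\bound\adi\ade x(x\equals\hat{a})$) via $\add$-Elimination yields $\arfour\vdash\bound\equals\hat{b}\mli\ade x(x\equals\hat{a})$, and Transitivity then delivers the displayed auxiliary fact.

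With the decision step available for each $a\in N$, I would assemble the full disjunction by a finite downward (meta)induction. Enumerating $N=\{a_1,\ldots,a_m\}$ (where $m=2^{b}$), I would prove, for $i$ running from $m+1$ down to $1$, that
\[\arfour\vdash \bound\equals\hat{b}\mlc r\notequals\hat{a_1}\mlc\cdots\mlc r\notequals\hat{a_{i-1}}\mli\add\{r\equals\hat{a_j}\ |\ i\leq j\leq m\}.\]
The base case $i=m+1$ has the empty choice disjunction $\tlg$ as consequent; here the antecedent together with Axiom~13 ($|r|\mleq\bound$) is refutable in $\pa$, since $\bound\equals\hat{b}$ forces $|r|\mleq b$ and hence $r\in N=\{a_1,\ldots,a_m\}$, contradicting $r\notequals\hat{a_1}\mlc\cdots\mlc r\notequals\hat{a_m}$; so the implication is an elementary theorem obtained by one Modus Ponens with Axiom~13. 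For the downward step from $i+1$ to $i$, I would use the auxiliary fact to decide $r\equals\hat{a_i}\add r\notequals\hat{a_i}$: choosing $r\equals\hat{a_i}$ settles the case $r\equals\hat{a_i}$ (it is a disjunct of the target), while in the case $r\notequals\hat{a_i}$ the antecedent becomes the $(i+1)$-instance, whose conclusion embeds into the target by Weakening; an $\adc$-Introduction over these two cases produces the $i$-instance. At $i=1$ the antecedent is just $\bound\equals\hat{b}$ and the consequent is $\add\{r\equals\hat{a}\ |\ a\in N\}$, which is precisely the statement of the lemma.

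The main obstacle is the auxiliary step — turning the generic decidability of equality into a decision against a \emph{specific} numeral $\hat{a}$ under the hypothesis $\bound\equals\hat{b}$ — which forces the witness-production via Lemma~\ref{com} and the $\ada$-Introduction maneuver that discharges the spare variable $s$. Everything after that is bookkeeping: the assembly is a routine finite induction whose only nonlogical ingredient is Axiom~13, invoked once to guarantee that the tested values exhaust $N$.
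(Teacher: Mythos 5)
Your proposal follows essentially the same route as the paper's proof: the same two ingredients (per-element decidability of $r\equals\hat{a}$, obtained from Lemma \ref{nov18a} and Lemma \ref{com} together with the $\pa$-fact $\bound\equals\hat{b}\mli|\hat{a}|\mleq\bound$, and exhaustiveness of $N$, obtained from Axiom 13), with only the final assembly organized differently. The paper packages the assembly as a single $\clthree$-provable implication taking the $\mld$-disjunction $\bound\equals\hat{b}\mli\mld\{r\equals\hat{a}\ |\ a\in N\}$ and the $\mlc$-conjunction of the decidability facts to the target, discharged by one Modus Ponens, whereas you run an explicit downward meta-induction; both are legitimate bookkeeping for the same strategy.

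One step of your assembly is miscited, however: passing from $\add\{r\equals\hat{a_j}\ |\ i+1\leq j\leq m\}$ to $\add\{r\equals\hat{a_j}\ |\ i\leq j\leq m\}$ is not an instance of Weakening, which only inserts $\mld$-disjuncts into a $\mld$-context and cannot add a $\add$-disjunct. The step itself is sound — $\add\{P_j\ |\ i+1\leq j\leq m\}\mli\add\{P_j\ |\ i\leq j\leq m\}$ is $\clfour$-provable (bottom-up: Wait on the antecedent, then the matching $\add$-Choose in the consequent, then Match), so $\clfour$-Instantiation followed by Transitivity repairs it — but as literally written the appeal to Weakening would fail.
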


\begin{idea} On the assumption $\bound\equals \hat{b}$ and due to Axiom 13, $\arfour$ knows that, whatever $r$ is, its size cannot exceed $\hat{b}$. In other words, it knows that $r$ has to be one of the elements of $N$. The main technical part of our proof of the lemma is devoted to showing that this knowledge is, in fact, constructive, in the sense that $\arfour$ can tell exactly which ($\add$) element of $N$ the number $r$ is.   
\end{idea}

\begin{proof} Assume the conditions of the lemma. Obviously we have
\[\pa\vdash |r|\mleq\bound\mli \bound\equals \hat{b}\mli \mld \{r\equals \hat{a}\ |\ a\in N\},\]
modus-ponensing which with Axiom 13 yields
\begin{equation}\label{dec10}
\arfour\vdash \bound\equals \hat{b}\mli \mld\{r\equals \hat{a}\ |\ a\in N\}.
\end{equation}

Next, consider any $a\in N$. We claim that 
\begin{equation}\label{dec11}
\arfour\vdash \bound\equals \hat{b}\mli r\equals\hat{a}\add r\notequals\hat{a}, 
\end{equation}
which is justified as follows:\vspace{7pt}

\noindent 1. $\begin{array}{l} 
\gneg |\hat{a}|\mleq \bound\add \ade z (z\equals \hat{a})
\end{array}$ \ \  Lemma \ref{com} \vspace{3pt}

\noindent 2. $\begin{array}{l} 
\gneg |\hat{a}|\mleq \bound \mli \bound\notequals \hat{b}
\end{array}$ \ \  $\pa$ \vspace{3pt}

\noindent 3. $\begin{array}{l} 
\gneg |\hat{a}|\mleq \bound \mli \bigl(\bound\equals \hat{b}\mli \ade z(z\equals \hat{a})\bigr)
\end{array}$ \ \  Weakening: 2\vspace{3pt}

\noindent 4. $\begin{array}{l} 
\ade z (z\equals \hat{a}) \mli \bigl(\bound\equals \hat{b}\mli \ade z(z\equals \hat{a})\bigr)
\end{array}$ \ \  $\clfour$-Instantiation, instance of $P\mli(q\mli P)$ \vspace{3pt}

\noindent 5. $\begin{array}{l} 
\bound\equals \hat{b}\mli \ade z(z\equals \hat{a})
\end{array}$ \ \  $\add$-Elimination: 1,3,4\vspace{3pt}

\noindent 6. $\begin{array}{l} 
\ada x\ada y(y\equals x\add y\notequals x)  
\end{array}$ \ \  Lemma \ref{nov18a}\vspace{3pt}

\noindent 7. $\begin{array}{l} 
s\equals r\add s\notequals r  
\end{array}$ \ \  $\ada$-Elimination (twice): 6\vspace{3pt}

\noindent 8. $\begin{array}{l} 
s\equals r\mli s\equals \hat{a}\mli  r\equals\hat{a}
\end{array}$ \ \   Logical axiom \vspace{3pt}

\noindent 9. $\begin{array}{l} 
s\equals r\mli s\equals \hat{a}\mli  r\equals\hat{a}\add r\notequals\hat{a}
\end{array}$ \ \  $\add$-Choose: 8\vspace{3pt}

\noindent 10. $\begin{array}{l} 
s\notequals r\mli s\equals \hat{a}\mli   r\notequals\hat{a}
\end{array}$ \ \ Logical axiom \vspace{3pt}

\noindent 11. $\begin{array}{l} 
s\notequals r\mli s\equals \hat{a}\mli  r\equals\hat{a}\add r\notequals\hat{a}
\end{array}$ \ \  $\add$-Choose: 10\vspace{3pt}

\noindent 12. $\begin{array}{l} 
s\equals \hat{a}\mli  r\equals\hat{a}\add r\notequals\hat{a}
\end{array}$ \ \  $\add$-Elimination: 7,9,11\vspace{3pt}

\noindent 13. $\begin{array}{l} 
\ade z(z\equals \hat{a})\mli  r\equals\hat{a}\add r\notequals\hat{a}
\end{array}$ \ \  $\ada$-Introduction: 12\vspace{3pt}

\noindent 14. $\begin{array}{l} 
\bound\equals \hat{b}\mli r\equals\hat{a}\add r\notequals\hat{a}
\end{array}$ \ \  TR: 5,13\vspace{7pt}

Now, with a little thought, the formula 
\[ (\bound\equals \hat{b}\mli \mld\{r\equals \hat{a}\ |\ a\in N\})\mlc \mlc\{\bound\equals \hat{b}\mli r\equals\hat{a}\add r\notequals\hat{a}\ | \ a\in N\} \mli
( \bound\equals \hat{b}\mli \add\{r\equals \hat{a}\ |\ a\in N\})\]
can be seen to be provable in $\clthree$ and hence in $\arfour$. Modus-ponensing the above with (\ref{dec10}) and (\ref{dec11}) yields the desired $\arfour\vdash \bound\equals \hat{b}\mli \add\{r\equals \hat{a}\ |\ a\in N\}$. 
  \end{proof}

\begin{lemma}\label{pathology2}
Let $r$ be any variable, $b$  any positive integer, and $E(r)$ any  formula.
Assume that, for each natural number $a$ with $|a|\mleq b$, $\arfour\vdash  E(\hat{a}) $. Then $\arfour\vdash \bound\equals \hat{b}\mli  E(r) $. 
\end{lemma}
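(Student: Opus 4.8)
Lemma \ref{pathology2}, whose statement is:

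\begin{lemma}
Let $r$ be any variable, $b$ any positive integer, and $E(r)$ any formula.
Assume that, for each natural number $a$ with $|a|\mleq b$, $\arfour\vdash E(\hat{a})$. Then $\arfour\vdash \bound\equals \hat{b}\mli E(r)$.
\end{lemma}

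The plan is to combine the constructive case-analysis provided by Lemma \ref{pathology3} with the rule of $\add$-Elimination. Let $N$ be the set of all natural numbers $a$ with $|a|\mleq b$, exactly as in Lemma \ref{pathology3}. That lemma gives us immediately
\[\arfour\vdash \bound\equals \hat{b}\mli \add\{r\equals \hat{a}\ |\ a\in N\},\]
which supplies the ``choice disjunction'' premise needed for $\add$-Elimination. The idea is that, under the assumption $\bound\equals\hat{b}$, we can efficiently tell which element $\hat{a}$ of $N$ the variable $r$ equals, and then fall back on the corresponding hypothesis $\arfour\vdash E(\hat{a})$.

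First I would observe that, for each $a\in N$, the hypothesis $\arfour\vdash E(\hat{a})$ together with the fact that $\clfour$ proves the purely propositional schema $P\mli(p\mli q\mli P)$ yields, by $\clfour$-Instantiation and Modus Ponens, the implication
\[\arfour\vdash r\equals\hat{a}\mli \bound\equals\hat{b}\mli E(\hat{a}).\]
Next I would rewrite the consequent $E(\hat{a})$ as $E(r)$ under the hypothesis $r\equals\hat{a}$: since $\clfour$ proves the substitution schema $\bigl(s\equals f\mli P(f)\bigr)\mli\bigl(s\equals f\mli P(s)\bigr)$ (the formula (\ref{sch1})-style principle already used elsewhere in the paper), $\clfour$-Instantiation and Modus Ponens give
\[\arfour\vdash r\equals\hat{a}\mli \bound\equals\hat{b}\mli E(r)\]
for each $a\in N$. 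Reordering the two antecedents (again a trivial $\clfour$-Instantiation of $(p\mli q\mli P)\mli(q\mli p\mli P)$, then MP) produces the ``side'' premises of the desired $\add$-Elimination in the uniform shape $r\equals\hat{a}\mli\bigl(\bound\equals\hat{b}\mli E(r)\bigr)$.

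Finally I would apply $\add$-Elimination (Fact \ref{addelclosure}) with $\add\{r\equals\hat{a}\mid a\in N\}$ as the choice-disjunction premise and the family $\{r\equals\hat{a}\mli(\bound\equals\hat{b}\mli E(r))\mid a\in N\}$ as the matching implications; strictly speaking I first discharge the extra hypothesis $\bound\equals\hat{b}$ by noting that the whole argument may be carried out under that assumption, or equivalently I keep $\bound\equals\hat{b}$ as a persistent weakened antecedent throughout. Combining the disjunction from Lemma \ref{pathology3} with these implications yields $\arfour\vdash \bound\equals\hat{b}\mli E(r)$, as required. I do not anticipate a serious obstacle here: the only point demanding care is the bookkeeping of the two antecedents $r\equals\hat{a}$ and $\bound\equals\hat{b}$ so that $\add$-Elimination applies in the exact syntactic form of Fact \ref{addelclosure}; this is handled by routine $\clfour$-Instantiation steps that permute, weaken, and discharge the elementary hypotheses, all of which are already licensed by the admissible logical rules established in the earlier sections.
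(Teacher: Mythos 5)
Your proposal is correct and follows essentially the same route as the paper: both invoke Lemma \ref{pathology3} for the choice disjunction $\bound\equals\hat{b}\mli\add\{r\equals\hat{a}\mid a\in N\}$ and derive $r\equals\hat{a}\mli E(r)$ for each $a\in N$ from the hypothesis via a $\clfour$-instance of the substitution schema. The only (cosmetic) difference is the final combination: the paper applies $\adc$-Introduction to get $\add\{r\equals\hat{a}\mid a\in N\}\mli E(r)$ and then Transitivity, which sidesteps the antecedent-bookkeeping your $\add$-Elimination variant forces you to handle by hand.
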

\begin{proof} Assume the conditions of the lemma. Let $N$ be the set of all numbers $a$ with $|a|\mleq b$.  Consider any $a\in N$. Clearly,  by $\clfour$-Instantiation, $\arfour\vdash E(\hat{a})\mli r\equals \hat{a}\mli  E(r) $. Modus-ponensing this with the assumption $\arfour\vdash  E(\hat{a}) $ yields  $\arfour\vdash r\equals \hat{a}\mli  E(r) $. This holds for all $a\in N$, so, by $\adc$-Introduction, $\arfour\vdash \add\{r\equals\hat{a}\ |\ a\in N\}\mli  E(r) $.  But, by Lemma \ref{pathology3}, $\arfour\vdash \bound\equals \hat{b}\mli \add\{r\equals \hat{a}\ |\ a\in N\}$. Hence, by Transitivity, $\arfour\vdash \bound\equals \hat{b}\mli   E(r) $.
\end{proof}

Below and elsewhere, for a tuple $\vec{c}=c_1,\ldots,c_n$ of constants, $\vec{\hat{c}}$ stands for the tuple $\hat{c}_1,\ldots,\hat{c}_n$. 

\begin{lemma}\label{pathology1}
Assume $E=E(\bound,\vec{s})$ is a   formula all of whose free variables are among $\bound,\vec{s}$, \ $b$ is any positive integer,  and   $a$,$d_1,d_2$,$\vec{c}$ are any natural numbers ($\vec{c}$ of the same length as $\vec{s}$). Then  
$\mathbb{W}^E(\hat{a},\hat{d}_1,\hat{d}_2,\hat{b},\vec{\hat{c}})$ is true iff it is provable in $\pa$.
\end{lemma}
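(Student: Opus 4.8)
The plan is to recognize this lemma as a direct instance of two standard metamathematical facts about $\pa$: its soundness with respect to the standard model, and its $\Sigma_1$-completeness (every true $\Sigma_1$ sentence is provable). Both belong to the ``common knowledge of the power of the language of $\pa$'' that we have agreed to rely on. The ``if'' direction is then immediate from soundness: $\pa$ proves only sentences true in the standard model, and $\mathbb{W}^E(\hat{a},\hat{d}_1,\hat{d}_2,\hat{b},\vec{\hat{c}})$ is such a sentence, so its provability entails its truth. All the work is in the ``only if'' direction.

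For that direction, the key observation I would establish is that $\mathbb{W}^E$ may be --- and, being a ``natural arithmetization'', is --- taken to be a $\Sigma_1$ formula. To see this I would inspect in turn each of the six clauses defining the predicate that $\mathbb{W}^E$ arithmetizes, with all arguments now fixed to the numerals $\hat{a},\hat{d}_1,\hat{d}_2,\hat{b},\vec{\hat{c}}$. The clause $0\mless d_1\mleq d_2\mleq\xi(\hat{b})$ is a variable-free numerical assertion, hence decidable. The clauses concerning $(b,d_1)$-adequacy of the counterbehavior coded by $a$, the legality of the position $\Phi$, the ``$d_1=1$ or a move at time $d_1-1$'' alternative, and the absence of moves throughout the interval between $d_1$ and $d_2$, all refer only to the outcome of running the fixed machine $\cal X$ on the standard valuation $e_b$ against the fixed counterbehavior coded by $a$ for a bounded number of steps; each is therefore computed by a primitive recursive (in particular recursive) predicate of its arguments. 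Finally, the clause ``$\seq{\Phi}X(b)=E(b,\vec{c})$ and $\pa$ proves this fact'' splits into a decidable identity between the two explicitly determined games and the assertion that a certain sentence is $\pa$-provable; the latter is $\Sigma_1$, being an existential quantification over codes of $\pa$-proofs of a decidable proof-checking predicate. A conjunction of recursive predicates with a single $\Sigma_1$ predicate is again $\Sigma_1$, so the whole of $\mathbb{W}^E(\hat{a},\hat{d}_1,\hat{d}_2,\hat{b},\vec{\hat{c}})$ is $\Sigma_1$.

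Once $\Sigma_1$-ness is in hand, the conclusion follows by $\Sigma_1$-completeness of $\pa$. Concretely, I would note that each true recursive clause yields a true $\Delta_0$-with-numerals (hence $\Sigma_1$) sentence, which $\pa$ proves; and that if the embedded provability clause is true, then an actual proof $p$ exists, so $\mathrm{Prf}_{\pa}\bigl(\hat{p},\ulcorner\seq{\Phi}X(b)=E(b,\vec{c})\urcorner\bigr)$ is a true $\Delta_0$ fact and therefore $\pa$-provable, whence $\pa$ also proves the existential statement expressing provability. Assembling the conjuncts gives $\pa\vdash\mathbb{W}^E(\hat{a},\hat{d}_1,\hat{d}_2,\hat{b},\vec{\hat{c}})$, as required.

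The main point to take care over is precisely the verification underlying $\Sigma_1$-ness: that the simulation-based clauses really are recursive and bounded, so that the arithmetization $\mathbb{W}^E$ can be arranged to be $\Sigma_1$, and that the one apparently self-referential clause is harmless because $\pa$-provability is itself $\Sigma_1$. Everything beyond that is a routine appeal to the soundness and $\Sigma_1$-completeness of $\pa$, and I would not belabor the coding details.
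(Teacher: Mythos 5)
Your proof is correct and follows exactly the same route as the paper's own (one-sentence) proof: the "if" direction by soundness of $\pa$, and the "only if" direction by $\Sigma^{0}_{1}$-completeness, after observing that $\mathbb{W}^E(\hat{a},\hat{d}_1,\hat{d}_2,\hat{b},\vec{\hat{c}})$ is a mechanically verifiable ($\Sigma^{0}_{1}$) sentence. The only difference is that you spell out the clause-by-clause verification of $\Sigma_1$-ness that the paper leaves implicit.
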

\begin{proof} $\pa$ only proves true sentences.  $\pa$ is also known to prove   all  ``mechanically   verifiable'' (of complexity $\Sigma^{0}_{1}$, to be precise) true sentences such as $\mathbb{W}^E(\hat{a},\hat{d}_1,\hat{d}_2,\hat{b},\vec{\hat{c}})$ is if true. \end{proof}

\begin{lemma}\label{pathology4}
Under the conditions of Lemma \ref{pathology1}, 
 if
$\mathbb{W}^E(\hat{a},\hat{d}_1, \hat{d}_2,\hat{b},\vec{\hat{c}} )$  is true, then \(\arfour\vdash\bound\equals\hat{b}\mli    \overline{E(\hat{b},\vec{\hat{c}})}.\)  \end{lemma}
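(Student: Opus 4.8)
The plan is to prove Lemma \ref{pathology4} by induction on the structure of $E$, showing that the truth of $\mathbb{W}^E(\hat{a},\hat{d}_1,\hat{d}_2,\hat{b},\vec{\hat{c}})$ pins down enough about the $(a,e_b)$-branch to let $\arfour$ derive $\bound\equals\hat{b}\mli\overline{E(\hat{b},\vec{\hat{c}})}$ constructively. The key observation is that $\mathbb{W}^E(\hat{a},\hat{d}_1,\hat{d}_2,\hat{b},\vec{\hat{c}})$ being true means the position $E(\hat{b},\vec{\hat{c}})$ is hit (and stable over $[\hat{d}_1,\hat{d}_2]$) in a concrete, mechanically-checkable scenario; by Lemma \ref{pathology1} this fact is itself $\pa$-provable, so $\arfour$ ``knows'' it. Since $\bound$ is fixed at the particular value $\hat{b}$, everything in sight is variable-free, and the whole argument lives in an essentially finite, fully determined situation.

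First I would handle the base and propagation separately. When $E$ is non-critical, I would invoke Lemma \ref{august20b}: its antecedent $\cle z\mathbb{W}^E(z,\theta,\xi(\omega),\omega,\vec{\psi})$ becomes available because the truth of $\mathbb{W}^E(\hat{a},\hat{d}_1,\hat{d}_2,\hat{b},\vec{\hat{c}})$ (for the case $\hat{d}_2=\xi(\hat{b})$, which the stability clause and promptness give us) is $\pa$-provable and hence $\arfour$-provable, whence $\arfour\vdash\bound\equals\hat{b}\mli\elz{\overline{E(\hat{b},\vec{\hat{c}})}}$, and for non-critical $E$ the elementarization suffices to recover $\overline{E(\hat{b},\vec{\hat{c}})}$ itself. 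When $E$ is critical, the symmetric Lemma \ref{august20a} does the job directly. The real work is showing that $\mathbb{W}^E$ at time $\hat{d}_1$ either already sits at the ultimate position $\xi(\hat{b})$ (terminal case) or propagates: from $\mathbb{W}^E(\hat{a},\hat{d}_1,\hat{d}_2,\hat{b},\vec{\hat{c}})$ one must trace the next move of $\cal X$ (or of the environment under counterbehavior $a$) and identify the successor position $E'(\hat{b},\vec{c}')$, with the corresponding $\mathbb{W}^{E'}$ again true and $\pa$-provable. I would carry this out by a case analysis on the syntactic form of $E$ (literal, $\adc/\ada$, $\add/\ade$, $\mlc$, $\mld$, $\cla/\cle$), using $\add$-Choose, $\ade$-Choose, Wait, and $\add$-Elimination to assemble, from the provable $\add$-disjunction over possible successor positions, a proof of $\bound\equals\hat{b}\mli\overline{E(\hat{b},\vec{\hat{c}})}$.

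For the choice cases I expect the heaviest lifting: when $E$ is, say, $H_0\add\cdots\add H_m$, the true $\mathbb{W}^E$ forces $\cal X$ eventually to move (or the run to already be terminal), and I would use the provable traceability machinery together with Lemma \ref{pathology3} (to enumerate the finitely many constant values of the freshly chosen witness, since $\bound\equals\hat{b}$ caps sizes) to reduce to finitely many $\mathbb{W}^{H_i}$-instances, each handled by the induction hypothesis; then $\add$-Choose reassembles $\overline{E(\hat{b},\vec{\hat{c}})}$. The $\cla/\cle$ cases are where I anticipate the main obstacle, because the bound variable there ranges over \emph{all} constants, not just those of size $\leq\hat{b}$, so Lemma \ref{pathology3} does not apply and one cannot enumerate. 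I would resolve this by exploiting that $\cal X$'s behavior does not depend on the blind-quantified value (the run $\Gamma$ is the same for every instantiation), so the single provable $\mathbb{W}$-fact about the fixed run already yields, via Wait (for $\cla$) or $\ade$-Choose with the same witnessing run (for $\cle$), a uniform proof covering all instantiations at once. Stitching these cases together with Lemma \ref{pathology2} to discharge the $\bound\equals\hat{b}$ hypothesis over the finitely many admissible $a$'s and $\vec{c}$'s is the final bookkeeping step.
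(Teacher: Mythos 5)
Your overall skeleton---induction on $E$, with a split between a terminal case (the position reached at $\hat{d}_1$ persists to $\xi(\hat{b})$) and a propagating case (where $\cal X$ makes a further move, reducing to a simpler position handled by the induction hypothesis via $\add$-Choose or $\ade$-Choose)---matches the paper's proof. But there is a genuine gap in your terminal case. You assert that for non-critical $E$ ``the elementarization suffices to recover $\overline{E(\hat{b},\vec{\hat{c}})}$ itself.'' It does not: to derive $\overline{E(\hat{b},\vec{\hat{c}})}$ by Wait you need, besides the special premise $\elz{\overline{E(\hat{b},\vec{\hat{c}})}}$ (which Lemma \ref{august20b} plus Lemma \ref{pathology1} do give you), all the \emph{ordinary} Wait-premises, one for each surface occurrence of a $\adc$- or $\ada$-subformula. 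These correspond to hypothetical moves by the \emph{environment}, not by $\cal X$; the paper obtains them by extending the counterbehavior $a$ with one more timestamped move, observing that $\mathbb{W}^{F}$ for the corresponding $\adc$-deletion or $\ada$-instantiation is again true, and invoking the induction hypothesis---with Lemma \ref{pathology2} used in the $\ada$ case to pass from the finitely many numerals $\hat{m}$ with $|m|\mleq \hat{b}$ to a free variable $r$. Your proposal never extends the counterbehavior and confines ``propagation'' to $\cal X$'s own moves, so this entire component is missing; without it Wait cannot be applied and the terminal case does not close.

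Two smaller points. First, your parenthetical claim that ``the stability clause and promptness give us'' $\hat{d}_2=\xi(\hat{b})$ is wrong: whether the position is terminal is exactly the case distinction one must make (the paper splits on whether $\mathbb{W}^E(\hat{a},\hat{d}_1,\xi(\hat{b}),\hat{b},\vec{\hat{c}})$ is true), not something the hypothesis hands you. Second, your worry about $\cla/\cle$ is a non-issue---blind quantifiers carry no moves, so they simply persist inside $E$ and create no case in the induction---while conversely the ``provable $\add$-disjunction over possible successor positions'' and the traceability machinery you invoke are not needed here: since $\bound$ is pinned to the numeral $\hat{b}$, the metatheory determines the actual run outright and $\pa$ proves the single relevant $\mathbb{W}$-fact by Lemma \ref{pathology1}; the $\add$-disjunction apparatus belongs to the later, large-bound part of the completeness proof, not to this lemma.
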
 

\begin{idea} In  the context of the $(a,e_b)$-branch, the assumptions of the lemma imply that, at some ($d_1$) point, the play hits the position $E(b,\vec{c})$. $\cal X$ may or may not make further moves to modify this position. 

If a move is made, it brings us to a new position expressed through a simpler 
 formula, from which $E(b,\vec{c})$ follows by $\adc$-Choose or $\ada$-Choose. This allows us to apply the induction hypothesis to that formula, and then find the provability of  $\bound\equals\hat{b}\mli \overline{E(\hat{b},\vec{\hat{c}})}$ by the corresponding Choose rule.   

Suppose now no moves are made, so that the play ends as $E(b,\vec{c})$. This position has to be non-critical, or otherwise $\cal X$ would be the loser. Then Lemmas \ref{august20b} and \ref{pathology1} allow us to find that the elementarization of the target formula is provable. Appropriately manipulating the induction hypothesis, we manage to find the provability of all additional  premises from which the target formula follows by Wait.   
\end{idea}

\begin{proof} Our proof proceeds by induction on the complexity of $E(\bound,\vec{s})$.  Assume $\mathbb{W}^E(\hat{a},\hat{d}_1,\hat{d}_2,\hat{b},\vec{\hat{c}})$ is true. We separately consider the following two cases.\vspace{10pt}

{\em Case 1}:  $\mathbb{W}^E(\hat{a},\hat{d}_1,\xi(\hat{b}),\hat{b},\vec{\hat{c}})$  is not true. On the other hand, by our assumption,  $\mathbb{W}^E(\hat{a},\hat{d}_1,\hat{d}_2,\hat{b},\vec{\hat{c}})$ is true. The latter implies that, in the $(a,e_b)$-branch, the play reaches (by time $d_1$)  the position  $E(\hat{b},\vec{\hat{c}})$ which persists up to time $d_2$;  and the former implies that this situation changes sometime afterwards (the latest by time $\xi(b)$). So, a move is made at some time $m$ with $d_2\mleq m\mless \xi(b)$. Such a move $\beta$ (the earliest one if there are several) cannot be made by the environment, because, as implied by the assumption $\mathbb{W}^E(\hat{a},\hat{d}_1,\hat{d}_2,\hat{b},\vec{\hat{c}})$, $a$ is a $(b,d_1)$-adequate counterbehavior. So, $\beta$ is a move by $\cal X$. Since $\cal X$ wins $X$, $\beta$ cannot be an illegal move of the play. 
It is obvious that then  one of the following conditions holds:
\begin{description}
\item[(i)]  There is a formula $H=H(\bound,\vec{s})$ which  is the result of replacing in $E(\bound, \vec{s} )$ a surface occurrence of a subformula $G_1\add\ldots \add G_n$ by one of the $G_i$'s, such that  $\mathbb{W}^H(\hat{a},\hat{m}\successor,\hat{m}\successor,\hat{b},\vec{\hat{c}})$ is true.
\item[(ii)] There is formula $H=H(\bound,\vec{s},r)$, where $r$ is a variable not occurring in $E(\bound, \vec{s} )$, such that $H(\bound,\vec{s},r)$
is 
 the result of replacing in $E(\bound, \vec{s} )$ a surface occurrence of a subformula $\ade yG(y)$ by $G(r)$, and $\mathbb{W}^H(\hat{a},\hat{m}\successor,\hat{m}\successor,\hat{b},\vec{\hat{c}},\hat{k})$ is true for some constant  $k$ with $|k|\mleq b$.                            . 
\end{description}

Thus,  $H(b,\vec{c})$ (in case (i)) or $H(b,\vec{c},k)$ (in case (ii)) is the game/position to which $E(b,\vec{c})$ is brought down by the above-mentioned legal labmove $\pp\beta$.  

Assume condition (i) holds. By the induction hypothesis, $\arfour\vdash\bound\equals\hat{b}\mli \overline{H(\hat{b},\vec{\hat{c}})}$. Then, by $\add$-Choose, $\arfour\vdash \bound\equals\hat{b}\mli \overline{E(\hat{b},\hat{\vec{c}})}$.

Assume now condition (ii) holds. Again, by the induction hypothesis, 
\begin{equation}\label{jan6a}
\arfour\vdash \bound\equals\hat{b}\mli \overline{H(\hat{b},\vec{\hat{c}},\hat{k} )}.
\end{equation}
Obviously $\clfour\vdash \bigl(p\mli Q(f)\bigr)
\mli\bigl(r\equals f\mli 
p\mli Q(r)\bigr)$ whence,  by $\clfour$-Instantiation,   
\[\arfour\vdash 
\bigl(\bound\equals\hat{b}\mli \overline{H(\hat{b},\vec{\hat{c}},\hat{k} )}\bigr)
\mli\bigl(r\equals\hat{k}\mli 
\bound\equals\hat{b}\mli  \overline{H(\hat{b},\vec{\hat{c}} , r  )}\bigr). \]
Modus-ponensing the above with (\ref{jan6a}) yields 
\[
 \arfour\vdash r\equals\hat{k}\mli 
\bigl(\bound\equals\hat{b}\mli  \overline{H(\hat{b},\vec{\hat{c}} , r  )}\bigr) 
\]
from which,  by $\ade$-Choose,
\[\arfour\vdash  r\equals\hat{k} \mli \bigl(\bound\equals\hat{b}\mli  \overline{E(\hat{b},\vec{\hat{c}})}\bigr)\]
and then, by $\ada$-Introduction,
\begin{equation}\label{jan6c}
\arfour\vdash  \ade z(z\equals\hat{k}) \mli \bigl(\bound\equals\hat{b}\mli  \overline{E(\hat{b},\vec{\hat{c}})}\bigr).
\end{equation}

We also have 
\begin{equation}\label{jan6d}
\arfour\vdash \bound\equals\hat{b}\mli \ade z(z\equals\hat{k}),
\end{equation}
justified as follows:\vspace{7pt}

\noindent 1. $\begin{array}{l} 
\gneg |\hat{k}|\mleq \bound\add \ade z (z\equals \hat{k})
\end{array}$ \ \  Lemma \ref{com} \vspace{3pt}

\noindent 2. $\begin{array}{l} 
\gneg |\hat{k}|\mleq \bound\mli \bound\notequals\hat{b} 
\end{array}$ \ \  $\pa$ \vspace{3pt}

\noindent 3. $\begin{array}{l} 
\gneg |\hat{k}|\mleq \bound\mli \bound\equals\hat{b}\mli   \ade z (z\equals \hat{k})
\end{array}$ \ \  Weakening: 2\vspace{3pt}

\noindent 4. $\begin{array}{l} 
\ade z (z\equals \hat{k}) \mli  \bound\equals\hat{b}\mli   \ade z (z\equals \hat{k})
\end{array}$ \ \  $\clfour$-Instantiation, instance of $P\mli q\mli P$ \vspace{3pt}

\noindent 6. $\begin{array}{l} 
 \bound\equals\hat{b}\mli  \ade z(z\equals\hat{k})
\end{array}$ \ \  $\add$-Elimination: 1,3,4 \vspace{7pt}

From (\ref{jan6d}) and (\ref{jan6c}), by Transitivity, $\arfour\vdash \bound\equals\hat{b} \mli  \bound\equals\hat{b}\mli  \overline{E(\hat{b},\vec{\hat{c}})} $. But, by $\clfour$-Instantiation, we have 
\[\arfour\vdash \bigl(\bound\equals\hat{b} \mli  \bound\equals\hat{b}\mli  \overline{E(\hat{b},\vec{\hat{c}})} \bigr)\mli \bigl(\bound\equals\hat{b}\mli  \overline{E(\hat{b},\vec{\hat{c}})}\bigr)
\]
(this matches $(p\mli p\mli Q)\mli(p\mli Q)$). Hence, by Modus Ponens, we find $\arfour\vdash \bound\equals\hat{b}\mli  \overline{E(\hat{b},\vec{\hat{c}})}$, as desired.\vspace{10pt}

{\em Case 2}: $\mathbb{W}^E(\hat{a},\hat{d}_1,\xi(\hat{b}),\hat{b},\vec{\hat{c}})$  is true. Then, by Lemma \ref{pathology1}, $\arfour$ proves 
$\mathbb{W}^E(\hat{a},\hat{d}_1,\xi(\hat{b}),\hat{b},\vec{\hat{c}})$. $\arfour$ also proves the following formula because it is a logical axiom:   
\[\mathbb{W}^E(\hat{a},\hat{d}_1,\xi(\hat{b}),\hat{b},\vec{\hat{c}})\mli\cle z\mathbb{W}^E(z,\hat{d}_1,\xi(\hat{b}),\hat{b},\vec{\hat{c}}). \] 
Hence, by Modus Ponens,
\begin{equation}\label{jan5a}
\arfour\vdash \cle z\mathbb{W}^E(z,\hat{d}_1,\xi(\hat{b}),\hat{b},\vec{\hat{c}}) .
\end{equation}
$\mathbb{W}^E(\hat{a},\hat{d}_1,\xi(\hat{b}),\hat{b},\vec{\hat{c}})$ implies that $E(b,\vec{c})$ is the final position of the play over $X$ according to the scenario of the $(a,e_b)$-branch.  Note that, therefore, $E(\bound,\vec{s})$ cannot be critical. This is so because, as observed earlier,  the elementarization of any critical formula is  false, and having such a formula as the final position in some play would make $\cal X$ lose, contrary to our assumption that $\cal X$ (always) wins $X$.   Therefore, by Lemma \ref{august20b}, 
\[\arfour\vdash \cle z\mathbb{W}^E(z,\hat{d}_1,\xi(\hat{b}),\hat{b},\vec{\hat{c}})  \mli \elz{\overline{E(\hat{b},\vec{\hat{c}})}}.\]
Modus-ponensing the above with (\ref{jan5a}) yields $\arfour\vdash \elz{\overline{E(\hat{b},\vec{\hat{c}})}}$, from which, by Weakening, 
$\arfour\vdash \bound\equals\hat{b}\mli$ $ \elz{\overline{E(\hat{b},\vec{\hat{c}})}}$, which is the same as to say that 
\begin{equation}\label{jan5b}
\arfour\vdash \elz{\bound\equals\hat{b}\mli \overline{E(\hat{b},\vec{\hat{c}})}} .
\end{equation}

{\bf Claim 1.} {\em Assume $\overline{E(\hat{b},\vec{\hat{c}})}$ has the form $H[G_1\adc\ldots \adc G_m]$, and $i\in\{1,\ldots,m\}$. Then $\arfour\vdash \bound\equals \hat{b}\mli H[G_i]$.}\vspace{5pt}

\begin{subproof} Assume the conditions of the claim. Let $F=F(\bound,\vec{s})$ be the    formula such that $\overline{F(\hat{b},\vec{\hat{c}})}=H[G_i]$. Let $\beta$ be the environment's move that brings $H[G_1\adc\ldots \adc G_m]$ to $H[G_i]$.   Let $k$ be the (code of the) counterbehavior obtained by appending the timestamped move $(\beta,d_1\mminus 1)$ to (the counterbehavior whose code is) $a$. 
 Since $\mathbb{W}^E(\hat{a},\hat{d}_1, \hat{d}_2,\hat{b},\vec{\hat{c}})$ is true,    obviously $\mathbb{W}^{F}(\hat{k},\hat{d}_1, \hat{d}_1,\hat{b},\vec{\hat{c}})$ also has to be true.  Then, by the induction hypothesis, $\arfour\vdash \bound\equals\hat{b}\mli \overline{F(\hat{b},\vec{\hat{c}})}$, i.e. $\arfour\vdash \bound\equals\hat{b}\mli H[G_i]$. \end{subproof}

{\bf Claim 2.} {\em Assume $\overline{E(\hat{b},\vec{\hat{c}})}$ has the form $H[\ada yG(y)]$, and $r$ is an arbitrary non-$\bound$ variable not occurring in $E(\bound,\vec{x})$.   Then $\arfour\vdash \bound\equals \hat{b}\mli H[G(r)]$.}\vspace{5pt}

\begin{subproof} Assume the conditions of the claim. Let $F=F(\bound,\vec{s},r)$ be the    formula such that $\overline{F(\hat{b},\vec{\hat{c}},r)}=H[G(r)]$.  For each constant $m$ whose size does not exceed $b$, let $\beta_m$ be the environment's move that brings $H[\ada yG(y)]$ to $H[G(m)]$, and  let $k_m$ be the (code of the) counterbehavior obtained by appending the timestamped move $(\beta_m,d_1\mminus 1)$ to (the counterbehavior whose code is) $a$. 
 Since $\mathbb{W}^E(\hat{a},\hat{d}_1, \hat{d}_2,\hat{b},\vec{\hat{c}})$ is true,    obviously, for each constant $m$ with $|m|\mleq b$, $\mathbb{W}^{F}(\hat{k}_m,\hat{d}_1, \hat{d}_1,\hat{b},\vec{\hat{c}},\hat{m})$ is also true.  Then, by the induction hypothesis, $\arfour\vdash \bound\equals\hat{b}\mli \overline{F(\hat{b},\vec{\hat{c}},\hat{m}})$, i.e. $\arfour\vdash \bound\equals\hat{b}\mli H[G(\hat{m})]$. But then, by Lemma \ref{pathology2}, $\arfour\vdash \bound\equals\hat{b}\mli \bound\equals\hat{b}\mli  \overline{F(\hat{b},\vec{\hat{c}},r)}$, i.e. $\arfour\vdash \bound\equals\hat{b}\mli \bound\equals\hat{b}\mli   H[G(r)] $. By $\clfour$-Instantiation, we also have \[\arfour\vdash \bigl(\bound\equals\hat{b}\mli \bound\equals\hat{b}\mli   H[G(r)]\bigr)\mli \bigl(\bound\equals\hat{b}\mli   H[G(r)]\bigr) \]
(this is an instance of $(p\mli p\mli Q)\mli (p\mli Q)$). So, by Modus Ponens,  $\arfour\vdash  \bound\equals\hat{b}\mli   H[G(r)]$.  \end{subproof}

From (\ref{jan5b}), Claim 1 and Claim 2, by Wait, we find the desired  $\arfour \vdash \bound\equals\hat{b}\mli \overline{E(\hat{b},\hat{\vec{c}})}$.  
\end{proof}

\begin{lemma}\label{pathology5}
For any positive integer $b$, 
$\arfour\vdash \bound\equals \hat{b}\mli \overline{X(\bound)}$.  \end{lemma}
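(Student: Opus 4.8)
The plan is to obtain Lemma~\ref{pathology5} as an essentially immediate corollary of Lemma~\ref{pathology4}, applied to $E=X$ itself, followed by a routine replacement of the numeral $\hat{b}$ by the variable $\bound$. Recall that we have assumed $X=X(\bound)$ to have no free variables other than $\bound$, so $X$ fits the template $E(\bound,\vec{s})$ of Lemma~\ref{pathology4} with the tuple $\vec{s}$ empty (and hence $\vec{c}$ empty as well). Thus it suffices to exhibit natural numbers $a,d_1,d_2$ for which $\mathbb{W}^X(\hat{a},\hat{d}_1,\hat{d}_2,\hat{b})$ is true; Lemma~\ref{pathology4} will then hand us $\arfour\vdash\bound\equals\hat{b}\mli\overline{X(\hat{b})}$, and only the conversion of $\overline{X(\hat{b})}$ into $\overline{X(\bound)}$ will remain.

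The witness I would use describes the very first moment of the play. Let $a$ be the code of the empty counterbehavior and take $d_1=d_2=1$. I claim $\mathbb{W}^X(\hat{a},\hat{1},\hat{1},\hat{b})$ is true. Indeed, $0\mless 1\mleq 1\mleq\xi(b)$ holds because $\xi(b)\mgeq 1$ (were $\xi(b)$ zero, no branch could satisfy ``$\pp$'s time is less than $\xi(b)$'', contradicting that ${\cal X}$ runs in time $\xi$). Since the environment makes no moves, the $(a,e_b)$-run is prompt and is not a $\oo$-illegal run of $X(b)$, so the empty counterbehavior is $(b,1)$-adequate; moreover the initial segment $\Phi$ consisting of the labmoves with timestamp less than $1$ is empty, because $\oo$ moves nowhere and, the start state of ${\cal X}$ not being a move state, ${\cal X}$ makes no move at clock cycle $\#0$ either. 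Consequently $\Phi=\emptyrun$ is a legal position of $X(b)$ with $\seq{\Phi}X(b)=X(b)$ (a fact provable in $\pa$), the clause ``$d_1=1$'' is met outright, and the no-move condition over the empty interval $[1,1)$ is vacuous. Hence $\mathbb{W}^X(\hat{a},\hat{1},\hat{1},\hat{b})$ is true, and Lemma~\ref{pathology4} yields $\arfour\vdash\bound\equals\hat{b}\mli\overline{X(\hat{b})}$.

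It remains to pass from $\overline{X(\hat{b})}$ to $\overline{X(\bound)}$ under the assumption $\bound\equals\hat{b}$. Since $\mathbb{L}$ is a closed sentence, the overline operation commutes with the substitution of $\bound$, so $\overline{X(\hat{b})}$ is exactly $\overline{X(\bound)}$ with the \emph{explicit} occurrences of $\bound$ replaced by $\hat{b}$ — the bounded quantifiers retaining their fixed bound $\bound$, so that $\overline{X(\hat{b})}$ is a legitimate $\arfour$-formula differing from $\overline{X(\bound)}$ only in those explicit term positions, and coinciding with it as a game at every valuation validating $\hat{b}\equals\bound$. Writing $Q(w)$ for the result of replacing those explicit occurrences of $\bound$ by a fresh variable $w$, we have $\overline{X(\hat{b})}=Q(\hat{b})$ and $\overline{X(\bound)}=Q(\bound)$. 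Now $\clfour$ proves the scheme $\bigl(p\mli P(f)\bigr)\mlc(p\mli f\equals g)\mli\bigl(p\mli P(g)\bigr)$, obtainable by a single application of Match from a classical validity. Instantiating $p$ as $\bound\equals\hat{b}$, $f$ as $\hat{b}$, $g$ as $\bound$ and $P$ as $Q$, by $\clfour$-Instantiation we get an $\arfour$-provable formula; modus-ponensing it with the already-established $\bound\equals\hat{b}\mli\overline{X(\hat{b})}$ (the ``$p\mli P(f)$'' premise) and with the $\pa$-fact $\bound\equals\hat{b}\mli\hat{b}\equals\bound$ (the ``$p\mli f\equals g$'' premise) delivers the desired $\arfour\vdash\bound\equals\hat{b}\mli\overline{X(\bound)}$.

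The work here is light; the only two points requiring genuine care are the verification that $\mathbb{W}^X$ already holds at time $1$ — which rests on the convention that the start state is not a move state, guaranteeing $\Phi=\emptyrun$ and hence that the play sits at $X(b)$ at the initial moment — and the bookkeeping in the final substitution, where one must be sure that replacing $\bound$ by $\hat{b}$ leaves the quantifier bounds $\ada$, $\ade$ (i.e.\ $\ada^{\bound},\ade^{\bound}$) untouched, so that $\overline{X(\hat{b})}$ and $\overline{X(\bound)}$ agree at all valuations with $\bound\equals\hat{b}$ and the copycat-style equality scheme of $\clfour$ legitimately applies.
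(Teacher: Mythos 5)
Your proposal is correct and follows essentially the same route as the paper: the empty counterbehavior with $d_1=d_2=1$ as the witness for $\mathbb{W}^X$, an appeal to Lemma \ref{pathology4}, and a final $\clfour$-Instantiation-plus-Modus-Ponens step to trade $\hat{b}$ for $\bound$ (the paper uses the scheme $(\bound\equals f\mli P(f))\mli(\bound\equals f\mli P(\bound))$ where you use an equivalent variant). The extra detail you supply — verifying $\mathbb{W}^X(\hat{a},\hat{1},\hat{1},\hat{b})$, which the paper dismisses with "of course", and the remark that the quantifier bounds are untouched by the substitution — is accurate and harmless.
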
 

\begin{proof} Consider any positive integer $b$.  Let $a$ be (the code of) the empty counterbehavior. Of course, $\mathbb{W}^X(\hat{a},\hat{1},\hat{1},\hat{b})$ is true. Then, by Lemma \ref{pathology4}, $\arfour\vdash \bound\equals \hat{b}\mli \overline{X(\hat{b})}$. But the formula \[\bigl(\bound\equals \hat{b}\mli \overline{X(\hat{b})}\bigr)\mli \bigl(\bound\equals \hat{b}\mli \overline{X(\bound)}\bigr)\] is an instance of the $\clfour$-provable $\bigl(\bound\equals f\mli P(f)\bigr)\mli \bigl(\bound\equals f\mli P(\bound )\bigr)$ and, by $\clfour$-Instantiation, is provable in $\arfour$. Hence, by Modus Ponens, $\arfour\vdash  \bound\equals \hat{b}\mli \overline{X(\bound)}$. \end{proof}

\begin{lemma}\label{pathology6}
$\arfour\vdash \gneg| \xi(\bound)|\mleq\bound\mli   \overline{X(\bound)}$.  \end{lemma}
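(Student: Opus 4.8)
The plan is to reduce the target to the finitely many values of $\bound$ that actually satisfy the hypothesis, and then eliminate over those cases. Write $A$ for the elementary formula $\gneg|\xi(\bound)|\mleq\bound$. Since $\xi$ is a $\bound$-term, $\xi(\bound)$ grows polynomially in $\bound$, so $|\xi(\bound)|$ is essentially logarithmic in $\bound$ and the inequality $|\xi(\bound)|\mgreater\bound$ can hold only for finitely many positive integers. Let $N=\{b_1,\ldots,b_k\}$ be the finite set of all positive integers $b$ with $\gneg|\xi(\hat b)|\mleq\hat b$. That no other positive value of $\bound$ satisfies $A$ is a true fact about the growth of the fixed polynomial $\xi$, hence provable in $\pa$; concretely $\pa\vdash \bound\notequals\zero\mlc A\mli(\bound\equals\hat{b_1}\mld\ldots\mld\bound\equals\hat{b_k})$, where the conjunct $\bound\notequals\zero$ is needed to rule out the spurious standard value $\bound\equals\zero$ (for which $A$ is true but $\bound$ lies outside $N$). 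By Fact \ref{nov7} this is an $\arfour$-theorem, and modus-ponensing it with $\arfour\vdash\bound\notequals\zero$ (Lemma \ref{zer}) gives
\[\arfour\vdash A\mli(\bound\equals\hat{b_1}\mld\ldots\mld\bound\equals\hat{b_k}).\]

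First I would promote this parallel disjunction to the corresponding choice disjunction. This step is forced: the rule that passes from a $\mld$-disjunction together with implications out of each disjunct to the common (non-elementary) conclusion — the $\mld$-analogue of $\add$-Elimination — is, as the paper notes, not admissible in $\arfour$. However, equality against a fixed numeral is efficiently decidable: for each $b\in N$ one has $\arfour\vdash\bound\equals\hat b\add\bound\notequals\hat b$, proved exactly as the equality-decidability steps inside the proof of Lemma \ref{pathology3} — using Lemma \ref{nov18a} to decide equality of a named value, Lemma \ref{com} to name $\hat b$ in the case $|\hat b|\mleq\bound$, and Axiom 13 (with a $\pa$-fact) to conclude $\bound\notequals\hat b$ in the remaining case $\gneg|\hat b|\mleq\bound$. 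Weakening then yields $\arfour\vdash A\mli(\bound\equals\hat b\add\bound\notequals\hat b)$ for each $b$. A $\clthree$-provable bridge formula — the very one that converts a parallel disjunction plus per-disjunct decidability into a choice disjunction, used verbatim in Lemma \ref{pathology3} — combined with the above facts by Modus Ponens, delivers
\[\arfour\vdash A\mli(\bound\equals\hat{b_1}\add\ldots\add\bound\equals\hat{b_k}),\]
which I will call $(\ast)$.

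Next I would feed $(\ast)$ into a conditional form of $\add$-Elimination. For each $b\in N$, Lemma \ref{pathology5} gives $\arfour\vdash\bound\equals\hat b\mli\overline{X}$. The scheme
\[\bigl(a\mli(P_1\add\ldots\add P_k)\bigr)\mlc(P_1\mli Q)\mlc\ldots\mlc(P_k\mli Q)\mli(a\mli Q)\]
is provable in $\clfour$: it is the obvious affine copycat strategy — copy the antecedent $a$ into the first conjunct, let it produce some $P_i$, route that $P_i$ through $P_i\mli Q$, and copy the resulting $Q$ to the conclusion; each resource is used once, so the principle is game-valid and hence $\clfour$-provable. Instantiating it by $\clfour$-Instantiation with $a:=A$, $P_i:=\bound\equals\hat{b_i}$, $Q:=\overline{X}$, and applying Modus Ponens with $(\ast)$ and the $k$ instances of Lemma \ref{pathology5}, yields the desired $\arfour\vdash A\mli\overline{X}$, i.e. $\arfour\vdash\gneg|\xi(\bound)|\mleq\bound\mli\overline{X}$.

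The main obstacle is the unavoidable detour through the choice disjunction. Because $\arfour$ lacks $\mld$-Elimination, the cheaply obtained parallel disjunction cannot be combined with the non-elementary $\overline{X}$; one must instead certify inside $\arfour$ that the solving machine can genuinely decide which of the finitely many small values $\bound$ takes, and then eliminate over a choice disjunction via the conditional scheme above. The remaining ingredients — the $\pa$ growth estimate, the $\bound\notequals\zero$ handling, and the two auxiliary $\clthree$/$\clfour$ schemes — are routine given the lemmas already established (and the degenerate case $N=\varnothing$, where $A$ is $\arfour$-refutable for all admissible $\bound$, is subsumed automatically).
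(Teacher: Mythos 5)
Your proposal is correct and follows essentially the same route as the paper's own proof: restrict to the finite set $N$ via a $\pa$-fact plus Lemma \ref{zer}, upgrade the parallel disjunction $\mld\{\bound\equals\hat a\}$ to the choice disjunction $\add\{\bound\equals\hat a\}$ using the decidability instances $\bound\equals\hat a\add\bound\notequals\hat a$ (proved exactly as you describe) together with a $\clthree$-provable bridge, and then combine with Lemma \ref{pathology5}. The only (cosmetic) difference is that the paper finishes by $\adc$-Introduction and two applications of Transitivity rather than your conditional $\add$-Elimination scheme.
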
 

\begin{idea} $\arfour$ knows that, if $\gneg| \xi(\bound)|\mleq\bound$, then $\bound\equals\hat{b}$ for one of finitely many particular (``very small'') positive integers $b$. Furthermore, as in Lemma \ref{pathology3}, we can show that such knowledge is constructive, in the sense that $\arfour$ can tell ($\add$) exactly for which $b$ do we have $\bound\equals\hat{b}$. Then the desired conclusion easily follows from Lemma \ref{pathology5}. 
\end{idea}

\begin{proof}  The size of $\xi(\bound)$ can be greater than $\bound$ for only a certain finite number of ``small'' non-$0$ values of $\bound$. Let $N$ be the set of all such values. Obviously 
\[\pa\vdash \bound\notequals\zero\mli  \gneg| \xi(\bound)|\mleq\bound\mli \mld\{\bound\equals \hat{a}\ |\ a\in N\},\]
modus-ponensing which with Lemma \ref{zer} yields 
\begin{equation}\label{jan6e}
\arfour\vdash   \gneg| \xi(\bound)|\mleq\bound\mli \mld\{\bound\equals \hat{a}\ |\ a\in N\}.
\end{equation} 

By Lemma \ref{pathology5}, for each  $a\in N$ we have $\arfour\vdash \bound\equals \hat{a}\mli \overline{X(\bound)}$. Hence, by $\adc$-Introduction, 
\begin{equation}\label{jan6f}
\arfour\vdash \add\{\bound\equals \hat{a}\ |\ a\in N\}\mli \overline{X(\bound)}.
\end{equation}

Next we claim that
\begin{equation}\label{jan6h}
\mbox{\em for each $a\in N$, $\pa\vdash\bound\equals\hat{a}\add\bound\notequals\hat{a}$}.
\end{equation}
Below is a justification of this claim for an arbitrary $a\in N$: \vspace{7pt}

\noindent 1. $\begin{array}{l} 
\gneg |\hat{a}|\mleq \bound\add \ade z (z\equals \hat{a})
\end{array}$ \ \  Lemma \ref{com} \vspace{3pt}

\noindent 2. $\begin{array}{l} 
\gneg |\hat{a}|\mleq \bound\mli \bound\notequals\hat{a}
\end{array}$ \ \  $\pa$\vspace{3pt}

\noindent 3. $\begin{array}{l} 
\gneg |\hat{a}|\mleq \bound\mli \bound\equals\hat{a}\add\bound\notequals\hat{a}
\end{array}$ \ \  $\add$-Choose: 2\vspace{3pt}

\noindent 4. $\begin{array}{l} 
\ada x \ada y(y\equals x\add y\notequals x)
\end{array}$ \ \ Lemma \ref{nov18a}  \vspace{3pt}

\noindent 5. $\begin{array}{l} 
s\equals\bound\add s\notequals \bound
\end{array}$ \ \ $\ada$-Elimination (twice): 4  \vspace{3pt}

\noindent 6. $\begin{array}{l} 
s\equals\bound \mli s\equals \hat{a}  \mli \bound\equals\hat{a} 
\end{array}$ \ \ Logical axiom  \vspace{3pt}

\noindent 7. $\begin{array}{l} 
s\equals\bound \mli s\equals \hat{a}  \mli \bound\equals\hat{a}\add\bound\notequals\hat{a}
\end{array}$ \ \  $\add$-Choose: 6 \vspace{3pt}

\noindent 8. $\begin{array}{l} 
s\notequals \bound\mli s\equals \hat{a}  \mli \bound\notequals\hat{a}
\end{array}$ \ \ Logical axiom  \vspace{3pt}

\noindent 9. $\begin{array}{l} 
s\notequals \bound\mli s\equals \hat{a}  \mli \bound\equals\hat{a}\add\bound\notequals\hat{a}
\end{array}$ \ \  $\add$-Choose: 8 \vspace{3pt}

\noindent 10. $\begin{array}{l} 
s\equals \hat{a}  \mli \bound\equals\hat{a}\add\bound\notequals\hat{a}
\end{array}$ \ \ $\add$-Elimination: 5,7,9  \vspace{3pt}

\noindent 11. $\begin{array}{l} 
\ade z (z\equals \hat{a}) \mli \bound\equals\hat{a}\add\bound\notequals\hat{a}
\end{array}$ \ \ $\ada$-Introduction: 10  \vspace{3pt}

\noindent 12. $\begin{array}{l} 
\bound\equals\hat{a}\add\bound\notequals\hat{a}
\end{array}$ \ \  $\add$-Elimination:  1,3,11 \vspace{7pt}

The following formula can be easily seen to be provable in $\clthree$ and hence in $\arfour$:
\[ \arfour\vdash \mlc\{\bound\equals \hat{a}\add\bound\notequals\hat{a}\ |\ a\in N\}\mli \mld\{\bound\equals \hat{a}\ |\ a\in N\} \mli \add\{\bound\equals \hat{a}\ |\ a\in N\}.\]
Modus-ponensing the above with (\ref{jan6h}) yields
\begin{equation}\label{jan6g}
 \arfour\vdash \mld\{\bound\equals \hat{a}\ |\ a\in N\} \mli \add\{\bound\equals \hat{a}\ |\ a\in N\} .
\end{equation}

Now, from (\ref{jan6e}), (\ref{jan6g}) and (\ref{jan6f}), by Transitivity applied twice, we get   $\arfour\vdash \gneg| \xi(\bound)|\mleq\bound\mli   \overline{X(\bound)}$ as desired.
 \end{proof}

\subsection{Ptarithmetizing  HPM-computations}
In this subsection we prove the earlier-mentioned ``provable traceability'' of the work of $\cal X$, in a certain technically strong form necessary for our further treatment. As we remember, roughly it means the constructive knowledge by $\arfour$ of the configurations of $\cal X$ in its interaction with a given adversary (the latter thought of as a counterbehavior). The present elaboration is the first relatively advanced example of ``{\em ptarithmetization}''\label{iptarz} or, more generally, ``{\em clarithmetization}''\label{iclarz} --- extending G\"{o}del's {\em arithmetization} technique from the classical context to the context of computability logic. 

Let \mbox{\em STATES} be the set of all states of the machine $\cal X$, and \mbox{\em SYMBOLS} be the set of all symbols that may appear on any of its tapes. As we know, both sets are finite. We assume that the cells of each of the three tapes are numbered consecutively starting from $0$ (rather than $1$).

Below we introduce elementary formulas that naturally arithmetize the corresponding metapredicates. 

\begin{itemize}
\item $\mbox{\em Adequate}(z,w,t)$ means ``$z$ is a $(w,t)$-adequate counterbehavior''.
\item For each $a\in\mbox{\em STATES}$, $\mbox{\em State}_a(z,w,t)$ means ``In the $(z,e_w)$-branch, at time $t$, $\cal X$ is  in state $a$''. 
\item  For each $a\in\mbox{\em SYMBOLS}$, $\mbox{\em VSymbol}_{a}(z,w,t,u)$ means ``In the $(z,e_w)$-branch, at time $t$, cell $\#u$ of the valuation tape contains symbol $a$''. Similarly for $\mbox{\em WSymbol}_{a}(z,w,t,u)$ (for the work tape) and $\mbox{\em RSymbol}_{a}(z,w,t,u)$ (for the run tape).  
\item  $\mbox{\em VHead}(z,w,t,u)$ means ``In the $(z,e_w)$-branch, at time $t$, the head of the valuation tape is over cell $\#u$''.  Similarly for $\mbox{\em WHead}(z,w,t,u)$ (for the work tape) and $\mbox{\em RHead}(z,w,t,u)$ (for the run tape).  
\item $\mbox{\em Runsize}(z,w,t,u)$ means ``In the $(z,e_w)$-branch, at time $t$, the leftmost blank cell of the run tape is cell $\#u$''.
\item $\mathbb{E}(z,t)$\label{ieee} abbreviates 
\[
\begin{array}{l}
\mbox{\em Adequate}(z,\bound,t)\ \mlc \\
\add\{\mbox{\em State}_a(z,\bound,t)\ |\ a\in \mbox{\em STATES}\}\ \mlc \\
\Bigl(\cle x \bigl(\mbox{\em Runsize}(z,\bound,t,x)\mlc |x|\mleq\bound\bigr)\adi  \ade x \mbox{\em Runsize}(z,\bound,t,x)\Bigr)\ \mlc \\
\ade x\bigl(\mbox{\em VHead}(z,\bound,t,x)\mlc \add \{\mbox{\em VSymbol}_a(z,\bound,t,x)\ |\ a\in \mbox{\em SYMBOLS}\}\bigr)\ \mlc \\
\ade x\bigl(\mbox{\em WHead}(z,\bound,t,x)\mlc \add \{\mbox{\em WSymbol}_a(z,\bound,t,x)\ |\ a\in \mbox{\em SYMBOLS}\}\bigr)\ \mlc \\
\ade x\bigl(\mbox{\em RHead}(z,\bound,t,x)\mlc \add \{\mbox{\em RSymbol}_a(z,\bound,t,x)\ |\ a\in \mbox{\em SYMBOLS}\}\bigr).
\end{array}\]
\item $\mathbb{F}(z,t)$\label{ifff} abbreviates 
\[
\begin{array}{l}
\ada x\bigl(\add\{\mbox{\em VSymbol}_a(z,\bound,t,x)\ |\ a\in \mbox{\em SYMBOLS}\}\bigr)\ \mlc \\ 
\ada x\bigl(\add\{\mbox{\em WSymbol}_a(z,\bound,t,x)\ |\ a\in \mbox{\em SYMBOLS}\}\bigr)\ \mlc \\ 
\Bigl(\ada x\bigl(\add\{\mbox{\em WSymbol}_a(z,\bound,t,x)\ |\ a\in \mbox{\em SYMBOLS}\}\bigr)\adc \ada x\bigl(\add\{\mbox{\em RSymbol}_a(z,\bound,t,x)\ |\ a\in \mbox{\em SYMBOLS}\}\bigr)\Bigr). 
\end{array}\]
\end{itemize}
Note that both formulas $\mathbb{E}(z,t)$ and $\mathbb{F}(z,t)$, in addition to $z$ and $t$, contain $\bound$ as a free variable, which we however do not explicitly indicate as it will never be replaced by any other term.

We use $\cle !$ as a standard abbreviation, defined by 
\[\cle ! z T(z) \ =\ \cle z\Bigl(T(z)\mlc \cla y\bigl(T(y)\mli y\equals z)\bigr)\Bigr).\]

Let $z$ be any variable and $T$ --- let us (also) write it in the form $T(z)$ --- any elementary formula. We say that $T$ is {\bf functional for $z$}\label{iffz} iff $\arfour\vdash \cle ! zT(z)$.  

For  variables $z$,$t$ and an elementary formula $T=T(z)$ functional for $z$, we will be using $\mathbb{E}( z^T,t)$\label{i876} as an abbreviation defined by 
\[\mathbb{E}( z^T,t) \ =\ \cla z\bigl(T(z)\mli \mathbb{E}(z,t)\bigr).\]
Similarly for $\mathbb{F} ( z^T,t )$. It is our convention that, whenever using these abbreviations, the variables $z$ and $t$ are not the same, so that $t$ does not get bound by the external $\cla z$. Similarly, if we write $\mathbb{E} ( z^T,\theta )$ or $\mathbb{F} ( z^T,\theta )$ where $\theta$ is a term, it will be assumed that $\theta$ does not contain $z$.

\begin{lemma}\label{oct17a}
For any elementary formula $T$ functional for $z$, $\arfour$ proves
\[\mathbb{E} ( z^T, t)\mli \mathbb{E} ( z^T, t)\mlc \mathbb{E} ( z^T,t).\] 
\end{lemma}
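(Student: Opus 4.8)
The crux is that $\mathbb{E}(z,t)$ contains no occurrences of the choice operators $\adc$ or $\ada$: its only choice operators are $\add$ and $\ade$, and the sole blind quantifier $\cle x$ occurs applied to an elementary formula. As stressed in Section \ref{ss14}, such a formula is a \emph{recyclable} resource. Accordingly, the plan is to first prove that $\mathbb{E}(z,t)$ can be duplicated, and then to lift this through the wrapper $\cla z\bigl(T(z)\mli -\bigr)$, using the functionality of $T$ to align the blind quantifier in the resource with the ones needed in the two output copies.

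\textbf{Step 1 (duplicability of $\mathbb{E}$).} First I would establish
\[\arfour\vdash\cla z\bigl(\mathbb{E}(z,t)\mli\mathbb{E}(z,t)\mlc\mathbb{E}(z,t)\bigr).\]
Replacing every arithmetical predicate occurring in $\mathbb{E}$ by a fresh elementary letter of the same arity turns $\mathbb{E}(z,t)$ into a $\clfour$-formula $\mathbb{E}^{\circ}(z,t)$ having the same $(\add,\ade,\cle,\mlc,\mld)$-structure and, crucially, still free of $\adc$ and $\ada$. I would claim $\clfour\vdash\cla z\bigl(\mathbb{E}^{\circ}(z,t)\mli\mathbb{E}^{\circ}(z,t)\mlc\mathbb{E}^{\circ}(z,t)\bigr)$. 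By Theorem \ref{ccs} together with Theorem \ref{main}, it suffices to exhibit a uniform polynomial time solution for every instance, and such a solution is the obvious synchronization strategy: the machine treats the single (negated, hence $\adc/\ada$-governed) antecedental copy as a resource, and whenever the environment resolves an $\add$- or $\ade$-choice there, it copies that resolution into the corresponding subgames of the two consequent copies, handling all elementary leaves move-free. Since $\mathbb{E}^{\circ}$ has no $\adc/\ada$, the environment is the only player making choices in the antecedent, so this copycat is well defined and wins uniformly. The displayed formula is then an instance of this $\clfour$-theorem, so it follows by $\clfour$-Instantiation (Fact \ref{Instclosure}).

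\textbf{Step 2 (pinning via functionality).} Next I would invoke $\clfour$-Instantiation on the scheme
\[\cle ! z\,t(z)\ \mlc\ \cla z\bigl(P(z)\mli P(z)\mlc P(z)\bigr)\ \mli\ \Bigl(\cla z\bigl(t(z)\mli P(z)\bigr)\mli\cla z\bigl(t(z)\mli P(z)\bigr)\mlc\cla z\bigl(t(z)\mli P(z)\bigr)\Bigr),\]
where $t$ is an elementary letter and $P$ a general letter. Instantiating $t(z):=T(z)$ and $P(z):=\mathbb{E}(z,t)$, and recalling that $\cla z\bigl(T(z)\mli\mathbb{E}(z,t)\bigr)$ is exactly $\mathbb{E}(z^{T},t)$, Modus Ponens with the hypothesis $\arfour\vdash\cle ! z\,T(z)$ (which holds because $T$ is functional for $z$) and with the formula of Step 1 yields precisely the target $\mathbb{E}(z^{T},t)\mli\mathbb{E}(z^{T},t)\mlc\mathbb{E}(z^{T},t)$.

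\textbf{The main obstacle} is verifying that the two displayed schemes are $\clfour$-provable, which via Theorems \ref{ccs} and \ref{main} reduces to uniform solvability of all their instances. For the duplication scheme this is the routine synchronization argument sketched above. For the pinning scheme the essential point --- and where functionality is indispensable --- is that in the negated resource $\gneg\cla z\bigl(t(z)\mli P(z)\bigr)=\cle z\bigl(t(z)\mlc\gneg P(z)\bigr)$ the environment must commit to some value $z_{1}$ and make $t(z_{1})$ true; functionality forces $z_{1}$ to be the unique witness, so the single instance $P(z_{1})$ supplied by the resource is exactly the blind game needed in both consequent copies, where it is duplicated move-by-move through the instance $P(z_{1})\mli P(z_{1})\mlc P(z_{1})$ of the duplicator. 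For every value at which $t$ fails the implications are won move-free, so the blind strategy is uniform and correct. I expect the write-up to spend most of its effort making this alignment argument precise and confirming that, without uniqueness of the witness, a single resource could not cover all choice-bearing values in the consequent.
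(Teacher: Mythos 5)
Your proposal reaches a true conclusion, but it is both heavier and shakier than what the paper does, and it rests on a conceptual misreading of the blind quantifiers. The paper's own proof is a short direct derivation: reading the proof bottom-up, one applies $\ada$-Introduction and $\adc$-Introduction to resolve every $\adc$/$\ada$ occurrence arising in the negated antecedent $\gneg\mathbb{E}(z^T,t)$ --- possible precisely because $\mathbb{E}$ contains no $\adc,\ada$, so all choices there belong to the environment --- and then mirrors each such resolution by $\add$-Choose and $\ade$-Choose in the two consequent copies, landing on an elementary, classically valid formula of the form $F\mli F\mlc F$ (with $F$ of the shape $\cla z(T\mli A)$ for elementary $A$), which is a logical axiom. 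No appeal to semantics, to $\clfour$, or to the functionality of $T$ is made.

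Two concrete problems with your route. First, to obtain $\clfour$-provability of your two schemes you invoke ``Theorem \ref{ccs} together with Theorem \ref{main}'', i.e.\ you pass from uniform solvability of all instances to $\clthree$-provability of all instances and thence to $\clfour$-provability. That last step is the completeness half of Theorem \ref{ccs}, whose proof the paper explicitly omits and declares unnecessary for its purposes; only the soundness half (the one used by $\clfour$-Instantiation) is available within the paper's toolbox. You would need to replace these semantic appeals by actual $\clfour$-derivations --- which do exist for both schemes, by exactly the Wait/Choose copycat just described --- or else bypass $\clfour$ altogether and derive the target formula directly. Second, your explanation of why functionality is ``indispensable'' is wrong: in $\gneg \cla z\bigl(t(z)\mli P(z)\bigr)=\cle z\bigl(t(z)\mlc\gneg P(z)\bigr)$ the environment does \emph{not} ``commit to some value $z_1$''; blind quantifiers carry no moves, and all instances of $P(z)$ share one and the same run. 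A pointwise check of the winning conditions shows that your pinning scheme is valid, and the lemma itself holds, with no uniqueness assumption whatsoever --- the hypothesis that $T$ is functional is merely inherited from the context in which the lemma is later used and plays no role in its proof. Carrying $\cle ! z\, t(z)$ along does no harm, but the ``alignment argument'' you expect to occupy most of the write-up is an argument about a non-issue.
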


\begin{idea} As explained in Subsection \ref{gggg}, $\mathbb{E}$ --- whether in the form  $\mathbb{E}(z, t )$ or $\mathbb{E} ( z^T, t)$ --- is essentially a ``recyclable'' resource because it does not contain $\adc$,$\ada$. \end{idea}
 
\begin{proof} Bottom-up, a proof of the target formula goes like this. Keep applying $\ada$-Introduction and $\adc$-Introduction   until the antecedent (in the given branch of the proof tree) becomes 
\[
\begin{array}{l}
\cla z \Bigl(T\mli \\
\mbox{\em Adequate}(z,\bound,t)\ \mlc \\
\mbox{\em State}_a(z,\bound,t)\ \mlc \\
 \mbox{\em Runsize}(z,\bound,t,u)\ \mlc \\
\bigl(\mbox{\em VHead}(z,\bound,t,v)\mlc \mbox{\em VSymbol}_b(z,\bound,t,v)\bigr)\ \mlc \\
\bigl(\mbox{\em WHead}(z,\bound,t,w)\mlc \mbox{\em WSymbol}_c(z,\bound,t,w)\bigr)\ \mlc \\
\bigl(\mbox{\em RHead}(z,\bound,t,r)\mlc \mbox{\em VSymbol}_d(z,\bound,t,r)\bigr)\Bigr) 
\end{array}\]
--- or, maybe, the same but with ``$\gneg\cle x\bigl( \mbox{\em Runsize}(z,\bound,t,x)\mlc |x|\mleq\bound\bigr)$'' instead of ``$ \mbox{\em Runsize}(z,\bound,t,u) $'' --- for some variables $u,v,w,r$, state $a$ and symbols $b,c,d$. 
Then apply a series of $\add$-Chooses and $\ade$-Chooses and bring the consequent to a conjunction of two copies of the antecedent. Now we are dealing with a classically valid and hence provable elementary formula of the form $F\mli F\mlc F$. 
\end{proof}

\begin{lemma}\label{oct17c}
For any elementary formula $T$ functional for $z$,  
$\arfour$ proves
\begin{equation}\label{jan10aa}
\mathbb{E} ( z^T,  t )\mlc \mathbb{F} ( z^T,  t )\mli \mathbb{F} ( z^T, t\successor  ).
\end{equation} 
\end{lemma}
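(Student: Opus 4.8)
The plan is to argue via the problem-solving interpretation of Section \ref{sss16}, exhibiting a winning strategy for the game denoted by (\ref{jan10aa}) whose bottom-up reading is the sought $\arfour$-proof. First I would disabbreviate: the two antecedental resources are $\cla z\bigl(T(z)\mli\mathbb{E}(z,t)\bigr)$ and $\cla z\bigl(T(z)\mli\mathbb{F}(z,t)\bigr)$, and the goal is $\cla z\bigl(T(z)\mli\mathbb{F}(z,t\successor)\bigr)$. Since all three $\cla z$ are blind and $T$ is functional (so $\arfour\vdash\cle ! z T(z)$), every occurrence of $z$ refers to one and the same unknown value with $T(z)$ true; functionality is what lets us combine the two antecedental resources over a common $z$ and strip the outer $\cla z$, reducing the task to resolving the three $\mlc$-conjuncts of $\mathbb{F}(z,t\successor)$ --- a valuation-cell query, a work-cell query, and a machine-controlled work-or-run query, all at time $t\successor$ --- using single copies of $\mathbb{E}(z,t)$ and $\mathbb{F}(z,t)$. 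All facts about how one computation step of $\cal X$ transforms a configuration are $\pa$-provable, so throughout we may appeal to the earlier-established machinery (Lemma \ref{nov18a} for deciding equalities, Lemma \ref{com}, the admissible Choose/Introduction/Elimination rules, and $\pa$ itself).

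For the valuation conjunct, the valuation tape is read-only and static, so its content is invariant under $t\mapsto t\successor$; this is a $\pa$-fact, and the queried cell is answered directly from the valuation conjunct of $\mathbb{F}(z,t)$. For the work conjunct, the only cell that can change is the one under the work head: using $\mathbb{E}(z,t)$ we read off the current state and the three symbols scanned by the three heads, and the fixed transition function of $\cal X$ --- arithmetized as $\pa$-provable relations --- determines the symbol written under the work head. Given a queried cell, I would use Lemma \ref{nov18a} to decide whether it coincides with the work-head position supplied by $\mathbb{E}(z,t)$: if so, output the transition-determined symbol; if not, output the old content obtained from the work conjunct of $\mathbb{F}(z,t)$.

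The hard part is the run conjunct. Run cells are never overwritten, so a queried run cell $\#c$ at time $t\successor$ is unchanged whenever $c$ lies below the old run size (answered from the run conjunct of $\mathbb{F}(z,t)$ together with the run-size information in $\mathbb{E}(z,t)$); otherwise $\#c$ falls in the block appended during step $t$. That block consists of $\cal X$'s own move $\pp\alpha$ --- present exactly when the current state is a move state, as read from $\mathbb{E}(z,t)$ and the transition function --- followed by the counterbehavior's moves $\oo\beta_1\cdots\oo\beta_m$, which $\pa$ decodes from $z$ and $t$ (the adequacy and the identity of $z$ being furnished by $\mathbb{E}$). The one genuinely delicate point is that $\alpha$ is the work-tape prefix up to the work head, so a newly written machine-move cell at relative offset $j$ equals work cell $\#j$ at time $t$; the position arithmetic (expressing $c$ in terms of the old run size and the work-head position, both from $\mathbb{E}$) is $\pa$-provable, and the needed work cell is fetched from $\mathbb{F}(z,t)$.

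I expect the principal obstacle to be the linear resource bookkeeping rather than any single arithmetic fact. In the consequent the environment plays in all three $\mlc$-conjuncts, so across the whole play we may have to perform up to two work-tape lookups and one run-tape lookup. This is exactly affordable because $\mathbb{F}$ provides an always-available valuation query, an always-available work query, and --- crucially --- a $\adc$-choice between a further work query and a run query; since this $\adc$ sits in the antecedent, hence under negation, it is the machine that selects the branch, so we route the work subcase of the third conjunct to the spare work query and its run subcase to the run query. (By contrast $\mathbb{E}$ is freely reusable, as already recorded in Lemma \ref{oct17a}, so reading state and head data incurs no accounting.) Assembling these routed answers, with the transition step and the appended-move decoding carried by $\pa$, gives a winning strategy whose bottom-up translation, via the Choose, Introduction and Wait rules, is the desired proof of (\ref{jan10aa}).
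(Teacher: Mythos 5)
Your strategy is essentially the paper's: reduce the $z^T$-form to the plain $\cla z$-form, then show how a single copy of $\mathbb{F}(z,t)$ together with the freely recyclable $\mathbb{E}(z,t)$ resolves the three conjuncts of $\mathbb{F}(z,t\successor)$, with the same case analysis (static valuation tape; head-cell vs.\ non-head-cell on the work tape via Lemma \ref{nov18a} and the transition function; old cells vs.\ the newly appended block on the run tape). One secondary remark: the reduction to the plain-$\cla z$ version does not come from functionality of $T$; the paper gets it from the $\clfour$-provable schema $\cla z\bigl(P_1(z)\mlc P_2(z)\mli P_3(z)\bigr)\mli\Bigl(\cla z\bigl(q(z)\mli P_1(z)\bigr)\mlc\cla z\bigl(q(z)\mli P_2(z)\bigr)\mli\cla z\bigl(q(z)\mli P_3(z)\bigr)\Bigr)$, which holds for an arbitrary $q$.

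The one place where your account, read literally, would fail is the routing of the machine's $\adc$-choice in the antecedent's third conjunct. You correctly identify the delicate point --- a run-tape cell in the block appended at cycle $t$ holds a symbol copied from the work tape at time $t$ --- but you then say the run subcase of the consequent's third conjunct is routed ``to the run query,'' i.e.\ to the right $\adc$-branch of the antecedent's third conjunct. In the scenario where the environment has already consumed the second conjunct of $\mathbb{F}(z,t)$ on a non-head work cell, and then selects the run branch of the consequent's third conjunct with a cell $\#x$ satisfying $x\mgreater j$ (where $j$ is the old leftmost-blank location), the needed datum is the \emph{work}-tape cell $\#(x\mminus j\mminus 1)$, and having committed to the run branch of the antecedent's $\adc$ you have no work query left to pay for it. The paper's routing makes the machine's $\adc$-choice depend not on which branch the environment picked in the consequent but on the comparison of $x$ with $j$ (both known, from the environment's $\ada$-move and from $\mathbb{E}(z,t)$, before the choice must be made): for $x\mless j$ pick the run branch, for $x\mgreater j$ pick the work branch, and for $x\equals j$ no lookup is needed since the answer is the label $\pp$. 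With that correction your resource accounting closes and the rest of the argument goes through as in the paper.
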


\begin{idea} For reasons in the spirit of an explanation given in Subsection \ref{gggg}, a single copy of the resource $\mathbb{E} ( z^T,  t) $ and a single copy of the resource $\mathbb{F} ( z^T,  t )$ turn out to be sufficient so solve the problem $\mathbb{F} ( z^T,  t\successor )$. \end{idea}

\begin{proof} 
The following formula is provable in $\clfour$ by Match applied three times: 
\begin{equation}\label{jan10cc}
\begin{array}{l} 
\cla z\bigl(P_1(z)\mlc P_2(z)\mli P_3(z)\bigr)\ \mli\  \Bigl(\cla z\bigl(q(z)\mli P_1(z)\bigr)\mlc \cla z\bigl(q(z)\mli P_2(z)\bigr)\mli \cla z(q(z)\mli P_3(z)\bigr)\Bigr). 
\end{array}  
\end{equation}
 Consider the formula 
\begin{equation}\label{jan10dd} 
\cla z\Bigl(\mathbb{E} ( z,  t )\mlc \mathbb{F} ( z,  t ) \mli  \mathbb{F} ( z, t\successor  )\Bigr).
\end{equation}
The  formula $(\ref{jan10dd})\mli(\ref{jan10aa})$, which --- after disabbreviating $z^T$ in (\ref{jan10aa})  --- is 
\[\cla z\Bigl(\mathbb{E} ( z,  t )\mlc \mathbb{F} ( z,  t ) \mli  \mathbb{F} ( z, t\successor  )\Bigr)\mli \Bigl(\cla z \bigl(T(z)\mli \mathbb{E} ( z ,  t )\bigr)\mlc \cla z \bigl(T(z)\mli \mathbb{F} ( z ,  t )\bigr)\mli \cla z \bigl(T(z)\mli\mathbb{F} ( z , t\successor  )\bigr) \Bigr),\]
can be seen to be an instance of (\ref{jan10cc}) and hence, by $\clfour$-Instantiation, provable in $\arfour$. Therefore, if $\arfour$ proves (\ref{jan10dd}), then, by Modus Ponens, it also proves the target (\ref{jan10aa}). Based on this observation, we now forget about (\ref{jan10aa}) and, in what follows, exclusively devote our efforts to showing that  $\arfour\vdash (\ref{jan10dd})$.

This is one of those cases where giving a full formal proof in the style  practiced earlier is not feasible. But by now we have acquired enough experience in working with $\arfour$ to see that the informal argument provided below can be translated  into a strict $\arfour$-proof if necessary.

Argue in $\arfour$.  Consider an arbitrary ($\cla$) counterstrategy $z$.  
The context of our discourse will be the play of 
$\cal X$ against  $z$ on the standard valuation $e_\bound$ --- the $(z,e_\bound)$-branch, that is. Assume that a single copy of the antecedental resource $ \mathbb{E} ( z,  t )\mlc \mathbb{F} ( z,  t )$ is at our disposal. We need to show how to resolve the consequental problem $ \mathbb{F} ( z, t\successor  )$.  

For resolving the first conjunct of $ \mathbb{F} ( z, t\successor  )$, we need to tell, for an arbitrary ($\ada$) given $x$, the content of cell $\#x$ of the valuation tape at time $t\successor$. This is very easy: the content of the valuation tape never changes. So, the symbol in cell $\#x$ at time $t\successor$ will be the same as at time $t$, and what symbol it is we can learn from the first conjunct of (the antecedental resource) $ \mathbb{F} ( z, t)$. In more detailed  terms, a solution/deduction strategy corresponding to the above outline is to wait (bottom-up $\ada$-introduction) till the environment specifies a value (syntactically, a ``fresh'' variable) $s$ for $x$ in the first $\mlc$-conjunct of 
$ \mathbb{F} ( z, t\successor)$; then, using the same $s$ (bottom-up $\ade$-Choose), specify the value of $x$ in the first $\mlc$-conjunct of $ \mathbb{F} ( z, t)$; after that, wait 
(bottom-up $\adc$-introduction)  till the environment selects one of the $\add$-disjuncts in the first $\mlc$-conjunct of $ \mathbb{F} ( z, t)$ (or rather of what that formula has become), and then select (bottom-up $\add$-Choose) the same $\add$-disjunct in the first $\mlc$-conjunct of $ \mathbb{F} ( z, t\successor)$. Henceforth we will no longer provide such details, and will limit ourselves to just describing strategies, translatable (as we just saw) into bottom-up $\arfour$-deductions.

 For resolving the second conjunct of $ \mathbb{F} ( z, t\successor  )$, we need to tell, for an arbitrary ($\ada$) given $x$, the content of cell $\#x$ of the work  tape at time $t\successor$. This is not hard, either. At first, using  the fifth conjunct of $ \mathbb{E} ( z, t )$, we determine the location $m$ of the work-tape head  and the tape symbol $c_W$ at that location at time $t$. If $m\not=x$ (Lemma \ref{nov18a} can be used to tell whether this is the case or not), then the symbol in cell $\# x$ at time $t\successor$ will remain the same $c_W$. Suppose now $m=x$. Then we further  use the second,  fourth and sixth conjuncts of $ \mathbb{E} ( z, t )$ to learn about  the state $a$ of the machine at time $t$ and the symbols $c_V$ and $c_R$ scanned at that time by the heads of the valuation and the run tapes. Now, knowing $c_V,c_W,c_R$ and $a$, based on the transition function of $\cal X$, we can tell what symbol will be written in cell $\#x$ of the work tape by time $t\successor$.  

The left $\adc$-conjunct of the third $\mlc$-conjunct of $ \mathbb{F} ( z, t\successor  )$ is identical to the second $\mlc$-conjunct of $ \mathbb{F} ( z, t\successor  )$, and it can be resolved as we just saw above. However, to avoid an (unacceptable/unavailable) repeated usage of resources, we will employ the first $\adc$-conjunct of the third $\mlc$-conjunct of $ \mathbb{F} ( z, t )$ instead of the second $\mlc$-conjunct of $ \mathbb{F} ( z, t )$ as was done in the previous case. Of course, we will also need to use some parts of the resource  $ \mathbb{E} ( z, t )$ which were already used by the procedure of the previous case. This, however, does not create any resource conflicts. Because any information extracted from $ \mathbb{E} ( z, t )$ earlier is still there, so the relevant parts of $ \mathbb{E} ( z, t )$ do not really need to be ``queried'' again, as we already know answers.   That (re)using $\mathbb{E}(z,t)$ does not create any competition for resources should be remembered through the remaining part of this proof and the proof of the following lemma as well. This phenomenon of the ``recycleability'' of $ \mathbb{E} ( z, t )$ was, in fact, already established in Lemma \ref{oct17a}. 

Finally, for resolving the right $\adc$-conjunct of the third $\mlc$- conjunct of $ \mathbb{F} ( z, t\successor  )$, we need to tell, for an arbitrary ($\ada$) given $x$, the content of cell $\#x$ of the run  tape at time $t\successor$. This is how it can be done. Let us call $j$ the location of the leftmost blank cell of the run tape at time $t$. At first, we wait till the environment selects one of the $\add$-disjuncts of the third $\mlc$-conjunct of $\mathbb{E}(z,t)$. If the left disjunct is selected, then  $\bound\mless |j|$ (or else the selected disjunct is false and we win). Then we also have ($|x|\mless |j|$ and hence) $x\mless j$, because the size of (the $\add$-bound) $x$ cannot exceed $\bound$. 
If the right disjunct is selected instead, the environment will have to further provide the actual value of $j$. Then, using Lemma \ref{minus}, we can figure out whether $x\mless j$ or not. Thus, in either case, we will  know whether  $x\mless j$ or $x\mgeq j$ and, if $x\mgeq j$, we  will also know the value of $j$.  
First, suppose   $x\mless j$. Then  the content of cell $\#x$ at time $t\successor$ is obviously the same as at time $t$, and  information about this content can be obtained from the right $\adc$-conjunct of the third $\mlc$-conjunct of $\mathbb{F}(z, t)$.\footnote{The third $\wedge$-conjunct of $\mathbb{F}(z, t)$ was already used in the previous paragraph. But there is no resource conflict here, as we have a choice (rather than parallel) conjunction between the problems whose solutions are described in the present and the previous paragraphs, so that only one of them  will actually have to be solved.} Similarly if the state of $\cal X$ was not a move state at time $t$ (and information about whether this was the case is available from the second conjunct of $\mathbb{E}(z,t)$). 
Now assume (we know the value of $j$ and) $x\mgeq j$, and also assume 
 the   state of $\cal X$ at time $t$ was a move state.   If $x\equals j$ (use Lemma \ref{nov18a} to tell if this is so or not), then the content of cell $\#x$ at time $t\successor$   will be the symbol $\pp$. Otherwise, if $x\notequals j$, meaning that $x\mgreater j$, 
then the content of cell $\#x$ at time $t\successor$   will be the content $c$ of cell $\#(x\mminus j\mminus 1)$ of the work tape at time $ t$ (Lemma \ref{minus} can again be used to compute the value of $x\mminus j\mminus 1$). Such a $c$ can be found using the left $\adc$-conjunct of the third $\mlc$-conjunct of $\mathbb{F}(z,t)$. Well, what we just said is true unless $x\mminus j\mminus 1$ is greater than or equal to the location of the work-tape head at time $ t$ (known from $\mathbb{E}(z,t)$), in which case the content of cell $\#x$ of the run tape  at time $t\successor$ will be blank. 
\end{proof}

\begin{lemma}\label{oct17b}
For any elementary formula $T$ functional for $z$,   
$\arfour$ proves
\begin{equation}\label{jan10a}
|t\successor|\mleq \bound \mlc \mathbb{E} ( z^T,  t )\mlc \mathbb{F} ( z^T,  t )\mli \mathbb{E} ( z^T, t\successor  ).
\end{equation} 
\end{lemma}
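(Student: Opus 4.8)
The plan is to follow the same two-stage pattern as the proof of Lemma \ref{oct17c}. First I would reduce the abbreviated target (\ref{jan10a}) to its \emph{uniform} form
\[
\cla z\Bigl(|t\successor|\mleq\bound\mlc \mathbb{E}(z,t)\mlc\mathbb{F}(z,t)\mli \mathbb{E}(z,t\successor)\Bigr).
\]
Just as Lemma \ref{oct17c} passed through the Match-provable schema (\ref{jan10cc}), here the corresponding schema carries one extra \emph{elementary} antecedent $p$, to be instantiated by $|t\successor|\mleq\bound$ (which contains no free $z$):
\[
\cla z\bigl(p\mlc P_1(z)\mlc P_2(z)\mli P_3(z)\bigr)\mli\Bigl(p\mlc\cla z(q(z)\mli P_1(z))\mlc\cla z(q(z)\mli P_2(z))\mli \cla z(q(z)\mli P_3(z))\Bigr).
\]
This formula is classically valid, and since $P_1,P_2,P_3$ are matchable exactly as in (\ref{jan10cc}) while the elementary $p$ needs no special treatment, it is provable in $\clfour$. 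By $\clfour$-Instantiation (sending $P_1,P_2,P_3$ to $\mathbb{E}(z,t),\mathbb{F}(z,t),\mathbb{E}(z,t\successor)$, $q$ to $T$, and $p$ to $|t\successor|\mleq\bound$) and then Modus Ponens, the uniform form yields the target (\ref{jan10a}). It therefore remains, arguing in $\arfour$ for an arbitrary ($\cla$) counterbehavior $z$, to describe a strategy resolving $\mathbb{E}(z,t\successor)$ from single copies of $\mathbb{E}(z,t)$ and $\mathbb{F}(z,t)$ together with the elementary information $|t\successor|\mleq\bound$.

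The strategy resolves the six $\mlc$-conjuncts of $\mathbb{E}(z,t\successor)$ in turn. Throughout, $\mathbb{E}(z,t)$ may be consulted freely, since — as already established in Lemma \ref{oct17a} — it is a recyclable resource (it contains no $\adc$ or $\ada$), whereas each $\ada$-part and each $\adc$-disjunct of $\mathbb{F}(z,t)$ may be used only once. The \emph{adequacy} conjunct $\mbox{\em Adequate}(z,\bound,t\successor)$ follows from the adequacy conjunct of $\mathbb{E}(z,t)$, because $\pa$ proves that $(\bound,t)$-adequacy implies $(\bound,t\successor)$-adequacy: the only clause depending on the third argument, ``the last timestamp is $\mless d$'', is monotone in $d$, and the remaining clauses speak of the whole $(z,e_\bound)$-run. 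The \emph{state} conjunct $\add\{\mbox{\em State}_a(z,\bound,t\successor)\}$ is obtained by a finite case analysis driven by the transition function of $\cal X$: the new state is determined by the state at time $t$ and the three symbols scanned at time $t$, all read off the state and head/symbol conjuncts of $\mathbb{E}(z,t)$.

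The three head-and-symbol conjuncts (valuation, work, run) are where the hypothesis $|t\successor|\mleq\bound$ does its work, and where this lemma genuinely differs from Lemma \ref{oct17c}. Each is an $\ade x(\ldots)$, so to resolve it I must actually \emph{name} the head location at time $t\successor$, which is legitimate only if that location has size $\mleq\bound$; but a head starts at cell $0$ and moves at most one cell per step, so at time $t\successor$ it occupies a cell $\mleq t\successor$, whence its size is $\mleq|t\successor|\mleq\bound$. Concretely the new location is the old one (from $\mathbb{E}(z,t)$) shifted left, shifted right, or unchanged according to the transition function, and Lemma \ref{comsuc} (together with $\pa$-facts about the binary predecessor) produces its value for the $\ade x$. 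The scanned symbol there at time $t\successor$ is then supplied as follows: for the valuation tape the content never changes, so it comes from the valuation $\mlc$-conjunct of $\mathbb{F}(z,t)$ (querying its $\ada x$ at the new location, whose size is $\mleq\bound$); for the work tape the symbol is either the one just written — when the new location coincides with the old work-head location, the written symbol being computed from the transition function — or else the unchanged old content of that cell, read from the work $\mlc$-conjunct of $\mathbb{F}(z,t)$, the two cases being separated by Lemma \ref{nov18a}; and for the run tape the symbol is obtained exactly as in the run-tape part of Lemma \ref{oct17c}, from the two $\adc$-disjuncts of the third $\mlc$-conjunct of $\mathbb{F}(z,t)$, using the left disjunct (work content) to reconstruct any freshly appended move and the right disjunct (run content) for unchanged cells, so that no resource is consumed twice. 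Finally the \emph{runsize} conjunct, being a choice-implication, is handled by reading the old runsize $j$ (or the information $|j|\mgreater\bound$) from the third conjunct of $\mathbb{E}(z,t)$, adding the length of any move appended at time $t$ — governed by whether the state at $t$ was a move state and by the work-head location at $t$, both in $\mathbb{E}(z,t)$ — via Lemmas \ref{comad} and \ref{minus}, and then, with Lemma \ref{com}, Lemma \ref{nov18a} and the bound $|t\successor|\mleq\bound$, deciding whether the new runsize has size $\mleq\bound$ (naming it through the right $\adi$-component) or exceeds it (choosing the left $\adi$-component).

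The main obstacle is not conceptual but the bookkeeping: assigning each conjunct of $\mathbb{E}(z,t\successor)$ a disjoint piece of $\mathbb{F}(z,t)$ so that no $\ada$-part or $\adc$-disjunct is used twice (the recyclability of $\mathbb{E}(z,t)$ from Lemma \ref{oct17a} removes half of this concern), faithfully encoding the move-appending behaviour of $\cal X$ in the run-tape case, and — most delicately — carrying out the size comparisons so that every $\ade$-choice of a head location and the $\ada$-query of a tape content at that location are provably of size $\mleq\bound$. It is precisely this namability requirement that forces the hypothesis $|t\successor|\mleq\bound$ here, whereas Lemma \ref{oct17c} needed none, because the corresponding information in $\mathbb{F}$ is reached through $\ada x$ (the adversary supplying a value of size $\mleq\bound$) rather than through the $\ade x$ of $\mathbb{E}$. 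As in Lemma \ref{oct17c}, this heavy but routine part I would leave as an informal $\arfour$-argument, trusting that it translates into a bottom-up $\arfour$-deduction.
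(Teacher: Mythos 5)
Your proposal is correct and follows essentially the same route as the paper's own proof: reduce (\ref{jan10a}) to its uniform $\cla z$ form via a Match-provable schema and $\clfour$-Instantiation, then give an informal $\arfour$-strategy resolving each conjunct of $\mathbb{E}(z,t\successor)$ from single copies of $\mathbb{E}(z,t)$ and $\mathbb{F}(z,t)$, with $|t\successor|\mleq\bound$ guaranteeing that the new head locations (and the new runsize) are namable. The only divergence is a small one in the run-tape case, where the paper simply observes that the run-tape head never passes the leftmost blank cell, so the scanned symbol at $t\successor$ is either the old content (from the right $\adc$-disjunct of the third conjunct of $\mathbb{F}(z,t)$) or the label $\pp$ of a just-made move, with no need to consult the work-tape copy of that conjunct.
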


\begin{idea} 
As in the previous lemma, a single copy of the resource $\mathbb{E} ( z^T,  t) $ and a single copy of the resource $\mathbb{F} ( z^T,  t )$ turn out to be sufficient so solve the problem $\mathbb{E} ( z^T,  t\successor )$. A minor additional technical condition for this in the present case   is that the size of $t\successor$ should not exceed $\bound$.  \end{idea}

\begin{proof} For reasons similar to those given at the beginning of the proof of Lemma \ref{oct17c}, it would be sufficient to show the $\arfour$-provability of the following formula instead of (\ref{jan10a}):  
\begin{equation}\label{jan10c}
\begin{array}{l} 
\cla z\Bigl(|t\successor|\mleq \bound \mlc \mathbb{E} ( z,  t )\mlc \mathbb{F} ( z,  t )\mli \mathbb{E} ( z, t\successor  )\Bigr).
\end{array}  
\end{equation}

Argue in $\arfour$.  Consider an arbitrary ($\cla$) counterstrategy $z$. 
As in the proof of the previous lemma, the context of our discourse will be the play   
according to the scenario of the $(z,e_\bound)$-branch. Assume $|t\successor|\mleq \bound $. And assume that a single copy of the  resource $ \mathbb{E} ( z,  t )\mlc \mathbb{F} ( z,  t )$ is at our disposal. We need to show how to resolve $ \mathbb{E} ( z, t\successor  )$.  

The first conjunct of $\mathbb{E} ( z,  t )$ is $\mbox{\em Adequate}(z,\bound,t)$. It implies that the environment does not move at $t$ or any greater time, so that $z$ will remain adequate for any value greater than $t$ as well. Thus, $\mbox{\em Adequate}(z,\bound,t\successor)$ is true, which takes care of the first conjunct of $\mathbb{E} ( z, t\successor  )$. 

The resource $\mathbb{E} ( z,  t )$  contains full information about the state of the machine  at time $ t$, the locations of the three scanning heads, and the symbols at those three locations. This allows us to determine the next state, and the next locations of the heads (``next'' means ``at time $t\successor$''). Note that we will have no problem naming those locations, as they cannot exceed $t'$ (moving a head farther than cell $\#t\successor$ would require more than $t\successor$ steps)  and hence, in view of the assumption $|t\successor|\mleq\bound$, their sizes cannot exceed $\bound$. What we just said fully takes care of the second conjunct of $\mathbb{E} ( z, t\successor  )$, and partially takes care of the fourth, fifth and sixth conjuncts. To turn this ``partial care'' into a full one, we need to show how to tell the symbols looked at by the three heads at time $t\successor$.   

 The content of the cell scanned by the valuation-tape head at time $t\successor $ will be the same as the content of that cell at time $t$, and this information can be obtained from the first conjunct of $\mathbb{F} ( z ,t )$. 

Since scanning heads (almost) always move left or right, the content of the cell scanned by the work-tape head at time $ t\successor $ will generally also be the same as the content of that cell 
at time $ t$, which can be obtained from the second conjunct of $\mathbb{F} ( z ,  t )$. An exception is when the head is at the beginning of the tape at time $t$, writes a new symbol and tries to move left which, however, results in staying put. In such a case, we can obtain the symbol just written 
(i.e., the content of the cell scanned by the head at time $ t\successor $) directly from our knowledge of the transition function and our knowledge --- already obtained earlier from $\mathbb{E} ( z,  t )$ --- of the state of $\cal X$ and the contents of the three cells scanned at time $ t$.  

Let the cell scanned by the head of the run tape at time $t\successor$ be cell $\# i$ (the value of $i$ has already been established earlier). Let the leftmost blank cell of that tape at time $t$ be cell $\#j$. Since the run-tape head can never move past the leftmost blank cell, we have either $i\equals j$ or ($i\notequals j$ and hence) $i\mless j$.  The third conjunct of $\mathbb{E}(z,t)$ in combination with Lemma \ref{nov18a} can be used to tell which of these two alternatives is the case. If $i\mless j$, then  the content of the run-tape cell $\#i$ at time $t\successor$ will be the same as at time $t$, and this information can be obtained from the right$\adc$-conjunct of the third $\mlc$-conjunct of $\mathbb{F}(z, t)$. Similarly if the state of $\cal X$ was not a move state at time $t$ (and information about whether this was the case is available from the second conjunct of $\mathbb{E}(z,t)$). Assume now $i\equals j$, and the state of $\cal X$ at time $t$ was a move state. 
Then the content of cell $\#i$ at time $t\successor$   will be the symbol $\pp$ (the label of the move made at time $t$).

The above three paragraphs complete taking care of the fourth, fifth and sixth conjuncts of $ \mathbb{E} ( z, t\successor  )$.

Finally, to solve the remaining third conjunct of   $ \mathbb{E} ( z, t\successor  )$, wait till the environment selects one of the two $\add$-disjuncts of the third conjunct of  $ \mathbb{E} ( z, t  )$. If the left disjunct is selected there, do the same in the third conjunct of   $ \mathbb{E} ( z, t\successor  )$. Suppose now the right conjunct is selected. Wait till the environment further specifies a value $j$ for $x$ there. If $\cal X$ is not in a move state at time $t$, do the exact same selections in the third conjunct of   $ \mathbb{E} ( z, t\successor  )$. Suppose now $\cal X$ is in a move state at time $t$. Then the location of the leftmost blank cell at time $t\successor$ will be $j\plus i\plus 1$, where $i$ is the location of the work-tape head at time $t$. Using the results of Section \ref{s17}, try to compute $m$ with $m\equals j\plus i\plus 1$. If $|m|$ turns out to exceed $\bound$, select the left $\add$-disjunct of the third conjunct of   $ \mathbb{E} ( z, t\successor  )$. Otherwise select the right disjunct, and specify $x$ as $m$ there. 
\end{proof}

\begin{lemma}\label{oct17d}
For any elementary formula $T$ functional for $z$,  
$\arfour$ proves
\[|t\successor |\mleq\bound \mlc  \mathbb{E} ( z^T,  t )\mlc \mathbb{F} ( z^T,  t )\mli \mathbb{E} ( z^T,  t )\mlc \bigl(\mathbb{E} ( z^T,  t\successor  )\adc \mathbb{F} ( z^T,  t\successor  )\bigr).\] 
\end{lemma}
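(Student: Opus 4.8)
The plan is to assemble the target implication directly from the three preceding lemmas, Lemmas \ref{oct17a}, \ref{oct17b} and \ref{oct17c}, the crucial ingredient being the \emph{recyclability} of $\mathbb{E}(z^T,t)$ recorded in Lemma \ref{oct17a}. To keep the notation light, write $G$ for the elementary formula $|t\successor|\mleq\bound$ and abbreviate $\mathbb{E}(z^T,t),\mathbb{F}(z^T,t),\mathbb{E}(z^T,t\successor),\mathbb{F}(z^T,t\successor)$ as $E_t,F_t,E_{t'},F_{t'}$. In this shorthand the three available facts read $E_t\mli E_t\mlc E_t$ (Lemma \ref{oct17a}), $G\mlc E_t\mlc F_t\mli E_{t'}$ (Lemma \ref{oct17b}) and $E_t\mlc F_t\mli F_{t'}$ (Lemma \ref{oct17c}), while the goal is $G\mlc E_t\mlc F_t\mli E_t\mlc(E_{t'}\adc F_{t'})$.

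First I would align the antecedents of Lemmas \ref{oct17b} and \ref{oct17c}. Since $E_t\mlc F_t\mli F_{t'}$ is, in negation normal form, the $\mld$-disjunction $\gneg E_t\mld\gneg F_t\mld F_{t'}$, one application of Weakening inserting the disjunct $\gneg G$ yields $G\mlc E_t\mlc F_t\mli F_{t'}$. Now Lemma \ref{oct17b} and this weakened formula share the identical context, namely $G\mlc E_t\mlc F_t\mli[\,\cdot\,]$ with the distinguished occurrences $E_{t'}$ and $F_{t'}$ sitting as top-level $\mld$-disjuncts (hence as legitimate $E^{\vee}[\,\cdot\,]$ positions), so $\adc$-Introduction applies and produces $G\mlc E_t\mlc F_t\mli E_{t'}\adc F_{t'}$; call this formula (D).

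It then remains only to reinstate the extra conjunct $E_t$ in the consequent, and this is exactly where recyclability is spent. I would invoke the scheme $(P\mli P\mlc P)\mlc(P_0\mlc P\mlc Q\mli R)\mli(P_0\mlc P\mlc Q\mli P\mlc R)$ with $P,P_0,Q,R$ general letters, and apply $\clfour$-Instantiation under the substitution $P\mapsto E_t$, $P_0\mapsto G$, $Q\mapsto F_t$, $R\mapsto E_{t'}\adc F_{t'}$. This turns the scheme into $(E_t\mli E_t\mlc E_t)\mlc(G\mlc E_t\mlc F_t\mli E_{t'}\adc F_{t'})\mli\bigl(G\mlc E_t\mlc F_t\mli E_t\mlc(E_{t'}\adc F_{t'})\bigr)$, whose two antecedental conjuncts are precisely Lemma \ref{oct17a} and (D). Hence a single application of Modus Ponens delivers the desired formula.

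The one point needing genuine care is the $\clfour$-provability of the displayed scheme, since resource-consciousness could in principle obstruct exactly such a duplication; but it does go through, essentially because $\clfour$ extends affine logic and the scheme is even affine-logic-valid. Reading $\mlc$ as tensor and $\mli$ as linear implication, the underlying strategy duplicates the single $P$ of the antecedent $P_0\mlc P\mlc Q$ via $P\mli P\mlc P$, feeds one resulting copy of $P$ together with $P_0$ and $Q$ into $P_0\mlc P\mlc Q\mli R$ to obtain $R$, and keeps the second copy to fill the $P$ of the consequent $P\mlc R$, so that nothing is consumed twice. I expect this to be the main (and really the only non-routine) obstacle: confirming the scheme formally, for instance by exhibiting the chain of Match applications that reduces it to a classical tautology. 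Everything else — the single Weakening, the $\adc$-Introduction, the $\clfour$-Instantiation and the concluding Modus Ponens — is mechanical.
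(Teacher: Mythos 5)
Your proposal is correct and takes essentially the same route as the paper: the paper likewise obtains the lemma as a purely logical consequence of Lemmas \ref{oct17a}, \ref{oct17c} and \ref{oct17b}, by exhibiting a $\clfour$-provable scheme (built up from a classical tautology via Match and Wait), instantiating it by $\clfour$-Instantiation and modus-ponensing with the three lemmas, with the duplication of $\mathbb{E}(z^T,t)$ supplied by Lemma \ref{oct17a} playing exactly the role you assign it. The only organizational difference is that the paper bundles all three premises into the single scheme $(Q_1\mli Q_1\mlc Q_1)\mlc(Q_1\mlc Q_2\mli Q_4)\mlc(q\mlc Q_1\mlc Q_2\mli Q_3)\mli \bigl(q\mlc Q_1\mlc Q_2\mli Q_1\mlc(Q_3\adc Q_4)\bigr)$ and handles the $\adc$ inside the $\clfour$-proof by Wait, whereas you first merge Lemmas \ref{oct17b} and \ref{oct17c} by Weakening and $\adc$-Introduction and then use a smaller two-conjunct scheme, which is indeed $\clfour$-provable by the Match applications you anticipate.
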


\begin{idea} This is a logical consequence of the previous three lemmas (i.e. a consequence exclusively due to logical axioms and rules, without appealing to induction or any nonlogical axioms of $\arfour$). Correspondingly, the proof given below is a purely syntactic exercise. \end{idea}

\begin{proof} The following sequence is a $\clfour$-proof: \vspace{7pt}

\noindent 1. $\begin{array}{l} 
(p_{1} \mli p_{2} \mlc \twg)\mlc (\tlg \mlc \tlg \mli \twg)\mlc (q\mlc \tlg \mlc \tlg \mli \twg)\mli q\mlc p_1\mlc \twg\mli p_2\mlc   \twg 
\end{array}$ \ \  Tautology \vspace{3pt}

\noindent 2. $\begin{array}{l} 
(p_{1} \mli p_{2} \mlc p_{3})\mlc (\tlg \mlc \tlg \mli \twg)\mlc (q\mlc p_3\mlc p_4\mli p_5)\mli q\mlc p_1\mlc p_4\mli p_2\mlc    p_5  
\end{array}$ \ \  Tautology \vspace{3pt}

\noindent 3. $\begin{array}{l} 
(p_{1} \mli p_{2} \mlc p_{3})\mlc (Q_{1} \mlc Q_{2} \mli Q_{4})\mlc (q\mlc p_3\mlc p_4\mli p_5)\mli q\mlc p_1\mlc p_4\mli p_2\mlc    p_5  
\end{array}$ \ \  Wait: 2 \vspace{3pt}

\noindent 4. $\begin{array}{l} 
(p_{1} \mli p_{2} \mlc Q_{1})\mlc (Q_{1} \mlc Q_{2} \mli Q_{4})\mlc (q\mlc Q_1\mlc Q_2\mli Q_3)\mli q\mlc p_1\mlc Q_2\mli p_2\mlc    Q_3  
\end{array}$ \ \  Match (3 times): 3 \vspace{3pt}

\noindent 5. $\begin{array}{l} 
(p_{1} \mli p_{2} \mlc p_{3})\mlc (p_{3} \mlc p_{4} \mli p_{5})\mlc (q\mlc \tlg\mlc \tlg\mli \twg)\mli q\mlc p_1\mlc p_4\mli p_2\mlc    p_5 
\end{array}$ \ \  Tautology \vspace{3pt}

\noindent 6. $\begin{array}{l} 
(p_{1} \mli p_{2} \mlc p_{3})\mlc (p_{3} \mlc p_{4} \mli p_{5})\mlc (q\mlc Q_1\mlc Q_2\mli Q_3)\mli q\mlc p_1\mlc p_4\mli p_2\mlc    p_5 
\end{array}$ \ \ Wait: 5  \vspace{3pt}

\noindent 7. $\begin{array}{l} 
(p_{1} \mli p_{2} \mlc Q_{1})\mlc (Q_{1} \mlc Q_{2} \mli Q_{4})\mlc (q\mlc Q_1\mlc Q_2\mli Q_3)\mli q\mlc p_1\mlc Q_2\mli p_2\mlc    Q_4 
\end{array}$ \ \ Match (3 times): 6 \vspace{3pt}

\noindent 8. $\begin{array}{l} 
(p_{1} \mli p_{2} \mlc Q_{1})\mlc (Q_{1} \mlc Q_{2} \mli Q_{4})\mlc (q\mlc Q_1\mlc Q_2\mli Q_3)\mli q\mlc p_1\mlc Q_2\mli p_2\mlc   (Q_3\adc Q_4) 
\end{array}$ \ \  Wait: 1,4,7 \vspace{3pt}

\noindent 9. $\begin{array}{l} 
(Q_{1} \mli Q_{1} \mlc Q_{1})\mlc (Q_{1} \mlc Q_{2} \mli Q_{4})\mlc (q\mlc Q_1\mlc Q_2\mli Q_3)\mli q\mlc Q_1\mlc Q_2\mli Q_1\mlc   (Q_3\adc Q_4) 
\end{array}$   \mbox{Match (twice): 8}\vspace{7pt}
 
The following formula matches the last formula of the above sequence and hence, by $\clfour$-Instantiation, it is provable  in $\arfour$:
\[\begin{array}{l}
\bigl(\mathbb{E} ( z^T,  t )\mli \mathbb{E} ( z^T, t )\mlc \mathbb{E} ( z^T,  t )\bigr)\mlc  
\bigl(  \mathbb{E} ( z^T,  t)\mlc \mathbb{F} ( z^T,  t  )\mli \mathbb{F} ( z^T, t\successor )\bigr)  \mlc  
\bigl(  |t\successor |\mleq\bound  
\mlc  \mathbb{E} ( z^T,  t )\mlc \mathbb{F} ( z^T, t ) \mli \mathbb{E} ( z^T,  t\successor )\bigr)   
 \\
\mli |t\successor |\mleq\bound\mlc \mathbb{E} ( z^T,  t )\mlc \mathbb{F} ( z^T,  t )\mli \mathbb{E} ( z^T,  t )\mlc\bigl(\mathbb{E} ( z^T, t\successor )\adc \mathbb{F} ( z^T,  t\successor  )\bigr).
\end{array}\]

But, by Lemmas \ref{oct17a}, \ref{oct17c} and \ref{oct17b}, the three conjuncts of the antecedent of the above formula are also provable. Hence, by Modus Ponens, so is (the desired) consequent. 
\end{proof}

\begin{lemma}\label{oct17e}
Assume $R$ is an  elementary formula, $w$ is any variable, $t$ is a  variable other than $\bound$, $z$ is a  variable other than $\bound,w,t$,  $T$ is an  elementary formula functional for $z$,  and  
\begin{equation}\label{nov29a}
\begin{array}{l}
\arfour \vdash R \mli \mathbb{E} ( z^T, w )\mlc \mathbb{F} ( z^T, w ).
\end{array}
\end{equation}
Then   
\begin{equation}\label{jan8b}
\begin{array}{l}
\arfour\vdash R\mlc  w \mleq t\mleq\xi(\bound) \mli   \mathbb{E} ( z^T, t)\mlc \mathbb{F} ( z^T, t) .
\end{array}
\end{equation} 
\end{lemma}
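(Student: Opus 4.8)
The plan is to obtain (\ref{jan8b}) as a single application of the admissible rule of Lemma \ref{noct17e}, instantiating the schematic formulas $E(t)$ and $F(t)$ of that rule by $\mathbb{E}(z^T,t)$ and $\mathbb{F}(z^T,t)$ respectively, and taking the $\bound$-term $\tau$ of the rule to be $\xi(\bound)$. Recall that the rule of Lemma \ref{noct17e} has the two premises $R\mli E(w)\mlc F(w)$ and $|t\successor|\mleq\bound\mlc E(t)\mlc F(t)\mli E(t)\mlc\bigl(E(t\successor)\adc F(t\successor)\bigr)$, and yields the conclusion $R\mlc w\mleq t\mleq\tau\mli E(t)\mlc F(t)$. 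Under the above instantiation, that conclusion is exactly the target (\ref{jan8b}), so it suffices to supply the two premises as $\arfour$-theorems.

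First I would observe that the first premise, which after instantiation reads $R\mli \mathbb{E}(z^T,w)\mlc \mathbb{F}(z^T,w)$, is nothing other than the hypothesis (\ref{nov29a}) that we are given. Next, the second premise, after instantiation, reads $|t\successor|\mleq\bound\mlc \mathbb{E}(z^T,t)\mlc \mathbb{F}(z^T,t)\mli \mathbb{E}(z^T,t)\mlc\bigl(\mathbb{E}(z^T,t\successor)\adc \mathbb{F}(z^T,t\successor)\bigr)$, and this is precisely the statement already established in Lemma \ref{oct17d} (which in turn rests on Lemmas \ref{oct17a}, \ref{oct17c} and \ref{oct17b}). Having both premises available, the admissibility asserted in Lemma \ref{noct17e} immediately delivers $\arfour\vdash R\mlc w\mleq t\mleq\xi(\bound)\mli \mathbb{E}(z^T,t)\mlc \mathbb{F}(z^T,t)$, as desired.

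The one point requiring care is the verification that the syntactic side-conditions of Lemma \ref{noct17e} are met by this instantiation, namely that $\mathbb{E}(z^T,w)$ and $\mathbb{E}(z^T,t\successor)$ are genuinely obtained from $\mathbb{E}(z^T,t)$ by substituting $w$ and $t\successor$ for the free occurrences of $t$ (and similarly for $\mathbb{F}$). Since $\mathbb{E}(z^T,t)$ abbreviates $\cla z\bigl(T(z)\mli\mathbb{E}(z,t)\bigr)$, this holds as long as the external binder $\cla z$ and the formula $T$ do not interfere with the substitution, i.e. as long as $t$ and $w$ are distinct from $z$ and do not occur in $T$. These conditions hold by the standing assumptions that $z$ differs from $w$ and $t$, together with the reading of the $z^T$-notation under which $T$ serves only to pin down the counterbehavior $z$ and does not mention the time variable. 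I expect no genuine obstacle here: the entire computational content of the lemma has been front-loaded into Lemma \ref{oct17d}, so this final step is essentially bookkeeping, consisting of matching the two available theorems against the two premise schemes of the induction rule and reading off the conclusion.
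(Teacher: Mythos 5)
Your proposal is correct and coincides with the paper's own proof, which derives the lemma immediately from Lemma \ref{oct17d} (supplying the inductive-step premise) and the rule of Lemma \ref{noct17e} with $\tau=\xi(\bound)$, the given hypothesis serving as the basis premise. The additional bookkeeping you spell out about the substitution side-conditions is consistent with the paper's conventions and raises no issues.
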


\begin{proof}  Immediately from Lemmas \ref{oct17d} and \ref{noct17e}.
\end{proof}

\subsection{Taking care of the case of large bounds}\label{jjjj}

We will be using 
\[\mathbb{A}(z,r,t)\label{iaaa}\]
for a natural formalization of the predicate saying that $r\mleq t$, \ 
$z$ is a $(\bound,r)$-adequate counterbehavior (so that $\bound$ is a hidden free variable of this formula) and, in the $(z,e_\bound)$-branch,   $\cal X$  is not in a move state at any time  $v$ with $r\mleq v\mless t$. 

Next, we will be using 
\[\mathbb{B}(z,r,t)\label{ibbb}\]
as an abbreviation of \[t\mless\xi(\bound)\mlc \mathbb{A}(z,r,t)\mlc \gneg \mathbb{A}(z,r,t\successor). \]

In the context of the $(z,e_\bound)$-branch, $\mathbb{B}(z,r,t)$ thus asserts that, on the interval $[r,t]$, one single move $\beta$ was made, and it was made exactly at time $t$. Note that, since the condition of the $(\bound,r)$-adequacy of $z$ is implied by ($\mathbb{A}(z,r,t)$ and hence) $\mathbb{B}(z,r,t)$, $\pa$ knows that the above move $\beta$ can only be made by $\cal X$.

For a variable $z$ and an elementary formula $T$ functional for $z$, as we did in the case of $\mathbb{E}$ and $\mathbb{F}$, we will write $\mathbb{A} ( z^T,r,t )$\label{iopo} as an abbreviation of $\cla z ( T\mli \mathbb{A}(z,r,t) )$. Similarly for $\mathbb{B} ( z^T,r,t )$ and $\mathbb{W}^E ( z^T,t_1,t_2,\bound,\vec{s} )$. 
 
\begin{lemma}\label{dec19}
Assume $x,u,z,w,\vec{s}$ are pairwise distinct non-$\bound$ variables, $R$ is an elementary formula,  $T$ is an elementary formula functional for $z$,  $E=E(\bound,\vec{s})$ is a  safe  formula all of whose free variables are among $\bound,\vec{s}$, and the following provabilities hold:
\begin{equation}\label{mmm1}
\arfour\vdash R\mli\xi(\bound)\equals u;
\end{equation}
\begin{equation}\label{mmm2}
\arfour\vdash R\mli \mathbb{W}^E ( z^T,w,w,\bound,\vec{s}  );
\end{equation}
\begin{equation}\label{mmm3}
\arfour\vdash R\mli  \mathbb{E} ( z^T, w  )\mlc \mathbb{F} ( z^T, w  ).
\end{equation}
 Then $\arfour$ proves
\begin{equation}\label{nnn1}
R\mli \mathbb{W}^E ( z^T,w,u,\bound, \vec{s} ) \add \ade x  \mathbb{B} ( z^T, w , x ) .
\end{equation}
\end{lemma}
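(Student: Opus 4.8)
The plan is to derive (\ref{nnn1}) from the induction rule of Lemma \ref{noct17ee}, taking $\tau:=\xi(\bound)$ and, as the induction formula, the disjunction
\[F(t)\ =\ \mathbb{W}^E(z^T,w,t,\bound,\vec{s})\ \add\ \ade x\,\mathbb{B}(z^T,w,x),\]
where $t$ is a fresh non-$\bound$ variable not occurring in $R$. Intuitively $F(t)$ asserts: either the play has stayed in the position $E(\bound,\vec{s})$ throughout the interval $[w,t]$, or $\cal X$ has already entered a move state once, the first such time being recorded by the $\ade x$-witness. The basis $R\mli F(w)$ is immediate from (\ref{mmm2}) by $\add$-Choose on the left disjunct. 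The rule then yields $\arfour\vdash R\mlc w\mleq t\mleq\xi(\bound)\mli F(t)$, from which (\ref{nnn1}) will be recovered in the last paragraph.

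The inductive step $R\mlc w\mleq t\mless\xi(\bound)\mlc F(t)\mli F(t\successor)$ is the heart of the matter. Writing $Q:=R\mlc w\mleq t\mless\xi(\bound)$ (elementary, hence recyclable), I would first make the configuration of $\cal X$ at time $t$ available: Lemma \ref{oct17e} applied to (\ref{mmm3}) gives $\arfour\vdash R\mlc w\mleq t\mleq\xi(\bound)\mli\mathbb{E}(z^T,t)\mlc\mathbb{F}(z^T,t)$, and since $Q$ elementarily implies $R\mlc w\mleq t\mleq\xi(\bound)$, Transitivity and Weakening give $\arfour\vdash Q\mli\mathbb{E}(z^T,t)$. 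As $Q$ and $\mathbb{E}(z^T,t)$ are recyclable (the latter by Lemma \ref{oct17a}), it suffices to prove the ``loaded'' step $\mathbb{E}(z^T,t)\mlc Q\mlc F(t)\mli F(t\successor)$ and then splice $\mathbb{E}(z^T,t)$ in through the $\clfour$-provable cut schema $(q\mli E)\mlc(E\mlc q\mlc P\mli P')\mli(q\mlc P\mli P')$ (with $q$ elementary). For the loaded step I split the antecedental $\add$ of $F(t)$ by $\add$-Elimination. If the adversary supplies $\ade x\,\mathbb{B}(z^T,w,x)$, I choose the right disjunct of $F(t\successor)$ and copy the witness across. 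If the adversary supplies $\mathbb{W}^E(z^T,w,t,\bound,\vec{s})$ (the position was stable on $[w,t]$), I read off the current state of $\cal X$ from the resource $\mathbb{E}(z^T,t)$: its $\mbox{\em State}_a$-disjunction, being in the antecedent, is selected by the adversary, who is forced to reveal the true state lest the elementary resource become false. For each non-move state $a$ I choose the left disjunct of $F(t\successor)$, relying on the $\pa$-provable elementary fact that a non-move state at $t$ together with $\mathbb{W}^E(z^T,w,t,\bound,\vec{s})$ and $t\mless\xi(\bound)$ entails $\mathbb{W}^E(z^T,w,t\successor,\bound,\vec{s})$; for each move state $a$ I choose the right disjunct and specify the witness $x$ as $t$, relying on the $\pa$-provable fact that a move state at $t$ plus stability on $[w,t]$ gives $\mathbb{B}(z^T,w,t)$. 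Assembling these finitely many cases by $\add$-Elimination over $\mbox{\em STATES}$ completes the loaded step.

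To finalize, from $\arfour\vdash R\mlc w\mleq t\mleq\xi(\bound)\mli F(t)$ I would generalize the free $t$ by $\ada$-Introduction and instantiate it to $u$ by $\ada$-Elimination (this is precisely why the statement is phrased with the variable $u$ rather than the term $\xi(\bound)$, since $\ada$-Elimination substitutes variables, not terms), obtaining $\arfour\vdash R\mlc w\mleq u\mleq\xi(\bound)\mli F(u)$, whose consequent $F(u)=\mathbb{W}^E(z^T,w,u,\bound,\vec{s})\add\ade x\,\mathbb{B}(z^T,w,x)$ is exactly the consequent of (\ref{nnn1}). It then remains to discharge $w\mleq u\mleq\xi(\bound)$ under $R$: by (\ref{mmm1}) we have $R\mli\xi(\bound)\equals u$, and from (\ref{mmm2}) together with the functionality $\arfour\vdash\cle z\,T(z)$ of $T$ we obtain the elementary fact $R\mli w\mleq\xi(\bound)$ (the predicate $\mathbb{W}^E(z,w,w,\bound,\vec{s})$ forces $w\mleq\xi(b)$ independently of $z$). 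Hence $R\mli w\mleq u\mleq\xi(\bound)$ and, $R$ being recyclable, $R\mli R\mlc w\mleq u\mleq\xi(\bound)$; Transitivity with the instantiated formula yields the target $R\mli F(u)$, i.e.\ (\ref{nnn1}).

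The main obstacle is the loaded inductive step, and within it the resource bookkeeping: both $\mathbb{E}(z^T,t)$ and $Q$ must be consumed as recyclable resources, since no second copy is available from the induction — this is exactly why $F(t)$ is built from $\mathbb{W}^E$ and $\ade x\,\mathbb{B}$ rather than from $\mathbb{E},\mathbb{F}$ directly, and why the recyclability of $\mathbb{E}$ (Lemma \ref{oct17a}) is indispensable. The two elementary transition facts are $\Sigma^0_1$ statements about the run of $\cal X$ and are $\pa$-provable by the usual arithmetization; the delicate point will be making the $\cla z(T\mli\cdot)$ ``barred'' wrappers commute correctly with the logical connectives, exactly as in Lemmas \ref{oct17c}, \ref{oct17b} and \ref{oct17d} through $\clfour$-provable schemas such as (\ref{jan10cc}). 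As in those lemmas, I would present the loaded step as a described strategy, trusting its translation into a formal $\arfour$-derivation.
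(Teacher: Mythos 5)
Your proof is correct and follows essentially the same route as the paper's: both derive (\ref{nnn1}) by the induction rule of Lemma \ref{noct17ee} with the basis read off from (\ref{mmm2}), the inductive step resolved by consulting $\mathbb{E}(z^T,t)$ (via Lemma \ref{oct17e}) to decide whether $\cal X$ is in a move state at $t$, and a final $\ada$-Introduction/$\ada$-Elimination instantiating $t$ to $u$. The only (inessential) difference is that the paper carries the lighter predicate $\mathbb{A}(z^T,w,t)$ through the induction and converts $\mathbb{A}(z^T,w,u)$ back to $\mathbb{W}^E(z^T,w,u,\bound,\vec{s})$ at the very end via a $\pa$-provable implication, whereas you carry $\mathbb{W}^E(z^T,w,t,\bound,\vec{s})$ itself; both bookkeepings go through.
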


\begin{idea} According to (\ref{mmm2}), $\arfour$ knows that,  under the  assumptions (of the truth of) $R$ and $T$,   $z$ is a $(\bound,w)$-adequate counterbehavior  and,   in the context of the $(z,e_\bound)$-branch, by time $w$, the play is legal and it has evolved to the position $E(\bound,\vec{s})$. Under the above assumptions, the target (\ref{nnn1}) is the problem of telling whether the same situation persists up to time $u$ (the left $\add$-disjunct of the consequent), or whether a (legal or illegal) move is made at some time $x$ with $w\mleq x\mless\xi(\bound)$ (the right $\add$-disjunct), i.e. --- in view of (\ref{mmm1}) --- at some time $x$ with $w\mleq x\mless u$. 

Solving this problem is not hard. Conditions (\ref{mmm1}) and (\ref{mmm3}), by Lemma \ref{oct17e}, imply full knowledge of the configurations of the machine at any time $t$ with $w\mleq t\mless u$.  Using this knowledge, we can trace the work of the machine step-by-step starting from $w$ and ending with $u\mminus 1$ and see if a move is made or not.  Technically, such ``tracing'' can be implemented relying on the induction rule of Lemma    \ref{noct17ee}. 
\end{idea}

\begin{proof} Assume all  conditions of the lemma. We shall point out that the condition on the safety of  $E$ is not relevant to the present proof, and it is included in the formulation of the lemma merely for the convenience of future references.

By Lemma \ref{oct17e}, condition (\ref{mmm3}) implies
\begin{equation}\label{oct26az}
\arfour\vdash    R\mlc  w \mleq t\mleq \xi(\bound) \mli  \mathbb{E} ( z^T, t )\mlc \mathbb{F} ( z^T, t )
\end{equation} 
which, in turn, in view of condition (\ref{mmm1}), can be easily seen to further imply 
\begin{equation}\label{oct26a}
\arfour\vdash    R\mlc  w \mleq t\mleq u \mli  \mathbb{E} ( z^T, t )\mlc \mathbb{F} ( z^T, t ).  
\end{equation}
 
Obviously $\pa\vdash \mathbb{W}^E ( z^T,w,w,\bound,\vec{s}  )\mli\mathbb{A} ( z^T, w , w )$. This, together with (\ref{mmm2}), by Transitivity, yields  $\arfour\vdash R\mli  \mathbb{A} ( z^T, w , w )$, whence, by $\add$-Choose, 
\begin{equation}\label{dec20a}
 \arfour\vdash R \mli  \mathbb{A} ( z^T, w , w   )\add \ade x  \mathbb{B} ( z^T, w , x )  .  
\end{equation}

{\bf Claim 1:} {\em $\arfour$ proves} 
\begin{equation}\label{jan16a}
\begin{array}{l}
R\mlc  w \mleq  t\mless \xi(\bound) \mlc  \bigl(\mathbb{A} ( z^T, w ,t )\add\ade x   \mathbb{B} ( z^T, w , x ) \bigr) 
\mli 
  \mathbb{A} ( z^T, w ,t\successor )\add \ade x   \mathbb{B} ( z^T, w , x ) .
\end{array}\end{equation}  

\begin{subproof} 
As in the case of Lemmas \ref{oct17c} and \ref{oct17b}, we will have to limit ourselves to an informal reasoning within $\arfour$.  Assume $R\mlc  w \mleq t\mless \xi(\bound)$, and   (a single copy) of the resource 
\begin{equation}\label{j16a}
 \mathbb{A} ( z^T, w , t )\add \ade x  \mathbb{B} ( z^T, w ,x )  
\end{equation}
 from the antecedent of (\ref{jan16a}) is at our disposal. Our task is to solve the consequental problem 
\begin{equation}\label{j16b}
\mathbb{A} ( z^T, w ,t\successor )\add \ade x  \mathbb{B} ( z^T, w , x ) . 
\end{equation}
The environment will have to choose one of the two $\add$-disjuncts of (\ref{j16a}). If the right disjunct is chosen, then we also choose the identical right disjunct in (\ref{j16b}), thus reducing the (relevant part of the) overall play to 
\(\ade x \mathbb{B} ( z^T, w ,x ) \mli \ade x  \mathbb{B} ( z^T, w ,x ) \)
 which, having the form $F\mli F$, is, of course, solvable/provable. 

Suppose now the left disjunct of (\ref{j16a}) is chosen, bringing the latter to $\mathbb{A} ( z^T, w , t )$. If this formula is false, we win. So, assume it is true.  
In view of (\ref{oct26az}), we have access to   the resource $ \mathbb{E} ( z^T, t )$, which contains information about the state of the machine at time $t$ in the play against the  counterbehavior $z$ (``{\em the}'' due to the functionality of $T$ for $z$) for which $T$ is true. If that state is not a move state, then we resolve (\ref{j16b}) by choosing its left component. And if that state is a move state, then we resolve (\ref{j16b}) by choosing its right component and specifying $x$ as $t$ in it. With a little thought, this can be seen to guarantee a win. \end{subproof} 

From (\ref{dec20a}) and Claim 1, by the rule of Lemma \ref{noct17ee}, we find 
\[\arfour\vdash R\mlc  w \mleq  t\mleq \xi(\bound) \mli  \mathbb{A} ( z^T, w ,t )\add \ade x \mathbb{B} ( z^T, w ,x )  \]
which, in view of condition (\ref{mmm1}), obviously implies 
\[\arfour\vdash R\mlc  w \mleq  t\mleq u \mli  \mathbb{A} ( z^T, w ,t )\add \ade x  \mathbb{B} ( z^T, w ,x ) . \]
Applying first $\ada$-Introduction and then $\ada$-Elimination to the above formula, we get
\begin{equation}\label{j16c}
\arfour\vdash R\mlc  w \mleq  u\mleq u \mli  \mathbb{A} ( z^T, w ,u )\add \ade x  \mathbb{B} ( z^T, w ,x ) .
\end{equation}
But the condition $w\mleq\xi(\bound)$ is part of $\mathbb{W}^E ( z^T,w,w,\bound,\vec{s}  )$ and hence, in view of (\ref{mmm2}) and (\ref{mmm1}),  $\arfour$ obviously proves $R\mli  w \mleq u\mleq u$.  This, in conjunction with (\ref{j16c}), can be easily seen to imply the $\arfour$-provability of
\begin{equation}\label{j16d}
 R\mli  \mathbb{A} ( z^T, w ,u )\add \ade x \mathbb{B} ( z^T, w ,x ) .
\end{equation}
Clearly $\pa\vdash \mathbb{W}^E ( z^T,w,w,\bound,\vec{s} )\mli \mathbb{A} ( z^T, w ,u )\mli \mathbb{W}^E ( z^T,w,u,\bound,\vec{s} )$. This, together with 
 (\ref{mmm2}), by Transitivity, implies that $\arfour$ proves 
\begin{equation}\label{j16e}
 R\mli \mathbb{A} ( z^T, w ,u )\mli \mathbb{W}^E ( z^T, w ,u,\bound,\vec{s} ).
\end{equation}

One can easily verify that $\clfour$ proves \[ (p\mli q_1\add Q)\mlc (p\mli q_1\mli q_2)\mli (p\mli q_2\add Q).\] Now,   $(\ref{j16d})\mlc(\ref{j16e})\mli(\ref{nnn1})$ can be seen to be an instance of the above formula and hence provable in $\arfour$. Modus-ponensing it with (\ref{j16d}) and (\ref{j16e}) yields (the $\arfour$-provability of) the desired 
 (\ref{nnn1}).
\end{proof}

Assume $E$ is a safe formula.   We say that a formula $G$ is a {\bf $\add$-deletion}\label{ideletion} of $E$ iff $G$ is the result of replacing  in $E$ some surface occurrence of a subformula $H_1\add\ldots\add H_m$ by $H_i$ (some $i\in\{1,\ldots,m\}$). And we say that a formula $G(y)$ is a {\bf $\ade$-deletion} of $E$ iff $G(y)$ is the result of   replacing  in $E$ some surface occurrence of a subformula $\ade yH(y)$ by $H(y)$ (deleting ``\hspace{2pt}$\ade y$'', that is). Note that  $\add$-deletions and $\ade$-deletions of a safe formula remain safe, and do not create free occurrences of variables that also have bound occurrences, which would otherwise violate  Convention \ref{jan26}. 

\begin{lemma}\label{dec21}
Assume the conditions of Lemma \ref{dec19} are satisfied. 
Let $G_1=G_1(\bound,\vec{s}),\ldots,G_m=G_m(\bound,\vec{s})$ be all of the $\add$-deletions of $E$, and $H_1=H_1(\bound,\vec{s},y_1),\ldots,H_n=H_n(\bound,\vec{s},y_n)$ be all of the $\ade$-deletions of $E$ (each $H_i $ is obtained from $E$ by deleting a surface occurrence of ``\hspace{2pt}$\ade y_i$''). Let $t$ be a fresh variable, and $\mathbb{C}(t)$ and $\mathbb{D}(t)$ be abbreviations defined by 
\[\begin{array}{rcl}
\mathbb{C}(t) & = & \mathbb{W}^{G_1} ( z^T,t\successor,t\successor,\bound,\vec{s}) \add\ldots\add \mathbb{W}^{G_m} ( z^T,t\successor,t\successor,\bound,\vec{s} ) ;\label{iccc}\\
\mathbb{D}(t) & = & \ade y_1\mathbb{W}^{H_1 } ( z^T,t\successor,t\successor,\bound,\vec{s},y_1 )\add\ldots\add \ade y_n\mathbb{W}^{H_n } (z^T,t\successor,t\successor,\bound,\vec{s},y_n ) .\label{iddd}
\end{array}\]
Then $\arfour$ proves
\begin{equation}\label{mmm44}
 R\mlc \mathbb{B} ( z^T, w ,t )\mli   \mathbb{L}\add    
\mathbb{C}(t)\add \mathbb{D}(t).
\end{equation}
\end{lemma}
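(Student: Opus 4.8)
The plan is to establish the target by describing an efficient strategy (an ``argue in $\arfour$'' reasoning of the kind already used for Lemmas \ref{oct17c}, \ref{oct17b} and \ref{dec19}), which can then be transcribed into a bottom-up $\arfour$-deduction. Assume $R\mlc \mathbb{B}(z^T, w ,t)$. By condition (\ref{mmm2}) the play over $X$ has reached the position $E(\bound,\vec{s})$ by time $ w $, and by the meaning of $\mathbb{B}$ (together with the $(\bound, w )$-adequacy of the counterbehavior $z$ that it entails, which forces the environment to stay silent on $[ w ,t]$) we know that, in the $(z,e_\bound)$-branch, the machine $\cal X$ --- and not the environment --- makes exactly one move $\beta$ on the interval $[ w ,t]$, precisely at time $t$. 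Since $\mathbb{B}$ supplies $ w \mleq t\mless\xi(\bound)$ and hence $ w \mleq t\mleq\xi(\bound)$, I can invoke Lemma \ref{oct17e} on condition (\ref{mmm3}) to obtain the resource $\mathbb{E}( z^T,t)\mlc\mathbb{F}( z^T,t)$, giving full constructive access to the configuration of $\cal X$ at time $t$: its (move) state, the positions of the three heads, and the contents of all cells.

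From this configuration I decode the relevant data of $\beta$. Since $E$ is a fixed safe formula, its surface $\add$- and $\ade$-occurrences have only finitely many fixed ``addresses'' --- the constant-length navigation prefixes through the surrounding $\mlc,\mld$ structure. Using the efficient-computability apparatus of Section \ref{s17}, I read a bounded prefix of the work-tape contents left of the head (the cell contents coming from $\mathbb{F}( z^T,t)$ and the head location from $\mathbb{E}( z^T,t)$) and compare it --- a constant-size decision --- against these addresses. This produces a constructive ($\add$-style) case distinction into finitely many alternatives: $\beta$ matches the address of some $\add$-occurrence together with a legal disjunct index; $\beta$ matches the address of some $\ade$-occurrence, in which case I further extract the value part $c$ and test $|c|\mleq\bound$ (using Lemmas \ref{comlen}, \ref{nov18a} and their companions); or $\beta$ fits none of these and is therefore illegal.

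In each branch I select the corresponding disjunct of $\mathbb{L}\add\mathbb{C}(t)\add\mathbb{D}(t)$. If $\beta$ is the $\add$-move leading to the $\add$-deletion $G_j$, then at time $t\successor$ the play hits $G_j(\bound,\vec{s})$ and, being the instantaneous effect of the just-made move, is momentarily stable, so $\mathbb{W}^{G_j}( z^T,t\successor,t\successor,\bound,\vec{s})$ holds; this is a mechanically verifiable fact that $\pa$ derives from the configuration resources, and I choose the $j$th disjunct of $\mathbb{C}(t)$. If $\beta$ is the $\ade$-move at the occurrence matching the $\ade$-deletion $H_i$ with extracted value $c$ --- necessarily $|c|\mleq\bound$ for a legal $\ade^\bound$-move --- then $\mathbb{W}^{H_i}( z^T,t\successor,t\successor,\bound,\vec{s},c)$ holds, so I choose the $i$th disjunct of $\mathbb{D}(t)$ and specify $y_i$ as $c$ (legitimate exactly because $|c|\mleq\bound$). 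Finally, if $\beta$ is illegal, then the $(z,e_\bound)$-run is $\pp$-illegal, so $\cal X$ does not win $X$ in time $\xi$; hence $\mathbb{L}$ is true, $\pa$ proving the implication from ``$\beta$ illegal in this branch'' to $\mathbb{L}$, and I choose the $\mathbb{L}$-disjunct. Assembling the finitely many cases by $\add$-Elimination, after the constructive case distinction, yields the target (\ref{mmm44}).

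The main obstacle will be the constructive decoding of $\beta$ together with its case analysis: I must show that, purely from the resources $\mathbb{E}( z^T,t)$ and $\mathbb{F}( z^T,t)$, $\arfour$ can efficiently decide the type of the move and, in the $\ade$-case, compute its value within the bound $\bound$. This is precisely where the efficient-computability toolbox of Section \ref{s17} is indispensable, and where the size control guaranteed by the $|x|\mleq\bound$ guards built into $\mathbb{E},\mathbb{F}$ and by the legality of $\ade^\bound$-moves keeps all computed quantities (head positions, the chosen value $c$) of size at most $\bound$. By contrast, the verification that the selected $\mathbb{W}$-formula (or $\mathbb{L}$) is $\pa$-provable from the configuration facts is routine arithmetization and should present no difficulty.
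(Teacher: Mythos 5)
Your overall plan coincides with the paper's: use $\mathbb{B}(z^T,w,t)$ plus Lemma \ref{oct17e} to get constructive access to the configuration of $\cal X$ at time $t$, decode the move $\beta$ made at that moment, split constructively into the finitely many cases (a legal $\add$-move, a legal $\ade$-move, or an illegal move), and choose the matching disjunct of $\mathbb{L}\add\mathbb{C}(t)\add\mathbb{D}(t)$, with $\pa$ supplying the verifications that the chosen $\mathbb{W}$-formula (or $\mathbb{L}$) is true in each branch. The paper organizes exactly this into Claims 1--6.

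There is, however, one substantive under-specification, precisely at the point you yourself flag as ``the main obstacle.'' Recognizing the constant-length address prefix of $\beta$ (the $\alpha_j$'s and the strings ``$\beta_i.$'') is indeed a constant-size matter of querying $\mathbb{E}(z^T,t)$ and $\mathbb{F}(z^T,t)$ finitely many times. But extracting the \emph{value} part $c$ of a $\ade$-move $\beta_i.c$ is not: $c$ may be up to $\bound$ bits long, so it must be reconstructed cell by cell from the work tape, and the Section \ref{s17} lemmas you cite (Lemmas \ref{comlen}, \ref{nov18a}, etc.) operate on numbers already in hand --- they do not by themselves let $\arfour$ assemble a numeral of non-constant length from tape-cell contents. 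What is actually needed is an induction: the paper's Claim 3 runs a WPTI induction on the number $v$ of symbols read so far, maintaining a predicate asserting that cells $\#k$ through $\#(k\plus v)$ spell the constant $y$ built up so far, extending $y$ to $y0$ or $y1$ at each step, and branching to $\mathbb{L}$ whenever an illegal symbol or an over-long move is detected. Without identifying induction as the mechanism here, the step ``I further extract the value part $c$'' does not go through from the tools you invoke. The rest of your argument (the illegality case yielding $\mathbb{L}$, the $\pa$-provable passage from ``move $\alpha_j$ was made'' to $\mathbb{W}^{G_j}(z^T,t\successor,t\successor,\bound,\vec{s})$, and the final assembly by $\add$-style case analysis) is sound and matches the paper.
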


\begin{idea} By the conditions of the lemma plus the additional condition expressed by the antecedent of (\ref{mmm44}), and in the context of the play according to the scenario  of the $(z,e_\bound)$-branch (for the counterbehavior $z$ satisfying $T$), we --- $\arfour$, that is --- know that, by time $w$, the play has evolved to the position $E$, and that, at time $t$ with $w\mleq t\mless\xi(\bound)$, some new move $\beta$ has been made by the machine. From (\ref{oct26az}), we have all information necessary to determine whether $\beta$ is legal or not and --- if $\beta$ is legal --- what move exactly it is. If $\beta$ is illegal, the machine does not win $X$ after all, so we can choose $\mathbb{L}$  in the consequent of (\ref{mmm44}). And if $\beta$ is legal, then, depending on what it is, we can choose   $\mathbb{C}(t)$ or $\mathbb{D}(t)$ in the consequent of (\ref{mmm44}), and then further choose in it the corresponding subcomponent $\mathbb{W}^{G_i} ( z^T,t,t,\bound,\vec{s})$ or ($\ade y_i\mathbb{W}^{H_i} ( z^T,t\successor,t\successor,\bound,\vec{s},y_i)$ and then)  $\mathbb{W}^{H_i} ( z^T,t\successor,t\successor,\bound,\vec{s},c)$.  
\end{idea}

\begin{proof} Assume the conditions of the lemma. Let us fix the two  sets $\{\alpha_1,\ldots,\alpha_m\}$ and $\{\beta_1,\ldots,\beta_n\}$ of strings such that the move  that brings $E(\bound,\vec{s})$ down to $G_i(\bound,\vec{s})$ is $\alpha_i$,\footnote{Strictly speaking, more than one move can bring $E$ to the same $\sqcup$-deletion (e.g., think of the case $E=Y\add Y$). But this is not a serious problem, and is easily taken care of by assuming that the list $G_1,\ldots,G_m$ has repetitions if necessary so that, for each move that turns $E$ into one of its $\sqcup$-deletions, the list contains a separate copy of the corresponding $\sqcup$-deletion.} and the move that brings $E(\bound,\vec{s})$ down to $H_i(\bound,\vec{s},c)$ (whatever constant $c$) is $\beta_i.c$. 

For each $i\in\{1,\ldots,m\}$, let $\mathbb{G}_i(z)$\label{iggg} be an elementary formula saying that  $ \mathbb{B} ( z, w ,  t )$ is true and the move made by the machine at time $  t$ in the $(z,e_\bound)$-branch is $\alpha_i$. Extending our notational practice to this formula, $\mathbb{G}_i(z^T)$ will be an abbreviation of $\cla z\bigl(T\mli \mathbb{G}_i(z)\bigr)$.\vspace{7pt}

{\bf Claim 1.}  {\em $\arfour$ proves}
\begin{equation}\label{jan19a}
\begin{array}{l}
   R\mlc   \mathbb{B} ( z^T, w ,  t )\mli  
\mathbb{G}_1( z^T)\add\ldots\add\mathbb{G}_m( z^T)\add\gneg \bigl(\mathbb{G}_1( z^T)\mld\ldots\mld\mathbb{G}_m( z^T)\bigr).\vspace{7pt}
\end{array}\end{equation} 

\begin{subproof}  Let $k$ be the greatest of the lengths of the moves $\alpha_1,\ldots,\alpha_m$. Argue in $\arfour$. Assume  $R\mlc   \mathbb{B} ( z^T, w ,  t )$. Consider the counterbehavior $z$ for which $T$ is true, and consider the play according to the scenario  of the $(z,e_\bound)$-branch.   $\mathbb{B} ( z^T, w ,  t )$ implies $w\mleq t\mless \xi(\bound)$. Therefore,  in view of (\ref{oct26az}), full information is available about the situation in the machine at time $t$. Using this information, we first find the location $l$ of the work-tape head and, using the results of Section \ref{s17}, find $a$ with $a=min(l,k)$. Then we construct  a full picture of the contents of cells $\#0$ through $\# (a\mminus 1)$ of the work tape at time $t$. From this picture, we can determine whether it shows making one of the moves $\alpha_i$ (and which one), or none, and accordingly choose the true $\add$-disjunct of the consequent of (\ref{jan19a}).  
\end{subproof}

For each $i\in\{1,\ldots,n\}$, let $\mathbb{H}_i(z)$\label{ihhh} be an elementary formula saying that  $ \mathbb{B} ( z, w ,  t )$ is true and the move made by the machine at time $ t$ in the $(z,e_\bound)$-branch has the prefix ``$\beta_i.$''. $\mathbb{H}_i(z^T)$ will be an abbreviation of $\cla z\bigl(T\mli \mathbb{H}_i(z)\bigr)$.\vspace{7pt}

{\bf Claim 2.}  {\em $\arfour$ proves}
\begin{equation}\label{jan19b}
  R\mlc   \mathbb{B} ( z^T, w ,  t )\mli \mathbb{H}_1(z^T)\add\ldots\add\mathbb{H}_n(z^T)\add\gneg \bigl(\mathbb{H}_1(z^T)\mld\ldots\mld\mathbb{H}_n(z^T)\bigr).\vspace{7pt}\end{equation}

\begin{subproof} Similar to the proof of Claim 1.
\end{subproof}

For each $i\in\{1,\ldots,n\}$, let $\mathbb{H}'_i(z,y)$\label{ihhhp} be an elementary formula saying that  $ \mathbb{H}_i ( z)$ is true and the move made by the machine at time $ t$ in the $(z,e_\bound)$-branch is $\beta_i.y$. $\mathbb{H}'_i(z^T,y)$ will be an abbreviation of $\cla z\bigl(T\mli \mathbb{H}'_i(z,y)\bigr)$.\vspace{7pt}

{\bf Claim 3.} {\em For each $i\in\{1,\ldots,n\}$,  $ \arfour$ proves}   
\begin{equation}\label{jan19c}
R\mlc  \mathbb{H}_i(z^T) \mli \ade y_i\mathbb{H}'_i(z^T,y_i)\add \mathbb{L}.\vspace{7pt}
\end{equation} 

\begin{subproof} Take any $i\in\{1,\ldots,n\}$. Let $k$ be the length of the string ``$\beta_i.$''. Let $\mathbb{J}(z,v,y)$\label{ijjj} be a formula saying  
\begin{quote}{\em `` $\mathbb{H}_i ( z)$ (is true) and, in  the $(z,e_\bound)$-branch, at time $  t$, on the work tape, cells $\# k$ through $\#(k\plus v)$    spell constant $y$, and the location of the head is not any of the cells $\#0,\#1,\ldots,\#(k\plus v\plus 1)$''.}\end{quote}
 $\mathbb{J}(z^T,v,y)$ will be an abbreviation of $\cla z\bigl(T\mli\mathbb{J}(z,v,y)\bigr)$. 

Argue in $\arfour$. We want to prove, by WPTI induction on $v$, that 
\begin{equation}\label{ooo1}
 v\mleq \hat{k}\plus \bound \mli   R\mlc    \mathbb{H}_i(z^T) \mli\mathbb{L}\add \ade y_i\mathbb{H}'_i( z^T,y_i)
\add \ade y \mathbb{J}(z^T,v,y).
\end{equation}
The basis is 
\begin{equation}\label{ooo2}
   R\mlc    \mathbb{H}_i(z^T) \mli \mathbb{L}\add  \ade y_i\mathbb{H}'_i( z^T,y_i) \add \ade y \mathbb{J}(z^T,\zero,y).
\end{equation} 
Assume the (truth of the) antecedent of the above. Consider the  counterbehavior $z$ for which $T$ is true, and consider the play according to the scenario  of the $(z,e_\bound)$-branch. We will implicitly rely on the fact that,  in view of (\ref{oct26az}) (whose antecedent is implied by $   R\mlc    \mathbb{H}_i(z^T)$), full information is available about the situation in the machine at time $t$.   The problem (\ref{ooo2}) is 
solved as follows, where ``head'' and ``cell'' always mean those of the work tape, and ``located'' or ``contains'' mean that this is so at time $t$. 
\begin{enumerate}
\item Using the results of Section \ref{s17}, figure out whether  $|k\plus 1|\mleq\bound$ ($|\hat{k}\successor|\mleq\bound$, that is) and, if so, find the values of $k$ and $k\plus 1$ and then continue according to Steps 2-4. If, however,    $|k\plus 1|\mgreater\bound$, then choose $\mathbb{L}$ in the consequent of (\ref{ooo2}) and you are done as it is guaranteed to be true. This is so because, from Axiom 13, we know that $|t|\mleq\bound$, and thus $k\plus 1\mgreater t$; this, in turn, means that the head would not have enough time to go as far as cell $\#(k\plus 1)$; and, if so, the machine cannot make a legal move at time $t$, so it loses.   
\item If the location of the head is not greater than $k$, then we are dealing with the fact of $\cal X$ having just (at time $t$) made an illegal move which is ``$\beta_i.$'' or some proper initial substring of it, so choose $\mathbb{L}$ in the consequent of (\ref{ooo2}) because $\cal X$ loses.
\item Suppose the head is located at cell $\#(k\plus 1)$. Then:
\begin{itemize}
\item If cell $\#k$ contains $0$, then we are dealing with the fact of $\cal X$ having made the move  $\beta_i.0$, so choose $\ade y_i\mathbb{H}'_i( z^T,y_i)$ 
in the consequent of (\ref{ooo2}) and specify $y_i$ as $0$ in it.
\item If cell $\#k$ contains $1$, then we are dealing with the fact of $\cal X$ having made the move  $\beta_i.1$, so choose $\ade y_i\mathbb{H}'_i( z^T,y_i)$ in the consequent of (\ref{ooo2}) and specify $y_i$ as $1$ in it.
\item If cell $\#k$ contains any other symbol,  then we are dealing with the fact of $\cal X$ having made an illegal move,  so choose $\mathbb{L}$. 
\end{itemize}
\item Suppose the location of the head is greater than  $k\plus 1$. Then:
\begin{itemize}
\item If cell $\#k$ contains $0$, choose $\ade y \mathbb{J}(z^T,\zero,y)$ in the consequent of (\ref{ooo2}) and specify $y$ as $0$ in it.
\item If cell $\#k$ contains $1$, choose $\ade y \mathbb{J}(z^T,\zero,y)$ in the consequent of (\ref{ooo2}) and specify $y$ as $1$ in it.
\item If cell $\#k$ contains any other symbol, choose $\mathbb{L}$.
\end{itemize}
\end{enumerate}

The inductive step is 
\begin{equation}\label{ooo3}
\begin{array}{l}
\bigl( R\mlc    \mathbb{H}_i(z^T) \mli  
\mathbb{L}\add \ade y_i \mathbb{H}'_i(z^T,y_i) \add \ade y \mathbb{J}(z^T,v,y)\bigr)\mli\\  
\bigl( R\mlc     \mathbb{H}_i(z^T) \mli \mathbb{L}\add \ade y_i \mathbb{H}'_i(z^T,y_i) \add \ade y \mathbb{J}(z^T,v\successor,y)\bigr).
\end{array}
\end{equation}
 Assume $ R\mlc  \mathbb{H}_i(z^T)$ is true (otherwise (\ref{ooo3}) is won). Under this assumption, solving (\ref{ooo3}) essentially means solving the following problem:  
\begin{equation}\label{oooo3}
\begin{array}{l}
\mathbb{L}\add \ade y_i \mathbb{H}'_i(z^T,y_i) \add \ade y \mathbb{J}(z^T,v,y)\mli 
\mathbb{L}\add \ade y_i \mathbb{H}'_i(z^T,y_i) \add \ade y \mathbb{J}(z^T,v\successor,y).
\end{array}
\end{equation}
This problem is solved as follows. Wait for the environment to choose a $\add$-disjunct in the antecedent. If that choice is one of the first two disjuncts, choose the identical disjunct in the consequent, and then resolve the resulting problem of the form $F\mli F$. Suppose now the third disjunct $\ade y \mathbb{J}(z^T,v,y) $ is chosen. Wait till it is further brought to $ \mathbb{J}(z^T,v,c) $ for some $c$. Consider the  counterbehavior $z$ for which $T$ is true, and consider the play according to the scenario  of the $(z,e_\bound)$-branch.  As was done when justifying the basis of induction, we will  rely on the fact that,  in view of (\ref{oct26az}), full information is available about the situation in the machine at time $t$. In our subsequent discourse,   ``head'' and ``cell'' always mean those of the work tape,  and ``located'' or ``contains'' mean that this is so at time $t$.  Note that, as implied by $ \mathbb{J}(z^T,v,c) $, the location of the head is greater than $k\plus v\successor$. So, using the results of Section \ref{s17}, we can tell ($\add$) whether that location is $k\plus v\successor\plus 1$ or greater than $k\plus v\successor\plus 1$. We correspondingly consider the following two cases and resolve the consequent of (\ref{oooo3}) accordingly:
\begin{enumerate}  
\item Suppose the head is located at cell $\#(k\plus v\successor\plus 1)$. Then:
\begin{itemize}
\item If cell $\#(k\plus v\successor)$ contains $0$, then we are dealing with the fact of $\cal X$ having made  the move  $\beta_i.c0$, so choose $\ade y_i\mathbb{H}'_i( z^T,y_i)$ in the consequent of (\ref{oooo3}) and specify $y_i$ as $c0$ in it.
\item If cell $\#(k\plus v\successor)$ contains $1$, then we are dealing with the fact of $\cal X$ having made the move  $\beta_i.c1$, so choose $\ade y_i\mathbb{H}'_i( z^T,y_i)$ in the consequent of (\ref{oooo3}) and specify $y_i$ as $c1$ in it.
\item If cell $\#k$ contains any other symbol,  then we are dealing with the fact of $\cal X$ having made an illegal move,  so choose $\mathbb{L}$. 
\end{itemize}
\item Suppose the location of the head is greater than $k\plus v\successor\plus 1$. Then:
\begin{itemize}
\item If cell $\#k$ contains $0$, choose $\ade y \mathbb{J}(z^T,v',y)$ in the consequent of (\ref{oooo3}) and specify $y$ as $c0$ in it.
\item If cell $\#k$ contains $1$, choose $\ade y \mathbb{J}(z^T,v',y)$ in the consequent of (\ref{oooo3}) and specify $y$ as $c1$ in it.
\item If cell $\#k$ contains any other symbol, choose $\mathbb{L}$.
\end{itemize}
\end{enumerate}

Now, (\ref{ooo1}) follows by WPTI from (\ref{ooo2}) and (\ref{ooo3}).

We continue our proof of Claim 3 by arguing in $\arfour$ towards the goal of justifying (\ref{jan19c}). Assume (the truth of) the antecedent of the latter. As before, we let $z$ be the counterbehavior satisfying $T$, and let the context of our discourse be the play according to the scenario of the $(z,e_\bound)$-branch. From (\ref{oct26az}), we   find the location $l$ of  the work-tape head at time $t$. If $l\equals\zero$, we are dealing with the fact of the machine having made an illegal move (the empty move), so choose $\mathbb{L}$ in the consequent of  (\ref{jan19c}). Otherwise, we find the number $a$ with $l\equals a\successor$ (the results of Section \ref{s17} will allow us to do so). From (\ref{ooo1}), we get 
\begin{equation}\label{ooo11}
 a\mleq \hat{k}\plus \bound \mli   R\mlc  \mathbb{H}_i(z^T) \mli\mathbb{L}\add \ade y_i\mathbb{H}'_i( z^T,y_i)
\add \ade y \mathbb{J}(z^T,a,y).
\end{equation}

Next, we figure out (again relying on the results of Section \ref{s17}) whether $a\mleq k\plus \bound$ or not. If not, we are obviously dealing with the case of the machine having made an illegal (``too long a'') move, so  we choose $\mathbb{L}$ in the consequent of  (\ref{jan19c}). Suppose now $a\mleq k\plus \bound$. Then, from  
 (\ref{ooo11}) by Modus Ponens applied twice, we get    
\begin{equation}\label{ooo4}
 \mathbb{L}\add \ade y_i\mathbb{H}'_i( z^T,y_i) \add \ade y \mathbb{J}(z^T,a,y).
\end{equation}
Our remaining task is to show how to solve the consequent 
\begin{equation}\label{ooo444}
 \ade y_i\mathbb{H}'_i( z^T,y_i)\add  \mathbb{L} 
\end{equation}
of (\ref{jan19c}) using the resource (\ref{ooo4}). This is very easy. Wait till the environment selects a $\add$-disjunct of (\ref{ooo4}). If one of the first two disjuncts is selected, select the identical disjunct in (\ref{ooo444}) and, having brought things down to a problem of the form $F\mli F$, solve it. And if the third disjunct $\ade y \mathbb{J}(z^T,a,y)$ of (\ref{ooo4}) is selected, we  win. That is because, no matter what $c$ the environment further selects for $y$ in it, the resulting formula $\mathbb{J}(z^T,a,c)$ will be false as it implies that the work-tape head at time $t$ is not located at cell $\#(a\plus 1) $,  which is a contradiction --- as we remember, $l\equals a\successor$ is exactly the location of the head.  

Our proof of Claim 3 is now complete. 
\end{subproof}

{\bf Claim 4.} $\pa\vdash  R\mlc   \mathbb{B} ( z^T, w ,t )\mli \gneg \bigl(\mathbb{G}_1(z^T)\mld\ldots\mld\mathbb{G}_m(z^T)\bigr)\mlc \gneg \bigl(\mathbb{H}_1(z^T)\mld\ldots\mld\mathbb{H}_n(z^T)\bigr)\mli\mathbb{L}$.\vspace{7pt}

\begin{subproof} This and the following two claims can be proven by a straightforward argument within $\pa$ based on the meanings of the predicates involved in the formula. Assume 
$ R\mlc   \mathbb{B} ( z^T, w ,t )$ and 
\begin{equation}\label{j19b}
\gneg \bigl(\mathbb{G}_1(z^T)\mld\ldots\mld\mathbb{G}_m(z^T)\bigr)\mlc \gneg \bigl(\mathbb{H}_1(z^T)\mld\ldots\mld\mathbb{H}_n(z^T)\bigr). \end{equation}  
Consider the counterbehavior $z$ satisfying $T$, and the play according to the scenario of the $(z,e_\bound)$-branch. According to (\ref{mmm2}), by time $w$   the play has evolved to position $E(\bound,\vec{s})$. And, according to $\mathbb{B} ( z^T, w ,t )$, a (first new) move $\beta$ has been made by the machine at time $t$. Obviously the assumption (\ref{j19b}) precludes the possibility of such a  $\beta$ being a legal  move of   $E(\bound,\vec{s})$. So, $\beta$ is illegal, which makes the machine lose the game, and hence $\mathbb{L}$ is true. 
\end{subproof}

{\bf Claim 5.} {\em For each} $i\in\{1,\ldots,m\}$, $\pa\vdash  R\mli   \mathbb{G}_i(z^T)\mli \mathbb{W}^{G_i} ( z^T,t\successor,t\successor,\bound,\vec{s})$.\vspace{7pt}

\begin{subproof} Argue in $\pa$. Assume $R$ and  $\mathbb{G}_i(z^T)$. Consider the counterbehavior $z$ satisfying $T$, and the play according to the scenario of the $(z,e_\bound)$-branch. According to (\ref{mmm2}), by time $w$   the play has evolved to position $E(\bound,\vec{s})$. And, according to $\mathbb{G}_i(z^T)$, a (first new) move  has been made by the machine at time $t$, and such a move is $\alpha_i$. But this move brings $E(\bound,\vec{s})$ down to $G_i(\bound,\vec{s})$. This, in turn,  implies that $ \mathbb{W}^{G_i} ( z^T,t\successor,t\successor,\bound,\vec{s})$ is true.
\end{subproof} 

{\bf Claim 6.} {\em For each} $i\in\{1,\ldots,n\}$, \(\pa\vdash  R\mli   \mathbb{H}'_i(z^T,y_i)\mli  \mathbb{W}^{H_i} ( z^T,t\successor,t\successor,\bound,\vec{s},y_i).\)\vspace{7pt}

\begin{subproof} Very similar to the proof of Claim 5. Argue in $\pa$. Assume $R$ and  $\mathbb{H}'_i(z^T,y_i)$. Consider the counterbehavior $z$ satisfying $T$, and the play according to the scenario of the $(z,e_\bound)$-branch. According to (\ref{mmm2}), by time $w$   the play has evolved to position $E(\bound,\vec{s})$. And, according to $\mathbb{H}_i(z^T,y_i)$, a (first new) move  has been made by the machine at time $t$, and such a move is $\beta_i.y_i$. But this move brings $E(\bound,\vec{s})$ down to $H_i(\bound,\vec{s},y_i)$. This, in turn, implies that $ \mathbb{W}^{H_i} ( z^T,t\successor,t\successor,\bound,\vec{s},y_i)$ is true.
\end{subproof}

To complete our proof of Lemma \ref{dec21}, it remains to  observe that  (\ref{mmm44}) is a  logical consequence of Claims 1-6.  Since we have played more than enough with  $\clfour$, here we only schematically outline how to do this purely  syntactic exercise.

First of all, Claims 1 and 2 can be easily seen to  imply 
\[\begin{array}{l}
\arfour\vdash R\mlc  \mathbb{B} ( z^T, w ,  t )\mli 
  \mathbb{G}_1(z^T)\add\ldots\add\mathbb{G}_m(z^T)\add \mathbb{H}_1(z^T)\add\ldots\add\mathbb{H}_n(z^T)\add\\
\Bigl(\gneg \bigl(\mathbb{G}_1(z^T)\mld\ldots\mld\mathbb{G}_m(z^T)\bigr)\mlc \gneg \bigl(\mathbb{H}_1(z^T)\mld\ldots\mld\mathbb{H}_n(z^T)\bigr)\Bigr).
\end{array}\]
The above, together with Claim 4, further implies 
\[\begin{array}{l}
\arfour\vdash R\mlc   \mathbb{B} ( z^T, w ,  t )\mli 
  \mathbb{G}_1(z^T)\add\ldots\add\mathbb{G}_m(z^T)\add \mathbb{H}_1(z^T)\add\ldots\add\mathbb{H}_n(z^T)\add\mathbb{L}.
\end{array}\]
This, in turn, together with Claim 5, further implies
\[\begin{array}{l}
\arfour\vdash R\mlc  \mathbb{B} ( z^T, w ,  t )\mli 
  \mathbb{C}(t)\add \mathbb{H}_1(z^T)\add\ldots\add\mathbb{H}_n(z^T)\add\mathbb{L}.
\end{array}\]
The above, together with Claim 3, further implies 
\begin{equation}\label{jan19d}
\begin{array}{l}
\arfour\vdash R\mlc  \mathbb{B} ( z^T, w ,  t )\mli 
  \mathbb{C}(t)\add \ade y_1\mathbb{H}'_1(z^T,y_i)\add\ldots\add\ade y_n\mathbb{H}'_n(z^T,y_n)\add\mathbb{L}.
\end{array}\end{equation}
Claim 6 can be seen to imply 
\[\arfour\vdash  R\mli   \ade y_i\mathbb{H}'_i(z^T,y_i)\mli \ade y_i \mathbb{W}^{H_i} ( z^T,t\successor,t\successor,\bound,\vec{s},y_i)
\begin{array}{l}
\end{array}\]
for each $i\in\{1,\ldots,n\}$. This, together with (\ref{jan19d}), can be seen to imply the desired (\ref{mmm44}).
\end{proof}

\begin{lemma}\label{oct}
Under  the conditions of Lemma  \ref{dec19} and using the abbreviations of Lemma  \ref{dec21}, $\arfour$ proves  
\begin{equation}\label{mmm4}
R\mli \mathbb{L}\add\ade x  \bigl(\mathbb{C}(x)\add \mathbb{D}(x) \bigr) \add \mathbb{W}^{E} ( z^T,w,u,\bound,\vec{s})  .
\end{equation}
\end{lemma}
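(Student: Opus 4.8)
The plan is to obtain (\ref{mmm4}) as a purely logical composition of the conclusion (\ref{nnn1}) of Lemma \ref{dec19} and the conclusion (\ref{mmm44}) of Lemma \ref{dec21}, with no further appeal to induction or to the nonlogical axioms. Throughout, write $G$ for the target consequent $\mathbb{L}\add\ade x\bigl(\mathbb{C}(x)\add\mathbb{D}(x)\bigr)\add\mathbb{W}^{E}(z^T,w,u,\bound,\vec{s})$, and $D'$ for $\mathbb{W}^E(z^T,w,u,\bound,\vec{s})\add\ade x\mathbb{B}(z^T,w,x)$, so that (\ref{nnn1}) reads $\arfour\vdash R\mli D'$. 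Since all conditions of Lemma \ref{dec19} are assumed, both (\ref{nnn1}) and (\ref{mmm44}) are at my disposal.

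First I would massage (\ref{mmm44}) into a statement ending in $G$. The implication $\mathbb{L}\add\mathbb{C}(t)\add\mathbb{D}(t)\mli G$ is an instance of a $\clfour$-provable formula: viewing $\mathbb{L}$ and $\mathbb{W}^E$ as elementary letters and the formulas $\mathbb{W}^{G_i}(\ldots)$ and $\mathbb{W}^{H_i}(\ldots)$ occurring inside $\mathbb{C},\mathbb{D}$ as general atoms (with the displayed time argument left free), the corresponding schema is proved by the obvious copycat-with-choices strategy: read bottom-up, $\oo$'s choice in the antecedent is answered on the right by first selecting the disjunct $\ade x(\mathbb{C}(x)\add\mathbb{D}(x))$, specifying $x$ as $t$ (so that $\mathbb{C}(x)\add\mathbb{D}(x)$ becomes $\mathbb{C}(t)\add\mathbb{D}(t)$), and echoing $\oo$'s selection, each branch closing by Match. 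Hence, by $\clfour$-Instantiation and then Transitivity with (\ref{mmm44}), $\arfour\vdash R\mlc\mathbb{B}(z^T,w,t)\mli G$. Here $t$ is the fresh variable of Lemma \ref{dec21}; since $t$ occurs neither in $R$ nor in $G$, I may apply $\ada$-Introduction to the positive surface occurrence of $\gneg\mathbb{B}(z^T,w,t)$, which, after the De Morgan rewriting $\gneg\ade x\mathbb{B}=\ada x\gneg\mathbb{B}$, is precisely what that rule introduces. This yields $\arfour\vdash R\mlc\ade x\mathbb{B}(z^T,w,x)\mli G$; call it (B). Separately, $\mathbb{W}^E(z^T,w,u,\bound,\vec{s})$ is literally the third $\add$-disjunct of $G$, so $R\mlc\mathbb{W}^E(z^T,w,u,\bound,\vec{s})\mli G$ is again an instance of a $\clfour$-provable formula (an $\add$-Choose on the right, closed by Match); call it (C).

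Next I would fuse (B) and (C). In negation normal form both have the common shape $\gneg R\mld[\,\cdot\,]\mld G$, with the hole filled respectively by $\ada x\gneg\mathbb{B}(z^T,w,x)$ and $\gneg\mathbb{W}^E(z^T,w,u,\bound,\vec{s})$. A single application of $\adc$-Introduction therefore produces $\gneg R\mld\bigl(\gneg\mathbb{W}^E(z^T,w,u,\bound,\vec{s})\adc\ada x\gneg\mathbb{B}(z^T,w,x)\bigr)\mld G$, which, recognizing the $\adc$-conjunction as $\gneg D'$ via De Morgan, is just $\arfour\vdash R\mlc D'\mli G$. It remains to compose this with (\ref{nnn1}). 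Because $R$ is elementary, $\clfour$ proves $(p\mli Q)\mli(p\mli p\mlc Q)$ (the two occurrences of $Q$ being paired by Match), so $\clfour$-Instantiation and Modus Ponens applied to (\ref{nnn1}) give $\arfour\vdash R\mli R\mlc D'$; Transitivity with $R\mlc D'\mli G$ then delivers the desired $\arfour\vdash R\mli G$, i.e. (\ref{mmm4}).

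The substantive content of the statement resides entirely in Lemmas \ref{dec19} and \ref{dec21}; what is left here is bookkeeping, and the only place demanding genuine care is the handling of the quantifier $\ade x$ and of the stray disjunct $\mathbb{L}$. Concretely, the hard part will be justifying that the two structural implications used above are bona fide $\clfour$-instances: I must verify that the $\ade x$ on the right of $G$ can indeed be re-created from the free-variable form $\mathbb{C}(t)\add\mathbb{D}(t)$ by $\ade$-Choose (which is exactly why I keep $t$ generic and postpone introducing the quantifier until the $\ada$-Introduction step), and that the disjunct $\mathbb{L}$, which does not depend on $x$, is absorbed as a top-level alternative rather than trapped inside the quantifier. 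Finally, the duplication of the elementary resource $R$ in the last composition is legitimate precisely because contraction on elementary letters is available in $\clfour$ (cf. Example \ref{j28g}), whereas it would fail were $R$ a general formula.
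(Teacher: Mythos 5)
Your proposal is correct and follows essentially the same route as the paper: both derive (\ref{mmm4}) as a purely logical consequence of (\ref{nnn1}) and (\ref{mmm44}), using $\ade$-Choose and $\ada$-Introduction on the fresh variable $t$ to pass from $\mathbb{B}(z^T,w,t)$ to $\ade x\,\mathbb{B}(z^T,w,x)$ and then combining with (\ref{nnn1}). You merely spell out the final combination step (the $\adc$-Introduction fusing the two cases and the contraction on the elementary $R$) in more detail than the paper, which leaves it as "a purely syntactic exercise."
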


\begin{idea} This is a logical consequence of the previous two lemmas. 
\end{idea}

\begin{proof}  Assume the conditions of Lemma  \ref{dec19}. Then, according to  Lemmas \ref{dec19} and \ref{dec21}, $\arfour$ proves (\ref{nnn1}) and (\ref{mmm44}) (where, in the latter, $t$ is a fresh variable). The target formula (\ref{mmm4}) is a logical consequence of those two formulas, verifying which is a purely syntactic exercise. As we did in the proof of the previous lemma, here we only provide a scheme for such a verification. It is rather simple. 
First, applying $\ade$-Choose and $\ada$-Introduction to (\ref{mmm44}), we get  
\[\arfour\vdash   R\mlc  \ade x  \mathbb{B} ( z^T, w ,x ) \mli   \mathbb{L}\add    
\ade x\bigl(\mathbb{C}(x)\add \mathbb{D}(x)\bigr). \]
And then we observe that the above, together with $\arfour\vdash (\ref{nnn1})$, implies $\arfour\vdash (\ref{mmm4})$. 
\end{proof}

\begin{lemma}\label{oct26}
Under  the conditions of Lemma \ref{dec19},  
$\arfour\vdash R\mli    \overline{E(\bound,\vec{s})}$. 
\end{lemma}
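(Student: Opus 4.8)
The plan is to argue by induction on the complexity of the formula $E=E(\bound,\vec{s})$, the base case being a literal. The engine of the proof is Lemma \ref{oct}, which under the standing hypotheses (\ref{mmm1})--(\ref{mmm3}) yields
\[\arfour\vdash R\mli \mathbb{L}\add\ade x\bigl(\mathbb{C}(x)\add \mathbb{D}(x)\bigr)\add \mathbb{W}^{E}(z^T,w,u,\bound,\vec{s}).\]
The three disjuncts correspond, respectively, to $\cal X$ eventually behaving illegally or losing, to $\cal X$ making a legal move that collapses $E$ to one of its $\add$- or $\ade$-deletions, and to $E$ being the stable ultimate position of the play. I would establish $\arfour\vdash D\mli\overline{E(\bound,\vec{s})}$ for each of the three disjuncts $D$ and then read off the target $\arfour\vdash R\mli\overline{E(\bound,\vec{s})}$ by $\add$-Elimination (using that the elementary $R$ may be freely duplicated).

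The first disjunct is immediate: $\arfour\vdash\mathbb{L}\mli\overline{E(\bound,\vec{s})}$ is exactly Lemma \ref{jan4d}. For the second disjunct I would peel off the $\ade x$ by $\ada$-Introduction and the inner $\add$ by $\add$-Elimination, reducing to the individual summands $\mathbb{W}^{G_i}(z^T,s\successor,s\successor,\bound,\vec{s})$ of $\mathbb{C}$ and $\ade y_i\mathbb{W}^{H_i}(z^T,s\successor,s\successor,\bound,\vec{s},y_i)$ of $\mathbb{D}$, where $G_i,H_i$ are the $\add$- and $\ade$-deletions of $E$. Each such summand launches the induction hypothesis on the strictly simpler $G_i$ (resp. $H_i$): with a fresh variable $w'$ in the role of $w$ and $R\mlc\mathbb{W}^{G_i}(z^T,w',w',\bound,\vec{s})$ in the role of $R$, conditions (\ref{mmm1}) and (\ref{mmm2}) are built in, while (\ref{mmm3}) is recovered from the traceability Lemma \ref{oct17e} applied to the original (\ref{mmm3}) and instantiated at time $w'$; the side condition $w\mleq w'\mleq\xi(\bound)$ needed for that instantiation is supplied by $\pa$ from (\ref{mmm2}) together with $\mathbb{W}^{G_i}(z^T,w',w',\bound,\vec{s})$, since the machine can reach $G_i$ only after reaching $E$. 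Instantiating $w'$ as $s\successor$, the induction hypothesis then gives $\overline{G_i(\bound,\vec{s})}$, and since $\overline{G_i}$ is literally a $\add$-deletion of $\overline{E}$, a single $\add$-Choose (resp. $\ade$-Choose) produces $\overline{E(\bound,\vec{s})}$.

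The third disjunct splits on whether $E$ is critical. If $E$ is critical, Lemma \ref{august20a} gives $\mathbb{W}^{E}(z^T,w,u,\bound,\vec{s})\mli\overline{E(\bound,\vec{s})}$ outright (after using (\ref{mmm1}) to rewrite $u$ as $\xi(\bound)$ and the functionality of $T$ to pass to $\cle z\mathbb{W}^{E}$), because a critical ultimate position means $\cal X$ has lost and hence $\mathbb{L}$ holds. If $E$ is non-critical, Lemma \ref{august20b} only delivers the elementarization $\mathbb{W}^{E}(z^T,w,u,\bound,\vec{s})\mli\elz{\overline{E(\bound,\vec{s})}}$, so here I would finish by the Wait rule: the special Wait-premise is precisely this elementarization, while the ordinary Wait-premises are the environment-move deletions $\overline{E'}$ obtained by collapsing a surface $\adc$- or $\ada$-occurrence of $\overline{E}$. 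Each ordinary premise is again discharged by the induction hypothesis on the simpler $E'$, but now the counterbehavior must be extended by the environment's move: mimicking Claims 1 and 2 of the small-bound Lemma \ref{pathology4}, I would introduce a new functional formula $T'$ describing ``$z$ extended by the move $\beta$'' and re-verify conditions (\ref{mmm1})--(\ref{mmm3}) for $E'$ under $T'$ before appealing to the induction hypothesis.

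I expect the main obstacle to be precisely this non-critical third disjunct and, within it, the bookkeeping of the recursive invocations of the induction hypothesis. Handling $\cal X$'s own moves is comparatively smooth, since the witnessing configurations $\mathbb{W}^{G_i}(z^T,\cdot,\cdot,\ldots)$ already live in the given branch; but the environment's moves are hypothetical, so one must synthesize the extended counterbehavior and its defining functional formula $T'$, re-establish that $T'$ is functional for its variable, and re-verify the reached-position condition (\ref{mmm2}) and the traceability condition (\ref{mmm3}) via Lemma \ref{oct17e}, all the while keeping the fresh variables $w'$, the equation $\xi(\bound)\equals u$, and Convention \ref{jan26} (no clash between bound and free variables) under control. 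No single step here is deep, but their combined administration is where the real work of the lemma lies.
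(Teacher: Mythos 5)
Your proposal is correct and follows essentially the same route as the paper's proof: the same three-way case split via Lemma \ref{oct}, with Lemma \ref{jan4d} for the $\mathbb{L}$ disjunct, the induction hypothesis re-launched through Choose rules for the machine-move disjunct (re-deriving condition (\ref{mmm3}) from Lemma \ref{oct17e}), and a Wait argument split on criticality for the stable-position disjunct, where the ordinary Wait-premises are handled by extending the counterbehavior to a new functional formula (the paper's $S$, your $T'$). You have also correctly located the main burden of the proof in re-establishing conditions (\ref{mmm1})--(\ref{mmm3}) for the extended counterbehavior, which is exactly where the paper spends its effort (its formula (\ref{oct26ee})).
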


\begin{idea} Under the assumption of the truth of $R$, one of the three $\add$-disjuncts of the consequent of (\ref{mmm4}) is available as a resource. In each case, we need to show (in $\arfour$) how to solve the target $\overline{E}=\overline{E(\bound,\vec{s})}$.

1. The case of $\mathbb{L}$ is taken care of by Lemma \ref{jan4d}, according to which $\arfour\vdash \mathbb{L}\mli \overline{E}$.

2. The case of $\ade x  \bigl(\mathbb{C}(x)\add \mathbb{D}(x) \bigr)$, depending on which of its $\ade$- and $\add$-components are further chosen, allows us to jump to a formula $F$ (one of the $G_i$, $1\mleq i\mleq m$  or $H_i$, $1\mleq i\mleq n$) from which $E$ follows by $\add$-Choose or $\ade$-Choose. With appropriately readjusted $R$ and certain other parameters, by the induction hypothesis, we know how to solve  $\overline{F}$. Then (by $\add$-Choose or $\ade$-Choose) we also know how to solve $\overline{E}$.

3. Finally, consider the case of    $\mathbb{W}^{E} ( z^T,w,u,\bound,\vec{s})$.  $E$ can be critical or non-critical. The case of $E$ being critical is almost immediately taken care of by Lemma \ref{august20a}, according to which $\arfour\vdash \cle z \mathbb{W}^{E} ( z,w,u,\bound,\vec{s})\mli \overline{E}$. Suppose now $E$ is non-critical. Then, by Lemma \ref{august20b}, according to which  $\arfour\vdash \cle z \mathbb{W}^{E} ( z,w,u,\bound,\vec{s})\mli \elz{\overline{E}}$, the elementarization of $\overline{E}$ is true/provable. Relying on the induction hypothesis as in the previous case, and replacing $T(z)$ by 
a formula $S(z)$ saying that $z$ is a certain one-move extension of the counterbehavior satisfying $T$, we manage to show that any other (other than $\elz{\overline{E}}$) necessary Wait-premise of $\overline{E}$ is also solvable/provable. Then, by Wait, we know how to solve/prove $\overline{E}$. 
\end{idea}

\begin{proof} We prove this lemma by (meta)induction on the complexity of $E(\bound,\vec{s})$. Assume the conditions of  Lemma  \ref{dec19}.  Then, by Lemma \ref{oct}, $\arfour\vdash (\ref{mmm4})$.  

By Lemma \ref{jan4d},  $\arfour\vdash \mathbb{L} \mli \overline{E(\bound,\vec{s})}$, whence, by Weakening, 

\begin{equation}\label{dec15b}
\begin{array}{l}
\arfour\vdash \mathbb{L} \mlc R\mli \overline{E(\bound,\vec{s})}.\vspace{7pt}
\end{array}
\end{equation}

In what follows, we will rely on the additional assumptions and abbreviations of Lemma \ref{dec21}.\vspace{7pt}

{\bf Claim 1}. {\em For each} $i\in\{1,\ldots,m\}$,   $\arfour\vdash  \mathbb{W}^{G_i} ( z^T,   t\successor,  t\successor,\bound,\vec{s} )\mlc R\mli   \overline{E(\bound,\vec{s})}$.\vspace{7pt}

\begin{subproof} Pick any   $i\in\{1,\ldots,m\}$  and a fresh variable $v$. 

From condition   (\ref{mmm1}), by Weakenings, we have  
\begin{equation}\label{dec16d}
\arfour\vdash  v\equals t\successor\mlc \mathbb{W}^{G_i} ( z^T, t\successor,t\successor,\bound,\vec{s}) \mlc R\mli
\xi(\bound)\equals u. 
\end{equation}

And, of course, we also have    
\begin{equation}\label{dec16dd}
\arfour\vdash   v\equals t\successor\mlc\mathbb{W}^{G_i} ( z^T,t\successor,t\successor,  \bound,\vec{s})\mlc R\mli
 \mathbb{W}^{G_i} ( z^T,  v,  v,\bound,\vec{s} ). 
\end{equation}

Condition (\ref{mmm3}) and Lemma \ref{oct17e} imply 
\begin{equation}\label{dec16a}
\arfour\vdash  R\mlc    w \mleq v\mleq\xi(\bound) \mli   \mathbb{E} ( z^T,  v )\mlc \mathbb{F} ( z^T, v ). 
\end{equation}
In view of condition (\ref{mmm2}), we also obviously have 
\begin{equation}\label{dec16b}
\arfour\vdash  v\equals t\successor\mlc \mathbb{W}^{ G_i} ( z^T, t\successor,t\successor,\bound,\vec{s})\mlc R\mli  R\mlc    w \mleq v\mleq\xi(\bound) .
\end{equation}
From (\ref{dec16b}) and (\ref{dec16a}), by Transitivity, we get
\begin{equation}\label{dec16c}
\arfour\vdash  v\equals t\successor\mlc\mathbb{W}^{G_i} ( z^T, t\successor,t\successor,\bound,\vec{s})\mlc R\mli \ \mathbb{E} ( z^T,  v )\mlc \mathbb{F} ( z^T,  v ). 
\end{equation}

By the induction hypothesis of our lemma, with $  v$, $G_i(\bound,\vec{s})$ and $v\equals t\successor\mlc\mathbb{W}^{G_i} ( z^T,t\successor,t\successor, \bound,\vec{s})\mlc R$ in the roles of $ w $, $E(\bound,\vec{s})$ and $R$, (\ref{dec16d}),  (\ref{dec16dd}) and (\ref{dec16c}) --- which correspond to (\ref{mmm1}), (\ref{mmm2}) and (\ref{mmm3}), respectively --- imply 
\[\arfour\vdash  v\equals t\successor\mlc \mathbb{W}^{G_i} (z^T,t\successor,t\successor,\bound,\vec{s}) \mlc R\mli \overline{G_i(\bound,\vec{s})}.\]
The above, by $\ada$-Introduction, yields  
\begin{equation}\label{dec16e}
\arfour\vdash \ade x(x\equals t\successor)\mlc \mathbb{W}^{G_i} (z^T,t\successor,t\successor,\bound,\vec{s}) \mlc R\mli \overline{G_i(\bound,\vec{s})}.
\end{equation}
Remembering the definition of $\mathbb{W}$, the condition $t\successor\mleq\xi(\bound)$ is one of the conjuncts of $\mathbb{W}^{G_i}\bigl(z^T,t\successor,t\successor,\bound,\vec{s})$. Hence $\pa\vdash \mathbb{W}^{G_i}\bigl(z^T,t\successor,t\successor,\bound,\vec{s})\mli t\successor\mleq\xi(\bound)$. Together with condition (\ref{mmm1}), this implies \[\arfour\vdash  \mathbb{W}^{G_i}\bigl(z^T,t\successor,t\successor,\bound,\vec{s})\mlc R \mli t\successor\mleq u.\] But, by Axiom 13, $\arfour\vdash |u|\mleq\bound$. Hence, obviously,  
$\arfour\vdash   \mathbb{W}^{G_i}\bigl(z^T,t\successor,t\successor,\bound,\vec{s})\mlc R \mli |t\successor|\mleq \bound$. This, together with in Axiom 10, by Transitivity, yields $\arfour\vdash   \mathbb{W}^{G_i}\bigl(z^T,t\successor,t\successor,\bound,\vec{s})\mlc R\mli \ade x(x\equals t\successor)$. And the latter, in turn, in conjunction with (\ref{dec16e}), can be seen to imply 
\begin{equation}\label{dec16z}
\arfour\vdash \mathbb{W}^{G_i} (z^T,t\successor,t\successor,\bound,\vec{s}) \mlc R\mli \overline{G_i(\bound,\vec{s})}.
\end{equation}
Now, it remains to notice that the desired $ \mathbb{W}^{G_i} ( z^T,    t\successor, t\successor,\bound,\vec{s} )\mlc R\mli \overline{E(\bound,\vec{s})}$ follows from (\ref{dec16z}) by $\add$-Choose. \end{subproof}

{\bf Claim 2}. {\em For each} $i\in\{1,\ldots,n\}$,   $\arfour\vdash  \ade y_i\mathbb{W}^{ H_i}  (z^T,t\successor,t\successor,\bound,\vec{s},y_i )\mlc R\mli\overline{E(\bound,\vec{s})}$.\vspace{7pt}

\begin{subproof} Pick any   $i\in\{1,\ldots,n\}$. Arguing as we did for (\ref{dec16z}) in the proof of Claim 1, we find
\[\arfour\vdash \mathbb{W}^{H_i} (z^T, t\successor, t\successor,\bound,\vec{s},y_i )\mlc R\mli  \overline{H_i(\bound,\vec{s},y_i)}.
\]
Applying first $\ade$-Choose and then $\ada$-Introduction to the above, we get the desired conclusion $\arfour\vdash  \ade y_i\mathbb{W}^{H_i} ( z^T,     t\successor,  t\successor ,\bound,\vec{s},y_i)\mlc R\mli  \overline{E}$. \end{subproof}

Claims 1 and 2, by $\adc$-Introductions, imply 
\[\begin{array}{ll}\arfour\vdash &     
\Bigl(\bigl(\mathbb{W}^{  G_1} (z^T,t\successor,t\successor,\bound,\vec{s}) \add\ldots\add \mathbb{W}^{ G_m }(z^T,t\successor,t\successor,\bound,\vec{s})\bigr)    \add \\ & \bigl(\ade y_1\mathbb{W}^{H_1}( z^T,t\successor,t\successor,\bound,\vec{s},y_1) \add\ldots\add\ade y_n\mathbb{W}^{H_n}( z^T,t\successor,t\successor,\bound,\vec{s},y_n)\bigr) \Bigr)\mlc R   \mli \overline{E(\bound,\vec{s})}  \end{array} \]
which, using the abbreviations of Lemma \ref{dec21}, is written as 
\(\arfour\vdash      
\bigl(\mathbb{C}(t)    \add \mathbb{D}(t)\bigr) \mlc R   \mli \overline{E(\bound,\vec{s})}.\)
The latter, 
 by $\ada$-Introduction, yields
\begin{equation}\label{dec16g}
\arfour\vdash      
\ade x\bigl(\mathbb{C}(x)    \add \mathbb{D}(x)\bigr) \mlc R   \mli \overline{E(\bound,\vec{s})}.\end{equation}

{\bf Claim 3}. {\em If $E(\bound,\vec{s})$ is not critical, then} $\arfour\vdash \elz{\mathbb{W}^{  E  } ( z^T,w,u,  \bound,\vec{s} ) \mlc R\mli \overline{E(\bound,\vec{s})}}$.\vspace{7pt}

\begin{subproof} Assume $E(\bound,\vec{s})$ is not critical. Then, Lemma \ref{august20b}, together with the fact of $T $ being functional for $z$, can be easily seen to imply  $\arfour\vdash \mathbb{W}^{  E }  ( z^T,w ,\xi(\bound),\bound,\vec{s}  \bigr) \mli \elz{\overline{E(\bound,\vec{s})}}$. Remembering condition (\ref{mmm1}), 
the latter can be seen to further imply  $\arfour\vdash \mathbb{W}^{  E }  ( z^T,w ,u,\bound,\vec{s})\mlc R \mli \elz{\overline{E(\bound,\vec{s})}}$, which 
is the same as to say that  $\arfour\vdash \elz{\mathbb{W}^{  E  } ( z^T,w,u,  \bound,\vec{s} ) \mlc R\mli \overline{E(\bound,\vec{s})}}$, because 
 both $\mathbb{W}^{  E }  ( z^T,w ,u,\bound,\vec{s})$ and $R$ are elementary. \end{subproof}

{\bf Claim 4}. {\em Assume $E(\bound,\vec{s})$ has the form $ F[J_1\adc\ldots \adc J_k]$, and $i\in\{1,\ldots,k\}$. Then}   \[\arfour\vdash \mathbb{W}^{   E } ( z^T, w,u, \bound,\vec{s})\mlc R\mli \overline{F[J_i]}.\vspace{7pt}\]

\begin{subproof} 
From  (\ref{mmm1}), by Weakening, we  have 
\begin{equation}\label{oct26q}
\arfour\vdash    \mathbb{W}^{  E } ( z^T, w,u,  \bound,\vec{s})\mlc R   \mli \xi(\bound)\equals u.
\end{equation}

Assume $E(\bound,\vec{s})=F[J_1\adc\ldots \adc J_k]$ and $1\mleq i\mleq k$.  Let $\alpha$ be the move whose effect is turning $F[J_1\adc\ldots \adc J_k]$ into $F[J_i]$. Let us write our formula $T$ in the form $T(z)$. Let $S=S(z)$ be a formula saying that $z$ is the code of the counterbehavior resulting by adding the timestamped move $(\alpha,w\mminus 1)$ to the  counterbehavior $a$ for which $T(a)$ holds. Of course, $S$ is functional for $z$. It is not hard to see that $\pa$ proves $\mathbb{W}^{  E } ( z^T, w,u,  \bound,\vec{s}) \mli \mathbb{W}^{E} ( z^T, w,w,  \bound,\vec{s}) $ and $\mathbb{W}^{E} ( z^T, w,w,  \bound,\vec{s})  \mli \mathbb{W}^{F[J_i]  }(z^S,w,w,\bound,\vec{s}) $. Therefore it proves $\mathbb{W}^{  E } ( z^T, w,u,  \bound,\vec{s})   \mli \mathbb{W}^{F[J_i]  } (z^S,w,w,\bound,\vec{s}) $, whence, by Weakening,  
\begin{equation}\label{oct26d}
\arfour\vdash \mathbb{W}^{  E } ( z^T, w,u,  \bound,\vec{s})  \mlc R  \mli \mathbb{W}^{F[J_i]  } (z^S,w,w,\bound,\vec{s}) .
\end{equation}

Next, we claim that 
\begin{equation}\label{oct26ee}
\arfour\vdash   R   \mli \mathbb{E}(z^S, w   )\mlc \mathbb{F} (z^S, w   ). 
\end{equation} 
Here is a brief justification of (\ref{oct26ee}) through reasoning in $\arfour$. Let $a$ be the counterbehavior for which $T(a)$ is true, and let $d$ be the counterbehavior for which $S(d)$ is true. Assume $R$. Then, in view of (\ref{mmm3}), the resource  $\mathbb{E}(z^T, w   )\mlc\mathbb{F} (z^T, w)$ is available for us in  unlimited supply. That is, we have full information about the configuration of $\cal X$ at time $w$ in the $(a,e_\bound)$-branch. Solving  (\ref{oct26ee}) means being able to generate full information about the configuration of $\cal X$ at time $w$ in the $(d,e_\bound)$-branch. Since the time $w$ is fixed and is the same in both cases, let us no longer explicitly mention it. 
Note that the two configurations are identical, for the exception of the contents of the run tape. So, from the resource  $\mathbb{E}(z^T, w   )\mlc\mathbb{F} (z^T, w)$ which describes the configuration of the $(a,e_\bound)$-branch, we can directly tell the (identical) state of $\cal X$ in the configuration of the $(d,e_\bound)$-branch, as well as  
the locations of all three scanning heads, and the contents of any cells of the valuation and work tapes.  Next, in order to tell the location of the leftmost blank cell  on the run tape in the configuration of the $(d,e_\bound)$-branch (or tell that the size of this location exceeds $\bound$), all we need is to compute $i\plus j\plus 1$, where $i$ is the location of the leftmost blank cell of the run tape in the configuration of the $(a,e_\bound)$-branch (unless $|i|\mgeq \bound$, in which case the size of the sought value also exceeds $\bound$), and $j$ is the location of the work-tape head in the configuration of the $(a,e_\bound)$-branch.
Finally, consider any cell $\#c$ of the run tape. If $c$ is less than the above $i$, then the content of cell $\#c$ in the configuration of the $(d,e_\bound)$-branch is the same as in the $(a,e_\bound)$-branch. Otherwise, if $c\mgeq i$, then the sought content is  the $(c \mminus i)$th symbol (starting the count of those symbols from $0$ rather than $1$) of the labmove $\oo\alpha$ --- unless   $c \mminus i$ is greater or equal to the length of this labmove, in which case the sought content of cell $\#c$ is blank. 

From (\ref{oct26ee}), by Weakening, we have
\begin{equation}\label{oct26e}
\arfour\vdash    \mathbb{W}^{E} ( z^T, w,u,  \bound,\vec{s})\mlc R   \mli \mathbb{E}(z^S, w   )\mlc \mathbb{F} (z^S, w   ). 
\end{equation}

By the induction hypothesis of our lemma, with  $F[J_i]$ and     $ \mathbb{W}^{E} ( z^T, w,u,  \bound,\vec{s})\mlc R$ in the roles of   $E $ and $R$, (\ref{oct26q}),  (\ref{oct26d}) and (\ref{oct26e}) --- which correspond to (\ref{mmm1}), (\ref{mmm2}) and (\ref{mmm3}), respectively --- imply the desired $\arfour\vdash \mathbb{W}^{E} ( z^T, w,u,\bound,\vec{s})\mlc R\mli \overline{F[J_i]}$.
 \end{subproof}

{\bf Claim 5}. {\em Assume $E(\bound,\vec{s})$ has the form $ F[\ada xJ(x)]$, and $v$ is a  variable not occurring in $E(\bound,\vec{s})$.  Then}   \[\arfour\vdash \mathbb{W}^{   E } ( z^T, w,u, \bound,\vec{s})\mlc R\mli \overline{F[J(v)]}.\vspace{7pt}\]

\begin{subproof} Assume $E=F[\ada xJ(x)]$, and $v$ is a fresh variable.  Let $\alpha$ be the string such that, for whatever constant $c$, $\alpha.c$ is the move which  brings $F[\ada xJ(x)]$ down to $F[J(c)]$. Arguing almost literally as in the proof of Claim 4, only with ``$\alpha.v$'' instead of $\alpha$ and ``$J(v)$'' instead of ``$J_i$'',  we find that $\arfour\vdash \mathbb{W}^{   E } ( z^T, w,u, \bound,\vec{s})\mlc R\mli \overline{F[J(v)]}$. The only difference and minor complication is related to the fact that, while in the proof of Claim 4 the labmove $\pp\alpha$ was constant, the corresponding labmove $\pp\alpha.v$ in the present case is not.
 Hence, its size is not given directly but rather needs to determined (while arguing within $\arfour$). No problem, this (for the ``$v$'' part of the labmove) can be done using Lemma \ref{comlen}. Similarly, various symbols of the labmove that were given directly in the proof of Claim 4 will now have to be determined using some general procedure. Again no problem: this (for the ``$v$'' part of the labmove) can be done using Lemma \ref{combit}. \end{subproof}

Now we claim that 
\begin{equation}\label{oct26c}
\mbox{$\arfour\vdash \mathbb{W}^{   E } ( z^T, w,u, \bound,\vec{s})\mlc R\mli \overline{E(\bound,\vec{s})}$.}
\end{equation}
Indeed, if $E(\bound,\vec{s})$   is not critical, then the above follows from Claims 3, 4 and 5 by the closure of $\arfour$ under Wait. Suppose now $E(\bound,\vec{s})$ is critical. Then, by Lemma \ref{august20a}, $\arfour\vdash \cle z\mathbb{W}^{   E } ( z, w,\xi(\bound), \bound,\vec{s})\mli \overline{E(\bound,\vec{s})}$. This, in view of the functionality of $T$ for $z$ and condition (\ref{mmm1}), can be easily seen to imply $\arfour\vdash \mathbb{W}^{   E } ( z^T, w,u, \bound,\vec{s})\mlc R\mli \overline{E(\bound,\vec{s})}$, as claimed.

From  (\ref{dec15b}), (\ref{dec16g}) and (\ref{oct26c}), by $\adc$-Introduction, we find that $\arfour$ proves
\begin{equation}\label{mmm5}
\Bigl(\mathbb{L}\add\ade x  \bigl(\mathbb{C}(x)\add \mathbb{D}(x) \bigr) \add \mathbb{W}^{E} ( z^T,w,u,\bound,\vec{s})\Bigr)\mlc R\mli \overline{E(\bound,\vec{s})}.
\end{equation}

In turn, the $\arfour$-provability of  (\ref{mmm4}) and (\ref{mmm5}) can be easily seen to imply the desired $\arfour$-provability of 
$ R\mli \overline{E(\bound,\vec{s})}$.  This completes our proof of Lemma \ref{oct26}.
\end{proof}

\begin{lemma}\label{dec14a}
$\arfour \vdash  \ade x\bigl(x\equals \xi(\bound)\bigr)\mli \overline{X(\bound)}$. 
\end{lemma}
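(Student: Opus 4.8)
The plan is to obtain Lemma~\ref{dec14a} as an almost immediate corollary of Lemma~\ref{oct26}, applied to the whole formula $X(\bound)$ itself, after first stripping off the antecedent $\ade x(x\equals\xi(\bound))$. Since $X$ was assumed safe and to have no free variables other than $\bound$, it qualifies as the formula $E=X(\bound)$ of Lemma~\ref{oct26} with the tuple $\vec{s}$ empty. For the remaining parameters I would take $z,w,u$ to be fresh pairwise-distinct non-$\bound$ variables, let $a$ be the code of the empty counterbehavior, set $T(z)$ to be the atomic formula $z\equals\hat{a}$ (trivially functional for $z$, since $\cle ! z(z\equals\hat{a})$ is classically valid), and let $R$ be the elementary formula $u\equals\xi(\bound)\mlc w\equals\zero\successor$. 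Here $w\equals\zero\successor$ pins the ``current time'' to $1$, the least value admitted by $\mathbb{W}^E$ (which requires $0\mless d_1$), while $u\equals\xi(\bound)$ records the value of $\xi(\bound)$ that the antecedental resource will supply.

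Granting this choice, I would verify the three hypotheses (\ref{mmm1})--(\ref{mmm3}) of Lemma~\ref{dec19}/\ref{oct26}. Conditions (\ref{mmm1}) and (\ref{mmm2}) are elementary and reduce, via Fact~\ref{nov7}, to \pa-facts: (\ref{mmm1}) holds because $\xi(\bound)\equals u$ is literally a conjunct of $R$, and (\ref{mmm2}) holds because $\mathbb{W}^{X}(\hat{a},\zero\successor,\zero\successor,\bound)$ is a uniformly (in $\bound$) true ``mechanically verifiable'' statement: against the empty counterbehavior, at time $1$ nothing has yet happened (the start state is not a move state, so $\cal X$ makes no move at time $0$), so the position is still $X(\bound)$ and the run is legal; the only nontrivial ingredient is $1\mleq\xi(\bound)$, which we may assume without loss of generality by replacing $\xi$ with $\xi\plus\zero\successor$ if necessary. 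The genuinely substantive hypothesis is (\ref{mmm3}), i.e.\ $\arfour\vdash R\mli\mathbb{E}(z^T,w)\mlc\mathbb{F}(z^T,w)$, which asserts that $\arfour$ \emph{constructively} knows the entire time-$1$ configuration of $\cal X$: its state, the three head positions, and, for every cell of every tape, the symbol stored there. This is where the efficient-arithmetic apparatus of Section~\ref{s17} is needed, most visibly for the $\mbox{VSymbol}$ conjunct of $\mathbb{F}$, which demands, for an arbitrary $\ada$-given cell index $x$, the $x$-th symbol of the (unchanging) valuation tape spelling $e_\bound=(\bound,0,0,\ldots)$; producing it invokes Lemma~\ref{comlen} to locate $x$ relative to $|\bound|$ and Lemma~\ref{combit} to read the requested bit of $\bound$, together with Axiom~8 and Lemma~\ref{zer} to name the constant head positions and the cell $0$ leftmost-blank of the still-blank run tape. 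I expect this configuration-description step to be the main obstacle, as it is the ``base case'' counterpart of the inductive traceability Lemmas~\ref{oct17a}--\ref{oct17e}.

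With (\ref{mmm1})--(\ref{mmm3}) in hand, Lemma~\ref{oct26} yields $\arfour\vdash u\equals\xi(\bound)\mlc w\equals\zero\successor\mli\overline{X(\bound)}$. Since $w$ is fresh, I would discharge it by binding it with $\ada$-Introduction to get $\arfour\vdash\ada r\bigl(r\equals\zero\successor\mlc u\equals\xi(\bound)\mli\overline{X(\bound)}\bigr)$, applying $\ada$-Elimination to recover a free-variable instance, then $\ada$-Introduction on the resulting surface disjunct $r\notequals\zero\successor$ to reach $\arfour\vdash\ade x(x\equals\zero\successor)\mli\bigl(u\equals\xi(\bound)\mli\overline{X(\bound)}\bigr)$, and finally Modus Ponens against Lemma~\ref{zersuc} to obtain $\arfour\vdash u\equals\xi(\bound)\mli\overline{X(\bound)}$. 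Rewriting this as $(u\notequals\xi(\bound))\mld\overline{X(\bound)}$ and noting that $u$ does not occur in $\overline{X(\bound)}$, one last application of $\ada$-Introduction, binding $u$ in the surface disjunct $u\notequals\xi(\bound)$, gives $\ada x(x\notequals\xi(\bound))\mld\overline{X(\bound)}$, which by the DeMorgan duality $\gneg\ade x(x\equals\xi(\bound))=\ada x(x\notequals\xi(\bound))$ is exactly the target $\ade x(x\equals\xi(\bound))\mli\overline{X(\bound)}$.
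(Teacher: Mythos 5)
Your proposal is correct and follows essentially the same route as the paper's proof: the same $R=u\equals\xi(\bound)\mlc w\equals\zero\successor$, the same $T$ (empty counterbehavior), the same application of Lemma \ref{oct26} to $E=X(\bound)$ with $\vec{s}$ empty, and the same final discharging via Lemma \ref{zersuc} and $\ada$-Introduction. The only (inessential) divergence is in verifying condition (\ref{mmm3}): rather than describing the time-$1$ configuration directly, the paper establishes $\mathbb{E}(z^T,\zero)\mlc\mathbb{F}(z^T,\zero)$ for the trivially known start configuration and then advances to time $w$ by one application of the already-proven Lemma \ref{oct17e}, which spares you the one-step transition analysis you anticipated as the main obstacle.
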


\begin{idea} We take $u\equals \xi(\bound)\mlc w\equals\zero\successor$ in the role of $R$, $X$ in the role of $E$ and show that the conditions of Lemma \ref{dec19} are satisfied. Then, by Lemma \ref{oct26}, $\arfour$ proves $u\equals \xi(\bound)\mlc w\equals\zero\successor\mli \overline{X}$. And the target formula $\ade x\bigl(x\equals \xi(\bound)\bigr)\mli \overline{X}$ is an almost immediate logical consequence of the latter and Lemma \ref{zersuc}. 
\end{idea}

\begin{proof} Let $R$ be the formula $u\equals \xi(\bound)\mlc w\equals\zero\successor$. Then, of course, we have
\begin{equation}\label{j20a}
\arfour\vdash R\mli \xi(\bound)\equals u.
\end{equation}

 Let $T(z)$ be an elementary formula saying that $z$ is (the code of) the empty counterbehavior.  Obviously $\pa$ proves $\bound\notequals\zero\mli \mathbb{W}^{X}(z^T,\zero\successor,\zero\successor,\bound)$ and hence, in view of Lemma \ref{zer}, $\arfour$ proves 
$\mathbb{W}^{X}(z^T,\zero\successor,\zero\successor,\bound)$.  Therefore, as $R$ contains the condition $w\equals\zero\successor$,
\begin{equation}\label{j20c}
\arfour\vdash R\mli \mathbb{W}^{X}(z^T,w,w,\bound).
\end{equation}

Next, we observe that $\arfour \vdash \mathbb{E} ( z^T,\zero)\mlc\mathbb{F} ( z^T,\zero)$. Indeed, arguing in $\arfour$, solving both $\mathbb{E} ( z^T,\zero )$ and $\mathbb{F} ( z^T,\zero )$ is very easy as we know exactly and fully the situation in the machine at time $0$, which is nothing but the start configuration of the machine. The observation that we just made, of course, implies 
\begin{equation}\label{feb13}
\arfour \vdash v\equals\zero\mli \mathbb{E} ( z^T,v)\mlc\mathbb{F} ( z^T,v).
\end{equation}
From (\ref{feb13}), by Lemma \ref{oct17e}, we get 
\begin{equation}\label{feb13a}
\arfour \vdash v\equals\zero\mlc v\mleq w\mleq \xi(\bound)\mli \mathbb{E} ( z^T,w)\mlc\mathbb{F} ( z^T,w).
\end{equation}
Since $w\equals\zero\successor$ is a conjunct of $R$, $\pa$ obviously proves $\bound\notequals \zero \mli v\equals\zero\mli R\mli v\equals\zero\mlc v\mleq w\mleq \xi(\bound) $.\footnote{Remember that $\cal X$ runs in time $\xi(\bound)$. By definition, this means that $\pp$'s time in any play is {\em less} than $\xi(\bound)$. Hence, the term $\xi(\bound)$ cannot be $\zero$, or $\zero\mult\bound$, or anything else that always evaluates to $0$. Therefore, of course, $\pa\vdash \xi(\bound)\mgeq\bound$.}  
But, by Lemma \ref{zer}, $\arfour\vdash \bound\notequals \zero$. Hence, by Modus Ponens, $\arfour\vdash   v\equals\zero\mli R\mli v\equals\zero\mlc v\mleq w\mleq \xi(\bound) $. From here and (\ref{feb13a}), by Transitivity, we get 
\[\arfour\vdash   v\equals\zero\mli R\mli \mathbb{E} ( z^T,w)\mlc\mathbb{F} ( z^T,w) \]
whence, by $\ada$-Introduction,
\[\arfour\vdash   \ade x(x\equals\zero)\mli R\mli \mathbb{E} ( z^T,w)\mlc\mathbb{F} ( z^T,w), \]
modus-ponensing which with Axiom 8 yields 
\begin{equation}\label{j20d}
\arfour\vdash R\mli \mathbb{E} ( z^T,w )\mlc \mathbb{F} ( z^T,w ).
\end{equation}

Now, with $X(\bound)$ in the role of $E(\bound,\vec{s})$, the conditions (\ref{j20a}), (\ref{j20c}) and (\ref{j20d}) are identical to the conditions (\ref{mmm1}), (\ref{mmm2}) and (\ref{mmm3}) of Lemma \ref{dec19}. Hence, by Lemma \ref{oct26}, we have $\arfour\vdash R\mli\overline{X(\bound)}$, i.e.  
\[\arfour\vdash u\equals \xi(\bound)\mlc w\equals\zero\successor  \mli\overline{X(\bound)}.\]
From the above, by $\ada$-Introduction, we get 
\[\arfour\vdash u\equals \xi(\bound)\mlc \ade x(x\equals\zero\successor)\mli \overline{X(\bound)}.\]
But the second conjunct of the antecedent of the above formula is provable by Lemma \ref{zersuc}. Hence, we obviously  have
\(\arfour\vdash u\equals \xi(\bound) \mli\overline{X(\bound)}\) which, by $\ada$-Introduction, yields the desired $\arfour \vdash  \ade x\bigl(x\equals \xi(\bound)\bigr)\mli \overline{X(\bound)}$.
\end{proof}

\subsection{Completing the completeness proof}  By Lemma \ref{com}, \(\arfour\vdash \gneg |\xi(\bound)|\mleq\bound\add  \ade x\bigl(x\equals \xi(\bound)\bigr).\) By Lemmas \ref{pathology6} and \ref{dec14a}, we also have 
\(\arfour \vdash   \gneg |\xi(\bound)|\mleq\bound\mli \overline{X}\)
and
\(\arfour \vdash  \ade x\bigl(x\equals \xi(\bound)\bigr)\mli \overline{X}.\)
From these   three facts, by $\add$-Elimination,  $\arfour\vdash\overline{X}$. 

\section{Inherent extensional incompleteness in the general case}\label{sincom}
The extensional completeness of $\arfour$ is not a result that could be taken for granted. In this short section we argue  that, if one considers computability-in-general instead of polynomial time computability, extensional completeness is impossible to achieve for whatever recursively axiomatizable sound extension of $\arfour$. 

First of all, we need to clarify what is meant by considering computability-in-general instead of polynomial time computability. This simply means a minor readjustment of the semantics of ptarithmetic. Namely, such a readjusted semantics would be the same as the semantics we have been considering so far, with the only difference that the time complexity of the machine solving a given problem would no longer be required to be polynomial, but rather it would be allowed to be arbitrary without any restrictions. Alternatively, we can treat $\ada,\ade$ as the ordinary $\ada,\ade$ of computability logic (rather than $\ada^{\bound},\ade^{\bound}$ as done throughout the present paper), and then forget about any complexity altogether. 

In either case, our extensional incompleteness argument goes like this. Consider any system $\bf S$ in the style of $\arfour$ whose proof predicate is decidable\footnote{$\arfour$ can easily be readjusted to satisfy this condition by requiring that each logical axiom in a $\arfour$-proof be supplemented with a proof of that axiom in some known (fixed) sound and complete recursively axiomatized calculus for classical logic.}  and hence the theoremhood predicate recursively enumerable.   Assume $\bf S$ is sound in the same strong sense as $\arfour$ --- that is, there is an effective procedure that extracts an algorithmic solution (HPM) for the problem represented by any formula $F$ from any $\bf S$-proof of $F$. 
 
Let then $A(s)$ be the predicate which is true iff:
\begin{itemize}
\item $s$ is (the code of) an $\bf S$-proof of some formula of the form $\ada x\bigl(\gneg E(x)\add E(x)\bigr)$, where $E$ is elementary,  
\item and $E(s)$  is false.
\end{itemize}   

On our assumption of the soundness of $\bf S$, $A(s)$ is a decidable predicate. Namely, it  is decided by a procedure that first checks if $s$ is the code of  an $\bf S$-proof of some formula of the form $\ada x\bigl(\gneg E(x)\add E(x)\bigr)$, where $E$ is elementary. If not, it rejects. If yes, the procedure extracts from $s$ an HPM $\cal H$ which solves $\ada x\bigl(\gneg E(x)\add E(x)\bigr)$, and then simulates the play of $\cal H$ against the environment which, at the very beginning of the play, makes the move $s$, thus bringing the game down to $\gneg E(s)\add E(s)$. If, in this play, $\cal H$ responds by choosing $\gneg E(s)$, then the procedure accepts $s$; and if $\cal H$ responds by choosing $E(s)$, then the procedure rejects $s$. Obviously this procedure indeed decides the predicate $A$.

 Now, assume that $\bf S$ is extensionally complete. Since $A$ is decidable, the problem $\ada x\bigl(\gneg A(x)\add A(x)\bigr)$ has an algorithmic solution. So, for some formula $F$ with $F^\dagger=\ada x\bigl(\gneg A(x)\add A(x)\bigr)$ and some $c$, we should have that $c$ is an {\bf S}-proof of $F$. Obviously $F$ should have the form $\ada x\bigl(\gneg E(x)\add E(x)\bigr)$, where $E$ is an elementary formula with $E^\dagger(x)=A(x)$. We are now dealing with the absurd of $A(c)$ being true iff it is false.

\section{On the intensional strength of $\arfour$}\label{sculprit}

\begin{theorem}\label{feb15}
Let $X$ and $\mathbb{L}$ be as in Section \ref{s19}.  Then $\arfour\vdash \gneg \mathbb{L}\mli X$.
\end{theorem}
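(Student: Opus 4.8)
The plan is to reduce the claim to the already-established fact $\arfour\vdash\overline{X}$ (proven in Section \ref{s19}) by showing that, once the hypothesis $\gneg\mathbb{L}$ is granted, $\overline{X}$ and $X$ become interchangeable. Concretely, I would first prove the key reduction
\[
\arfour\vdash\overline{X}\mli(\gneg\mathbb{L}\mli X),
\]
and then combine it with $\arfour\vdash\overline{X}$ by a single application of Modus Ponens (the $n=0$ instance $\frac{F_0\quad F_0\mli F}{F}$ with $F_0=\overline{X}$ and $F=\gneg\mathbb{L}\mli X$), yielding the desired $\arfour\vdash\gneg\mathbb{L}\mli X$.

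To establish the reduction I would \emph{not} argue syntactically by induction on $X$ (which stumbles on the non-admissibility of blind generalization at $\cla$/$\cle$-subformulas), but instead exploit the completeness of $\clthree$. The formula $\overline{X}\mli(\gneg\mathbb{L}\mli X)$, i.e. $\gneg\overline{X}\mld\mathbb{L}\mld X$, is an $\arfour$-formula and hence a $\clthree$-formula, so by Theorem \ref{main} it is $\clthree$-provable as soon as it has a uniform polynomial time solution. I would exhibit such a solution: the copycat (``synchronizing'') machine $\cal M$ that mirrors every move between the occurrence of $X$ and the occurrence of $\overline{X}$. Since $\overline{X}$ is by definition structurally identical to $X$ except that each politeral $L$ is replaced by $L\mld\mathbb{L}$, the two games carry the same tree of choice and blind operators, so the mirroring is well-defined (there are no moves in the $\gneg\mathbb{L}$ part, it being elementary), it does not consult the interpretation, and it incurs only polynomial overhead. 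To see that $\cal M$ wins under an arbitrary interpretation $^*$: a play ends with $X^*$ at some elementary position $P$ built from politeral-instances $L_j^*$, and with $\overline{X}^*$ at the corresponding $\overline{P}$ built from the $L_j^*\mld\mathbb{L}^*$, so what must be won is $\gneg\mathbb{L}^*\mld\gneg\overline{P}\mld P$. If $\mathbb{L}^*$ is true, the disjunct $\mathbb{L}^*$ wins; if $\mathbb{L}^*$ is false, then each $L_j^*\mld\mathbb{L}^*$ is equivalent to $L_j^*$, so $\overline{P}$ coincides with $P$ and the copycat between $\gneg\overline{P}$ and $P$ wins. Thus $\cal M$ is a uniform polynomial time solution, whence $\clthree\vdash\overline{X}\mli(\gneg\mathbb{L}\mli X)$.

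Finally, exactly as in the proof of Fact \ref{Instclosure}, since $\clthree$ is analytic (it never introduces into premises any function or predicate letters absent from the conclusion), this $\clthree$-proof is simultaneously an $\arfour$-proof: every formula occurring in it is an $\arfour$-formula, the axioms used are classically valid elementary $\arfour$-formulas and hence logical axioms of $\arfour$, and $\arfour$ has all the rules of $\clthree$. Therefore $\arfour\vdash\overline{X}\mli(\gneg\mathbb{L}\mli X)$, and the Modus Ponens step above completes the proof.

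I expect the one genuinely delicate point to be the verification that the copycat strategy is \emph{honestly uniform} — that it depends neither on $^*$ nor on the truth value of $\mathbb{L}^*$ — together with the case split on $\mathbb{L}^*$ showing the terminal elementary position is $\pp$-won in every interpretation; everything else (the finite move-address translation induced by the rewriting $L\mapsto L\mld\mathbb{L}$, the polynomial time bound, and the concluding Modus Ponens) is routine and of the kind already carried out in Section \ref{ss7}. An alternative, more pedestrian route would be a direct induction proving $\arfour\vdash\gneg\mathbb{L}\mlc\overline{E}\mli E$ for every subformula $E$, handling choice operators by $\add$-Choose/$\adc$-Introduction (and their $\ade$/$\ada$ analogues) and gluing matching politerals by Match-based $\clfour$-Instantiation; but the $\cla$/$\cle$ cases there force one back to precisely a uniformity/copycat argument, so the completeness route above is cleaner.
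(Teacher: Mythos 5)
Your proposal is correct, and its skeleton --- prove $\arfour\vdash\overline{X}\mli(\gneg\mathbb{L}\mli X)$ and combine it with the already-established $\arfour\vdash\overline{X}$ by one application of Modus Ponens --- is exactly the paper's. The divergence is in how the implication is obtained: the paper asserts it follows ``by induction on the complexity of $X$'' and omits the details, whereas you replace that syntactic induction by a semantic detour through the completeness half of Theorem \ref{main} (exhibit a uniform copycat solution of $\gneg\overline{X}\mld\mathbb{L}\mld X$, conclude $\clthree$-provability, then import the proof into $\arfour$ via the analyticity argument of Fact \ref{Instclosure}). This route is sound: the copycat is well defined because $L\mld\mathbb{L}$ is moveless just like $L$, so $X$ and $\overline{X}$ have identical legal-run structure; the terminal case split on the truth of $\mathbb{L}^*$ is exactly the content of Lemma \ref{august12}; and the transfer to $\arfour$ is the same argument the paper itself uses to justify $\clfour$-Instantiation. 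What your route buys is that all the quantifier bookkeeping is absorbed by an already-proved (heavy) theorem; what the paper's route buys is a short self-contained derivation. Your stated reason for rejecting the induction --- that it ``stumbles on the non-admissibility of blind generalization'' --- is overstated: no $\cla$-introduction rule is needed, since Wait's ordinary premises peel $\ada$- and $\adc$-subformulas even when they sit under $\cla$ or $\cle$ (compare the quantifier handling in the proof of Lemma \ref{jan4d} and in Case 2 of the proof of Theorem \ref{ccs}), so the paper's sketched induction does go through; but since those details are omitted in the paper, your fully spelled-out alternative is a legitimate and arguably cleaner substitute.
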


\begin{proof} As established in Section \ref{s19}, $\arfour\vdash \overline{X}$. By  induction on the complexity of $X$, details of which we omit, it can also easily be seen that $\arfour\vdash  \overline{X}\mli \gneg\mathbb{L}\mli X$. So, by Modus Ponens, $\arfour\vdash \gneg\mathbb{L}\mli X$.
\end{proof} 

Remember that, in Section \ref{s19},  $X$ was an arbitrary $\arfour$-formula assumed to have a polynomial time solution  under the standard interpretation $^\dagger$. And $\gneg\mathbb{L}$ was a certain true sentence of the language of classical Peano arithmetic. We showed in that section that $\arfour$ proved a certain formula $\overline{X}$ with $\overline{X}^\dagger=X^\dagger$. That is, we showed that $X$ was ``extensionally provable''. 

According to our present Theorem \ref{feb15}, in order to make $X$ also provable in the intensional sense, all we need is to add to the axioms of $\arfour$ the true elementary sentence $\gneg\mathbb{L}$. 

In philosophical terms, the import of Theorem \ref{feb15} is that the culprit of the intensional incompleteness of $\arfour$ is the (G\"{o}del's) incompleteness of its classical, elementary part. Otherwise, the ``nonelementary rest'' --- the extra-Peano axioms and the PTI rule --- of $\arfour$, as a bridge from classical arithmetic to ptarithmetic, is 
as perfect/strong as it could possibly be: it guarantees not only extensional but also intensional provability of every polynomial time computable problem   as long as all necessary  true elementary formulas are taken care of. This means that if, instead of $\pa$,  we take the truth arithmetic {\bf Th(N)} (the set of all true sentences of the language of $\pa$) as the base arithmetical theory,   the corresponding version of $\arfour$ will be not only extensionally, but also intensionally complete. Unfortunately, however, such a system will no longer be recursively axiomatizable.

So, in order to make $\arfour$ intensionally stronger, it would be sufficient to add to it new true elementary (classical) sentences, without any need for also adding some nonelementary axioms or rules of inference that deal with nonelementary formulas.  Note that this sort of an extension, even if in a language more expressive than that of $\pa$, would automatically remain sound and extensionally complete: virtually nothing in this paper relies on the fact that $\pa$ is not stronger than it really is. Thus, basing applied theories on computability logic allows us to construct ever more expressive and intensionally  strong (as well as extensionally so in the case of properly more expressive languages) theories without worrying about how to preserve soundness and extensional completeness. Among the main goals of this paper was to illustrate the scalability of computability logic rather than the virtues of the particular system $\arfour$ based on it. The latter is in a sense arbitrary, as is $\pa$ itself: in the role of the classical part of $\arfour$, we could have chosen not only any true extension of $\pa$, certain weaker-than-$\pa$ theories as well, for our proof of the extensional completeness of $\arfour$ does not require the full strength of $\pa$. The reason for not having done so is purely ``pedagogical'': $\pa$ is the simplest and best known arithmetical theory, and reasoning in it is much more relaxed, easy and safe than in weaker versions. $\arfour$ is thus the simplest and nicest representative of the wide class of ``ptarithmetics'', all enjoying the same relevant properties  as $\arfour$ does.  

 Among the potential applications of ptarithmetic-style systems is using them as formal tools for finding efficient solutions for problems (after developing reasonable theorem-provers, which, at this point, only belongs to the realm of fantasy, of course). One can think of those systems as ideally declarative programming languages, where human ``programming'' simply means stating the problem/formula whose efficient solution  is sought (for systematic usage in the future), and hence the program verification problem is non-existent. Compiling such a ``program'' means finding a proof, followed by the easy step of translating it into an assembly-language program/solution. This process of compiling may take long  but, once compiled, the program runs fast ever after. The stronger such a system is, the better the chances that a solution for a given problem will be found. Of course, what matters in this context is intensional rather than extensional strength. So, perfect strength is not achievable, but we can keep moving ever closer to it.  

One may ask why not think of simply using $\pa$ (or even, say, {\bf ZFC}) instead of $\arfour$ for the same purposes: after all,   $\pa$ is strong enough to allow us reason about polynomial time computability. This is true,  but $\pa$ is far from being a reasonable alternative to $\arfour$. First of all, 
 as a tool for finding  solutions, $\pa$ is very indirect and hence hopelessly inefficient. Pick any of the basic arithmetical functions of Section \ref{s17} and try to generate, in $\pa$, a full formal proof of the fact that the function is polynomial-time computable (or even just {\em express} this fact) to understand the difference. Such a proof would have to proceed by clumsy reasoning about {\em non-number} objects such as Turing machines and computations, which, only by good luck, happen to be amenable to being understood as numbers through encoding. In contrast, reasoning in $\arfour$ would be directly about numbers and their properties, without having to encode any foreign beasts and then try to reason about them as if they were just kind and innocent natural numbers. Secondly, even if an unimaginably strong theorem-prover succeeded in finding such a proof, there would be no direct use of it because, from a proof of the existence of a solution we cannot directly extract a solution. Furthermore, even knowing that a given HPM $\cal X$ solves the problem in {\em some} polynomial time $\xi$, would have no practical significance without knowing {\em what} particular polynomial $\xi$ is, in order to asses whether it is ``reasonable'' (such as $\bound^2$, $\bound^3$, \ldots) or takes us beyond the number of nanoseconds in the lifespan of the universe (such as $\bound^{9999999999}$). In order to actually obtain a solution and its polynomial degree, one would need a {\bf constructive} proof, that is, not just a proof that a polynomial $\xi$ and a $\xi$-time solution exist, but a proof of the fact that certain particular numbers $a$ and $b$ are (the codes of) a polynomial term $\xi$ and a $\xi$-time solution $\cal X$. This means that a theorem-prover would have to be used not just once for a single target formula, but an indefinite (intractably many) number of times, once per each possible pair of values of $a,b$ until the ``right'' values is encountered.  To summarize, $\pa$ does not provide any reasonable mechanism for handling queries in the style ``{\em find} a polynomial time solution for problem $A$'': in its standard form, $\pa$ is merely a YES/NO kind of a ``device''.   

The above dark picture can be somewhat brightened by switching from $\pa$ to Heyting's arithmetic {\bf HA} --- the version of $\pa$  based on intuitionistic logic instead of classical logic,  which is known to  allow us to directly extract, from a proof of a formula $\cle  xF(x)$, a particular value of $x$ for which $F(x)$ is true. But the question is why intuitionistic logic and not computability logic? Both claim to be ``constructive logics'', but the constructivistic claims of computability logic have a clear semantical meaning and justification, while intuitionistic logic is essentially an ad hoc invention whose constructivistic claims are mainly based on certain syntactic and hence circular considerations,\footnote{What creates circularity is the common-sense fact that syntax is merely to serve a meaningful semantics, rather than vice versa. It is hard   not to remember the following words from \cite{Japfin} here: ``The reason for the failure of $P\add\gneg P$ in computability logic is not that this principle \ldots is not included in its axioms. Rather, the failure of this principle is exactly the reason why this principle, or anything else entailing it, would not be among the axioms of a sound system for computability logic''.}   without being supported by a convincing and complete constructive semantics.   And, while {\bf HA} is immune to the second one of the two problems pointed out in the previous paragraph,  it  still suffers from the first problem. At the same time, as a reasoning tool, {\bf HA} is inferior to $\pa$, for it is intensionally weaker and, from the point of view of the philosophy of computability logic, is so for no good reasons. As a simple example, consider the function $f$ defined by ``$f(x)=x$ if $\pa$ is either consistent or inconsistent, and $f(x)=2x$ otherwise''. This is a legitimately defined function, and we all --- just as $\pa$ --- know that extensionally it is the same as the identity function $f(x)=x$. Yet, {\bf HA} can be seen to fail to prove --- in the intensional sense --- its computability.

A natural question to ask is: {\em Is there a formula $X$ of the language of $\arfour$ whose polynomial time solvability is constructively provable in $\pa$ yet $X$ is not provable in $\arfour$?} Remember that, as we agreed just a while ago, by constructive provability of the polynomial time solvability of $X$ in $\pa$ we mean that, for some particular HPM $\cal X$ and a particular polynomial (term) $\xi$, $\pa$ proves that $\cal X$ is   a $\xi$-time solution of $X$. If the answer to this question was negative, then $\pa$, while indirect and inefficient, would still have at least {\em something} to say   in its defense when competing with $\arfour$ as a problem-solving tool. But, as seen from the following theorem, the answer to the question is negative:

\begin{theorem}\label{jan30}
Let $X$ be any formula of the language of $\arfour$ such that $\pa$ constructively proves (in the above sense) the polynomial time solvability of $X$. Then $\arfour\vdash X$. 
\end{theorem}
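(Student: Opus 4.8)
The plan is to obtain Theorem~\ref{jan30} as a short corollary of Theorem~\ref{feb15}, using only Fact~\ref{nov7} and the soundness of $\pa$. As in Section~\ref{s19}, I would first reduce to the case where $X$ has no free occurrences of variables other than $\bound$: if $X$ has other free variables, replace it by its $\ada$-closure $X'$. A constructive $\pa$-proof of a $\xi$-time solution $\cal X$ for $X$ yields, by a routine machine transformation that reads the quantifier moves off the run tape in place of the valuation tape (costing only polynomial overhead), a constructive $\pa$-proof of a $\xi'$-time solution for $X'$; and $\arfour\vdash X'$ gives $\arfour\vdash X$ by $\ada$-Elimination. So henceforth $X$, $\cal X$ and $\xi$ are exactly the objects fixed in Section~\ref{s19}, where now $\cal X$ and $\xi$ are the particular HPM and $\bound$-term supplied by the assumed constructive $\pa$-proof. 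These genuinely form a $\xi$-time solution of $X^\dagger$, since $\pa$ proves only true sentences, so the apparatus of Section~\ref{s19} applies and $\mathbb{L}$ is a concrete, fully determined arithmetical sentence.

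The core step is to show that $\pa\vdash\gneg\mathbb{L}$. Recall that $\mathbb{L}$ is the natural arithmetization of the statement ``there is a finite run $\Gamma$ generated by $\cal X$ on some standard bounded valuation $e_b$ that witnesses a failure of $\cal X$'' --- either $\pp$ overspends time $\xi(b)$, or $\pp$ plays illegally, or the run ends in a position whose elementarization is false. For an $X$ whose only free variable is $\bound$, the assertion ``$\cal X$ is a $\xi$-time solution of $X$'' is precisely the $\pa$-provable negation of this existential claim, since any run exhibited by $\mathbb{L}$ would directly contradict $\cal X$ being such a solution (matching clauses (1)--(3) in the definition of $\mathbb{L}$ against Definition~\ref{deftc}). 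Hence $\pa$ proves the implication (``$\cal X$ is a $\xi$-time solution of $X$'')$\mli\gneg\mathbb{L}$. By hypothesis $\pa$ constructively proves the antecedent, so Modus Ponens inside $\pa$ yields $\pa\vdash\gneg\mathbb{L}$. Since $\gneg\mathbb{L}$ is elementary, Fact~\ref{nov7} gives $\arfour\vdash\gneg\mathbb{L}$.

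Finally, Theorem~\ref{feb15} supplies $\arfour\vdash\gneg\mathbb{L}\mli X$, and combining this with $\arfour\vdash\gneg\mathbb{L}$ by Modus Ponens delivers the desired $\arfour\vdash X$.

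I expect the only genuinely delicate point to be the verification that $\pa\vdash\gneg\mathbb{L}$, i.e. that the concrete arithmetical coding of $\mathbb{L}$ is $\pa$-provably incompatible with the $\pa$-proved statement ``$\cal X$ is a $\xi$-time solution of $X$''. This is pure $\pa$-level bookkeeping, and it is exactly here that the hypothesis must be \emph{constructive}: we need the particular $\cal X$ and $\xi$ so that $\mathbb{L}$ is a single definite sentence for which the $\pa$-derivation of $\gneg\mathbb{L}$ can actually be carried out, rather than merely the non-constructive fact that some polynomial time solution exists (from which no such definite $\mathbb{L}$, and hence no such derivation, would be available).
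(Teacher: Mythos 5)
Your proposal is correct and follows essentially the same route as the paper's own proof: fix the HPM $\cal X$ and term $\xi$ supplied by the constructive $\pa$-proof, observe that what $\pa$ then proves is exactly the negation of the sentence $\mathbb{L}$ of Section \ref{s19}, so $\pa\vdash\gneg\mathbb{L}$, and conclude by combining this with Theorem \ref{feb15} via Modus Ponens. Your additional remarks on reducing to the case where $\bound$ is the only free variable and on the $\pa$-level bookkeeping behind $\pa\vdash\gneg\mathbb{L}$ are just explicit elaborations of steps the paper leaves implicit.
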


\begin{proof} Consider   any formula $X$ of the language of $\arfour$. Assume $\pa$ constructively proves the polynomial time solvability of $X$, meaning that,
for a certain HPM $\cal X$ and a certain term $\xi$ (fix them),  
$\pa$ proves that $\cal X$ solves $X$ in    time $\xi$. But this is exactly what the formula $\mathbb{L}$ of Section  \ref{s19} denies. So, $\pa\vdash\gneg \mathbb{L}$. But, by  
Theorem \ref{feb15}, we also have $\arfour\vdash\gneg \mathbb{L}\mli X$. Consequently, $\arfour\vdash X$.  
\end{proof} 

An import of the above theorem is that, if we tried to add to $\arfour$ some new nonelementary axioms in order to achieve a properly greater intensional strength, the fact that such axioms are computable in time $\xi$ for some particular polynomial $\xi$ would have to be unprovable in $\pa$, and hence would have to be ``very nontrivial''.  The same applies to attempts to extend $\arfour$ through some new rules of inference.  

\section{Give Caesar what belongs to Caesar} The idea of exploring versions  of Peano arithmetic motivated by and related to various complexity-theoretic considerations and concepts is not  new. In this connection one should mention a solid amount of work on studying {\em bounded arithmetics}, with the usage of the usual quantifiers $\cla,\cle$ of $\pa$ restricted to forms such as $\cla x\bigl(x\mleq\tau\mli F(x)\bigr)$  and  $\cle x\bigl(x\mleq\tau\mlc F(x)\bigr)$, where $\tau$ is a term not containing $x$. Parikh \cite{Parikh} was apparently the first to tackle bounded quantifiers in arithmetic. A systematic study of bounded arithmetics and their connections to complexity theory was initiated in the seminal work \cite{Buss} by Buss. Hajek and Pudlak \cite{Hajek} give an extensive survey of this area.  The main relevant results in it can be summarized saying that, by appropriately weakening the induction axiom of $\pa$ and then further restricting it to bounded formulas of certain forms, and correspondingly readjusting the nonlogical vocabulary and axioms of $\pa$, certain  soundness and completeness   for the resulting system(s) $\bf S$ can be achieved. Such soundness results typically read like   ``If $\bf S$ proves a formula of the form $\cla x\cle y F(x,y)$, where $F$ satisfies such and such constraints, then there is function of such and such computational complexity which, for each $a$, returns a $b$ with $F(a,b)$''. And completeness results typically read  like ``For any function $f$ of such and such computational complexity, there is an $\bf S$-provable formula of the form  $\cla x\cle y F(x,y)$ such that, for any $a$ and $b$, \  $F(a,b)$ is true iff $b=f(a)$''.   
  
Among the characteristics that  make our approach very different from the above, one should point out that it {\em extends} rather than {\em restricts} the language and the deductive power of $\pa$. Restricting the language and power of $\pa$ in the style of the approach of bounded arithmetics   throws out the baby with the bath water. Not only does it expel from the system many complexity-theoretically unsound yet otherwise meaningful and useful theorems, but it apparently also reduces --- even if only in the intensional rather than extensional sense --- the class of complexity-theoretically  correct provable principles. This is a necessary sacrifice there, related to the inability of the underlying classical logic  to clearly differentiate between constructive ($\adc,\add,\ada,\ade$) and ``ordinary'', non-constructive versions ($\mlc,\mld,\cla,\cle$) of operators. Classical logic has never been meant to be a constructive logic, let alone a logic of efficient computations. Hence an attempt to still make it work as a logic of computability or efficient computability cannot go without taking a toll, and  results such as the above-mentioned soundness  can only be partial.

The problem of the partiality of the soundness results has been partially overcome in \cite{Bussint} through basing bounded arithmetic on intuitionistic logic instead of classical logic. In this case, soundness extends to all formulas of the form $\cla x\cle y F(x,y)$, without the ``$F$ satisfies such and such constraints'' condition (the reason why we still consider this sort of soundness partial is that it remains to be limited to formulas of the form $\cla x\cle y F(x,y)$, even if for arbitrary $F$s; similarly, completeness is partial because it is limited to functions only which, for us, are only special cases of computational problems). However, for reasons pointed out in the previous section, switching to intuitionistic logic signifies throwing out even more of the ``baby'' from the bath tub, further decreasing the intensional strength of the theory. In any case, whether being based on classical or intuitionistic logic, bounded arithmetics do not offer the flexibility of being amenable to being strengthened without losing soundness, and are hence ``inherently weak'' theories.   

In contrast, computability logic avoids all this trouble and sacrifices by giving Caesar  what belongs to Caesar, and  God  what belongs to God. As we had a chance to see throughout this paper, classical ($\mlc,\mld,\cla,\cle$) and constructive ($\adc,\add,\ada,\ade$) logical constructs can peacefully coexist and complement each other in one natural system that seamlessly extends the classical, constructive, resource- and complexity-conscious visions and concepts, and does so not by mechanically putting things together, but rather on the basis of one natural, all-unifying game semantics. Unlike most other approaches where only few, special-form expressions (if any) have clear computational interpretations, in our case every formula is a meaningful computational problem. Further, we can capture not only computational problems in the traditional sense, but also problems in the more general --- interactive --- sense.   That is, ptarithmetic or computability-logic-based theories in general, are by an order of magnitude more expressive and deductively powerful than the classical-logic-based $\pa$, let alone the far more limited bounded arithmetics. 

Classical logic and classical arithmetic, so close to the heart and mind of all of us,  do not at all need to be rejected or tampered with (as done in Heyting's arithmetic or bounded arithmetic) in order to achieve constructive heights. Just the opposite, they can be put in faithful and useful service to this noble goal. Our heavy reliance on reasoning in $\pa$ throughout this paper is an eloquent illustration of it. 

\section{Thoughts for the future}

The author wishes to hope that the present work is only a beginning of a longer and more in-depth line of research on exploring computability-logic-based theories (arithmetic in particular) with complexity-conscious semantics. There is an ocean of problems to tackle in this direction. 

First of all, it should be remembered that the particular language of ptarithmetic employed in this paper is only a modest fragment of the otherwise inordinately expressive and, in fact, open-ended formalism of computability logic. Attempting to extend the present results to more expressive versions of ptarithmetic is one thing that can be done in the future. Perhaps a good starting point would be considering the language employed in \cite{Japtowards} which, in addition to the present connectives, has the operator   $\fintimpl$, with $A\fintimpl B$ being the problem of reducing $B$ to $A$ where any finite number of reusages of $A$ is allowed. In a more ambitious perspective, a development of this line may yield a discovery of a series of new, complexity-conscious operators that are interesting and useful in the context of interactive computational complexity while not quite so in the ordinary context of computability-in-principle.

Another direction to continue the work started in this paper would be to try to consider complexity concepts other than polynomial time complexity.  Who knows, maybe these studies can eventually lead to a discovery of substantially new, not-yet tried weapons for attacking the famous and notorious open problems in complexity theory. Two most immediate candidates for exploration are logarithmic space and polynomial space computabilities. 
While the precise meaning of logarithmic space computability in our interactive context is yet to be elaborated, a definition of polynomial space computability comes almost for free. It can be defined exactly as we defined polynomial time computability in Section \ref{s7}, only, instead of counting the number of steps  taken by the machine ($\pp$'s time, to be more precise), we should count the number of cells ever visited by the head of the work tape. What, if any, variations  of the PTI rule (and perhaps also the nonlogical axioms) would yield  systems of {\bf psarithmetic} (``polynomial space arithmetic'') or {\bf larithmetic} (``logarithmic space arithmetic''), sound and complete with respect to polynomial space or logarithmic space computability in the same sense as $\arfour$ is sound and complete with respect to polynomial time computability?

\begin{theindex}
\item adequate counterbehavior \pageref{iadequate}
\item arity:  of function letter \pageref{iar3}; of game \pageref{igarity}; of predicate letter \pageref{iar2} 
\item  binary numeral \pageref{ibinnum}
\item binary predecessor \pageref{ibp}
\item binary $0$-successor \pageref{ibzs}
\item binary $1$-successor \pageref{ibos}
\item blind: existential quantifier \pageref{icle}; universal quantifier \pageref{icla}
\item bounded valuation \pageref{ibv}
\item branch: $(C,e)$-$\sim$ \pageref{icebranch};  $e$-computation $\sim$  \pageref{icb}
\item BSI \pageref{ibsi}
\item BSI+ \pageref{irp}
\item BPI \pageref{ibpi}
\item choice: conjunction \pageref{ichoicecon}; disjunction \pageref{ichoicedis}; existential quantifier \pageref{ichoiceeq}; implication \pageref{icimpl}; 
universal quantifier \pageref{ichoiceuq}  
\item Choose: $\add$-$\sim$ \pageref{sep10aa}; $\ade$-$\sim$ \pageref{sep10bb}
\item Choose-premise: $\add$-$\sim$ \pageref{iaddprem}; $\ade$-$\sim$ \pageref{iadeprem}
\item clarithmetic \pageref{iclarithmetic}
\item $\clthree$ \pageref{ss6}
\item $\clthree$-formula \pageref{icl3f} 
\item $\clfour$ \pageref{ss66}
\item $\clfour$-formula \pageref{icl4f}
\item $\clfour^\circ$ \pageref{icl4c}
\item $\clfour^\circ$-formula \pageref{icl4cf} 
\item $\clfour$-Instantiation
\item clarithmetization \pageref{iclarz}
\item clock cycle \pageref{icc}
\item computable (game, problem) \pageref{icomputable}; polynomial time $\sim$ \pageref{iptc}
\item computation branch \pageref{icb}
\item computation step \pageref{icc}
\item configuration \pageref{iconfiguration}
\item constant \pageref{iconstant}
\item constant game \pageref{iconstantgame}
\item counterbehavior \pageref{icnb}
\item critical formula \pageref{icritical}
\item deletion \pageref{ideletion} 
\item depth (of game) \pageref{idepth}, \pageref{fdpth}  
\item distinctive valuation \pageref{idistinctive}
\item elementarization \pageref{ielz},\pageref{ielz2},\pageref{ielz3} 
\item elementary: atom \pageref{ielat}; component \pageref{iec}; formula \pageref{ief}; game \pageref{ielem1},\pageref{ielgame2},\pageref{ielem2}; letter \pageref{iel2}; literal \pageref{iell}  
\item Elimination: $\add$-$\sim$ \pageref{iaddel}; $\ada$-$\sim$ \pageref{iadael}
\item empty run (position) \pageref{iempty}
\item extra-Peano axioms \pageref{iextra}
\item environment \pageref{ienvironment}
\item extensional (in)completeness \pageref{iextcom}  
\item formal numeral \pageref{ifornum}
\item function letter \pageref{ifl}
\item functional for $z$ \pageref{iffz}
\item game \pageref{ngame}
\item general: atom \pageref{igenat}; component \pageref{igc};   letter \pageref{igl2}; literal \pageref{igenl} 
\item HPM \pageref{ihpm}
\item hybrid  letter \pageref{ihl} 
\item hyperformula \pageref{ihf} 
\item illegal: move \pageref{iillegmove}; run \pageref{iillegrun};  $\xx$-$\sim$  \pageref{ipillegal}
\item intensional (in)completeness \pageref{iintcom}
\item initial legal (lab)move \pageref{iilm}
\item interpretation \pageref{iint}
\item Introduction: $\adc$-$\sim$ \pageref{iadcintro}; $\ada$-$\sim$ \pageref{iadaintro}
\item instance: of  game \pageref{iinstance}; of $\clfour$-formula \pageref{iinstance2}
\item label \pageref{ilabel}
\item labeled move (labmove) \pageref{ilabmove}
\item legal: move \pageref{ilegmove}; run \pageref{ilegrun}
\item literal \pageref{iell}
\item $\legal{}{}$ \pageref{ilr}
\item $\legal{}{e}$ \pageref{ilre}  
\item logical axiom \pageref{ilax}
\item logical rule \pageref{ilogr}
\item lost run \pageref{ilost}
\item machine \pageref{imachine}
\item Match \pageref{imatch}
\item Match$^\circ$ \pageref{imatchc}
\item matching literals \pageref{imatlit}
\item Modus Ponens (MP) \pageref{imp}
\item move \pageref{imove}
\item move state \pageref{imovestate}
\item negation \pageref{igneg}
\item nonlogical axiom \pageref{inlax}
\item nonlogical rule \pageref{inlr}
\item WPTI \pageref{inpti}
\item WPTI+ \pageref{irp}
\item parallel: conjunction \pageref{imlc}; disjunction \pageref{imld}
\item Peano arithmetic ({\bf PA}) \pageref{iPA},\pageref{ipa3}
\item Peano axioms \pageref{ipeanax}
\item politeral \pageref{ipoliteral}
\item polynomial time computable \pageref{iptccc}
\item polynomial time induction (PTI) \pageref{ipti}
\item polynomial time machine \pageref{iptm}
\item position \pageref{iposition}
\item predicate letter \pageref{ipl}
\item prefixation \pageref{iprefixation}
\item prompt (run, play) \pageref{iprompt}
\item $\arfour$ \pageref{iPTA},\pageref{ij31}
\item $\arfour$-formula \pageref{iarfff}
\item ptarithmetic \pageref{iPTA},\pageref{iptarithmetic2} 
\item ptarithmetization \pageref{iptarz}
\item PTI \pageref{ipti}
\item PTI+ \pageref{iptiplus}
\item quasiinstance \pageref{iquasi}
\item reduction \pageref{imli}
\item representation of problem \pageref{irepresentation}
\item run \pageref{irun};  $(C,e)$-$\sim$ \pageref{icerun}
\item   run generated by HPM \pageref{irgb}
\item run spelled by computation branch \pageref{irsb}
\item run tape \pageref{iruntape}
\item safe formula \pageref{isafe}
\item sentence \pageref{isentence}
\item solution: algorithmic \pageref{isol}; polynomial time \pageref{ipts}; uniform polynomial time \pageref{iupts}
\item standard interpretation \pageref{isi}
\item standard valuation \pageref{istandard}
\item state (of HPM) \pageref{istate}
\item static game \pageref{istatic}
\item substitution: for $\clfour$-formula \pageref{isubstitution}; for $\clfour^\circ$-formula \pageref{ish}
\item surface occurrence \pageref{isoc}
\item synchronizing \pageref{imatching}
\item term \pageref{ipterm}; $\bound$-$\sim$ \pageref{ibterm}
\item thinking period \pageref{itp}
\item time: $\pp$'s $\sim$, $\oo$'s $\sim$ \pageref{itm}
\item timestamp \pageref{itimestamp}
\item transition function \pageref{itf}
\item Transitivity \pageref{itr}
\item TROW-premise \pageref{itrow}
\item true $\arfour$-formula \pageref{itrue}
\item unary predecessor \pageref{iup}
\item unary successor \pageref{ius}
\item uniform-constructively sound rule \pageref{iucs}
\item uniform-constructive soundness of logic \pageref{iucsl}
\item unistructural game \pageref{iunistructural}
\item valuation \pageref{ivaluation}
\item valuation tape \pageref{ivaluationtape}
\item variable \pageref{ivariable}
\item Wait \pageref{iwait}
\item Wait-premise (special, ordinary) \pageref{iwaitpremise}
\item Weakening \pageref{iwk}
\item $\win{}{}$ \pageref{iwn}
\item $\win{}{e}$ \pageref{ilre} 
\item winning strategy \pageref{isol} 
\item won run\pageref{iwon}
\item work tape \pageref{iworktape}
\item \ 
\item \
\item $^\dagger$ \pageref{isi}
\item $e[A]$ \pageref{iea}
\item $\seq{\Phi}A$ \pageref{ipr}
\item $ \xx $ \pageref{ixx}
\item $\overline{\xx}$ \pageref{ixxneg}
\item $\bound$ \pageref{ipi}
\item $\pp$:  as game \pageref{itwg2}; as  player \pageref{ipp}
\item $\oo$:  as  game \pageref{itwg2}; as player \pageref{ioo}
\item $\models$ \pageref{imodels}
\item $\models^{P}$ \pageref{imodelsp}
\item $\elz{F}$ \pageref{ielz},\pageref{ielz2},\pageref{ielz3} 
\item $\ulcorner O\urcorner$ \pageref{ign}
\item  $|x|$ \pageref{iii}
\item $\lfloor s/2\rfloor$ \pageref{ifloor}
\item $\gneg$ \pageref{igneg},\pageref{igneg2}
\item $\mlc$ \pageref{imlc},\pageref{imlc2}
\item $\mld$ \pageref{imld},\pageref{iadd2}
\item $\mli$ \pageref{imli},\pageref{imli2}
\item $\cla$ \pageref{icla},\pageref{op5}
\item $\cle$ \pageref{icle},\pageref{op5}
\item $\adc$ \pageref{ichoiceop},\pageref{iadc2}
\item $\add$ \pageref{ichoiceop},\pageref{iadd2}
\item $\adi$ \pageref{icimpl}
\item $\ada$ \pageref{ichoiceop},\pageref{con1}
\item $\ade$ \pageref{ichoiceop},\pageref{con1}
\item $\ada^{\bound}$  \pageref{iadab},\pageref{ibcuq2}
\item $\ade^{\bound}$ \ \pageref{iadeb},\pageref{ibceq2}
\item $\pst^n$ \pageref{ipst}
\item $X$ \pageref{ix}
\item $\cal X$ \pageref{ixxx}
\item $\xi$ \pageref{ixi}
\item $\overline{E}$ \pageref{ipver}
\item $\hat{n}$ \pageref{ifornum}
\item $e_b$ \pageref{ieb}
\item $z^T$ \pageref{i876}
\item $\mathbb{A}$ \pageref{iaaa}
\item $\mathbb{B}$ \pageref{ibbb}
\item $\mathbb{C}$ \pageref{iccc}
\item $\mathbb{D}$ \pageref{iddd}
\item $\mathbb{E}$ \pageref{ieee}
\item $\mathbb{F}$ \pageref{ifff}
\item $\mathbb{G}$ \pageref{iggg}
\item $\mathbb{H}$ \pageref{ihhh}
\item $\mathbb{H}'$ \pageref{ihhhp}
\item $\mathbb{J}$ \pageref{ijjj}
\item $\mathbb{L}$ \pageref{illl}
\item $\mathbb{W}$ \pageref{idoubleu}
\end{theindex}

\end{document}